\title{Fearless Asynchronous Communications with Timed Multiparty Session Protocols} %
\author{Ping Hou}{
University of Oxford, UK
}{%
ping.hou@cs.ox.ac.uk
}{%
https://orcid.org/0000-0001-6899-9971
}{}
\author{Nicolas Lagaillardie}{Imperial College London, UK}{%
n.lagaillardie19@imperial.ac.uk
}{%
https://orcid.org/0000-0002-6431-4100
}{}
\author{Nobuko Yoshida}{University of Oxford, UK}{%
nobuko.yoshida@cs.ox.ac.uk
}{%
https://orcid.org/0000-0002-3925-8557
}{}
\authorrunning{P. Hou, N. Lagaillardie, N. Yoshida}
\keywords{Session Types, Concurrency, Time Failure Handling, Affinity, Timeout, Rust}  %
\newcommand{\lbl}[1]{\ensuremath{l}}
\definecolor{dkblue}{rgb}{0,0.1,0.5}
\definecolor{dkgreen}{rgb}{0,0.4,0.1}
\definecolor{dkred}{rgb}{0.4,0,0}
\definecolor{linkColor}{rgb}{0,0,0.5}
\definecolor{pblue}{rgb}{0.13,0.13,1}
\definecolor{purple}{rgb}{0.5,0,0.5}
\definecolor{beige}{rgb}{0,0.5,0.5}
\definecolor{pred}{rgb}{0.9,0,0}
\definecolor{WhiteSmoke}{rgb}{0.96, 0.96, 0.96}
\definecolor{mygray}{gray}{0.6}
\definecolor{DarkRed}{rgb}{0.55, 0.0, 0.0}
\definecolor{types}{rgb}{0.282, 0.714, 0.627}
\definecolor{functions}{rgb}{0.463, 0.463, 0.373}
\definecolor{typetype}{rgb}{0.318, 0.576, 0.792}
\definecolor{let}{rgb}{0.137, 0.412, 0.639}
\definecolor{cond}{rgb}{0.169, 0.075, 0.165}
\definecolor{str}{rgb}{0.392, 0.227, 0.157}
\definecolor{var}{rgb}{0.1, 0.1, 0.5}
\definecolor{primitive}{rgb}{0, 0.5, 0}
\newcommand{\CODESIZE}{\small}
\newcommand{\CODESTYLE}{\ttfamily\bfseries}
\newcommand{\CODE}[1]{\sloppy{\CODESIZE\CODESTYLE\lstinline[language=Rust,style=colouredRust]{#1}}}
\newcommand{\lstscribble}[1]{\sloppy{\ttfamily\lstinline[language=Scribble,style=colouredScribble]{#1}}}
\newcommand{\norole}[1]{\textbf{\color{teal}\emph{#1}}}
\newcommand{\kmc}{$k$-MC}
\definecolor{darkgreen}{rgb}{0, 0.5, 0}
\newcommand{\cmark}{\ding{51}}%
\newcommand{\xmark}{\ding{55}}%
\newcommand{\labSet}{\mathcal{M}}
\definecolor{modify}{rgb}{0.0, 0.0, 0.0}%
\definecolor{modified}{rgb}{0.0, 0.0, 0.0}%
\newsavebox{\@brx}
\newcommand{\llangle}[1][]{\savebox{\@brx}{\(\m@th{#1\langle}\)}%
  \mathopen{\copy\@brx\kern-0.5\wd\@brx\usebox{\@brx}}}
\newcommand{\rrangle}[1][]{\savebox{\@brx}{\(\m@th{#1\rangle}\)}%
  \mathclose{\copy\@brx\kern-0.5\wd\@brx\usebox{\@brx}}}
\newcommand{\elip}{\ifmmode\mathinner{\ldotp\kern-0.2em\ldotp\kern-0.2em\ldotp}\else.\kern-0.13em.\kern-0.13em.\fi}
\newcommand{\elipc}{\ifmmode\mathinner{\cdotp\kern-0.2em\cdotp\kern-0.2em\cdotp}\else.\kern-0.13em.\kern-0.13em.\fi}
\newcommand{\roleS}{{\color{roleColor}\roleFmt{s}}}
\newcommand{\timedmulticrusty}[0]{{$\texttt{MultiCrusty}^{T}$}\!\xspace}
\newcommand{\rumpsteak}[0]{{\small\texttt{Rumpsteak}}\xspace}
\newcommand{\ferrite}[0]{{\small\texttt{Ferrite}}\xspace}
\newcommand{\typestate}[0]{{\small\texttt{Typestate}}\xspace}
\newcommand{\multicrusty}[0]{{\texttt{MultiCrusty}}\xspace}
\newcommand{\ATMP}[0]{{\textsf{ATMP}}\xspace}
\newcommand{\AMPST}[0]{{\textsf{AMPST}}\xspace}%
\newcommand{\MPST}[0]{{\textsf{MPST}}\xspace}%
\newcommand{\Scribble}[0]{{\textsc{Scribble}}\xspace}%
\newcommand{\nuscr}[0]{{$\nu$\textsc{Scr}}\xspace}%
\newcommand{\timednuscr}[0]{{$\nu$\textsc{Scr}$^{T}$}\xspace}%
\newcommand{\crosschan}[0]{{\ttfamily{crossbeam\_channel}}\xspace}%
\newcommand{\timeRust}{\xspace{\ttfamily{time}}\xspace}%
\newcommand{\criterionRust}{\xspace{\ttfamily{criterion}}\xspace}%
\newcommand{\rustup}{\xspace{\ttfamily{Rustup}}\xspace}%
\newcommand{\tokioRust}{\xspace{\ttfamily{tokio}}\xspace}%
\newcommand{\gtPing}[1][]{%
\ifempty{#1}{\gtMsgFmt{ping}}{{\color{gtColor}\gtMsgFmt{ping}_{#1}}}%
}%
\newcommand{\gtPong}[1][]{%
\ifempty{#1}{\gtMsgFmt{pong}}{{\color{gtColor}\gtMsgFmt{pong}_{#1}}}%
}%
\newcommand{\dom}[1]{{\color{black}\operatorname{dom}\!\left({#1}\right)}}%
\newcommand{\fv}[1]{\operatorname{fv}\!\left({#1}\right)}%
\newcommand{\fc}[1]{\operatorname{fc}\!\left({#1}\right)}%
\newcommand{\dpv}[1]{\operatorname{dpv}\!\left({#1}\right)}%
\newcommand{\fpv}[1]{\operatorname{fpv}\!\left({#1}\right)}%
\newcommand{\unfoldOne}[1]{%
  {\color{black}\operatorname{unf}\!\left({#1}\right)}}%
\newcommand{\notImplies}{\mathrel{{\kern 0.5em}{\not{\kern -0.5em}\implies}}}%
\newcommand{\notImpliedBy}{\mathrel{{\kern 1em}{\not{\kern -1em}\impliedby}}}%
\newcommand{\coloncolonequals}{\Coloneqq}%
\newcommand{\bnfdef}{\coloncolonequals}%
\newcommand{\bnfsep}{\mathbin{\;\big|\;}}%
\def\eg{e.g.\@\xspace}%
\def\ie{i.e.\@\xspace}%
\def\wrt{w.r.t.\@\xspace}%
\definecolor{ruleColor}{rgb}{0.1, 0.3, 0.1}%
\newcommand{\inferrule}[1]{{\color{ruleColor}\textsc{\scriptsize [#1]}}}%
\newcommand{\inference}[3][]{\infer[\ifempty{#1}{}{\inferrule{#1}}]{#3}{#2}}%
\newcommand{\cinference}[3][]{\infer=[\ifempty{#1}{}{\inferrule{#1}}]{#3}{#2}}%
\newcommand{\inferenceSingle}[2][]{{#2}\ifempty{#1}{}{\ \inferrule{#1}}}%
\newcommand{\setenum}[1]{\mathord{{\color{black}\left\{#1\right\}}}}%
\newcommand{\setcomp}[2]{\mathord{%
  {\color{black}\left\{{#1} \,\middle|\, {#2}\right\}}}}%
\definecolor{groundColor}{rgb}{0.38, 0.25, 0.32}%
\newcommand{\bind}[2]{\nicefrac{#2}{#1}}
\newcommand{\substenum}[1]{\mathord{{\color{black}\left\{{#1}\right\}}}}
\newcommand{\subst}[2]{\substenum{\bind{#1}{#2}}}
\definecolor{roleColor}{rgb}{0.5, 0.0, 0.0}%
\newcommand{\roleCol}[1]{{\color{roleColor}#1}}%
\newcommand{\roleSet}{\roleCol{\mathcal{R}}}%
\newcommand{\roleFmt}[1]{\boldsymbol{\roleCol{\mathtt{#1}}}}%
\newcommand{\roleP}[1][]{%
  \ifempty{#1}{{\color{roleColor}\roleFmt{p}}}{{\color{roleColor}\roleFmt{p}_{#1}}}%
}%
\newcommand{\rolePi}[1][]{%
  \ifempty{#1}{{\color{roleColor}\roleFmt{p}'}}{{\color{roleColor}\roleFmt{p}'_{#1}}}%
}%
\newcommand{\roleQ}[1][]{%
  \ifempty{#1}{{\color{roleColor}\roleFmt{q}}}{{\color{roleColor}\roleFmt{q}_{#1}}}%
}%
\newcommand{\roleQi}[1][]{%
  \ifempty{#1}{{\color{roleColor}\roleFmt{q}'}}{{\color{roleColor}\roleFmt{q}'_{#1}}}%
}%
\newcommand{\roleR}[1][]{%
  \ifempty{#1}{{\color{roleColor}\roleFmt{r}}}{{\color{roleColor}\roleFmt{r}_{\!#1}}}%
}%
\newcommand{\roleT}[1][]{%
  \ifempty{#1}{{\color{roleColor}\roleFmt{t}}}{{\color{roleColor}\roleFmt{t}_{\!#1}}}%
}%
\newcommand{\labFmt}[2][]{\ifempty{#1}{\mathtt{#2}}{\mathtt{#2}\textsubscript{#1}}}%
\definecolor{mpColor}{rgb}{0, 0, 0}%
\newcommand{\mpFmt}[1]{{\color{mpColor}#1}}%
\newcommand{\mpLab}[1][]{%
  \mpFmt{\ifempty{#1}{\labFmt{m}}{{\labFmt{m}}_{\mathnormal #1}}}%
}%
\newcommand{\mpLabi}[1][]{%
  \mpFmt{\ifempty{#1}{\labFmt{m}'}{\labFmt{m}'_{\mathnormal #1}}}%
}%
\newcommand{\mpLabFmt}[1]{\mpFmt{\labFmt{#1}}}%
\newcommand{\mpChanRole}[2]{\mpFmt{{#1}[{#2}]}}%
\newcommand{\mpNil}{\mpFmt{\mathbf{0}}}%
\newcommand{\mpSeq}{\mathbin{\mpFmt{\!.\!}}}%
\newcommand{\mpChoice}[3]{%
  \mpFmt{%
    \mpLabFmt{#1}\ifempty{#2}{}{({#2})}\ifempty{#3}{}{\vphantom{x}\mpSeq {#3}}%
  }%
}%
\newcommand{\mpChoiceNoBind}[3]{%
  \mpFmt{%
    \mpLabFmt{#1}\ifempty{#2}{}{\langle{#2}\rangle}\ifempty{#3}{}{\vphantom{x}\mpSeq {#3}}%
  }%
}%
\newcommand{\mpChoiceRaw}[2]{%
  \mpFmt{%
    \mpLabFmt{#1} \mpSeq {#2}
  }
}
\newcommand{\mpChoiceRawN}[1]{%
  \mpFmt{%
    \mpLabFmt{#1}
  }
}
\newcommand{\mpBranchRawN}[3]{%
  \mpFmt{%
    {#1}[\roleFmt{#2}]%
      {#3}%
  }%
}%
\newcommand{\mpPar}{\mathbin{\mpFmt{\mid}}}%
\newcommand{\mpBigPar}[2]{\mathbin{\mpFmt{\Pi_{#1}}{#2}}}%
\newcommand{\mpRes}[2]{\mpFmt{\left(\mathbf{\nu}{#1}\right){#2}}}%
\newcommand{\mpJustDef}[3]{%
  \mpFmt{{#1}(#2) = {#3}}%
}%
\newcommand{\mpDef}[4]{%
  \mpFmt{\mathbf{def}\;\mpJustDef{#1}{#2}{#3}\;\mathbf{in}\;{#4}}%
}%
\newcommand{\mpDefAbbrev}[2]{%
  \mpFmt{\mathbf{def}\;{#1}\;\mathbf{in}\;{#2}}%
}%
\newcommand{\mpCall}[2]{\mpFmt{{#1}\!\left\langle{#2}\right\rangle}}%
\newcommand{\mpCErr}{\mpFmt{\boldsymbol{\mathsf{cerr}}}}%
\newcommand{\mpCtx}[1][]{\mpFmt{\ifempty{#1}{\mathbb{C}}{\mathbb{C}_{#1}}}}%
\newcommand{\mpCtxHole}{[\,]}%
\newcommand{\mpCtxApp}[2]{{#1}\!\left[{#2}\right]}%
\newcommand{\mpC}[1][]{\mpFmt{\ifempty{#1}{c}{c_{#1}}}}%
\newcommand{\mpCi}[1][]{\mpFmt{\ifempty{#1}{c'}{c'_{#1}}}}%
\newcommand{\mpD}[1][]{\mpFmt{\ifempty{#1}{d}{d_{#1}}}}%
\newcommand{\mpS}[1][]{\mpFmt{\ifempty{#1}{s}{s_{#1}}}}%
\newcommand{\mpSi}[1][]{\mpFmt{\ifempty{#1}{s'}{s'_{#1}}}}%
\newcommand{\mpX}[1][]{\mpFmt{\ifempty{#1}{X}{X_{#1}}}}%
\newcommand{\mpY}[1][]{\mpFmt{\ifempty{#1}{Y}{Y_{#1}}}}%
\newcommand{\mpP}[1][]{\mpFmt{\ifempty{#1}{P}{P_{#1}}}}%
\newcommand{\mpPi}[1][]{\mpFmt{\ifempty{#1}{P'}{P'_{#1}}}}%
\newcommand{\mpPii}[1][]{\mpFmt{\ifempty{#1}{P''}{P''_{#1}}}}%
\newcommand{\mpQ}[1][]{\mpFmt{\ifempty{#1}{Q}{Q_{#1}}}}%
\newcommand{\mpQi}[1][]{\mpFmt{\ifempty{#1}{Q'}{Q'_{#1}}}}%
\newcommand{\mpR}[1][]{\mpFmt{\ifempty{#1}{R}{R_{#1}}}}%
\newcommand{\mpRi}[1][]{\mpFmt{\ifempty{#1}{R'}{R'_{#1}}}}%
\newcommand{\mpDefD}[1][]{\mpFmt{\ifempty{#1}{D}{D_{#1}}}}%
\newcommand{\mpDefDi}[1][]{\mpFmt{\ifempty{#1}{D'}{D'_{#1}}}}%
\newcommand{\mpMove}{\to}%
\newcommand{\mpMoveStar}{\mathrel{\mpMove{}^{\!\!\!*}}}%
\newcommand{\mpNotMoveP}[1]{\mpFmt{#1}\!\not\!\!{\mpMove}}%
\definecolor{gtColor}{rgb}{0.43, 0.21, 0.1}%
\newcommand{\gtFmt}[1]{{\color{gtColor}#1}}%
\newcommand{\gtMsgFmt}[1]{\gtFmt{\labFmt{#1}}}%
\newcommand{\gtLab}[1][]{%
  \ifempty{#1}{\gtMsgFmt{m}}{{\color{gtColor}\gtMsgFmt{m}_{#1}}}%
}%
\newcommand{\gtLabi}[1][]{%
  \ifempty{#1}{\gtMsgFmt{m}'}{{\color{gtColor}\gtMsgFmt{m}'_{#1}}}%
}%
\newcommand{\gtLabii}[1][]{%
  \ifempty{#1}{\gtMsgFmt{m}'}{{\color{gtColor}\gtMsgFmt{m}''_{#1}}}%
}%
\newcommand{\gtG}[1][]{\gtFmt{\ifempty{#1}{G}{G_{#1}}}}%
\newcommand{\gtGi}[1][]{\gtFmt{\ifempty{#1}{G'}{G'_{#1}}}}%
\newcommand{\gtGii}[1][]{\gtFmt{\ifempty{#1}{G''}{G''_{#1}}}}%
\newcommand{\gtGiii}[1][]{\gtFmt{\ifempty{#1}{G'''}{G'''_{#1}}}}%
\newcommand{\gtSeq}{\mathbin{\gtFmt{.}}}%
\newcommand{\gtCommRaw}[3]{%
  \gtFmt{%
    {#1} {\to} {#2}{:}%
    \left\{%
      {#3}%
    \right\}%
  }%
}%
\newcommand{\gtComm}[6]{%
  \gtFmt{%
    \gtCommRaw{#1}{#2}{%
      \gtCommChoice{#4}{#5}{#6}%
    }_{#3}%
  }%
}%
\newcommand{\gtCommChoice}[3]{%
  \gtFmt{%
    \gtMsgFmt{#1}\ifempty{#2}{}{({#2})}%
    \ifempty{#3}{}{\vphantom{x} \gtSeq {#3}}%
  }%
}%
\newcommand{\gtEnd}{\gtFmt{\mathbf{end}}}%
\newcommand{\gtRec}[2]{\gtFmt{\mu{#1}.{#2}}}%
\newcommand{\gtRecVarBase}{\gtFmt{\mathbf{t}}}%
\newcommand{\gtRecVar}[1][]{\gtFmt{\ifempty{#1}{\gtRecVarBase}{\gtRecVarBase_{#1}}}}%
\newcommand{\gtRecVari}[1][]{\gtFmt{\ifempty{#1}{\gtRecVarBase'}{\gtRecVarBase'_{#1}}}}%
\newcommand{\gtRoles}[1]{{\color{roleColor} \operatorname{roles}(\gtFmt{#1})}}%
\newcommand{\gtProj}[2]{%
  {\color{stColor}\gtFmt{#1} {\upharpoonright}\, \roleFmt{#2}}%
}%
\definecolor{stColor}{rgb}{0, 0, 0.9}%
\newcommand{\stFmt}[1]{{\color{stColor}#1}}%
\newcommand{\stOut}[3]{\ifempty{#1}{}{\roleFmt{#1}}\stFmt{\oplus{#2}\ifempty{#3}{}{({#3})}}}%
\newcommand{\stSeq}{\mathbin{\!\stFmt{.}\!}}%
\newcommand{\stIntC}{\mathbin{\stFmt{\oplus}}}%
\newcommand{\stIntSum}[3]{\roleFmt{#1}\stFmt{\oplus\!\left\{#3\right\}_{#2}}}%
\newcommand{\stExtC}{\mathbin{\stFmt{\&}}}%
\newcommand{\stExtSum}[3]{\roleFmt{#1}\stFmt{\&\!\left\{#3\right\}_{#2}}}%
\newcommand{\stRec}[2]{\stFmt{\mu{#1}.{#2}}}%
\newcommand{\stEnd}{\stFmt{\mathbf{end}}}%
\newcommand{\stLab}[1][]{\stFmt{\ifempty{#1}{\labFmt{m}}{\labFmt{m}_{#1}}}}%
\newcommand{\stLabi}[1][]{\stFmt{\ifempty{#1}{\labFmt{m}'}{\labFmt{m}'_{#1}}}}%
\newcommand{\stLabii}[1][]{\stFmt{\ifempty{#1}{\labFmt{m}'}{\labFmt{m}''_{#1}}}}%
\newcommand{\stLabFmt}[1]{\stFmt{\labFmt{#1}}}%
\newcommand{\stS}[1][]{\stFmt{\ifempty{#1}{S}{S_{#1}}}}%
\newcommand{\stSi}[1][]{\stFmt{\ifempty{#1}{S'}{S'_{#1}}}}%
\newcommand{\stSii}[1][]{\stFmt{\ifempty{#1}{S''}{S''_{#1}}}}%
\newcommand{\stSiii}[1][]{\stFmt{\ifempty{#1}{S'''}{S'''_{#1}}}}%
\newcommand{\stT}[1][]{\stFmt{\ifempty{#1}{T}{T_{#1}}}}%
\newcommand{\stTi}[1][]{\stFmt{\ifempty{#1}{T'}{T'_{#1}}}}%
\newcommand{\stTii}[1][]{\stFmt{\ifempty{#1}{T''}{T''_{#1}}}}%
\newcommand{\stU}[1][]{\stFmt{\ifempty{#1}{U}{U_{#1}}}}%
\newcommand{\stUi}[1][]{\stFmt{\ifempty{#1}{U'}{U'_{#1}}}}%
\newcommand{\stRecVarBase}{\stFmt{\mathbf{t}}}%
\newcommand{\stRecVar}[1][]{\stFmt{\ifempty{#1}{\stRecVarBase}{\stRecVarBase_{#1}}}}%
\newcommand{\stMerge}[2]{\stFmt{\bigsqcap_{#1}{#2}}}%
\newcommand{\stBinMerge}{\mathbin{\stFmt{\sqcap}}}%
\newcommand{\stSub}{\mathrel{\stFmt{\leqslant}}}%
\newcommand{\stNotSub}{\mathrel{\stFmt{\not\leqslant}}}%
\newcommand{\stSup}{\mathrel{\stFmt{\geqslant}}}%
\definecolor{ptColor}{rgb}{0.20, 0.29, 0.09}%
\newcommand{\stEnv}[1][]{\stFmt{\ifempty{#1}{\Gamma}{\Gamma_{#1}}}}%
\newcommand{\stEnvi}[1][]{\stFmt{\ifempty{#1}{\Gamma'}{\Gamma'_{#1}}}}%
\newcommand{\stEnvii}[1][]{\stFmt{\ifempty{#1}{\Gamma''}{\Gamma''_{#1}}}}%
\newcommand{\stEnviii}[1][]{\stFmt{\ifempty{#1}{\Gamma'''}{\Gamma'''_{#1}}}}%
\newcommand{\stEnvEmpty}{\stFmt{\emptyset}}%
\newcommand{\stEnvMap}[2]{\stFmt{\mpFmt{#1}\mathbin{\!:\!}{#2}}}%
\newcommand{\stEnvComp}{\mathpunct{\stFmt{,}}}%
\newcommand{\stEnvApp}[2]{\stFmt{#1\!\left(\mpFmt{#2}\right)}}%
\newcommand{\stEnvMove}{\mathrel{\stFmt{\to}}}%
\newcommand{\stEnvAnnotGenericSym}{\stFmt{\alpha}}%
\newcommand{\stEnvAnnotGenericSymi}{\stFmt{\alpha'}}%
\newcommand{\stEnvMoveAnnot}[1]{\mathrel{\stFmt{\xrightarrow{#1}}}}
\newcommand{\stEnvMoveGenAnnot}{\stEnvMoveAnnot{\stEnvAnnotGenericSym}}%
\newcommand{\stEnvMoveP}[1]{{#1}\!\!\stEnvMove}%
\newcommand{\stEnvMoveStar}{\mathrel{\stFmt{\stEnvMove{}^{\!\!\!*}}}}%
\newcommand{\stEnvEndPred}{\operatorname{end}}%
\newcommand{\stEnvEndP}[1]{\stEnvEndPred(\stFmt{#1})}%
\newcommand{\stEnvEntails}[3]{%
  \stFmt{#1} \vdash \stFmt{\mpFmt{#2} \mathbin{\!:\!} {#3}}%
}%
\newcommand{\mpEnv}[1][]{\stFmt{\ifempty{#1}{\Theta}{\Theta_{#1}}}}%
\newcommand{\mpEnvEmpty}{\stFmt{\emptyset}}%
\newcommand{\mpEnvMap}[2]{\stFmt{\mpFmt{#1}{:}\stFmt{#2}}}%
\newcommand{\mpEnvComp}{\mathpunct{\stFmt{,}}}%
\newcommand{\mpEnvApp}[2]{\stFmt{#1}\!\left(\mpFmt{#2}\right)}%
\newcommand{\mpEnvEntails}[3]{%
  \stFmt{#1} \vdash \stFmt{\mpFmt{#2} \mathbin{\!:\!} \stFmt{#3}}%
}%
\newcommand{\stJudge}[3]{%
  \stFmt{\ifempty{#1}{#2}{{#1} \cdot {#2}}%
  \mathrel{\mpFmt{\vdash}} \mpFmt{#3}}%
}%
\newcommand{\mpSessionQueue}[2]{%
  \mpFmt{{#1}\mathrel{\!\blacktriangleright\!}{#2}}}%
\newcommand{\mpQueue}[1][]{\mpFmt{\ifempty{#1}{\sigma}{\sigma_{#1}}}}%
\newcommand{\mpQueuei}[1][]{\mpFmt{\ifempty{#1}{\sigma'}{\sigma'_{#1}}}}%
\newcommand{\mpQueueEmpty}{\mpFmt{\epsilon}}%
\newcommand{\mpQueueCons}[2]{\mpFmt{{#1}\mathrel{\!\cdot\!}{#2}}}%
\newcommand{\stEnvQ}[1][]{\stFmt{\ifempty{#1}{\Gamma}{\Gamma_{#1}}}}%
\newcommand{\stEnvQi}[1][]{\stFmt{\ifempty{#1}{\Gamma'}{\Gamma'_{#1}}}}%
\newcommand{\stEnvQii}[1][]{\stFmt{\ifempty{#1}{\Gamma''}{\Gamma''_{#1}}}}%
\newcommand{\stEnvQiii}[1][]{\stFmt{\ifempty{#1}{\Gamma'''}{\Gamma'''_{#1}}}}%
\newcommand{\stEnvQComp}{\mathpunct{\stFmt{,}}}%
\newcommand{\stQJudge}[4]{%
  \stFmt{\ifempty{#1}{#2}{{#1} \cdot {#2}}%
  \mathrel{\mpFmt{\vdash}_{\mpFmt{#3}}} \mpFmt{#4}}%
}%
\newcommand{\stEnvQAnnotQueueSym}{\stFmt{!}}%
\newcommand{\stEnvQAnnotRecvSym}{\stFmt{,}}%
\newcommand{\stEnvQQueueAnnotSmall}[3]{\stFmt{\mpS{:}{#1}{\stEnvQAnnotQueueSym}{#2}{:}{#3}}}%
\newcommand{\stEnvQRecvAnnotSmall}[3]{\stFmt{\mpS{:}{#1}{\stEnvQAnnotRecvSym}{#2}{:}{#3}}}%
\newcommand{\stQEmpty}{\stFmt{\epsilon}}%
\newcommand{\stQCons}[2]{\stFmt{{#1}\mathbin{\!\cdot\!}{#2}}}%
\newcommand{\stQMsg}[3]{\stFmt{{#1}\mathbin{\!\mathbf{!}\!}{#2}\ifempty{#3}{}{({#3})}}}%
\newcommand{\stQEquiv}{\mathrel{\stFmt{\equiv}}}%
\newcommand{\stMPair}[2]{\stFmt{{#2}{\mathbin{\mathbf{;\,}}}{#1}}}%
\newcommand{\stEnvQMove}{\mathrel{\stFmt{\to}}}%
\newcommand{\stEnvQMoveModQ}[1]{\mathrel{\stFmt{\rightarrow}_{{#1}}}}%
\newcommand{\stEnvQNotMoveModQP}[2]{%
  {#2}/{\kern -0.5em}{\stEnvQMoveModQ{#1}}}%
\newcommand{\muCol}[1]{{\color{red}#1}}%
\newcommand{\muWordEmpty}[1][]{\muCol{\epsilon}}%
\newcommand{\iruleMPRedCall}{R-$\mpX$}%
\newcommand{\iruleMPRedCongr}{R-$\equiv$}%
\newcommand{\iruleMPRedCtx}{R-Ctx}%
\newcommand{\iruleTCtxOut}{$\stEnv$-$\stFmt{\oplus}$}%
\newcommand{\iruleTCtxIn}{$\stEnv$-$\stFmt{\&}$}%
\newcommand{\iruleTCtxRec}{$\stEnv$-$\mu$}%
\newcommand{\iruleSubQueue}{Sub-$\stFmt{;}$}%
\newcommand{\iruleMPEnd}{T-$\stEnvEndPred$}%
\newcommand{\iruleMPX}{T-$\mpFmt{X}$}%
\newcommand{\iruleMPSub}{T-Sub}%
\newcommand{\iruleMPNil}{T-$\mpNil$}%
\newcommand{\iruleMPDef}{T-$\mpFmt{\mathbf{def}}$}%
\newcommand{\iruleMPCall}{T-$\mpX$-Call}%
\newcommand{\iruleMPPar}{T-$\mpPar$}%
\newcommand{\iruleMPBranch}{T-$\mpFmt{\&}$}%
\newcommand{\iruleMPSel}{T-$\mpFmt{\oplus}$}%
\newcommand{\iruleMPQueueEmpty}{T-$\mpQueueEmpty$}%
\newcommand{\iruleMPQueue}{T-$\mpQueue$}%
\newcommand{\proclit}[1]{\ensuremath{\operatorname{\text{\sffamily\bfseries{#1}}}}}
\newcommand{\trycatch}[2]{\mpFmt{\ensuremath{\proclit{try}\ {#1} \ \proclit{catch}\ {#2}}}}
\newcommand{\trycatchB}{\mpFmt{\ensuremath{\proclit{try-catch}}}}
\newcommand{\mpCancel}[2]{\ifempty{#1}{}{\mpFmt{\proclit{cancel}({#1})}{\ifempty{#2}{}{\ \mpSeq \ {#2}}}}}
\newcommand{\kills}[1]{\ensuremath{{#1} \lightning}}
\newcommand{\iruleMPRedCan}{R-Can}%
\newcommand{\iruleMPCCat}{C-Cat}%
\newcommand{\mpEtx}[1][]{\mpFmt{\ifempty{#1}{\mathbb{E}}{\mathbb{E}_{#1}}}}%
\newcommand{\ifempty}[3]{%
  \ifthenelse{\isempty{#1}}{#2}{#3}%
}%
\definecolor{stColor}{rgb}{0, 0, 0.9}%
\newcommand{\iruleStSubEnd}{Sub-$\stEnd$}
\newcommand{\iruleStSubRecL}{Sub-$\stFmt{\mu}$L}
\newcommand{\iruleStSubRecR}{Sub-$\stFmt{\mu}$R}
\newcommand{\iruleStSubOut}{Sub-$\stFmt{\oplus}$}
\newcommand{\iruleStSubIn}{Sub-$\stFmt{\&}$}
\definecolor{green_colour_blind}{HTML}{89CE00}
\definecolor{blue_colour_blind}{HTML}{0073E6}
\definecolor{red_colour_blind}{HTML}{E6308A}
\definecolor{tcColor}{rgb}{0, 0, 0}%
\newcommand{\tcFmt}[1]{\ensuremath{{\color{tcColor}#1}}\xspace}%
\newcommand{\ccst}[1][]{\tcFmt{\ifempty{#1}{\delta}{\delta_{#1}}}}%
\newcommand{\ccsti}[1][]{\tcFmt{\ifempty{#1}{\delta'}{\delta'_{#1}}}}%
\newcommand{\ccstii}[1][]{\tcFmt{\ifempty{#1}{\delta''}{\delta''_{#1}}}}%
\newcommand{\ccstO}[1][]{\tcFmt{\ifempty{#1}{\delta_\text{O}}{{\delta_\text{O}}_{#1}}}}%
\newcommand{\ccstOi}[1][]{\tcFmt{\ifempty{#1}{\delta'_\text{O}}{{\delta'_\text{O}}_{#1}}}}%
\newcommand{\ccstOii}[1][]{\tcFmt{\ifempty{#1}{\delta''_\text{O}}{{\delta''_\text{O}}_{#1}}}}%
\newcommand{\ccstI}[1][]{\tcFmt{\ifempty{#1}{\delta_\textrm{I}}{{\delta_\text{I}}_{#1}}}}%
\newcommand{\ccstIi}[1][]{\tcFmt{\ifempty{#1}{\delta'_\text{I}}{{\delta'_\text{I}}_{#1}}}}%
\newcommand{\ccstIii}[1][]{\tcFmt{\ifempty{#1}{\delta''_\text{I}}{{\delta''_\text{I}}_{#1}}}}%
\newcommand{\crst}[1][]{\tcFmt{\ifempty{#1}{\lambda}{\lambda_{#1}}}}%
\newcommand{\crsti}[1][]{\tcFmt{\ifempty{#1}{\lambda'}{\lambda'_{#1}}}}%
\newcommand{\crstii}[1][]{\tcFmt{\ifempty{#1}{\lambda''}{\lambda''_{#1}}}}%
\newcommand{\crstO}[1][]{\tcFmt{\ifempty{#1}{\lambda_\text{O}}{{\lambda_\text{O}}_{#1}}}}%
\newcommand{\crstOi}[1][]{\tcFmt{\ifempty{#1}{\lambda'_\text{O}}{{\lambda'_\text{O}}_{#1}}}}%
\newcommand{\crstOii}[1][]{\tcFmt{\ifempty{#1}{\lambda''_\text{O}}{{\lambda''_\text{O}}_{#1}}}}%
\newcommand{\crstI}[1][]{\tcFmt{\ifempty{#1}{\lambda_\text{I}}{{\lambda_\text{I}}_{#1}}}}%
\newcommand{\crstIi}[1][]{\tcFmt{\ifempty{#1}{\lambda'_\text{I}}{{\lambda'_\text{I}}_{#1}}}}%
\newcommand{\crstIii}[1][]{\tcFmt{\ifempty{#1}{\lambda''_\text{I}}{{\lambda''_\text{I}}_{#1}}}}%
\newcommand{\cVal}[1][]{\tcFmt{\ifempty{#1}{\mathbb{V}}{\mathbb{V}_{#1}}}}%
\newcommand{\cVali}[1][]{\tcFmt{\ifempty{#1}{\mathbb{V'}}{\mathbb{V'}_{#1}}}}%
\newcommand{\cValii}[1][]{\tcFmt{\ifempty{#1}{\mathbb{V''}}{\mathbb{V''}_{#1}}}}%
\newcommand{\cValiii}[1][]{\tcFmt{\ifempty{#1}{\mathbb{V'''}}{\mathbb{V'''}_{#1}}}}%
\newcommand{\cUnit}[1][]{\tcFmt{\ifempty{#1}{t}{t_{#1}}}}%
\newcommand{\cUniti}[1][]{\tcFmt{\ifempty{#1}{t'}{t'_{#1}}}}%
\newcommand{\cUnitii}[1][]{\tcFmt{\ifempty{#1}{t''}{t''_{#1}}}}%
\newcommand{\rn}{\mathbb{R}}
\newcommand{\cs}{\mathbf{C}}
\newcommand{\cValUpd}[3]{#1 {} [#2 \mapsto #3]}
\newcommand{\ccstSubt}[3]{#1 {} [#2/#3]}
\newcommand{\Rust}[0]{\textsc{Rust}\xspace}
\newcommand{\Python}[0]{\textsc{Python}\xspace}
\newcommand{\Links}[0]{\textsc{Links}\xspace}
\newcommand{\Ensemble}[0]{\textsc{Ensemble}\xspace}
\newcommand{\Esterel}[0]{\textsc{Esterel}\xspace}
\newcommand{\Lucid}[0]{\textsc{Lucid}\xspace}
\newcommand{\Synchrone}[0]{\textsc{Synchrone}\xspace}
\newcommand{\Lustre}[0]{\textsc{Lustre}\xspace}
\newcommand{\stDelegate}[2]{({#1}, {#2})}
\newcommand{\gtCommTransitRaw}[4]{%
  \gtFmt{%
    {#1} {\rightsquigarrow} {#2}{:}{#4}%
    \left\{%
      {#3}%
    \right\}%
  }%
}%
\newcommand{\gtCommTChoice}[4]{%
  \gtFmt{%
    \gtMsgFmt{#1}\ifempty{#2}{}{({#2})}%
    \ifempty{#3}{}{\{#3\}}%
    \ifempty{#4}{}{\vphantom{x} \!\gtSeq\! {#4}}%
  }%
}%
\newcommand{\gtCommTChoiceSmall}[4]{%
  \gtFmt{%
    \gtMsgFmt{#1}\ifempty{#2}{}{({#2})}%
     \ifempty{#3}{}{\{#3\}}%
    \ifempty{#4}{}{\vphantom{x} \!\gtSeq\! {#4}}%
  }%
}%
\newcommand{\gtCommT}[7]{%
  \gtFmt{%
    \gtCommRaw{#1}{#2}{%
      \gtCommTChoice{#4}{#5}{#6}{#7}%
    }_{#3}%
  }%
}%
\newcommand{\gtCommTSmall}[7]{%
  \gtFmt{%
    \gtCommRaw{#1}{#2}{%
      \gtCommTChoiceSmall{#4}{#5}{#6}{#7}%
    }_{#3}%
  }%
}%
\newcommand{\gtCommTTransit}[8]{%
  \gtFmt{%
    \gtCommTransitRaw{#1}{#2}{%
      \gtCommTChoice{#4}{#5}{#6}{#7}%
    }{#8}_{#3}%
  }%
}%
\newcommand{\iruleStSubSort}{Sub-\stFmt{S}}
\newcommand{\stTChoice}[3]{\stLabFmt{#1}\ifempty{#2}{}{\stFmt{({#2})}}\ifempty{#3}{}{\stFmt{\{#3\}}}}%
\definecolor{taColor}{rgb}{0.0, 0.0, 0.0}%
\newcommand{\taFmt}[1]{\ensuremath{{\color{taColor}#1}}\xspace}%
\newcommand{\timeLab}[1][]{\taFmt{\ifempty{#1}{t}{t_{#1}}}}%
\definecolor{roleColor}{rgb}{0.5, 0.0, 0.0}
\newcommand{\ltsSubject}[1]{{\color{roleColor} \operatorname{subject}({#1})}}%
\newcommand{\gtWithTime}[2]{\langle{#1};{#2}\rangle}
\newcommand{\gtMove}[1][\phantom{\stEnvAnnotGenericSym}]{\xrightarrow{#1}} %
\newcommand{\gtMoveStar}[1][]{\ifempty{#1}{\gtMove[]{}^{\!\!\!*}}{\gtMove[]{#1}^{\!*}}} %
\newcommand{\iruleGtMove}[1]{GR-{#1}}
\newcommand{\iruleGtMoveTime}[0]{\iruleGtMove{t}}
\newcommand{\iruleGtMoveOut}[0]{\iruleGtMove{$\oplus$}}
\newcommand{\iruleGtMoveIn}[0]{\iruleGtMove{$\&$}}
\newcommand{\iruleGtMoveRec}[0]{\iruleGtMove{$\mu$}}
\newcommand{\iruleGtMoveCtx}[0]{\iruleGtMove{Ctx-i}}
\newcommand{\iruleGtMoveCtxi}[0]{\iruleGtMove{Ctx-ii}}
\newcommand{\stEnvUpd}[3]{\stFmt{#1 {} [#2 \mapsto #3]}}
\newcommand{\stEnvMoveWithSession}[1][]{
\ifempty{#1}{}{
    \mathrel{\stFmt{\to}_{\!#1}}}}
\newcommand{\stEnvMoveTWithSession}[1][]{
\ifempty{#1}{}{
    \mathrel{\stFmt{\to}_{\!#1}}}}
\newcommand{\stEnvMoveWithSessionStar}[1][]{%
  \ifempty{#1}{}{%
    \mathrel{\stFmt{\to^{\!*}}_{\!\!\!{#1}}}}}
\newcommand{\mpSessionQueueO}[3]{\mpSessionQueue{\mpChanRole{#1}{#2}}{#3}}
\newcommand{\mpQueueOElem}[3]{%
  \mpFmt{{#1}{!}{#2}\ifempty{#3}{}{\langle{#3}\rangle}}}%
\newcommand{\procSubject}[1]{{\color{roleColor} \operatorname{subjP}({#1})}}%
\newcommand{\romanF}[1]{\ensuremath{\operatorname{\text{\rmfamily\bfseries{#1}}}}}
\newcommand{\delay}[2]{\ifempty{#1}{}{\mpFmt{\romanF{delay}({#1})}{\ifempty{#2}{}{\ \mpSeq \ {#2}}}}}
\newcommand{\mpFailedP}[1]{\ifempty{#1}{}\mpFmt{\boldsymbol{\mathtt{timeout}}}\lbrack{#1}\rbrack}%
\newcommand{\mpTBranch}[7]{%
  \mpFmt{%
   {{#1}^{#7}[\roleFmt{#2}]} \mathbin{\!\ifempty{\sum}{\&}{\sum_{#3}}\!}%
    \ifempty{#3}{%
      \mpChoice{#4}{#5}{#6}%
    }{%
      \mpChoice{#4}{#5}{#6}%
    }%
  }%
}%
\newcommand{\mpTSel}[6]{%
  \mpFmt{%
    {#1}^{#6}[\roleFmt{#2}] \mathbin{\!\oplus\!}%
    \mpChoiceNoBind{#3}{#4}{#5}%
  }%
}%
\newcommand{\mpSTSel}[5]{%
  \mpFmt{%
    {#1}^{#5}[\roleFmt{#2}] \mathbin{\!\oplus\!}%
    \mpChoiceRaw{#3}{#4}%
  }%
}%
\newcommand{\mpSTBranch}[5]{%
  \mpFmt{%
    {#1}^{#5}[\roleFmt{#2}]%
    \mpChoiceRaw{#3}{#4}%
  }%
}%
\newcommand{\mpSTBranchOut}[4]{%
  \mpFmt{%
    {#1}^{#4}[\roleFmt{#2}]%
    \mpChoiceRawN{#3}%
  }%
}%
\newcommand{\mpSTSelN}[4]{%
  \mpFmt{%
    {#1}^{#4}[\roleFmt{#2}] \mathbin{\!\oplus\!}%
    \mpChoiceRawN{#3}%
  }%
}%
\newcommand{\mpnonTMove}{\rightarrowtail}%
\newcommand{\mpMoveTime}{\rightharpoonup}
\newcommand{\mpMoveTimeStar}{\mathrel{\mpMoveTime{}^{\!\!\!*}}}%
\newcommand{\mpnonTMoveStar}{\mathrel{\mpnonTMove{}^{\!\!\!*}}}%
\newcommand{\mpNotMoveTimeP}[1]{\mpFmt{#1}\!\not\!\!{\mpMoveTime}}%
\newcommand{\mpnonTNotMoveP}[1]{\mpFmt{#1}\!\not\!\!{\mpnonTMove}}%
\newcommand{\stEnvAssoc}[3]{\tcFmt{{#1} \mathrel{\stFmt{\sqsubseteq}_{#3}} {#2}}}
\newcommand{\iruleMPRedInstant}{R-Ins}%
\newcommand{\iruleMPRedTimeConsume}{R-TC}%
\newcommand{\iruleMPRedOut}{R-Out}
\newcommand{\iruleMPRedIn}{R-In}
\newcommand{\iruleMPRedErr}{R-Err}
\newcommand{\iruleMPRedDet}{R-Det}
\newcommand{\iruleMPRedDelay}{R-Time}
\newcommand{\iruleMPRedCongrTime}{R-$\equiv$T}%
\newcommand{\iruleMPRedCanIn}{C-In}%
\newcommand{\iruleMPRedCanQ}{C-Queue}%
\newcommand{\iruleMPRedTryFail}{R-Fail}%
\newcommand{\iruleMPRedFailCatch}{R-FailCat}%
\newcommand{\mpEtxApp}[2]{{#1}\!\left[{#2}\right]}%
\newcommand{\timePass}[2]{\mpFmt{\Psi_{#1}(#2)}}
\newcommand{\stSpecf}[1][]{\stFmt{\ifempty{#1}{\tau}{\tau_{#1}}}}%
\newcommand{\stSpecfi}[1][]{\stFmt{\ifempty{#1}{\tau'}{\tau'_{#1}}}}%
\newcommand{\stCPair}[2]{\stFmt{({#1}{\mathbin{\mathbf{,\,}}}{#2})}}%
\newcommand{\gtQProj}[2]{%
  {\color{stColor}\gtFmt{#1}{\upharpoonright^{\stFmt{Q}}_{#2}}}}%
\definecolor{lightblue}{rgb}{0.68, 0.85, 0.9}
\definecolor{lightcyan}{rgb}{0.88, 1.0, 1.0}
\newcommand{\highlight}[2][\highlightColour]{\mathchoice%
  {\setlength{\fboxsep}{0pt}\colorbox{#1}{$\displaystyle#2$}}%
  {\setlength{\fboxsep}{0pt}\colorbox{#1}{$\textstyle#2$}}%
  {\setlength{\fboxsep}{0pt}\colorbox{#1}{$\scriptstyle#2$}}%
  {\setlength{\fboxsep}{0pt}\colorbox{#1}{$\scriptscriptstyle#2$}}}%
\definecolor{purple}{HTML}{D62728}
\definecolor{bordeaux}{HTML}{9467BD}
\newcommand{\highlightbox}[2]{\mathchoice%
  {\setlength{\fboxsep}{0pt}\colorbox{#1}{$\displaystyle#2$}}%
  {\setlength{\fboxsep}{0pt}\colorbox{#1}{$\textstyle#2$}}%
  {\setlength{\fboxsep}{0pt}\colorbox{#1}{$\scriptstyle#2$}}%
  {\setlength{\fboxsep}{0pt}\colorbox{#1}{$\scriptscriptstyle#2$}}}%
\newcommand{\iruleMPTry}{T-Try}%
\newcommand{\iruleMPKill}{T-Kill}%
\newcommand{\iruleMPCancel}{T-Cancel}
\newcommand{\iruleMPClock}{T-$\ccst$}%
\newcommand{\iruleMPTime}{T-$\cUnit$}%
\definecolor{stTColor}{rgb}{0.09, 0.45, 0.27}%
\newcommand{\stTFmt}[1]{\stFmt{#1}}%
\newcommand{\stTEnvAnnotGenericSymT}{\stFmt{\alpha}}%
\newcommand{\stEnvQTAnnotQueueSym}{\stTFmt{!}}%
\newcommand{\stEnvQTAnnotRecvSym}{\stTFmt{,}}%
\newcommand{\stEnvQTQueueAnnotSmall}[3]{\stTFmt{\mpS{:}{#1}{\stEnvQTAnnotQueueSym}{#2}{:}{#3}}}%
\newcommand{\stEnvQTRecvAnnotSmall}[3]{\stFmt{\mpS{:}{#1}{\stEnvQTAnnotRecvSym}{#2}{:}{#3}}}%
\newcommand{\stEnvQTMoveAnnot}[1]{\mathrel{\stTFmt{\xrightarrow{#1}}}}
\newcommand{\stEnvQTMoveGenAnnotT}{\stEnvQTMoveAnnot{\stTEnvAnnotGenericSymT}}%
\newcommand{\stEnvQTMoveTimeAnnot}{\stEnvQTMoveAnnot{\timeLab}}%
\newcommand{\stEnvQTMoveQueueAnnot}[3]{%
  \stEnvQTMoveAnnot{\stEnvQTQueueAnnotSmall{#1}{#2}{#3}}%
}%
\newcommand{\stEnvQTMoveRecvAnnot}[3]{%
  \stEnvQTMoveAnnot{\stEnvQTRecvAnnotSmall{#1}{#2}{#3}}%
}%
\newcommand{\iruleTCtxSend}{$\stEnv$-$\oplus$}%
\newcommand{\iruleTCtxRcv}{$\stEnv$-$\&$}%
\newcommand{\iruleTCtxTime}{$\stEnv$-,T}%
\newcommand{\iruleTCtxTimeSession}{$\stEnv$-Ts}
\newcommand{\iruleTCtxTimeQ}{$\stEnv$-Tq}%
\newcommand{\iruleTCtxTimeCombined}{$\stEnv$-Tc}
\newcommand{\iruleTCtxStruct}{$\stEnv$-struct}%
\newcommand{\iruleMPFailed}{T-Failed}%
\newcommand{\stQType}[1][]{\stFmt{\ifempty{#1}{\mathcal{M}}{\mathcal{M}_{#1}}}}
\newcommand{\stQTypei}[1][]{\stFmt{\ifempty{#1}{\mathcal{M}'}{\mathcal{M}'_{#1}}}}
\newcommand{\stQTypeii}[1][]{\stFmt{\ifempty{#1}{\mathcal{M}''}{\mathcal{M}''_{#1}}}}
\newcommand{\stQEmptyType}{\stFmt{\oslash}}
\newcommand{\iruleSubEQueueType}{Sub-$\stFmt{\mathcal{M}}$}%
\newcommand{\iruleSubESessionType}{Sub-\stFmt{T}}%
\newcommand{\iruleSubEQueueTypeEmpty}{Sub-\stFmt{$\stQEmptyType$}}%
\newcommand{\iruleTCtxCongX}{$\stEnv$-,x}%
\newcommand{\iruleTCtxCongCombined}{$\stEnv$-,$\tau$}%
\newcommand{\iruleMPResPropG}{T-$\nu$-$\gtG$}%
\newcommand{\stEnvQTypeEndPred}{\operatorname{End}}%
\newcommand{\stEnvQTypeEndP}[1]{\stEnvQTypeEndPred(\stFmt{#1})}%
\newcommand{\stEquiv}{\mathrel{\stFmt{\equiv}}}%
\def\aka{a.k.a.\@\xspace}%
\def\eg{e.g.\@\xspace}%
\def\ie{i.e.\@\xspace}%
\def\wrt{w.r.t.\@\xspace}%
\def\etal{et al.\xspace}%
\DeclareFontFamily{OMX}{MnSymbolE}{}
\DeclareSymbolFont{MnLargeSymbols}{OMX}{MnSymbolE}{m}{n}
\DeclareFontShape{OMX}{MnSymbolE}{m}{n}{
    <-6>  MnSymbolE5
   <6-7>  MnSymbolE6
   <7-8>  MnSymbolE7
   <8-9>  MnSymbolE8
   <9-10> MnSymbolE9
  <10-12> MnSymbolE10
  <12->   MnSymbolE12
}{}
\DeclareFontShape{OMX}{MnSymbolE}{b}{n}{
    <-6>  MnSymbolE-Bold5
   <6-7>  MnSymbolE-Bold6
   <7-8>  MnSymbolE-Bold7
   <8-9>  MnSymbolE-Bold8
   <9-10> MnSymbolE-Bold9
  <10-12> MnSymbolE-Bold10
  <12->   MnSymbolE-Bold12
}{}
\let\llangle\@undefined
\let\rrangle\@undefined
\DeclareMathDelimiter{\llangle}{\mathopen}%
                     {MnLargeSymbols}{'164}{MnLargeSymbols}{'164}
\DeclareMathDelimiter{\rrangle}{\mathclose}%
                     {MnLargeSymbols}{'171}{MnLargeSymbols}{'171}
\newtcolorbox{myframe}[1][]{
  enhanced,
  arc=0pt,
  outer arc=0pt,
  colback=white,
  boxrule=0.5pt,
  boxsep=0mm,
  left=1mm,
  right=1mm,
  top=0.5mm,
  bottom=0.5mm,
  #1
}
\lstdefinelanguage{Scribble}{%
  basicstyle=\footnotesize\ttfamily,
  stringstyle=\color{Blue},
  showstringspaces=false,
  keywords={nested,new,calls,and,as,at,by,catches,choice,continue,do,from,global,import,instantiates,interruptible,local,module,or,par,protocol,rec,role,sig,throws,to,type,with,int,aux},
  morestring=[b]",
  morestring=[b]',
  morecomment=[l][\color{greencomments}]{//},
}
\lstdefinelanguage{nuScr}{%
  basicstyle=\footnotesize\ttfamily,
  stringstyle=\color{Blue},
  showstringspaces=false,
  keywords={
    nested,new,calls,and,as,at,by,catches,choice,continue,do,from,global,import,instantiates,interruptible,local,module,or,par,protocol,rec,role,sig,throws,to,type,with,int,aux,
    safe
  },
  morestring=[b]",
  morestring=[b]',
  morecomment=[l][\color{greencomments}]{//},
  morecomment=[s][\color{magenta}]{(*}{*)},
}
\tikzset{
  >=stealth,
  node distance=2cm,
  every state/.style={thick, fill=gray!10},
  initial text=$ $,
}
\Crefname{section}{\S\!}{\S\!}%
\Crefname{subsection}{\S\!}{\S\!}%
\Crefname{subsubsection}{\S\!}{\S\!}%
\Crefname{appendix}{Appendix \S\!}{Appendix \S\!}
\Crefname{definition}{Def.\@}{Defs.\@}%
\Crefname{figure}{Fig.\@}{Figs.\@}%
\Crefname{example}{Ex.\@}{Exs.\@}%
\Crefname{corollary}{Cor.\@}{Cors.\@}%
\Crefname{theorem}{Thm.\@}{Thms.\@}%
\Crefname{proposition}{Prop.\@}{Props.\@}%
\crefname{section}{\S\!}{\S\!}%
\crefname{subsection}{\S\!}{\S\!}%
\crefname{subsubsection}{\S\!}{\S\!}%
\crefname{appendix}{Appendix \S\!}{Appendix \S\!}
\crefname{definition}{Def.\@}{Defs.\@}%
\crefname{figure}{Fig.\@}{Figs.\@}%
\crefname{example}{Ex.\@}{Exs.\@}%
\crefname{corollary}{Cor.\@}{Cors.\@}%
\crefname{theorem}{Thm.\@}{Thms.\@}%
\crefname{proposition}{Prop.\@}{Props.\@}%
\newcommand{\iftoggleverb}[1]{%
  \ifcsdef{etb@tgl@#1}
    {\csname etb@tgl@#1\endcsname\iftrue\iffalse}
    {\etb@noglobal\etb@err@notoggle{#1}\iffalse}%
}
\begin{document}

\lstset{
	language=Rust,
	style=colouredRust,
	basicstyle=\ttfamily\small,
	columns=fullflexible,
	xleftmargin=\parindent,
	numbers=left,
	moredelim=[is][\uwave]{~}{~}
}

\maketitle
\begin{abstract}
    Session types using~\emph{affinity} and~\emph{exception handling}
mechanisms have been developed to ensure the communication safety
of protocols implemented in concurrent and distributed programming
 languages.
Nevertheless,  
current affine session types are inadequate for specifying real-world asynchronous protocols, 
as they are usually imposed by \emph{time constraints} which enable \emph{timeout exceptions}
 to prevent indefinite blocking while awaiting valid messages.
This paper proposes the first formal integration of~\emph{affinity}, 
\emph{time constraints},
 \emph{timeouts}, and~\emph{time-failure handling} based on multiparty session types
 for supporting reliability in asynchronous distributed systems.
 With this theory, we statically guarantee that
 asynchronous timed communication is deadlock-free, communication safe, while 
 being \emph{fearless} -- never hindered by timeout errors or abrupt terminations.

To implement our theory,
 we introduce~\timedmulticrusty,
 a~\Rust toolchain designed to facilitate the implementation
 of safe affine timed protocols.
 \timedmulticrusty leverages generic types and the~\timeRust library to handle timed communications,
 integrated with optional types for affinity.
 We evaluate our approach by extending diverse examples from the literature
 to incorporate time and timeouts, 
 demonstrating that our solution incurs negligible overhead compared with an untimed implementation.
 We also showcase the~\emph{correctness by construction} of our approach
 by implementing various real-world use cases, including a remote data protocol 
 from the Internet of Remote Things domain, %
 as well as protocols from real-time systems like 
 Android motion sensors and smartwatches.

\end{abstract}

\section{Introduction}
\label{sec:introduction}

\subparagraph*{Background}
The growing prevalence of distributed programming has emphasised the significance
of prioritising~\emph{reliability}
in distributed systems.
Dedicated research efforts focus on enhancing reliability through the study and modelling of failures.
This research enables the design of more resilient distributed systems, 
capable of effectively handling failures
and ensuring reliable operation.

A lightweight, type-based methodology, which
ensures basic reliability -- \emph{safety}
in distributed communication systems, is
\emph{session types}~\cite{DBLP:conf/esop/HondaVK98}.
This type discipline is further advanced by
\emph{Multiparty Session Types}~(\MPST)~\cite{honda2008Multiparty,HYC16},
which enable the specification and verification of communication protocols
among multiple message-passing processes in concurrent and distributed systems.
\MPST ensure that
protocols are designed
to prevent common safety errors,
\ie deadlocks and communication mismatches during interactions among many
participants~\cite{honda2008Multiparty,HYC16,DBLP:journals/pacmpl/ScalasY19}.
By adhering to a specified~\MPST protocol,
participants (\aka end-point programs) can communicate reliably and efficiently. 
From a practical perspective, %
\MPST have been implemented in
various programming
languages~\cite{castro2019Distributed,kouzapas2016Typechecking,
lagaillardie2022Affine,neykova2017timed,viering2021Multiparty,zhouCFSM2021},
facilitating their applications and providing
safety guarantees in
real-world programs.

Nevertheless,
tackling the challenges of unreliability and failures remains
a significant issue for session types.
Most session type systems operate under the assumption of flawless
and reliable communication without failures.
To address this limitation, recent works~\cite{mostrous2018Affine,fowler2019Exceptional,harvey2021Multiparty,lagaillardie2022Affine}
have developed~\emph{affine session types}
 by incorporating the~\emph{affinity} mechanism that explicitly accounts for and handles unreliability
and failures within session type systems.
Unlike linear types that must be used \emph{exactly} once, affine types can be used \emph{at most} once,
enabling the safe dropping of subsequent types and %
the premature termination
of a session in the presence of protocol execution errors.

In most real-life devices and platforms, communications are predominantly asynchronous:
inner tasks and message transfers may take time. When dealing with such communications,
it becomes crucial to incorporate~\emph{time constraints} and implement~\emph{timeout failure handling}
for each operation. This is necessary to avoid potential blockages where a process might wait indefinitely 
for a message from a non-failed process.  %
While various works, as explained later, %
address time conditions and timeouts in session types,
it is surprising that none of the aforementioned works on affine session types
tackles the handling of %
timeout failures during protocol execution.

\vspace{-1em}
\subparagraph*{This Paper}  %
introduces a new framework, \emph{affine timed multiparty session types} (\ATMP),
to address the challenges of timeouts, disconnections
and other failures in asynchronous communications:
\begin{enumerate*}[label=(\arabic*)]
\item We propose \ATMP, an extension of asynchronous \MPST that incorporates time specifications,
affinity, and mechanisms for handling exceptions, thus facilitating effective management of failures, with a particular focus on timeouts.
Additionally, we
demonstrate that properties from~\MPST,
\ie~\emph{type safety}, \emph{protocol conformance},
and~\emph{deadlock-freedom}, are guaranteed for well-typed processes, even
in the presence of timeouts and their corresponding handling mechanism;
\item We present~\timedmulticrusty,
our~\Rust toolchain designed for building asynchronous timed multiparty protocols under~\ATMP:
\timedmulticrusty enables the implementation of protocols adhering to the properties of \ATMP.
\end{enumerate*}

The primary focus of~\ATMP
lies in
effectively~\emph{handling} timeouts during process execution,
in contrast to the approaches in~\cite{DBLP:conf/concur/BocchiYY14, bocchi2019Asynchronous},
which aim to completely \emph{avoid} time failures. Bocchi \etal~\cite{DBLP:conf/concur/BocchiYY14} introduce 
time conditions in~\MPST to ensure precise timing in communication protocols, 
while their subsequent work~\cite{bocchi2019Asynchronous} extends \emph{binary} timed session types
to incorporate timeouts, allowing for more robust handling of time constraints. %
Yet, they
adopt strict requirements to~\emph{prevent} timeouts.
In~\cite{DBLP:conf/concur/BocchiYY14}, \emph{feasibility} and~\emph{wait-freedom}
are required in their protocol design.
Feasibility requires precise time specifications for protocol termination,
while wait-freedom prohibits overlapping time windows for senders and receivers in a protocol,
which is not practical in real-world applications. %
Similarly, in~\cite{bocchi2019Asynchronous}, strong conditions including~\emph{progress}
of an entire set of processes
and~\emph{urgent receive} are imposed.
The progress property is usually \emph{undecidable},
and the urgent receive condition,
which demands immediate message reception upon availability,
is infeasible with asynchronous communication.

Recently,
\cite{ESOP23MAGPi} proposes the inclusion of timeout as the unique failure notation in~\MPST,
offering flexibility in handling failures.
Time also plays a role in \emph{synchronous} communication systems, where~\cite{DBLP:journals/pacmpl/IraciandCHZ2023} develops \emph{rate}-based \emph{binary} session types,
ensuring \emph{synchronous} message exchanges at the same \emph{rate}, \ie within the same time window.
However, in both~\cite{ESOP23MAGPi} and~\cite{DBLP:journals/pacmpl/IraciandCHZ2023},
time constraints are not integrated into types and static type checking,
resulting in the specifications lacking the ability to guide time behaviour.
Additionally,
the model used in~\cite{DBLP:journals/pacmpl/IraciandCHZ2023} assumes that all communications and computations are non-time-consuming, \ie with zero time cost, making it unfeasible in distributed systems.

By the efficient integration of time and failure handling mechanisms in our framework,
none of those impractical requirements outlined in~\cite{DBLP:conf/concur/BocchiYY14, bocchi2019Asynchronous}
is necessary. In~\ATMP, when a process encounters a timeout error,
a mechanism for handling time failures is triggered,
notifying all participants about the timeout,
leading to the termination of those participants and ultimately ending the session.
Such an approach guarantees that participants consistently
reach the~\emph{end of the protocol},
as the communication session is entirely dropped
upon encountering a timeout error.
As a result,
every process can terminate successfully,
reducing the risk of indefinite blockages, even with timeouts.
Additionally, in our system, time constraints over local clocks are incorporated with types to effectively model asynchronous timed communication, addressing the limitations in~\cite{ESOP23MAGPi,DBLP:journals/pacmpl/IraciandCHZ2023}.

Except for~\cite{DBLP:journals/pacmpl/IraciandCHZ2023},
the aforementioned works on timed session types focus more on theory,
lacking implementations.
To bridge this gap on the practical side, 
we provide~\timedmulticrusty,
a~\Rust implementation of~\ATMP %
designed for secure timed communications.
\timedmulticrusty makes use of affine timed meshed channels,
a communication data structure that integrates time constraints and clock utilisation.
Our toolchain relies on macros and native generic types
to ensure that asynchronous protocols are inherently \emph{correct by construction}.
In particular,
\timedmulticrusty performs compile-time verification to guarantee that,
at any given point in the protocol,
each isolated pair of participants comprises one sender and one receiver
with corresponding time constraints.
Additionally,
we employ affine asynchronous primitives and native optional types
to effectively handle \emph{runtime} timeouts and errors. %

To showcase the capabilities and expressiveness of our toolchain,
we evaluate~\timedmulticrusty through examples from the literature, and further 
case studies including a remote data protocol from an Internet of Remote Things (IoRT) network~\cite{DBLP:journals/sensors/ChenZLCJGYAN22}, 
a servo web protocol from a web engine~\cite{servoWebEngineBuggy}, and protocols from real-time systems such as Android motion sensor~\cite{androidMotionSensors}, PineTime smartwatch~\cite{Pine64}, and keyless entry~\cite{DBLP:journals/tches/WoutersMAGP19}. 
Our comparative analysis with 
a~\Rust implementation of affine \MPST without time~\cite{lagaillardie2022Affine}
reveals that~\timedmulticrusty exhibits minimal overhead
while providing significantly strengthened property checks.

\vspace{-1em}
\subparagraph*{Contributions and Structure}
\textbf{\Cref{sec:overview}} offers a comprehensive overview of our theory and toolchain,
employing the~\emph{remote data}
as a running example.
\textbf{\Cref{sec:session-calculus}} provides a session $\pi$-calculus for \ATMP that incorporates timeout, affinity, asynchrony, and failure handling mechanisms.
\textbf{\Cref{sec:aat-mpst-type-system-popl}}
introduces an extended theory of asynchronous multiparty session types
with time annotations, elucidating the behavioural correspondence between timed types 
in~\Cref{lem:sound_proj,lem:comp_proj}. 
Additionally,
we present a typing system for~\ATMP session
$\pi$-calculus, %
and demonstrate the properties of typed processes, including subject reduction~(processes preserve well-typedness),  %
session fidelity~(processes adhere to their types),
and deadlock-freedom~(processes never get stuck),
in~\Cref{lem:sr_global,lem:aat-mpst-session-fidelity-global,lem:aat-mpst-process-df-proj}. %
\textbf{\Cref{sec:implementation:implementation}} 
delves into the design and usage of \timedmulticrusty,
our \Rust implementation of \ATMP. 
\textbf{\Cref{sec:evaluation}} showcases 
the compilation and execution benchmarks of~\timedmulticrusty,
based on selected case studies. 
Finally, \textbf{\Cref{sec:related_work}} concludes the paper
by discussing related work, and offering conclusions and potential future work. %
Full proofs,  auxiliary material,
and more details of~\timedmulticrusty can be found in Appendix.
Additionally, our toolchain and evaluation examples are available in an artifact,
which can be accessed on \href{https://zenodo.org/doi/10.5281/zenodo.11032195}{\color{blue}{Zenodo}}. 
For those interested in the source files, they are available
on \href{https://github.com/NicolasLagaillardie/ECOOP24-Artefact}{\color{blue}{GitHub}}.

\section{Overview}
\label{sec:overview}
In this section, we give an overview of affine timed multiparty session types (\ATMP) and \timedmulticrusty, our toolchain for implementing affine timed asynchronous protocols.
First, 
we share 
a real-world example inspiring our work on
affine asynchronous timed communication.

\Cref{fig:implementation:remote_data} depicts
our running example, \emph{remote data}.
This real-world scenario is sourced from a satellite-enabled Internet of Remote
Things network~\cite{DBLP:journals/sensors/ChenZLCJGYAN22}, and describes
data transmissions among a \emph{Sensor} ({\small $\roleFmt{Sen}$}),
a \emph{Server} ({\small $\roleFmt{Ser}$}), %
and a \emph{Satellite} ({\small $\roleFmt{Sat}$}):  {\small $\roleFmt{Ser}$} aims to periodically retrieve
data gathered by {\small $\roleFmt{Sen}$}
via {\small $\roleFmt{Sat}$}.
The protocol revolves around a loop initiated by {\small $\roleFmt{Ser}$},
which faces a decision:
either retrieve data or end the protocol.
In the former scenario,
{\small $\roleFmt{Ser}$} requests data retrieval
from {\small $\roleFmt{Sen}$} with a message labelled {\small \emph{GetData}} via {\small $\roleFmt{Sat}$}
within the time window of {\small 5} and {\small 6} time units, as indicated by
clock constraints~(\ie~{\small $5 \leq C_{\roleFmt{Ser}} \leq 6$},
where {\small $C_{\roleFmt{Ser}}$} is the clock associated with {\small $\roleFmt{Ser}$}).
Upon receiving this request,
{\small $\roleFmt{Sen}$} responds by sending the data with
a message labelled {\small \emph{Data}}
to {\small $\roleFmt{Ser}$} through {\small $\roleFmt{Sat}$} within
{\small 6} and {\small 7} time units, followed by clock resets denoted
as reset predicates~(\ie~{\small $C_{\roleFmt{Ser}} := 0$}, resetting the clock
to {\small 0}). 
In the alternative branch,
{\small $\roleFmt{Ser}$} sends a {\small \emph{Close}} message to
{\small $\roleFmt{Sat}$}, which is then %
forwarded to {\small $\roleFmt{Sen}$},
between {\small 5} and {\small 6} time units.

Our remote data protocol 
includes internal tasks that consume time,
notably {\small $\roleFmt{Sen}$} requiring {\small 5} time units to gather data  before transmitting.
In cases where our protocol lacks a specified timing strategy~(\ie no time requirements), and {\small $\roleFmt{Sen}$} cannot accomplish the data-gathering tasks, it results in indefinite blocking for {\small $\roleFmt{Sat}$} 
and {\small $\roleFmt{Ser}$} as they await the data. 
This  could lead to undesirable outcomes, including partially processed data, data corruption, or incomplete transmission of processed data.
Therefore, incorporating time constraints into communication protocols is imperative, as it better reflects real-world scenarios and ensures practical viability.

\subsection{\ATMP: Theory Overview}
\label{sec:overview:theory}

\begin{figure}[!t]
\begin{subfigure}{0.35\textwidth}
\centering
\tiny
\begin{tikzpicture}
\node (PROGRESS) [align = center] {Processes with\\Affinity, Time, and Timeouts};%
\node (TLT) [above = 1.94cm of PROGRESS, align = center, xshift = -0.13cm] {Timed Local Types};
\node (TGT) [above = 0.76cm of TLT, align = center, xshift = 0.00cm] {Timed Global Type};
\node (NUSCR) [right = 3.1cm of TGT, align = center] {\timednuscr};
\node (CTA) [below = 0.67cm of NUSCR, align = center] {Communicating\\Timed Automata};
\node (MC) [below = 0.71cm of CTA, align = center] {\timedmulticrusty};
\node (PROGRAMS) [below = 0.71cm of MC, align = center] {Programs written\\with~\timedmulticrusty API};

\draw[->] (TGT) -> (NUSCR); 
\draw[->] (CTA) -> (TLT);
\draw[->] (PROGRAMS) -> (PROGRESS);

\draw[align = center,->] (NUSCR) -> node [pos = .5, anchor = east, xshift = -.1cm] (ProjectionAnon) {${\highlightbox{pink}{\textnormal{Projection}}}$} (CTA);
\draw[align = center,->] (CTA) -> node [pos = .5, anchor = east, xshift = -.1cm] (GenerationAnon) {${\highlightbox{pink}{\textnormal{Generation}}}$} (MC);
\draw[align = center,->] (MC) -> node [pos = .5, anchor = east, xshift = -.1cm] (TypeAnon) {${\highlightbox{pink}{\textnormal{Type}}}$\\${\highlightbox{pink}{\textnormal{Checking}}}$} (PROGRAMS);

\draw[align = center, ->] (TGT) edge node[pos = .5, anchor = east, xshift = -.1cm] (ProjectionTypes) {${\highlightbox{pink}{\textnormal{Projection}}}$} (TLT);
\draw[align = center, ->] (TGT) edge node[pos = .5, anchor = west, xshift = .1cm] (ThmsTypes) {Thms~\ref{lem:comp_proj},~\ref{lem:sound_proj}}(TLT);

\begin{pgfonlayer}{background}
\node[fit = (TGT) (TLT) (ProjectionTypes) (ThmsTypes), draw = green_colour_blind, minimum width = 2.4cm] (green_rectangle) {};
\node[fit = (PROGRESS), draw = blue_colour_blind] (blue_rectangle){};
\node[fit = (NUSCR) (CTA) (ProjectionAnon) (GenerationAnon) (TypeAnon) (MC) (PROGRAMS), draw = red_colour_blind, minimum width = 2.15cm] (red_rectangle) {};
\node[above = 0cm of green_rectangle, anchor = south] {Section~\ref{sec:aat-mpst-type-system-popl}};
\node[below = 0cm of blue_rectangle, anchor = north] {Section~\ref{sec:session-calculus}};
\node[above = 0cm of red_rectangle, anchor = south] {Section~\ref{sec:implementation:implementation}};
\draw[align = center, ->] (green_rectangle) edge node[pos = .5, anchor = east]{
Thms~\ref{lem:sr_global},~\ref{lem:aat-mpst-session-fidelity-global},~\ref{lem:aat-mpst-process-df-proj}
} (blue_rectangle);
\draw[align = center, ->] (green_rectangle) edge node[pos = .5, xshift = .4cm, anchor = west] (Inst) {${\highlightbox{pink}{\textnormal{Type Checking}}}$} (blue_rectangle);
\node[fit = (Inst), draw = black] (black_rectangle) {};
\node[below = .32cm of black_rectangle, anchor = south] {Section~\ref{sec:aat-mpst-type-system-popl}};
\end{pgfonlayer}
\end{tikzpicture}
\caption{Top-down view of~\ATMP (left) and~\timedmulticrusty (right).}
\label{fig:theory_methodology}
\end{subfigure}
\hfill
\begin{subfigure}{0.43\textwidth}
\centering
\includegraphics[width=1\textwidth]{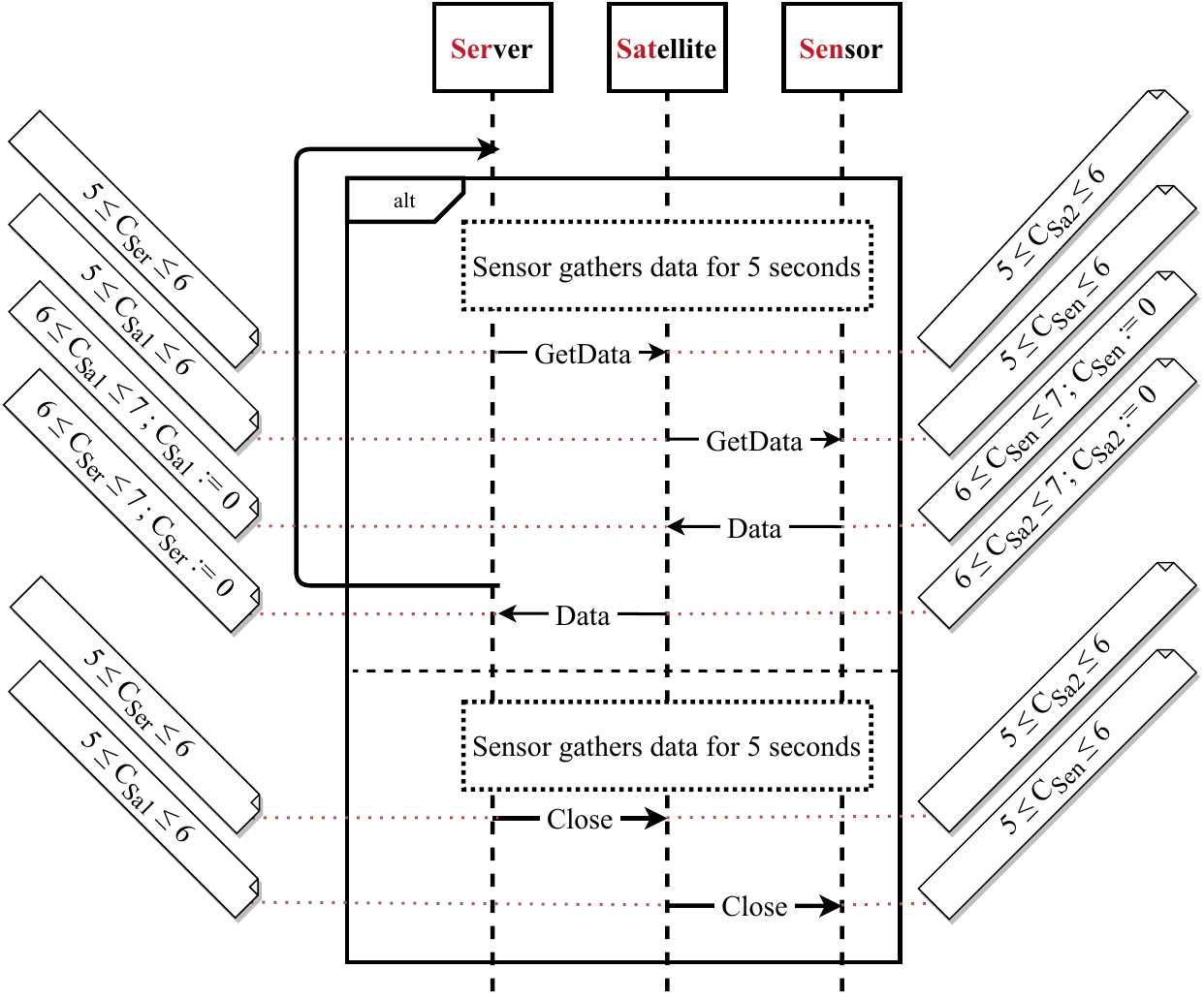}
\caption{Global protocol for \emph{remote data}.}
\label{fig:implementation:remote_data}
\end{subfigure}
\vspace{-.5em}
\caption{Overview of affine asynchronous communication with time.}
\label{fig:general_overview}
\vspace{-1em}
\end{figure}

Our~\ATMP theory follows the \emph{top-down} methodology~\cite{honda2008Multiparty,HYC16}, 
enhancing asynchronous MPST with time features to facilitate timed global and local types.
As shown in~\Cref{fig:theory_methodology}~(left),
we specify %
multiparty
protocols with time as~\emph{timed global types}.
These timed global types are projected into~\emph{timed local types}, which are
then used for type-checking processes
with affine types, time, timeouts, and failure handling, written in a session calculus. %
As an example, we consider a simple communication
scenario derived from %
remote data:
the Satellite~({\small $\roleFmt{Sat}$})
communicates with the server~({\small $\roleFmt{Ser}$})
by sending a~\emph{Data} message~({\small $\gtMsgFmt{Data}$}).
Specifically, {\small $\roleFmt{Sat}$} needs to %
send the message between 6 and 7
time units and reset its clock afterwards, while {\small $\roleFmt{Ser}$} is expected to
receive the message within the same time window and reset its clock accordingly.

\vspace{-1em}
\subparagraph*{Timed Types and Processes}  
This communication behaviour
can be represented by the timed global type $\gtG$: 

\centerline{\(
\small 
   \gtCommT{\roleFmt{Sat}}{\roleFmt{Ser}}{}{
      Data
   }{}{
  \mpFmt{6 \leq C_{\roleFmt{Sat}} \leq 7, C_{\roleFmt{Sat}} := 0, 6 \leq C_{\roleFmt{Ser}} \leq 7, C_{\roleFmt{Ser}} := 0}
   }{
      \gtEnd
   }
   \)}

\noindent
where {\small $C_{\roleFmt{Sat}}$} and {\small $C_{\roleFmt{Ser}}$} denote the clocks of
{\small $\roleFmt{Sat}$} and {\small $\roleFmt{Ser}$}, respectively.
A global type represents a protocol specification involving multiple roles from a global standpoint.

Adhering to the MPST top-down approach,
a timed global type is then \emph{projected} onto timed local
types, which describe communications from the perspective of individual roles.
In our example, $\gtG$ is projected onto two timed local types,
one for each role {\small $\roleFmt{Sat}$} and {\small $\roleFmt{Ser}$}:

\smallskip
\centerline{\(
   \small
   \stT[\roleFmt{Sat}] =
   \roleFmt{Ser}\stFmt{\oplus}
   \stLabFmt{Data}\stFmt{\{ \mpFmt{6 \leq C_{\roleFmt{Sat}} \leq 7, C_{\roleFmt{Sat}} := 0}\}}
   \stSeq
   \stEnd
   \quad
   \stT[\roleFmt{Ser}] =
   \roleFmt{Sat}\stFmt{\&}
   \stLabFmt{Data}\stFmt{\{\mpFmt{6 \leq C_{\roleFmt{Ser}} \leq 7, C_{\roleFmt{Ser}} := 0}\}}
   \stSeq
   \stEnd
   \)}

 \smallskip
   \noindent
Here {\small $\stT[\roleFmt{Sat}]$} indicates
that {\small $\roleFmt{Sat}$} sends~({\small $\stFmt{\oplus}$}) the message {\small $\stLabFmt{Data}$}
to {\small $\roleFmt{Ser}$}
between 6 and 7 time units and then immediately resets its clock $\mpFmt{C_{\roleFmt{Sat}}}$.
Dually,
{\small $\stT[\roleFmt{Ser}]$} denotes {\small $\roleFmt{Ser}$}
receiving~({\small $\stFmt{\&}$}) the message from {\small $\roleFmt{Sat}$}
within the same time frame
and resetting its clock $\mpFmt{C_{\roleFmt{Ser}}}$.

In the final step of the top-down approach, we employ timed local types to conduct type-checking for processes, denoted as {\small $\mpP[i]$}, in the \ATMP session calculus. Our session calculus extends the framework for \emph{affine multiparty session types}~(\AMPST)~\cite{lagaillardie2022Affine}
by incorporating processes that model time, %
timeouts, and asynchrony.
In our example, {\small $\stT[\roleFmt{Sat}]$} and {\small $\stT[\roleFmt{Ser}]$} are used for the type-checking of {\small $\mpChanRole{\mpS}{\roleFmt{Sat}}$} and {\small $\mpChanRole{\mpS}{\roleFmt{Ser}}$}, which respectively represent
the channels (a.k.a. session endpoints) played by roles {\small $\roleFmt{Sat}$} and {\small $\roleFmt{Ser}$} in a multiparty session {\small $\mpS$}, within the processes:

\smallskip
\centerline{\(
   \small
   \mpP[\roleFmt{Sat}] =
   \delay{C_{1} = 6.5}{\mpSTSel{\mpChanRole{\mpS}{\roleFmt{Sat}}}{\roleFmt{Ser}}{\mpLabFmt{Data}}{\mpNil}{0.4}}
   \quad
   \mpP[\roleFmt{Ser}] =
   \delay{C_{2} = 6}{\mpSTBranch{\mpChanRole{\mpS}{\roleFmt{Ser}}}{\roleFmt{Sat}}{\mpLabFmt{Data}}{\mpNil}{0.3}}
   \)
}

\smallskip
\noindent
The Satellite process {\small $\mpP[\roleFmt{Sat}]$} waits for exactly
{\small 6.5} time units~({\small{$\mpFmt{\romanF{delay}(C_1 = 6.5)}$}}),
then sends the message {\small $\mpLabFmt{Data}$} with a timeout of {\small 0.4} time units~({\small{$\mpSTSelN{\mpChanRole{\mpS}{\roleFmt{Sat}}}{\roleFmt{Ser}}{\mpLabFmt{Data}}{0.4}$}}),
and becomes inactive~({\small{$\mpNil$}}).
Meanwhile,
the Server process {\small $\mpP[\roleFmt{Ser}]$}
waits for {\small 6} time units~({\small{$\mpFmt{\romanF{delay}(C_2 = 6)}$}}), then receives the message
with a timeout of  {\small 0.3} time units~({\small{$\mpSTBranchOut{\mpChanRole{\mpS}{\roleFmt{Ser}}}{\roleFmt{Sat}}{\mpLabFmt{Data}}{0.3}$}}), %
subsequently becoming inactive.

\vspace{-1em}
\subparagraph{%
Solution to Stuck Processes Due to Time Failures} %
It appears that the parallel execution of {\small $\mpP[\roleFmt{Sat}]$} and {\small $\mpP[\roleFmt{Ser}]$},
{\small $\mpP[\roleFmt{Sat}] \mpPar \mpP[\roleFmt{Ser}]$},
cannot proceed further due to the disparity in timing requirements.
Specifically, using the same session {\small $\mpS$}, {\small $\roleFmt{Sat}$}
sends the message  {\small $\mpLabFmt{Data}$} to {\small $\roleFmt{Ser}$}
between {\small 6.5} and {\small 6.9} time units,
while {\small $\roleFmt{Ser}$} must
receive it from {\small $\roleFmt{Sat}$} between {\small 6} and {\small 6.3} time units.
This results in a stuck situation, as {\small $\roleFmt{Ser}$} %
cannot meet the required timing condition to receive the message.

Fortunately, in our system, timeout failures are allowed, which can be addressed by leveraging affine session types and
their associated failure handling mechanisms. Back to our example, when {\small $\mpChanRole{\mpS}{\roleFmt{Ser}}$} waits for {\small 6} time units and cannot receive {\small $\mpLabFmt{Data}$} within {\small 0.3} time units, a timeout failure is raised~({\small
$\mpFailedP{\mpBranchRawN{{\mpChanRole{\mpS}{\roleFmt{Ser}}}^{0.3}}{\roleFmt{Sat}}{\mpLabFmt{Data} \mpSeq \mpNil}}$}).
Furthermore, we apply our time-failure handling approach to manage this timeout failure, initiating the termination of
the channel {\small $\mpChanRole{\mpS}{\roleFmt{Ser}}$} and triggering the cancellation process of the session {\small $\mpS$ ($\kills{\mpS}$)}.
As a result, the process will successfully terminate by canceling (or killing) all usages of {\small $\mpS$} within it.

Conversely, the system introduced in~\cite{DBLP:conf/concur/BocchiYY14} enforces strict requirements, including~\emph{feasibility} and~\emph{wait-freedom}, on timed global types
to prevent time-related failures in well-typed processes,
thus preventing them from becoming blocked due to unsolvable timing constraints.
Feasibility ensures the successful termination of each allowed partial execution,
while wait-freedom guarantees that receivers do not have to wait if senders follow their time constraints. %
In our example, we start with a timed global type that is neither feasible nor wait-free,
showcasing how our system effectively handles time failures
and ensures successful process termination
without imposing additional conditions on timed global types.
In essence,
reliance on feasibility and wait-freedom becomes
unnecessary in our system, thanks to the inclusion of affinity
and time-failure handling mechanisms.

\subsection{\timedmulticrusty: Toolchain Overview}
\label{subsec:implementation:overview:framework}

\begin{figure}[!t]
\footnotesize
\centering
\begin{subfigure}[t]{0.3\textwidth}
\centering
\begin{tabular}{c}
\begin{rustlisting}
struct Send<T,*@\label{line:overview_types:send:parameter:payload}@*
 const CLOCK: char,*@\label{line:overview_types:send:parameter:clock}@*
 const START: i128,*@\label{line:overview_types:send:parameter:start}@*
 const INCLUDE_START: bool,*@\label{line:overview_types:send:parameter:include_start}@*
 const END: i128,*@\label{line:overview_types:send:parameter:end}@*
 const INCLUDE_END: bool,*@\label{line:overview_types:send:parameter:include_end}@*
 const RESET: char,*@\label{line:overview_types:send:parameter:reset}@*
 S>*@\label{line:overview_types:send:parameter:continuation}@*
\end{rustlisting}
\end{tabular}
\caption{\CODE{Send} type}%
\label{subfig:overview_types:send}
\end{subfigure}\hfill%
\begin{subfigure}[t]{0.3\textwidth}
\centering
\begin{tabular}{c}
\begin{rustlisting}
struct Recv<T,
 const CLOCK: char,
 const START: i128,
 const INCLUDE_START: bool,
 const END: i128,
 const INCLUDE_END: bool,
 const RESET: char,
 S>
\end{rustlisting}
\end{tabular}
\caption{\CODE{Recv} type}%
\label{subfig:overview_types:recv}
\end{subfigure}\hfill%
\begin{subfigure}[t]{0.4\textwidth}
\centering
\begin{tabular}{c}
\begin{rustlisting}
MeshedChannels<
 Recv<Data,*@\label{line:overview_types:meshedchannels:parameter:sensor}@*
  'a',6,true,7,true,'a',End>,
 Send<Data,*@\label{line:overview_types:meshedchannels:parameter:server}@*
  'b',6,true,7,true,'b',End>,
 RoleSen<RoleSer<End>>,*@\label{line:overview_types:meshedchannels:parameter:stack}@*
 NameSat,*@\label{line:overview_types:meshedchannels:parameter:name}@*
>
\end{rustlisting}
\end{tabular}
\caption{\CODE{MeshedChannels} type for~$\roleFmt{Sat}$}
\label{subfig:overview_types:meshedchannels}
\end{subfigure}
\vspace{-.5em}
\caption{Main types of~\timedmulticrusty}
\label{fig:overview_types}
\vspace{-1em}
\end{figure}

To augment the theory,
we introduce the~\timedmulticrusty library,
a toolchain for implementing communication protocols in~\Rust.
\timedmulticrusty specifies protocols where communication operations
must adhere to specific time limits (\emph{timed}),
allowing for \emph{asynchronous} message reception and
runtime handling of certain failures (\emph{affine}).
This library relies on two fundamental types:~{\small \CODE{Send}} and~{\small \CODE{Recv}},
representing message sending and receiving, respectively.
Additionally, it incorporates the~{\small \CODE{End}} type, signifying termination to close the connection.
\Cref{subfig:overview_types:send,subfig:overview_types:recv}
illustrate the~{\small \CODE{Send}} and~{\small \CODE{Recv}} types respectively,
used for sending and receiving messages of any thread-safe type
(represented as~{\small \CODE{T}} in~\Cref{line:overview_types:send:parameter:payload}).
After sending or receiving a message,
the next operation or continuation
({\small \CODE{S}} in~\Cref{line:overview_types:send:parameter:continuation}) is determined,
which may entail sending another message,
receiving another message,
or terminating the connection.

Similar to %
\ATMP,
each communication operation in~\timedmulticrusty is constrained by specific time boundaries to avoid infinite waiting.
These time bounds are represented by the parameters in
\Crefrange{line:overview_types:send:parameter:clock}{line:overview_types:send:parameter:reset}
of~\Cref{subfig:overview_types:send},
addressing scenarios where a role may be required to send
after a certain time unit or receive between two specific time units.
Consider the final communication operation in the first branch
of~\Cref{fig:implementation:remote_data} from {\small $\roleFmt{Sat}$}'s perspective.
To remain consistent with~\Cref{sec:overview:theory},
the communication is terminated here instead of looping
back to the beginning of the protocol.
In this operation,
{\small $\roleFmt{Sat}$} sends a message labelled {\small \CODE{Data}}
to {\small $\roleFmt{Ser}$} between time units {\small $6$} and {\small $7$},
with respect to its inner clock {\small \CODE{'b'}},
and then terminates after resetting its clock.
This can be implemented as: %
{\small \CODE{Send<Data, 'b', 6, true, 7, true, 'b', End>}}.

To enable multiparty communication in~\timedmulticrusty, we
use the~{\small \CODE{MeshedChannels}} type, inspired by~\cite{lagaillardie2022Affine}.
This choice is necessary as~{\small \CODE{Send}} and~{\small \CODE{Recv}} types are primarily
designed for \emph{binary}~(peer-to-peer) communication.
Within~{\small \CODE{MeshedChannels}}, each binary channel pairs the owner role with another,
establishing a mesh of communication channels that encompasses all participants.
\Cref{subfig:overview_types:meshedchannels} demonstrates
an example of using~{\small \CODE{MeshedChannels}}
for~{\small $\roleFmt{Sat}$} in our running example:
{\small $\roleFmt{Sat}$} receives a~{\small \CODE{Data}} message from {\small $\roleFmt{Sen}$}
(\Cref{line:overview_types:meshedchannels:parameter:sensor}) and forwards it %
to {\small $\roleFmt{Ser}$} (\Cref{line:overview_types:meshedchannels:parameter:server})
before ending all communications,
following the order specified by the stack in~\Cref{line:overview_types:meshedchannels:parameter:stack}.

Creating these types manually in~\Rust can be challenging and error-prone,
especially because they represent the local perspective of each role in the protocol.
Therefore, as depicted in~\Cref{fig:theory_methodology}~(right),
\timedmulticrusty employs a top-down methodology similar to~\ATMP to
generate local viewpoints from a global protocol,
while ensuring the \emph{correctness} of the generated types \emph{by construction}.
To achieve this, we extend the syntax of~\nuscr~\cite{zhouCFSM2021},
a language for describing multiparty communication protocols,
to include time constraints, resulting in~\timednuscr.
A timed global protocol represented in~\timednuscr
is then projected onto local types,
which are used for generating~\Rust types in~\timedmulticrusty.

\section{Affine Timed Multiparty Session Calculus}
\label{sec:session-calculus}
\label{SEC:SESSION-CACULUS}
In this section, %
we formalise an affine timed multiparty session $\pi$-calculus,
where processes are capable of performing time actions, raising timeouts, and handling failures. 
We start with the formal definitions of time constraints used in the paper.

\vspace{-1em}
\subparagraph{Clock Constraint, Valuation, and Reset}
\label{sec:all_about_clocks}
Our time model is based on the timed automata
formalism~\cite{DBLP:journals/tcs/AlurD94,DBLP:conf/cav/KrcalY06}.
Let $\mathbf{C}$ %
denote a finite set of \emph{clocks}, ranging over $C, C', C_1,\ldots$, 
that take non-negative real values in $\rn_{\ge 0}$. %
Additionally, let $\cUnit, \cUniti, \cUnit[1], \ldots$ be \emph{time constants} ranging over $\rn_{\ge 0}$. 
A \emph{clock constraint} $\ccst$ over $\cs$ is defined as: %

\smallskip
\centerline{\(
\small
\ccst%
    \,\coloncolonequals\,%
    \texttt{true}
    \bnfsep%
     C > \mathfrak{b}%
    \bnfsep%
    C = \mathfrak{b}%
    \bnfsep%
    \neg \ccst%
    \bnfsep%
    \ccst_1 \land \ccst_2    
    \)} 
 
\smallskip
 \noindent
  where $C \in \cs$ and $\mathfrak{b}$ is a \emph{constant time bound} ranging over non-negative rationals  
    $\mathbb{Q}_{\ge 0}$. We define $\texttt{false}, <, \ge, \le$ in the standard way.
For simplicity and consistency with our implementation~(\Cref{sec:implementation:implementation}), 
we assume %
each clock constraint contains a \emph{single} clock.
Extending  a clock constraint with \emph{multiple} clocks is straightforward.

A \emph{clock valuation} $\cVal : \cs \rightarrow \rn_{\ge 0}$
assigns time to each clock in $\cs$.
We define $\cVal + t$ as the valuation that assigns to each $C \in \cs$ the value $\cVal(C) + t$.
The \emph{initial} valuation that maps all clocks to $0$ is denoted as $\cVal^{0}$,
 and the valuation that assigns a value of $t$ to all clocks is denoted as $\cVal^{t}$.
  $\cVal \models \ccst$ indicates that the constraint
  $\ccst$ is satisfied by the valuation
  $\cVal$.
   Additionally,
   we use $\sqcup_{i \in I} \cVal[i]$ to represent the overriding union of the valuations $\cVal[i]$ for $i \in I$. %

A~\emph{reset predicate} $\crst$ over $\cs$
is a subset of $\cs$ that defines the clocks to be reset.
If $\crst = \emptyset$, no reset is performed.
Otherwise,
the valuation for each clock $C \in \crst$ is set to $0$.
For %
clarity,
we represent a reset predicate as  $C := 0$
when a single clock $C$ needs to be reset.
To denote the clock valuation identical to $\cVal$
but with the values of clocks in $\crst$ to $0$,
we use $\cValUpd{\cVal}{\crst}{0}$.

\vspace{-1em}
\subparagraph{Syntax of Processes} %
\label{sec:session-calculus:syntax}
\label{SEC:SESSION-CALCULUS:SYNTAX}
Our session $\pi$-calculus for affine timed multiparty session types~(\ATMP) 
models timed processes interacting 
via affine meshed multiparty channels.
It extends the calculus for affine multiparty session types~(\AMPST)~\cite{lagaillardie2022Affine}
by incorporating asynchronous communication,
time features, timeouts, and failure handling.\footnote{
To simplify,
our calculus exclusively emphasises send/receive communication channels.
Standard extensions, \eg integers, booleans, and conditionals,
can easily integrate into our framework since they are independent of our main formulation;
they will be used in the examples.}

\begin{definition}[Syntax] %
\label{def:aat-mpst-syntax-terms}
Let $\roleP, \roleQ, \roleR, \ldots$ denote \emph{roles} belonging to a (fixed) set $\roleSet$; 
$\mpS, \mpSi, \ldots$ for \emph{sessions}; 
$x, y, \ldots$ for \emph{variables}; 
$\mpLab, \mpLabi, \ldots$ for \emph{message labels};  and 
$\mpX, \mpY, \ldots$ for \emph{process variables}.
The \emph{affine timed multiparty session} \emph{$\pi$-calculus} syntax is defined as follows:

\smallskip%
\centerline{
\(
\small
\begin{array}{r@{\hskip 2mm}c@{\hskip 2mm}l@{\hskip 2mm}l}
  \textstyle%
  \mpC, \mpD
  &\coloncolonequals&%
  \mpFmt{x} \bnfsep \mpChanRole{\mpS}{\roleP}%
  & \mbox{\small(variable, channel with role $\roleP$)}
  \\[1mm]
  \mpP, \mpQ 
  &\coloncolonequals&%
  \mpNil
    \bnfsep \mpP \mpPar \mpQ%
    \bnfsep \mpRes{\mpS}{\mpP}%
  &
  \mbox{\small (inaction, parallel composition, restriction)}
  \\
  &&
  \highlight{\mpTSel{\mpC}{\roleQ}{\mpLab}{\mpD}{\mpP}{\mathfrak{n}}}
  &
  \mbox{\small(timed selection towards role $\roleQ$)}
  \\
  &&
  \highlight{\mpTBranch{\mpC}{\roleQ}{i \in I}{\mpLab[i]}{x_i}{\mpP[i]}{\mathfrak{n}}{}}%
  &
  \mbox{\small(timed branching from role $\roleQ$ with $I \neq \emptyset$)}
  \\
  &&
  \mpDefAbbrev{\mpDefD}{\mpP}%
  \bnfsep \mpCall{\mpX}{\widetilde{\mpC}}
  &
  \mbox{\small(process definition, process call)} %
  \\
  &&
 \emph{\mbox{$\highlight{\delay{\ccst}{\mpP}}$}}
  \bnfsep
\emph{\mbox{\ul{$\highlight{\delay{\cUnit}{\mpP}}$}}}
&
   \mbox{\small(time-consuming delay, \ul{deterministic delay})}
  \\
  &&
  \mbox{\ul{$\highlight{\mpFailedP{\mpP}}$}}
    \bnfsep
 \emph{\mbox{$\trycatch{\mpP}{\mpQ}$}}%
         &
        \mbox{\small(\ul{timeout failure}, try-catch)}
        \\
        &&
   \emph{ \mbox{$\mpCancel{\mpC}{\mpP}$}}
       \bnfsep 
       \mpCErr
        \bnfsep%
\mbox{\ul{$\kills{\mpS}$}}
&
        \mbox{\small(cancel, communication error, \ul{kill})} %
        \\
   &&
  \highlight{ \mpSessionQueueO{\mpS}{\roleP}{\mpQueue}}
   &
    \mbox{\small(output message queue of role $\roleP$ in session $\mpS$)}
   \\[1mm]
  \mpDefD%
  &\coloncolonequals&%
  \mpJustDef{\mpX}{\widetilde{x}}{P}
  &
  \mbox{\small(declaration of process variable $\mpX$)}
  \\[1mm]
  \highlight{ \mpQueue}%
  &\highlight{\coloncolonequals}&%
  \highlight{\mpQueueCons{%
    \mpQueueOElem{\roleQ}{\mpLab}{\mpChanRole{\mpS}{\roleR}}
  }{%
    \mpQueue%
  }}  \bnfsep \highlight{\mpQueueEmpty}
  &
  \mbox{\small(message queue, non-empty or empty)}
  \end{array}
\)}

\smallskip
\noindent%
Restriction, branching, and process definitions and declarations act as binders;
$\fc{\mpP}$ is the set of \emph{free channels with roles} in $\mpP$, $\fv{\mpP}$ is
the set of \emph{free variables} in $\mpP$, and $\Pi_{i \in I} \mpP[i]$
is the parallel composition of processes $\mpP[i]$.
Extensions w.r.t. \emph{\AMPST} calculus are $\highlight{\text{highlighted}}$. 
Runtime processes, generated dynamically during program execution rather than explicitly written by users, 
are \ul{underlined}. %
\end{definition}

Our calculus comprises: 

\noindent
{\bf \emph{Channels}} $\mpC, \mpD$, %
 being either variables $\mpFmt{x}$
or channels with roles (\aka session \emph{endpoints}) $\mpChanRole{\mpS}{\roleP}$. 

\vspace{.2ex}
\noindent
{\bf \emph{Standard}} processes as in~\cite{DBLP:journals/pacmpl/ScalasY19,lagaillardie2022Affine}, including 
inaction $\mpNil$, parallel composition $\mpP \mpPar \mpQ$, session scope restriction
$\mpRes{\mpS}{\mpP}$, process definition $\mpDefAbbrev{\mpDefD}{\mpP}$, %
process
call $\mpCall{\mpX}{\widetilde{\mpC}}$, and communication error $\mpCErr$. 

\vspace{.2ex}
\noindent
{\bf \emph{Time}} processes that follow the program time behaviour of~\Cref{subfig:overview_types:meshedchannels}: 
\begin{itemize}[left=0pt,topsep=0pt]
\item {\bf \emph{Timed selection}} (or {\bf \emph{timed internal choice}}) $\mpTSel{\mpC}{\roleQ}{\mpLab}{\mpD}{\mpP}{\mathfrak{n}}$
indicates that a message
$\mpLab$ with payload $\mpD$ is sent to role $\roleQ$ via endpoint $\mpC$, 
whereas {\bf \emph{timed branching}} (or {\bf \emph{timed external choice}}) 
$\mpTBranch{\mpC}{\roleQ}{i \in I}{\mpLab[i]}{x_i}{\mpP[i]}{\mathfrak{n}}{}$ waits to
receive a message $\mpLab[i]$ from role $\roleQ$ via endpoint $\mpC$ and then proceeds as $\mpP[i]$.

The parameter $\mathfrak{n}$ in both timed selection and branching is a \emph{timeout} that allows modelling
different types of communication primitives: \emph{blocking with a timeout}
 ($\mathfrak{n} \in \rn_{> 0}$), \emph{blocking} ($\mathfrak{n} = \infty$), or \emph{non-blocking} ($\mathfrak{n} = 0$).
When $\mathfrak{n} \in \mathbb{R}_{\ge 0}$, the timed selection (or timed branching) process waits for up to $\mathfrak{n}$ time units to send (or receive) a message.
If the message cannot be sent (or received) within this time, the process moves into a \emph{timeout state}, raising a \emph{time failure}.
If $\mathfrak{n}$ is set to $\infty$, the timed selection (or timed branching) process blocks until a message is successfully sent (or received).

In our system, we allow~\emph{send} processes to be time-consuming, enabling processes to wait before sending messages. 
Consider the remote data example shown in~\Cref{fig:implementation:remote_data}. 
This practical scenario illustrates how a process might wait before sending a message, resulting in the possibility of send actions failing due to timeouts. It highlights the importance of timed selection, contrasting with systems like in~\cite{bocchi2019Asynchronous} where send actions are instantaneous.

\item $\delay{\ccst}{\mpP}$ represents a {\bf \emph{time-consuming delay}} action, such as  method invocation or sleep.
Here, $\ccst$ is a clock constraint involving a \emph{single} clock variable $C$, used to specify the interval for the delay. 
When executing $\delay{\ccst}{\mpP}$, any time value $\cUnit$ that satisfies the constraint $\ccst$ can be consumed.
Consequently, the \emph{runtime} {\bf \emph{deterministic delay}} process $\delay{\cUnit}{\mpP}$, arising during the execution of 
$\delay{\ccst}{\mpP}$, is introduced. In $\delay{\cUnit}{\mpP}$, $\cUnit$ is a constant and a solution to $\ccst$, 
and $\mpP$ is %
executed after a precise delay of $\cUnit$ time units.  

\item $\mpFailedP{\mpP}$ signifies that the process $\mpP$ has violated a time constraint, resulting in a 
{\bf \emph{timeout failure}}.
\end{itemize}

\vspace{.2ex}
\noindent
{\bf \emph{Failure-handling}} processes that adopt the \AMPST approach~\cite{lagaillardie2022Affine}:  
\begin{itemize}[left=0pt,topsep=0pt]
\item 
$\trycatch{\mpP}{\mpQ}$
consists of a $\mpFmt{\ensuremath{\proclit{try}}}$ process $\mpP$ that is prepared to communicate
with a parallel composed process,
and a $\mpFmt{\ensuremath{\proclit{catch}}}$ process $\mpQ$, which becomes active %
in the event of a cancellation or timeout. 
For clarity,  $\trycatch{\mpNil}{\mpQ}$ is not allowed within our calculus.  %
\item 
$\mpCancel{\mpC}{\mpP}$ performs the {\bf \emph{cancellation}} of other processes with channel $\mpC$.
\item 
$\kills{\mpS}$ {\bf \emph{kills}} (terminates) all processes with session $\mpS$, 
and is dynamically generated only at \emph{runtime} from timeout failure or cancel processes.

\end{itemize}

\vspace{.2ex}
\noindent
{\bf \emph{Message queues}}: 
$\mpSessionQueueO{\mpS}{\roleP}{\mpQueue}$
represents the {\bf \emph{output message queue}} of role $\roleP$ in session $\mpS$.
It contains all the messages previously sent by $\roleP$. The queue $\mpQueue$ can be a sequence of messages of the form
$\mpQueueOElem{\roleQ}{\mpLab}{\mpChanRole{\mpS}{\roleR}}$, where $\roleQ$ is the receiver,
or $\mpQueueEmpty$, indicating an \emph{empty} message queue.
The set of \emph{receivers in} $\mpQueue$, denoted as $\operatorname{receivers}(\mpQueue)$, 
is defined in a standard way as:  

\centerline{\(%
\small
 \operatorname{receivers}(\mpQueueCons{
    \mpQueueOElem{\roleQ}{\mpLab}{\mpChanRole{\mpS}{\roleR}}
  }{%
    \mpQueuei%
  }) = \setenum{\roleQ} \cup \operatorname{receivers}(\mpQueuei)
  \quad \quad 
  \operatorname{receivers}(\mpQueueEmpty) = \emptyset
 \)}

\begin{figure}[t!]
\small
  \centerline{\(%
    \begin{array}{c}
    \begin{array}{rl}
        \inferrule{\iruleMPRedOut}&
    \mpEtxApp{\mpEtx}{\mpTSel{\mpChanRole{\mpS}{\roleQ}}{\roleP}{\mpLab}{%
      \mpChanRole{\mpSi}{\roleR}%
    }{\mpQ}{\mathfrak{n}}}%
    \mpPar
    \mpSessionQueueO{\mpS}{\roleQ}{\mpQueue}%
    \;\mpnonTMove\;%
    \mpQ%
    \mpPar
    \mpSessionQueueO{\mpS}{\roleQ}{%
      \mpQueueCons{\mpQueue}{%
        \mpQueueCons{%
          \mpQueueOElem{\roleP}{\mpLab}{\mpChanRole{\mpSi}{\roleR}}%
        }{%
          \mpQueueEmpty%
        }%
      }%
    }%
\\[.5mm]
   \inferrule{\iruleMPRedIn}&%
      \mpEtxApp{\mpEtx}{\mpTBranch{\mpChanRole{\mpS}{\roleP}}{\roleQ}{i \in I}{%
      \mpLab[i]}{x_i}{\mpP[i]}{\mathfrak{n}}{}}%
    \mpPar
    \mpSessionQueueO{\mpS}{\roleQ}{%
      \mpQueueCons{%
        \mpQueueOElem{\roleP}{\mpLab[k]}{\mpChanRole{\mpSi}{\roleR}}%
      }{\mpQueue}%
    }%
    \;\mpnonTMove\;%
    \mpP[k]\subst{\mpFmt{x_k}}{\mpChanRole{\mpSi}{\roleR}}%
    \mpPar
    \mpSessionQueueO{\mpS}{\roleQ}{\mpQueue}%
   \hspace{16mm}
    \text{\footnotesize%
      ($k \!\in\! I$)
    }%
   \\[.5mm]%
   \inferrule{\iruleMPRedErr}& 
      \mpEtxApp{\mpEtx}{\mpTBranch{\mpChanRole{\mpS}{\roleP}}{\roleQ}{i \in I}{%
      \mpLab[i]}{x_i}{\mpP[i]}{\mathfrak{n}}{}}%
    \mpPar
    \mpSessionQueueO{\mpS}{\roleQ}{%
      \mpQueueCons{%
        \mpQueueOElem{\roleP}{\mpLab}{\mpChanRole{\mpSi}{\roleR}}%
      }{\mpQueue}%
    }%
    \;\mpnonTMove\;%
   \mpCErr
   \hspace{29.3mm}
    \text{\footnotesize%
      ($\forall i \!\in\! I: \mpLab[i] \neq \mpLab$)
    }%
   \\[.5mm]
     \inferrule{\iruleMPRedDet}&
   \highlight{ \models \ccstSubt{\ccst}{\cUnit}{C}
    \;\;\text{implies}\;\;%
    \mpEtxApp{\mpEtx}{\delay{\ccst}{\mpP}}%
     \;\mpnonTMove\;%
      \delay{\cUnit}{\mpP}}
    \\[.5mm]
    \inferrule{\iruleMPRedDelay}&
    \highlight{ \mpP
    \;\mpMoveTime\;%
    \timePass{\cUnit}{\mpP}}
     \\[.5mm]
   \inferrule{\iruleMPRedTryFail}& %
   \highlight{
   \mpFailedP{\mpP} 
    \;\mpnonTMove\;%
    \kills{\mpS}}
    \hspace{67.2mm}
   \highlight{\text{\footnotesize%
       ($\exists \roleR. \,
      \procSubject{\mpP} = \setenum{\mpChanRole{\mpS}{\roleR}}$)}
    }
    \\[.5mm]
     \inferrule{\iruleMPRedCan}&
    \mpEtxApp{\mpEtx}{\mpCancel{\mpChanRole{\mpS}{\roleP}}{\mpQ}}
     \;\mpnonTMove\;%
     \kills{\mpS}%
      \mpPar
       \mpQ
       \\[.5mm]
   \inferrule{\iruleMPRedFailCatch}&  
     \highlight{
   \trycatch{\mpFailedP{\mpP}}{\mpQ}
    \;\mpnonTMove\;%
    \kills{\mpS}
     \mpPar \mpQ}
    \hspace{43mm}
   \highlight{\text{\footnotesize%
       ($\exists \roleR. \,
      \procSubject{\mpP} = \setenum{\mpChanRole{\mpS}{\roleR}}$)}
    }  
    \\[.5mm]
    \inferrule{\iruleMPCCat} &  %
     \trycatch{\mpP}{\mpQ}%
     \mpPar
     \kills{\mpS}
       \;\mpnonTMove\;%
       \mpQ%
       \mpPar
     \kills{\mpS}
      \hspace{50.7mm}
    \text{\footnotesize%
      ($\exists \roleR. \, 
      \procSubject{\mpP} = \setenum{\mpChanRole{\mpS}{\roleR}}$)%
    }%
    \\[.5mm]
     \inferrule{\iruleMPRedCanIn}&
   \highlight{\mpTBranch{\mpChanRole{\mpS}{\roleP}}{\roleQ}{i \in I}{%
      \mpLab[i]}{x_i}{\mpP[i]}{\mathfrak{n}}{}%
       \mpPar
       \mpSessionQueueO{\mpS}{\roleQ}{\mpQueue}
     \mpPar
      \kills{\mpS}}
      \\
      &
     \hspace{2mm}  \highlight{\;\mpnonTMove\;%
       \mpRes{\mpSi}{%
        (\mpP[k]\subst{\mpFmt{x_k}}{\mpChanRole{\mpSi}{\roleR}}%
    \mpPar %
    \kills{\mpSi})}
    \mpPar
     \mpSessionQueueO{\mpS}{\roleQ}{\mpQueue}
       \mpPar %
       \kills{\mpS}}%
      \hspace{17.6mm}
   \highlight{ \text{\footnotesize%
       ($\roleP \notin \operatorname{receivers}(\mpQueue)$,  $k \!\in\! I$, %
    $\mpSi \notin \fc{\mpP[k]}$)
    }}%
\\[.8mm]
\inferrule{\iruleMPRedCanQ}&
    \highlight{
    \mpSessionQueueO{\mpS}{\roleP}{%
      \mpQueueCons{%
        \mpQueueOElem{\roleQ}{\mpLab}{\mpChanRole{\mpSi}{\roleR}}%
      }{\mpQueue}}
    \mpPar
     \kills{\mpS}
       \;\mpnonTMove\;%
      \mpSessionQueueO{\mpS}{\roleP}{\mpQueue} \mpPar \kills{\mpS} \mpPar \kills{\mpSi}}
      \\[.5mm]
   \inferrule{\iruleMPRedCall}& %
      \mpDef{\mpX}{x_1,\ldots,x_n}{\mpP}{(%
        \mpCall{\mpX}{%
          \mpChanRole{\mpS[1]}{\roleP[1]}, \ldots,%
          \mpChanRole{\mpS[n]}{\roleP[n]}%
        }%
       \mpPar
        \mpQ%
        )%
      }%
      \\
      &\hspace{4mm}%
      \;\mpnonTMove\;%
      \mpDef{\mpX}{x_1,\ldots,x_n}{\mpP}{(%
        \mpP\subst{\mpFmt{x_1}}{\mpChanRole{\mpS[1]}{\roleP[1]}}%
        \cdots%
        \subst{\mpFmt{x_n}}{\mpChanRole{\mpS[n]}{\roleP[n]}}%
       \mpPar
          \mpQ%
          )%
      }%
      \\[.5mm]%
        \inferrule{\iruleMPRedCtx}&%
      \mpP \;\mpnonTMove\; \mpPi%
      \;\;\text{implies}\;\;%
      \mpCtxApp{\mpCtx}{\mpP} \;\mpnonTMove\; \mpCtxApp{\mpCtx}{\mpPi}%
      \\[1mm]%
    \inferrule{\iruleMPRedCongr} &
    \mpPi \equiv \mpP \;\mpnonTMove\; \mpQ \equiv \mpQi
    \;\;\text{implies}\;\;
    \mpPi \;\mpnonTMove\; \mpQi
   \quad 
    \inferrule{\iruleMPRedCongrTime} \quad 
   \highlight{ \mpPi \equiv \mpP \;\mpMoveTime\; \mpQ \equiv \mpQi
    \;\;\text{implies}\;\;
    \mpPi \;\mpMoveTime\; \mpQi}
    \\[.5mm]
     \inferrule{\iruleMPRedInstant} &
   \highlight{ \mpP \;\mpnonTMove\; \mpPi
    \;\;\text{implies}\;\;
    \mpP \;\mpMove\; \mpPi}
\qquad \qquad \quad \,\,\;\,\,
     \inferrule{\iruleMPRedTimeConsume} \quad %
   \highlight{ \mpP \;\mpMoveTime\; \mpPi
    \;\;\text{implies}\;\;
    \mpP \;\mpMove\; \mpPi }
    \end{array}
    \\[0.5mm]
    \\[-2mm]%
    \hdashline%
    \\[-2mm]
      \begin{array}{@{\hskip 0mm}c@{\hskip 0mm}}
      \mpP \mpPar \mpQ%
      \equiv%
      \mpQ \mpPar \mpP%
      \quad\;\;%
      \mpFmt{(\mpP \mpPar \mpQ) \mpPar \mpR}%
      \equiv%
      \mpP \mpPar \mpFmt{(\mpQ \mpPar \mpR)}%
      \quad\;\;%
      \mpP \mpPar \mpNil%
      \equiv%
      \mpP%
      \quad\;\;%
      \mpRes{\mpS}{\mpNil}%
      \equiv%
      \mpNil%
      \\[.2mm]%
      \mpRes{\mpS}{%
        \mpRes{\mpSi}{%
          \mpP%
        }%
      }%
      \equiv%
      \mpRes{\mpSi}{%
        \mpRes{\mpS}{%
          \mpP%
        }%
      }%
      \quad\;\;%
      \mpRes{\mpS}{(\mpP \mpPar \mpQ)}%
      \equiv%
      \mpP \mpPar \mpRes{\mpS}{\mpQ}%
      \;\;%
      \text{\footnotesize{}if $\mpS \!\not\in\! \fc{\mpP}$}%
      \\[.2mm]%
      \mpDefAbbrev{\mpDefD}{\mpNil}%
      \equiv%
      \mpNil%
      \qquad%
      \mpDefAbbrev{\mpDefD}{\mpRes{\mpS}{\mpP}}%
      \,\equiv\,%
      \mpRes{\mpS}{(
        \mpDefAbbrev{\mpDefD}{\mpP}%
        )}%
      \quad%
      \text{\footnotesize{}if\, $\mpS \!\not\in\! \fc{\mpDefD}$}%
      \\[.2mm]%
      \mpDefAbbrev{\mpDefD}{(\mpP \mpPar \mpQ)}%
      \,\equiv\,%
      \mpFmt{(\mpDefAbbrev{\mpDefD}{\mpP})} \mpPar \mpQ%
      \quad%
      \text{\footnotesize{}if\, $\dpv{\mpDefD} \cap \fpv{\mpQ} = \emptyset$}%
      \\[.2mm]%
      \mpDefAbbrev{\mpDefD}{%
        (\mpDefAbbrev{\mpDefDi}{\mpP})%
      }%
      \;\equiv\;%
      \mpDefAbbrev{\mpDefDi}{%
        (\mpDefAbbrev{\mpDefD}{\mpP})%
      }%
      \\%
      \text{\footnotesize%
        if\, %
        \(%
        (\dpv{\mpDefD} \cup \fpv{\mpDefD}) \cap \dpv{\mpDefDi}%
        \,=\,%
        (\dpv{\mpDefDi} \cup \fpv{\mpDefDi}) \cap \dpv{\mpDefD}%
        \,=\,%
        \emptyset%
        \)%
      }%
      \\
    \highlight{\delay{0}{\mpP} %
    \equiv%
   \mpP}
     \quad 
        \kills{\mpS}
  \mpPar
  \kills{\mpS}
  \equiv
  \kills{\mpS}
  \quad 
   \highlight{
       \mpRes{\mpS}{%
        \left(%
        \mpSessionQueueO{\mpS}{\roleP[1]}{\mpQueueEmpty} \mpPar \cdots \mpPar
        \mpSessionQueueO{\mpS}{\roleP[n]}{\mpQueueEmpty}
        \right)
        \equiv%
        \mpNil%
      }}
 \\[.5mm]
 \highlight{
 \mpSessionQueueO{\mpS}{\roleP}{%
          \mpQueueCons{%
            \mpQueue%
          }{%
            \mpQueueCons{%
              \mpQueueOElem{\roleQ[1]}{\mpLab[1]}{\mpChanRole{\mpS[1]}{\roleR[1]}}
            }{%
              \mpQueueCons{%
                \mpQueueOElem{\roleQ[2]}{\mpLab[2]}{\mpChanRole{\mpS[2]}{\roleR[2]}}%
              }{%
                \mpQueuei%
              }%
            }%
          }%
        } \equiv 
         \mpSessionQueueO{\mpS}{\roleP}{%
          \mpQueueCons{%
            \mpQueue%
          }{%
            \mpQueueCons{%
              \mpQueueOElem{\roleQ[2]}{\mpLab[2]}{\mpChanRole{\mpS[2]}{\roleR[2]}}
            }{%
              \mpQueueCons{%
                \mpQueueOElem{\roleQ[1]}{\mpLab[1]}{\mpChanRole{\mpS[1]}{\roleR[1]}}%
              }{%
                \mpQueuei%
              }%
            }%
          }%
        }}
             \quad  \highlight{\text{\footnotesize%
        if\, $\roleQ[1] \neq \roleQ[2]$%
      }}
    \end{array}
\end{array}
    \)}
    \vspace{-.5em}
  \caption{Top: reduction rules for \ATMP session $\pi$-calculus. 
  Bottom: structural congruence rules for the \ATMP $\pi$-calculus, where 
    $\fpv{\mpDefD}$ %
    is the set of \emph{free process variables} in $\mpDefD$, %
    and
    $\dpv{\mpDefD}$ %
    is the set of \emph{declared process variables} in $\mpDefD$. %
    New rules are $\highlight{\text{highlighted}}$. }%
  \label{fig:aat-mpst-pi-semantics}%
\vspace{-1em}
\end{figure}

 \begin{figure}[t]
 \centerline{\(
 \small
   \begin{array}{c}
\begin{array}{lll}
\timePass{\cUnit}{\mpP[1] \mpPar \mpP[2]} =
\timePass{\cUnit}{\mpP[1]}
\mpPar
\timePass{\cUnit}{\mpP[2]}
&
 \timePass{\cUnit}{\mpRes{\mpS}{\mpP}}
=
\mpRes{\mpS}{\timePass{\cUnit}{\mpP}}
&
\timePass{\cUnit}{\mpFailedP{\mpP}} =
  \mpFailedP{\mpP}
\end{array}
\\[.5mm]
\begin{array}{lll}
 \timePass{\cUnit}{\mpNil} = \mpNil
& 
 \timePass{\cUnit}{\mpDefAbbrev{\mpDefD}{\mpP}} =
  \mpDefAbbrev{\mpDefD}{\timePass{\cUnit}{\mpP}}
  &
\timePass{\cUnit}{\trycatch{\mpP}{\mpQ}}
 =
 \trycatch{\timePass{\cUnit}{\mpP}}{\timePass{\cUnit}{\mpQ}}
 \end{array}
 \\[.5mm]
 \begin{array}{llll}
\timePass{\cUnit}{\mpCErr} = \mpCErr
&
 \timePass{\cUnit}{\mpCancel{\mpC}{\mpQ}} = \mpCancel{\mpC}{\timePass{\cUnit}{\mpQ}}
 &
 \timePass{\cUnit}{\kills{\mpS}} = \kills{\mpS}
 &
  \timePass{\cUnit}{\mpSessionQueueO{\mpS}{\roleP}{\mpQueue}} =
  \mpSessionQueueO{\mpS}{\roleP}{\mpQueue}
\end{array}
\\[.5mm]
\begin{array}{ll}
\timePass{\cUnit}{\delay{\ccst}{\mpP}} = \texttt{undefined}
&
\begin{array}{@{}r@{~}c@{~}l@{}}
    \timePass{\cUnit}{\delay{\cUniti}{\mpP}} 
    &=&
    \left\{
      \begin{array}{@{}ll@{}}
        \delay{\cUniti - \cUnit}{\mpP}
          &
        \text{if~} \cUniti \geq \cUnit
        \\
      \texttt{undefined} 
          &
        \text{otherwise}
     \end{array}
    \right.
 \end{array}
\end{array}
\\[.5mm]
\begin{array}{l}
 \begin{array}{@{}r@{~}c@{~}l@{}}
    \timePass{\cUnit}{
     \mpTSel{\mpC}{\roleQ}{\mpLab}{\mpD}{\mpP}{\cUniti}
     }
    &=&
    \left\{
      \begin{array}{@{}ll@{}}
        \mpTSel{\mpC}{\roleQ}{\mpLab}{\mpD}{\mpP}{\cUniti - \cUnit}
          &
        \text{if~} \cUniti \geq \cUnit
        \\[1mm]
       \mpFailedP{\mpTSel{\mpC}{\roleQ}{\mpLab}{\mpD}{\mpP}{\cUniti}}
          &
        \text{otherwise}
     \end{array}
    \right.
 \end{array}
 \\[.5mm]
\begin{array}{@{}r@{~}c@{~}l@{}}
    \timePass{\cUnit}{
     \mpTBranch{\mpC}{\roleQ}{i \in I}{\mpLab[i]}{x_i}{\mpP[i]}{\cUniti}{}
    }
    &=&
    \left\{
      \begin{array}{@{}ll@{}}
        \mpTBranch{\mpC}{\roleQ}{i \in I}{\mpLab[i]}{x_i}{\mpP[i]}{\cUniti - \cUnit}{}
          &
        \text{if~} \cUniti \geq \cUnit
        \\[1mm]
         \mpFailedP{\mpTBranch{\mpC}{\roleQ}{i \in I}{\mpLab[i]}{x_i}{\mpP[i]}{\cUniti}{}}
          &
        \text{otherwise}
     \end{array}
    \right.
 \end{array}
   \end{array}
\\[.5mm]
\begin{array}{ll}
 \timePass{\cUnit}{\mpTSel{\mpC}{\roleQ}{\mpLab}{\mpD}{\mpP}{\infty}}
  =
  \mpTSel{\mpC}{\roleQ}{\mpLab}{\mpD}{\mpP}{\infty}
&
  \timePass{\cUnit}{\mpTBranch{\mpC}{\roleQ}{i \in I}{\mpLab[i]}{x_i}{\mpP[i]}{\infty}{}}
  =
  \mpTBranch{\mpC}{\roleQ}{i \in I}{\mpLab[i]}{x_i}{\mpP[i]}{\infty}{}
\end{array}
    \end{array}
    \)}%
 \vspace{-.5em}
  \caption{
     Time-passing function $\timePass{\cUnit}{\mpP}$. %
  }%
  \label{fig:aat-mpst-time-passing}%
  \vspace{-1em}
\end{figure}

\vspace{-1em}
\subparagraph{Operational Semantics} %
\label{sec:session-calculus:semantics}
\label{SEC:SESSION-CALCULUS:SEMANTICS}
We present the operational semantics of our session $\pi$-calculus for modelling the behaviour of affine timed processes,
including asynchronous communication, time progression, timeout activation, and failure handling.

\begin{definition}[Semantics]%
  \label{def:mpst-proc-context}%
  \label{def:mpst-pi-reduction-ctx}%
  \label{def:mpst-pi-semantics}%
  \label{def:mpst-pi-error}%
  A \emph{$\trycatchB$ context} $\mpEtx$ is defined as %
\emph{$\mpEtx \coloncolonequals%
  \trycatch{\mpEtx}{\mpP}%
  \bnfsep%
  \mpCtxHole%
  $}, and
  a \emph{reduction context} $\mpCtx$ is defined as %
$\mpCtx \coloncolonequals%
  \mpCtx \mpPar \mpP%
  \bnfsep%
  \mpRes{\mpS}{\mpCtx}%
  \bnfsep%
  \mpDefAbbrev{\mpDefD}{\mpCtx}%
  \bnfsep%
  \mpCtxHole%
  $.
 \emph{The reductions $\mpMove$, $\mpnonTMove$, and $\mpMoveTime$}
  are inductively defined %
  in~\Cref{fig:aat-mpst-pi-semantics}~(top), %
  with respect  to a \emph{structural congruence\;$\equiv$} depicted in~\Cref{fig:aat-mpst-pi-semantics}~(bottom).
  We write $\mpMoveStar$, $\mpnonTMoveStar$, and $\mpMoveTimeStar$ 
  for their  reflexive and transitive closures,  respectively. 
 $\mpNotMoveP{\mpP}$ (or  
$\mpnonTNotMoveP{\mpP}$, $\mpNotMoveTimeP{\mpP}$)
means  $\not\exists \mpPi$
such that $\mpP \!\mpMove\! \mpPi$ (or $\mpP \!\mpnonTMove\! \mpPi$,
$\mpP \!\mpMoveTime\! \mpPi$)
is derivable. We say $\mpP$ \emph{has a communication error} iff $\exists \mpCtx$ with 
$\mpP = \mpCtxApp{\mpCtx}{\mpCErr}$.

\end{definition}

\begin{remark}
The $\trycatchB$ contexts $\mpEtx$ in~\Cref{def:mpst-proc-context}
are exclusively 
used for defining reductions at the top level of processes, 
without applying them for nested exception handling as in~\cite{fowler2019Exceptional}. %
\end{remark}

We decompose the reduction rules in~\Cref{fig:aat-mpst-pi-semantics}
into three relations:
 $\mpnonTMove$ represents \emph{instantaneous} reductions
without time consumption,  $\mpMoveTime$  handles time-consuming steps,
and $\mpMove$ is a general relation that can arise either from
$\mpnonTMove$ by \inferrule{\iruleMPRedInstant}
or $\mpMoveTime$ by \inferrule{\iruleMPRedTimeConsume}. %
Now let us explain the operational semantics rules for our session $\pi$-calculus. %

\vspace{.2ex}
\noindent
{\bf \emph{Communication}}:  Rules 
\inferrule{\iruleMPRedOut} and \inferrule{\iruleMPRedIn} model asynchronous communication by queuing and dequeuing  \emph{pending} messages, respectively. Rule \inferrule{\iruleMPRedErr} is triggered by a message label mismatch, resulting in a fatal 
$\mpFmt{\boldsymbol{\mathsf{c}}}$ommunication  $\mpFmt{\boldsymbol{\mathsf{err}}}$or. 

\vspace{.2ex}
\noindent
{\bf \emph{Time}}:   
Rule \inferrule{\iruleMPRedDet} specifies a deterministic delay of a specific duration  
$\cUnit$, where $\cUnit$ is a solution to the clock constraint $\ccst$. 
Rule \inferrule{\iruleMPRedDelay} incorporates a time-passing function $\timePass{\cUnit}{\mpP}$, 
depicted in~\Cref{fig:aat-mpst-time-passing}, 
to represent time delays within a process.
This \emph{partial} function simulates a delay of time $\cUnit$ that may occur at different parts of the process. 
It is \emph{undefined} only if $\mpP$ is a time-consuming delay, \ie $\mpP = \delay{\ccst}{\mpPi}$, or if the specified delay time $\cUnit$ exceeds the duration of a runtime deterministic delay, \ie $\mpP = \delay{\cUniti}{\mpPi}$ with $\cUnit > \cUniti$. The latter case arises because deterministic delays must always adhere to their specified durations, \eg if a program is instructed to sleep for 5 time units, it must strictly follow this duration.

Notably, $\timePass{\cUnit}{\mpP}$ acts as the \emph{only mechanism} for triggering 
a timeout failure $\mpFailedP{\mpP}$, resulting from a timed selection or branching. 
Such a timeout failure occurs when $\timePass{\cUnit}{\mpP}$ is defined, and the specified delay $\cUnit$ exceeds a \emph{deadline} set within $\mpP$.

\vspace{.2ex}
\noindent
{\bf \emph{Cancellation}}: 
Rules \inferrule{\iruleMPRedCanIn} and \inferrule{\iruleMPRedCanQ} model the process cancellations. 
\inferrule{\iruleMPRedCanIn} is triggered only
when there are no messages in the queue that
can be received from $\roleQ$ via the endpoint
$\mpChanRole{\mpS}{\roleP}$. 
Cancellation of a timed selection is expected to 
eventually occur via \inferrule{\iruleMPRedCanQ}; 
therefore, 
there is no specific rule dedicated to it.
Similarly,
in our implementation,
the timed selection is not directly cancelled either.

Rules \inferrule{\iruleMPRedCan} and \inferrule{\iruleMPCCat},
adapted from~\cite{lagaillardie2022Affine},
state cancellations from other
parties.   
\inferrule{\iruleMPRedCan} 
facilitates cancellation and generates a kill process, while \inferrule{\iruleMPCCat} transitions 
to the $\mpFmt{\ensuremath{\proclit{catch}}}$ process $\mpQ$ 
due to the termination of session $\mpS$, 
where  the $\mpFmt{\ensuremath{\proclit{try}}}$ process $\mpP$ is communicating on $\mpS$. 
Therefore, the set of \emph{subjects} of process $\mpP$, denoted as $\procSubject{\mpP}$, is included 
 in the side condition of \inferrule{\iruleMPCCat} to ensure that $\mpP$ has a prefix at $\mpS$, 
 as defined below:

\smallskip
\centerline{\(
\footnotesize
\begin{array}{c}
\begin{array}{l}
\procSubject{\mpNil}
   =
    \procSubject{\mpCErr}
   =
   \emptyset
   \quad
      \procSubject{\mpP \mpPar \mpQ}
   =
    \procSubject{\mpP}
    \cup
     \procSubject{\mpQ}
     \quad 
     \procSubject{\mpSessionQueueO{\mpS}{\roleP}{\mpQueue}}
       =
   \setenum{{\mpChanRole{\mpS}{\roleP}}^{\tiny{\mathscr{Q}}}}
     \\
     \procSubject{\mpRes{\mpS}{\mpP}}
    =
     \procSubject{\mpP} \setminus (\setenum{\mpChanRole{\mpS}{\roleP[i]}}_{i \in I} \cup \setenum{{\mpChanRole{\mpS}{\roleP[i]}}^{\mathscr{Q}}}_{i \in I}) 
  \\
    \procSubject{\mpDefAbbrev{ \mpJustDef{\mpX}{\widetilde{x}}{\mpP}}{\mpQ}}
    =
     \procSubject{\mpQ} \cup \procSubject{\mpP} \setminus \setenum{\widetilde{x}} \text{ \,with\, }
     \procSubject{\mpCall{\mpX}{\widetilde{\mpC}}}
   =
      \procSubject{\mpP\subst{\widetilde{x}}{\widetilde{\mpC}}}
      \\
    \procSubject{\mpTSel{\mpC}{\roleQ}{\mpLab}{\mpD}{\mpP}{\mathfrak{n}}}
    =
      \procSubject{\mpTBranch{\mpC}{\roleQ}{i \in I}{\mpLab[i]}{x_i}{\mpP[i]}{\mathfrak{n}}{}}
      =
        \procSubject{\mpCancel{\mpC}{\mpP}}
      = \setenum{\mpC}
      \\[1.3mm]
         \procSubject{\delay{\ccst}{\mpP}}
   =%
      \procSubject{\delay{\cUnit}{\mpP}}
      =  
      \procSubject{\trycatch{\mpP}{\mpQ}}
      =
       \procSubject{\mpFailedP{\mpP}}
      = \procSubject{\mpP}
      \end{array}
    \end{array}
 \)}
 
 \smallskip
Subjects of processes determine sessions that may need cancellation, a crucial aspect for handling failed or 
cancelled processes properly. In our definition, subjects not only denote the endpoints via which processes start interacting but also indicate whether they are used for message queue processes.
Specifically, an endpoint $\mpChanRole{\mpS}{\roleP}$ annotated with $\mathscr{Q}$ signifies its use in a queue process.
This additional annotation, and thus the distinction it implies, is pivotal in formulating the typing rule for the $\trycatchB$ process, as discussed later in~\Cref{sec:aat-mpst-typing-system}, where we rely on subjects to exclude queue processes within any $\mpFmt{\ensuremath{\proclit{try}}}$ construct.

\vspace{.3ex}
\noindent
{\bf \emph{Timeout Handling}}: %
Rules \inferrule{\iruleMPRedTryFail} and \inferrule{\iruleMPRedFailCatch} address time failures. 
In the event of a timeout,  a killing process is generated.  
Moreover, in \inferrule{\iruleMPRedFailCatch}, 
the $\mpFmt{\ensuremath{\proclit{catch}}}$ process $\mpQ$ is triggered. 
To identify the session requiring termination, 
the set of subjects of the failure process $\mpFailedP{\mpP}$ is considered in both rules as a side condition.
Note that a timeout arises exclusively from timed selection or branching. Therefore, the subject set of 
$\mpFailedP{\mpP}$ must contain a \emph{single} endpoint devoid of $\mathscr{Q}$, indicating the generation of only one killing process.

\vspace{.3ex}
\noindent
{\bf \emph{Standard}}: Rules \inferrule{\iruleMPRedCall}, \inferrule{\iruleMPRedCtx}, and \inferrule{\iruleMPRedCongr}
are standard~\cite{DBLP:journals/pacmpl/ScalasY19,lagaillardie2022Affine}. %
\inferrule{\iruleMPRedCall} expands process definitions when invoked;
\inferrule{\iruleMPRedCtx} and \inferrule{\iruleMPRedCongr}
allow processes to reduce under reduction contexts
and through structural congruence,
respectively. 
Rule~\inferrule{\iruleMPRedCongrTime}
introduces a timed variant of~\inferrule{\iruleMPRedCongr}, enabling  
time-consuming reductions via structural congruence.

\vspace{.3ex}
\noindent
{\bf \emph{Congruence}}: %
As shown in~\Cref{fig:aat-mpst-pi-semantics} (bottom),  we introduce additional congruence rules related to  queues, delays, and process killings, alongside standard rules from \cite{DBLP:journals/pacmpl/ScalasY19}. %
Specifically, two %
rules are proposed for queues: the first addresses the \emph{garbage} collection of queues that are not referenced by any process, while the second rearranges messages with different receivers. 
The %
rule for delays states that adding a delay of zero time units has no effect on the process execution. 
 The rule regarding process killings eliminates duplicate kills.

\begin{example}
\label{ex:subjects}
Consider the processes: {\small $\mpP[1] =
\mpSTSel{\mpChanRole{\mpS}{\roleFmt{Sat}}}{\roleFmt{Ser}}{\mpLabFmt{Data}}{\mpNil}{0.4}$},
 {\small $\mpP[2] =
   \mpSTBranch{\mpChanRole{\mpS}{\roleFmt{Ser}}}{\roleFmt{Sat}}{\mpLabFmt{Data}}{\mpNil}{0.3}$}, and 
 {\small $\mpP[3] = \mpSessionQueueO{\mpS}{\roleFmt{Sat}}{\mpQueueEmpty}$}. 
 Rule \inferrule{\iruleMPCCat} can be applied to {\small $\trycatch{\mpP[1]}{\mpQ} \mpPar \kills{\mpS}$},  
 as {\small $\procSubject{\mpP[1]} = \setenum{\mpChanRole{\mpS}{\roleFmt{Sat}}}$} satisfies its side condition.  
However, neither {\small $\mpFailedP{\mpP[1] \mpPar \mpP[2]}$} nor {\small $\mpFailedP{\mpP[3]}$} can generate the killing process 
$\kills{\mpS}$,  as 
{\small $\procSubject{\mpP[1] \mpPar \mpP[2]} = \setenum{\mpChanRole{\mpS}{\roleFmt{Sat}}, \mpChanRole{\mpS}{\roleFmt{Ser}}}$}, 
whereas  
{\small $\procSubject{\mpP[3]} = \setenum{ {\mpChanRole{\mpS}{\roleFmt{Sat}}}^{\mathscr{Q}}}$}.

\end{example}

\vspace{-2ex}
\begin{example}
\label{ex:calculus_reductions}
Processes $\mpQ[\roleFmt{Sen}]$, $\mpQ[\roleFmt{Sat}]$, and $\mpQ[\roleFmt{Ser}]$ 
interact on a session $\mpS$:

\smallskip 
\centerline{\( 
\footnotesize
 \begin{array}{l}
\mpQ[\roleFmt{Sen}] = 
\delay{C_{\roleFmt{Sen}} = 6.5}{\mpQi[\roleFmt{Sen}]} \mpPar  \mpSessionQueueO{\mpS}{\roleFmt{Sen}}{\mpQueueEmpty} \,\,{\footnotesize \text{ where }}\,\,  \mpQi[\roleFmt{Sen}] = \trycatch{\mpTSel{\mpChanRole{\mpS}{\roleFmt{Sen}}}{\roleFmt{Sat}}{\mpLabFmt{Data}}{}{}{0.3}}{\mpCancel{\mpChanRole{\mpS}{\roleFmt{Sen}}}{}}
\\[.5mm]
 \mpQ[\roleFmt{Sat}]
= 
\delay{C_{\roleFmt{Sat}} = 6}{\mpQi[\roleFmt{Sat}]} \mpPar  \mpSessionQueueO{\mpS}{\roleFmt{Sat}}{\mpQueueEmpty} \,\,{\footnotesize \text{ where }}\,\, \mpQi[\roleFmt{Sat}] =  \mpTBranch{\mpChanRole{\mpS}{\roleFmt{Sat}}}{\roleFmt{Sen}}{}
{\left\{
\begin{array}{@{\hskip 0mm}l@{\hskip 0mm}}
\mpChoice{Data}{}{\mpTSel{\mpChanRole{\mpS}{\roleFmt{Sat}}}{\roleFmt{Ser}}{\mpLabFmt{Data}}{}{}{0.3}}
\\
\mpChoice{fail}{}{\mpTSel{\mpChanRole{\mpS}{\roleFmt{Sat}}}{\roleFmt{Ser}}{\mpLabFmt{fatal}}{}{}{0.4}}
\end{array}
\right\}}{}{}{0.2}{}
\\[.5mm]
\mpQ[\roleFmt{Ser}]
= 
\delay{C_{\roleFmt{Ser}} = 6}{\mpQi[\roleFmt{Ser}]} \mpPar  \mpSessionQueueO{\mpS}{\roleFmt{Ser}}{\mpQueueEmpty}  \,\,{\footnotesize \text{ where }}\,\,  \mpQi[\roleFmt{Ser}] = \mpTBranch{\mpChanRole{\mpS}{\roleFmt{Ser}}}{\roleFmt{Sat}}{}
{\left\{
\mpChoice{Data}{}{}, 
\mpChoice{fatal}{}{}
\right\}}{}{}{0.8}{}
\end{array}
\)}

\smallskip
\noindent
Process $\mpQ[\roleFmt{Sen}]$ delays for exactly 6.5 time units before executing process 
$\mpQi[\roleFmt{Sen}]$. Here, $\mpQi[\roleFmt{Sen}]$ attempts to use $\mpChanRole{\mpS}{\roleFmt{Sen}}$ 
to send $\mpLabFmt{Data}$ to  $\roleFmt{Sat}$ within 0.3 time units.  If the attempt fails, the cancellation of $\mpChanRole{\mpS}{\roleFmt{Sen}}$ is triggered. 
Process $\mpQ[\roleFmt{Sat}]$ waits for precisely 6 time units before using 
$\mpChanRole{\mpS}{\roleFmt{Sat}}$ to receive either $\mpLabFmt{Data}$ or $\mpLabFmt{fail}$ from 
$\roleFmt{Sen}$ within 0.2 time units; subsequently, in the first case, it uses 
$\mpChanRole{\mpS}{\roleFmt{Sat}}$ to send $\mpLabFmt{Data}$ to $\roleFmt{Ser}$ within 0.3 time units, while in the latter, it uses $\mpChanRole{\mpS}{\roleFmt{Sat}}$ to send $\mpLabFmt{fail}$ to $\roleFmt{Ser}$ within 0.4 time units. Similarly, process $\mpQ[\roleFmt{Ser}]$ waits 6 time units before using $\mpChanRole{\mpS}{\roleFmt{Ser}}$ to receive either 
$\mpLabFmt{Data}$ or $\mpLabFmt{fatal}$ from $\roleFmt{Sat}$ within 0.8 time units.

In $\mpQ[\roleFmt{Sen}]$, $\mpChanRole{\mpS}{\roleFmt{Sen}}$ can only start sending $\mpLabFmt{Data}$ to 
$\roleFmt{Sat}$ after 6.5 time units, whereas in $\mpQ[\roleFmt{Sat}]$, $\mpChanRole{\mpS}{\roleFmt{Sat}}$ must receive the message from $\roleFmt{Sen}$ within 0.2 time units after a 6-time unit delay. Consequently, 
$\mpChanRole{\mpS}{\roleFmt{Sat}}$ fails to receive the message from $\roleFmt{Sen}$ within the specified interval, resulting in a timeout failure, \ie 

\smallskip
\centerline{\(
\footnotesize
\begin{array}{l@{\hskip .5mm}l@{\hskip .5mm}l}
\mpQ[\roleFmt{Sen}] \mpPar \mpQ[\roleFmt{Sat}] \mpPar \mpQ[\roleFmt{Ser}] 
&
\mpnonTMove
&
\delay{6.5}{\mpQi[\roleFmt{Sen}]} \mpPar   \mpSessionQueueO{\mpS}{\roleFmt{Sen}}{\mpQueueEmpty} \mpPar \delay{6}{\mpQi[\roleFmt{Sat}]} \mpPar   \mpSessionQueueO{\mpS}{\roleFmt{Sat}}{\mpQueueEmpty} \mpPar \delay{6}{\mpQi[\roleFmt{Ser}]} \mpPar   \mpSessionQueueO{\mpS}{\roleFmt{Ser}}{\mpQueueEmpty} 
\\[0.5mm] 
&
\mpMoveTime 
&
\timePass{6.5}{\delay{6.5}{\mpQi[\roleFmt{Sen}]} \mpPar   \mpSessionQueueO{\mpS}{\roleFmt{Sen}}{\mpQueueEmpty} \mpPar \delay{6}{\mpQi[\roleFmt{Sat}]} \mpPar   \mpSessionQueueO{\mpS}{\roleFmt{Sat}}{\mpQueueEmpty} \mpPar \delay{6}{\mpQi[\roleFmt{Ser}]} \mpPar   \mpSessionQueueO{\mpS}{\roleFmt{Ser}}{\mpQueueEmpty}}  
\\[0.5mm]
&
\equiv
&
\mpQi[\roleFmt{Sen}] \mpPar \mpSessionQueueO{\mpS}{\roleFmt{Sen}}{\mpQueueEmpty} \mpPar \mpFailedP{\mpQi[\roleFmt{Sat}]} \mpPar   \mpSessionQueueO{\mpS}{\roleFmt{Sat}}{\mpQueueEmpty} \mpPar \timePass{0.5}{\mpQi[\roleFmt{Ser}]} \mpPar   \mpSessionQueueO{\mpS}{\roleFmt{Ser}}{\mpQueueEmpty}
\end{array}
\)}

\smallskip
\noindent
Therefore, the kill process $\kills{\mpS}$ is generated from $\mpFailedP{\mpQi[\roleFmt{Sat}]}$, successfully terminating  the process $\mpQ[\roleFmt{Sen}] \mpPar \mpQ[\roleFmt{Sat}] \mpPar \mpQ[\roleFmt{Ser}]$  by the following reductions: 

\smallskip
\centerline{\(
\footnotesize
\begin{array}{r@{\hskip 2mm}l@{\hskip 2mm}}
&
\mpQi[\roleFmt{Sen}] \mpPar \mpSessionQueueO{\mpS}{\roleFmt{Sen}}{\mpQueueEmpty} \mpPar \mpFailedP{\mpQi[\roleFmt{Sat}]} \mpPar   \mpSessionQueueO{\mpS}{\roleFmt{Sat}}{\mpQueueEmpty} \mpPar \timePass{0.5}{\mpQi[\roleFmt{Ser}]} \mpPar   \mpSessionQueueO{\mpS}{\roleFmt{Ser}}{\mpQueueEmpty}
\\[.5mm]
\mpnonTMove
&
\mpQi[\roleFmt{Sen}] \mpPar \mpSessionQueueO{\mpS}{\roleFmt{Sen}}{\mpQueueEmpty} \mpPar \kills{\mpS} \mpPar   \mpSessionQueueO{\mpS}{\roleFmt{Sat}}{\mpQueueEmpty} \mpPar \timePass{0.5}{\mpQi[\roleFmt{Ser}]} \mpPar   \mpSessionQueueO{\mpS}{\roleFmt{Ser}}{\mpQueueEmpty}
\\[.5mm]
\mpnonTMove
&
\mpCancel{\mpChanRole{\mpS}{\roleFmt{Sen}}}{} \mpPar   
\mpSessionQueueO{\mpS}{\roleFmt{Sen}}{\mpQueueEmpty} \mpPar 
\kills{\mpS} \mpPar   \mpSessionQueueO{\mpS}{\roleFmt{Sat}}{\mpQueueEmpty} \mpPar 
\mpNil  \mpPar   \mpSessionQueueO{\mpS}{\roleFmt{Ser}}{\mpQueueEmpty} 
\\[.5mm]
\mpnonTMove
&
\kills{\mpS} \mpPar \mpNil \mpPar   
\mpSessionQueueO{\mpS}{\roleFmt{Sen}}{\mpQueueEmpty} \mpPar 
\kills{\mpS} \mpPar   \mpSessionQueueO{\mpS}{\roleFmt{Sat}}{\mpQueueEmpty} \mpPar 
\mpNil  \mpPar   \mpSessionQueueO{\mpS}{\roleFmt{Ser}}{\mpQueueEmpty}  
\;\equiv\; 
\mpNil \mpPar \kills{\mpS}
\end{array}
\)}

\end{example}

\section{Affine Timed Multiparty Session Type System}
\label{sec:aat-mpst-type-system-popl}
\label{SEC:AAT-MPST-TYPE-SYSTEM-POPL}

In this section, we introduce our affine timed multiparty session type system.
We begin by exploring the types used in \ATMP, as well as subtyping and projection, in~\Cref{sec:aat-mpst-all-types}.  %
We furnish a Labelled Transition System (LTS) semantics for typing environments~(collections of timed local types and queue types) in~\Cref{sec:aat-mpst-typing-env}, and %
timed global types in~\Cref{sec:global_local_relation}, 
illustrating their relationship with~\Cref{lem:comp_proj,lem:sound_proj}. 
Furthermore, we present a type system for our~\ATMP session $\pi$-calculus in~\Cref{sec:aat-mpst-typing-system}. 
Finally,  we show the main properties of the type system:
\emph{subject reduction}~(\Cref{lem:sr_global}), \emph{session fidelity}~(\Cref{lem:aat-mpst-session-fidelity-global}), and \emph{deadlock-freedom}~(\Cref{lem:aat-mpst-process-df-proj}),
in~\Cref{subsec:pptes_atmp_session_types}.

\subsection{Timed Multiparty Session Types}
\label{sec:aat-mpst-all-types}
\label{SEC:AAT-MPST-ALL-TYPES}

\begin{figure}[t]%
  \centerline{\(
  \small
    \begin{array}{r@{\quad}c@{\quad}l@{\quad}l}
       \stS& \bnfdef & \stDelegate{\ccst}{\stT}
      & \text{\footnotesize{Session Type}}
      \\
      \gtG & \bnfdef &
      \gtCommT{\roleP}{\roleQ}{i \in
        I}{\gtLab[i]}{\stS[i]}{\ccstO[i], \crstO[i], \ccstI[i], \crstI[i]}{\gtG[i]}
        &
        {\footnotesize\text{Transmission}} \\
        & \bnfsep &
          \gtCommTTransit{\roleP}{\roleQ}{i \in
          I}{\gtLab[i]}{\stS[i]}{\ccstO[i], \crstO[i], \ccstI[i], \crstI[i]}{\gtG[i]}{j}~(j \in I)
        &
      \footnotesize\text{Transmission en route}%
      \\
        & \bnfsep & \gtRec{\gtRecVar}{\gtG} \quad \bnfsep \quad \gtRecVar \quad
        \bnfsep \quad \gtEnd &
        \text{\footnotesize Recursion, Type variable, Termination}
       \\[0.5ex]
      \stT
        & \bnfdef & \stExtSum{\roleP}{i \in I}{\stTChoice{\stLab[i]}{\stS[i]}{\ccst[i], \crst[i]} \stSeq \stT[i]}
          & \text{\footnotesize External choice}\\
        & \bnfsep & \stIntSum{\roleP}{i \in I}{\stTChoice{\stLab[i]}{\stS[i]}{\ccst[i], \crst[i]} \stSeq \stT[i]}
          & \text{\footnotesize Internal choice}\\
        & \bnfsep & \stRec{\stRecVar}{\stT} \ \bnfsep \ \stRecVar \
          \bnfsep \ \stEnd &
        \text{\footnotesize Recursion, Type variable, Termination}
        \\
          \stQType
 &\coloncolonequals&%
  \stQCons{\stQMsg{\roleP}{\stLab}{\stS}}{\stQType}%
  \bnfsep \stQEmptyType
  &
   \text{\footnotesize{Queue types}}%
    \end{array}
  \)}
  \vspace{-.5em}
  \caption{Syntax of timed global types, timed local types, and queue types.}
  \label{fig:syntax-global-type}%
  \label{fig:syntax-local-type}%
  \label{fig:syntax-aatmpst}
  \label{fig:aat-mpst-syntax}
  \vspace{-1em}
\end{figure}

Affine session frameworks keep the original system's type-level syntax intact, requiring no changes.
To introduce affine timed asynchronous multiparty session types,
we simply need to augment global and local types with clock constraints and resets 
introduced in~\Cref{sec:session-calculus} 
to derive \emph{timed global and local types}.
The syntax of types used in this paper is presented in~\Cref{fig:aat-mpst-syntax}. %
As usual, all types are required to be closed and have guarded recursion variables.

\vspace{-1em}
\subparagraph*{Sorts} 
are ranged over $\stS, \stSi, \stS[i], \ldots$, and
facilitate the delegation of the
remaining behaviour $\stT$ to the receiver, who can execute it under any
clock assignment satisfying $\ccst$.

\vspace{-1em}
\subparagraph*{Timed Global Types} are ranged over $\gtG, \gtGi, \gtG[i], \ldots$, and %
describe an \emph{overview} of the behaviour for all roles~($\roleP, \roleQ, \roleS, \roleT, \ldots$) belonging to a (fixed) set $\roleSet$.  The set of roles in a timed global type 
$\gtG$ is denoted as $\gtRoles{\gtG}$, while  the set of its free variables as $\fv{\gtG}$. %
We explain each syntactic construct of global types as follows.  

\vspace{.2ex}
\noindent
{\bf  \emph{Transmission}} 
$\gtCommT{\roleP}{\roleQ}{i \in
        I}{\gtLab[i]}{\stS[i]}{\ccstO[i], \crstO[i], \ccstI[i], \crstI[i]}{\gtG[i]}$ 
        represents a message sent from role $\roleP$ to role $\roleQ$, with labels $\gtLab[i]$,
payload types $\stS[i]$ (which are sorts),
and continuations $\gtG[i]$, where $i$ is taken from an index set $I$, and
$\gtLab[i]$ taken from a fixed set of all labels $\labSet$.
Each branch
is associated with a \emph{time assertion}
 consisting of four components: $\ccstO[i]$ and $\crstO[i]$ for the output (sending) action,
and $\ccstI[i]$ and $\crstI[i]$ for the input (receiving) action.
These components specify the clock constraint and reset predicate for the respective actions.
A message %
can be sent (or received) at any time satisfying the guard $\ccstO[i]$ (or $\ccstI[i]$),
and the clocks in $\crstO[i]$ (or $\crstI[i]$) are reset upon sending (or receiving).

In addition to the standard requirements for global types, including the index set being non-empty ($I \neq \emptyset$), labels $\gtLab[i]$ being pairwise distinct, and self-receptions being excluded ($\roleP \neq \roleQ$),
we impose a condition from~\cite{DBLP:conf/concur/BocchiYY14}, stating that %
sets of clocks ``owned'' by
different roles, \ie those that can be read and reset, must be pairwise disjoint. Furthermore, 
the clock constraint and %
reset predicate of an output or %
input action performed by a role
 are defined only over the clocks owned by that role. %
 For example,  
  $
   \gtCommT{\roleFmt{Sat}}{\roleFmt{Ser}}{}{
      Data
   }{}{
      \mpFmt{6 \leq C_{\roleFmt{Sat}} \leq 7, C_{\roleFmt{Sat}} := 0, 7 \leq C_{\roleFmt{Sat}} \leq 8, C_{\roleFmt{Sat}} := 0}
   }{
      \gtEnd
   }$ violates this assumption. 
   Both $\roleFmt{Sat}$'s sending and $\roleFmt{Ser}$'s receiving actions have time constraints defined over $C_{\roleFmt{Sat}}$,
  which can be owned by either of them. The same issue exists for the reset %
  $\setenum{C_{\roleFmt{Sat}}}$.
   
\vspace{.2ex}
\noindent
{\bf \emph{Transmission en route}} %
$\gtCommTTransit{\roleP}{\roleQ}{i \in
          I}{\gtLab[i]}{\stS[i]}{\ccstO[i], \crstO[i], \ccstI[i], \crstI[i]}{\gtG[i]}{j}~(j \in I)$  is a \emph{runtime} construct
to represent a message $\gtLab[j]$ sent by $\roleP$, and yet to be received by $\roleQ$. 
The distinction between $\gtFmt{\rightarrow}$ and $\gtFmt{\rightsquigarrow}$ is crucial for simulating asynchronous communication. %

\vspace{.2ex} 
\noindent
{\bf \emph{Recursion}} $\gtRec{\gtRecVar}{\gtG}$ and {\bf \emph{termination}} $\gtEnd$ (omitted where unambiguous)
are standard~\cite{ICALP13CFSM}. Note that contractive requirements~\cite[\S 21.8]{DBLP:books/daglib/0005958}, \ie ensuring that each recursion variable $\gtRecVar$ is bound within a $\gtRec{\gtRecVar}{\ldots}$ and is guarded, are applied in recursive types.

\vspace{-1em}
\subparagraph{Timed Local Types (timed session types)}
are ranged over $\stT, \stU, \stTi, \stUi, \stT[i], \stU[i], 
\ldots$, and describe the behaviour of a \emph{single}  role.
An \emph{internal choice} (\emph{selection})
$\stIntSum{\roleP}{i \in I}{\stTChoice{\stLab[i]}{\stS[i]}{\ccst[i], \crst[i]} \stSeq \stT[i]}$
(or \emph{external choice} (\emph{branching})
$\stExtSum{\roleP}{i \in I}{\stTChoice{\stLab[i]}{\stS[i]}{\ccst[i], \crst[i]} \stSeq \stT[i]}$%
)
states that the \emph{current} role is to \emph{send} to (or
\emph{receive} from) the role $\roleP$ when $\ccst[i]$ is satisfied,
followed by resetting the clocks in $\crst[i]$.
\emph{Recursive} and \emph{termination} types are defined
similarly to timed global types.
The requirements for the index set, labels,  
clock constraints, and reset predicates
in timed local types mirror those in timed global types.

\vspace{-1em}
\subparagraph{Queue Types} are ranged over $\stQType, \stQTypei, \stQType[i], \ldots$, and  %
represent (possibly empty) sequences of %
\emph{message types} $\stQMsg{\roleP}{\stLab}{\stS}$ %
having receiver $\roleP$, %
label $\stLab$, and payload type $\stS$~(omitted when $\stS \!=\! \stDelegate{\ccst}{\stEnd}$). %
As interactions in our formalisation are asynchronous,
queue types are used to capture the states in which messages are in transit. 
We adopt the notation $\operatorname{receivers}(\cdot)$ from \Cref{sec:session-calculus} to denote 
the set of \emph{receivers} in 
$\stQType$ as $\operatorname{receivers}(\stQType)$ as well, with a similar definition.

\vspace{-1em}
\subparagraph{Subtyping}   %
 We introduce a subtyping relation $\stSub$ on timed local types 
 in~\Cref{def:main_subtyping},    based on 
the standard behaviour-preserving subtyping~\cite{DBLP:journals/pacmpl/ScalasY19}. 
This %
relation indicates that a smaller type entails fewer external choices but more internal choices.

\begin{definition}[Subtyping]
\label{def:main_subtyping}
The %
\emph{subtyping relation} $\stSub$ is coinductively defined:

\smallskip
\centerline{\(
\small 
\begin{array}{c}
\cinference[\iruleStSubOut]{
  \forall i \in I
  &
  \stSi[i] \stSub \stS[i]
  &
  \ccst[i] = \ccsti[i]
  &
  \crst[i] = \crsti[i]
  &
  \stT[i] \stSub \stTi[i] 
}{
  \stIntSum{\roleP}{i \in I \cup J}{\stTChoice{\stLab[i]}{\stS[i]}{\ccst[i], \crst[i]} \stSeq \stT[i]}%
  \stSub
  \stIntSum{\roleP}{i \in I}{\stTChoice{\stLab[i]}{\stSi[i]}{\ccsti[i], \crsti[i]} \stSeq \stTi[i]}%
}
\\[.5mm]
\cinference[\iruleStSubIn]{
  \forall i \in I
  &
  \stS[i] \stSub \stSi[i]
  &
  \ccst[i] = \ccsti[i]
  &
  \crst[i] = \crsti[i]
  &
  \stT[i] \stSub \stTi[i]
}{
  \stExtSum{\roleP}{i \in I}{\stTChoice{\stLab[i]}{\stS[i]}{\ccst[i], \crst[i]} \stSeq \stT[i]}
  \stSub
  \stExtSum{\roleP}{i \in I \cup J}{\stTChoice{\stLab[i]}{\stSi[i]}{\ccsti[i], \crsti[i]} \stSeq \stTi[i]}
}
\quad 
\cinference[\iruleStSubEnd]{}{
  \stEnd \stSub \stEnd}
\\[.5mm]%
\cinference[\iruleStSubSort]{\stT \stSub \stTi}{
 \stDelegate{\ccst}{\stT} \stSub \stDelegate{\ccst}{\stTi}
}
\quad
\cinference[\iruleStSubRecL]{
  \stT{}\subst{\stRecVar}{\stRec{\stRecVar}{\stT}} \stSub \stTi
}
{
  \stRec{\stRecVar}{\stT} \stSub \stTi
}
\quad
\cinference[\iruleStSubRecR]{
  \stT \stSub \stTi{}\subst{\stRecVar}{\stRec{\stRecVar}{\stTi}}
}{
  \stT \stSub \stRec{\stRecVar}{\stTi}
}
\end{array}
\)}
\end{definition}

\vspace{-1em}  
\subparagraph{Projection} of a timed global type $\gtG$ onto a role $\roleP$ %
yields a timed local type. 
Our definition of projection %
in~\Cref{def:normal_proj} is mostly standard~\cite{DBLP:journals/pacmpl/ScalasY19}, 
with the addition of projecting time assertions onto the sender and receiver, respectively.
For example, when projecting a transmission from $\roleP$ to $\roleQ$ onto $\roleP$~(or $\roleQ$), 
an internal (or external) choice with time assertions for output (or input) is obtained, provided that 
each branching's continuation is projectable. 
For the projection of the transmission onto other participants $\roleR$,  a \emph{merge operator} is used to %
ensure the ``compatibility'' of the projections of all continuations. 

\begin{definition}[Projection]
\label{def:main_proj}
\label{def:normal_proj}
  \label{def:global-proj}%
  \label{def:local-type-merge}%
  The \emph{projection of a timed global type $\gtG$ onto a role $\roleP$}, 
  written as $\gtProj{\gtG}{\roleP}$, %
  is:

  \smallskip%
  \centerline{\(%
  \small%
  \begin{array}{c}
    \gtProj{\left(%
      \gtCommTSmall{\roleQ}{\roleR}
      {i \in I}{\gtLab[i]}{\stS[i]}{\ccstO[i], \crstO[i], \ccstI[i], \crstI[i]}{\gtG[i]}%
      \right)}{\roleP}%
    =\!%
    \left\{%
    \begin{array}{@{}l@{\hskip 2mm}l@{}}
      \stIntSum{\roleR}{i \in I}{ %
        \stTChoice{\stLab[i]}{\stS[i]}{\ccstO[i], \crstO[i]} \stSeq (\gtProj{\gtG[i]}{\roleP})%
      }%
      &\text{\footnotesize%
        if\, $\roleP = \roleQ$%
      }%
      \\[1mm]%
      \stExtSum{\roleQ}{i \in I}{%
        \stTChoice{\stLab[i]}{\stS[i]}{\ccstI[i], \crstI[i]} \stSeq (\gtProj{\gtG[i]}{\roleP})%
      }%
      &\text{\footnotesize%
        if\, $\roleP = \roleR$%
      }%
          \\[1mm]%
      \stMerge{i \in I}{\gtProj{\gtG[i]}{\roleP}}%
      &
      \text{\footnotesize%
      otherwise
      }%
    \end{array}
    \right.
    \\[8mm]%
      \gtProj{\left(%
      \gtCommTTransit{\roleQ}{\roleR}
      {i \in I}{\gtLab[i]}{\stS[i]}{\ccstO[i], \crstO[i], \ccstI[i], \crstI[i]}{\gtG[i]}{j}%
      \right)}{\roleP}%
    =\!%
    \left\{%
    \begin{array}{@{}l@{\hskip 2mm}l@{}}
     \gtProj{\gtG[j]}{\roleP}
      &\text{\footnotesize%
        if\, $\roleP = \roleQ$}
        \\[1mm]       
     \stExtSum{\roleQ}{i \in I}{%
        \stTChoice{\stLab[i]}{\stS[i]}{\ccstI[i], \crstI[i]} \stSeq (\gtProj{\gtG[i]}{\roleP})%
      }%
      &\text{\footnotesize%
        if\, $\roleP = \roleR$%
      }%
   \\[1mm]
 \stMerge{i \in I}{\gtProj{\gtG[i]}{\roleP}}%
      &
      \text{\footnotesize%
       otherwise
      }
    \end{array}
    \right.
    \\[4mm]%
    \gtProj{(\gtRec{\gtRecVar}{\gtG})}{\roleP}%
    \;=\;%
    \left\{%
    \begin{array}{@{\hskip 0.5mm}l@{\hskip 5mm}l@{}}
      \stRec{\stRecVar}{(\gtProj{\gtG}{\roleP})}%
      &%
      \text{\footnotesize%
        if\,
        $\roleP \in \gtRoles{\gtG}$ \,or\,
        $\fv{\gtRec{\gtRecVar}{\gtG}} \neq \emptyset$%
      }%
      \\%
      \stEnd%
      &%
      \text{\footnotesize%
        otherwise}
    \end{array}
    \right.%
    \quad\qquad%
    \begin{array}{@{}r@{\hskip 1mm}c@{\hskip 1mm}l@{}}
      \gtProj{\gtRecVar}{\roleP}%
      &=&%
      \stRecVar%
      \\%
      \gtProj{\gtEnd}{\roleP}%
      &=&%
      \stEnd%
    \end{array}
  \end{array}
  \)}%
  \smallskip%

  \noindent%
  where
  $\stMerge{}{}$ is %
  the \emph{merge operator for timed session types}: %

  \smallskip%
  \centerline{\(%
  \small%
  \begin{array}{c}%
    \textstyle%
    \stExtSum{\roleP}{i \in I}{\stTChoice{\stLab[i]}{\stS[i]}{\ccst[i], \crst[i]} \stSeq \stT[i]}%
    \;\stBinMerge\;%
    \stExtSum{\roleP}{\!j \in J}{\stTChoice{\stLab[j]}{\stSi[j]}{\ccsti[j], \crsti[j]} \stSeq \stTi[j]} = %
    \\[1mm]
    \stExtSum{\roleP}{k \in I \cap J}{\stTChoice{\stLab[k]}{\stS[k]}{\ccst[k], \crst[k]} \stSeq%
      (\stT[k] \!\stBinMerge\! \stTi[k])%
    }%
    \stExtC%
    \stExtSum{\roleP}{i \in I \setminus J}{\stTChoice{\stLab[i]}{\stS[i]}{\ccst[i], \crst[i]} \stSeq \stT[i]}%
    \stExtC%
    \stExtSum{\roleP}{\!j \in J \setminus I}{\stTChoice{\stLab[j]}{\stSi[j]}{\ccsti[j], \crsti[j]} \stSeq \stTi[j]}%
    \\[1mm]%
    \stIntSum{\roleP}{i \in I}{\stTChoice{\stLab[i]}{\stS[i]}{\ccst[i], \crst[i]} \stSeq \stT[i]}%
    \,\stBinMerge\,%
    \stIntSum{\roleP}{i \in I}{\stTChoice{\stLab[i]}{\stS[i]}{\ccst[i], \crst[i]} \stSeq \stTi[i]}%
    \;=\;%
    \stIntSum{\roleP}{i \in I}{\stTChoice{\stLab[i]}{\stS[i]}{\ccst[i], \crst[i]} \stSeq (\stT[i]
    \stBinMerge \stTi[i])}%
    \\[1mm]%
    \stRec{\stRecVar}{\stT} \,\stBinMerge\, \stRec{\stRecVar}{\stU}%
    \,=\,%
    \stRec{\stRecVar}{(\stT \stBinMerge \stU)}%
    \qquad%
    \stRecVar \,\stBinMerge\, \stRecVar%
    \,=\,%
    \stRecVar%
    \qquad%
    \stEnd \,\stBinMerge\, \stEnd%
    \,=\,%
    \stEnd%
  \end{array}
  \)}%

\end{definition}

\begin{example}
\label{ex:proj_example}
Take the timed global type $\gtG$, and timed local types  $\stT[\roleFmt{Sat}]$ and 
$\stT[\roleFmt{Ser}]$ from~\Cref{sec:overview:theory}. 
Consider a timed global type $\gtG[\text{data}]$, 
derived from remote data~(\Cref{fig:implementation:remote_data}) as well, 
representing  data transmission from  
$\roleFmt{Sen}$ to $\roleFmt{Ser}$ via $\roleFmt{Sat}$:  

\centerline{\(
\footnotesize
\gtG[\text{data}] = \gtCommT{\roleFmt{Sen}}{\roleFmt{Sat}}{}{
      \gtMsgFmt{Data}
   }{}{
  \mpFmt{6 \leq C_{\roleFmt{Sen}} \leq 7, C_{\roleFmt{Sen}} := 0, 
  6 \leq C_{\roleFmt{Sat}} \leq 7, \emptyset}
   }{\gtG} 
   \)}
   
   \noindent
which can be projected onto roles $\roleFmt{Sen}$, 
   $\roleFmt{Sat}$, and $\roleFmt{Ser}$, respectively, as: 
 
 \smallskip
 \centerline{\(
 \footnotesize
 \begin{array}{c}
 \gtProj{\gtG[\text{data}]}{\roleFmt{Sen}} = \roleFmt{Sat}\stFmt{\oplus}
   \stLabFmt{Data}\stFmt{\{ \mpFmt{6 \leq C_{\roleFmt{Sen}} \leq 7, C_{\roleFmt{Sen}} := 0}\}}
   \stSeq
   \stEnd
   \qquad 
    \gtProj{\gtG[\text{data}]}{\roleFmt{Ser}} =  \gtProj{\gtG}{\roleFmt{Ser}} = \stT[\roleFmt{Ser}]
 \\[0.5mm]
  \gtProj{\gtG[\text{data}]}{\roleFmt{Sat}} =  
  \roleFmt{Sen}\stFmt{\&}
   \stLabFmt{Data}\stFmt{\{\mpFmt{6 \leq C_{\roleFmt{Sat}} \leq 7, \emptyset}\}}
   \stSeq
   \gtProj{\gtG}{\roleFmt{Sat}}
= 
\roleFmt{Sen}\stFmt{\&}
   \stLabFmt{Data}\stFmt{\{\mpFmt{6 \leq C_{\roleFmt{Sat}} \leq 7, \emptyset}\}}
   \stSeq
   \stT[\roleFmt{Sat}] 
 \end{array}
 \)}
\end{example}

\subsection{Typing Environments}
\label{sec:aat-mpst-typing-env}
\label{SEC:AAT-MPST-TYPING-ENV}
To reflect the behaviour of timed global types~(\Cref{sec:global_local_relation}),
present a typing system for our session $\pi$-calculus (\Cref{sec:aat-mpst-typing-system}),
and introduce type-level properties (\Cref{subsec:pptes_atmp_session_types}),
we %
formalise \emph{typing environments} in~\Cref{def:aat-mpst-typing-env-syntax}, followed by their Labelled Transition System (LTS) semantics in~\Cref{def:aat-mpst-typing-env-reduction}.

\begin{figure}[!t]
\centerline{\(
\small
\begin{array}{c}
 \inference{%
      \vphantom{\stT \stQEquiv \stTi}%
    }{%
      \stCPair{\cVal}{\stT} \stQEquiv \stCPair{\cVal}{\stT}%
    }%
    \qquad%
    \inference{%
     \highlight{ \roleP \neq \roleQ}%
    }{%
      \stQCons{\stQMsg{\roleP}{\stLab[1]}{\stS[1]}}{%
        \stQCons{\stQMsg{\roleQ}{\stLab[2]}{\stS[2]}}{%
          \stQType%
        }%
      }%
      \;\stQEquiv\;%
      \stQCons{\stQMsg{\roleQ}{\stLab[2]}{\stS[2]}}{%
        \stQCons{\stQMsg{\roleP}{\stLab[1]}{\stS[1]}}{%
          \stQType%
        }%
      }%
    }%
   \\[0.5mm]
    \inference{}
    {\stQCons{\stQEmptyType}{\stQEmptyType} \stQEquiv \stQEmptyType}
    \qquad 
    \inference{}
    {%
      \stQCons{\stQMsg{\roleP}{\stLab}{\stS}}{%
        \stQCons{\stQEmptyType}{%
          \stQType%
        }%
      }%
      \;\stQEquiv\;%
      \stQCons{\stQEmptyType}{%
        \stQCons{\stQMsg{\roleP}{\stLab}{\stS}}{%
          \stQType%
        }%
      }%
    }
    \qquad 
    \inference{%
      \stQType \stQEquiv \stQTypei%
      \quad%
      \stCPair{\cVal}{\stT} \stQEquiv \stCPair{\cVal}{\stTi}%
    }{%
      \stMPair{\stCPair{\cVal}{\stT}}{\stQType} \stQEquiv \stMPair{\stCPair{\cVal}{\stTi}}{\stQTypei}%
    }%
\\[1mm]
 \hdashline
\\[-2.5mm]
\cinference[\iruleSubEQueueTypeEmpty]{%
 \vphantom{X}%
  }{%
    \stQEmptyType \stSub \stQEmptyType
   }%
\quad
    \cinference[\iruleSubEQueueType]{%
     \highlight{\stSi \stSub \stS} & \stQType \stSub \stQTypei%
    }{%
       \stQCons{\stQMsg{\roleQ}{\gtLab}{\stS}}{\stQType}
       \stSub   \stQCons{\stQMsg{\roleQ}{\gtLab}{\stSi}}{\stQTypei}}
       \\[0.5mm]
        \cinference[\iruleSubESessionType]{%
      \stT \stSub \stTi%
    }{%
      \stCPair{\cVal}{\stT} \stSub \stCPair{\cVal}{\stTi}%
    }%
\qquad
 \cinference[\iruleSubQueue]{%
      \stQType \stSub \stQTypei%
      &
      \stCPair{\cVal}{\stT} \stSub \stCPair{\cVal}{\stTi}%
    }{%
      \stMPair{\stCPair{\cVal}{\stT}}{\stQType} \stSub \stMPair{\stCPair{\cVal}{\stTi}}{\stQTypei}%
    }%
\end{array}
\)}
\vspace{-.5em}
\caption{Congruence (top) and subtyping (bottom) rules for timed-session/queue types.}
\label{fig:sub_cong_com_types}
\vspace{-1em}
\end{figure}
\begin{definition}[Typing Environments]%
\label{def:aat-mpst-typing-env-syntax}%
The typing environments $\mpEnv$ and $\stEnv$ are defined as:%

 \smallskip%
  \centerline{\(%
  \small
  \mpEnv%
  \;\;\coloncolonequals\;\;%
  \mpEnvEmpty%
  \bnfsep%
 \mpEnv \mpEnvComp\, \mpEnvMap{\mpX}{\stCPair{\cVal[1]}{\stT[1]}, \ldots, \stCPair{\cVal[n]}{\stT[n]}}%
  \quad\quad
  \stEnv%
  \,\coloncolonequals\,%
  \stEnvEmpty%
  \bnfsep%
  \stEnv \stEnvComp \stEnvMap{x}{\stCPair{\cVal}{\stT}}%
  \bnfsep%
  \stEnvQ \stEnvQComp%
  \stEnvMap{%
    \mpChanRole{\mpS}{\roleP}%
  }{
    \stSpecf
  }%
  \)}%

\smallskip%
  \noindent%
  where $\stSpecf$ is a \emph{timed-session/queue} type:  $\stSpecf \coloncolonequals  \stCPair{\cVal}{\stT} \bnfsep \stQType \bnfsep \stMPair{\stCPair{\cVal}{%
         \stT}}{\stQType}$, \ie either a timed session type, a queue type, or a combination. 
  
  The environment \emph{composition} $\stEnv[1] \stEnvQComp \stEnv[2]$ 
 is defined iff 
 \(%
  \forall \mpC \in \dom{\stEnv[1]} \cap \dom{\stEnv[2]}:%
  \stEnvApp{\stEnv[i]}{\mpC} \!=\! \stQType%
  \text{ and }%
  \stEnvApp{\stEnv[\!j]}{\mpC} \!=\! \stCPair{\cVal}{\stT}%
  \text{ with $i, j \in \setenum{1, 2}$}
  \), and for all such $\mpC$, we posit  %
  $\stEnvApp{(\stEnv[1] \stEnvQComp \stEnv[2])}{\mpC}%
  = \stMPair{\stCPair{\cVal}{\stT}}{\stQType}$.  
 
   We write $\dom{\stEnvQ} = \setenum{\mpS}$ iff for any $\mpC \in \dom{\stEnvQ}$, there
  is $\roleP$ such that $\mpC =  \mpChanRole{\mpS}{\roleP}$ (\ie $\stEnvQ$ only contains session $\mpS$).
 We write $\mpS \not \in \stEnvQ$ iff
 $\forall \roleP: \mpChanRole{\mpS}{\roleP} \not \in \dom{\stEnvQ}$ (\ie session $\mpS$ does not occur in $\stEnvQ$).  
 We write $\stEnv[\mpS]$ iff $\dom{\stEnv[\mpS]} = \setenum{\mpS}$, $\dom{\stEnv[\mpS]} \!\subseteq\! \dom{\stEnv}$, 
  and $\forall \mpChanRole{\mpS}{\roleP} \!\in\! \dom{\stEnv}: \stEnvApp{\stEnv}{\mpChanRole{\mpS}{\roleP}} = \stEnvApp{\stEnv[\mpS]}{\mpChanRole{\mpS}{\roleP}}$~(\ie restriction of  $\stEnv$ to session $\mpS$). 
 We denote updates as $\stEnvUpd{\stEnv}{\mpC}{\stSpecf}$:
  $\stEnvApp{\stEnvUpd{\stEnv}{\mpC}{\stSpecf}}{ \mpC} = \stSpecf$ and %
  $\stEnvApp{\stEnvUpd{\stEnv}{ \mpC}{\stSpecf}}{\mpCi} =
  \stEnvApp{\stEnv}{\mpCi}$ (where $\mpC \neq \mpCi$).

   Congruence and subtyping are imposed on typing environments: 
  $\stEnvQ \stQEquiv \stEnvQi$ (resp. $\stEnvQ \stSub \stEnvQi$) 
  iff %
  $\dom{\stEnvQ} = \dom{\stEnvQi}$ %
   and  %
  $\forall \mpC \in \dom{\stEnvQ}:
  \stEnvApp{\stEnvQ}{\mpC} \stQEquiv \stEnvApp{\stEnvQi}{\mpC}$~(resp. %
  $\stEnvApp{\stEnvQ}{\mpC} \stSub \stEnvApp{\stEnvQi}{\mpC}$), incorporating  additional 
  congruence and subtyping rules for time-session/queue types, as depicted in~\Cref{fig:sub_cong_com_types}.

\end{definition}

In~\Cref{def:aat-mpst-typing-env-syntax}, the typing environment  $\mpEnv$ maps process variables 
to $n$-tuples of timed session types, while
 $\stEnv$ maps variables to timed session types, and 
 channels with roles  to timed-session/queue types. 
Note that in our typing environments, timed session types are annotated with clock valuations, 
denoted as $\stCPair{\cVal}{\stT}$. 
This enables us to capture timing information within the type system, facilitating the tracking of the (virtual) time at which the next action can be validated during the execution of a process. 

The congruence relation $\stQEquiv$ for timed-session/queue types is inductively defined as in~\Cref{fig:sub_cong_com_types} (top),  reordering queued messages with different receivers. Subtyping for timed-session/queue types 
extends~\Cref{def:main_subtyping} with rules in~\Cref{fig:sub_cong_com_types} (bottom): particularly,  
rule \inferrule{\iruleSubEQueueType} states that a sequence of queued message types is a subtype of another if messages in the same position have identical receivers and labels, and their payload sorts are related by subtyping.

\begin{figure}[t]%
  \centerline{\(
  \small
  \begin{array}{c}
     \inference[\iruleTCtxRec]{%
      \stEnv \!\stEnvComp \stEnvMap{%
        \mpChanRole{\mpS}{\roleP}%
      }{
      \stCPair{\cVal}{%
        \stT\subst{\stRecVar}{\stRec{\stRecVar}{\stT}}}%
      }%
      \stEnvQTMoveGenAnnotT \stEnvi%
    }{%
      \stEnv \!\stEnvComp \stEnvMap{%
       \mpChanRole{\mpS}{\roleP}
      }{
      \stCPair{\cVal}{%
        \stRec{\stRecVar}{\stT}}%
      }%
      \stEnvQTMoveGenAnnotT \stEnvi%
    }%
    \quad
    \inference[\iruleTCtxCongX]{%
      \stEnv \stEnvQTMoveGenAnnotT \stEnvi
   &
  \highlight{\stTEnvAnnotGenericSymT \neq \timeLab}
    }{%
      \stEnv \!\stEnvComp \stEnvMap{\mpFmt{x}}{\stCPair{\cVal}{\stT}}
      \stEnvQTMoveGenAnnotT%
      \stEnvi \!\stEnvComp \stEnvMap{\mpFmt{x}}{\stCPair{\cVal}{\stT}}
    }%
    \quad
     \inference[\iruleTCtxCongCombined]{%
      \stEnv \stEnvQTMoveGenAnnotT \stEnvi
      &
   \highlight{ \stTEnvAnnotGenericSymT \neq \timeLab}
    }{%
      \stEnv \!\stEnvComp \stEnvMap{\mpChanRole{\mpS}{\roleP}}{\stSpecf}
      \stEnvQTMoveGenAnnotT%
      \stEnvi \!\stEnvComp \stEnvMap{\mpChanRole{\mpS}{\roleP}}{\stSpecf}}%
\\[1mm]
       \inference[\iruleTCtxSend]
    {k \!\in\! I
     &
    \highlight{\cVal \models \ccst[k]}}
     { \stEnvMap{%
        \mpChanRole{\mpS}{\roleP}%
      }{
        \stMPair{
        \stCPair{\cVal}{%
          \stIntSum{\roleQ}{i \in I}{\stTChoice{\stLab[i]}{\stS[i]}{\ccst[i], \crst[i]} \stSeq \stT[i]}}%
        }{%
          \stQType%
        }%
      \;\stEnvQTMoveQueueAnnot{\roleP}{\roleQ}{\stLab[k]}\;%
      \stEnvMap{%
        \mpChanRole{\mpS}{\roleP}%
      }{%
        \stMPair{
        \stCPair{\mpFmt{\highlight{\cValUpd{\cVal}{\crst[k]}{0}}}}
          {\stT[k]}%
        }{%
          \stQCons{\stQType}{%
            \stQCons{%
              \stQMsg{\roleQ}{\stLab[k]}{\stS[k]}%
            }{%
              \stQEmptyType%
            }%
          }%
      }}
      }}
      \\[1mm]
      \inference[\iruleTCtxRcv]
       {k \!\in\! I &
      \highlight{\cVal \models \ccst[k]} &
      \stS[k]\!\stSub\!\stSi[k]
}
      {
      \stEnvMap{%
        \mpChanRole{\mpS}{\roleP}%
      }{%
       \stQCons{\stQMsg{\roleQ}{\stLab[k]}{\stS[k]}}{\stQType}%
      } %
      \stEnvQComp %
      \stEnvMap{%
        \mpChanRole{\mpS}{\roleQ}%
      }{
      \stCPair{\cVal}{%
          \stExtSum{\roleP}{i \in I}{\stTChoice{\stLab[i]}{\stSi[i]}{\ccst[i], \crst[i]} \stSeq \stT[i]}}%
      }%
      \;\stEnvQTMoveRecvAnnot{\roleQ}{\roleP}{\stLab[k]}\;
      \stEnvMap{%
        \mpChanRole{\mpS}{\roleP}%
      }{%
               \stQType
      }%
      \stEnvQComp%
      \stEnvMap{%
        \mpChanRole{\mpS}{\roleQ}%
      }{
        \stCPair{\highlight{\mpFmt{\cValUpd{\cVal}{\crst[k]}{0}}}}%
          {\stT[k]}
          }
          }
 \\[1mm]
   \inferenceSingle[\iruleTCtxTimeSession]
      {\highlight{\stEnvMap{%
       \mpC
      }{\stCPair{\cVal}{%
         \stT}}
\;\stEnvQTMoveTimeAnnot\;
       \stEnvMap{%
        \mpC
      }{%
          \stCPair{\cVal + \timeLab}{\stT}%
        }}}
      \qquad 
        \inferenceSingle[\iruleTCtxTimeQ]
        { \highlight{\stEnvMap{%
        \mpChanRole{\mpS}{\roleP}%
      }{\stQType}
       \;\stEnvQTMoveTimeAnnot\;
       \stEnvMap{%
        \mpChanRole{\mpS}{\roleP}%
      }{%
          \stQType}}}
\\[1mm] 
            \inferenceSingle[\iruleTCtxTimeCombined]
      { \highlight{\stEnvMap{%
        \mpChanRole{\mpS}{\roleP}%
      }{\stMPair{\stCPair{\cVal}{%
         \stT}}{\stQType}}
       \;\stEnvQTMoveTimeAnnot\;
       \stEnvMap{%
        \mpChanRole{\mpS}{\roleP}%
      }{%
          \stMPair{\stCPair{\cVal + \timeLab}{%
         \stT}}{\stQType}%
        }}}
          \\[1mm]
 \begin{array}{c}
 \inference[\iruleTCtxTime]
       {\highlight{\stEnv[1] \;\stEnvQTMoveTimeAnnot\; \stEnvi[1]}
       &  \; { \highlight{\stEnv[2] \;\stEnvQTMoveTimeAnnot\; \stEnvi[2]}}
    }
    {
     \highlight{ \stEnv[1] \stEnvComp \stEnv[2]
          \;\stEnvQTMoveTimeAnnot\;
      \stEnvi[1] \stEnvComp \stEnvi[2]}}
   \qquad 
    \inference[\iruleTCtxStruct]
       {\stEnv \;\stQEquiv\; \stEnv[1]%
       &
       \stEnv[1] \;\stEnvQTMoveGenAnnotT\; \stEnvi[1]
       &
       \stEnvi[1] \;\stQEquiv\; \stEnvi
       }{\stEnv \;\stEnvQTMoveGenAnnotT\; \stEnvi
     }
     \end{array}
 \end{array}
\)}
\vspace{-.5em}
  \caption{Typing environment semantics.}
  \label{fig:aat-mpst-semantics-typing-env}%
  \vspace{-1em}
\end{figure}

\begin{definition}[Typing Environment Reduction]%
  \label{def:aat-mpst-typing-env-reduction}%
 Let $\stTEnvAnnotGenericSymT$ be a transition label of the form %
  $\stEnvQQueueAnnotSmall{\roleP}{\roleQ}{\stLab}$,  %
  $\stEnvQRecvAnnotSmall{\roleP}{\roleQ}{\stLab}$,  %
or $\timeLab$. %
  The \emph{typing environment transition\;$\stEnvQTMoveGenAnnotT$}\;%
  is inductively defined by the rules in
  \Cref{fig:aat-mpst-semantics-typing-env}. %
  We write $\stEnvQ \!\stEnvQTMoveGenAnnotT$ %
 iff
  $\stEnvQ \!\stEnvQTMoveGenAnnotT\! \stEnvQi$ %
  for some $\stEnvQi$.
 We define two \emph{reductions} $\stEnv \!\stEnvMoveTWithSession[\mpS]\! \stEnvi$ (where $\mpS$ is a session) and 
 $\stEnv \!\stEnvMove\! \stEnvi$ as follows:
 \begin{itemize}[left=0pt,topsep=0pt]
 \item $\stEnv \!\stEnvMoveTWithSession[\mpS]\! \stEnvi$ holds iff 
    $\stEnv \!\stEnvQTMoveGenAnnotT\! \stEnvi$
    with $\stEnvAnnotGenericSym \in \setcomp{\stEnvQQueueAnnotSmall{\roleP}{\roleQ}{\stLab},   %
  \stEnvQRecvAnnotSmall{\roleP}{\roleQ}{\stLab}, \timeLab}{\roleP, \roleQ \in \roleSet}$ (where 
  $\roleSet$ is the set of all roles).     
  We write $\stEnv \!\stEnvMoveWithSession[\mpS]$ %
    iff $\stEnv \!\stEnvMoveWithSession[\mpS]\! \stEnvi$ for some $\stEnvi$, %
    and $\stEnvMoveWithSessionStar[\mpS]$ as the reflexive and transitive closure of $\stEnvMoveWithSession[\mpS]$; %
    \item $\stEnv \!\stEnvMove\! \stEnvi$ holds iff %
    $\stEnv \stEnvMoveWithSession[\mpS] \stEnvi$ for some $\mpS$. 
    We write $\stEnvMoveP{\stEnv}$ iff $\stEnv \!\stEnvMove\! \stEnvi$ for some $\stEnvi$, %
  and $\stEnvMoveStar$ %
    as the reflexive and transitive closure of $\stEnvMove$.  
  \end{itemize}
\end{definition}

The label $\stEnvQQueueAnnotSmall{\roleP}{\roleQ}{\stLab}$ indicates that $\roleP$ 
sends the message $\stLab$ to $\roleQ$ on session $\mpS$, 
while $\stEnvQRecvAnnotSmall{\roleP}{\roleQ}{\stLab}$ denotes 
the reception of $\stLab$ from $\roleQ$ by $\roleP$ on $\mpS$. 
Additionally, the label $\timeLab~(\in \rn_{\ge 0})$ represents a time action modelling the passage of time.
  
The ($\highlight{\text{highlighted}}$) main modifications in the reduction rules for typing environments, 
compared to standard rules, concern time. 
Rule \inferrule{\iruleTCtxSend} states that an entry can perform an
output transition by appending a message at the respective queue within the time specified by the output clock constraint. 
Dually, 
rule  \inferrule{\iruleTCtxRcv} allows an entry to execute an input transition, consuming a message from the corresponding queue  within the specified input clock constraint, provided that the payloads are compatible through subtyping.  
Note that in both rules, the associated clock valuation of the reduced 
entry must be updated according to the reset. %

Rules \inferrule{\iruleTCtxCongX} and \inferrule{\iruleTCtxCongCombined} pertain to \emph{untimed} reductions, 
\ie $\stTEnvAnnotGenericSymT \neq \timeLab$,  within a larger environment. 
Rule \inferrule{\iruleTCtxTimeSession} models time passing on an entry of timed session type by incrementing the associated clock valuation, while rule \inferrule{\iruleTCtxTimeQ} specifies that an entry of queue type is not affected with respect to time progression. 
Thus,  
rule \inferrule{\iruleTCtxTimeCombined} captures the corresponding time behaviour for a timed-session/queue type entry. 
Additionally, rule \inferrule{\iruleTCtxTime} ensures that time elapses
uniformly across compatibly composed %
environments. 
Other rules are standard: \inferrule{\iruleTCtxRec} is for recursion, 
and \inferrule{\iruleTCtxStruct} 
 ensures that reductions are closed under %
 congruence.
 
The reduction $\stEnv \stEnvMoveWithSession[\mpS] \stEnvi$ indicates 
that the typing environment $\stEnv$ can advance on session $\mpS$, involving any roles, while 
$\stEnv \stEnvMove \stEnvi$ signifies $\stEnv$ progressing on any session. 
This distinction helps in illustrating properties of typed processes discussed in~\Cref{subsec:pptes_atmp_session_types}.

\subsection{Relating Timed Global Types and Typing Environments}
\label{sec:global_local_relation}
One of our main results is establishing an operational relationship between the semantics of timed global types and typing environments, ensuring the \emph{correctness} of processes typed by environments that reflect timed global types. 
To accomplish this, we begin by assigning LTS semantics to timed global types.

\vspace{-1em}
\subparagraph*{Semantics of Timed Global Types} %
Similar to that of typing environments,
we define the LTS semantics for timed global types $\gtG$
over tuples of the form
 $\gtWithTime{\cVal}{\gtG}$,
where $\cVal$ is a clock valuation. 
Additionally, we specify the subject of an action $\stEnvAnnotGenericSym$ 
as its responsible principal: 
$\ltsSubject{\stEnvQQueueAnnotSmall{\roleP}{\roleQ}{\stLab}}
=
\ltsSubject{\stEnvQRecvAnnotSmall{\roleP}{\roleQ}{\stLab}} =
\setenum{\roleP}$, and 
$ \ltsSubject{\timeLab} = \emptyset$.
      
\begin{figure}[t]
\centerline{\(
\small
\begin{array}{c}
\begin{array}{c}
 \highlight{ \inferenceSingle[\iruleGtMoveTime]{
    \gtWithTime{\cVal}{\gtG}
    \gtMove[\timeLab]
    \gtWithTime{\cVal + \timeLab}{\gtG}
  }}
\end{array}
\qquad
\begin{array}{c}
  \inference[\iruleGtMoveRec]{
    \gtWithTime{\cVal}{\gtG{}\subst{\gtRecVar}{\gtRec{\gtRecVar}{\gtG}}}
    \gtMove[\stEnvAnnotGenericSym]
    \gtWithTime{\cVali}{\gtGi}}
  {
    \gtWithTime{\cVal}{\gtRec{\gtRecVar}{\gtG}}
    \gtMove[\stEnvAnnotGenericSym]
    \gtWithTime{\cVali}{\gtGi}
  }
  \end{array}
  \\[2ex]
  \inference[\iruleGtMoveOut]{
     j \in I
    &
   \highlight{ \cVal \models \ccstO[j]}
    &
   \highlight{\cVali = \cValUpd{\cVal}{\crstO[j]}{0}}
  }{
    \gtWithTime{\cVal}{
      \gtCommTSmall{\roleP}{\roleQ}{i \in I}{\gtLab[i]}{\stS[i]}{\tcFmt{\mathcal{A}_i}}{\gtGi[i]}
    }
    \gtMove[
     \stEnvQTQueueAnnotSmall{\roleP}{\roleQ}{\gtLab[j]}
    ]
    \gtWithTime{\cVali}{
      \gtCommTTransit{\roleP}{\roleQ}{i \in I}{\gtLab[i]}{\stS[i]}{\tcFmt{\mathcal{A}_i}}{\gtGi[i]}{j}
    }
  }
  \\[1ex]
  \inference[\iruleGtMoveIn]{
    j \in I
    &
   \highlight{\cVal \models \ccstI[j]}
    &
   \highlight{ \cVali = \cValUpd{\cVal}{\crstI[j]}{0}}
  }{
    \gtWithTime{\cVal}{
      \gtCommTTransit{\roleP}{\roleQ}{i \in I}{\gtLab[i]}{\stS[i]}{\tcFmt{\mathcal{A}_i}}{\gtGi[i]}{j}
    }
    \gtMove[
     \stEnvQTRecvAnnotSmall{\roleQ}{\roleP}{\gtLab[j]}
    ]
    \gtWithTime{\cVali}{\gtGi[j]}
  }\\[1ex]
  \inference[\iruleGtMoveCtx]{
    \forall i \in I :
    \gtWithTime{\cVal}{\gtGi[i]}
    \gtMove[\stEnvAnnotGenericSym]
    \gtWithTime{\cVali}{\gtGii[i]}
    &
    \roleP, \roleQ \notin \ltsSubject{\stEnvAnnotGenericSym}
    &
    \highlight{\stEnvAnnotGenericSym \neq \timeLab}
  }{
    \gtWithTime{\cVal}{
      \gtCommTSmall{\roleP}{\roleQ}{i \in
      I}{\gtLab[i]}{\stS[i]}{\tcFmt{\mathcal{A}_i}}{\gtGi[i]}
    }
    \gtMove[\stEnvAnnotGenericSym]
    \gtWithTime{\cVali}{
      \gtCommTSmall{\roleP}{\roleQ}{i \in
      I}{\gtLab[i]}{\stS[i]}{\tcFmt{\mathcal{A}_i}}{\gtGii[i]}
    }
  }
  \\[1ex]
  \inference[\iruleGtMoveCtxi]{
    \forall i \in I :
    \gtWithTime{\cVal}{\gtGi[i]}
    \gtMove[\stEnvAnnotGenericSym]
    \gtWithTime{\cVali}{\gtGii[i]}
    &
    \roleQ \notin \ltsSubject{\stEnvAnnotGenericSym} 
    &
     \highlight{\stEnvAnnotGenericSym \neq \timeLab}
  }{
    \gtWithTime{\cVal}{
      \gtCommTTransit{\roleP}{\roleQ}{i \in
      I}{\gtLab[i]}{\stS[i]}{\tcFmt{\mathcal{A}_i}}{\gtGi[i]}{j}
     }
    \gtMove[\stEnvAnnotGenericSym]
    \gtWithTime{\cVali}{
      \gtCommTTransit{\roleP}{\roleQ}{i \in
      I}{\gtLab[i]}{\stS[i]}{\tcFmt{\mathcal{A}_i}}{\gtGii[i]}{j}
    }
  }%

\end{array}
\)}
\vspace{-.5em}
\caption{Timed global type reduction rules, where $\tcFmt{\mathcal{A}_i} = \ccstO[i], \crstO[i], \ccstI[i], \crstI[i]$. }
\label{fig:gtype_rules}
\vspace{-1em}
\end{figure}

\begin{definition}[Timed Global Type Reduction]
\label{def:tgtype-semantics}
 \label{def:tgtype:lts-gt}
  The \emph{timed global type
  transition}
  $\gtMove[\stEnvAnnotGenericSym]$
  is inductively
  defined by the rules in~\Cref{fig:gtype_rules}. 
    We denote 
  $\gtWithTime{\cVal}{\gtG} \gtMove
  \gtWithTime{\cVali}{\gtGi}$
  if there
  exists $\stEnvAnnotGenericSym$ such that
  $\gtWithTime{\cVal}{\gtG}
  \gtMove[\stEnvAnnotGenericSym]
  \gtWithTime{\cVali}{\gtGi}$, 
  $\gtWithTime{\cVal}{\gtG} \gtMove$ if there exists 
  $\gtWithTime{\cVali}{\gtGi}$ such that 
   $\gtWithTime{\cVal}{\gtG}
  \gtMove 
  \gtWithTime{\cVali}{\gtGi}$, and $\gtMoveStar$  as the transitive and reflexive closure of $\gtMove$.
  \end{definition}

In~\Cref{fig:gtype_rules}, the ($\highlight{\text{highlighted}}$) changes from the standard 
global type reduction rules~\cite{ICALP13CFSM} focus on time. 
Rule \inferrule{\iruleGtMoveTime} accounts for the passage of time by incrementing the clock valuation.
Rules \inferrule{\iruleGtMoveOut} and \inferrule{\iruleGtMoveIn}
model the sending and receiving of messages within specified clock constraints, respectively. 
Both rules also require the adjustment of the clock valuation using the reset predicate.
Rule \inferrule{\iruleGtMoveRec} handles recursion. 
Finally, rules \inferrule{\iruleGtMoveCtx} and \inferrule{\iruleGtMoveCtxi} 
allow reductions of (intermediate) global types causally independent of their prefixes. 
Note that the execution of any timed global type transition always starts with an initial clock valuation $\cVal^{0}$, 
\ie all clocks in $\cVal$ are set to $0$. 

\vspace{-1em}
\subparagraph{Associating Timed Global Types with Typing Environments} %
We are now ready to establish a \emph{new} relationship, \emph{association}, between timed global types and typing environments. This association, which is more general than projection~(\Cref{def:main_proj}) by incorporating 
subtyping $\stSub$~(\Cref{def:main_subtyping}), plays a crucial role in formulating the typing rules (\Cref{sec:aat-mpst-typing-system}) and demonstrating the properties of typed processes (\Cref{subsec:pptes_atmp_session_types}).

\begin{definition}[Association]
\label{def:assoc}
  A typing environment $\stEnv$ is \emph{associated with} %
  a timed global type $\gtWithTime{\cVal}{\gtG}$ for 
  a multiparty session $\mpS$, 
  written 
  $\stEnvAssoc{\gtWithTime{\cVal}{\gtG}}{\stEnv}{\mpS}$, 
  iff  
  $\stEnv$ 
  can be split into three %
  (possibly empty) sub-environments
 $\stEnv = \stEnv[\gtG] \stEnvComp \stEnv[\Delta] \stEnvComp \stEnv[\stEnd]$ where: 
  \begin{enumerate}[left=0pt, topsep=0pt,resume, nosep]
    \item %
    $\stEnv[\gtG]$ is associated with $\gtWithTime{\cVal}{\gtG}$ for $\mpS$, provided as: %
         \begin{enumerate*}[label=\emph{(\roman*)},left=0pt, topsep=0pt]
         \item  $\dom{\stEnv[\gtG]}
        =
        \setcomp{\mpChanRole{\mpS}{\roleP}}{\roleP \in \gtRoles{\gtG}}$; 
       
        \item $\forall \mpChanRole{\mpS}{\roleP} \in \dom{\stEnv[\gtG]}:   
        \stEnvApp{\stEnv[\gtG]}{\mpChanRole{\mpS}{\roleP}} = \stCPair{\cVal[\roleP]}{\stT[\roleP]}$;

        \item  $\forall \roleP \in \gtRoles{\gtG}: \gtProj{\gtG}{\roleP} \stSub \stT[\roleP]$; and %
        
        \item $\cVal = \sqcup_{\roleP \in \gtRoles{\gtG}}\cVal[\roleP]$ (recall that  $\sqcup$ is an overriding union). 

         \end{enumerate*}
   \item
        \label{item:assoc:qenv}
         $\stEnv[\Delta]$ is associated with $\gtG$ for $\mpS$, given as follows: 
         \begin{enumerate}[label=\emph{(\roman*)}, left=0pt, topsep=0pt]
         \item $\dom{\stEnv[\Delta]}
        =
        \setcomp{\mpChanRole{\mpS}{\roleP}}{\roleP \in \gtRoles{\gtG}}$; 
        
        \item  $\forall \mpChanRole{\mpS}{\roleP} \in \dom{\stEnv[\Delta]}:   
        \stEnvApp{\stEnv[\Delta]}{\mpChanRole{\mpS}{\roleP}} = \stQType[\roleP]$; 

         \item if $\gtG = \gtEnd$ or $\gtG = \gtRec{\gtRecVar}{\gtGi}$, then 
         $\forall \mpChanRole{\mpS}{\roleP} \in \dom{\stEnv[\Delta]}: \stEnvApp{\stEnv[\Delta]}{\mpChanRole{\mpS}{\roleP}} = \stQEmptyType$;

         \item if $\gtG =  \gtCommT{\roleP}{\roleQ}{i \in
        I}{\gtLab[i]}{\stS[i]}{\ccstO[i], \crstO[i], \ccstI[i], \crstI[i]}{\gtG[i]}$, then
        \begin{enumerate*}[label={\emph{(a\arabic*)}}]
        \item $\roleQ \notin \operatorname{receivers}(\stEnvApp{\stEnv[\Delta]}{\mpChanRole{\mpS}{\roleP}})$, and 
        \item $\forall i \in I$: $\stEnv[\Delta]$ is associated with $\gtG[i]$ for $\mpS$; 
         \end{enumerate*}
         
         \item if $\gtG = \gtCommTTransit{\roleP}{\roleQ}{i \in
          I}{\gtLab[i]}{\stS[i]}{\ccstO[i], \crstO[i], \ccstI[i], \crstI[i]}{\gtG[i]}{j}$, then 
           \begin{enumerate*}[label={\emph{(b\arabic*)}}]
         \item  $\stEnvApp{\stEnv[\Delta]}{\mpChanRole{\mpS}{\roleP}} = 
          \stQCons{\stQMsg{\roleQ}{\stLab[j]}{\stSi[j]}}{\stQType}$ with $\stSi[j] \stSub \stS[j]$, and
          
          \item  $\stEnvUpd{\stEnv[\Delta]}{\mpChanRole{\mpS}{\roleP}}{\stQType}$ is associated with $\gtG[j]$ for $\mpS$. 
          \end{enumerate*}
        \end{enumerate}
  \item
 \label{item:assoc:end} 
 $\forall \mpChanRole{\mpS}{\roleP} \in \dom{\stEnv[\stEnd]}: 
 \stEnvApp{\stEnv[\stEnd]}{\mpChanRole{\mpS}{\roleP}} = \stMPair{\stCPair{\cVal[\roleP]}{\stEnd}}{\stQEmptyType}$.  
       \end{enumerate}
\end{definition}

The association $\stEnvAssoc{\gtFmt{\cdot}}{\stFmt{\cdot}}{\mpFmt{\cdot}}$ is a
binary relation over timed global types $\gtWithTime{\cVal}{\gtG}$ and typing environments $\stEnv$,
parameterised by  multiparty sessions $\mpS$. 
There are three requirements for the association: 
\begin{enumerate*}[label={(\arabic*)}]
\item the typing environment $\stEnv$ must include two entries for each role of the global type $\gtG$ 
in $\mpS$: one of timed session type and another of queue type; 
\item the timed session type entries in $\stEnv$ %
reflect $\gtWithTime{\cVal}{\gtG}$ by ensuring 
that:  
\begin{enumerate*}
\item they align with the projections of $\gtG$ via subtyping, %
and 
\item their clock valuations match $\cVal$; %
\end{enumerate*}
\item the queue type entries in $\stEnv$ correspond to the transmissions en route in $\gtG$.  
\end{enumerate*}
Note that $\stEnv[\stEnd]$ %
is specifically used to associate
typing environments and $\gtEnd$-types $\gtWithTime{\cVal}{\gtEnd}$, as in this case, both $\stEnv[\gtG]$ and  
$\stEnv[\Delta]$ are empty. 

\begin{example}
\label{ex:assoc}
Consider 
 the timed global type $\gtWithTime{\setenum{C_{\roleFmt{Sen}} = 0, C_{\roleFmt{Sat}} = 0, C_{\roleFmt{Ser}} = 0}}{\gtG[\text{data}]}$, where $\gtG[\text{data}]$ is from~\Cref{ex:proj_example}, and a typing environment 
 $\stEnv[\text{data}] = \stEnv[\gtG_{\gtFmt{\text{data}}}] \stEnvComp \stEnv[\Delta_{\gtFmt{\text{data}}}]$, where:  
 
 \smallskip
 \centerline{\(
 \footnotesize
 \begin{array}{l@{\hskip 0mm}l@{\hskip .6mm}l}
 \stEnv[\gtG_{\gtFmt{\text{data}}}] 
 &=&   \stEnvMap{%
    \mpChanRole{\mpS}{\roleFmt{Sen}}%
  }{
    \stCPair{\setenum{C_{\roleFmt{Sen}} = 0}}{\roleFmt{Sat}\stFmt{\oplus}
   \stLabFmt{Data}\stFmt{\{ \mpFmt{6 \leq C_{\roleFmt{Sen}} \leq 7, C_{\roleFmt{Sen}} := 0}\}}
  }
  } 
  \stEnvComp 
  \\[.5mm]
  &
  &
  \stEnvMap{\mpChanRole{\mpS}{\roleFmt{Sat}}}{
  \stCPair{\setenum{C_{\roleFmt{Sat}} = 0}}
  {\stExtSum{\roleFmt{Sen}}{}{%
        \begin{array}{@{\hskip 0mm}l@{\hskip 0mm}}%
          \stTChoice{\stLabFmt{Data}}{} {\mpFmt{6 \leq C_{\roleFmt{Sat}} \leq 7, \emptyset}} \stSeq%
          \stOut{\roleFmt{Ser}}{%
            \stTChoice{\stLabFmt{Data}}{}{\mpFmt{6 \leq C_{\roleFmt{Sat}} \leq 7, C_{\roleFmt{Sat}} := 0}}
          }{}%
         \\
         \stTChoice{\stLabFmt{fail}}{}{\mpFmt{6 \leq C_{\roleFmt{Sat}} \leq 7, \emptyset}} \stSeq%
          \stOut{\roleFmt{Ser}}{%
            \stTChoice{\stLabFmt{fatal}}{}{\mpFmt{6 \leq C_{\roleFmt{Sat}} \leq 7, C_{\roleFmt{Sat}} := 0}}}{}
        \end{array}
      }} } \stEnvComp 
      \\[.5mm]
      &&
        \stEnvMap{\mpChanRole{\mpS}{\roleFmt{Ser}}}{
  \stCPair{\setenum{C_{\roleFmt{Ser}} = 0}}
  {\stExtSum{\roleFmt{Sat}}{}{%
        \begin{array}{@{\hskip 0mm}l@{\hskip 0mm}}%
          \stTChoice{\stLabFmt{Data}}{} {\mpFmt{6 \leq C_{\roleFmt{Ser}} \leq 7, C_{\roleFmt{Ser}} := 0}}     
          \\
         \stTChoice{\stLabFmt{fatal}}{}{\mpFmt{6 \leq C_{\roleFmt{Ser}} \leq 7, C_{\roleFmt{Ser}} := 0}} 
        \end{array}
      }}}
      \\[.5mm]
    \stEnv[\Delta_{\gtFmt{\text{data}}}] &=& 
     \stEnvMap{\mpChanRole{\mpS}{\roleFmt{Sen}}}{\stQEmptyType} \stEnvComp  
     \stEnvMap{\mpChanRole{\mpS}{\roleFmt{Sat}}}{\stQEmptyType} \stEnvComp 
     \stEnvMap{\mpChanRole{\mpS}{\roleFmt{Ser}}}{\stQEmptyType}
      \end{array}
 \)}
 
 \smallskip
 \noindent
 $\stEnv[\text{data}]$ is associated with $\gtWithTime{\setenum{C_{\roleFmt{Sen}} = 0, C_{\roleFmt{Sat}} = 0, C_{\roleFmt{Ser}} = 0}}{\gtG[\text{data}]}$ for $\mpS$, which can be formally verified by ensuring that 
 $\stEnv[\text{data}]$ satisfies all conditions outlined in~\Cref{def:assoc}.  
 Intuitively, $\stEnv[\gtG_{\gtFmt{\text{data}}}]$ is associated with $\gtWithTime{\setenum{C_{\roleFmt{Sen}} = 0, C_{\roleFmt{Sat}} = 0, C_{\roleFmt{Ser}} = 0}}{\gtG[\text{data}]}$ because:  
\begin{enumerate*}[label={(\arabic*)}]
\item  
 $\roleFmt{Sen}$ sends $\stLabFmt{Data}$, while $\roleFmt{Sat}$ and 
 $\roleFmt{Ser}$ expect to receive additional messages $\stLabFmt{fail}$ and $\stLabFmt{fatal}$, respectively.  Consequently, the entries in $\stEnv[\gtG_{\gtFmt{\text{data}}}]$ adhere to the expected communication behaviour of each role in 
of $\gtG[\text{data}]$,  
though with more 
input messages; 
\item %
the time tracking for each entry in $\stEnv[\gtG_{\gtFmt{\text{data}}}]$ 
matches the clock valuation of $\gtG[\text{data}]$.
\end{enumerate*}
Additionally, $\stEnv[\Delta_{\gtFmt{\text{data}}}]$ only contains  empty queues, aligning with that  
$\gtG[\text{data}]$ is not currently in transit.

\end{example}

We establish the operational correspondence between a timed global type and 
its associated typing environment, our main result for timed multiparty session types, through two theorems: \Cref{lem:comp_proj} 
demonstrates that every possible reduction of a typing environment is mirrored by a corresponding action in reductions of the associated timed global type, while~\Cref{lem:sound_proj} indicates that the reducibility of a timed global type is equivalent to its associated  
environment. 
The proofs of~\Cref{lem:comp_proj,lem:sound_proj} are by inductions on reductions of typing environments and timed global types, respectively, with further details in~\Cref{sec:proof:relating}\iftoggle{full}{}{ of the full version}.

\begin{restatable}[Completeness of Association]{theorem}{lemCompProj}
\label{lem:comp_proj}
Given associated timed global type $\gtWithTime{\cVal}{\gtG}$ and typing environment $\stEnvQ$:
$\stEnvAssoc{\gtWithTime{\cVal}{\gtG}}{\stEnv}{\mpS}$.
If 
 $\stEnvQ \stEnvQTMoveGenAnnotT \stEnvQi$, then there exists $ \gtWithTime{\cVali}{\gtGi}$ 
 such that
 $\gtWithTime{\cVal}{\gtG}
  \gtMove[\stEnvAnnotGenericSym]
  \gtWithTime{\cVali}{\gtGi}$ and $\stEnvAssoc{\gtWithTime{\cVali}{\gtGi}}{\stEnvi}{\mpS}$.
\end{restatable}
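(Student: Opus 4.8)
\textbf{Proof proposal for \Cref{lem:comp_proj} (Completeness of Association).}
The plan is to proceed by induction on the derivation of the typing-environment transition $\stEnvQ \stEnvQTMoveGenAnnotT \stEnvQi$, using the structure of the association $\stEnvAssoc{\gtWithTime{\cVal}{\gtG}}{\stEnvQ}{\mpS}$ given by \Cref{def:assoc}, namely the splitting $\stEnvQ = \stEnv[\gtG] \stEnvComp \stEnv[\Delta] \stEnvComp \stEnv[\stEnd]$. First I would observe that the transition label $\stTEnvAnnotGenericSymT$ is one of three shapes --- a send $\stEnvQQueueAnnotSmall{\roleP}{\roleQ}{\stLab}$, a receive $\stEnvQRecvAnnotSmall{\roleP}{\roleQ}{\stLab}$, or a time step $\timeLab$ --- and that by definition of $\stEnvMoveWithSession[\mpS]$ the subject roles $\roleP, \roleQ$ belong to $\gtRoles{\gtG}$, so the acting entries live in $\stEnv[\gtG] \stEnvComp \stEnv[\Delta]$, not in $\stEnv[\stEnd]$. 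The case analysis then dispatches on which rule of \Cref{fig:aat-mpst-semantics-typing-env} is applied at the root, with an inner induction on $\gtG$ to locate the redex inside $\gtG$ (handled by \inferrule{\iruleGtMoveCtx} and \inferrule{\iruleGtMoveCtxi}) and on the recursion rule \inferrule{\iruleTCtxRec} matched against \inferrule{\iruleGtMoveRec}.

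The core cases are the following. For \inferrule{\iruleTCtxSend}: the environment entry $\stEnvMap{\mpChanRole{\mpS}{\roleP}}{\stMPair{\stCPair{\cVal[\roleP]}{\stIntSum{\roleQ}{i \in I}{\ldots}}}{\stQType[\roleP]}}$ fires. By association clauses (iii) and (iv) of part~1, this internal-choice type is a subtype of $\gtProj{\gtG}{\roleP}$, so $\gtG$ must (after possibly peeling recursion and causally-independent prefixes) be a transmission $\gtCommTSmall{\roleP}{\roleQ}{i \in I'}{\gtLab[i]}{\stS[i]}{\ldots}{\gtG[i]}$ with $I \subseteq I'$ and matching output clock constraints; the side condition $\cVal[\roleP] \models \ccst[k]$ plus clause~(iv) ($\cVal = \sqcup \cVal[\roleP]$, and clock ownership disjointness) gives $\cVal \models \ccstO[k]$, so \inferrule{\iruleGtMoveOut} applies, producing a transmission en route with chosen branch $j = k$. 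I would then check that the updated environment $\stEnvQi$ (with $\stCPair{\cValUpd{\cVal[\roleP]}{\crst[k]}{0}}{\stT[k]}$ and the appended queue message) is associated with the resulting $\gtWithTime{\cValUpd{\cVal}{\crstO[k]}{0}}{\gtCommTTransit{\ldots}{k}}$: clauses (b1)--(b2) of part~2 are exactly the appended queue message and the recursive association of the continuation, and subtyping is preserved by \inferrule{\iruleStSubSort} / the projection-of-transmission-en-route definition. The dual case \inferrule{\iruleTCtxRcv} is symmetric, matched against \inferrule{\iruleGtMoveIn}: here the queue-type entry supplies $\stQCons{\stQMsg{\roleQ}{\stLab[k]}{\stS[k]}}{\ldots}$, which by clause (b1) forces $\gtG$ to be a transmission en route $\gtCommTTransit{\roleP}{\roleQ}{\ldots}{j}$ with $j = k$, and the payload-subtyping side condition $\stS[k] \stSub \stSi[k]$ dovetails with the $\stSi[j] \stSub \stS[j]$ stored in the association. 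For \inferrule{\iruleTCtxTimeSession}, \inferrule{\iruleTCtxTimeQ}, \inferrule{\iruleTCtxTimeCombined} and \inferrule{\iruleTCtxTime}: these only increment every clock valuation uniformly, so the matching move is \inferrule{\iruleGtMoveTime}, and association is preserved because clause~(iv) is stable under adding the same $\timeLab$ to all $\cVal[\roleP]$ (the overriding union commutes with uniform shift) and queue-type entries are time-insensitive by design. The congruence closure rule \inferrule{\iruleTCtxStruct} is handled by noting that $\stQEquiv$ on environments only reorders queued messages with distinct receivers, which is precisely the freedom already built into clause (a1)/the queue structure, so it does not affect the associated $\gtG$; \inferrule{\iruleTCtxCongX}/\inferrule{\iruleTCtxCongCombined} are absorbed into the inner induction on the shape of $\stEnvQ$.

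The main obstacle I anticipate is the interplay between subtyping and the causal-independence rules \inferrule{\iruleGtMoveCtx}/\inferrule{\iruleGtMoveCtxi}: when the firing entry's type is a strict subtype of a projection, the redex in $\gtG$ may be ``buried'' under a transmission between other roles, and I must argue that the global type can perform the action there by \inferrule{\iruleGtMoveCtx} while the residual association still holds for \emph{every} branch $i \in I$ of that outer transmission --- this requires that the entry being reduced does not appear as sender or receiver of the outer transmission (guaranteed since $\roleP,\roleQ \notin \ltsSubject{\stTEnvAnnotGenericSymT}$ for that outer pair) and that the merge operator in the projection of the non-involved roles commutes appropriately with the reduction. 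Carefully threading the clock-ownership disjointness hypothesis (so that resets and guards of the buried action never touch clocks read by the outer transmission) is what makes this step go through, and it is the place where the timed setting genuinely complicates the classical untimed argument of \cite{ICALP13CFSM}.
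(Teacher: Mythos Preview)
Your approach is essentially the same as the paper's: induction on the environment-transition derivation, with the send/receive cases matched against \inferrule{\iruleGtMoveOut}/\inferrule{\iruleGtMoveIn} via inversion of projection (the paper's \Cref{lem:inv-proj}), the time case against \inferrule{\iruleGtMoveTime}, recursion by inductive hypothesis, and the buried-redex cases handled by \inferrule{\iruleGtMoveCtx}/\inferrule{\iruleGtMoveCtxi} (the paper defers the details of this last step to \cite{BHYZ2023}). One small correction: association gives $\gtProj{\gtG}{\roleP} \stSub \stT[\roleP]$, so the projection is the \emph{subtype} of the environment's internal-choice type, not the other way round --- your conclusion $I \subseteq I'$ is still right, because for internal choice the subtype carries \emph{more} branches.
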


\vspace{-2ex}
\begin{restatable}[Soundness of Association]{theorem}{lemSoundProj}
\label{lem:sound_proj}
Given associated timed global type $\gtWithTime{\cVal}{\gtG}$ 
and typing environment $\stEnvQ$:
$\stEnvAssoc{\gtWithTime{\cVal}{\gtG}}{\stEnv}{\mpS}$.
If 
 $\gtWithTime{\cVal}{\gtG} \gtMove$, 
 then there exists $\stEnvAnnotGenericSymi$, $\cVali$,  
  $\gtWithTime{\cValii}{\gtGii}$, 
  $\stEnvi$, and $\stEnvii$, such that 
  $\stEnvAssoc{\gtWithTime{\cVali}{\gtG}}{\stEnvi}{\mpS}$, 
  $\gtWithTime{\cVali}{\gtG}
  \gtMove[\stEnvAnnotGenericSymi]
  \gtWithTime{\cValii}{\gtGii}$, 
  $\stEnvQi 
 \stEnvQTMoveAnnot{\stEnvAnnotGenericSymi} \stEnvQii$, and 
  $\stEnvAssoc{\gtWithTime{\cValii}{\gtGii}}{\stEnvii}{\mpS}$.
\end{restatable}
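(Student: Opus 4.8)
The plan is to proceed by induction on the structure of $\gtG$, with recursive types handled by unfolding and the induction justified by guardedness (only finitely many $\gtRec$-prefixes can precede a transmission or $\gtEnd$), doing a case analysis on the top-level form of $\gtG$ and \emph{exhibiting} a concrete matched pair of transitions in each case rather than trying to mirror the (unspecified) move promised by the hypothesis $\gtWithTime{\cVal}{\gtG} \gtMove$. The key enabler is that $\cVali$ and $\stEnvi$ are freely existentially quantified: an associated typing environment can always be \emph{re-clocked} — replacing every recorded per-role valuation $\cVal[\roleP]$ by a single common valuation $\cVali$ — without disturbing association, since \Cref{def:assoc} constrains valuations only through $\cVal = \sqcup_\roleP \cVal[\roleP]$ (trivially satisfied when all the $\cVal[\roleP]$ coincide) and leaves queue types and $\stEnd$-entries valuation-free. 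A second uniform tool is that rule \inferrule{\iruleGtMoveTime} is always applicable (for any $\timeLab \in \rn_{\ge 0}$) and time passing preserves association on both sides — queue types are time-invariant and $\sqcup$ commutes with uniform increment — so a time step is always available as a fallback matched move.

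With these in hand, the two easy cases are immediate. If $\gtG = \gtEnd$ (or unfolds to it), take $\stEnvAnnotGenericSymi = \timeLab$, $\cVali = \cVal$, $\stEnvi = \stEnv$, and build the matched environment transition from \inferrule{\iruleTCtxTimeSession}, \inferrule{\iruleTCtxTimeQ}, \inferrule{\iruleTCtxTimeCombined}, \inferrule{\iruleTCtxTime} and \inferrule{\iruleTCtxStruct}; association is preserved because the $\stEnd$-entries of \Cref{def:assoc} carry no valuation constraint. If $\gtG = \gtRec{\gtRecVar}{\gtGi}$, I would first establish an auxiliary lemma that association is stable under unfolding — $\stEnvAssoc{\gtWithTime{\cVal}{\gtRec{\gtRecVar}{\gtGi}}}{\stEnv}{\mpS}$ implies $\stEnvAssoc{\gtWithTime{\cVal}{\gtGi\subst{\gtRecVar}{\gtRec{\gtRecVar}{\gtGi}}}}{\stEnv}{\mpS}$, using \inferrule{\iruleStSubRecL}, commutation of projection with substitution, and the fact that the queue part of an associated environment for a $\gtRec$-type is everywhere empty — and then apply the induction hypothesis to the unfolding.

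The heart of the argument is the transmission case $\gtG = \gtCommT{\roleP}{\roleQ}{i \in I}{\gtLab[i]}{\stS[i]}{\tcFmt{\mathcal{A}_i}}{\gtG[i]}$ (writing $\tcFmt{\mathcal{A}_i} = \ccstO[i], \crstO[i], \ccstI[i], \crstI[i]$ as in \Cref{fig:gtype_rules}) and, dually, the transmission-en-route case. By \Cref{def:assoc}, $\roleP$'s environment entry has type $\stT[\roleP] \stSup \gtProj{\gtG}{\roleP}$, which by \inferrule{\iruleStSubOut} is itself a non-empty internal choice towards $\roleQ$ with branch set $I' \subseteq I$ and with guards and resets identical to those of the projection; I would pick $k \in I'$ whose output guard $\ccstO[k]$ is satisfiable (if no branch of $\stT[\roleP]$ has a satisfiable guard, default to the time step), re-clock to a $\cVali \models \ccstO[k]$, and fire \inferrule{\iruleGtMoveOut} and \inferrule{\iruleTCtxSend} simultaneously on branch $k$ with label $\stEnvAnnotGenericSymi = \stEnvQQueueAnnotSmall{\roleP}{\roleQ}{\gtLab[k]}$. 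This yields $\gtGii = \gtCommTTransit{\roleP}{\roleQ}{i \in I}{\gtLab[i]}{\stS[i]}{\tcFmt{\mathcal{A}_i}}{\gtG[i]}{k}$ with reset $\crstO[k]$, and $\stEnvii$ in which $\roleP$'s queue is extended by $\stQMsg{\roleQ}{\stLab[k]}{}$ (carrying the sender's payload) and $\roleP$'s type is advanced to its continuation; re-establishing $\stEnvAssoc{\gtWithTime{\cValii}{\gtGii}}{\stEnvii}{\mpS}$ uses that the en-route projection agrees with the transmission projection on every role $\neq \roleP$ (both are the merge over all of $I$) and equals $\gtProj{\gtG[k]}{\roleP}$ on $\roleP$, that the appended queue message meets the payload-subtyping side condition of the queue clause of \Cref{def:assoc}, and that the queue sub-environment $\stEnv[\Delta]$ was already associated with $\gtG[k]$. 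The en-route case $\gtG = \gtCommTTransit{\roleP}{\roleQ}{i \in I}{\gtLab[i]}{\stS[i]}{\tcFmt{\mathcal{A}_i}}{\gtG[i]}{j}$ is dual: the pending message of label $\gtLab[j]$ is available in $\roleP$'s queue by \Cref{def:assoc}, and $\roleQ$'s environment type is an external choice that still contains branch $j$ (by \inferrule{\iruleStSubIn}, which only adds branches); re-clock to a $\cVali \models \ccstI[j]$ (time fallback if $\ccstI[j]$ is unsatisfiable) and fire \inferrule{\iruleGtMoveIn} with \inferrule{\iruleTCtxRcv}, discharging the latter's payload side condition by transitivity of $\stSub$, then re-establish association on $\gtG[j]$ with the shortened queue.

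The main obstacle is not any single case but the pervasive bookkeeping of re-establishing the Association invariant (\Cref{def:assoc}) after each constructed step: one must keep the $\sqcup$-relation between the global valuation and the per-role valuations intact across clock resets — which crucially relies on the well-formedness condition inherited from~\cite{DBLP:conf/concur/BocchiYY14}, that a role's guards and resets mention only its own clocks and that distinct roles own disjoint clock sets, so resetting $\roleP$'s clocks leaves every other role's valuation untouched — thread subtyping through both the payloads (contravariantly on output, covariantly on input) and the continuations, which rests on the coinductive definition of $\stSub$, and check that an output step transforms a transmission into exactly the en-route shape prescribed by \Cref{def:assoc} (and dually for input). A secondary subtlety is the recursion case, whose unfolding lemma needs its own guardedness-terminating induction. (Because the time step of \inferrule{\iruleGtMoveTime} with $\timeLab = 0$ is always available and association is time-stable, the literal statement is in fact easy to satisfy; the content that the subsequent development relies on is precisely the communication cases above.)
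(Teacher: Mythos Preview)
Your proposal is correct and shares the paper's core mechanics—re-clocking to a fresh $\cVali$ satisfying the relevant guard, picking a branch $k$ from the subtyped index set rather than the original $j$, and firing \inferrule{\iruleGtMoveOut}/\inferrule{\iruleTCtxSend} (resp.\ \inferrule{\iruleGtMoveIn}/\inferrule{\iruleTCtxRcv}) in lockstep—but organises the induction differently. The paper proceeds by induction on the derivation of the assumed transition $\gtWithTime{\cVal}{\gtG} \gtMove[\stEnvAnnotGenericSym] \gtWithTime{\cValiii}{\gtGiii}$, which obliges it to handle the context rules \inferrule{\iruleGtMoveCtx} and \inferrule{\iruleGtMoveCtxi} as separate cases (deferred to the analogous argument in~\cite{ICALP13CFSM}); your structural induction on $\gtG$ sidesteps these entirely by always exhibiting a top-level action, which is legitimate precisely because the statement does not require matching the \emph{given} transition. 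Your explicit time-step fallback when the chosen guard is unsatisfiable covers a detail the paper's proof leaves implicit (it simply posits a satisfying $t$ for whichever $k \in I_{\roleP}$ it picked). Finally, your closing parenthetical—that the literal statement is nearly vacuous because $\timeLab = 0$ is always available and association is time-stable—is sharp and correct; as you say, the substantive content that the downstream development depends on lives in the communication cases, and there the two arguments coincide.
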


\vspace{-1ex}
\begin{remark}
\label{rem:weak_soundness}
We formulate a soundness theorem that does not mirror the completeness theorem, differing from  
prior work such as~\cite{ICALP13CFSM}. 
This choice stems from our reliance on subtyping~(\Cref{def:main_subtyping}), 
notably \inferrule{\iruleStSubOut}.  In our framework, a timed local type in the typing environment might offer fewer selection branches 
compared to the corresponding projected timed local type. Consequently, certain sending actions with their associated clock valuations may remain uninhabited within the timed global type. Consider, \eg a timed global type: 
 
 \smallskip 
 \centerline{\(
 \footnotesize 
\gtWithTime{\cVal[\text{r}]}{\gtG[\text{r}]} = \gtWithTime{\setenum{C_{\roleP} = 3, C_{\roleQ} = 3}}{\gtCommRaw{\roleP}{\roleQ}{
 \begin{array}{@{\hskip 0mm}l@{\hskip 0mm}}
    \gtCommTChoice{\gtLab[1]}{}{\mpFmt{0 \leq C_{\roleP} \leq 1, \emptyset, 1 \leq C_{\roleQ} \leq 2, \emptyset}}{\gtEnd}
    \\
    \gtCommTChoice{\gtLab[2]}{}{\mpFmt{2 \leq C_{\roleP} \leq 4, \emptyset, 5 \leq C_{\roleQ} \leq 6, \emptyset}}{\gtEnd}
    \end{array}
    }}
    \)}

\smallskip
\noindent
An associated typing environment  $\stEnv[\text{r}]$ may have: 

\centerline{\(
\footnotesize
\stEnvApp{\stEnv[\text{r}]}{\mpChanRole{\mpS}{\roleP}} = 
\stMPair{\stQEmptyType}{\stCPair{\setenum{C_{\roleP} = 3}}{\roleQ\stFmt{\oplus}
   \stLab[1] \stFmt{\{ \mpFmt{0 \leq C_{\roleP} \leq 1, \emptyset}\}} \stSeq \stEnd}} 
   \stSup
 \stMPair{\stQEmptyType}{\stCPair{\setenum{C_{\roleP} = 3}}{\roleQ\stFmt{\oplus}
 \left\{
 \begin{array}{@{\hskip 0mm}l@{\hskip 0mm}}
   \stLab[1] \stFmt{\{ \mpFmt{0 \leq C_{\roleP} \leq 1, \emptyset}\}} \stSeq \stEnd
   \\
   \stLab[2] \stFmt{\{ \mpFmt{2 \leq C_{\roleP} \leq 4, \emptyset}\}} \stSeq \stEnd
   \end{array}
   \right\}
   }}  
   \)}
   
   \smallskip
   \noindent
While the timed global type $\gtWithTime{\cVal[\text{r}]}{\gtG[\text{r}]}$ might transition through 
 $\mpS\stFmt{:}\roleP\stFmt{!}\roleQ\stFmt{:}\gtLab[2]$, 
 the associated environment $\stEnv[\text{r}]$ cannot. 
Nevertheless, our soundness theorem \emph{adequately} guarantees communication safety~(communication matches) via association.
\end{remark}

\subsection{Affine Timed Multiparty Session Typing System}
\label{sec:aat-mpst-typing-system}
\label{SEC:AAT-MPST-TYPING-SYSTEM}
We now present a typing system for \ATMP, which relies on 
 \emph{typing judgments} 
 of the form: 
 
 \smallskip
 \centerline{\(
 \stJudge{\mpEnv}{\stEnvQ}{\mpP} \quad \text{(with $\mpEnv$ omitted when empty)}
 \)}

 \noindent
This judgement indicates that the process $\mpP$ adheres to the usage of its variables and channels 
as specified in $\stEnv$~(\Cref{def:aat-mpst-typing-env-syntax}), guided by the process types in $\mpEnv$~(\Cref{def:aat-mpst-typing-env-syntax}). 
Our typing system is defined inductively by the typing rules shown in~\Cref{fig:aat-mpst-rules-main}, 
with channels annotated for convenience, especially those bound by process definitions and restrictions. 

\begin{figure}[!t]
\centerline{\(
\small
  \begin{array}{c}
    \inference[\iruleMPX]{%
      \mpEnvApp{\mpEnv}{X} = \stFmt{\stCPair{\cVal[1]}{\stT[1]},\ldots,\stCPair{\cVal[n]}{\stT[n]}}%
    }{%
      \mpEnvEntails{\mpEnv}{X}{\stCPair{\cVal[1]}{\stT[1]},\ldots,\stCPair{\cVal[n]}{\stT[n]}}%
    }%
    \qquad%
    \inference[\iruleMPSub]{
      \stCPair{\cVal}{\stT} \stSub \stCPair{\cVali}{\stTi}%
    }{%
      \stEnvEntails{\stEnvMap{\mpC}{ \stCPair{\cVal}{\stT}}}{\mpC}{ \stCPair{\cVali}{\stTi}}%
    }%
    \\[1mm]
     \inference[\iruleMPEnd]{
      \forall i \in 1..n%
      &%
      \stEnvEntails{\stEnvMap{\mpC[i]}{\stCPair{\cVal[i]}{\stT[i]}}}{%
        \mpC[i]%
      }{%
        \stCPair{\cVali[i]}{\stEnd}%
      }%
    }{%
      \stEnvEndP{%
        \stEnvMap{\mpC[1]}{\stCPair{\cVal[1]}{\stT[1]}}%
        \stEnvComp \ldots \stEnvComp%
        \stEnvMap{\mpC[n]}{\stCPair{\cVal[n]}{\stT[n]}}%
      }%
    }%
    \qquad 
    \inference[\iruleMPNil]{%
      \stEnvEndP{\stEnv}%
    }{%
      \stJudge{\mpEnv}{\stEnv}{\mpNil}%
    }%
    \\[1mm]%
    \inference[\iruleMPDef]{
    \begin{array}{c}
           \stJudge{%
          \mpEnv \mpEnvComp%
          \mpEnvMap{\mpX}{\stCPair{\cVal[1]}{\stT[1]}, \ldots, \stCPair{\cVal[n]}{\stT[n]}}%
        }{%
          \stEnvMap{x_1}{\stCPair{\cVal[1]}{\stT[1]}}%
          \stEnvComp \ldots \stEnvComp%
          \stEnvMap{x_n}{\stCPair{\cVal[n]}{\stT[n]}}%
        }{%
          \mpP%
        }%
        \\%
        \stJudge{%
          \mpEnv \mpEnvComp%
          \mpEnvMap{\mpX}{\stCPair{\cVal[1]}{\stT[1]}, \ldots, \stCPair{\cVal[n]}{\stT[n]}}%
        }{%
          \stEnv%
        }{}{%
          \mpQ%
        }%
     \end{array}
    }{%
      \stJudge{\mpEnv}{%
        \stEnv%
      }{%
        \mpDef{\mpX}{%
          \stEnvMap{x_1}{\stCPair{\cVal[1]}{\stT[1]}},%
          \ldots,%
          \stEnvMap{x_n}{\stCPair{\cVal[n]}{\stT[n]}}%
        }{\mpP}{\mpQ}%
      }%
    }%
    \\[1mm]%
    \inference[\iruleMPCall]{%
        \mpEnvEntails{\mpEnv}{X}{%
          \stCPair{\cVal[1]}{\stT[1]},\ldots, \stCPair{\cVal[n]}{\stT[n]}%
        }%
        &%
        \stEnvEndP{\stEnv[0]}%
        &%
        \forall i \in 1..n%
        &%
        \stEnvEntails{\stEnv[i]}{\mpC[i]}{\stCPair{\cVal[i]}{\stT[i]}}%
    }{%
      \stJudge{\mpEnv}{%
        \stEnv[0] \stEnvComp%
        \stEnv[1] \stEnvComp \ldots \stEnvComp \stEnv[n]%
      }{%
        \mpCall{\mpX}{\mpC[1],\ldots,\mpC[n]}%
      }%
    }%
    \\[1mm]%
        \inference[\iruleMPClock]{%
   \highlight{ \forall \cUnit \text{ s.t.} \models \ccstSubt{\ccst}{\cUnit}{C}:\stJudge{\mpEnv}{%
        \stEnvQ }{\delay{\cUnit}{\mpP}}}}
    {%
      \highlight{\stJudge{\mpEnv}{%
        \stEnvQ }{\delay{\ccst}{\mpP}} }
      }%
    \qquad
      \inference[\iruleMPTime]{%
     \highlight{ \stJudge{\mpEnv}{%
        \stEnvQ \,\tcFmt{+}\, \cUnit%
      }{%
        \mpP%
      }}%
      }{\highlight{%
     \stJudge{\mpEnv}{%
        \stEnvQ }{\delay{\cUnit}{\mpP}} }
    }%
    \\[1.5mm]%
    \inference[\iruleMPBranch]{
    \begin{array}{c}
   \highlight{ \forall i \!\in\! I \,\,\,
    \forall \cUnit:
    \cUnit \leq \mathfrak{n} \Longrightarrow \cVal + \cUnit \models \ccst[i] }
    \\[0.5mm]
        \stEnvEntails{\stEnv[1]}{\mpC}{%
         \stCPair{\cVal}{ \stExtSum{\roleQ}{i \in I}{\stTChoice{\stLab[i]}{\stS[i]}{\ccst[i], \crst[i]} \stSeq \stT[i]}}%
        }%
        \quad
      \highlight{\forall i \!\in\! I: \stS[i] = \stCPair{\ccsti[i]}{\stTi[i]}}
      \quad
       \highlight{ \cVali[i] \models \ccsti[i]}
        \\[0.5mm]
       \highlight{ \forall i \!\in\! I \,\,
       \forall \cUnit \leq \mathfrak{n}:
       \stJudge{\mpEnv}{%
          \stEnvQ \,\tcFmt{+}\, \cUnit \stEnvComp%
          \stEnvMap{y_i}{\stCPair{\cVali[i]}{\stTi[i]}} \stEnvComp%
          \stEnvMap{\mpC}{\stCPair{\cValUpd{\cVal + \cUnit}{\crst[i]}{0}}{\stT[i]}}%
        }{%
          \mpP[i]%
        }}
        \end{array}
    }{%
      \highlight{\stJudge{\mpEnv}{%
        \stEnv \stEnvComp \stEnv[1]%
      }{%
        \mpTBranch{\mpC}{\roleQ}{i \in I}{\mpLab[i]}{y_i}{\mpP[i]}{\mathfrak{n}}{}%
      }}%
    }%
    \\[1.5mm]%
    \inference[\iruleMPSel]{%
      \begin{array}{c}
    \highlight{\forall \cUnit:   \cUnit \leq \mathfrak{n} \Longrightarrow \cVal + \cUnit \models \ccst }
      \\[0.5mm]
    \stEnvEntails{\stEnv[1]}{\mpC}{%
       \stCPair{\cVal}{ \stIntSum{\roleQ}{}{\stTChoice{\stLab}{\stS}{\ccst, \crst} \stSeq \stT}}}%
      \quad
    \highlight{\stS = \stDelegate{\ccsti}{\stTi}}
      \quad
\highlight{\stEnvEntails{\stEnv[2]}{\mpD}{\stCPair{\cVali}{\stTi}}}%
     \quad
     \highlight{\cVali \models \ccsti}
    \\[0.5mm]
    \highlight{\forall \cUnit \leq \mathfrak{n}:  \stJudge{\mpEnv}{%
        \stEnvQ \,\tcFmt{+}\, \cUnit \stEnvComp \stEnvMap{\mpC}{\stCPair{\cValUpd{\cVal + \cUnit}{\crst}{0}}{\stT}}%
      }{%
        \mpP%
      }
      }%
      \end{array}
    }{%
      \highlight{\stJudge{\mpEnv}{%
        \stEnv \stEnvComp \stEnv[1] \stEnvComp \stEnv[2]%
      }{%
        \mpTSel{\mpC}{\roleQ}{\mpLab}{\mpD}{\mpP}{\mathfrak{n}}%
      }}%
    }%
    \\[1mm]%
      \inference[\iruleMPPar]{%
      \stJudge{\mpEnv}{%
        \stEnvQ[1]%
      }{%
        \mpP[1]%
      }%
      &%
      \stJudge{\mpEnv}{%
        \stEnvQ[2]%
      }{%
        \mpP[2]%
      }%
    }{%
      \stJudge{\mpEnv}{%
        \stEnvQ[1] \stEnvQComp \stEnvQ[2]%
      }{%
        \mpP[1] \mpPar \mpP[2]%
      }%
    }%
    \qquad
    \inference[\iruleMPKill]{%
     \stEnvEndP{\stEnv}
      &%
     n \geq 0
      }%
    {%
      \stJudge{\mpEnv}{%
        \stEnvQ \stEnvQComp \stEnvMap{%
          \mpChanRole{\mpS}{\roleP[1]}%
        }{%
           \stSpecf[1]
        }
        \stEnvQComp
        \ldots
         \stEnvQComp \stEnvMap{%
          \mpChanRole{\mpS}{\roleP[n]}%
        }{%
        \stSpecf[n]
        }%
      }{%
       \kills{\mpS}%
      }%
    }%
    \\[1mm]
      \inference[\iruleMPTry]{%
      \stJudge{\mpEnv}{%
        \stEnvQ%
      }{%
        \mpP%
      }%
      &%
     \highlight{ \procSubject{\mpP} = \setenum{\mpC}}
      &
      \stJudge{\mpEnv}{%
        \stEnvQ%
      }{%
        \mpQ%
      }%
    }{%
      \stJudge{\mpEnv}{%
        \stEnvQ%
      }{%
         \trycatch{\mpP}{\mpQ}
      }%
    }%
   \\[1mm]
    \inference[\iruleMPCancel]{%
      \stJudge{\mpEnv}{%
        \stEnvQ%
      }{%
        \mpQ%
      }%
      }{%
      \stJudge{\mpEnv}{%
        \stEnvQ%
         \stEnvQComp
          \stEnvMap{%
         \mpChanRole{\mpS}{\roleP}
        }{%
           \stSpecf
        }
      }{%
         \mpCancel{\mpChanRole{\mpS}{\roleP}}{\mpQ}
      }%
    }%
    \qquad
 \inference[\iruleMPFailed]
 {}%
    {\highlight{\stJudge{\mpEnv}{%
        \stEnvQ }{\mpFailedP{\mpP}}}}
\\[1mm]
     \inference[\iruleMPResPropG]{%
    \highlight{ \stEnvAssoc{\gtWithTime{\cVal}{\gtG}}{\stEnvi}{\mpS}}
       &%
      \mpS \not\in \stEnvQ%
      &%
      \stJudge{\mpEnv}{%
        \stEnvQ \stEnvQComp \stEnvi%
      }{%
        \mpP%
      }%
    }{%
      \stJudge{\mpEnv}{%
        \stEnv%
      }{%
        \mpRes{\stEnvMap{\mpS}{\stEnvi}}\mpP%
      }%
    }%
 \\[1mm]%
        \inference[\iruleMPQueueEmpty]
        {\highlight{\stEnvEndP{\stEnvQ}}}{%
     \highlight{ \stJudge{\mpEnv}{\stEnvQ \stEnvQComp %
        \stEnvMap{%
          \mpChanRole{\mpS}{\roleP}%
        }{%
          \stQEmptyType%
        }%
      }{%
        \mpSessionQueueO{\mpS}{\roleP}{\mpQueueEmpty}%
      }}%
    }%
    \qquad 
    \inference[\iruleMPQueue]{%
     \highlight{ \stJudge{\mpEnv}{\stEnvQ}{%
        \mpSessionQueueO{\mpS}{\roleP}{%
          \mpQueue%
        }%
      }}%
      &%
    \highlight{  \stS = \stCPair{\ccst}{\stT}}
      &
     \highlight{ \cVal \models \ccst}
      &
     \highlight{ \stEnvEntails{\stEnvi}{\mpChanRole{\mpSi}{\roleR}}{\stCPair{\cVal}{\stT}}}%
    }{%
     \highlight{ \stJudge{\mpEnv}{%
        \stEnvUpd{\stEnvQ}{\mpChanRole{\mpS}{\roleP}}{%
          \stQCons{\stQMsg{\roleQ}{\stLab}{\stS}}{\stEnvApp{\stEnvQ}{\mpChanRole{\mpS}{\roleP}}}%
        }
        \stEnvQComp%
        \stEnvi%
      }{%
        \mpSessionQueueO{\mpS}{\roleP}{%
          \mpQueueCons{%
            \mpQueueOElem{\roleQ}{\mpLab}{\mpChanRole{\mpSi}{\roleR}}%
          }{%
            \mpQueue%
          }%
        }
      }%
    }}%
\end{array}
\)}
\vspace{-.5em}
  \caption[\ATMP typing rules]{%
   \ATMP typing rules. %
  }
  \label{fig:aat-mpst-rules-main}%
\vspace{-1em}
\end{figure}

The innovations~($\highlight{\text{highlighted}}$)~in~\Cref{fig:aat-mpst-rules-main} primarily 
focus on typing processes with time, timeout failures, message queues, and using association~(\Cref{def:assoc}) to enforce session restrictions.

\vspace{.2ex}
\noindent
\textbf{\emph{Standard}} from~\cite{DBLP:journals/pacmpl/ScalasY19}:  %
Rule \inferrule{\iruleMPX} retrieves process variables.
Rule \inferrule{\iruleMPSub} applies subtyping within a singleton typing environment $\stEnvMap{\mpC}{\stCPair{\cVal}{\stT}}$. 
Rule \inferrule{\iruleMPEnd} introduces a predicate $\stEnvEndP{\cdot}$ for typing environments, 
signifying  the termination of all endpoints. This predicate is used in \inferrule{\iruleMPNil} to type an inactive process $\mpNil$. Rules \inferrule{\iruleMPDef} and
\inferrule{\iruleMPCall} deal with recursive processes
declarations and calls, respectively. 
Rule \inferrule{\iruleMPPar} partitions the typing environment into two, 
each dedicated to typing one sub-process.

\vspace{.2ex}
\noindent
{\bf \emph{Session Restriction}}: Rule \inferrule{\iruleMPResPropG} depends on a typing environment associated with a timed global type in a given session $\mpS$ to validate session restrictions. 

\vspace{.2ex}
\noindent
{\bf \emph{Delay}}: Rule~\inferrule{\iruleMPClock} ensures the typedness of time-consuming delay 
$\delay{\ccst}{\mpP}$ by checking every deterministic delay $\delay{t}{\mpP}$ with $t$ as a possible solution to 
$\ccst$.  Rule \inferrule{\iruleMPTime} types a deterministic delay $\delay{t}{\mpP}$ by adjusting the clock valuations in the environment used to type $\mpP$. Here, $\stEnv +t$ denotes the typing environment obtained from $\stEnv$ by increasing the associated clock valuation in each entry by $t$.

\vspace{.2ex}
\noindent
{\bf \emph{Timed Branching and Selection}}: Rules~\inferrule{\iruleMPBranch} and \inferrule{\iruleMPSel} are for timed branching and selection, respectively. We %
 elaborate on \inferrule{\iruleMPBranch}, as  \inferrule{\iruleMPSel} is its dual. 
The first premise in \inferrule{\iruleMPBranch}  
specifies a time interval $[\cVal, \cVal + \mathfrak{n}]$ within which the message must be received, 
in accordance with each $\ccst[i]$.
The last premise requires that each continuation process be well-typed against the continuation of the type in all possible typing environments
where the time falls between $[\cVal, \cVal + \mathfrak{n}]$.
Here, the clock valuation $\cVal$ is reset based on each $\crst[i]$.
The remaining premises stipulate that the clock valuation $\cVali[i]$ of each delegated receiving session must satisfy $\ccsti[i]$, and that $\mpC$ is typed. 

\vspace{.2ex}
\noindent
{\bf \emph{Try-Catch, Cancellation, and Kill}}: 
Rules \inferrule{\iruleMPTry}, \inferrule{\iruleMPCancel}, and \inferrule{\iruleMPKill} pertain to try-catch, cancellation, and kill processes, respectively, analogous to the corresponding rules in~\cite{lagaillardie2022Affine}.
\inferrule{\iruleMPCancel} is responsible for generating a kill process at its declared session.
\inferrule{\iruleMPKill} types a kill process arising during reductions: it involves broadcasting the cancellation of 
$\mpChanRole{\mpS}{\roleP}$ to all processes that belong to $\mpS$. 
\inferrule{\iruleMPTry} handles a $\mpFmt{\ensuremath{\proclit{try-catch}}}$ process $\trycatch{\mpP}{\mpQ}$ 
by ensuring that 
the $\mpFmt{\ensuremath{\proclit{try}}}$ process $\mpP$ and the $\mpFmt{\ensuremath{\proclit{catch}}}$ process $\mpQ$  
maintain consistent session typing. Additionally, $\mpP$ cannot be a queue  or parallel composition, as  indicated by
$\procSubject{\mpP} = \setenum{\mpC}$.

\vspace{.2ex}
\noindent
{\bf \emph{Timeout Failure}}: Rule \inferrule{\iruleMPFailed} indicates that a process raising timeout failure
can be typed by \emph{any} typing environment. 

\vspace{.2ex}
\noindent
{\bf \emph{Queue}}: 
Rules \inferrule{\iruleMPQueueEmpty} and \inferrule{\iruleMPQueue} concern the typing of queues. \inferrule{\iruleMPQueueEmpty} types an empty queue  
under an ended typing environment, while \inferrule{\iruleMPQueue} types a non-empty queue by inserting a message type into $\stEnvQ$. This insertion may either prepend the message to an existing queue type in $\stEnvQ$ or add a queue-typed entry to $\stEnvQ$ if not  present.

\vspace{-.3ex}

\begin{example}
\label{ex:typing_system}
Take the typing environment  $\stEnv[\text{data}]$ from~\Cref{ex:assoc}, along with the processes 
$\mpQ[\roleFmt{Sen}]$, $\mpQ[\roleFmt{Sat}]$, $\mpQ[\roleFmt{Ser}]$ from~\Cref{ex:calculus_reductions}. 
Verifying the typing of $\mpQ[\roleFmt{Sen}] \mpPar \mpQ[\roleFmt{Sat}] \mpPar \mpQ[\roleFmt{Ser}]$ by $\stEnv[\text{data}]$ is easy. 
 Moreover, 
 since $\stEnv[\text{data}]$ is associated with a timed global type 
 $\gtWithTime{\setenum{C_{\roleFmt{Sen}} = 0, C_{\roleFmt{Sat}} = 0, C_{\roleFmt{Ser}} = 0}}{\gtG[\text{data}]}$ for session $\mpS$~(as demonstrated in~\Cref{ex:assoc}), 
 \ie $\stEnvAssoc{\gtWithTime{\setenum{C_{\roleFmt{Sen}} = 0, C_{\roleFmt{Sat}} = 0, C_{\roleFmt{Ser}} = 0}}{\gtG[\text{data}]}}{\stEnv[\text{data}]}{\mpS}$, 
following \inferrule{\iruleMPResPropG}, 
$\mpQ[\roleFmt{Sen}] \mpPar \mpQ[\roleFmt{Sat}] \mpPar \mpQ[\roleFmt{Ser}]$ is closed under 
$\stEnv[\text{data}]$, \ie $\stJudge{}{}{\mpRes{\stEnvMap{\mpS}{\stEnv[\text{data}]}}{\mpQ[\roleFmt{Sen}] \mpPar \mpQ[\roleFmt{Sat}] \mpPar \mpQ[\roleFmt{Ser}]}}$.  

Note that the typing environment $\stEnvi[\text{data}] = \stEnvi[\gtG_{\gtFmt{\text{data}}}] \stEnvComp \stEnvi[\Delta_{\gtFmt{\text{data}}}]$, where $\stEnvi[\Delta_{\gtFmt{\text{data}}}] = \stEnv[\Delta_{\gtFmt{\text{data}}}]$~(\Cref{ex:assoc}), and $\stEnvi[\gtG_{\gtFmt{\text{data}}}]$ is obtained from  projecting 
$\gtG[\text{data}]$, \ie 
$\stEnvi[\Delta_{\gtFmt{\text{data}}}] = \stEnvMap{\mpChanRole{\mpS}{\roleFmt{Sen}}}{\stCPair{\setenum{C_{\roleFmt{Sen}} = 0}}{\gtProj{\gtG[\text{data}]}{\roleFmt{Sen}}}}
\stEnvComp$ 
\\
$\stEnvMap{\mpChanRole{\mpS}{\roleFmt{Sat}}}{\stCPair{\setenum{C_{\roleFmt{Sat}} = 0}}{\gtProj{\gtG[\text{data}]}{\roleFmt{Sat}}}}
\stEnvComp \stEnvMap{\mpChanRole{\mpS}{\roleFmt{Ser}}}{\stCPair{\setenum{C_{\roleFmt{Sen}} = 0}}{\gtProj{\gtG[\text{data}]}{\roleFmt{Sen}}}}$, 
is capable of typing the process $\mpQ[\roleFmt{Sen}] \mpPar \mpQ[\roleFmt{Sat}] \mpPar \mpQ[\roleFmt{Ser}]$ as well. In fact, if a process is typable by a typing environment associated with a timed global type, then it  can definitely be typed using the environment derived from the projection of the timed global type, highlighting the utility of association. 

\end{example}

\subsection{Typed Process Properties}
\label{subsec:pptes_atmp_session_types}
\label{SUBSEC:PPTES_ATMP_SESSION_TYPES}
We demonstrate that processes typed by the \ATMP typing system exhibit the desirable properties: 
\emph{subject reduction}~(\Cref{lem:sr_global}),
\emph{session fidelity}~(\Cref{lem:aat-mpst-session-fidelity-global}),
and \emph{deadlock-freedom}~(\Cref{lem:aat-mpst-process-df-proj}). 

\vspace{-1em}
\subparagraph*{Subject Reduction}
 ensures the preservation of well-typedness of processes during  reductions. 
Specifically, it states that if a well-typed process $\mpP$ reduces to $\mpPi$, this reduction is reflected in the typing environment $\stEnv$ used to type $\mpP$. Notably, in our subject
reduction theorem, $\mpP$ is constructed from a timed global type, \ie typed by an environment 
associated with a timed global type, and this construction approach persists as an invariant property throughout reductions.
Furthermore, the theorem does not require $\mpP$ to contain only a single session; instead,
it includes all restricted sessions in $\mpP$, ensuring that reductions on these
sessions uphold their respective restrictions. 
This enforcement is facilitated by rule \inferrule{\iruleMPResPropG} in~\Cref{fig:aat-mpst-rules-main}.

Association~(\Cref{def:assoc}), crucial for subject reduction, possesses the following properties that facilitate its proof: supertyping and reductions commute under association (\cref{lem:stenv-assoc-reduction-sub}), and  supertyping preserves association~(\Cref{lem:stenv-assoc-sub}). 
Their proofs are available in~\Cref{sec:proofs:subtyping-properties}\iftoggle{full}{}{ of the full version}.  
 
  \begin{restatable}{lemma}{lemStenvReductionSubAssoc}
    \label{lem:stenv-assoc-reduction-sub}%
    Assume that $\stEnvAssoc{\gtWithTime{\cVal}{\gtG}}{\stEnv}{\mpS}$ and
    $\stEnv \stSub \stEnvi \stEnvMoveGenAnnot \stEnvii$. 
    Then, there is  $\stEnviii$ such that 
    $\stEnv \stEnvMoveGenAnnot \stEnviii \stSub \stEnvii$.%
  \end{restatable}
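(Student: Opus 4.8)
The plan is to prove the statement by a case analysis on the transition $\stEnvi \stEnvMoveGenAnnot \stEnvii$ --- more precisely, on the action $\stEnvAnnotGenericSym$ and on which entries of $\stEnvi$ the redex affects --- exhibiting for $\stEnv$ the \emph{same} action $\stEnvAnnotGenericSym$. Two preliminary observations make this tractable. First, since $\stEnv \stSub \stEnvi$ forces $\dom{\stEnv} = \dom{\stEnvi}$, the named entries touched by $\stEnvi$'s redex are also entries of the \emph{associated} environment $\stEnv$, so \Cref{def:assoc} constrains exactly those entries. Second, a handful of routine stability facts let me factor out the administrative rules: subtyping $\stSub$ commutes with the environment congruence $\stQEquiv$ (if $\stEnv \stSub \stEnvi \stQEquiv \stEnvi[1]$ then $\stEnv \stQEquiv \stEnv[1] \stSub \stEnvi[1]$ for some $\stEnv[1]$), and both $\stSub$ and association are preserved by unfolding a recursive entry (\inferrule{\iruleStSubRecL}/\inferrule{\iruleStSubRecR} and \inferrule{\iruleTCtxRec}) and by adding a common constant $\timeLab$ to all clock valuations. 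With these, \inferrule{\iruleTCtxStruct}, \inferrule{\iruleTCtxRec}, and the frame rules \inferrule{\iruleTCtxCongX}/\inferrule{\iruleTCtxCongCombined} do not produce genuine cases, and it remains to treat time steps (\inferrule{\iruleTCtxTimeSession}, \inferrule{\iruleTCtxTimeQ}, \inferrule{\iruleTCtxTimeCombined}, \inferrule{\iruleTCtxTime}), outputs (\inferrule{\iruleTCtxSend}), and inputs (\inferrule{\iruleTCtxRcv}).

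For the \emph{time} and \emph{output} cases only $\stEnv \stSub \stEnvi$ is needed. In the time case $\stEnvii$ arises from $\stEnvi$ by adding $\timeLab$ to every session-type clock valuation and leaving queue types untouched; since \inferrule{\iruleSubESessionType} relates session types only when they carry \emph{equal} valuations, $\stEnv$ can take the same step to $\stEnviii \eqdef \stEnv + \timeLab$ and $\stEnviii \stSub \stEnvii$. In the output case $\stEnvApp{\stEnvi}{\mpChanRole{\mpS}{\roleP}}$ selects the $k$-th branch of an internal choice whose guard is satisfied by its valuation; by \inferrule{\iruleStSubOut}, $\stEnvApp{\stEnv}{\mpChanRole{\mpS}{\roleP}}$ has at least those branches, with identical guards, resets, and valuation, so $\stEnv$ performs the same output to some $\stEnviii$. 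The appended payload sorts are ordered by subtyping in exactly the direction required by the queue rule \inferrule{\iruleSubEQueueType}, and the continuation session types are related as demanded by \inferrule{\iruleSubESessionType}, so $\stEnviii \stSub \stEnvii$.

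The \emph{input} case (\inferrule{\iruleTCtxRcv}) is the main obstacle. Here $\stEnvi$ dequeues a message labelled $\stLab[k]$ from the head of $\roleP$'s queue and consumes the $k$-th branch of the external choice at $\mpChanRole{\mpS}{\roleQ}$. Queue subtyping preserves labels, so $\stEnv$'s queue head at $\mpChanRole{\mpS}{\roleP}$ is also labelled $\stLab[k]$; but $\stEnv$'s external choice at $\mpChanRole{\mpS}{\roleQ}$, being a \emph{subtype}, may have fewer branches (\inferrule{\iruleStSubIn}), so it is not \emph{a priori} able to receive $\stLab[k]$. This is where $\stEnvAssoc{\gtWithTime{\cVal}{\gtG}}{\stEnv}{\mpS}$ is essential: a non-empty queue entry forces, by item~\ref{item:assoc:qenv} of \Cref{def:assoc} (unfolded through the causally-independent transmissions that may precede it, using the merge operator in the projection of \Cref{def:main_proj}), a transmission-en-route sub-term from $\roleP$ to $\roleQ$ whose projection onto $\roleQ$ is an external choice offering label $\stLab[k]$ with sort $\stS[k]$, and of which $\stEnvApp{\stEnv}{\mpChanRole{\mpS}{\roleQ}}$ is a supertype --- hence it retains the $k$-branch with the same guard and reset. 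The same item (sub-case b1) forces $\stEnv$'s queued sort to be a subtype of $\stS[k]$, while \inferrule{\iruleStSubIn} makes $\stEnv$'s receive sort a supertype of $\stS[k]$; composing yields the subtyping side-condition of \inferrule{\iruleTCtxRcv}, which $\stEnv$ therefore satisfies. Then $\stEnviii \stSub \stEnvii$ follows since the dequeued remainders are related by queue subtyping and the continuations by \inferrule{\iruleStSubIn}, with identical resets. The delicate point, and the only place where more than plain subtyping is used, is precisely this input argument: extracting from the association invariant --- together with the bookkeeping of causally-independent prefixes in the global-type semantics --- that every in-flight message has a matching receive branch in the \emph{smaller} environment, and that the payload sorts reconcile via $\gtG$ rather than via $\stEnvi$.
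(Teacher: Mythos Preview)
Your proposal is correct and follows essentially the same route as the paper: the paper proves the lemma as a direct corollary of three sub-lemmas handling the time, output, and input transitions separately (\Cref{lem:stenv-time-sub,lem:stenv-assoc-send-reduction-sub,lem:stenv-assoc-rcv-reduction-sub}), and your case split and arguments match these. The one organisational difference is that the paper packages the crucial input-case argument --- that association guarantees the smaller external choice still offers the queued label with a compatible sort --- as a standalone result (\Cref{lem:comm-assoc-match}, proved by induction on \Cref{lem:inv-proj}), whereas you inline the same reasoning.
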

  
  \vspace{-2ex}
  \begin{restatable}{lemma}{lemAssocSubP}
\label{lem:stenv-assoc-sub}
If $\stEnvAssoc{\gtWithTime{\cVal}{\gtG}}{\stEnv}{\mpS}$ and 
$\stEnv \stSub \stEnvi$, 
then $\stEnvAssoc{\gtWithTime{\cVal}{\gtG}}{\stEnvi}{\mpS}$. 
\end{restatable}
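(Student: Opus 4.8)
The plan is to prove the statement by induction on the structure of $\gtG$ (equivalently, on the derivation of $\stEnvAssoc{\gtWithTime{\cVal}{\gtG}}{\stEnv}{\mpS}$, whose $\Delta$-component in \Cref{def:assoc} recurses on $\gtG$). The first step is to record three ``shape-preservation'' facts about environment subtyping $\stEnv \stSub \stEnvi$, all read off from the rules of \Cref{fig:sub_cong_com_types}: \textbf{(a)} $\dom{\stEnv} = \dom{\stEnvi}$; \textbf{(b)} subtyping preserves the \emph{kind} of each entry (pure timed session type, pure queue type, or a combined one), and on queue types it preserves the entire skeleton of receiver/label pairs in order --- via \inferrule{\iruleSubEQueueTypeEmpty} and \inferrule{\iruleSubEQueueType} --- with only the payload sorts allowed to vary, contravariantly; and \textbf{(c)} by \inferrule{\iruleSubESessionType} and \inferrule{\iruleSubQueue} the clock valuation attached to an entry is left unchanged. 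Using (a) and (b), the decomposition $\stEnv = \stEnv[\gtG] \stEnvComp \stEnv[\Delta] \stEnvComp \stEnv[\stEnd]$ supplied by the hypothesis transports to $\stEnvi = \stEnvi[\gtG] \stEnvComp \stEnvi[\Delta] \stEnvComp \stEnvi[\stEnd]$ with matching per-component domains and $\stEnv[\gtG] \stSub \stEnvi[\gtG]$, $\stEnv[\Delta] \stSub \stEnvi[\Delta]$, $\stEnv[\stEnd] \stSub \stEnvi[\stEnd]$.

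Next I would verify the three clauses of \Cref{def:assoc} for this decomposition of $\stEnvi$. For clause~1 the domain and kind requirements are immediate; for the projection requirement, if an entry of $\stEnv[\gtG]$ is $\stCPair{\cVal[\roleP]}{\stT[\roleP]}$ with $\gtProj{\gtG}{\roleP} \stSub \stT[\roleP]$ then the matching entry of $\stEnvi[\gtG]$ is $\stCPair{\cVal[\roleP]}{\stTi[\roleP]}$ with $\stT[\roleP] \stSub \stTi[\roleP]$ by fact~(c) and \inferrule{\iruleSubESessionType}, so transitivity of $\stSub$ gives $\gtProj{\gtG}{\roleP} \stSub \stTi[\roleP]$, and since the valuations $\cVal[\roleP]$ are unchanged, $\cVal = \sqcup_{\roleP} \cVal[\roleP]$ is retained. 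Clause~3 is similar: a supertype of $\stMPair{\stCPair{\cVal[\roleP]}{\stEnd}}{\stQEmptyType}$ must again have the form $\stMPair{\stCPair{\cVal[\roleP]}{\stEnd}}{\stQEmptyType}$, because $\stEnd \stSub \stTi$ forces $\stTi$ to be $\stEnd$ (by \inferrule{\iruleStSubEnd}, up to unfolding of recursive types via \inferrule{\iruleStSubRecR}) and $\stQEmptyType$ only subtypes $\stQEmptyType$. Clause~2 is the recursive one and drives the outer induction: the $\gtEnd$ and $\gtRec{\gtRecVar}{\gtGi}$ cases are immediate, since there every $\Delta$-entry is $\stQEmptyType$, hence so is every entry of $\stEnvi[\Delta]$; in the transmission case, condition~(a1) transports because fact~(b) makes the $\operatorname{receivers}(\cdot)$ sets of corresponding queues coincide, and condition~(a2) follows from the induction hypothesis applied to each continuation $\gtG[i]$ together with $\stEnv[\Delta] \stSub \stEnvi[\Delta]$; in the transmission-en-route case, applying \inferrule{\iruleSubEQueueType} to the head message yields $\stEnvApp{\stEnvi[\Delta]}{\mpChanRole{\mpS}{\roleP}} = \stQCons{\stQMsg{\roleQ}{\stLab[j]}{\stSii[j]}}{\stQTypei}$ with $\stSii[j] \stSub \stSi[j]$ and $\stQType \stSub \stQTypei$, so with $\stSi[j] \stSub \stS[j]$ from condition~(b1) and transitivity we obtain $\stSii[j] \stSub \stS[j]$, while condition~(b2) follows from the induction hypothesis applied to $\gtG[j]$, using that $\stEnvUpd{\stEnv[\Delta]}{\mpChanRole{\mpS}{\roleP}}{\stQType} \stSub \stEnvUpd{\stEnvi[\Delta]}{\mpChanRole{\mpS}{\roleP}}{\stQTypei}$. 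Assembling the three clauses gives $\stEnvAssoc{\gtWithTime{\cVal}{\gtG}}{\stEnvi}{\mpS}$.

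The main obstacle I expect is keeping the variance directions aligned: queue-type subtyping is \emph{contra}variant in payload sorts (\inferrule{\iruleSubEQueueType}), whereas condition~(b1) of association records a sort inequality $\stSi[j] \stSub \stS[j]$ that points the other way, so one must check carefully that the two compose correctly under transitivity of $\stSub$. The remaining subtlety is to make the ``shape-preservation'' facts precise --- that subtyping cannot change which entries are queues versus session types, cannot alter the receiver/label skeleton or length of a queue, and cannot touch the attached clock valuations --- and to confirm that the $\stEnd$ cases survive the recursive-unfolding rules \inferrule{\iruleStSubRecL} and \inferrule{\iruleStSubRecR}; once these are in place, everything else is a routine transport of conditions along the decomposition together with the induction hypothesis.
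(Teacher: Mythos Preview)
Your proposal is correct and follows essentially the same approach as the paper's proof: transport the three-part decomposition $\stEnv = \stEnv[\gtG] \stEnvComp \stEnv[\Delta] \stEnvComp \stEnv[\stEnd]$ to $\stEnvi$ via the shape-preserving properties of $\stSub$, handle clause~1 by transitivity of subtyping, clause~3 by observing that $\stMPair{\stCPair{\cVal[\roleP]}{\stEnd}}{\stQEmptyType}$ has only itself as a supertype, and clause~2 by induction on the structure of $\gtG$. Your treatment of the transmission-en-route case, including the contravariance of queue payload sorts and the transitivity step $\stSii[j] \stSub \stSi[j] \stSub \stS[j]$, matches the paper exactly; if anything you are slightly more explicit than the paper in isolating the shape-preservation facts (a)--(c) up front and in flagging the $\stEnd$/unfolding subtlety for clause~3.
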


\noindent
Additionally, the typability of time passing, 
fundamental for demonstrating both subject reduction and session fidelity~(as discussed later),  is addressed below.  
Its proof is available in~\Cref{sec:app-aat-mpst-sj-proof}\iftoggle{full}{}{ of the full version}.

\begin{restatable}{lemma}{lemTimePass}
\label{lem:time_passing_typing}
Assume
  $\stJudge{\mpEnv}{\stEnvQ}{\mpP}$.
If 
$\mpP \mpMoveTime \timePass{\cUnit}{\mpP}$
 with 
   $\timePass{\cUnit}{\mpP}$
    defined, then 
$\stJudge{\mpEnv}{\stEnvQ \,\tcFmt{+}\, \cUnit}{\mpFmt{\timePass{\cUnit}{\mpP}}}$.
\end{restatable}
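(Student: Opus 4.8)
The plan is to proceed by induction on the structure of the process $\mpP$, mirroring the case analysis on the typing derivation $\stJudge{\mpEnv}{\stEnvQ}{\mpP}$, and in each case inspecting the definition of $\timePass{\cUnit}{\mpP}$ from \Cref{fig:aat-mpst-time-passing} together with the action of $\stEnvQ \,\tcFmt{+}\, \cUnit$ (which increments the clock valuation in every timed-session and timed-session/queue entry while leaving pure queue entries untouched). Since $\mpP \mpMoveTime \timePass{\cUnit}{\mpP}$ presupposes that $\timePass{\cUnit}{\mpP}$ is defined, I may discard the cases $\mpP = \delay{\ccst}{\mpPi}$ and $\mpP = \delay{\cUniti}{\mpPi}$ with $\cUnit > \cUniti$ immediately. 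For the homomorphic cases, the argument is routine: if $\mpP = \mpP[1] \mpPar \mpP[2]$, then $\timePass{\cUnit}{\mpP} = \timePass{\cUnit}{\mpP[1]} \mpPar \timePass{\cUnit}{\mpP[2]}$, the typing splits $\stEnvQ = \stEnvQ[1] \stEnvQComp \stEnvQ[2]$, and since $(\stEnvQ[1] \stEnvQComp \stEnvQ[2]) \,\tcFmt{+}\, \cUnit = (\stEnvQ[1]\,\tcFmt{+}\,\cUnit) \stEnvQComp (\stEnvQ[2]\,\tcFmt{+}\,\cUnit)$ (time elapses uniformly — cf.\ rule \inferrule{\iruleTCtxTime}), the induction hypothesis applied to each branch and rule \inferrule{\iruleMPPar} reassembles the judgement. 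The cases $\mpRes{\mpS}{\mpP}$, $\mpDefAbbrev{\mpDefD}{\mpP}$, $\trycatch{\mpP}{\mpQ}$, $\mpCancel{\mpC}{\mpQ}$ are analogous, using that $\stEnv[\mpS]$-restriction and association commute with $\,\tcFmt{+}\,\cUnit$; the cases $\mpNil$, $\mpCErr$, $\kills{\mpS}$, $\mpFailedP{\mpP}$, and queue processes $\mpSessionQueueO{\mpS}{\roleP}{\mpQueue}$ are immediate since $\timePass{\cUnit}{\cdot}$ is the identity on them and the relevant typing rules (\inferrule{\iruleMPNil}, \inferrule{\iruleMPKill}, \inferrule{\iruleMPFailed}, \inferrule{\iruleMPQueueEmpty}, \inferrule{\iruleMPQueue}) either hold for any environment or only constrain queue/$\stEnd$-typed entries that are unaffected by the time shift (noting that $\stEnvEndP{\cdot}$ is preserved under $\,\tcFmt{+}\,\cUnit$, since $\stCPair{\cVal}{\stEnd} \stSub \stCPair{\cVali}{\stEnd}$ for all valuations).

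The heart of the proof is the two non-trivial cases: timed selection $\mpTSel{\mpC}{\roleQ}{\mpLab}{\mpD}{\mpP}{\mathfrak{n}}$ and timed branching $\mpTBranch{\mpC}{\roleQ}{i \in I}{\mpLab[i]}{x_i}{\mpP[i]}{\mathfrak{n}}{}$. Consider branching (selection is dual). By \inferrule{\iruleMPBranch}, $\stEnvQ = \stEnv \stEnvComp \stEnv[1]$ where $\stEnvEntails{\stEnv[1]}{\mpC}{\stCPair{\cVal}{\stExtSum{\roleQ}{i \in I}{\stTChoice{\stLab[i]}{\stS[i]}{\ccst[i],\crst[i]}\stSeq\stT[i]}}}$, and crucially the rule already quantifies over \emph{all} $\cUnit' \leq \mathfrak{n}$: both the interval premise $\cVal + \cUnit' \models \ccst[i]$ and the continuation premise $\stJudge{\mpEnv}{\stEnvQ \,\tcFmt{+}\, \cUnit' \stEnvComp \ldots}{\mpP[i]}$ hold for every such $\cUnit'$. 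Now $\timePass{\cUnit}{\mpP} = \mpTBranch{\mpC}{\roleQ}{i \in I}{\mpLab[i]}{x_i}{\mpP[i]}{\mathfrak{n} - \cUnit}{}$ (the defined branch requires $\mathfrak{n} \geq \cUnit$, so $\cUnit \leq \mathfrak{n}$ here, and note $\infty - \cUnit = \infty$ handles the blocking case). I must produce a derivation of $\stJudge{\mpEnv}{(\stEnv \stEnvComp \stEnv[1])\,\tcFmt{+}\,\cUnit}{\timePass{\cUnit}{\mpP}}$ using \inferrule{\iruleMPBranch} with the shifted environment, timeout $\mathfrak{n} - \cUnit$, and clock valuation $\cVal + \cUnit$ (obtained from $\stEnv[1]\,\tcFmt{+}\,\cUnit$ via \inferrule{\iruleMPSub}/\inferrule{\iruleTCtxTimeSession}). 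Its premises ask, for all $\cUnit'' \leq \mathfrak{n} - \cUnit$: that $(\cVal + \cUnit) + \cUnit'' \models \ccst[i]$, and that $\stJudge{\mpEnv}{((\stEnv \stEnvComp \stEnv[1])\,\tcFmt{+}\,\cUnit)\,\tcFmt{+}\,\cUnit'' \stEnvComp \ldots}{\mpP[i]}$. Both follow by instantiating the original rule's quantified premises at $\cUnit' = \cUnit + \cUnit''$ (which satisfies $\cUnit' \leq \mathfrak{n}$), using $(\cVal + \cUnit) + \cUnit'' = \cVal + (\cUnit + \cUnit'')$ and $(\stEnvQ\,\tcFmt{+}\,\cUnit)\,\tcFmt{+}\,\cUnit'' = \stEnvQ\,\tcFmt{+}\,(\cUnit+\cUnit'')$; the side premises on $\stS[i] = \stCPair{\ccsti[i]}{\stTi[i]}$ and $\cVali[i] \models \ccsti[i]$ carry over verbatim since they do not mention $\cVal$. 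The selection case is identical, additionally carrying the delegated-channel premise $\stEnvEntails{\stEnv[2]}{\mpD}{\stCPair{\cVali}{\stTi}}$ unchanged.

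I expect the main obstacle to be bookkeeping precision around two subtleties rather than any deep difficulty. First, the interaction of \inferrule{\iruleMPSub} with the time shift: the typing derivation may apply subsumption at an arbitrary point, so I should first establish (or invoke, if available from \inferrule{\iruleMPNarrow}-style reasoning implicit in \Cref{fig:aat-mpst-rules-main}) that $\stCPair{\cVal}{\stT} \stSub \stCPair{\cVali}{\stTi}$ implies $\stCPair{\cVal+\cUnit}{\stT} \stSub \stCPair{\cVali+\cUnit}{\stTi}$ — immediate from rule \inferrule{\iruleSubESessionType} once one checks $\cVal = \cVali$ is forced by that rule, so the shift is harmless — and similarly that $\stEnvQ \stSub \stEnvQi$ gives $\stEnvQ\,\tcFmt{+}\,\cUnit \stSub \stEnvQi\,\tcFmt{+}\,\cUnit$. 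Second, the $\mpDefAbbrev{\mpDefD}{\mpP}$ case with \inferrule{\iruleMPDef}: the process-variable environment $\mpEnv$ also carries clock valuations, but $\mpDefD$ and its bound variables are \emph{not} time-shifted (consistent with $\timePass{\cUnit}{\mpDefAbbrev{\mpDefD}{\mpP}} = \mpDefAbbrev{\mpDefD}{\timePass{\cUnit}{\mpP}}$ leaving $\mpDefD$ intact), so I only shift $\stEnvQ$ and must confirm the recursion premise still typechecks against the unchanged $\mpEnv$ — which it does, since $\mpEnv$ appears identically in hypothesis and conclusion. A final small point: for branching, the environment $\stEnv$ (the part not typing $\mpC$) must also be end-typed or otherwise shift-stable in the way \inferrule{\iruleMPBranch} demands; I will note $\stEnv\,\tcFmt{+}\,\cUnit$ composes with $\stEnv[1]\,\tcFmt{+}\,\cUnit$ exactly when $\stEnv \stEnvComp \stEnv[1]$ is defined, since $\,\tcFmt{+}\,\cUnit$ preserves the queue-vs-session-type status of every entry.
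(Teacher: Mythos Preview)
Your approach matches the paper's: induction on the typing derivation, with the key branching/selection cases handled by re-instantiating the universally quantified time parameter at $\cUnit' = \cUnit + \cUnit''$. Three subtleties are glossed over, however.

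First, you discard $\delay{\cUniti}{\mpPi}$ only in the undefined sub-case $\cUnit > \cUniti$, but never return to the defined sub-case $\cUnit \le \cUniti$. This is the \inferrule{\iruleMPTime} case in the paper and is dispatched via $\stEnvQ \,\tcFmt{+}\, \cUnit \,\tcFmt{+}\, (\cUniti - \cUnit) = \stEnvQ \,\tcFmt{+}\, \cUniti$.

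Second, $\trycatch{\mpP}{\mpQ}$ is not purely homomorphic: rule \inferrule{\iruleMPTry} carries the side condition $\procSubject{\mpP} = \setenum{\mpC}$, so after time-passing you must argue $\procSubject{\timePass{\cUnit}{\mpP}} = \procSubject{\mpP}$. The paper factors this out as a separate lemma (\Cref{lem:subj-equiv}).

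Third --- and this is where your claim is actually wrong rather than merely brief --- the delegated-channel premise in \inferrule{\iruleMPSel} does \emph{not} carry over ``unchanged''. The sub-environment $\stEnv[2]$ typing $\mpD$ is part of $\stEnvQ$ and is therefore shifted to $\stEnv[2] \,\tcFmt{+}\, \cUnit$; subsumption then yields $\stEnvEntails{\stEnv[2] \,\tcFmt{+}\, \cUnit}{\mpD}{\stCPair{\cVali + \cUnit}{\stTi}}$, and the rule now demands $\cVali + \cUnit \models \ccsti$, not the original $\cVali \models \ccsti$. The identical issue appears in \inferrule{\iruleMPQueue} for the payload endpoint $\mpChanRole{\mpSi}{\roleR}$, so the queue case is not ``immediate'' either. (The paper's own argument here --- that $\cVali$ ``can be any clock assignment'' satisfying $\ccsti$, so one may re-choose it --- is itself somewhat hand-wavy, but at least flags the issue.) Your branching analysis is fine by contrast: there the valuations $\cVali[i]$ are freshly introduced in the continuation premise rather than read off from $\stEnvQ$, so they genuinely survive the shift.
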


\vspace{-1.5ex}
\begin{restatable}[Subject Reduction]{theorem}{lemSubjectReductionGlobal}
\label{lem:sr_global}
  Assume $\stJudge{\mpEnv}{\stEnv}{\mpP}$ 
  where $\forall \mpS \in \stEnv: \exists \gtWithTime{\cVal}{\gtG}: 
  \stEnvAssoc{\gtWithTime{\cVal}{\gtG}}{\stEnv[\mpS]}{\mpS}$. 
  If $\mpP \!\mpMove\! \mpPi$, then 
  $\exists \stEnvi$
  such that 
  $\stEnv \!\stEnvMoveStar\! \stEnvi$, 
  $\stJudge{\mpEnv}{\stEnvi}{\mpPi}$, and $\forall \mpS \in \stEnvi: 
  \exists \gtWithTime{\cVali}{\gtGi}: 
  \stEnvAssoc{\gtWithTime{\cVali}{\gtGi}}{\stEnvi[\mpS]}{\mpS}$. 
\end{restatable}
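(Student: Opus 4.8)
The plan is to prove \Cref{lem:sr_global} by induction on the derivation of the reduction $\mpP \mpMove \mpPi$, following the structure of the reduction rules in~\Cref{fig:aat-mpst-pi-semantics}. By \inferrule{\iruleMPRedInstant} and \inferrule{\iruleMPRedTimeConsume}, every $\mpMove$ step arises either from $\mpnonTMove$ or $\mpMoveTime$, so I would split into these two cases. For each case, I would invoke an inversion lemma on the typing judgement $\stJudge{\mpEnv}{\stEnv}{\mpP}$: since the typing rules are essentially syntax-directed modulo \inferrule{\iruleMPSub}, \inferrule{\iruleMPResPropG}, and structural/congruence manipulations, inversion tells us how $\mpP$ was typed, and in particular exposes the relevant clock-constraint side conditions (e.g.\ $\cVal + \cUnit \models \ccst[i]$ in \inferrule{\iruleMPBranch}/\inferrule{\iruleMPSel}). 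The goal in each case is to exhibit $\stEnvi$ with $\stEnv \stEnvMoveStar \stEnvi$, re-establish $\stJudge{\mpEnv}{\stEnvi}{\mpPi}$, and — crucially — re-establish the association invariant for every restricted session.

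The key steps, in order: (1) Handle the time-consuming case $\mpP \mpMoveTime \timePass{\cUnit}{\mpP}$ first, since it is the most delicate: here I would use \Cref{lem:time_passing_typing} to obtain $\stJudge{\mpEnv}{\stEnvQ \,\tcFmt{+}\, \cUnit}{\timePass{\cUnit}{\mpP}}$, then observe that $\stEnv \stEnvMoveStar \stEnv + \cUnit$ is realised by repeated applications of \inferrule{\iruleTCtxTimeCombined}/\inferrule{\iruleTCtxTime} (one $\timeLab$-transition of size $\cUnit$, or a sequence summing to $\cUnit$), and that association is preserved under uniform time passing because \inferrule{\iruleGtMoveTime} does exactly the same $\cVal \mapsto \cVal + \cUnit$ on the global type side, keeping conditions (1)(iv) and (2) of \Cref{def:assoc} intact. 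A subtlety to address: $\timePass{\cUnit}{\mpP}$ may introduce $\mpFailedP{\cdot}$ subterms, but those are typed by \inferrule{\iruleMPFailed} under \emph{any} environment, so they pose no obstruction. (2) Handle communication steps \inferrule{\iruleMPRedOut} and \inferrule{\iruleMPRedIn}: inversion on the queue-process typing (\inferrule{\iruleMPQueue}/\inferrule{\iruleMPQueueEmpty}) and on \inferrule{\iruleMPSel}/\inferrule{\iruleMPBranch} yields the matching send/receive transitions $\stEnvQTMoveQueueAnnot{\roleP}{\roleQ}{\stLab[k]}$ resp.\ $\stEnvQTMoveRecvAnnot{\roleP}{\roleQ}{\stLab[k]}$ at the environment level via \inferrule{\iruleTCtxSend}/\inferrule{\iruleTCtxRcv}; then \Cref{lem:comp_proj} transports this environment reduction to a matching global-type reduction, and \Cref{lem:stenv-assoc-sub} (with \Cref{lem:stenv-assoc-reduction-sub} to absorb the subtyping introduced by \inferrule{\iruleMPSub}) re-establishes association for the reduced configuration. (3) Handle the failure/cancellation/kill steps \inferrule{\iruleMPRedTryFail}, \inferrule{\iruleMPRedFailCatch}, \inferrule{\iruleMPRedCan}, \inferrule{\iruleMPCCat}, \inferrule{\iruleMPRedCanIn}, \inferrule{\iruleMPRedCanQ}, \inferrule{\iruleMPRedDet}: these are adapted from \AMPST~\cite{lagaillardie2022Affine}; here the environment typically does \emph{not} move (take $\stEnvi = \stEnv$, using reflexivity of $\stEnvMoveStar$) or moves only by garbage-collecting ended queue entries, and one checks that $\kills{\mpS}$, the generated catch process, and the residual queues/continuations remain typable under the same environment via \inferrule{\iruleMPKill}, \inferrule{\iruleMPFailed}, \inferrule{\iruleMPCancel}, using that $\procSubject{\cdot}$ side conditions from the reduction rules line up with the $\procSubject{\mpP} = \setenum{\mpC}$ premise of \inferrule{\iruleMPTry}. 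For \inferrule{\iruleMPRedDet}, the deterministic-delay typing rules \inferrule{\iruleMPClock}/\inferrule{\iruleMPTime} make this immediate. (4) Handle the structural/contextual closure rules \inferrule{\iruleMPRedCtx}, \inferrule{\iruleMPRedCongr}, \inferrule{\iruleMPRedCongrTime}, \inferrule{\iruleMPRedCall} by the induction hypothesis together with a standard lemma that typing is preserved under $\equiv$ and that reduction contexts $\mpCtx$ (parallel, restriction, def) decompose the environment compatibly — here \inferrule{\iruleMPResPropG} is what lets contextual reduction under $\mpRes{\stEnvMap{\mpS}{\stEnvi}}{\cdot}$ preserve the association invariant locally to $\mpS$.

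The main obstacle I anticipate is the interplay between subtyping and the association invariant in the communication cases — specifically, keeping \Cref{def:assoc} stable across a reduction when the process was typed using \inferrule{\iruleMPSub} to weaken a selection type (fewer branches) or strengthen a sort in a queue. This is exactly the asymmetry flagged in \Cref{rem:weak_soundness}: the environment may carry a timed local type strictly below the projection, so a naive appeal to "soundness mirrors completeness" fails. The resolution is to route everything through \Cref{lem:stenv-assoc-reduction-sub} (supertyping and reductions commute under association) and \Cref{lem:stenv-assoc-sub} (supertyping preserves association), so that whenever the actual environment $\stEnvi$ makes a move, we can find a move of the "canonical" associated environment $\stEnv$ landing at a supertype, and conclude association for the target. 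A secondary technical point needing care is that $\stEnv \stEnvMoveStar \stEnvi$ (reflexive-transitive closure) rather than a single step is required in the statement precisely because a time-consuming reduction of magnitude $\cUnit$ might be decomposed, and because some process-level reductions (congruence-only rearrangements of queues, kill-duplication) correspond to zero environment steps; threading the bookkeeping so that the "$\forall \mpS \in \stEnvi$" association clause holds for exactly the set of sessions restricted in $\mpPi$ — which can shrink (a session fully consumed) or, via \inferrule{\iruleMPRedCanIn}, gain a fresh restricted copy $\mpSi$ — is where the argument is most fiddly, though each individual check is routine given the earlier lemmas.
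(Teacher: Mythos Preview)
Your overall strategy matches the paper's: induction on the reduction derivation, splitting into $\mpnonTMove$/$\mpMoveTime$, inverting the typing, and threading the association invariant through \Cref{lem:comp_proj}, \Cref{lem:stenv-assoc-sub}, \Cref{lem:stenv-assoc-reduction-sub}, and \Cref{lem:time_passing_typing}. The communication, time, contextual, and congruence cases are all handled as you describe.

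There is, however, a genuine gap in step~(3). You write that for the cancellation rules ``the environment typically does \emph{not} move \ldots\ or moves only by garbage-collecting ended queue entries, and one checks that \ldots\ the residual queues/continuations remain typable under the same environment''. This fails for \inferrule{\iruleMPRedCanQ}. After the head message $\mpQueueOElem{\roleQ}{\mpLab}{\mpChanRole{\mpSi}{\roleR}}$ is dropped, the residual queue $\mpSessionQueueO{\mpS}{\roleP}{\mpQueue}$ must be typed with the \emph{shorter} queue type, so $\stEnvi = \stEnv$ is impossible; moreover, simply shrinking the queue type would break the association $\stEnvAssoc{\gtWithTime{\cVal}{\gtG}}{\stEnv[\mpS]}{\mpS}$, since condition~\ref{item:assoc:qenv}(v) of \Cref{def:assoc} ties that queued message to an en-route transmission in $\gtG$. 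The paper's fix is not garbage collection but a genuine environment transition: it exploits the freedom in \inferrule{\iruleMPKill} (which lets $\kills{\mpS}$ be typed by \emph{any} $\mpS$-entries) to assign $\kills{\mpS}$ an external-choice type at $\mpChanRole{\mpS}{\roleQ}$ capable of receiving $\stLab$, then fires \inferrule{\iruleTCtxRcv} so that $\stEnv \stEnvQTMoveRecvAnnot{\roleQ}{\roleP}{\stLab} \stEnvi$, after which \Cref{lem:comp_proj} restores association. A similar issue arises in \inferrule{\iruleMPRedCanIn}, where the entry for $\mpChanRole{\mpS}{\roleP}$ must advance from the branching type to its continuation $\stT[k]$ with reset clock. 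Your sketch needs to account for these environment moves; ``same environment'' does not work.
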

\begin{proof}
By induction on the derivation of $\mpP \!\mpMove\! \mpPi$~(see~\Cref{sec:app-aat-mpst-sj-proof}\iftoggle{full}{}{ of the full version}).
\end{proof}

\vspace{-1.3ex}
\begin{restatable}[Type Safety]{corollary}{lemTypeSafety}
  \label{cor:type-safety}
Assume $\stJudge{\emptyset}{\emptyset}{\mpP}$. 
If $\mpP \mpMoveStar \mpPi$, then %
  $\mpPi$ has no communication error.%
\end{restatable}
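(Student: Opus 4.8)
The plan is to derive \Cref{cor:type-safety} as a direct consequence of the Subject Reduction theorem (\Cref{lem:sr_global}) together with a straightforward observation that a process containing a communication error cannot be well-typed under any typing environment. First I would apply \Cref{lem:sr_global} to the hypothesis $\stJudge{\emptyset}{\emptyset}{\mpP}$: the condition $\forall \mpS \in \stEnv: \exists \gtWithTime{\cVal}{\gtG}: \stEnvAssoc{\gtWithTime{\cVal}{\gtG}}{\stEnv[\mpS]}{\mpS}$ is vacuously satisfied since $\stEnv = \emptyset$. Then, by a routine induction on the length of the reduction sequence $\mpP \mpMoveStar \mpPi$ (iterating \Cref{lem:sr_global} at each single-step reduction, and using that the association invariant is preserved at each step), I obtain a typing environment $\stEnvi$ such that $\stJudge{\emptyset}{\stEnvi}{\mpPi}$ holds, with $\forall \mpS \in \stEnvi$ the association invariant still holding.

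The second and crucial step is to show that no well-typed process can have a communication error, i.e.\ if $\stJudge{\mpEnv}{\stEnv}{\mpQ}$ then there is no context $\mpCtx$ with $\mpQ = \mpCtxApp{\mpCtx}{\mpCErr}$. This is proved by induction on the structure of the reduction context $\mpCtx$ (equivalently, by induction on the typing derivation of $\mpQ$). The base case $\mpCtx = \mpCtxHole$ requires observing that there is no typing rule in \Cref{fig:aat-mpst-rules-main} whose conclusion types the bare process $\mpCErr$: inspecting the rules, $\mpCErr$ appears in the grammar of processes but matches no rule's subject, so $\stJudge{\mpEnv}{\stEnv}{\mpCErr}$ is not derivable. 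The inductive cases handle $\mpCtx = \mpCtxi \mpPar \mpP$, $\mpCtx = \mpRes{\mpS}{\mpCtxi}$, and $\mpCtx = \mpDefAbbrev{\mpDefD}{\mpCtxi}$: in each case, the typing rules \inferrule{\iruleMPPar}, \inferrule{\iruleMPResPropG}, and \inferrule{\iruleMPDef} respectively force a sub-derivation typing $\mpCtxApp{\mpCtxi}{\mpCErr}$ (for $\mpPar$ one uses that the error lives in one of the two parallel components and that component must still be typed), and the induction hypothesis applies to that sub-derivation, yielding a contradiction.

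Combining these two steps concludes: $\mpPi$ is well-typed by $\stEnvi$ under $\mpEnv = \emptyset$, hence by the second step it cannot be written as $\mpCtxApp{\mpCtx}{\mpCErr}$ for any reduction context $\mpCtx$, which is precisely the statement that $\mpPi$ has no communication error (per \Cref{def:mpst-pi-semantics}). I expect the main obstacle to be purely bookkeeping rather than conceptual: carefully threading the association-invariant side-condition of \Cref{lem:sr_global} through the iterated application along the reduction sequence, and making sure the "no error in well-typed processes" lemma is stated and proved for arbitrary $\mpEnv$ and $\stEnv$ (not just empty ones) so that the inductive cases over reduction contexts go through smoothly. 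Neither step involves time, queues, or affinity in any essential way, so the argument is essentially the classical MPST type-safety corollary adapted to the present syntax.
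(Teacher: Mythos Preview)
Your proposal is correct and follows essentially the same approach as the paper: induction on the length of the reduction sequence, applying Subject Reduction (\Cref{lem:sr_global}) at each step, combined with the observation that $\mpCErr$ is not the conclusion of any typing rule. You spell out the ``well-typed processes contain no $\mpCErr$ in a reduction context'' argument in more detail (via induction on $\mpCtx$) than the paper, which simply asserts it in one line.
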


\begin{example}
\label{ex:subject_reduction}
Take the typed process $\mpQ[\roleFmt{Sen}] \mpPar  \mpQ[\roleFmt{Sat}]  \mpPar \mpQ[\roleFmt{Ser}]$ 
and the typing environment $\stEnv[\text{data}]$ from~\Cref{ex:calculus_reductions,ex:assoc,ex:typing_system}. 
After a reduction using~\inferrule{\iruleMPRedDet},   $\mpQ[\roleFmt{Sen}] \mpPar  \mpQ[\roleFmt{Sat}]  \mpPar \mpQ[\roleFmt{Ser}]$  transitions to $\delay{6.5}{\mpQi[\roleFmt{Sen}]} \mpPar   \mpSessionQueueO{\mpS}{\roleFmt{Sen}}{\mpQueueEmpty} \mpPar \delay{6}{\mpQi[\roleFmt{Sat}]} \mpPar   \mpSessionQueueO{\mpS}{\roleFmt{Sat}}{\mpQueueEmpty} \mpPar \delay{6}{\mpQi[\roleFmt{Ser}]} \mpPar   \mpSessionQueueO{\mpS}{\roleFmt{Ser}}{\mpQueueEmpty} = \mpQ[2]$, which remains typable by 
$\stEnv[\text{data}]$~($\stEnv[\text{data}] \!\stEnvMoveStar\! \stEnv[\text{data}]$). 
Then, applying  \inferrule{\iruleMPRedDelay}, 
$\mpQ[2]$ evolves to 
$\timePass{6.5}{\mpQ[2]}$,  typed as $\stEnv[\text{data}] + 6.5$, derived from 
$\stEnv[\text{data}] \stEnvQTMoveAnnot{6.5} \stEnv[\text{data}] + 6.5$. 
Further reduction through~$\inferrule{\iruleMPRedTryFail}$ leads $\timePass{6.5}{\mpQ[2]}$ to 
$\mpQi[\roleFmt{Sen}] \mpPar \mpSessionQueueO{\mpS}{\roleFmt{Sen}}{\mpQueueEmpty} \mpPar \kills{\mpS} \mpPar \mpSessionQueueO{\mpS}{\roleFmt{Sat}}{\mpQueueEmpty} \mpPar \timePass{0.5}{\mpQi[\roleFmt{Ser}]} \mpPar \mpSessionQueueO{\mpS}{\roleFmt{Ser}}{\mpQueueEmpty} = \mpQ[3]$, typable by 
$\stEnv[\text{data}] + 6.5$. %
Later, via \inferrule{\iruleMPCCat}, $\mpQ[3]$ reduces to 
$\mpCancel{\mpChanRole{\mpS}{\roleFmt{Sen}}}{} \mpPar   
\mpSessionQueueO{\mpS}{\roleFmt{Sen}}{\mpQueueEmpty} \mpPar 
\kills{\mpS} \mpPar   \mpSessionQueueO{\mpS}{\roleFmt{Sat}}{\mpQueueEmpty} \mpPar 
\timePass{0.5}{\mpQi[\roleFmt{Ser}]}   \mpPar   \mpSessionQueueO{\mpS}{\roleFmt{Ser}}{\mpQueueEmpty}$, which can be typed by 
$\stEnvii[\text{data}]$, obtained from $\stEnv[\text{data}] + 6.5 \stEnvQTMoveQueueAnnot{\roleFmt{Sen}}{\roleFmt{Sat}}{\stLabFmt{Data}} \cdot \stEnvQTMoveRecvAnnot{\roleFmt{Sat}}{\roleFmt{Sen}}{\stLabFmt{Data}} \stEnvii[\text{data}]$.  
Due to~\Cref{lem:comp_proj}, $\stEnvii[\text{data}]$ keeps its association with some timed global type. 
\end{example}

\vspace{-1.3em}
\subparagraph*{Session Fidelity} 
states the converse implication of subject reduction: if a process $\mpP$ is typed by $\stEnvQ$ and $\stEnvQ$ can reduce, then $\mpP$ can simulate at least one of the reductions performed by $\stEnvQ$ -- although not necessarily all such reductions, as $\stEnvQ$ over-approximates the behavior of $\mpP$.
Consequently, we can infer
$\mpP$'s behaviour from  that of $\stEnvQ$.  
However, this result does not hold for certain well-typed processes,
such as those that get trapped in recursion loops like $\mpDef{\mpX}{...}{\mpX}{\mpX}$,
or deadlock due to interactions across multiple sessions~\cite{coppo2016Global}.
To address this, similarly to~\cite{DBLP:journals/pacmpl/ScalasY19} and most session type works, we
establish session fidelity specifically for processes featuring guarded recursion
and implementing a single multiparty session as a parallel composition of one
sub-process per role. The formalisation of session fidelity is provided in~\Cref{lem:session-fidelity-strength-global}, 
building upon the concepts introduced in~\Cref{lem:guarded-definitions}.

\begin{definition}[From~\cite{DBLP:journals/pacmpl/ScalasY19}]
  \label{lem:guarded-definitions}%
  \label{def:unique-role-proc}%
  Assume $\stJudge{\mpEnvEmpty}{\stEnv}{\mpP}$. %
  We say that $\mpP$:
  \begin{enumerate}[left=0pt, topsep=1pt]
  \item\label{item:guarded-definitions:stmt}%
    \emph{has guarded definitions} %
    if and only if 
   in each process definition in $\mpP$ of the form %
    $\mpDef{\mpX}{\stEnvMap{x_1}{\stCPair{\cVal[1]}{\stT[1]}},...,\stEnvMap{x_n}{\stCPair{\cVal[n]}{\stT[n]}}}{
      \mpQ}{\mpPi}$, %
    for all $i \in 1...n$, %
   $\stT[i] \!\stNotSub\! \stEnd$
    implies %
    that a call $\mpCall{\mpY}{...,x_i,...}$ %
    can only occur in $\mpQ$ %
    as a subterm of %
    $\mpTBranch{x_i}{\roleQ}{j \in J}{\mpLab[j]}{y_j}{\mpP[j]}{\mathfrak{n}}{}$ %
    or %
    $\mpTSel{x_i}{\roleQ}{\mpLab}{\mpD}{\mpPii}{\mathfrak{n}}$ %
    (\ie  after using $x_i$ for input or output);%
  \item\label{item:unique-role-proc:stmt}
    \emph{only plays role $\roleP$ in $\mpS$, by $\stEnv$\!,} %
    if and only if 
    \begin{enumerate*}[ref=\emph{(\roman*)},label=\emph{(\roman*)}]
    \item%
    $\mpP$ has guarded definitions; %
    \item%
    $\fv{\mpP} =  \emptyset$; %
    \item\label{item:unique-role-proc:stenv}%
      $\stEnv \!=\!%
      \stEnv[0] \stEnvComp \stEnvMap{\mpChanRole{\mpS}{\roleP}}{\stSpecf}$
      with %
      $\stSpecf \stNotSub \stCPair{\cVal}{\stEnd}$ %
      and %
      $\stEnvEndP{\stEnv[0]}$;
    \item\label{item:unique-role-proc:res-end}%
      in all subterms %
      $\mpRes{\stEnvMap{\mpSi}{\stEnvi}}{\mpPi}$ %
      of $\mpP$, %
      we have $\stEnvi \stSub \stEnvMap{\mpChanRole{\mpSi}{\rolePi}}{\stMPair{\stCPair{\cVali}{\stEnd}}{\stQEmptyType}}$
      or $\stEnvi \stSub \stEnvMap{\mpChanRole{\mpSi}{\rolePi}}{\stCPair{\cVali}{\stEnd}}$
      (for some $\rolePi, \cVali$).
    \end{enumerate*}
  \end{enumerate}
  We say ``\emph{$\mpP$ only plays role $\roleP$ in $\mpS$}'' %
  if and only if   %
  $\exists\stEnv:  \stJudge{\mpEnvEmpty}{\stEnv}{\mpP}$,  %
  and item~\ref{item:unique-role-proc:stmt} holds.
\end{definition}

In~\Cref{lem:guarded-definitions}, item~\ref{item:guarded-definitions:stmt} 
describes guarded recursion for processes, while 
item~\ref{item:unique-role-proc:stmt} %
specifies a process limited to playing exactly \emph{one} role within \emph{one} session,  
preventing an ensemble of such processes from 
deadlocking by waiting for each other on multiple sessions.

We proceed to present our session fidelity result, taking kill processes into account. 
We denote $\kills{\mpQ}$ to indicate that $\mpQ$ consists only of a parallel composition of kill processes. 
Similar to subject reduction~(\Cref{lem:sr_global}), our session fidelity relies on a typing environment associated with a timed global type for a specific session $\mpS$ to type the process, ensuring the fulfilment of 
single-session requirements~(\Cref{lem:guarded-definitions}) and maintaining invariance during reductions.
The proof is by induction on the derivation of $\stEnv \!\stEnvMoveWithSession[\mpS]$~(available in~\Cref{sec:app-aat-mpst-sf-proof}\iftoggle{full}{}{ of the full version}).

\begin{restatable}[Session Fidelity]{theorem}{lemSessionFidelityStrengthGlobal}%
  \label{lem:session-fidelity-strength-global}%
  \label{lem:session-fidelity-global}
   \label{lem:aat-mpst-session-fidelity-global}
  Assume 
 $\stJudge{\mpEnvEmpty}{\stEnvQ}{\mpP}$,
  with  $\stEnvAssoc{\gtWithTime{\cVal}{\gtG}}{\stEnvQ}{\mpS}$,
  $\mpP \equiv%
  \mpBigPar{\roleP \in I}{%
    \mpP[\roleP]%
  } \mpPar \kills{\mpQ}$,
  and 
  $\stEnvQ = \bigcup_{\roleP \in I}\stEnvQ[\roleP] \cup \stEnvQ[0]$, 
  such that 
  $\stJudge{\mpEnvEmpty}{\stEnvQ[0]}{\kills{\mpQ}}$,
  and
  for each $\mpP[\roleP]$: 
  \begin{enumerate*}[label=\emph{(\arabic*)}]
 \item 
   $\stJudge{\mpEnvEmpty}{\stEnvQ[\roleP]}{\mpP[\roleP]}$, and 
  \item 
  either $\mpP[\roleP] \equiv \mpNil$, %
  or $\mpP[\roleP]$ only plays role $\roleP$ in $\mpS$,
  by 
   $\stEnvQ[\roleP]$. %
   \end{enumerate*}
  Then, $\stEnvQ \!\stEnvMoveWithSession[\mpS]$ implies 
  $\exists \stEnvQi, \gtWithTime{\cVali}{\gtGi}, \mpPi$ %
  such that 
  $\stEnvQ \!\stEnvMoveWithSession[\mpS]\! \stEnvQi$, 
   $\mpP \!\mpMoveStar\! \mpPi$,  and 
  $\stJudge{\mpEnvEmpty}{\stEnvQi}{\mpPi}$, %
  with %
  $\stEnvAssoc{\gtWithTime{\cVali}{\gtGi}}{\stEnvQi}{\mpS}$,
  $\mpPi \equiv
  \mpBigPar{\roleP \in I}{%
    \mpPi[\roleP]%
  } \mpPar \kills{\mpQi}$,
  and $\stEnvQi = \bigcup_{\roleP \in I}\stEnvQi[\roleP] \cup \stEnvQi[0]$
  such that %
  $\stJudge{\mpEnvEmpty}{\stEnvQi[0]}{\kills{\mpQi}}$, and  
  for each 
  $\mpPi[\roleP]$:
  \begin{enumerate*}[label=\emph{(\arabic*)}] 
 \item 
  $\stJudge{\mpEnvEmpty}{\stEnvQi[\roleP]}{\mpPi[\roleP]}$,
  and %
 \item 
  either $\mpPi[\roleP] \equiv \mpNil$, %
  or $\mpPi[\roleP]$ only plays role $\roleP$ in $\mpS$,
  by 
  $\stEnvQi[\roleP]$.
  \end{enumerate*}
\end{restatable}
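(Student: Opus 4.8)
<br>

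The plan is to prove session fidelity by induction on the derivation of $\stEnvQ \stEnvMoveWithSession[\mpS] \stEnvQi$, following the standard structure for such results (as in~\cite{DBLP:journals/pacmpl/ScalasY19}) but augmented to handle the time-passing transitions and kill processes specific to \ATMP. The key observation is that $\stEnvAssoc{\gtWithTime{\cVal}{\gtG}}{\stEnvQ}{\mpS}$ means $\stEnvQ$ splits as $\stEnv[\gtG] \stEnvComp \stEnv[\Delta] \stEnvComp \stEnv[\stEnd]$, and by Soundness of Association~(\Cref{lem:sound_proj}), whenever $\gtWithTime{\cVal}{\gtG}$ can move, so can a (possibly time-advanced) associated environment; dually, Completeness~(\Cref{lem:comp_proj}) guarantees that the resulting environment stays associated with some timed global type. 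So the first step is to use $\stEnvQ \stEnvMoveWithSession[\mpS]$ together with \Cref{lem:sound_proj,lem:comp_proj} to pin down the shape of the transition and a target $\gtWithTime{\cVali}{\gtGi}$ with $\stEnvAssoc{\gtWithTime{\cVali}{\gtGi}}{\stEnvQi}{\mpS}$.

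Next I would case-split on the transition label $\stEnvAnnotGenericSym \in \setenum{\stEnvQQueueAnnotSmall{\roleP}{\roleQ}{\stLab}, \stEnvQRecvAnnotSmall{\roleP}{\roleQ}{\stLab}, \timeLab}$ used to witness $\stEnvQ \stEnvMoveWithSession[\mpS] \stEnvQi$, and reconstruct a matching process reduction. For an \emph{output} label, inversion on the typing of $\mpBigPar{\roleP \in I}{\mpP[\roleP]}$ (using the $\stSpecf \stNotSub \stCPair{\cVal}{\stEnd}$ clause of \Cref{lem:guarded-definitions}, which forces $\mpP[\roleP]$ to be genuinely active at $\mpChanRole{\mpS}{\roleP}$ rather than $\mpNil$) shows $\mpP[\roleP]$ contains a timed selection at $\mpChanRole{\mpS}{\roleP}$ typed by \inferrule{\iruleMPSel}; possibly a leading deterministic delay $\delay{\cUnit}{\cdot}$ must first be discharged, which I handle by \inferrule{\iruleMPRedDelay} plus \Cref{lem:time_passing_typing} before \inferrule{\iruleMPRedOut} fires — this is exactly where the subtlety about $\mathfrak{n}$ and the clock valuation $\cVal + \cUnit \models \ccst$ in \inferrule{\iruleMPSel} gets used. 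The \emph{input} label is dual, using \inferrule{\iruleMPBranch} and \inferrule{\iruleMPRedIn}, and noting the associated $\stEnv[\Delta]$ entry already holds the pending message (clause (b1) of \Cref{def:assoc}). For the \emph{time} label $\timeLab$, \inferrule{\iruleTCtxTime} forces all entries to advance uniformly, matching \inferrule{\iruleMPRedDelay} with $\timePass{\timeLab}{\cdot}$ applied homomorphically across the parallel composition; here \Cref{lem:time_passing_typing} gives $\stJudge{\mpEnvEmpty}{\stEnvQ + \timeLab}{\timePass{\timeLab}{\mpP}}$ directly, provided $\timePass{\timeLab}{\mpP}$ is defined — which it is, since the only obstruction is a stuck $\delay{\ccst}{\cdot}$, excluded because the associated type has no such construct, or an overshot deterministic delay, which would contradict $\stEnvQ \stEnvMoveWithSession[\mpS]$ under \inferrule{\iruleTCtxTimeSession}.

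Then I would re-establish the structural invariants for $\mpPi$: that $\mpPi \equiv \mpBigPar{\roleP \in I}{\mpPi[\roleP]} \mpPar \kills{\mpQi}$ with each $\mpPi[\roleP]$ well-typed by $\stEnvQi[\roleP]$ and either $\mpNil$ or playing only role $\roleP$ in $\mpS$. This uses Subject Reduction~(\Cref{lem:sr_global}) to get $\stJudge{\mpEnvEmpty}{\stEnvQi}{\mpPi}$ and preservation of association, then re-partitions $\stEnvQi$ as $\bigcup_{\roleP \in I}\stEnvQi[\roleP] \cup \stEnvQi[0]$; the ``guarded definitions'' and ``res-end'' sub-conditions of \Cref{lem:guarded-definitions} are preserved because reduction neither introduces unguarded recursion nor new restrictions, and because subtyping-closure of the reduced local types is handled by \Cref{lem:stenv-assoc-reduction-sub,lem:stenv-assoc-sub}. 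The $\kills{\mpQ}$ component is inert under $\stEnvMoveWithSession[\mpS]$ in the sense that a kill process does not itself contribute a typing-environment reduction — it is typed by an ended environment via \inferrule{\iruleMPKill} — so it simply carries through (possibly merged via the $\kills{\mpS} \mpPar \kills{\mpS} \equiv \kills{\mpS}$ congruence).

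The main obstacle I anticipate is the interaction between deterministic delays and the matching between a \emph{single} environment step and a \emph{multi-step} process reduction $\mpP \mpMoveStar \mpPi$: a branching or selection typed by \inferrule{\iruleMPBranch}/\inferrule{\iruleMPSel} may be syntactically guarded by $\delay{\cUnit}{\cdot}$ in the process even though the environment's clock valuation already satisfies the guard, so realising one environment transition may require first consuming several $\mpMoveTime$ and $\mpnonTMove$ steps (via \inferrule{\iruleMPRedInstant}, \inferrule{\iruleMPRedTimeConsume}) before the communication rule applies. Getting the bookkeeping right — that the cumulative delay consumed matches the $\cUnit$ quantified in the $\forall \cUnit \leq \mathfrak{n}$ premises of \inferrule{\iruleMPSel}/\inferrule{\iruleMPBranch}, and that \Cref{lem:time_passing_typing} is invoked at each intermediate step so that the final environment is exactly $\stEnvQi$ — is the delicate part, and it is why the statement quantifies over $\mpMoveStar$ rather than a single $\mpMove$. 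A secondary subtlety is ensuring the witness $\gtWithTime{\cVali}{\gtGi}$ from \Cref{lem:sound_proj} is compatible with the process reduction actually chosen; since \Cref{lem:sound_proj} only guarantees \emph{some} enabled global-type move (cf.~\Cref{rem:weak_soundness}), I must pick the process reduction to follow that move, which is always possible because the local types in $\stEnvQ$ are supertypes of the projections and \inferrule{\iruleMPSel} permits sending along any retained branch.
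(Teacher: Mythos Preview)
Your overall structure---induction on the derivation of $\stEnvQ \stEnvMoveWithSession[\mpS]$, case-split on the transition label, and reconstruction of a matching process reduction---is the same as the paper's. Two points, however, need correction.

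First, Soundness of Association (\Cref{lem:sound_proj}) plays no role here and your appeal to it (and to \Cref{rem:weak_soundness} in the final paragraph) is misplaced. Once you have exhibited a concrete step $\stEnvQ \stEnvMoveWithSession[\mpS] \stEnvQi$, Completeness (\Cref{lem:comp_proj}) alone supplies the witness $\gtWithTime{\cVali}{\gtGi}$ with $\stEnvAssoc{\gtWithTime{\cVali}{\gtGi}}{\stEnvQi}{\mpS}$. The direction global-type-move $\Rightarrow$ environment-move is irrelevant to session fidelity.

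Second---and this is the real gap---your plan to discharge a leading $\delay{\cUnit}{\cdot}$ via \inferrule{\iruleMPRedDelay} \emph{before} firing \inferrule{\iruleMPRedOut}, so as to realise a single environment output step, does not type-check. After \inferrule{\iruleMPRedDelay} the whole process is typed by $\stEnvQ + \cUnit$ (\Cref{lem:time_passing_typing}); firing \inferrule{\iruleMPRedOut} then yields a residual typed by the $\stEnvQTMoveQueueAnnot{\roleP}{\roleQ}{\stLab}$-reduct of $\stEnvQ + \cUnit$. This environment differs from both $\stEnvQ + \cUnit$ and the $\stEnvQTMoveQueueAnnot{\roleP}{\roleQ}{\stLab}$-reduct of $\stEnvQ$ at every clock valuation, and since \inferrule{\iruleSubESessionType} requires identical valuations, narrowing cannot close the gap. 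No single $\stEnvQi$ with $\stEnvQ \stEnvMoveWithSession[\mpS] \stEnvQi$ types the residual. The paper resolves this differently: because the conclusion is existential in $\stEnvQi$, you are free to \emph{choose} which transition of $\stEnvQ$ to exhibit. If some $\mpP[\roleP]$ has a leading delay (deterministic or time-consuming), pick a time transition and stop there---\Cref{lem:time_passing_typing} (preceded by \inferrule{\iruleMPRedDet} if needed) finishes that case outright. The send/receive cases are reserved for processes that are already at a selection/branching prefix up to process-definition unfolding via \inferrule{\iruleMPRedCall}, which---unlike \inferrule{\iruleMPRedDelay}---leaves the typing environment unchanged. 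This is exactly why the paper's sketch in the \inferrule{\iruleTCtxSend}/\inferrule{\iruleTCtxRcv} cases says only ``possibly within a process definition'' and ``possibly after a finite number of transitions under rule \inferrule{\iruleMPRedCall}'', and never mentions a preceding time step.
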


\vspace{-1.2ex}
\begin{example}
\label{ex:session_fidelity}
Consider the processes $\mpQ[\roleFmt{Sen}]$, $\mpQ[\roleFmt{Sat}]$, $\mpQ[\roleFmt{Ser}]$ from~\Cref{ex:calculus_reductions}, the process $\mpQ[3]$ from~\Cref{ex:subject_reduction}, 
and the typing environment $\stEnv[\text{data}]$ from~\Cref{ex:assoc}. 
 $\mpQ[\roleFmt{Sen}]$, $\mpQ[\roleFmt{Sat}]$, and $\mpQ[\roleFmt{Ser}]$ only play 
 roles $\roleFmt{Sen}$, $\roleFmt{Sat}$, and $\roleFmt{Ser}$, respectively, in $\mpS$, which can be easily verified. 
 
 As demonstrated in~\Cref{ex:subject_reduction}, $\mpQ[3]$ is typed by $\stEnv[\text{data}] + 6.5$, satisfying all prerequisites specified in~\Cref{lem:aat-mpst-session-fidelity-global}. 
Consequently, given $\stEnv[\text{data}] + 6.5 \stEnvQTMoveQueueAnnot{\roleFmt{Sen}}{\roleFmt{Sat}}{\stLabFmt{Data}} \stEnvi[\text{data}]$,  there exists $\mpQ[4] = \mpNil \mpPar \mpSessionQueueO{\mpS}{\roleFmt{Sen}}{\mpQueueCons{ \mpQueueEmpty}{\mpQueueCons{%
    \mpQueueOElem{\roleFmt{Sat}}{\mpLabFmt{Data}}{}
  }{%
    \mpQueueEmpty%
  }}} \mpPar \kills{\mpS} \mpPar \mpSessionQueueO{\mpS}{\roleFmt{Sat}}{\mpQueueEmpty} \mpPar \timePass{0.5}{\mpQi[\roleFmt{Ser}]} \mpPar \mpSessionQueueO{\mpS}{\roleFmt{Ser}}{\mpQueueEmpty} $, resulting from  $\mpQ[3] \mpMove \mpQ[4]$ via \inferrule{\iruleMPRedOut}, such that 
  $\stEnvi[\text{data}]$ can type
  $\mpQ[4]$, with $\stEnvi[\text{data}]$ and $\mpQ[4]$ fulfilling  the single session requirements of session fidelity.  
\end{example}

\vspace{-1.3em}
\subparagraph*{Deadlock-Freedom} ensures that a process can always either progress via reduction or  terminate properly. In our system, where time can be infinitely reduced and session killings may occur during reductions,  
deadlock-freedom implies that if a process cannot undergo any further instantaneous (communication) reductions,  and if any subsequent  time reduction leaves it unchanged, then it contains only inactive or 
kill  sub-processes.
This desirable runtime property is guaranteed by processes constructed from timed global types. 
We formalise the property in~\Cref{def:aat-mpst-process-properties}, and conclude, in~\Cref{lem:aat-mpst-process-df-proj}, that a typed ensemble of processes interacting on a single session, restricted by 
a typing environment  $\stEnv$ associated with a timed global type $\gtWithTime{\cVal^{0}}{\gtG}$, 
is deadlock-free. The proof is available in~\Cref{sec:app-aat-mpst-sf-proof}\iftoggle{full}{}{ of the full version}.

\begin{definition}[Deadlock-Free Process]%
  \label{def:aat-mpst-process-properties}%
  \label{def:async-process-df}%
  \label{def:async-process-liveness}%
$\mpP$ is \emph{deadlock-free} if and only if   $\mpP \!\mpMoveStar\! \mpnonTNotMoveP{\mpPi}$  and $\forall \cUnit \geq 0: \timePass{\cUnit}{\mpPi} = \mpPi$~(recall that $\timePass{\cUnit}{\cdot}$ is a time-passing function defined in~\Cref{fig:aat-mpst-time-passing})~implies
$\mpPi \equiv \mpNil \mpPar \Pi_{i \in I} \kills{\mpS[i]}$. 
\end{definition}

\vspace{-.7ex}
\begin{restatable}[Deadlock-Freedom]{theorem}{corATMPProcessDfProj}
  \label{lem:aat-mpst-process-df-proj}%
Assume 
  $\stJudge{\mpEnvEmpty\!}{\!\stEnvEmpty\!}{\!\mpP}$,
  where
$\mpP \equiv %
 \mpRes{\stEnvMap{\mpS}{\stEnvQ}}{\mpBigPar{\roleP \in \gtRoles{\gtG}}{%
    \mpP[\roleP]%
  }}$,  $\stEnvAssoc{\gtWithTime{\cVal^{0}}{\gtG}}{\stEnvQ}{\mpS}$,
 and each $\mpP[\roleP]$ is either $\mpNil$~(up to $\equiv$), 
 or only plays $\roleP$ in $\mpS$.  
 Then,  $\mpP$ is deadlock-free.
  \end{restatable}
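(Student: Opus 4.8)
The plan is to derive deadlock-freedom as a corollary of session fidelity (\Cref{lem:aat-mpst-session-fidelity-global}) together with the operational correspondence between timed global types and associated typing environments (\Cref{lem:comp_proj,lem:sound_proj}). First I would set up the induction invariant: suppose $\mpP \mpMoveStar \mpPi$ with $\mpPi$ such that $\mpPi \mpnonTMove\kern-1em/\kern0.5em$ and $\timePass{\cUnit}{\mpPi} = \mpPi$ for all $\cUnit \geq 0$. By repeated application of subject reduction (\Cref{lem:sr_global}), $\mpPi$ is still well-typed, $\emptyset \stEnvMoveStar \stEnvi$ for some $\stEnvi$, $\stJudge{\mpEnvEmpty}{\stEnvi}{\mpPi}$, and $\stEnvi$ restricted to $\mpS$ is still associated with some $\gtWithTime{\cVali}{\gtGi}$. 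Moreover I would check — reusing the single-session structure preserved through subject reduction, as already observed in the proof of session fidelity — that $\mpPi$ retains the shape $\mpRes{\stEnvMap{\mpS}{\stEnvi[\mpS]}}{(\mpBigPar{\roleP \in \gtRoles{\gtG}}{\mpPi[\roleP]} \mpPar \kills{\mpQi})}$ with each $\mpPi[\roleP]$ either $\equiv \mpNil$ or only playing $\roleP$ in $\mpS$, so that the hypotheses of \Cref{lem:aat-mpst-session-fidelity-global} apply to $\mpPi$ and $\stEnvi[\mpS]$.

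The core of the argument is then a case analysis on whether the associated timed global type $\gtGi$ (for the clock valuation $\cVali$) can still move. If $\gtWithTime{\cVali}{\gtGi} \gtMove$, then by \Cref{lem:sound_proj} there exist $\cValii$, $\stEnvAnnotGenericSymi$, etc.\ with $\stEnvAssoc{\gtWithTime{\cValii}{\gtGi'}}{\stEnvii}{\mpS}$, a matching move of the associated environment, so the environment restricted to $\mpS$ can reduce; session fidelity then forces $\mpPi$ to perform at least one reduction $\mpPi \mpMoveStar \mpPii$ with $\mpPii \neq \mpPi$ up to $\equiv$. I need to argue this reduction is not purely time-consuming — the sound-move label $\stEnvAnnotGenericSymi$ produced may be $\timeLab$, but a time move of the environment by a genuinely positive amount would have to be matched (via session fidelity and the shape of $\timePass{}{}$) by a corresponding time action in $\mpPi$; however $\timePass{\cUnit}{\mpPi}=\mpPi$ for all $\cUnit$, which (inspecting \Cref{fig:aat-mpst-time-passing}) can only hold if no endpoint in $\mpPi$ carries a pending finite timeout or deterministic delay. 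This means whenever the environment offers a communication step (a $\stEnvQQueueAnnotSmall{}{}{}$ or $\stEnvQRecvAnnotSmall{}{}{}$ label), $\mpPi$ must actually match it with an instantaneous reduction, contradicting $\mpPi \mpnonTMove\kern-1em/\kern0.5em$; and if the only available global move is $\iruleGtMoveTime$, then I must show the global type has actually terminated — which is the case analysis below. So the stuck-and-time-invariant hypothesis is only consistent with $\gtGi$ being (equivalent to) $\gtEnd$, possibly under recursion binders that unfold to $\gtEnd$, with all transmissions-en-route drained.

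In the terminal case, where $\gtGi \gtMove$ yields nothing but time actions (or nothing at all), I would use the association definition (\Cref{def:assoc}): the only way the $\gtG$-part and $\Delta$-part of $\stEnvi[\mpS]$ are both forced to be end/empty is clause~\ref{item:assoc:end}, so $\stEnvi[\mpS]$ maps every endpoint to $\stMPair{\stCPair{\cVal[\roleP]}{\stEnd}}{\stQEmptyType}$ — i.e.\ $\stEnvEndP{\stEnvi[\mpS]}$ and all queues empty. Then I invert the typing derivation $\stJudge{\mpEnvEmpty}{\stEnvi}{\mpPi}$ component-wise: an endpoint typed $\stCPair{\cVal}{\stEnd}$ with no continuation and no pending messages forces the corresponding $\mpPi[\roleP]$ to be $\mpNil$ (it cannot be a selection/branching by \inferrule{\iruleMPSel}/\inferrule{\iruleMPBranch} since those need a non-$\stEnd$ type; it cannot be a delay or timeout-failure without violating the time-invariance hypothesis; a \trycatchB{} could appear but its $\mpFmt{\proclit{try}}$ part would need a prefix at $\mpS$, again contradicting $\stEnd$-typing after the session drained, and note each $\mpPi[\roleP]$ here consists of the role-process plus its empty queue process $\mpSessionQueueO{\mpS}{\roleP}{\mpQueueEmpty}$, the latter being $\equiv$-absorbed by the congruence rule $\mpRes{\mpS}{(\mpSessionQueueO{\mpS}{\roleP[1]}{\mpQueueEmpty} \mpPar \cdots) \equiv \mpNil}$). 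Collecting the $\kills{\mpS[i]}$ sub-processes (which are well-typed under ended environments by \inferrule{\iruleMPKill} and survive reduction), I conclude $\mpPi \equiv \mpNil \mpPar \Pi_{i \in I} \kills{\mpS[i]}$, which is exactly \Cref{def:aat-mpst-process-properties}.

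The main obstacle I anticipate is the careful handling of time in the interface between session fidelity and the ``time-invariant'' hypothesis $\timePass{\cUnit}{\mpPi}=\mpPi$. Session fidelity as stated only guarantees that \emph{some} environment reduction can be simulated, and the environment has moves labelled by $\timeLab$ available at essentially every state (rule \inferrule{\iruleTCtxTimeSession} etc.); so I must argue that these particular time moves are vacuous on $\mpPi$ — precisely because $\timePass{\cUnit}{\mpPi}=\mpPi$ means no finite timeout or $\delay{\cUniti}{\cdot}$ is present — and hence that the relevant non-trivial environment reductions are communication reductions that $\mpPi$ is obliged to match, contradicting $\mpnonTMove\kern-1em/\kern0.5em$ unless the session has genuinely terminated. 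Threading this observation through the sound-move of \Cref{lem:sound_proj} (whose label $\stEnvAnnotGenericSymi$ might itself be a time label) and the over-approximation in session fidelity is the delicate part; everything else is a routine induction plus inversion of the typing rules in \Cref{fig:aat-mpst-rules-main}.
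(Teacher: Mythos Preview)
Your proposal has a genuine gap in the central case analysis. You conclude that ``the stuck-and-time-invariant hypothesis is only consistent with $\gtGi$ being (equivalent to) $\gtEnd$''---but this is false. The process $\mpPi$ can be stuck and time-invariant while the associated global type is still a non-trivial transmission: this happens exactly when a $\kills{\mpS}$ has been generated (via timeout) and has cancelled all remaining activity on $\mpS$. In that situation the typing environment still carries non-$\stEnd$ entries for the roles of $\gtGi$, absorbed by rule \inferrule{\iruleMPKill}, and the associated $\gtGi$ need not be $\gtEnd$. The paper's own \Cref{ex:calculus_reductions,ex:subject_reduction} exhibits precisely this: the final $\mpNil \mpPar \kills{\mpS}$ is typed by an environment associated with a global type that has \emph{not} reached $\gtEnd$. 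Your inversion argument in the ``terminal case'' therefore does not apply.

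Relatedly, the obstacle you flag yourself---that session fidelity only guarantees $\mpPi$ matches \emph{some} environment reduction, and time reductions are always available---is not resolved in your sketch and cannot be resolved along the lines you suggest. The paper avoids this problem by a different decomposition: it splits on whether the \emph{typing environment} $\stEnv[n]$ of $\mpPi$ is all-$\stEnd$/empty-queue. If yes, typing inversion gives $\mpPi \equiv \mpNil \mpPar \kills{\mpQ}$ directly. If no, the paper argues that along the reduction sequence a timeout must have fired, producing $\kills{\mpS}$; it then invokes a separate key lemma (\Cref{lem:kill_df}) showing that once $\kills{\mpS}$ is present, any stuck time-invariant process of single-session shape must be $\equiv \mpNil \mpPar \kills{\mpS}$. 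That lemma is proved by a direct case analysis on each $\mpPi[\roleP]$: selections fire via \inferrule{\iruleMPRedOut}; branchings either fire via \inferrule{\iruleMPRedIn} or are cancelled via \inferrule{\iruleMPRedCanIn}; $\trycatch{}{}$ is absorbed via \inferrule{\iruleMPCCat}; delays and finite timeouts violate time-invariance; etc. This cancellation-driven shape analysis is the missing ingredient in your plan---it is what makes the affine/kill mechanism do real work in the deadlock-freedom argument, rather than relying on the global type running to $\gtEnd$.
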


\begin{example}
\label{ex:df}
Considering the processes $\mpQ[\roleFmt{Sen}]$, $\mpQ[\roleFmt{Sat}]$, and $\mpQ[\roleFmt{Ser}]$ from~\Cref{ex:calculus_reductions}, along with the typing environment $\stEnv[\text{data}]$ and timed global type $\gtWithTime{\cVal^{0}}{\gtG[\text{data}]}$ from~\Cref{ex:assoc}, we observe in~\Cref{ex:assoc,ex:session_fidelity,ex:typing_system} that $\stEnvAssoc{\gtWithTime{\cVal^{0}}{\gtG[\text{data}]}}{\stEnv[\text{data}]}{\mpS}$, $\mpQ[\roleFmt{Sen}]$, $\mpQ[\roleFmt{Sat}]$, and $\mpQ[\roleFmt{Ser}]$ only play 
$\roleFmt{Sen}$, $\roleFmt{Sat}$, and $\roleFmt{Ser}$ in $\mpS$, respectively, with $\stJudge{\mpEnvEmpty}{\stEnvEmpty}{\mpRes{\stEnvMap{\mpS}{\stEnvQ[\text{data}]}}{\mpQ[\roleFmt{Sen}] \mpPar \mpQ[\roleFmt{Sat}] \mpPar \mpQ[\roleFmt{Ser}]
}}$. Following the reductions described in~\Cref{ex:calculus_reductions}, we obtain $\mpRes{\stEnvMap{\mpS}{\stEnvQ[\text{data}]}}{\mpQ[\roleFmt{Sen}] \mpPar \mpQ[\roleFmt{Sat}] \mpPar \mpQ[\roleFmt{Ser}]}
\mpMoveStar \mpnonTNotMoveP{\mpPi}$, where $\mpPi = \mpRes{\stEnvMap{\mpS}{\stEnvQ[\text{data}]}}{\kills{\mpS} \mpPar \mpNil \mpPar
\mpSessionQueueO{\mpS}{\roleFmt{Sen}}{\mpQueueEmpty} \mpPar
\kills{\mpS} \mpPar \mpSessionQueueO{\mpS}{\roleFmt{Sat}}{\mpQueueEmpty}
\mpPar
\mpNil \mpPar \mpSessionQueueO{\mpS}{\roleFmt{Ser}}{\mpQueueEmpty}} \equiv \mpNil \mpPar \mpS$. Moreover, $\mpPi$ cannot undergo any further \emph{meaningful} time reduction, \ie $\timePass{\cUnit}{\mpNil \mpPar \kills{\mpS}} =  \mpNil \mpPar \kills{\mpS}$ for any $\cUnit$, indicating the deadlock-freedom of 
$\mpRes{\stEnvMap{\mpS}{\stEnvQ[\text{data}]}}{\mpQ[\roleFmt{Sen}] \mpPar \mpQ[\roleFmt{Sat}] \mpPar \mpQ[\roleFmt{Ser}]}$.
\end{example}

\section{Design and Implementation of \timedmulticrusty}
\label{sec:implementation:implementation}
\label{SEC:IMPLEMENTATION:IMPLEMENTATION}
In this section,
we present our toolchain,~\timedmulticrusty, an implementation of~\ATMP in~\Rust.
\timedmulticrusty is designed with two primary goals:
correctly cascading failure notifications,
and effectively representing and enforcing time constraints. 
To achieve the first goal,  
we use~\Rust's native \CODE{?}-operator along with 
optional types, inspired by~\cite{lagaillardie2022Affine}.
For the second objective, we begin by 
discussing the key challenges encountered during implementation. 

\smallskip
\noindent
\emph{Challenge 1:  Representation of Time Constraints.} %
To handle asynchronous timed communications using~\ATMP, 
we define a time window~($\ccst$ in \ATMP)
and a corresponding behaviour for each operation. 
Addressing this constraint involves two subtasks: 
creating and using clocks in \Rust, and representing
all clock constraints as shown in~\Cref{sec:session-calculus}.
 \Rust allows the creation of virtual clocks 
that rely on the CPU's clock and provide nanosecond-level accuracy.
Additionally, it is crucial to ensure that different behaviours can involve
blocking or non-blocking communications,
pre- or post-specific time tags,
or falling within specified time bounds.

\smallskip
\noindent
\emph{Challenge 2: Enforcement of Time Constraints.} %
To effectively enforce time windows,
implementing reliable and accurate clocks and using them correctly
is imperative.
This requires addressing all cases related to time
constraints properly: %
clocks may be considered unreliable if they skip ticks,
do not strictly increase,
or if the API for clock comparison does not yield
results quickly enough.
Enforcing time constraints in \timedmulticrusty
involves using two libraries: 
the \crosschan \Rust library~\cite{web:rust:crossbeamChannel} 
for \emph{asynchronous} message sending and receiving, 
and the \Rust standard library \timeRust~\cite{standardTimeLibraryWebsite} for
handling and comparing virtual clocks.

 \subsection{Time Bounds in \timedmulticrusty}
\label{subsec:implementing_time_bounds}
\label{subsec:implementation:enforcing_time_constraints}

\subparagraph{Implementing Time Bounds}
To demonstrate the integration of time bounds in \timedmulticrusty,
we consider the final interaction between $\roleFmt{Sen}$ and $\roleFmt{Sat}$ in~\Cref{fig:implementation:remote_data},
specifically from $\roleFmt{Sat}$'s perspective:
 $\roleFmt{Sat}$ sends a \CODE{Close} message between time units
$5$ and $6$ (both inclusive),
following clock $C_{\roleFmt{Sat2}}$,
which is not reset afterward. %

In \timedmulticrusty,
we define the \CODE{Send} type for message transmission,
incorporating various parameters to specify requirements as
\CODE{Send<[parameter1],[parameter2],...>}. 
Assuming the (payload) type~\CODE{Close} is defined,
sending it using the \CODE{Send} type involves initiating with
\CODE{Send<Close,...>}.
If we denote $C_{\roleFmt{Sat2}}$ as \CODE{'b'},
the clock \CODE{'b'} is employed for time constraints,
expressed as
\CODE{Send<Close,'b',...>}.
Time bounds parameters in the \CODE{Send} type follow the declaration of the clock.
In this case, %
both bounds are integers within the time window,
resulting in the \CODE{Send} type being parameterised as
\CODE{Send<Close,'b',0,true,1,false,...>}.
Notably,
bounds are integers due to the limitations of \Rust's generic type parameters.

To ensure that the clock \CODE{'b'} is not reset after triggering the \CODE{send} operation,
we represent this with a whitespace char value in the \CODE{Send} type:
\CODE{Send<Close,'b',0,true,1,false,' ',...>}.
The last parameter,
known as the \emph{continuation},
specifies the operation following the sending of the integer.
In this case,
closing the connection is achieved with an \CODE{End} type.
The complete sending operation is denoted as
\CODE{Send<Close, 'b', 0, true, 1, false, ' ', End>}.

Similarly,
the \CODE{Recv} type is instantiated as
\CODE{Recv<Close,'b',0,true,1,false,' ',End>}.
The inherent mirroring of \CODE{Send} and \CODE{Recv} reflects %
their dual compatibility.
\Cref{subfig:overview_types:send,subfig:overview_types:recv}
provide an analysis of the functioning of
\CODE{Send} and \CODE{Recv},
detailing their parameters and features. %

Generic type parameters preceded by \CODE{const}
within \CODE{Send} and \CODE{Recv} types also serve as values,
representing general type categories supported by \Rust.
This type-value duality facilitates easy verification during compilation,
ensuring compatibility between communicating parties.

\vspace{-1em}
\subparagraph{Enforcing Time Bounds}
It is crucial to rely on dependable clocks and APIs to enforce time constraints. 
\Rust's standard library provides the time module~\cite{standardTimeLibraryWebsite},
enabling developers to manage clocks and measure durations between events.
This library,
utilising OS API,
offers two clock types:
\CODE{Instant} (monotonic but non-steady) and
\CODE{SystemTime} (steady but non-monotonic). 
In \timedmulticrusty, the \CODE{Instant} type serves for 
both correctly prioritising event order and implementing virtual clocks. 
Virtual clocks are maintained through a dictionary (\CODE{HashMap} in \Rust).
Initialising an empty \CODE{HashMap} is achieved with:
\CODE{let clocks = HashMap::<char,Instant>::new()}, while
introducing a new clock is done using:
\CODE{clocks.insert('a',Instant::now())},
where \CODE{'a'} is the clock's name.

\Cref{tab:implementation:primitives} details the available primitives
provided by \timedmulticrusty for sending and receiving payloads,
implementing branching,
or closing a connection.
Except for the \CODE{close} method,
all primitives require the specific
\CODE{HashMap} of clocks to enforce time constraints.

\vspace{-.5em}
\begin{table}[htbp]
    \centering
    \caption{
        Primitives available in \timedmulticrusty.
        Let \CODE{s} be an affine meshed channel;
        \CODE{p},
        a payload of a given type;
        \CODE{clocks},
        a \CODE{HashMap} of all clocks
        linked to the specific process;
        $I$,
        a subset of all roles in the protocol
        but the current role;
        and $K$,
        a subset of all branches.
    }
    \label{tab:implementation:primitives}
    \vspace{-.3em}
    \begin{tabular}{p{0.39\textwidth}p{0.51\textwidth}}
        \hline
        \CODE{let s = s.send(p, clocks)?;}   &
       {\footnotesize  If allowed by the time constraint compared to the given clock in \CODE{clocks},
        sends a payload \CODE{p} on a channel \CODE{s}
        and assigns the continuation
        of the session (a new meshed channel)
        to a new variable \CODE{s}.}
        \\
        \hline
        \CODE{let (p, s) = s.recv(clocks)?;} &
        {\footnotesize If allowed by the time constraint compared to the given clock in \CODE{clocks},
        receives a payload \CODE{p}
        on channel \CODE{s} and assigns the continuation
        of the session to a new variable \CODE{s}.}
        \\
        \hline
        \CODE{s.close()}                     &
       {\footnotesize  Closes the channel \CODE{s}
        and returns a unit type.}
        \\
        \hline
        \makecell[l]{
        \CODE{offer!\(s, clocks, \{}
        \\
        \CODE{enum}$_\texttt{i}::$ \CODE{variant}$_\texttt{k}(\texttt{e}) => \{ ... \}_{k \in K}$ \CODE{\}\)}
        }                                       &
         {\vspace{-1.6em} {\footnotesize If allowed by the time constraint compared to the given clock in \CODE{clocks},
        role \norole{i} receives
        a choice as a message label on channel \CODE{s},
        and, depending on
        the label value which should
        match one of the variants
        \CODE{variant}$_k$ of \CODE{enum}$_\texttt{i}::$,
        runs the related block of code.}}
        \\
        \hline
        \makecell[l]{
        \CODE{choose_X!\(s, clocks, \{}
        \\
        \CODE{enum}$_\texttt{i}::$ \CODE{variant}$_\texttt{k}(\texttt{e}) \}_{i \in I}$ \CODE{\}\)}
        }                                       &
       {\vspace{-1.6em} {\footnotesize  For role ${\roleFmt{X}}$, if allowed by the time constraint compared to the given clock in \CODE{clocks},
        sends the chosen label,
        corresponding to \CODE{variant}$_k$
        to all other roles.}}
        \\
        \hline
    \end{tabular}
    \vspace{-.58em}
\end{table}

\vspace{-1em}
\subparagraph{Verifying Time Bounds}
Our \CODE{send} and \CODE{recv} primitives
use a series of conditions to ensure the integrity of a time window.
The verification process adopts a \emph{divide-and-conquer} strategy,
validating the left-hand side time constraint for each clock before assessing the right-hand side constraint.
The corresponding operation, whether sending or receiving a payload, is executed only after satisfying these conditions.
This approach guarantees the effective enforcement of time constraints without requiring complex solver mechanisms.

 \subsection{Remote Data Implementation}
\label{subsec:implementation:implenting_running_example}

\begin{figure}[t!]
\begin{minipage}[t]{0.49\textwidth}
\begin{rustlisting}
type EndpointSerData = MeshedChannels< *@ \label{line:short:name} @*
 Send<GetData, 'a', 5, true,5, true, ' ',*@ \label{line:short:Server:type:satellite} @*
  Recv<Data, 'a', 6, true, 7, true, 'a', End>>,
  End,*@ \label{line:short:Server:type:sensor} @*
 RoleSat<RoleSat<RoleBroadcast>>,*@ \label{line:short:Server:type:stack} @*
 NameSer>;*@ \label{line:short:Server:type:name} @*
\end{rustlisting}
\vspace{-4mm}
\vspace{-3mm}
\end{minipage}\hfill
\begin{minipage}[t]{0.49\textwidth}
{\lstset{firstnumber=7}\begin{rustlisting}
fn endpoint_data_ser(*@ \label{line:short:Server:function:endpoint_ser} @*
 s: EndpointSerData,
 clocks: &mut HashMap<char, Instant>,
) -> Result<(), Error> { [...]*@ \label{line:short:Server:function:endpoint_ser:start} @*
 let s = s.send(GetData {}, clocks)?;*@ \label{line:short:Server:function:recurs_ser:send:GetData} @*
 let (_data, s) = s.recv(clocks)?;[...]}*@ \label{line:short:Server:function:recurs_ser:recv:Data} @*
\end{rustlisting}}
\end{minipage}
\vspace{-.5em}
\caption{Types (left) and primitives (right) for ${\roleFmt{Ser}}$.}
\label{fig:implementation:types_primitives_server}
\vspace{-1em}
\end{figure}

\subparagraph{Implementation of Server} 
\Cref{fig:implementation:types_primitives_server} explores our \timedmulticrusty implementation of
${\roleFmt{Ser}}$ in the remote data protocol~(\Cref{fig:implementation:remote_data}).
Specifically, the left side of 
\Cref{fig:implementation:types_primitives_server} 
delves into  the \CODE{MeshedChannels} type,
representing the behaviour
of ${\roleFmt{Ser}}$ in the first branch and encapsulating various elements.
In~\timedmulticrusty, the \CODE{MeshedChannels} type incorporates $n + 1$ parameters,
where $n$ is the count of roles in the protocol.
These parameters include the role's name,
$n - 1$ binary channels for interacting with other roles,
and a stack dictating the sequence of binary channel usage.
All types relevant to ${\roleFmt{Ser}}$ are depicted in~\Cref{fig:implementation:types_primitives_server} (left).

The alias \CODE{EndpointSerData}, as indicated in~\Cref{line:short:name}, represents the \CODE{MeshedChannels} type. Binary types, defined in \Crefrange{line:short:Server:type:satellite}{line:short:Server:type:sensor}, facilitate communication between ${\roleFmt{Ser}}$, ${\roleFmt{Sat}}$, and ${\roleFmt{Sen}}$. When initiating communication with ${\roleFmt{Sat}}$, ${\roleFmt{Ser}}$ sends a \CODE{GetData} message in \Cref{line:short:Server:type:satellite}, receives a \CODE{Data} response, and ends  communication on this binary channel. These operations use the clock \CODE{'a'} and adhere to  time windows  between $5$ and $6$ seconds for the first operation and between $6$ and $7$ seconds for the second. Clock \CODE{'a'} is reset only within the second operation. The order of operations is outlined in \Cref{line:short:Server:type:stack}, where ${\roleFmt{Ser}}$ interacts twice with ${\roleFmt{Sat}}$ using \CODE{RoleSat} before initiating a choice with \CODE{RoleBroadcast}. \Cref{line:short:Server:type:name} designates ${\roleFmt{Ser}}$ as the owner of the \CODE{MeshedChannels} type. The behaviour of all roles in each branch can be specified similarly.

The right side of~\Cref{fig:implementation:types_primitives_server} illustrates the usage of \CODE{EndpointSerData} as an input type in the \Rust function \CODE{endpoint_data_ser}. The function's output type, \CODE{Result<(), Error>}, indicates the utilization of affinity in \Rust. In \Cref{line:short:Server:function:recurs_ser:send:GetData}, variable \CODE{'s'}, of type \CODE{EndpointSerData}, attempts to send a contentless message %
\CODE{GetData}. The \CODE{send} function can return either a value resembling \CODE{EndpointSerData} or an \CODE{Error}. If the clock's time does not adhere to the time constraint displayed in \Cref{line:short:Server:type:satellite} with respect to the clock \CODE{'a'} from the set of clocks \CODE{clocks}, an \CODE{Error} is raised. Similarly, in \Cref{line:short:Server:function:recurs_ser:recv:Data}, ${\roleFmt{Ser}}$ attempts to receive a message using the same set of clocks. Both \CODE{send} and \CODE{recv} functions verify compliance with time constraints by comparing the relevant clock provided in the type for the time window and resetting the clock if necessary.

\vspace{-1em}
\subparagraph{Error Handling} 

The error handling capabilities of \timedmulticrusty cover various potential errors that may arise during protocol implementation and execution. These errors include the misuse of generated types and timeouts, 
showcasing the flexibility of our implementation in verifying communication protocols. For instance, if Lines \ref{line:short:Server:function:recurs_ser:send:GetData} and \ref{line:short:Server:function:recurs_ser:recv:Data} in~\Cref{fig:implementation:types_primitives_server} are swapped, the program will fail to compile because it expects a \CODE{send} primitive in \Cref{line:short:Server:function:recurs_ser:send:GetData}, as indicated by the type of \CODE{'s'}. Another compile-time error occurs when a payload with the wrong type is sent. For example, attempting to send a \CODE{Data} message instead of a \CODE{GetData} in \Cref{line:short:Server:function:recurs_ser:send:GetData} will result in a compilation error. \timedmulticrusty can also identify errors at runtime. If the content of the function \CODE{endpoint_data_ser}, spanning in \Crefrange{line:short:Server:function:endpoint_ser:start}{line:short:Server:function:recurs_ser:recv:Data}, is replaced with a single \CODE{Ok(())}, the code will compile successfully. However, during runtime, the other roles will encounter failures as they consider ${\roleFmt{Ser}}$ to have failed.

Timeouts are handled dynamically within~\timedmulticrusty.
If a time-consuming task with a 10-second delay is introduced
between Lines \ref{line:short:Server:function:recurs_ser:send:GetData}
and \ref{line:short:Server:function:recurs_ser:recv:Data}, 
${\roleFmt{Ser}}$ will enter a sleep state for the same duration. Consequently, the \CODE{recv} operation
in \Cref{line:short:Server:function:recurs_ser:recv:Data} will encounter a time constraint violation,
resulting in the failure and termination of ${\roleFmt{Ser}}$.
Furthermore, the absence of clock \CODE{'a'} in the set of clocks, where it is required for a specific primitive,
will trigger a runtime error, as the evaluation of time constraints depends on %
the availability of the necessary clocks.

\begin{figure}[t!]
\centering
  \begin{SCRIBBLELISTING}
global protocol remote\_data(role Sen, role Sat, role Ser){ *@ \label{line:implementation:global_protocol:start} @*
  rec Loop { *@ \label{line:implementation:global_protocol:line:rec} @*
    choice at Ser { *@ \label{line:implementation:global_protocol:line:choice} @*
      GetData() from Ser to Sat within [5;6] using a and resetting (); *@ \label{line:implementation:global_protocol:line:GetData1} @*
      GetData() from Sat to Sen within [5;6] using b and resetting (); *@ \label{line:implementation:global_protocol:line:GetData2} @*
      Data() from Sen to Sat within [6;7] using b and resetting (b); *@ \label{line:implementation:global_protocol:line:Data1} @*
      Data() from Sat to Ser within [6;7] using a and resetting (a); *@ \label{line:implementation:global_protocol:line:Data2} @*
      continue Loop *@ \label{line:implementation:global_protocol:line:loop} @*
    } or {
      Close() from Ser to Sat within [5;6] using a and resetting (); *@ \label{line:implementation:global_protocol:line:Stop1} @*
      Close() from Sat to Sen within [5;6] using b and resetting (); *@ \label{line:implementation:global_protocol:line:Stop2} @* } } }
\end{SCRIBBLELISTING}
\vspace{-1em}
  \caption{Remote data protocol in \timednuscr.}
  \label{fig:implementation:global_protocol}
  \vspace{-1em}
\end{figure}

\vspace{-1em}
\subparagraph{Timed Protocol Specification}
To specify timed multiparty protocols, we extend \nuscr~\cite{zhouCFSM2021}, a multiparty protocol description language, with time features, resulting in \timednuscr. Additional keywords such as \lstscribble{within}, \lstscribble{using}, and \lstscribble{and resetting} are incorporated in \nuscr to support the specification of time windows, clocks, and resets, respectively. In \Cref{fig:implementation:global_protocol}, we illustrate the \timednuscr protocol for remote data, showcasing the application of these enhancements.  
\timednuscr ensures the accuracy of timed multiparty protocols by verifying interactions, validating time constraints, handling clock increments, and performing standard \MPST protocol checks.

\section{Evaluation: Expressiveness, Case Studies and Benchmarks}
\label{sec:evaluation}
\label{SEC:EVALUATION}
We evaluate our toolchain~\timedmulticrusty from two perspectives: \emph{expressivity} and \emph{feasibility}. 
In terms of expressivity, we implement protocols from the session type literature~\cite{hu2016Hybrid,neykova2013Spy,fielding2014Hypertext,jia2016Monitors,huSessionBased2008,postel1982Rfc0821}, 
as well as newly introduced protocols derived from real-world applications~\cite{DBLP:journals/sensors/ChenZLCJGYAN22,servoWebEngineBuggy,androidMotionSensors,Pine64,DBLP:journals/tches/WoutersMAGP19}. 
Regarding feasibility, we compare our system to~\multicrusty~\cite{lagaillardie2022Affine}, an untimed implementation of affine synchronous \MPST, demonstrating that our tool introduces negligible 
compile-time and runtime overhead in all cases, as expected.

\smallskip
\noindent
\emph{Setup.} 
For benchmarking purposes, we employ \criterionRust \cite{web:rust:criterion}, configuring it to utilise bootstrap sampling \cite{dixon2006bootstrap,efron1992bootstrap} for each protocol. We collect samples of size 100,000 and report the average execution time with a 95\% confidence interval. Additionally, we benchmark compilation time by compiling each protocol 100 times and reporting the mean. Compilation is performed using the commands \CODE{cargo clean} and \CODE{cargo build [name of the file]}. The \CODE{cargo clean} command removes all compiled files, including dependencies, while the second command, without additional parameters, compiles the given file and its dependencies without optimising the resulting files.
The benchmarking machine configuration is a workstation with an
Intel\textregistered~
Core\texttrademark~
i7-7700K CPU @ 4.20 GHz
with 8 threads and 32 GB of RAM,
Manjaro Linux 23.1.4,
Linux kernel 6.6.25-1,
along with the latest versions of
\rustup (1.27.0) and the \Rust cargo compiler (1.77.1).

\subsection{Performance: \timedmulticrusty vs. \multicrusty}
\label{subsec:evaluation:benchmarks}

When comparing \timedmulticrusty with  \multicrusty,
we evaluate their performance on two standard benchmark protocols: the \emph{ring} and \emph{mesh} protocols.
The ring protocol involves sequentially passing a message through roles,
while the mesh protocol requires each participant to send a message to every other.
Both protocols underwent 100 iterations within a time window of 0 to 10 seconds.
Benchmark results for roles ranging from 2 to 8 are presented in~\Cref{fig:benchmark_results_examples} (top).

In the \emph{ring} protocol, compile-time benchmarks (\Cref{fig:benchmark_results:compile:ring}) indicate that \timedmulticrusty experiences a marginal slowdown of less than 2\% with 2 roles, but achieves approximately  
5\% faster compilation time with 8 roles. Regarding runtime benchmarks (\Cref{fig:benchmark_results:running:ring}), \multicrusty demonstrates a 15\% speed advantage with 2 roles, which decreases to 5\% with 8 roles. The overhead remains consistent, with a difference of less than 0.5 ms at 6, 7, and 8 roles.

In the \emph{mesh} protocol, where all roles send and receive messages (compile-time benchmarks in \Cref{fig:benchmark_results:compile:mesh} and runtime benchmarks in \Cref{fig:benchmark_results:running:mesh}), \timedmulticrusty compiles slightly slower (less than 1\% at 2 roles, 
4\% at 8 roles) and runs slower as well (less than 1\% at 2 roles, 15\% at 8 roles). Compile times for \timedmulticrusty range from 18.9 s to 26 s, with running times ranging between 0.9 ms and 11.9 ms. The  gap widens exponentially due to the increasing number of enforced time constraints. In summary, as the number of roles increases, \timedmulticrusty demonstrates a growing overhead, mainly attributed to the incorporation of additional time constraint checking.

\subsection{Expressivity and Feasibility with Case Studies}
\label{subsec:evaluation:examples}
\begin{figure}[t!]
    \begin{subfigure}[t]{0.026\textwidth}
        \includegraphics[width=\textwidth, height=7em]{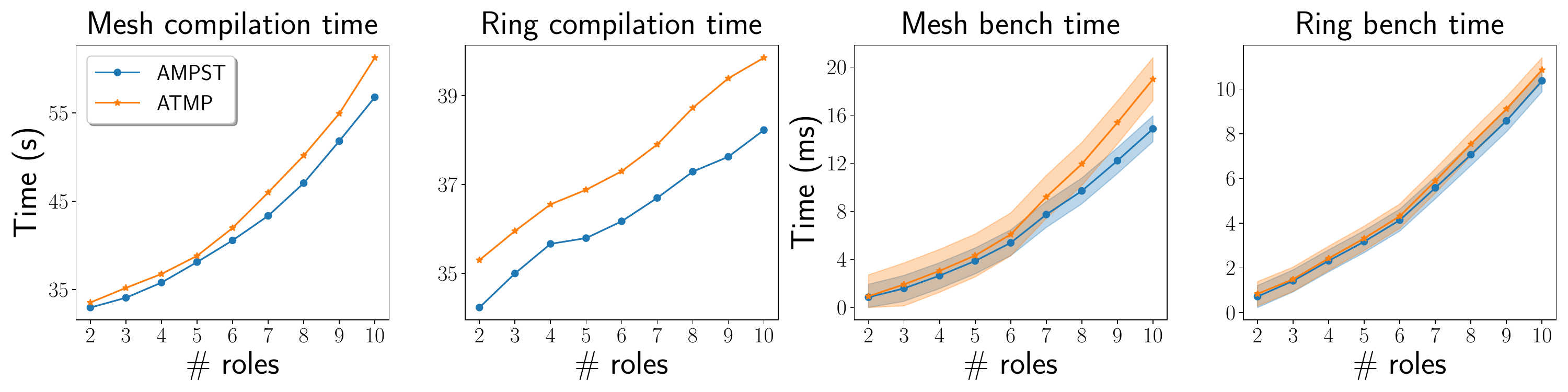}
    \end{subfigure}
    \begin{subfigure}[t]{0.23\textwidth}
        \includegraphics[width=\textwidth, height=7em]{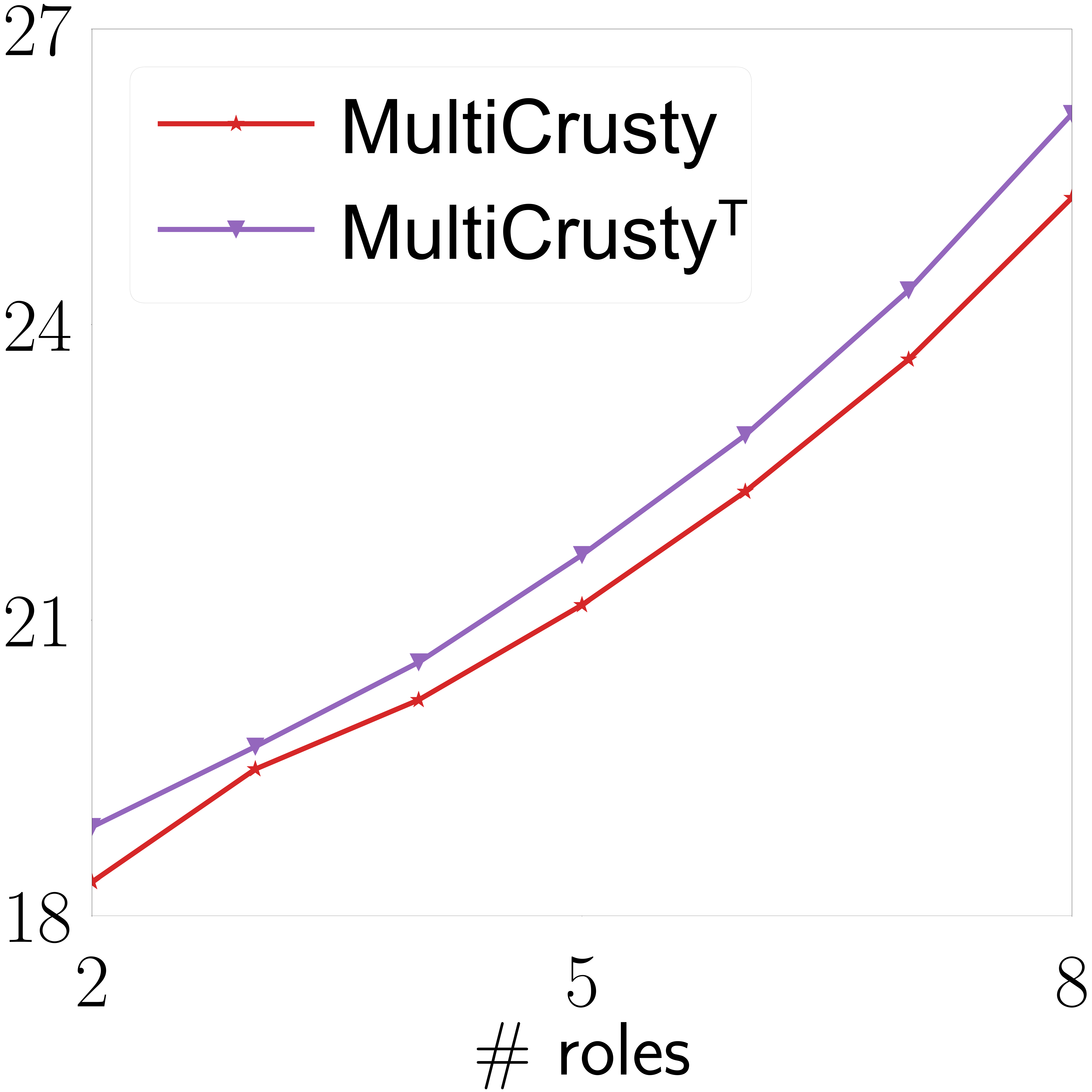}
        \caption{\emph{mesh} - compilation}
        \label{fig:benchmark_results:compile:mesh}
    \end{subfigure}
    \begin{subfigure}[t]{0.23\textwidth}
        \includegraphics[width=\textwidth, height=7em]{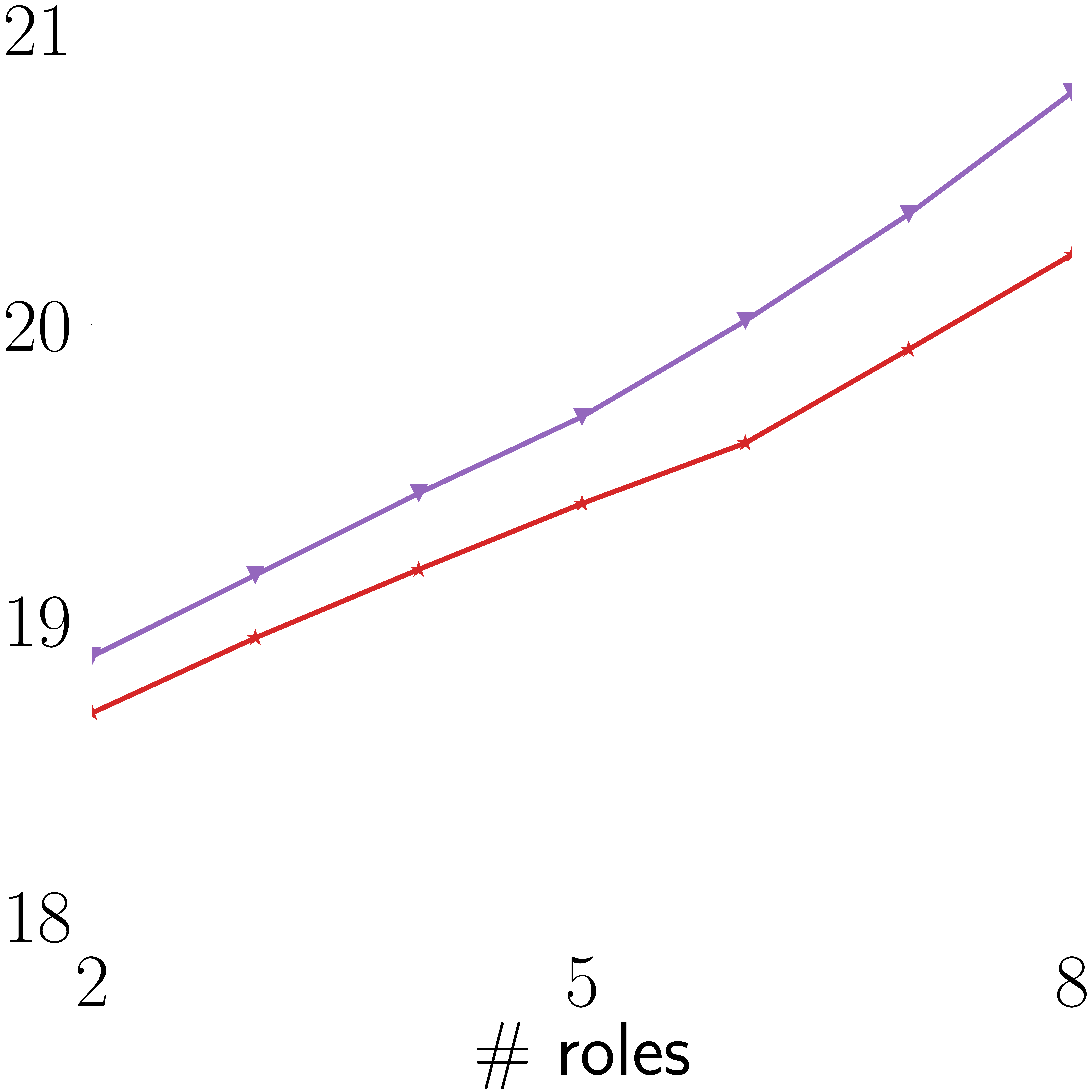}
        \caption{\emph{ring} - compilation}
        \label{fig:benchmark_results:compile:ring}
    \end{subfigure}\hfill%
    \begin{subfigure}[t]{0.026\textwidth}
        \includegraphics[width=\textwidth, height=7em]{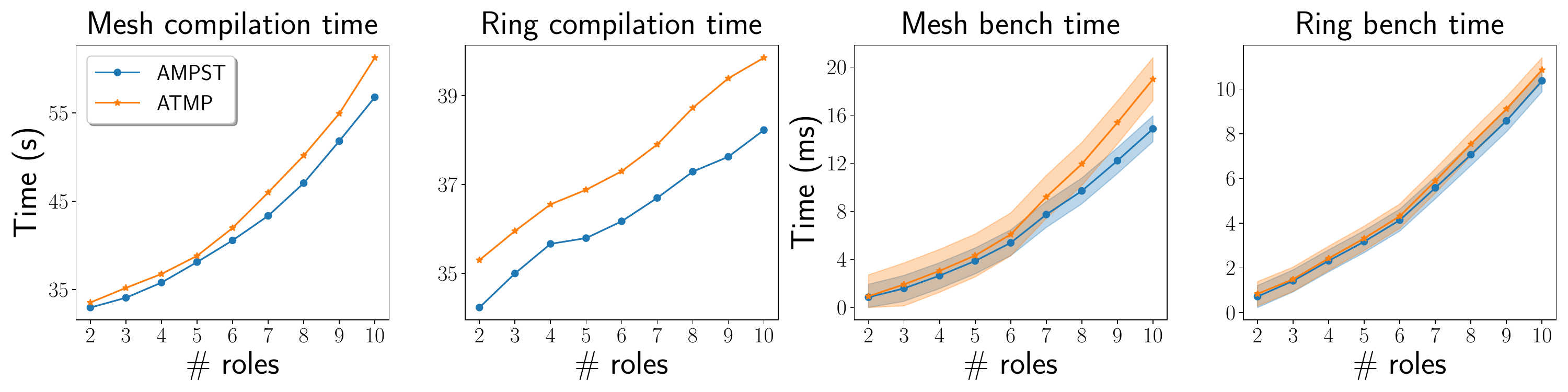}
    \end{subfigure}
    \begin{subfigure}[t]{0.23\textwidth}
        \includegraphics[width=\textwidth, height=7em]{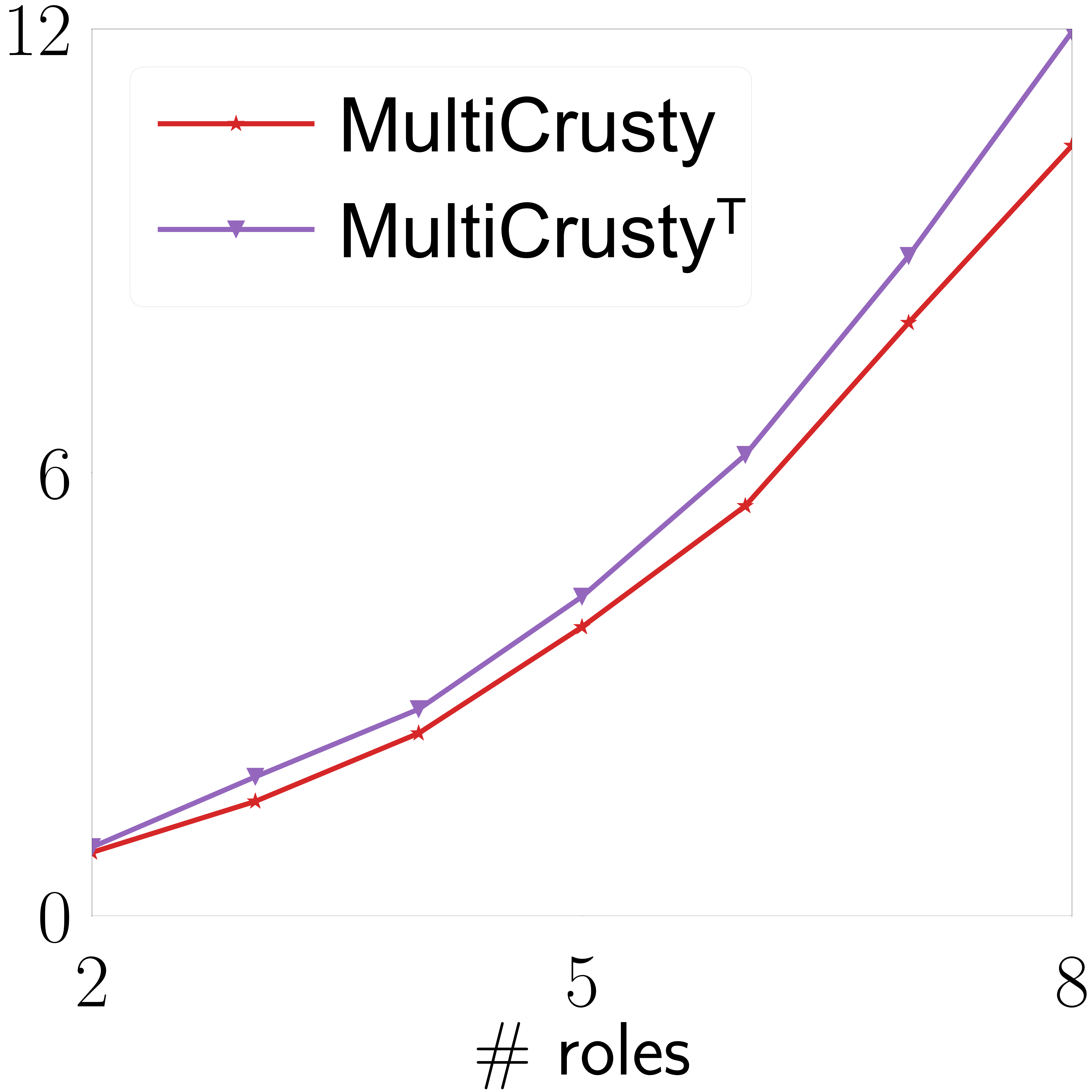}
        \caption{\emph{mesh} - runtime}
        \label{fig:benchmark_results:running:mesh}
    \end{subfigure}
    \begin{subfigure}[t]{0.23\textwidth}
        \includegraphics[width=\textwidth, height=7em]{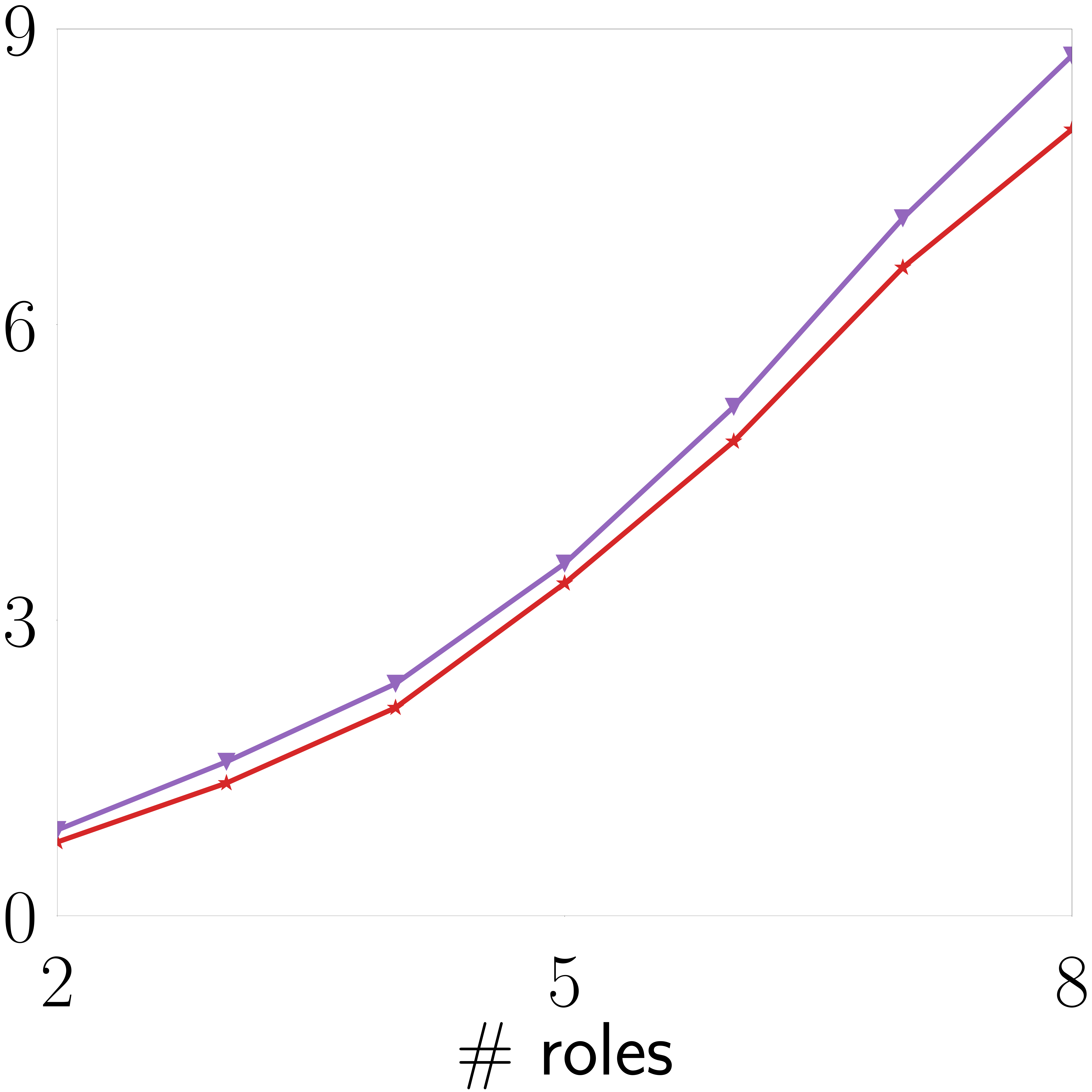}
        \caption{\emph{ring} - runtime}
        \label{fig:benchmark_results:running:ring}
    \end{subfigure}
    \begin{subfigure}[t]{\textwidth}
        \includegraphics[width=\textwidth, height=8.2em]{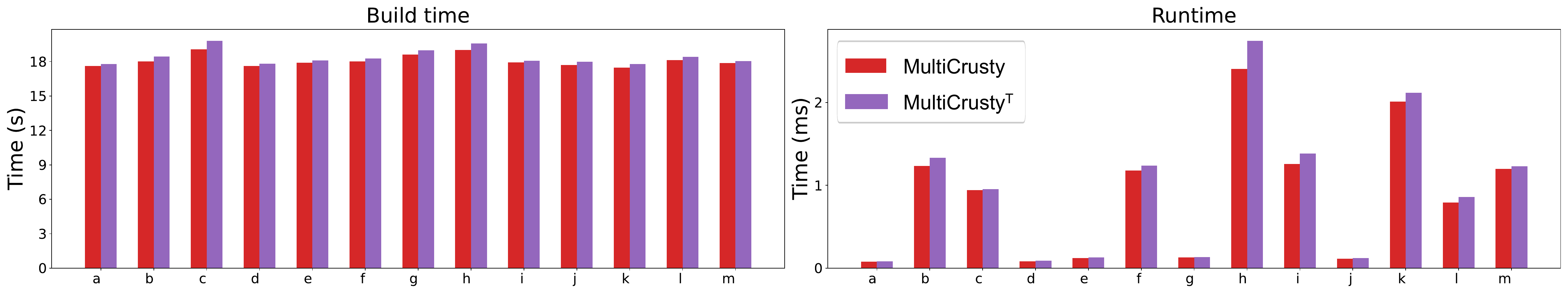}
    \end{subfigure}
    \caption{
        Top: microbenchmark results for mesh and ring protocols. 
         Bottom: benchmark results for
        Calculator~\cite{hu2016Hybrid} (a),
        Online wallet~\cite{neykova2013Spy} (b),
        SMTP~\cite{postel1982Rfc0821} (c),
        Simple voting~\cite{hu2016Hybrid} (d),
        Three buyers~\cite{jia2016Monitors} (e),
        Travel agency~\cite{huSessionBased2008} (f),
        OAuth~\cite{neykova2013Spy} (g),
        HTTP~\cite{fielding2014Hypertext} (h),
        Remote data \cite{DBLP:journals/sensors/ChenZLCJGYAN22} (i),
        Servo \cite{servoWebEngineBuggy} (j),
        Gravity sensor \cite{androidMotionSensors} (k),
        PineTime heart rate \cite{Pine64} (l),
        and Proximity based car key \cite{DBLP:journals/tches/WoutersMAGP19} (m).  %
    }
        \label{fig:benchmark_results_examples}
    \vspace{-1em}
\end{figure}

We implement a variety of protocols to showcase the expressivity, feasibility, and capabilities of \timedmulticrusty, conducting benchmarking using both \timedmulticrusty and \multicrusty. The \CODE{send} and \CODE{recv} operations in both libraries are ordered, directed, and involve the same set of participants. Additionally, when implemented with \timedmulticrusty, these operations are enriched with time constraints and reset predicates. The benchmark results for the selected case studies, including those from prior research and five additional protocols sourced from industrial use cases~\cite{DBLP:journals/sensors/ChenZLCJGYAN22,servoWebEngineBuggy,androidMotionSensors,Pine64,DBLP:journals/tches/WoutersMAGP19}, are presented in the bottom part of \Cref{fig:benchmark_results_examples}. To ensure a fair comparison between \timedmulticrusty (${\highlightbox{bordeaux}{\color{bordeaux}bars}}$) and \multicrusty (${\highlightbox{purple}{\color{purple}bars}}$), time constraints are enforced for all examples without introducing any additional sleep or timeouts.

It is noteworthy 
that rate-based protocols~((k), (l), (m) in~\Cref{fig:benchmark_results_examples} (bottom)) from real-time systems~\cite{androidMotionSensors,Pine64,DBLP:journals/tches/WoutersMAGP19}
are implemented in~\timedmulticrusty, showcasing its expressivity in real-time applications. 
These implementations feature the establishment of consistent time constraints and a shared clock for operations with identical rates.

Consider the Car Key protocol~\cite{DBLP:journals/tches/WoutersMAGP19} as an example: the car periodically sends a wake-up message to probe the presence of the key. The communication in this protocol can 
be easily captured by session types.  
The critical aspect is the rate requirement, indicating that interactions 
from one wake-up to the next must be completed within a period of (\eg) 100 ms. 
Consequently, when implementing this protocol with \timedmulticrusty, all time constraints use the same clock, ranging between 0 and 100 ms, and the clock resets at the end of each loop. For scenarios requiring a ``\emph{generalised expansion}'',  where processes and communication operations with periods of one are not multiples of the other, we extend the main loop to match the least common multiple of the periods, following a methodology detailed in~\cite{DBLP:journals/tches/WoutersMAGP19}.

The feasibility of our tool, \timedmulticrusty, is demonstrated in~\Cref{fig:benchmark_results_examples} (bottom). The results indicate that \timedmulticrusty incurs minimal compile-time overhead, averaging approximately 1.75\%. Moreover, the runtime for each protocol remains within milliseconds, ensuring negligible impact. Notably, in the case of the HTTP protocol, the runtime comparison percentage with \multicrusty is 87.60\%, primarily attributed to the integration of 126 time constraints within it. 
The relevant implementation metrics, including multiple participants~(MP), branching, recursion~(Rec), and time constraints, are  illustrated in~\Cref{tab:examples:metrics}.

\begin{table}[h]
    \centering
    \vspace{-.5em}
    \caption{Metrics for protocols implemented in~\timedmulticrusty.}
    \vspace{-.3em}
    \footnotesize
    \resizebox{\textwidth}{!}{%
        \begin{tabular}{|l|r|c|c|c|c|c|c|}
            \hline
            Protocol                                                 &
            \makecell[c]{Generated                                     \\Types}                                                                &
            \makecell[c]{Implemented                                   \\Lines of Code}                                                                 &
            MP                                                       &
            Branching                                                &
            Rec                                                      &
            \makecell[c]{Time                                          \\Constraints} \\ %
            \hline
            \hline
            Calculator~\cite{hu2016Hybrid}                           &
            52                                                       &
            51                                                       &
            \xmark                                                   &
            \cmark                                                   &
            \cmark                                                   &
            11                                                         \\ %
            \hline
            Online wallet~\cite{neykova2013Spy}                      &
            142                                                      &
            160                                                      &
            \cmark                                                   &
            \cmark                                                   &
            \cmark                                                   &
            24                                                         \\ %
            \hline
            SMTP~\cite{postel1982Rfc0821}                            &
            331                                                      &
            475                                                      &
            \xmark                                                   &
            \cmark                                                   &
            \cmark                                                   &
            98                                                         \\ %
            \hline
            Simple voting~\cite{hu2016Hybrid}                        &
            73                                                       &
            96                                                       &
            \xmark                                                   &
            \cmark                                                   &
            \xmark                                                   &
            16                                                         \\ %
            \hline
            Three buyers~\cite{jia2016Monitors}                      &
            108                                                      &
            78                                                       &
            \cmark                                                   &
            \cmark                                                   &
            \xmark                                                   &
            22                                                         \\ %
            \hline
            Travel agency~\cite{huSessionBased2008}                  &
            148                                                      &
            128                                                      &
            \cmark                                                   &
            \cmark                                                   &
            \cmark                                                   &
            30                                                         \\ %
            \hline
            OAuth~\cite{neykova2013Spy}                              &
            199                                                      &
            89                                                       &
            \cmark                                                   &
            \cmark                                                   &
            \xmark                                                   &
            30                                                         \\ %
            \hline
            HTTP~\cite{fielding2014Hypertext}                        &
            648                                                      &
            610                                                      &
            \cmark                                                   &
            \cmark                                                   &
            \cmark                                                   &
            126                                                        \\ %
            \hline
            Remote data \cite{DBLP:journals/sensors/ChenZLCJGYAN22}  &
            100                                                      &
            119                                                      &
            \cmark                                                   &
            \cmark                                                   &
            \cmark                                                   &
            16                                                         \\ %
            \hline
            Servo \cite{servoWebEngineBuggy} &
            74                                                       &
            48                                                       &
            \cmark                                                   &
            \xmark                                                   &
            \xmark                                                   &
            10                                                         \\ %
            \hline
            Gravity sensor \cite{androidMotionSensors}               &
            61                                                       &
            95                                                       &
            \xmark                                                   &
            \cmark                                                   &
            \cmark                                                   &
            9                                                          \\ %
            \hline
            PineTime heart rate \cite{Pine64}                        &
            101                                                      &
            111                                                      &
            \xmark                                                   &
            \cmark                                                   &
            \cmark                                                   &
            17                                                         \\ %
            \hline
            Proximity based car key \cite{DBLP:journals/tches/WoutersMAGP19}    &
            70                                                       &
            134                                                      &
            \xmark                                                   &
            \cmark                                                   &
            \cmark                                                   &
            22                                                         \\ %
            \hline
        \end{tabular}%
    }
    \label{tab:examples:metrics}
    \vspace{-1em}
\end{table}

\section{Related Work and Conclusion}
\label{sec:related_work}

\subparagraph{Time in Session Types} %
Bocchi \etal~\cite{DBLP:conf/concur/BocchiYY14} propose a timed extension of \MPST to model real-time choreographic interactions, while 
Bocchi \etal~\cite{bocchi2019Asynchronous} extend~\emph{binary} session types with time constraints, introducing
a subtyping relation and a blocking receive primitive with timeout in their calculus. In contrast to their strategies to avoid time-related failures, as %
discussed in~\Cref{sec:introduction,sec:overview}, \ATMP focuses on actively managing failures as they occur, offering a distinct approach to handling timed communication.

Iraci \etal~\cite{DBLP:journals/pacmpl/IraciandCHZ2023} 
extend \emph{synchronous binary} session types
with a periodic recursion primitive to model rate-based processes. 
To align their design with real-time systems, they encode time into a periodic construct, synchronised with a global clock.  With~\emph{rate compatibility},
a relation that facilitates communication between processes with different periods by 
synthesising and verifying a common superperiod type, their approach ensures that well-typed processes
remain free from rate errors during any specific period. 
On the contrary, \ATMP integrates time constraints directly into communication through local clocks, 
resulting in distinct time behaviour. Intriguingly, our method of time encoding can adapt to theirs, 
while the opposite is not feasible. 
Consequently, not all the timed protocols in our paper, \eg~\Cref{fig:implementation:remote_data}, can be accurately 
represented in their system. Moreover, due to its \emph{binary} and \emph{synchronous} features, 
their theory does not directly model and ensure the properties of real distributed systems.

Le Brun \etal~\cite{ESOP23MAGPi} develop a theory of multiparty
session types that accounts for different failure scenarios, including message losses, delays, reordering,
as well as link failures and network partitioning.
Unlike~\ATMP, their approach does not integrate time specifications
or address failures specifically related to time.
Instead,
they use \emph{timeout} as a generic message label for failure branches,
which triggers the failure detection mechanism.
For example, in~\cite{ESOP23MAGPi}, $\mpChanRole{\mpS}{\roleP}
 {\scriptsize \VarClock} \mpSeq \mpNil$ denotes a timeout branch, where  {\scriptsize $\VarClock$}
 is the timeout notation representing
 a non-deterministic time duration that a process waits before assuming a failure has arisen.
 Except for~\cite{DBLP:journals/pacmpl/IraciandCHZ2023}, all the mentioned works
 on session types with time are purely theoretical.

\vspace{-1em}
\subparagraph{Affinity, Exceptions and Error-Handling in Session Types} %
 Mostrous and Vasconcelos~\cite{mostrous2018Affine} propose affine binary session
types with explicit cancellation,
which Fowler \etal~\cite{fowler2019Exceptional}
extend
to define Exceptional GV for binary asynchronous communication.
Exceptions can be
nested and handled over multiple communication actions,
and their implementation is an
extension of the research language \Links. %
Harvey \etal~\cite{harvey2021Multiparty} incorporate~\MPST with explicit connection actions to facilitate multiparty distributed communication,
and develop a code generator based on the actor-like research language \Ensemble to implement their approach.
The work in~\cite{mostrous2018Affine}
remains theoretical,
and both~\cite{mostrous2018Affine,fowler2019Exceptional} are
limited to binary and linear logic-based session types. %
Additionally,
none of these works considers %
time specifications or
addresses the handling of time-related exceptions in their systems, which are key aspects of our work.

\vspace{-1em}
\subparagraph{Session Types Implementations} %
\multicrusty,
extensively compared to~\timedmulticrusty
in this paper,
is a~\Rust implementation 
based on affine \MPST by Lagaillardie \etal~\cite{lagaillardie2022Affine}. 
Their approach relies on  \emph{synchronous} communication, rendering 
time and timeout exceptions unnecessary.

Cutner \etal~\cite{CYV2022}
introduce~\rumpsteak,
a~\Rust implementation
based on the~\CODE{tokio}
\Rust library,
which uses a different design
for asynchronous multiparty communications
compared to~\timedmulticrusty,
relying on the~\crosschan
\Rust library.
The main goal of~\cite{CYV2022}
is to compare the performance
of~\rumpsteak,
mainly designed to analyse
asynchronous
message reordering,
to existing tools
such as the~\kmc~tool developed by Lange and Yoshida~\cite{lange2019Verifying}.
Unlike~\timedmulticrusty,
\rumpsteak lacks formalisation, or handling of timed communications and failures.

\typestate is a~\Rust
library implemented
by Duarte and Ravara~\cite{duarte2022Typestates},
focused on helping
developers to write safer
APIs using typestates
and
their macros~\CODE{#[typestate]},
\CODE{#[automaton]}
and~\CODE{#[state]}.
\timedmulticrusty
and~\typestate
are fundamentally different,
with~\typestate creating a state machine
for checking possible errors in APIs
and not handling affine or timed communications. %
\ferrite,
a~\Rust implementation introduced by Chen \etal~\cite{chen2022Ferrite}, 
is limited to binary session types
and forces the use of linear channels.
The modelling of 
\ferrite is based on 
the shared binary session type
calculus~\texttt{SILL$_{s}$}.

Jespersen \etal~\cite{jespersen2015Session} and Kokke~\cite{kokke2019Rusty} propose
\Rust implementations
of binary session types
for synchronous communication
protocols.
\cite{DBLP:journals/pacmpl/IraciandCHZ2023} extends the framework from~\cite{jespersen2015Session}
to encode the \emph{rate compatibility} relation as a \Rust trait and check whether two types are rate compatible.
Their approach is demonstrated with examples from rate-based systems, 
including~\cite{androidMotionSensors,Pine64,DBLP:journals/tches/WoutersMAGP19}.
Motivated by these applications, we formalise and 
implement the respective timed protocols in \timedmulticrusty,
showcasing the expressivity and feasibility of our system in real-time scenarios.

Neykova \etal~\cite{neykova2017timed} propose a \Python
runtime monitoring framework to handle timed
multiparty communications, guided by~\cite{DBLP:conf/concur/BocchiYY14}.
They use a timed extension of \Scribble~\cite{yoshida2013Scribble} to check the wait-freedom of protocols. %
The \Python toolchain then
\emph{dynamically} checks that
the implementation
has no time violation or
communication mismatch.
However,
their verifications are time-consuming,
and may lead to valid messages
being rejected~\cite[Fig.~16(b)]{neykova2017timed}.
Conversely, 
\timedmulticrusty~\emph{statically} enforces
that each time window is reached
by a clock,
allowing every action to be triggered.
Moreover, through \Rust's affine types  and meshed channels, %
\timedmulticrusty guarantees
linear usage of channels
and label matching by compilation.

\vspace{-1em}
\subparagraph{Conclusion and Future Work}
To address time constraints and timeout exceptions
in asynchronous communication,
we propose \emph{affine timed multiparty session types}~(\ATMP)
along with the toolchain~\timedmulticrusty,
an implementation of \ATMP in \Rust.
Thanks to the incorporation of affinity and failure handling mechanisms,
our approach renders impractical conditions such as \emph{wait-freedom} and \emph{urgent receive} obsolete
while ensuring communication safety, protocol conformance, and deadlock-freedom, even in the presence of~(timeout) failures.
Compared to 
a synchronous toolchain without time,
\timedmulticrusty exhibits negligible overhead
in various complex examples including those from real-time systems, 
while enabling the verification of time constraints under asynchronous communication. %
As future work, we plan to
explore automatic recovery from errors
and timeouts instead of simply terminating 
processes, which will involve extending the analysis of communication causality to timed global types and incorporating reversibility mechanisms into our system.

\bibliography{main}

\iftoggle{full}
{
\newpage 
\appendix
\label{appendix}

\section{Proofs for Association}
\label{sec:app:assoc:proofs}

\subsection{Unfoldings}
\label{sec:app-unfolding}

We define the \emph{unfolding} of timed global types as:

\smallskip%
  \centerline{\(%
\begin{array}{c}
 \unfoldOne{\gtG}
   \;=\;
   \left\{%
    \begin{array}{@{}l@{\hskip 2mm}r@{}}
    \unfoldOne{\gtGi\subst{\gtRecVar}{\gtRec{\gtRecVar}{\gtGi}}}
    &
    \text{if\, $\gtG = \gtRec{\gtRecVar}{\gtGi}$}
    \\
    \gtG
    &
     \text{otherwise}
     \end{array}
     \right.
     \\
     \end{array}
\)}

\smallskip
\noindent
A recursive type $\gtRec{\gtRecVar}{\gtG}$ must be guarded (or
contractive), \ie the unfolding leads to a progressive prefix, \eg a
transmission.
Unguarded types, such as $\gtRec{\gtRecVar}{\gtRecVar}$ and
$\gtRec{\gtRecVar}{\gtRec{\gtRecVari}{\gtRecVar}}$, are not allowed.
Timed local types are subject to the similar definitions and requirements.

Note that any timed global type (or timed local type) we mention in this paper is closed and well-guarded.

\begin{lemma}
\label{lem:unfold-no-rec}
  For a closed, well-guarded timed global type $\gtG$, $\unfoldOne{\gtG}$ can only be of
  form $\gtEnd$, $\gtComm{\roleP}{\roleQ}{}{\cdots}{}{}$, or $ \gtCommTTransit{\roleP}{\roleQ}{}{\cdots}{}{}{}{}$. 
 For a closed, well-guarded timed local type $\stT$, $\unfoldOne{\stT}$ can only be
  of form $\stEnd$, $\stIntSum{\roleP}{}{\cdots}$, or
  $\stExtSum{\roleP}{}{\cdots}$.
\end{lemma}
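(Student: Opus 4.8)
\textbf{Proof plan for \Cref{lem:unfold-no-rec}.}
The plan is to prove both statements by essentially the same argument, so I will focus on the timed global type case; the timed local type case is entirely analogous, replacing transmissions and transmissions en route by internal and external choices. The key observation is that $\unfoldOne{\cdot}$ is defined by a recursive clause that strips one layer of $\gtRec{\gtRecVar}{\cdot}$ at a time, and the statement is really about termination of this iteration together with a case analysis on what it can terminate on.

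First I would argue \emph{termination} of the iterated unfolding. Since $\gtG$ is closed and well-guarded, every occurrence of a recursion variable $\gtRecVar$ inside its binder $\gtRec{\gtRecVar}{\gtGi}$ is guarded, \ie it lies underneath at least one transmission (or transmission en route) prefix. Define the measure of a timed global type to be the number of leading $\gtRec{\gtRecVarBase}{\cdots}$ binders before the first non-recursion construct. I would show that each step $\gtGi \subst{\gtRecVar}{\gtRec{\gtRecVar}{\gtGi}} $ in the definition of $\unfoldOne{\cdot}$ applies the substitution to a body $\gtGi$ whose head is either itself a $\gtRec{\cdots}$ (in which case we recurse, but guardedness forbids an infinite nesting $\gtRec{\gtRecVar[1]}{\gtRec{\gtRecVar[2]}{\cdots}}$ with no intervening prefix, since such a type is not well-guarded) or a genuine prefix. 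Hence after finitely many iterations the head construct of the result is no longer a $\gtRec{\cdots}$, so $\unfoldOne{\gtG}$ is well-defined.

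Next I would perform the \emph{case analysis} on the head construct of $\unfoldOne{\gtG}$. By the definition of the ``otherwise'' branch, $\unfoldOne{\gtG}$ is returned as soon as its head is not a $\gtRec{\cdots}$. Inspecting the grammar of timed global types in \Cref{fig:aat-mpst-syntax}, the remaining possibilities for the head are: $\gtEnd$; a transmission $\gtCommT{\roleP}{\roleQ}{i \in I}{\gtLab[i]}{\stS[i]}{\cdots}{\gtG[i]}$; a transmission en route $\gtCommTTransit{\roleP}{\roleQ}{i \in I}{\gtLab[i]}{\stS[i]}{\cdots}{\gtG[i]}{j}$; or a bare recursion variable $\gtRecVar$. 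The last case is excluded because $\gtG$ is \emph{closed}: unfolding preserves the set of free variables (each step replaces $\gtRecVar$ by a closed-relative-to-the-context term built from $\gtRec{\gtRecVar}{\gtGi}$), so $\unfoldOne{\gtG}$ is closed and cannot be a free variable $\gtRecVar$. This leaves exactly the three stated forms.

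The main obstacle I anticipate is making the termination argument fully rigorous: I need the precise formal content of ``well-guarded'' for our timed types (the reference to \cite[\S 21.8]{DBLP:books/daglib/0005958} and the discussion after \Cref{fig:aat-mpst-syntax}) to justify that one cannot have an infinite chain of $\gtRec$ binders with no prefix in between, and that a single unfolding step strictly decreases the measure until a prefix is exposed. Everything else — closure under substitution of the free-variable set, and the syntactic case split — is routine bookkeeping over the grammar. I would state the lemma's proof as ``by the guardedness and closedness of $\gtG$, iterating the defining clause of $\unfoldOne{\cdot}$ terminates at a non-recursive head, which by closedness cannot be a type variable,'' and similarly for timed local types.
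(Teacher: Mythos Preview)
Your proposal is correct and follows essentially the same approach as the paper: exclude bare recursion variables by closedness, and use well-guardedness/contractiveness to ensure the iterated unfolding terminates at a non-$\gtRec{}{}$ head. The paper's proof is far terser (it merely notes that closedness rules out $\gtRecVar$ and that contractiveness gives $\gtGi\subst{\gtRecVar}{\gtRec{\gtRecVar}{\gtGi}} \neq \gtRec{\gtRecVar}{\gtGi}$), but your measure-based termination argument and explicit case analysis over the grammar simply spell out what the paper leaves implicit.
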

\begin{proof}
  $\gtRecVar$ will not appear since we require closed types.
  $\gtRec{\gtRecVar}{\gtG'}\subst{\gtRecVar}{\gtRec{\gtRecVar}{\gtG'}} \neq
  \gtRec{\gtRecVar}{\gtG'}$ since we require well-guarded types (recursive
  types are contractive).
  Similar argument for timed local types. 
  \qedhere
\end{proof}

\begin{lemma}
\label{lem:unfold_projection}
If $\gtProj{\gtG}{\roleP} = \stT$, then $\gtProj{\unfoldOne{\gtG}}{\roleP} = \unfoldOne{\stT}$.
\end{lemma}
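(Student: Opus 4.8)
\textbf{Proof plan for Lemma~\ref{lem:unfold_projection}.}
The plan is to prove the statement by induction on the (finite) number of top-level recursion unfoldings needed to compute $\unfoldOne{\gtG}$, \ie on the least $n$ such that $\unfoldN{n}{\gtG}$ is not of the form $\gtRec{\gtRecVar}{\cdot}$. Such an $n$ exists because $\gtG$ is closed and well-guarded, by the reasoning in~\Cref{lem:unfold-no-rec}. The base case is $n = 0$: here $\gtG$ is not a $\gtFmt{\mu}$-type, so by definition $\unfoldOne{\gtG} = \gtG$. I then need $\unfoldOne{\stT} = \stT$ as well, which follows because $\gtProj{\gtG}{\roleP} = \stT$ with $\gtG$ not a recursion must, inspecting~\Cref{def:main_proj}, produce a $\stT$ that is either $\stEnd$, an internal choice, an external choice, or (for the transmission-en-route case) the projection of a continuation; in none of the non-recursive cases of~\Cref{def:main_proj} does the head of $\stT$ become a $\stRec{\stRecVar}{\cdot}$, so $\unfoldOne{\stT} = \stT$ and the equation is immediate. (The case $\gtG = \gtRecVar$ is excluded by closedness.)

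For the inductive step, suppose $\gtG = \gtRec{\gtRecVar}{\gtGi}$. By definition $\unfoldOne{\gtG} = \unfoldOne{\gtGi\subst{\gtRecVar}{\gtRec{\gtRecVar}{\gtGi}}}$. On the type side, the projection clause for recursion in~\Cref{def:main_proj} gives two subcases. If $\roleP \in \gtRoles{\gtGi}$ or $\fv{\gtRec{\gtRecVar}{\gtGi}} \neq \emptyset$, then $\stT = \gtProj{\gtG}{\roleP} = \stRec{\stRecVar}{(\gtProj{\gtGi}{\roleP})}$, so $\unfoldOne{\stT} = \unfoldOne{(\gtProj{\gtGi}{\roleP})\subst{\stRecVar}{\stRec{\stRecVar}{\gtProj{\gtGi}{\roleP}}}}$. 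The key intermediate fact I will isolate as a sublemma is that projection commutes with the syntactic substitution of a (global) recursion variable by a global type, \ie $\gtProj{(\gtGi\subst{\gtRecVar}{\gtRec{\gtRecVar}{\gtGi}})}{\roleP} = (\gtProj{\gtGi}{\roleP})\subst{\stRecVar}{\stRec{\stRecVar}{\gtProj{\gtGi}{\roleP}}}$ --- this holds by a straightforward structural induction on $\gtGi$, using that the merge operator $\stBinMerge$ commutes with substitution and that $\gtProj{\gtRecVar}{\roleP} = \stRecVar$. Given this, $\gtProj{\gtGi\subst{\gtRecVar}{\gtRec{\gtRecVar}{\gtGi}}}{\roleP}$ equals the type whose unfolding is $\unfoldOne{\stT}$, and since $\gtGi\subst{\gtRecVar}{\gtRec{\gtRecVar}{\gtGi}}$ needs strictly fewer top-level unfoldings than $\gtG$ (guardedness), the induction hypothesis applied to it yields $\gtProj{\unfoldOne{\gtGi\subst{\gtRecVar}{\gtRec{\gtRecVar}{\gtGi}}}}{\roleP} = \unfoldOne{\gtProj{\gtGi\subst{\gtRecVar}{\gtRec{\gtRecVar}{\gtGi}}}{\roleP}} = \unfoldOne{\stT}$, and the left-hand side is exactly $\gtProj{\unfoldOne{\gtG}}{\roleP}$. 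In the other subcase, $\roleP \notin \gtRoles{\gtGi}$ and $\fv{\gtRec{\gtRecVar}{\gtGi}} = \emptyset$, so $\stT = \stEnd$ and $\unfoldOne{\stT} = \stEnd$; here I argue that $\roleP \notin \gtRoles{\gtGi}$ together with closedness forces $\gtProj{\unfoldOne{\gtG}}{\roleP} = \stEnd$ too, since a role not occurring in a (closed) global type projects to $\stEnd$ regardless of how the recursion is unfolded.

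The main obstacle I anticipate is the substitution sublemma --- \ie that $\gtProj{\cdot}{\roleP}$ commutes with unfolding-style substitution --- particularly handling the ``otherwise'' branch of~\Cref{def:main_proj} where a merge $\stMerge{i \in I}{\gtProj{\gtG[i]}{\roleP}}$ appears: I must check that $\left(\stMerge{i \in I}{\gtProj{\gtGi[i]}{\roleP}}\right)\subst{\stRecVar}{\cdots} = \stMerge{i \in I}{(\gtProj{\gtGi[i]}{\roleP}\subst{\stRecVar}{\cdots})}$, which requires that $\stBinMerge$ is defined on the substituted types exactly when it is on the originals and that it distributes over the substitution. This is routine but fiddly; it rests on the fact that substituting for a recursion variable does not change the top-level structure (labels, senders/receivers) that $\stBinMerge$ inspects. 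Modulo this sublemma, the rest of the argument is a direct induction, and I would relegate the sublemma and the ``$\roleP \notin \gtRoles{\gtGi}$ projects to $\stEnd$'' fact to auxiliary lemmas stated just before this one.
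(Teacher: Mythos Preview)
Your inductive structure and the substitution sublemma are the right scaffolding, but the base case contains a claim that fails. You assert that when $\gtG$ is not a $\mu$-type, ``in none of the non-recursive cases of~\Cref{def:main_proj} does the head of $\stT$ become a $\stRec{\stRecVar}{\cdot}$''. This is false for the ``otherwise'' branch of a transmission (where $\roleP$ is neither sender nor receiver) and for the sender case of a transmission en route: both defer to $\gtProj{\gtG[i]}{\roleP}$ (under a merge, or directly), and if the continuation $\gtG[i]$ is itself a $\mu$-type whose body involves $\roleP$, the resulting projection is headed by $\stRec{\stRecVar}{\cdot}$. Concretely, take $\gtG = \gtCommT{\roleQ}{\roleR}{}{m}{}{\ldots}{\gtRec{\gtRecVar}{\gtGii}}$ with $\roleP \notin \{\roleQ,\roleR\}$ and $\roleP \in \gtRoles{\gtGii}$: then $\unfoldOne{\gtG} = \gtG$, yet $\stT = \gtProj{\gtG}{\roleP} = \stRec{\stRecVar}{(\gtProj{\gtGii}{\roleP})}$, so $\gtProj{\unfoldOne{\gtG}}{\roleP} = \stT \neq \unfoldOne{\stT}$.

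This is not a repairable omission in your argument: the example is a counterexample to the lemma as literally stated, and the paper's one-line proof (appealing only to the definitions and~\Cref{lem:unfold-no-rec}) does not address it either. What does hold is the weaker fact that $\gtProj{\unfoldOne{\gtG}}{\roleP}$ and $\unfoldOne{\gtProj{\gtG}{\roleP}}$ are $\stSub$-equivalent (equal up to further unfolding, via~\Cref{lem:unfold-subtyping}); if you retarget your induction at that, the substitution sublemma and the rest of your plan go through essentially unchanged.
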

\begin{proof}
By the definition of projection~(\Cref{def:global-proj}), 
the definitions of unfolding recursive timed global and local types, 
and~\Cref{lem:unfold-no-rec}. 
\qedhere
\end{proof}

\subsection{Subtyping}
\label{sec:app-aat-mpst-subtyping}

\begin{lemma}[Subtyping is Reflexive]
\label{lem:reflexive-subtyping}
  For any closed, well-guarded timed local type $\stT$, $\stT \stSub \stT$ holds.
\end{lemma}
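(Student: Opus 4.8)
The plan is to exhibit a single coinductive witness relation and verify that it is backward closed under the subtyping rules of~\Cref{def:main_subtyping}, extended by the rules in~\Cref{fig:sub_cong_com_types}. Concretely, I would take the candidate relation
\[
  \mathcal{R} \;=\; \setcomp{(\stT, \stT)}{\stT \text{ a closed, well-guarded timed local type}}
  \;\cup\; \setcomp{(\stS, \stS)}{\stS \text{ a sort}}
  \;\cup\; \setcomp{(\stQType, \stQType)}{\stQType \text{ a queue type}},
\]
and show that $\mathcal{R}$ is a subtyping (i.e.\ post-fixed point of the monotone rule functional), whence by coinduction $\mathcal{R} \subseteq {\stSub}$, giving $\stT \stSub \stT$ for every such $\stT$.

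The verification proceeds by a case analysis on the shape of $\stT$, using~\Cref{lem:unfold-no-rec} to reduce to the non-recursive head forms. If $\stT = \stEnd$, rule \inferrule{\iruleStSubEnd} applies immediately. If $\stT = \stExtSum{\roleP}{i \in I}{\stTChoice{\stLab[i]}{\stS[i]}{\ccst[i], \crst[i]} \stSeq \stT[i]}$, I apply \inferrule{\iruleStSubIn} with the index sets taken equal ($I = I$, $J = \emptyset$); the premises $\stS[i] \stSub \stSi[i]$ hold because the sort case of $\mathcal{R}$ reduces (via \inferrule{\iruleStSubSort}) to $\stT \stSub \stT$ on the delegated continuation, the clock-constraint and reset equalities $\ccst[i] = \ccsti[i]$, $\crst[i] = \crsti[i]$ hold trivially, and $\stT[i] \stSub \stTi[i]$ is again a pair in $\mathcal{R}$. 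The internal-choice case $\stT = \stIntSum{\roleP}{\ldots}$ is symmetric via \inferrule{\iruleStSubSel}. For a recursive type $\stT = \stRec{\stRecVar}{\stTi}$, I use \inferrule{\iruleStSubRecL} (or \inferrule{\iruleStSubRecR}) to unfold one side to $\stTi\subst{\stRecVar}{\stRec{\stRecVar}{\stTi}}$; since the type is well-guarded, one or two applications of the unfolding rules bring us to a head form already handled, and the resulting continuation pair is closed (substituting the closed $\stRec{\stRecVar}{\stTi}$ for $\stRecVar$ preserves closedness) and well-guarded, hence lies in $\mathcal{R}$. The sort and queue-type cases are handled analogously using \inferrule{\iruleStSubSort}, \inferrule{\iruleSubEQueueTypeEmpty}, and \inferrule{\iruleSubEQueueType}.

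The main obstacle is purely bookkeeping around the $\mu$-rules: because $\stSub$ is coinductive and the $\mu$-rules only unfold \emph{one} side at a time, one must argue that repeatedly applying \inferrule{\iruleStSubRecL} (and/or \inferrule{\iruleStSubRecR}) to a reflexive pair $(\stRec{\stRecVar}{\stTi}, \stRec{\stRecVar}{\stTi})$ terminates at a matching head form, which is exactly guaranteedness/contractivity (\Cref{lem:unfold-no-rec}); I would phrase this cleanly by first closing $\mathcal{R}$ under the operation ``unfold either component'', or equivalently by working up to the one-step unfolding $\unfoldOne{\cdot}$ and invoking~\Cref{lem:unfold-no-rec} to land in the $\stEnd$ / $\stExtC$ / $\stIntC$ cases. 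Everything else is routine.
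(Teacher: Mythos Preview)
Your proposal is correct and takes the same coinductive approach as the paper, which also exhibits the identity relation $R = \setenum{(\stT,\stT)}$ and declares the closure check ``trivial.'' You are in fact more careful than the paper: the bare identity relation is not literally backward closed under the one-sided unfolding rules \inferrule{\iruleStSubRecL}/\inferrule{\iruleStSubRecR} (the premise pair is no longer syntactically reflexive), and your fix of closing $\mathcal{R}$ under one-sided unfolding, justified by guardedness via \Cref{lem:unfold-no-rec}, is precisely what makes the argument rigorous.
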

\begin{proof}
We construct a relation $R = \setenum{(\stT, \stT)}$.  
It is trivial to show that $R$ satisfies all clauses 
of~\Cref{def:main_subtyping}, and hence,  $R \subseteq \stSub$. 
\qedhere
\end{proof}

\begin{lemma}[Subtyping is Transitive]\label{transitive-subtyping}
  For any closed, well-guarded timed local type $\stT[1]$, $\stT[2]$, $\stT[3]$,
  if $\stT[1] \stSub \stT[2]$ and $\stT[2] \stSub \stT[3]$ hold, then $\stT[1] \stSub \stT[3]$
  holds.
\end{lemma}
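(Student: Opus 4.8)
The plan is to prove transitivity of the subtyping relation $\stSub$ by coinduction, following the standard technique for coinductively-defined relations: I would exhibit a relation $R$ that contains all pairs $(\stT[1], \stT[3])$ arising from a chain $\stT[1] \stSub \stT[2] \stSub \stT[3]$, and show that $R$ satisfies the defining clauses of $\stSub$ from~\Cref{def:main_subtyping}, whence $R \subseteq \stSub$ and in particular $\stT[1] \stSub \stT[3]$.

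Concretely, I would define
\[
R = \setenum{(\stT[1], \stT[3]) \mid \exists \stT[2] \text{ closed, well-guarded}: \stT[1] \stSub \stT[2] \text{ and } \stT[2] \stSub \stT[3]}.
\]
The core of the argument is a case analysis on the last rules used to derive $\stT[1] \stSub \stT[2]$ and $\stT[2] \stSub \stT[3]$. Since $\stSub$ is defined up to unfolding via \inferrule{\iruleStSubRecL} and \inferrule{\iruleStSubRecR}, the first step I would take is to reduce to the case where $\stT[1]$, $\stT[2]$, $\stT[3]$ are all \emph{not} of the form $\stRec{\stRecVar}{\ldots}$ — i.e. work with $\unfoldOne{\stT[1]}$, $\unfoldOne{\stT[2]}$, $\unfoldOne{\stT[3]}$, which by~\Cref{lem:unfold-no-rec} can only be $\stEnd$, an internal choice $\stIntSum{\roleP}{}{\cdots}$, or an external choice $\stExtSum{\roleP}{}{\cdots}$. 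This is justified because a derivation of $\stT \stSub \stTi$ can always be taken to alternate: apply \inferrule{\iruleStSubRecL}/\inferrule{\iruleStSubRecR} to expose head constructors, then one of \inferrule{\iruleStSubEnd}, \inferrule{\iruleStSubIn}, \inferrule{\iruleStSubOut}, \inferrule{\iruleStSubSort}. So it suffices to check the clauses of~\Cref{def:main_subtyping} when both input pairs have their heads exposed.

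Then the case analysis is routine: if $\unfoldOne{\stT[2]} = \stEnd$, then both subtypings force $\unfoldOne{\stT[1]} = \unfoldOne{\stT[3]} = \stEnd$ and we are done. If $\unfoldOne{\stT[2]} = \stIntSum{\roleP}{i \in I}{\ldots}$, then by \inferrule{\iruleStSubOut} the index set of $\stT[1]$ is $I \cup J$ for some $J$ and that of $\stT[3]$ is some $I' \subseteq I$; I would compose the index-set inclusions, use transitivity of sort subtyping on the payloads (which for session sorts $\stDelegate{\ccst}{\stT}$ via \inferrule{\iruleStSubSort} reduces again to transitivity on timed local types, handled inside the same coinductive $R$), note that the clock-constraint and reset-predicate equalities $\ccst[i] = \ccsti[i]$ compose by transitivity of $=$, and place each pair of continuations $(\stT[1,i], \stT[3,i])$ into $R$ using the intermediate $\stT[2,i]$. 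The external-choice case via \inferrule{\iruleStSubIn} is dual, with the index-set inclusions going the other way. The main obstacle — though a mild one — is the interleaving with the $\stS \stSub \stS$ sort rule: since sorts wrap timed local types, I must make sure $R$ is set up to simultaneously cover transitivity at the sort level, either by enlarging $R$ to include sort pairs or by a simple mutual coinduction; I would fold this in from the start so the coinductive invariant is genuinely closed under the clause premises. No genuinely hard step is expected; the proof is a standard bookkeeping exercise once the unfolding reduction of the first paragraph is in place.
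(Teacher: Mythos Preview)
Your proposal is correct and follows exactly the paper's approach: define $R = \setcomp{(\stT[1], \stT[3])}{\exists \stT[2]: \stT[1] \stSub \stT[2] \text{ and } \stT[2] \stSub \stT[3]}$ and show $R \subseteq {\stSub}$ by verifying the clauses of~\Cref{def:main_subtyping}. The paper's proof is a one-line sketch of precisely this argument, so your more detailed case analysis (unfolding via~\Cref{lem:unfold-no-rec}, then matching heads via \inferrule{\iruleStSubEnd}/\inferrule{\iruleStSubOut}/\inferrule{\iruleStSubIn}/\inferrule{\iruleStSubSort}) simply spells out what the paper leaves implicit.
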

\begin{proof}
By constructing a relation 
$R = \setcomp{(\stT[1], \stT[3])}{\exists \stT[2] \text{ such that } \stT[1] \stSub \stT[2] \text{ and } \stT[2] \stSub \stT[3]}$, and showing that 
$R \subseteq \stSub$. 
\qedhere
\end{proof}

\begin{lemma}
\label{lem:end-subtyping}
For any timed local type $\stT$, $\stT \stSub \stEnd$ if and 
only if $\stEnd \stSub \stT$. 
\end{lemma}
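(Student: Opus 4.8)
The statement to prove is: for any timed local type $\stT$, we have $\stT \stSub \stEnd$ if and only if $\stEnd \stSub \stT$. This is a small symmetry-of-termination lemma and the natural strategy is to analyse both directions using the coinductive definition of subtyping from~\Cref{def:main_subtyping}, together with~\Cref{lem:unfold-no-rec}, which restricts the possible shapes of $\unfoldOne{\stT}$ to $\stEnd$, an internal choice, or an external choice. The key observation is that the only subtyping rules whose conclusion places $\stEnd$ on one side (after stripping any leading $\mu$-binders via \inferrule{\iruleStSubRecL} and \inferrule{\iruleStSubRecR}) is \inferrule{\iruleStSubEnd}, so in either direction the relation forces $\unfoldOne{\stT} = \stEnd$.

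First I would handle the left-to-right direction. Assume $\stT \stSub \stEnd$. Since $\stEnd$ is already unfolded, any derivation of $\stT \stSub \stEnd$ must begin (possibly after some applications of \inferrule{\iruleStSubRecL} that successively unfold the recursion binders of $\stT$, which terminates by well-guardedness) with a rule matching the shape of $\unfoldOne{\stT}$ on the left. Inspecting the rules, \inferrule{\iruleStSubOut} and \inferrule{\iruleStSubIn} both require a non-$\stEnd$ type on the right, so they are inapplicable; hence the only possibility is \inferrule{\iruleStSubEnd}, which forces $\unfoldOne{\stT} = \stEnd$. Then $\stEnd \stSub \stT$ follows: unfold $\stT$ on the right via \inferrule{\iruleStSubRecR} (the same finite sequence of unfoldings, now applied on the right) down to $\stEnd \stSub \stEnd$, which holds by \inferrule{\iruleStSubEnd}. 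The right-to-left direction is entirely symmetric, swapping the roles of \inferrule{\iruleStSubRecL}/\inferrule{\iruleStSubRecR} and noting that \inferrule{\iruleStSubIn}/\inferrule{\iruleStSubOut} again cannot apply because $\stEnd$ appears as a whole type rather than as a choice.

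To make the ``strip the $\mu$-binders'' argument rigorous rather than hand-wavy, I would phrase it via unfolding: by~\Cref{lem:unfold-no-rec}, $\unfoldOne{\stT} \in \setenum{\stEnd, \stIntSum{\roleP}{}{\cdots}, \stExtSum{\roleP}{}{\cdots}}$, and a standard lemma (provable by a short induction on the number of leading recursion unfoldings, using \inferrule{\iruleStSubRecL} and \inferrule{\iruleStSubRecR}) gives $\stT \stSub \stU \iff \unfoldOne{\stT} \stSub \unfoldOne{\stU}$ for all closed well-guarded $\stT, \stU$ — or if such a lemma is not yet available, I would inline the unfolding steps directly as above. With that in hand, $\stT \stSub \stEnd$ reduces to $\unfoldOne{\stT} \stSub \stEnd$, and a case split on the three shapes of $\unfoldOne{\stT}$ immediately rules out the two choice cases, leaving $\unfoldOne{\stT} = \stEnd$, from which $\stEnd \stSub \unfoldOne{\stT} = \stEnd \stSub \stEnd$ and hence $\stEnd \stSub \stT$. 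The main (and only mild) obstacle is bookkeeping the interaction between the recursion-unfolding rules and the shape-analysis — making sure the unfolding terminates (guaranteed by contractivity) and that the coinductive reading of $\stSub$ does not admit some exotic infinite derivation; the latter is dispatched by observing that in both cases every derivation branch is forced deterministically into \inferrule{\iruleStSubEnd} after finitely many steps, so no genuine coinduction is needed here.
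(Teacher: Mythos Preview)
Your proposal is correct and is essentially the same approach as the paper's: the paper's proof is the one-liner ``By~\Cref{def:main_subtyping}'', and what you have written is exactly the rule-inspection argument that one would carry out to justify that appeal. Your expansion (handling the recursion rules via unfolding, invoking~\Cref{lem:unfold-no-rec} to restrict the shapes, and observing that only \inferrule{\iruleStSubEnd} can fire once the $\mu$-binders are stripped) is sound and, if anything, more careful than needed for a lemma the paper treats as immediate.
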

\begin{proof}
By~\Cref{def:main_subtyping}. %
\qedhere 
\end{proof}

\begin{lemma}
\label{lem:unfold-subtyping}
  For any closed, well-guarded timed local type $\stT$,
  \begin{enumerate*}
    \item $\unfoldOne{\stT} \stSub \stT$; and
    \item $\stT \stSub \unfoldOne{\stT}$.
  \end{enumerate*}
\end{lemma}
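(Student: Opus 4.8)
The plan is to prove both directions by exhibiting explicit subtyping relations and checking that they satisfy the coinductive clauses of \Cref{def:main_subtyping}. The structure of the argument depends on the shape of $\stT$ via \Cref{lem:unfold-no-rec}: either $\unfoldOne{\stT} = \stT$ already (so $\stT$ is not a top-level recursion and both statements reduce to reflexivity, \Cref{lem:reflexive-subtyping}), or $\stT = \stRec{\stRecVar}{\stTi}$ for some $\stTi$, in which case $\unfoldOne{\stT} = \unfoldOne{\stTi\subst{\stRecVar}{\stRec{\stRecVar}{\stTi}}}$. The real work is in the recursive case.

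For part (1), $\unfoldOne{\stT} \stSub \stT$: when $\stT = \stRec{\stRecVar}{\stTi}$, I would aim to show $\unfoldOne{\stT} \stSub \stRec{\stRecVar}{\stTi}$. Applying \inferrule{\iruleStSubRecR} reduces this to $\unfoldOne{\stT} \stSub \stTi\subst{\stRecVar}{\stRec{\stRecVar}{\stTi}}$. Since $\unfoldOne{\stT} = \unfoldOne{\stTi\subst{\stRecVar}{\stRec{\stRecVar}{\stTi}}}$ by definition of single-step unfolding composed with full unfolding, and since $\stTi\subst{\stRecVar}{\stRec{\stRecVar}{\stTi}}$ is itself a closed well-guarded type with strictly smaller ``recursion-unfolding depth'' at the head (guardedness ensures the process of repeatedly stripping leading $\mu$'s terminates), I can proceed by induction on the number of leading recursion binders, using \inferrule{\iruleStSubRecL} to absorb them. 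Concretely: if $\stTi\subst{\stRecVar}{\stRec{\stRecVar}{\stTi}}$ is again a top-level recursion, one more application of \inferrule{\iruleStSubRecL} together with the induction hypothesis closes it; otherwise $\unfoldOne{\stT} = \stTi\subst{\stRecVar}{\stRec{\stRecVar}{\stTi}}$ and we are reduced to reflexivity. Part (2) is symmetric, swapping the roles of \inferrule{\iruleStSubRecL} and \inferrule{\iruleStSubRecR}.

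A cleaner alternative, which I would actually adopt to avoid fiddly induction-on-depth bookkeeping, is to build a single coinductive relation directly. For (1), set
\[
R \;=\; \setcomp{(\unfoldN{k}{\stT},\, \stT)}{\stT \text{ closed, well-guarded, and } k \geq 0} \;\cup\; \stSub,
\]
where $\unfoldN{k}{\cdot}$ denotes $k$-fold single-step unfolding, and verify $R$ is a subtyping (i.e.\ backward-closed under the rules of \Cref{def:main_subtyping}): given a pair $(\unfoldN{k}{\stT}, \stT)$, unfold $\stT$ once if it is a recursion (landing in $R$ via a shifted $k$), otherwise both sides share the same top constructor and the required premises hold by reflexivity, which is already in $\stSub \subseteq R$. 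Then $R \subseteq \stSub$ gives $\unfoldOne{\stT} = \unfoldN{1}{\stT} \stSub \stT$. The analogous relation with the components swapped handles (2).

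The main obstacle is the termination/well-foundedness bookkeeping in the recursive case: one must be careful that ``unfold the leading $\mu$'' is a well-founded process, which is exactly where the closedness and guardedness (contractiveness) hypotheses of \Cref{lem:unfold-no-rec} are essential — without guardedness $\stRec{\stRecVar}{\stRecVar}$ would unfold to itself forever. I expect the coinductive-relation formulation above to sidestep most of this cleanly, since it never needs to ``fully'' unfold anything in one step; each verification obligation is discharged after at most one local unfolding. The routine part is simply checking the premises of \inferrule{\iruleStSubOut}, \inferrule{\iruleStSubIn}, \inferrule{\iruleStSubSort}, and \inferrule{\iruleStSubEnd} hold by reflexivity (\Cref{lem:reflexive-subtyping}), which I would not spell out in detail.
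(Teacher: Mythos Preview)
Your first approach is exactly what the paper does, only made explicit: the paper's entire proof is ``By \inferrule{\iruleStSubRecR} if $\stT = \stRec{\stRecVar}{\stTi}$, otherwise by reflexivity'' (and symmetrically for (2) with \inferrule{\iruleStSubRecL}), with the induction on the number of leading $\mu$-binders left implicit.

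Your second, coinductive formulation is also valid, but the final extraction step contains a slip. You conclude with ``$\unfoldOne{\stT} = \unfoldN{1}{\stT} \stSub \stT$'', but in this paper $\unfoldOne{\cdot}$ is defined \emph{recursively} and strips \emph{all} leading recursion binders, not just one; for instance with $\stT = \stRec{\stRecVar}{\stRec{\stRecVari}{\stU}}$ one has $\unfoldOne{\stT} \neq \unfoldN{1}{\stT}$. This does not actually damage the argument: by well-guardedness there is some finite $n$ (the number of leading $\mu$'s) with $\unfoldOne{\stT} = \unfoldN{n}{\stT}$, and your relation $R$ already contains $(\unfoldN{n}{\stT},\stT)$ for every $n \geq 0$. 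Just replace the final ``$= \unfoldN{1}{\stT}$'' with ``$= \unfoldN{n}{\stT}$ for the appropriate $n$'' and the extraction goes through.
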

 \begin{proof}
  \begin{enumerate*}
    \item By $\inferrule{\iruleStSubRecR}$ if $\stT = \stRec{\stRecVar}{\stTi}$. Otherwise, 
      by reflexivity~(\Cref{lem:reflexive-subtyping}). 
    \item By $\inferrule{\iruleStSubRecL}$ if $\stT = \stRec{\stRecVar}{\stTi}$. Otherwise, 
      by reflexivity~(\Cref{lem:reflexive-subtyping}). 
  \end{enumerate*}
  \qedhere
 \end{proof}

\begin{lemma}
\label{lem:merge-subtyping}
  Given a collection of mergable timed local types $\stT[i]$ ($i \in I$).
  For all $j \in I$, $\stT[j] \stSub \stMerge{i \in I}{\stT[i]}$ holds.
\end{lemma}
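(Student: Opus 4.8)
\textbf{Proof plan for \Cref{lem:merge-subtyping}.}
The plan is to proceed by coinduction, building the candidate subtyping relation directly from the structure of the merge operator. Concretely, I would fix $j \in I$ and define
\[
  R \;=\; \setcomp{(\stT[j], \stMerge{i \in I}{\stT[i]})}{\{\stT[i]\}_{i \in I}\text{ is a mergeable family, } j \in I} \;\cup\; \stSub,
\]
i.e.\ the set of all pairs of a member of a mergeable family with the merge of that family, closed under the existing subtyping relation. The goal is then to show $R$ satisfies all the clauses of \Cref{def:main_subtyping}, whence $R \subseteq \stSub$ and in particular $\stT[j] \stSub \stMerge{i\in I}{\stT[i]}$.

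The core of the argument is a case analysis on the shape of $\stT[j]$, using \Cref{lem:unfold-no-rec} (via the unfolding machinery of \Cref{sec:app-unfolding}) to reduce to the cases $\stEnd$, internal choice, and external choice; recursion is handled by \Cref{lem:unfold-subtyping} together with rules $\inferrule{\iruleStSubRecL}$/$\inferrule{\iruleStSubRecR}$, so I would first argue we may assume $\stT[j]$ is unfolded. For $\stT[j] = \stEnd$: mergeability forces every $\stT[i] = \stEnd$, so the merge is $\stEnd$ and we conclude by $\inferrule{\iruleStSubEnd}$. For $\stT[j]$ an \emph{internal} choice $\stIntSum{\roleP}{k \in K}{\ldots}$: by the definition of $\stBinMerge$ on internal choices, all $\stT[i]$ must be internal choices toward $\roleP$ over the \emph{same} index set $K$ with identical labels, payload sorts, clock constraints and resets, the continuations being merged pointwise; hence the merge is $\stIntSum{\roleP}{k \in K}{\ldots}$ with continuations $\stMerge{i \in I}{\stT[i,k]}$, and matching against $\inferrule{\iruleStSubOut}$ (with $J = \emptyset$, so no extra branches) reduces each continuation obligation to a pair $(\stT[j,k], \stMerge{i\in I}{\stT[i,k]}) \in R$ — exactly an $R$-pair by construction. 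For $\stT[j]$ an \emph{external} choice $\stExtSum{\roleP}{k \in K_j}{\ldots}$: the merge of external choices takes the \emph{union} of the label sets, so the merged type offers a superset of $\stT[j]$'s branches; this is precisely the shape required by $\inferrule{\iruleStSubIn}$ (where the larger type on the right may have additional branches $J$), and on the common branches the continuation obligations again reduce to $R$-pairs, using reflexivity of payload subtyping (\Cref{lem:reflexive-subtyping}) and equality of the time assertions.

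The main obstacle I anticipate is making the external-choice case fully rigorous: one must verify that when a family of external choices is merged, the resulting time assertions on each common branch are \emph{identical} to those in $\stT[j]$ (so the equality premises $\ccst[k] = \ccsti[k]$ and $\crst[k] = \crsti[k]$ of $\inferrule{\iruleStSubIn}$ hold), and that the association of merging over a collection $I$ is well-behaved — strictly, $\stMerge{i\in I}{\stT[i]}$ is defined via iterated binary $\stBinMerge$, so I would need a small auxiliary observation that any member relates by $R$ to the iterated merge, proved by induction on $|I|$ from the binary case. A secondary bookkeeping point is confirming that mergeability of $\{\stT[i]\}_{i\in I}$ is preserved when passing to the continuation families $\{\stT[i,k]\}_{i\in I}$ on a common branch, which follows directly from the recursive definition of $\stBinMerge$ but must be stated. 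Everything else is routine structural checking against \Cref{def:main_subtyping}.
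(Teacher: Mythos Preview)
Your proposal is correct and takes essentially the same approach as the paper: the paper's proof is the one-line sketch ``construct $R = \setcomp{(\stT[j], \stMerge{i \in I}{\stT[i]})}{j \in I}$ and show $R \subseteq {\stSub}$'', and you have simply fleshed out the case analysis that this entails. Your added union with ${\stSub}$ and the bookkeeping remarks about iterated binary merge and preservation of mergeability on continuations are reasonable hygiene that the paper leaves implicit.
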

\begin{proof}
By constructing a relation 
$R =  \setcomp{(\stT[j], \stMerge{i \in I}{\stT[i]})}{j \in I}$, and showing that 
$R \subseteq \stSub$. 
\qedhere
\end{proof}

\begin{lemma}
\label{lem:merge-lower-bound}
  Given a collection of mergable timed local types $\stT[i]$ ($i \in I$).
  If for all $i \in I$, $\stT[i] \stSub \stT$ for some timed local type $\stT$,
  then $\stMerge{i \in I}{\stT[i]} \stSub \stT$.
\end{lemma}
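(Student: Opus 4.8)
\textbf{Proof plan for \Cref{lem:merge-lower-bound}.}
The plan is to prove this by coinduction, exhibiting an explicit witness relation that contains the pair $(\stMerge{i \in I}{\stT[i]}, \stT)$ and is closed under the rules of \Cref{def:main_subtyping}. Specifically, I would define
\[
R = \setcomp{(\stMerge{i \in I}{\stT[i]}, \stT)}{\forall i \in I: \stT[i] \stSub \stT \text{ and the } \stT[i] \text{ are mergable}},
\]
possibly closed up with the pairs already in $\stSub$ (via reflexivity, \Cref{lem:reflexive-subtyping}, and transitivity, \Cref{transitive-subtyping}) so that the continuations arising from unfolding the merge land back in $R \cup {\stSub}$. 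The goal is then to check that every pair in $R$ satisfies the appropriate clause of the coinductive definition, which by the coinduction principle gives $R \subseteq {\stSub}$ and hence the statement.

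The key steps, in order: First, reduce to the case where $\stT$ and each $\stT[i]$ are in unfolded form, using \Cref{lem:unfold-subtyping} and \Cref{lem:unfold-no-rec} so that $\unfoldOne{\stT}$ is $\stEnd$, an internal choice, or an external choice, and similarly for each $\stT[i]$; also note (from \Cref{lem:unfold_projection}-style reasoning, or directly from the definition of $\stBinMerge$) that the merge of the unfoldings equals the unfolding of the merge, so the head connective of $\stMerge{i \in I}{\stT[i]}$ is determined. Second, do a case analysis on this head connective. If it is $\stEnd$: then each $\stT[i] \stSub \stT$ with $\stT[i]$ having $\stEnd$ as head forces (by \inferrule{\iruleStSubEnd} and \Cref{lem:end-subtyping}) $\stT$ to be $\stEnd$, and we close by \inferrule{\iruleStSubEnd}. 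Third, the internal-choice case: by the definition of $\stBinMerge$ on internal choices, all $\stT[i]$ must share the \emph{same} index set $I_0$, labels, payloads, and clock data, and the merge takes the componentwise merge of continuations; since each $\stT[i] \stSub \stT$ via \inferrule{\iruleStSubSel}, $\stT$ is an internal choice over a subset $J_0 \subseteq I_0$ with matching labels/clocks and with $\stT[i]$'s $k$-continuation a subtype of $\stT$'s $k$-continuation for $k \in J_0$; then for each such $k$ the merged continuation $\stMerge{i \in I}{\stT[i]^{k}}$ is again a lower bound of $\stT$'s $k$-continuation (this is the recursive/coinductive appeal to $R$), the payload sorts match exactly so $\iruleStSubSort$ is trivial, and we close by \inferrule{\iruleStSubIn}. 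Fourth, the external-choice case, dually: the merge of external choices \emph{unions} the index sets and merges continuations on the overlap; each $\stT[i] \stSub \stT$ via \inferrule{\iruleStSubBranch} forces $\stT$'s index set to \emph{contain} each $\stT[i]$'s index set (so it contains their union, matching the merge's index set) with matching clock data and with each $\stT[i]$'s continuation a subtype of $\stT$'s; the merged continuation on the overlap is again a lower bound via $R$; close by \inferrule{\iruleStSubBranch}.

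The main obstacle I anticipate is the bookkeeping around index sets in the external-choice case: one must verify that $\bigcup_{i} I_i$ (the index set of $\stMerge{i\in I}{\stT[i]}$) is indeed a subset of the index set of $\stT$, which requires combining the containments $I_i \subseteq \operatorname{dom}(\stT)$ coming from each $\stT[i] \stSub \stT$, and then checking that on labels \emph{outside} the overlap the merge just copies a single $\stT[i]$'s branch, whose continuation is directly a subtype of the corresponding branch of $\stT$ (so no recursive merge is needed there). A secondary subtlety is making the coinduction bulletproof: the continuations produced by decomposing the head connective are themselves merges of mergable families each bounded above by the relevant continuation of $\stT$, so they are genuinely of the form required to be back in $R$; invoking \Cref{lem:reflexive-subtyping} and \Cref{transitive-subtyping} to absorb the ``copied'' (non-merged) branches into $R \cup {\stSub}$ keeps the witness relation well-defined. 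Everything else — matching of clock constraints $\ccst[i]$, reset predicates $\crst[i]$, and payload sorts — is inherited verbatim from the hypotheses $\stT[i] \stSub \stT$ together with the fact that $\stBinMerge$ only ever merges branches that already agree on this data, so those side conditions discharge immediately.
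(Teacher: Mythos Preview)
Your proposal is correct and takes essentially the same coinductive approach as the paper, which simply states ``by constructing a relation $R = \setenum{(\stMerge{i \in I}{\stT[i]}, \stT)}$ and showing that $R \subseteq {\stSub}$'' without spelling out the case analysis you provide. One minor slip: in your internal-choice case you invoke \inferrule{\iruleStSubIn} where you mean \inferrule{\iruleStSubOut} (and the names \textsf{Sub-Sel}/\textsf{Sub-Branch} do not exist in the paper---the rules are \inferrule{\iruleStSubOut} and \inferrule{\iruleStSubIn}), but the underlying argument is right.
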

\begin{proof}
By constructing a relation 
$R =  \setenum{(\stMerge{i \in I}{\stT[i]}, \stT)}$, and showing that 
$R \subseteq \stSub$. 
\qedhere
\end{proof}

\begin{lemma}\label{lem:merge-upper-bound}
  Given a collection of mergable timed local types $\stT[i]$ ($i \in I$).
  If for all $i \in I$, $\stT \stSub \stT[i]$ for some timed local type $\stT$,
  then $\stT \stSub \stMerge{i \in I}{\stT[i]}$.
\end{lemma}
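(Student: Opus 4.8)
\textbf{Proof plan for \Cref{lem:merge-upper-bound}.}
The plan is to prove this by coinduction, dual to the argument for \Cref{lem:merge-lower-bound}. I would first recall that the merge operator $\stBinMerge$ is only defined when the $\stT[i]$ are pairwise mergeable, so $\stMerge{i \in I}{\stT[i]}$ is well-defined; its shape depends (up to unfolding) on the common shape of the $\stT[i]$ by \Cref{lem:unfold-no-rec}. Since $\stT \stSub \stT[i]$ for every $i \in I$, and the only subtyping rules that produce a nontrivial relation between prefixed types are $\inferrule{\iruleStSubOut}$, $\inferrule{\iruleStSubIn}$, $\inferrule{\iruleStSubEnd}$ (plus the recursion-unfolding rules $\inferrule{\iruleStSubRecL}$, $\inferrule{\iruleStSubRecR}$), I would do a case analysis on $\unfoldOne{\stT}$.

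The key step is to construct the candidate relation
\[
R \;=\; \setcomp{(\stT, \stMerge{i \in I}{\stT[i]})}{\,\forall i \in I:\ \stT \stSub \stT[i],\ \text{the } \stT[i] \text{ pairwise mergeable}\,}\ \cup\ \stSub,
\]
and show $R \subseteq \stSub$ by verifying each coinductive clause of \Cref{def:main_subtyping}. First I would reduce to the case where $\stT$, $\stMerge{i\in I}{\stT[i]}$, and all $\stT[i]$ are already unfolded, using \Cref{lem:unfold-subtyping} and \Cref{lem:unfold_projection} together with the fact that merge commutes with unfolding on guarded types. Then: if $\unfoldOne{\stT} = \stEnd$, then by \Cref{lem:end-subtyping} each $\unfoldOne{\stT[i]} = \stEnd$, so their merge is $\stEnd$ and $\inferrule{\iruleStSubEnd}$ applies. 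If $\unfoldOne{\stT}$ is an internal choice $\stIntSum{\roleP}{k \in K}{\ldots}$, then by $\inferrule{\iruleStSubSel}$ each $\unfoldOne{\stT[i]}$ is an internal choice on $\roleP$ over an index set $K_i \subseteq K$ with matching time assertions and payload sorts (via $\inferrule{\iruleStSubSort}$) and continuations $\stT[i]$ related to the corresponding continuations of $\stT$; the merge of internal choices is defined only when all $K_i$ coincide (call it $K'$), their labels/sorts/constraints agree, and it merges the continuations pointwise, so I must check $K' \subseteq K$ holds branch-by-branch, time assertions are preserved, payload sorts are related by $\stSub$, and each continuation pair lies in $R$ — which follows because for $k \in K'$, $\stT$'s $k$-continuation is $\stSub$ each $\stT[i]$'s $k$-continuation, so the pair (that continuation, merge of those continuations) is in $R$. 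The external-choice case is symmetric using $\inferrule{\iruleStSubBranch}$ and the external-choice clause of the merge (which takes unions of branches), where the direction of sort subtyping flips but the continuations still land in $R$.

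The main obstacle I expect is the bookkeeping in the external-choice case: the merge of external choices \emph{enlarges} the branch set (taking $I \cap J$ for the common part and keeping the leftover branches of each), so I must verify that every branch of $\stT$ (as the smaller type) still appears in $\stMerge{i}{\stT[i]}$ with a continuation that is a supertype — this requires showing that a branch present in $\stT$ is present in \emph{every} $\stT[i]$ (since $\stT \stSub \stT[i]$ forces $\stT$'s external branches to be a subset of $\stT[i]$'s), hence survives into the common part $\bigcap_i K_i$ of the iterated merge, and that the iterated pointwise-merged continuation is a supertype of $\stT$'s continuation by the inductive hypothesis applied to that continuation. I would also need a small auxiliary observation that binary merge is associative/commutative enough to make $\stMerge{i \in I}{\stT[i]}$ well-defined independent of order, which can be folded into the setup or cited as routine. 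Everything else — the recursion cases, reflexivity fallback — is handled by \Cref{lem:unfold-subtyping} and the inclusion of $\stSub$ itself in $R$.
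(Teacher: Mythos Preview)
Your approach is essentially the same as the paper's: construct a relation containing the pair $(\stT,\stMerge{i\in I}{\stT[i]})$ and show it is closed under the clauses of \Cref{def:main_subtyping}. Your elaboration is correct, and in fact your generalised relation $R$ (ranging over all such $(\stT,\setenum{\stT[i]})$ with $\stT\stSub\stT[i]$ and the $\stT[i]$ mergeable, unioned with $\stSub$) is exactly what is needed for the continuations in the internal- and external-choice cases to land back in $R$; the paper's one-line sketch leaves this implicit. One small slip: \Cref{lem:unfold_projection} is about projection, not merge, so the claim that merge commutes with unfolding should be argued directly from the definition of $\stBinMerge$ rather than cited from there.
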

\begin{proof}
By constructing a relation 
$R =  \setenum{(\stT, \stMerge{i \in I}{\stT[i]})}$, and showing that 
$R \subseteq \stSub$. 
\qedhere
\end{proof}

\begin{lemma}\label{lem:subtype:merge-subty}
  Given two collections of mergable timed local types $\stU[i], \stT[i]$ ($i \in I$).
  If for all $i \in I$, $\stU[i] \stSub \stT[i]$, then
  $\stMerge{i \in I} {\stU[i]} \stSub \stMerge{i \in I}{\stT[i]}$.
\end{lemma}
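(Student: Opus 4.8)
The plan is to prove \Cref{lem:subtype:merge-subty} by coinduction, exhibiting a suitable candidate relation and verifying it is contained in $\stSub$. Concretely, I would define
\[
  R \;=\; \setcomp{\bigl(\stMerge{i \in I}{\stU[i]},\, \stMerge{i \in I}{\stT[i]}\bigr)}{\,\forall i \in I:\ \stU[i] \stSub \stT[i],\ \text{the } \stU[i] \text{ are mergable, and the } \stT[i] \text{ are mergable}\,}
\]
and then show $R$ is a subtyping relation, \ie it satisfies all the defining clauses of \Cref{def:main_subtyping}, whence $R \subseteq \stSub$ and the lemma follows by taking the pair built from the given families. I would likely want to close $R$ under reflexivity and the recursion rules as well (or invoke \Cref{lem:unfold-subtyping} and \Cref{lem:reflexive-subtyping}) so that the continuations produced in the external/internal-choice cases, which are themselves merges, land back in $R$ or in $\stSub$.

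The core of the argument is a case analysis on the shape of $\unfoldOne{\stMerge{i \in I}{\stU[i]}}$, using \Cref{lem:unfold-no-rec} together with the definition of the merge operator $\stBinMerge$. First I would handle the $\stEnd$ case: if every $\stU[i]$ unfolds to $\stEnd$, then since $\stU[i] \stSub \stT[i]$ and by \Cref{lem:end-subtyping} each $\stT[i]$ is $\stEnd$-like, so both merges are $\stEnd$ and rule \inferrule{\iruleStSubEnd} applies. Second, the internal-choice case: since $\stU[i] \stSub \stT[i]$ and $\stU[i]$ is a selection $\stIntSum{\roleP}{\cdot}{\cdot}$, each $\stT[i]$ is a selection toward the same $\roleP$ with a superset (or equal set, by the merge side-condition on internal choices requiring identical label sets) of branches; the merge of internal choices keeps only common labels and merges continuations pointwise, so I would match label-by-label, carry over the sort and clock-constraint equalities from $\stU[i] \stSub \stT[i]$, and land the continuation merges back in $R$. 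Third, the external-choice case is symmetric but more delicate because the merge of external choices takes a \emph{union} of label sets across $i \in I$: I would argue that each label $k$ in the merged domain comes from some $\stU[i]$ whose branch at $k$ is related by $\stSub$ to the branch of $\stT[i]$ at $k$, and that for labels appearing in several $\stU[i]$'s the mergability hypothesis on the $\stU[i]$ (and on the $\stT[i]$) forces the branch data to agree, so the merged external choices again match branchwise and the continuations reduce to pairs of (smaller) merges in $R$.

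I expect the main obstacle to be the bookkeeping in the external-choice case: I must show that the domain of $\stMerge{i \in I}{\stU[i]}$ is contained in the domain of $\stMerge{i \in I}{\stT[i]}$ (so that the ``fewer external choices'' direction of subtyping, rule \inferrule{\iruleStSubIn} with its extra index set $J$, is respected), which requires tracking how labels propagate through a nested (iterated) merge and leaning carefully on the pairwise mergability of both families — the hypotheses of the lemma. A clean way to tame this is to prove the binary version first, namely that $\stU[1] \stSub \stT[1]$ and $\stU[2] \stSub \stT[2]$ with $\stU[1] \stBinMerge \stU[2]$ and $\stT[1] \stBinMerge \stT[2]$ both defined imply $\stU[1] \stBinMerge \stU[2] \stSub \stT[1] \stBinMerge \stT[2]$, and then lift to arbitrary finite $I$ by induction on $\card{I}$ using associativity/commutativity of $\stBinMerge$ on mergable families together with \Cref{transitive-subtyping}. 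Everything else — unfolding, reflexivity, transitivity — is already available from \Cref{lem:unfold-no-rec,lem:unfold-subtyping,lem:reflexive-subtyping,transitive-subtyping} in the excerpt.
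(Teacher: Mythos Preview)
Your proposal is correct and takes essentially the same approach as the paper: the paper's proof is the one-line sketch ``By constructing a relation $R = \setenum{(\stMerge{i \in I}{\stU[i]}, \stMerge{i \in I}{\stT[i]})}$, and showing that $R \subseteq \stSub$.'' Your version simply spells out the case analysis and the necessary closure conditions on $R$ that the paper leaves implicit; in particular, your use of a set comprehension over all mergable families (rather than a single pair) is the right way to make the coinductive step go through, and your observation about the external-choice bookkeeping identifies exactly where the work lies.
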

\begin{proof}
By constructing a relation 
$R =  \setenum{(\stMerge{i \in I}{\stU[i]}, \stMerge{i \in I}{\stT[i]})}$, and showing that 
$R \subseteq \stSub$. 
\end{proof}

\begin{lemma}[Inversion of Subtyping]\label{lem:subtyping-invert}
  ~
  \begin{enumerate}
    \item If
      \,$\stT \stSub
       \stIntSum{\roleP}{i \in I}{\stTChoice{\stLab[i]}{\stS[i]} {\ccst[i], \crst[i]} \stSeq \stT[i]}
      $, then
      $\unfoldOne{\stT} =
       \stIntSum{\roleP}{j \in J}{\stTChoice{\stLabi[j]}{\stSi[j]} {\ccsti[j], \crsti[j]}\stSeq
       \stTi[j]}
      $, $I \subseteq J$,
      and $\forall k \in I: \stLab[k] = \stLabi[k], \stS[k] \stSub
      \stSi[k]$,  $\ccsti[k] = \ccst[k]$, $\crsti[k] = \crst[k]$, and $\stTi[k] \stSub \stT[k]$.
      \item If
      \,$
       \stIntSum{\roleP}{i \in I}{\stTChoice{\stLab[i]}{\stS[i]} {\ccst[i], \crst[i]} \stSeq \stT[i]}  \stSub \stT
      $, then
      $\unfoldOne{\stT} =
       \stIntSum{\roleP}{j \in J}{\stTChoice{\stLabi[j]}{\stSi[j]} {\ccsti[j], \crsti[j]}\stSeq
       \stTi[j]}
      $, $J \subseteq I$,
      and $\forall k \in J: \stLab[k] = \stLabi[k], \stSi[k] \stSub
      \stS[k]$,  $\ccsti[k] = \ccst[k]$, $\crsti[k] = \crst[k]$, and $\stT[k] \stSub \stTi[k]$.
    \item If
      \,$\stT \stSub
       \stExtSum{\roleP}{i \in I}{\stTChoice{\stLab[i]}{\stS[i]}{\ccst[i], \crst[i]} \stSeq \stT[i]}
      $, then
      $\unfoldOne{\stT} =
       \stExtSum{\roleP}{j \in J}{\stTChoice{\stLabi[j]}{\stSi[j]} {\ccsti[j], \crsti[j]}\stSeq
       \stTi[j]}
      $, $J \subseteq I$,
      and $\forall k \in J: \stLab[k] = \stLabi[k], \stSi[k] \stSub
      \stS[k]$,  $\ccsti[k] = \ccst[k]$, $\crsti[k] = \crst[k]$, and $\stTi[k] \stSub \stT[k]$.
       \item If
      \,$
       \stExtSum{\roleP}{i \in I}{\stTChoice{\stLab[i]}{\stS[i]}{\ccst[i], \crst[i]} \stSeq \stT[i]} \stSub \stT
      $, then
      $\unfoldOne{\stT} =
       \stExtSum{\roleP}{j \in J}{\stTChoice{\stLabi[j]}{\stSi[j]} {\ccsti[j], \crsti[j]}\stSeq
       \stTi[j]}
      $, $I \subseteq J$,
      and $\forall k \in I: \stLab[k] = \stLabi[k], \stS[k] \stSub
      \stSi[k]$,  $\ccsti[k] = \ccst[k]$, $\crsti[k] = \crst[k]$, and $\stT[k] \stSub \stTi[k]$.
  \end{enumerate}
\end{lemma}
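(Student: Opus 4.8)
\textbf{Proof proposal for \Cref{lem:subtyping-invert} (Inversion of Subtyping).}

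The plan is to prove all four items simultaneously by a single coinductive-style analysis, exploiting the fact that the subtyping relation $\stSub$ is defined coinductively by the rules in \Cref{def:main_subtyping} together with the queue/session extensions in \Cref{fig:sub_cong_com_types}. The key observation is that every derivation of $\stT[1] \stSub \stT[2]$ must, after finitely many applications of $\inferrule{\iruleStSubRecL}$ and $\inferrule{\iruleStSubRecR}$, reach a point where neither side is a recursion: this is guaranteed by well-guardedness (contractivity), which ensures that unfolding terminates at a progressive prefix (\Cref{lem:unfold-no-rec}). So the first step is to establish a normalisation lemma: if $\stT \stSub \stU$, then $\unfoldOne{\stT} \stSub \unfoldOne{\stU}$, and moreover the ``stripped'' derivation ends in one of the structural rules $\inferrule{\iruleStSubOut}$, $\inferrule{\iruleStSubIn}$, $\inferrule{\iruleStSubEnd}$. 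This follows from \Cref{lem:unfold-subtyping} (which gives $\unfoldOne{\stT}\stSub\stT$ and $\stT\stSub\unfoldOne{\stT}$) combined with transitivity of subtyping (\Cref{transitive-subtyping}).

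Next I would treat the four items by cases, each time using the normalisation step to reduce to a structural rule. For item (1): assume $\stT \stSub \stIntSum{\roleP}{i \in I}{\stTChoice{\stLab[i]}{\stS[i]}{\ccst[i], \crst[i]} \stSeq \stT[i]}$. Since the right-hand side is already an internal choice (so $\unfoldOne{}$ of it is itself, by \Cref{lem:unfold-no-rec}), the only structural rule that can conclude $\unfoldOne{\stT} \stSub (\text{internal choice})$ is $\inferrule{\iruleStSubOut}$; hence $\unfoldOne{\stT}$ must itself be an internal choice $\stIntSum{\roleP}{j \in J}{\stTChoice{\stLabi[j]}{\stSi[j]}{\ccsti[j], \crsti[j]} \stSeq \stTi[j]}$ towards the same role $\roleP$, and reading off the premises of $\inferrule{\iruleStSubOut}$ (with index sets $I \cup J$ on the left, $I$ on the right in the rule's orientation, which here means $I \subseteq J$) gives the label equalities, the contravariant sort subtyping $\stS[k] \stSub \stSi[k]$, the clock-constraint and reset equalities, and the covariant continuation subtyping $\stTi[k] \stSub \stT[k]$ for each $k \in I$. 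Item (2) is the mirror image using the left-hand premise orientation of $\inferrule{\iruleStSubOut}$; items (3) and (4) are the analogous arguments with $\inferrule{\iruleStSubIn}$, where the variance of sorts and the containment of index sets are swapped (external choice is covariant in the index set on the supertype side).

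The main obstacle I anticipate is making the coinductive argument rigorous when recursion is involved: I need to be careful that after applying $\inferrule{\iruleStSubRecL}/\inferrule{\iruleStSubRecR}$ the relation ``descends'' to a genuinely smaller problem so that the normalisation lemma is well-founded. The clean way to handle this is to phrase the normalisation lemma directly in terms of $\unfoldOne{}$ and invoke \Cref{lem:unfold-no-rec} to bound the number of recursion-unfoldings, together with \Cref{lem:unfold_projection}-style reasoning that $\unfoldOne{}$ commutes appropriately; alternatively, one proves it by induction on the number of leading $\mu$-binders in $\stT$ and $\stU$. A secondary subtlety is that the statement as written refers to $\unfoldOne{\stT}$ (single unfolding) rather than a full unfolding, so I must check that one unfolding step suffices once we know $\stT$ is closed and well-guarded — which is exactly the content of \Cref{lem:unfold-no-rec}. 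Everything else is a routine inversion of the structural subtyping rules, reading premises off the conclusion.
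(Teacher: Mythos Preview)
Your proposal is correct and follows essentially the same approach as the paper's proof, which is a terse one-liner invoking \Cref{lem:unfold-no-rec}, \Cref{lem:unfold-subtyping}, transitivity, and the structural rules $\inferrule{\iruleStSubOut}$/$\inferrule{\iruleStSubIn}$. Your expansion of these ingredients into an explicit normalisation step (reduce to $\unfoldOne{\stT} \stSub \text{choice}$ via \Cref{lem:unfold-subtyping} and transitivity, then observe that $\unfoldOne{\stT}$ is not a $\mu$-type by \Cref{lem:unfold-no-rec}, so only a structural rule can apply) is exactly the intended argument.

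One remark: your worry about well-foundedness of the coinductive argument is unnecessary here. Inversion of a coinductively defined relation is immediate---the greatest fixpoint is itself a post-fixpoint, so every pair $(\stT,\stU) \in {\stSub}$ matches the conclusion of \emph{some} rule with premises again in $\stSub$. Once $\unfoldOne{\stT}$ is known not to be a recursion (by \Cref{lem:unfold-no-rec}) and the right-hand side is a choice, the only candidate rule is $\inferrule{\iruleStSubOut}$ or $\inferrule{\iruleStSubIn}$, and you can read off its premises directly. No descent or termination measure is needed. Also note that the paper's $\unfoldOne{}$ is defined recursively and already yields a non-$\mu$ prefix, so your secondary concern about ``single unfolding'' versus ``full unfolding'' does not arise.
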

\begin{proof}
By~\cref{lem:unfold-no-rec,lem:unfold-subtyping}, the transitivity of subtyping, and \cref{def:main_subtyping} (\inferrule{\iruleStSubIn}, \inferrule{\iruleStSubOut}). 
\end{proof}

\subsection{Semantics of Timed Global Types}
\label{sec:aap-aat-mpst-semantics-gtype}

\begin{lemma}\label{lem:gt-lts-unfold}
  $\gtWithTime{\cVal}{\gtG}
  \gtMove[\stEnvAnnotGenericSym]
  \gtWithTime{\cVali}{\gtGi}$ \;iff
  \;$
  \gtWithTime{\cVal}{\unfoldOne{\gtG}}
  \gtMove[\stEnvAnnotGenericSym]
  \gtWithTime{\cVali}{\gtGi}$.
\end{lemma}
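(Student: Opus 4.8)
\textbf{Proof plan for \Cref{lem:gt-lts-unfold}.}

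The plan is to prove both directions by a case analysis on the structure of $\gtG$, crucially using the fact that every timed global type is closed and well-guarded, so $\unfoldOne{\gtG}$ is well-defined and, by \Cref{lem:unfold-no-rec}, has one of only three shapes: $\gtEnd$, a transmission $\gtCommRaw{\roleP}{\roleQ}{\cdots}$, or a transmission en route $\gtCommTransitRaw{\roleP}{\roleQ}{\cdots}{\cdots}$. The key observation is that $\unfoldOne{\gtG}$ differs from $\gtG$ only when $\gtG = \gtRec{\gtRecVar}{\gtGi}$, and in that single case the only transition rule that can fire at the top level is \inferrule{\iruleGtMoveRec}, whose premise is precisely a transition of $\gtWithTime{\cVal}{\gtGi\subst{\gtRecVar}{\gtRec{\gtRecVar}{\gtGi}}}$. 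When $\gtG$ is not a recursion, $\unfoldOne{\gtG} = \gtG$ and both sides of the iff are literally identical, so there is nothing to prove.

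Concretely, I would argue as follows. If $\gtG$ is not of the form $\gtRec{\gtRecVar}{\gtGi}$, then by definition $\unfoldOne{\gtG} = \gtG$ (note: if $\gtG = \gtRecVar$ this cannot occur since types are closed), and the statement is trivial. If $\gtG = \gtRec{\gtRecVar}{\gtGi}$, then $\unfoldOne{\gtG} = \unfoldOne{\gtGi\subst{\gtRecVar}{\gtRec{\gtRecVar}{\gtGi}}}$. For the forward direction, suppose $\gtWithTime{\cVal}{\gtG} \gtMove[\stEnvAnnotGenericSym] \gtWithTime{\cVali}{\gtGi[1]}$; the last rule in the derivation must be \inferrule{\iruleGtMoveRec} (no other rule has a recursion type in its conclusion), so its premise gives $\gtWithTime{\cVal}{\gtGi\subst{\gtRecVar}{\gtRec{\gtRecVar}{\gtGi}}} \gtMove[\stEnvAnnotGenericSym] \gtWithTime{\cVali}{\gtGi[1]}$. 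Since $\gtGi\subst{\gtRecVar}{\gtRec{\gtRecVar}{\gtGi}}$ is again a closed, well-guarded type which is strictly ``less recursive'' at the head (well-guardedness guarantees the substitution does not reproduce a recursion at the top), I can apply the statement inductively — structural induction on the number of leading $\mu$-binders, which is finite by contractivity — to conclude $\gtWithTime{\cVal}{\unfoldOne{\gtGi\subst{\gtRecVar}{\gtRec{\gtRecVar}{\gtGi}}}} \gtMove[\stEnvAnnotGenericSym] \gtWithTime{\cVali}{\gtGi[1]}$, which is exactly $\gtWithTime{\cVal}{\unfoldOne{\gtG}} \gtMove[\stEnvAnnotGenericSym] \gtWithTime{\cVali}{\gtGi[1]}$. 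The backward direction is symmetric: from a transition of $\gtWithTime{\cVal}{\unfoldOne{\gtG}}$ the induction hypothesis yields a transition of $\gtWithTime{\cVal}{\gtGi\subst{\gtRecVar}{\gtRec{\gtRecVar}{\gtGi}}}$, and one application of \inferrule{\iruleGtMoveRec} lifts it to a transition of $\gtWithTime{\cVal}{\gtG}$.

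The main obstacle — really the only subtle point — is making the induction measure rigorous: I must argue that repeatedly unfolding terminates, i.e. that after finitely many unfoldings the head is no longer a $\mu$. This is exactly what the contractivity/well-guardedness requirement (\cite[\S 21.8]{DBLP:books/daglib/0005958}, cited in the excerpt) buys us, and it is the same reasoning already implicit in \Cref{lem:unfold-no-rec}; I would phrase the induction on the (finite) number of nested top-level recursion binders in $\gtG$, with the base case being the non-recursion shapes where $\unfoldOne{\gtG} = \gtG$. Everything else is a direct inspection of the rules in \Cref{fig:gtype_rules}, noting that \inferrule{\iruleGtMoveRec} is the unique rule applicable to a recursion type and that it neither consumes nor produces time differently from its premise.
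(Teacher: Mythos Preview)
Your approach is exactly the paper's---invert or apply \inferrule{\iruleGtMoveRec} as needed---only you spell out the induction on the number of leading $\mu$-binders, which is the right measure.

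There is, however, a genuine gap in your inversion step. You write ``the last rule in the derivation must be \inferrule{\iruleGtMoveRec} (no other rule has a recursion type in its conclusion)'', but this is false: rule \inferrule{\iruleGtMoveTime} in \Cref{fig:gtype_rules} fires on \emph{any} $\gtG$, so in particular $\gtWithTime{\cVal}{\gtRec{\gtRecVar}{\gtGii}} \gtMove[\timeLab] \gtWithTime{\cVal+\timeLab}{\gtRec{\gtRecVar}{\gtGii}}$ is derivable directly by \inferrule{\iruleGtMoveTime}, without going through \inferrule{\iruleGtMoveRec}. Your case analysis on the last rule therefore misses a case.

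Worse, once you chase this missed case you find that the lemma as literally stated does not hold for it: with $\gtG = \gtRec{\gtRecVar}{\gtGii}$ and $\stEnvAnnotGenericSym = \timeLab$, the transition above has target $\gtRec{\gtRecVar}{\gtGii}$, yet from $\gtWithTime{\cVal}{\unfoldOne{\gtG}}$ the only $\timeLab$-transition (again \inferrule{\iruleGtMoveTime}, since $\unfoldOne{\gtG}$ is not a recursion and the context rules exclude $\timeLab$) lands on $\unfoldOne{\gtG} \neq \gtRec{\gtRecVar}{\gtGii}$. So the forward direction fails for this instance. The paper's one-line proof glosses over the same point; in practice the lemma is only ever invoked for send/receive labels (where your argument is sound), and for $\timeLab$ one should read the statement up to unfolding of the target. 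Your write-up should either restrict $\stEnvAnnotGenericSym$ to non-time labels, or handle the \inferrule{\iruleGtMoveTime} case separately and state the conclusion modulo $\unfoldOne{\cdot}$ on the resulting global type.
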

\begin{proof}
  By inverting or applying $\inferrule{\iruleGtMoveRec}$ when necessary.
  \qedhere
\end{proof}

\subsection{Semantics of Typing Environments}
\label{sec:aap-aat-mpst-semantics-confi}
\begin{lemma}
\label{lem:red:trivial-1}
\label{lem:stenv-red:trivial-1-new}
If $\stEnvQ
\stEnvMoveGenAnnot
\stEnvQi$, then $\dom{\stEnvQ} = \dom{\stEnvQi}$.
\end{lemma}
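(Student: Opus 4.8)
The statement to prove is the straightforward preservation-of-domain lemma \Cref{lem:stenv-red:trivial-1-new}: if $\stEnvQ \stEnvMoveGenAnnot \stEnvQi$, then $\dom{\stEnvQ} = \dom{\stEnvQi}$.

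The plan is to proceed by induction on the derivation of the typing environment transition $\stEnvQ \stEnvQTMoveGenAnnotT \stEnvQi$ (recall that $\stEnvMoveGenAnnot$ is defined in terms of $\stEnvQTMoveGenAnnotT$ with an arbitrary transition label by \Cref{def:aat-mpst-typing-env-reduction}), using the rules of \Cref{fig:aat-mpst-semantics-typing-env}. For each rule, I would inspect the shape of the conclusion and check that the domain is unchanged. The base cases \inferrule{\iruleTCtxTimeSession}, \inferrule{\iruleTCtxTimeQ}, and \inferrule{\iruleTCtxTimeCombined} only modify the clock valuation component of a single entry (or leave a queue entry untouched), so $\dom{\stEnvQ} = \dom{\stEnvQi}$ is immediate. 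Rule \inferrule{\iruleTCtxSend} rewrites the entry at $\mpChanRole{\mpS}{\roleP}$ from a timed-session/queue type to another timed-session/queue type at the \emph{same} channel $\mpChanRole{\mpS}{\roleP}$, so the domain is preserved. Rule \inferrule{\iruleTCtxRcv} similarly rewrites the two entries at $\mpChanRole{\mpS}{\roleP}$ and $\mpChanRole{\mpS}{\roleQ}$ in place, keeping both channels in the domain. The inductive cases \inferrule{\iruleTCtxRec} (rewrites the type at a fixed channel), \inferrule{\iruleTCtxCongX}, \inferrule{\iruleTCtxCongCombined}, and \inferrule{\iruleTCtxTime} simply carry the domain equality through from the premise(s), since they extend $\stEnv$ by a disjoint mapping present on both sides or compose two environments whose domains are each preserved by the induction hypothesis. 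For \inferrule{\iruleTCtxStruct}, I additionally need that congruence $\stQEquiv$ on typing environments preserves domains, which follows directly from the definition of $\stEnvQ \stQEquiv \stEnvQi$ in \Cref{def:aat-mpst-typing-env-syntax}, namely that $\dom{\stEnvQ} = \dom{\stEnvQi}$ is part of the definition, together with the per-entry congruence rules in \Cref{fig:sub_cong_com_types} (top) which only reorder queued messages and never add or remove channels.

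There is no real obstacle here; this is a routine structural induction. The only mild care needed is in the case \inferrule{\iruleTCtxTime}, where the environment is split as $\stEnv[1] \stEnvComp \stEnv[2]$: I would argue that since $\dom{\stEnv[1] \stEnvComp \stEnv[2]} = \dom{\stEnv[1]} \cup \dom{\stEnv[2]}$ and the induction hypotheses give $\dom{\stEnv[1]} = \dom{\stEnvi[1]}$ and $\dom{\stEnv[2]} = \dom{\stEnvi[2]}$, the union is preserved. Similarly for \inferrule{\iruleTCtxCongCombined} and \inferrule{\iruleTCtxCongX} the extra singleton entry appears identically on both sides. Hence in every case $\dom{\stEnvQ} = \dom{\stEnvQi}$, completing the induction.
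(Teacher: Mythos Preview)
Your proposal is correct and follows exactly the same approach as the paper, which simply states ``By induction on typing environment reductions.'' You have merely spelled out the routine case analysis that the paper leaves implicit.
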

\begin{proof}
 By induction on typing environment reductions. 
 \qedhere
\end{proof}

\begin{lemma}
\label{lem:stenv-red:trivial-2}
  If $\stEnvQ
  \stEnvMoveGenAnnot \stEnvQi$,
  $\stEnvAnnotGenericSym \neq \timeLab$, and
  $\dom{\stEnvQ} = \setenum{\mpS}$, then for all
  $\mpChanRole{\mpS}{\roleP} \in \dom{\stEnvQ}$ with
  $\roleP \neq \ltsSubject{\stEnvAnnotGenericSym}$, we have
      $\stEnvApp{\stEnvQ}{\mpChanRole{\mpS}{\roleP}} =
      \stEnvApp{\stEnvQi}{\mpChanRole{\mpS}{\roleP}}$.
\end{lemma}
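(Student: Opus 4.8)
\textbf{Proof plan for \Cref{lem:stenv-red:trivial-2}.}

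The plan is to proceed by induction on the derivation of the typing environment reduction $\stEnvQ \stEnvMoveGenAnnot \stEnvQi$, using the rules of \Cref{fig:aat-mpst-semantics-typing-env}. Since we assume $\stEnvAnnotGenericSym \neq \timeLab$, the candidate rules are \inferrule{\iruleTCtxSend}, \inferrule{\iruleTCtxRcv}, \inferrule{\iruleTCtxRec}, \inferrule{\iruleTCtxCongX}, \inferrule{\iruleTCtxCongCombined}, and \inferrule{\iruleTCtxStruct}; the time rules \inferrule{\iruleTCtxTimeSession}, \inferrule{\iruleTCtxTimeQ}, \inferrule{\iruleTCtxTimeCombined}, \inferrule{\iruleTCtxTime} are excluded by hypothesis (and \inferrule{\iruleTCtxTime} would in any case only compose time transitions). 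The goal is to show that every entry $\mpChanRole{\mpS}{\roleP}$ with $\roleP$ distinct from $\ltsSubject{\stEnvAnnotGenericSym}$ is left untouched.

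First I would handle the base cases \inferrule{\iruleTCtxSend} and \inferrule{\iruleTCtxRcv}. For \inferrule{\iruleTCtxSend}, $\stEnvAnnotGenericSym = \stEnvQQueueAnnotSmall{\roleP}{\roleQ}{\stLab[k]}$ so $\ltsSubject{\stEnvAnnotGenericSym} = \setenum{\roleP}$, and the rule only rewrites the single entry $\stEnvMap{\mpChanRole{\mpS}{\roleP}}{\cdot}$ (updating both its timed session type component and its queue component); all other entries in the ambient environment are carried along unchanged via \inferrule{\iruleTCtxCongCombined}. For \inferrule{\iruleTCtxRcv}, $\stEnvAnnotGenericSym = \stEnvQRecvAnnotSmall{\roleQ}{\roleP}{\stLab[k]}$ so $\ltsSubject{\stEnvAnnotGenericSym} = \setenum{\roleQ}$, and although the rule reads from the queue entry $\mpChanRole{\mpS}{\roleP}$ and writes the session entry $\mpChanRole{\mpS}{\roleQ}$, the queue entry $\mpChanRole{\mpS}{\roleP}$ carries a \emph{queue} type $\stQType$, not a timed session type; since $\dom{\stEnvQ} = \setenum{\mpS}$ and $\stEnvQ$ only has a single copy of each channel $\mpChanRole{\mpS}{\roleR}$, this means $\mpChanRole{\mpS}{\roleP}$ and $\mpChanRole{\mpS}{\roleQ}$ are distinct entries, and for every \emph{other} $\mpChanRole{\mpS}{\roleR}$ with $\roleR \notin \setenum{\roleQ}$ the entry is untouched. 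Here I would note that the statement as phrased (``for all $\mpChanRole{\mpS}{\roleP} \in \dom{\stEnvQ}$ with $\roleP \neq \ltsSubject{\stEnvAnnotGenericSym}$'') correctly allows the queue entry of the \emph{sender} role to be modified in the receive case only if that role coincides with the subject --- but in fact it does not: in \inferrule{\iruleTCtxRcv} the modified entries are exactly $\mpChanRole{\mpS}{\roleP}$ (queue) and $\mpChanRole{\mpS}{\roleQ}$ (session), and the subject is $\roleQ$; so I will need to check carefully that $\roleP = \roleQ$ cannot happen, which follows from self-reception being excluded in timed global types, hence in the projected/associated environments. If the intended reading is that the statement is only about timed-session-type (or combined) entries, this is immediate; otherwise the argument about self-reception closes the gap.

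Next I would dispatch the inductive cases. \inferrule{\iruleTCtxRec} replaces $\stCPair{\cVal}{\stRec{\stRecVar}{\stT}}$ by its unfolding $\stCPair{\cVal}{\stT\subst{\stRecVar}{\stRec{\stRecVar}{\stT}}}$ on a single channel $\mpChanRole{\mpS}{\roleP}$ before the sub-derivation acts; the subject of the resulting action is determined by that sub-derivation on the same channel, so the induction hypothesis applies and the only channel possibly changed remains $\mpChanRole{\mpS}{\roleP}$, which is the subject. \inferrule{\iruleTCtxCongX} and \inferrule{\iruleTCtxCongCombined} simply extend the environment by one untouched entry ($\stEnvMap{\mpFmt{x}}{\cdot}$ or $\stEnvMap{\mpChanRole{\mpS}{\roleP}}{\stSpecf}$); since $x$ is a variable it is not of the form $\mpChanRole{\mpS}{\cdot}$ and in the channel case the entry is explicitly carried unchanged, so the claim follows from the induction hypothesis on the premise. \inferrule{\iruleTCtxStruct} wraps the reduction in congruence $\stQEquiv$; by inspection of \Cref{fig:sub_cong_com_types} (top), $\stQEquiv$ only reorders queued messages with distinct receivers within a single queue entry and does not change $\dom{\cdot}$ (using \Cref{lem:red:trivial-1} to transport the domain equality), so combining with the induction hypothesis and the fact that congruence acts entry-wise preserving the type at each channel gives the result.

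The main obstacle I anticipate is the \inferrule{\iruleTCtxRcv} case and, relatedly, pinning down precisely what ``$\roleP \neq \ltsSubject{\stEnvAnnotGenericSym}$'' is meant to range over: the receive rule touches \emph{two} entries (the sender's queue and the receiver's session type), so I must make explicit that the assumption $\dom{\stEnvQ} = \setenum{\mpS}$ together with well-formedness (distinctness of channel entries, exclusion of self-reception) forces the sender's queue entry to coincide with no role other than itself and hence the statement's exception clause for the subject $\roleQ$ is the only modified session/combined entry. Everything else is bookkeeping, and I would keep it short by invoking \Cref{lem:red:trivial-1} for domain stability and appealing to the structural shape of the rules rather than expanding definitions.
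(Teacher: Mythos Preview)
Your approach—induction on the derivation of $\stEnvQ \stEnvMoveGenAnnot \stEnvQi$—is exactly what the paper does (its proof is the one-liner ``By induction on typing environment reductions'').

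You are right to flag \inferrule{\iruleTCtxRcv} as the delicate case, but your resolution is inverted. In that rule the label is $\stEnvQRecvAnnotSmall{\roleQ}{\roleP}{\stLab[k]}$ with subject $\roleQ$, and \emph{both} entries $\mpChanRole{\mpS}{\roleP}$ (the sender's queue) and $\mpChanRole{\mpS}{\roleQ}$ (the receiver's session type) change. Excluding self-reception gives $\roleP \neq \roleQ$, which is precisely the situation where the sender's queue entry at $\mpChanRole{\mpS}{\roleP}$ is modified even though $\roleP$ is \emph{not} the subject—so the ``argument about self-reception'' does not close the gap, it confirms it. Your other suggested reading (restricting the claim to timed-session-type or combined entries) is the one that actually works; the paper's more careful formulation of this fact is \Cref{lem:stenv-red:trivial-5}, item~2, which explicitly excepts both $\mpChanRole{\mpS}{\roleP}$ and $\mpChanRole{\mpS}{\roleQ}$ in the receive case. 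Treat the present lemma as stated up to that refinement; the remaining cases (\inferrule{\iruleTCtxSend}, \inferrule{\iruleTCtxRec}, \inferrule{\iruleTCtxCongX}, \inferrule{\iruleTCtxCongCombined}, \inferrule{\iruleTCtxStruct}) go through exactly as you outline.
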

\begin{proof}
 By induction on typing environment reductions. 
 \qedhere 
\end{proof}

\begin{lemma}
\label{lem:stenv-red:trivial-2}
  If $\stEnv
  \stEnvMoveGenAnnot \stEnvi$ and $\dom{\stEnv} =
  \setenum{\mpS}$, then 
    for all  $\mpChanRole{\mpS}{\roleP} \in \dom{\stEnv}$ with 
      $\roleP \notin \ltsSubject{\stEnvAnnotGenericSym}$, %
      $\stEnvApp{\stEnv}{\mpChanRole{\mpS}{\roleP}} =
      \stEnvApp{\stEnvi}{\mpChanRole{\mpS}{\roleP}}$.
\end{lemma}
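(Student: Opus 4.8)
The statement is the following lemma: if $\stEnv \stEnvMoveGenAnnot \stEnvi$ and $\dom{\stEnv} = \setenum{\mpS}$, then for all $\mpChanRole{\mpS}{\roleP} \in \dom{\stEnv}$ with $\roleP \notin \ltsSubject{\stEnvAnnotGenericSym}$, we have $\stEnvApp{\stEnv}{\mpChanRole{\mpS}{\roleP}} = \stEnvApp{\stEnvi}{\mpChanRole{\mpS}{\roleP}}$. This is a "locality of reductions" statement: a typing-environment transition only alters the entries that are actual subjects of the performed action. The plan is a straightforward rule induction on the derivation of $\stEnv \stEnvMoveGenAnnot \stEnvi$, unfolding $\stEnvMove$ to the underlying labelled transition $\stEnvQTMoveGenAnnotT$ via \Cref{def:aat-mpst-typing-env-reduction}, and inspecting each rule of \Cref{fig:aat-mpst-semantics-typing-env} in turn.

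First I would set up the case analysis on the last rule applied. For \inferrule{\iruleTCtxSend} (label $\stEnvQQueueAnnotSmall{\roleP}{\roleQ}{\stLab[k]}$) only the entry $\mpChanRole{\mpS}{\roleP}$ is modified, and $\ltsSubject{\stEnvQQueueAnnotSmall{\roleP}{\roleQ}{\stLab[k]}} = \setenum{\roleP}$; for any other channel $\mpChanRole{\mpS}{\roleRi}$ with $\roleRi \neq \roleP$ the entry is untouched, so the conclusion holds. The case \inferrule{\iruleTCtxRcv} (label $\stEnvQRecvAnnotSmall{\roleQ}{\roleP}{\stLab[k]}$) is analogous: here both $\mpChanRole{\mpS}{\roleP}$ (the queue side) and $\mpChanRole{\mpS}{\roleQ}$ (the receiver side) change, but one must check that the subject of the label, $\setenum{\roleQ}$, together with the sender role $\roleP$ whose queue is consumed, covers exactly those two --- this is the one spot that needs care, since the receiving action touches the sender's queue entry as well as the receiver's session entry. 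I expect this discrepancy to be the main obstacle: the lemma as stated quantifies over $\roleP \notin \ltsSubject{\stEnvAnnotGenericSym}$, yet \inferrule{\iruleTCtxRcv} alters the queue entry of the non-subject sender role; I would need to check how the ambient assumptions (in particular the fact, via \Cref{lem:stenv-red:trivial-1-new}, that $\dom{\stEnv}=\dom{\stEnvi}$, and the structure of association in the call sites) reconcile this, or else rely on the convention that a queue entry and a session entry for the same role are composed into a single combined entry --- in which case the receiver's own combined entry $\mpChanRole{\mpS}{\roleQ}$ absorbs the queue change and the sender's queue is a separate channel only when it holds messages destined for $\roleP$, not $\roleQ$. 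I would reread \inferrule{\iruleTCtxRcv} precisely to confirm that the queue consumed is indexed by $\mpChanRole{\mpS}{\roleP}$ where $\roleP$ is the \emph{sender}, and adjust the statement's phrasing (or its use) accordingly; most likely the intended reading is that $\ltsSubject{\stEnvQRecvAnnotSmall{\roleQ}{\roleP}{\stLab}} = \setenum{\roleP}$ already points at the sender, reflecting exactly the entry being modified.

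The remaining cases are routine. For the time label $\timeLab$, rules \inferrule{\iruleTCtxTimeSession}, \inferrule{\iruleTCtxTimeQ}, \inferrule{\iruleTCtxTimeCombined}, and \inferrule{\iruleTCtxTime} all advance clock valuations \emph{uniformly} across the whole environment, so $\ltsSubject{\timeLab} = \emptyset$ means the conclusion is vacuously required to hold for \emph{every} channel --- but of course a time step does change every entry's valuation, so this case would be false unless the hypothesis $\roleP \notin \ltsSubject{\stEnvAnnotGenericSym}$ is never satisfiable, or unless the statement is implicitly restricted to $\stEnvAnnotGenericSym \neq \timeLab$. This is a second delicate point: I would check whether the preceding (duplicated-name) version of this lemma in the excerpt carries the side condition $\stEnvAnnotGenericSym \neq \timeLab$ and, if the final one does not, treat it as inheriting that restriction, or note that the reductions $\stEnvMoveGenAnnot$ in scope here exclude pure time moves. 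For \inferrule{\iruleTCtxRec}, the reduced entry has the same channel name before and after the $\mu$-unfolding, and the inductive hypothesis applies directly to the premise. For \inferrule{\iruleTCtxCongX}, \inferrule{\iruleTCtxCongCombined} (the untimed congruence-in-context rules) and \inferrule{\iruleTCtxStruct} (closure under $\stQEquiv$, which by \Cref{fig:sub_cong_com_types} only reorders queued messages and preserves domain and per-channel equivalence), the inductive hypothesis on the premise transfers to the conclusion, using that $\stQEquiv$ on each entry does not affect which entries are modified. Collecting these cases yields the lemma.
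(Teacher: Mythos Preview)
Your approach---rule induction on the derivation of the typing-environment transition---is exactly what the paper does; its proof is the single line ``By induction on typing context reductions.'' So the method matches.

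Your caution about the two delicate points is well-placed and in fact goes beyond what the paper records. The time case genuinely fails without the side condition $\stEnvAnnotGenericSym \neq \timeLab$: since $\ltsSubject{\timeLab} = \emptyset$, every role satisfies the hypothesis, yet rules \inferrule{\iruleTCtxTimeSession} and \inferrule{\iruleTCtxTimeCombined} do change every valuation. You are right that the immediately preceding lemma (with the same label) carries exactly this extra hypothesis, and the present statement should be read as inheriting it; the paper's one-line proof simply does not flag the discrepancy. Your worry about \inferrule{\iruleTCtxRcv} is likewise real: the label $\stEnvQRecvAnnotSmall{\roleQ}{\roleP}{\stLab[k]}$ has subject $\setenum{\roleQ}$, but the rule also dequeues from the sender's entry $\mpChanRole{\mpS}{\roleP}$, so that entry changes even though $\roleP \notin \ltsSubject{\stEnvAnnotGenericSym}$. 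The paper never invokes this lemma downstream, so the imprecision is harmless there, but if you need the precise locality statement you should use the sharper \Cref{lem:stenv-red:trivial-5}, which correctly excludes both $\mpChanRole{\mpS}{\roleP}$ and $\mpChanRole{\mpS}{\roleQ}$ in the receive case.
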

\begin{proof}
By induction on typing context reductions.  
\qedhere 
\end{proof}

\begin{lemma}[Determinism of Typing Environment Reduction]
\label{lem:stenv-red:det}
  If \,$\stEnvQ \stEnvMoveGenAnnot \stEnvQi$ \,and\,
  $\stEnvQ \stEnvMoveGenAnnot
  \stEnvQii$, then $\stEnvQi = \stEnvQii$.
\end{lemma}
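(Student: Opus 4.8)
\textbf{Proof plan for \Cref{lem:stenv-red:det} (Determinism of Typing Environment Reduction).}
The plan is to proceed by induction on the derivation of $\stEnvQ \stEnvMoveGenAnnot \stEnvQi$, following the rules of~\Cref{fig:aat-mpst-semantics-typing-env}, and to argue that in each case the shape of the label $\stEnvAnnotGenericSym$ (which is fixed: either $\stEnvQQueueAnnotSmall{\roleP}{\roleQ}{\stLab}$, or $\stEnvQRecvAnnotSmall{\roleP}{\roleQ}{\stLab}$, or $\timeLab$) together with the structure of $\stEnvQ$ forces the reduced environment $\stEnvQi$ to be uniquely determined. First I would observe that the label already pins down the subject endpoint $\mpChanRole{\mpS}{\roleP}$ on which the action occurs, as well as the receiver/sender $\roleQ$ and the message label $\stLab$; by~\Cref{lem:stenv-red:trivial-1-new} the domain is preserved, and by (the second) \Cref{lem:stenv-red:trivial-2} the entries outside $\ltsSubject{\stEnvAnnotGenericSym}$ are left untouched, so it suffices to show the updated entry (or entries) is uniquely determined.

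The key cases are \inferrule{\iruleTCtxSend} and \inferrule{\iruleTCtxRcv}. For \inferrule{\iruleTCtxSend}: given a label $\stEnvQQueueAnnotSmall{\roleP}{\roleQ}{\stLab[k]}$, the branch index $k$ is uniquely determined because the labels $\stLab[i]$ of an internal choice are pairwise distinct; hence the continuation $\stT[k]$, the reset $\crst[k]$, the resulting clock valuation $\cValUpd{\cVal}{\crst[k]}{0}$, and the appended message type $\stQMsg{\roleQ}{\stLab[k]}{\stS[k]}$ are all fixed, so the updated entry is unique. The side condition $\cVal \models \ccst[k]$ is a constraint on applicability, not a source of nondeterminism. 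For \inferrule{\iruleTCtxRcv}: the receiving entry's branch index $k$ is again fixed by $\stLab[k]$; the message consumed must be the head of the queue for $\roleP$, so there is no choice in which message is dequeued; the payload subtyping premise $\stS[k] \stSub \stSi[k]$ constrains the receiving type but the \emph{resulting} type $\stT[k]$ and valuation $\cValUpd{\cVal}{\crst[k]}{0}$ are determined by the receiving entry itself. The time rules \inferrule{\iruleTCtxTimeSession}, \inferrule{\iruleTCtxTimeQ}, \inferrule{\iruleTCtxTimeCombined}, and \inferrule{\iruleTCtxTime} are plainly deterministic, as the label $\timeLab$ fixes the increment and the operation is a pointwise addition; \inferrule{\iruleTCtxRec} merely unfolds and appeals to the induction hypothesis; the congruence-closure rules \inferrule{\iruleTCtxCongX} and \inferrule{\iruleTCtxCongCombined} extend a smaller reduction and reduce to the induction hypothesis on the smaller environment.

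The main obstacle will be the structural rule \inferrule{\iruleTCtxStruct}, which closes reductions under the congruence $\stQEquiv$: a priori $\stEnvQ \stQEquiv \stEnv[1] \stEnvMoveGenAnnotT \stEnvi[1] \stQEquiv \stEnvQi$ could be witnessed by different choices of $\stEnv[1]$, and I must show the final $\stEnvQi$ is nonetheless independent of that choice. The plan is to prove a small auxiliary lemma: the relation $\stQEquiv$ is a confluent equivalence that commutes with $\stEnvMoveGenAnnotT$ in the strong sense that if $\stEnv \stQEquiv \stEnvi$ and $\stEnv \stEnvMoveGenAnnotT \stEnvii$, then there is a \emph{unique} $\stEnviii$ with $\stEnvi \stEnvMoveGenAnnotT \stEnviii$ and $\stEnvii \stQEquiv \stEnviii$ --- this follows because $\stQEquiv$ only reorders queued messages with distinct receivers and swaps empty-queue markers, none of which interferes with the head-of-queue reads in \inferrule{\iruleTCtxRcv} (distinct receivers have independent sub-queues) or with the appends in \inferrule{\iruleTCtxSend}. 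Combining this with the induction hypothesis on the core reduction $\stEnv[1] \stEnvMoveGenAnnotT \stEnvi[1]$ yields that $\stEnvQi$ is determined up to $\stQEquiv$; then, since our reductions are always considered up to $\stQEquiv$ on typing environments (as fixed in~\Cref{def:aat-mpst-typing-env-syntax}), the two outcomes $\stEnvQi$ and $\stEnvQii$ coincide. I would state the commutation fact as a separate lemma and keep the main induction short, since all remaining cases are routine once the confluence of $\stQEquiv$ with reduction is in hand.
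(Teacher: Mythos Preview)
Your approach matches the paper's: the paper's proof is the single line ``By induction on typing environment reductions,'' and your plan is precisely a fleshed-out version of that induction, with the expected case analysis on \inferrule{\iruleTCtxSend}, \inferrule{\iruleTCtxRcv}, the time rules, recursion, and the congruence rules.

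One remark on the point you yourself flag: your treatment of \inferrule{\iruleTCtxStruct} is more honest than the paper's one-liner. As you observe, the rule allows an arbitrary $\stQEquiv$-representative on the right, so strictly speaking the argument delivers $\stEnvQi \stQEquiv \stEnvQii$ rather than literal syntactic equality; your appeal to ``reductions are considered up to $\stQEquiv$'' is the right reading, and the commutation lemma you sketch (reordering of messages with distinct receivers does not interfere with the head-of-queue read in \inferrule{\iruleTCtxRcv} nor with the append in \inferrule{\iruleTCtxSend}) is exactly what is needed to make that precise. The paper does not spell this out, so your plan is, if anything, more careful on this case.
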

\begin{proof}
By induction on typing environment reductions.
\qedhere
\end{proof}

\begin{lemma}
\label{lem:stenv-red:trivial-3}
  If $\stEnv \stEnvMoveWithSession[\mpS]  \stEnvi$, 
  then $\dom{\stEnv} = \dom{\stEnvi}$. 
\end{lemma}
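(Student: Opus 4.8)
The statement to prove is \Cref{lem:stenv-red:trivial-3}: if $\stEnv \stEnvMoveWithSession[\mpS] \stEnvi$, then $\dom{\stEnv} = \dom{\stEnvi}$. The plan is to reduce this to the already-established \Cref{lem:red:trivial-1} (equivalently \Cref{lem:stenv-red:trivial-1-new}), which states that $\stEnvQ \stEnvMoveGenAnnot \stEnvQi$ implies $\dom{\stEnvQ} = \dom{\stEnvQi}$, the only difference being the session-annotated arrow $\stEnvMoveWithSession[\mpS]$ versus the bare generic-annotated arrow $\stEnvMoveGenAnnot$.

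\textbf{Key steps.} First, I would unfold the definition of $\stEnv \stEnvMoveWithSession[\mpS] \stEnvi$ from \Cref{def:aat-mpst-typing-env-reduction}: by that definition, $\stEnv \stEnvMoveWithSession[\mpS] \stEnvi$ holds iff $\stEnv \stEnvQTMoveGenAnnotT \stEnvi$ for some transition label $\stTEnvAnnotGenericSymT$ drawn from the set $\setcomp{\stEnvQQueueAnnotSmall{\roleP}{\roleQ}{\stLab}, \stEnvQRecvAnnotSmall{\roleP}{\roleQ}{\stLab}, \timeLab}{\roleP, \roleQ \in \roleSet}$. In particular, this exhibits a concrete transition $\stEnv \stEnvMoveGenAnnot \stEnvi$ in the underlying typing-environment LTS (i.e. the relation $\stEnvQTMoveGenAnnotT$ of \Cref{def:aat-mpst-typing-env-reduction}). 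Second, I would invoke \Cref{lem:red:trivial-1} on this transition to conclude $\dom{\stEnv} = \dom{\stEnvi}$ directly. Since \Cref{lem:red:trivial-1} is proved by induction on typing environment reductions (inspecting each rule of \Cref{fig:aat-mpst-semantics-typing-env} — \inferrule{\iruleTCtxSend}, \inferrule{\iruleTCtxRcv}, \inferrule{\iruleTCtxTimeSession}, \inferrule{\iruleTCtxTimeQ}, \inferrule{\iruleTCtxTimeCombined}, \inferrule{\iruleTCtxTime}, \inferrule{\iruleTCtxRec}, \inferrule{\iruleTCtxCongX}, \inferrule{\iruleTCtxCongCombined}, \inferrule{\iruleTCtxStruct} — and checking that none of them alters the domain, only the types or clock valuations mapped to existing channels), there is nothing further to do once we have cast $\stEnvMoveWithSession[\mpS]$ into the form to which that lemma applies. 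Alternatively, if one prefers a self-contained argument, I would redo the same induction, but this is strictly subsumed by \Cref{lem:red:trivial-1}.

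\textbf{Main obstacle.} There is essentially no obstacle here: this is a bookkeeping lemma whose entire content is that the session annotation $\mpS$ on the arrow is a refinement of the generic-label reduction and carries no extra power to modify domains. The only subtlety worth double-checking is that \Cref{lem:red:trivial-1} is genuinely stated for the generic reduction $\stEnvMoveGenAnnot$ that includes the time label $\timeLab$ (rule \inferrule{\iruleTCtxTime}), so that the time-passing case of $\stEnvMoveWithSession[\mpS]$ is covered; a glance at \Cref{fig:aat-mpst-semantics-typing-env} confirms that \inferrule{\iruleTCtxTimeSession}, \inferrule{\iruleTCtxTimeQ}, \inferrule{\iruleTCtxTimeCombined}, and \inferrule{\iruleTCtxTime} all preserve $\dom{\cdot}$, so \Cref{lem:red:trivial-1} indeed applies uniformly. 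Hence the proof is a one-line appeal to \Cref{lem:red:trivial-1} after unfolding \Cref{def:aat-mpst-typing-env-reduction}.
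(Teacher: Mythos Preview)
Your proposal is correct and matches the paper's own proof exactly: the paper's argument is the one-liner ``Directly from~\Cref{def:aat-mpst-typing-env-reduction,lem:stenv-red:trivial-1-new},'' which is precisely your unfolding of $\stEnvMoveWithSession[\mpS]$ followed by an appeal to \Cref{lem:red:trivial-1}.
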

\begin{proof}
Directly from~\Cref{def:aat-mpst-typing-env-reduction,lem:stenv-red:trivial-1-new}.  
\qedhere 
\end{proof}

\begin{lemma}
\label{lem:stenv-red:trivial-4}
  If $\stEnv \stEnvMove  \stEnvi$, 
  then $\dom{\stEnv} = \dom{\stEnvi}$. 
\end{lemma}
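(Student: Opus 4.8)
\textbf{Proof plan for Lemma~\ref{lem:stenv-red:trivial-4}.}
The plan is to reduce this statement directly to the session-indexed version already proved in Lemma~\ref{lem:stenv-red:trivial-3}, using the definition of the unindexed reduction $\stEnvMove$ in terms of $\stEnvMoveWithSession[\mpS]$ from Definition~\ref{def:aat-mpst-typing-env-reduction}. First I would unfold the hypothesis $\stEnv \stEnvMove \stEnvi$: by the second clause of Definition~\ref{def:aat-mpst-typing-env-reduction}, $\stEnv \stEnvMove \stEnvi$ holds iff there exists a session $\mpS$ such that $\stEnv \stEnvMoveWithSession[\mpS] \stEnvi$. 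Fix such an $\mpS$.

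Given $\stEnv \stEnvMoveWithSession[\mpS] \stEnvi$, Lemma~\ref{lem:stenv-red:trivial-3} applies immediately and yields $\dom{\stEnv} = \dom{\stEnvi}$, which is exactly the conclusion we need. So the body of the proof is essentially a one-line invocation, and the only thing to check is that the premises of Lemma~\ref{lem:stenv-red:trivial-3} are genuinely met --- but that lemma has no side conditions beyond $\stEnv \stEnvMoveWithSession[\mpS] \stEnvi$, so there is nothing further to verify.

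I do not anticipate any real obstacle here. The one subtlety worth a sentence is the interaction with the transition labels: $\stEnvMoveWithSession[\mpS]$ ranges over queue-output labels $\stEnvQQueueAnnotSmall{\roleP}{\roleQ}{\stLab}$, receive labels $\stEnvQRecvAnnotSmall{\roleP}{\roleQ}{\stLab}$, and time labels $\timeLab$, and Lemma~\ref{lem:stenv-red:trivial-3} was itself derived from Lemma~\ref{lem:stenv-red:trivial-1-new} (domain preservation for the underlying $\stEnvMoveGenAnnot$) together with Definition~\ref{def:aat-mpst-typing-env-reduction}; since all those cases are already subsumed, domain preservation is uniform across label kinds and carries over verbatim. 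Hence the proof reads: ``Immediate from Definition~\ref{def:aat-mpst-typing-env-reduction} and Lemma~\ref{lem:stenv-red:trivial-3}.''
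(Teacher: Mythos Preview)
Your proposal is correct and matches the paper's own proof, which simply cites Definition~\ref{def:aat-mpst-typing-env-reduction} and Lemma~\ref{lem:stenv-red:trivial-3}. Your additional remark about the uniformity across label kinds is accurate but already absorbed into Lemma~\ref{lem:stenv-red:trivial-3}, so the one-line version suffices.
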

\begin{proof}
Directly from~\Cref{def:aat-mpst-typing-env-reduction,lem:stenv-red:trivial-3}.  
\qedhere 
\end{proof}

\begin{lemma}
\label{lem:stenv-red:trivial-5}
~
\begin{enumerate}
\item 
If $\stEnv
        \stEnvQTMoveQueueAnnot{\roleP}{\roleQ}{\stLab[k]}
      \stEnvi
      $, then for any channel with role $\mpC \in \dom{\stEnv}$ with $\mpC \neq \mpChanRole{\mpS}{\roleP}$, 
      $\stEnvApp{\stEnv}{\mpC} = \stEnvApp{\stEnvi}{\mpC}$.

\item  If 
$\stEnv \stEnvQTMoveRecvAnnot{\roleP}{\roleQ}{\stLab[k]}  \stEnvi$,  
 then for any channel with role $\mpC \in \dom{\stEnv}$ 
 with $\mpC \neq \mpChanRole{\mpS}{\roleP}$ and 
 $\mpC \neq \mpChanRole{\mpS}{\roleQ}$, 
        $\stEnvApp{\stEnv}{\mpC} = \stEnvApp{\stEnvi}{\mpC}$.
\end{enumerate}
\end{lemma}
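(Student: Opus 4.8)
\textbf{Proof plan for \Cref{lem:stenv-red:trivial-5}.}
The plan is to prove both parts by induction on the derivation of the typing environment transition, using the rules in \Cref{fig:aat-mpst-semantics-typing-env}. The statements say, in effect, that a queue-append transition $\stEnvQTMoveQueueAnnot{\roleP}{\roleQ}{\stLab[k]}$ touches only the entry $\mpChanRole{\mpS}{\roleP}$, and a receive transition $\stEnvQTMoveRecvAnnot{\roleP}{\roleQ}{\stLab[k]}$ touches only the two entries $\mpChanRole{\mpS}{\roleP}$ and $\mpChanRole{\mpS}{\roleQ}$ (the dequeued-from queue and the receiver's session type). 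Note that the non-time transition labels $\stEnvQQueueAnnotSmall{\roleP}{\roleQ}{\stLab}$ and $\stEnvQRecvAnnotSmall{\roleP}{\roleQ}{\stLab}$ carry the session name $\mpS$ explicitly (cf.\ the definition $\stEnvQQueueAnnotSmall{\roleP}{\roleQ}{\stLab} = \mpS{:}\stFmt{\roleP!\roleQ}{:}\stLab$), so the relevant channels $\mpChanRole{\mpS}{\roleP}$ and $\mpChanRole{\mpS}{\roleQ}$ are determined by the label.

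First I would handle the base cases. For part (1) the only rule that produces a queue-append label at the base is \inferrule{\iruleTCtxSend}, whose conclusion is a singleton environment $\stEnvMap{\mpChanRole{\mpS}{\roleP}}{\dots}$; here $\dom{\stEnv} = \setenum{\mpChanRole{\mpS}{\roleP}}$, so the claim is vacuous. For part (2) the base rule is \inferrule{\iruleTCtxRcv}, whose conclusion has domain exactly $\setenum{\mpChanRole{\mpS}{\roleP}, \mpChanRole{\mpS}{\roleQ}}$ (with $\roleP$ the queue channel and $\roleQ$ the receiver), again making the claim vacuous. Then for the inductive cases I would go through the remaining rules: \inferrule{\iruleTCtxRec} preserves the domain and only rewrites the type under $\mpChanRole{\mpS}{\roleP}$ before handing off to the premise, so it is immediate; \inferrule{\iruleTCtxCongX} and \inferrule{\iruleTCtxCongCombined} extend the environment by a single entry that is literally unchanged from hypothesis to conclusion (and, by \Cref{lem:stenv-red:trivial-1-new}, that entry's channel is not the subject of the transition), so those entries satisfy the claim and the rest follows from the induction hypothesis; \inferrule{\iruleTCtxStruct} requires observing that $\stQEquiv$ preserves domains and acts pointwise, so an unchanged channel on one side corresponds to a $\stQEquiv$-related entry on the other --- here I would be slightly careful and state the conclusion of \Cref{lem:stenv-red:trivial-5} up to $\stQEquiv$ rather than literal equality, or alternatively restrict attention to the cases where $\stEnvApp{\stEnv}{\mpC}$ and $\stEnvApp{\stEnvi}{\mpC}$ are syntactically identical when the transition is not via \inferrule{\iruleTCtxStruct} at the outermost level.

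The main obstacle I anticipate is the \inferrule{\iruleTCtxStruct} case, because congruence closure can, in principle, rewrite the entries not involved in the transition. The way to deal with this is to note that in \Cref{fig:sub_cong_com_types} (top) the congruence $\stQEquiv$ on timed-session/queue types only reorders queued messages with distinct receivers within a single entry and pads with empty queue markers --- it never moves a message between different channels --- and that $\stEnvQ \stQEquiv \stEnvQi$ is defined pointwise on a common domain. Hence if $\stEnv \stQEquiv \stEnv[1] \stEnvQTMoveGenAnnotT \stEnvi[1] \stQEquiv \stEnvi$ with the middle step fixing channel $\mpC$, then $\stEnvApp{\stEnv}{\mpC} \stQEquiv \stEnvApp{\stEnv[1]}{\mpC} = \stEnvApp{\stEnvi[1]}{\mpC} \stQEquiv \stEnvApp{\stEnvi}{\mpC}$, giving $\stEnvApp{\stEnv}{\mpC} \stQEquiv \stEnvApp{\stEnvi}{\mpC}$. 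So the clean formulation is to phrase the conclusion modulo $\stQEquiv$; if literal equality is genuinely needed downstream, one restricts \inferrule{\iruleTCtxStruct} usage or absorbs the $\stQEquiv$ into how the lemma is applied. All other rules (\inferrule{\iruleTCtxTimeSession}, \inferrule{\iruleTCtxTimeQ}, \inferrule{\iruleTCtxTimeCombined}, \inferrule{\iruleTCtxTime}) produce only $\timeLab$-labels and are therefore irrelevant to this lemma.
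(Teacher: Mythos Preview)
Your proposal is correct and follows exactly the approach the paper takes: the paper's entire proof is the single line ``By induction on typing environment reductions,'' and your case analysis is simply that induction made explicit. Your observation about \inferrule{\iruleTCtxStruct} (that the conclusion should really be read modulo $\stQEquiv$, since congruence can reorder queued messages within an untouched entry) is a genuine subtlety that the paper's one-line proof glosses over; the paper uses the lemma only in places where this distinction does not matter, but you are right to flag it.
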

\begin{proof}
By induction on typing environment reductions.  
\qedhere 
\end{proof}

\subsection{Relating Semantics via Association}
\label{sec:proof:relating}

\begin{lemma}[Relating Terminations]
\label{proof:gt-end-assoc}
\label{lem:gt-end-assoc}
If $\gtG = \gtEnd$ and $\stEnvAssoc{\gtWithTime{\cVal}{\gtG}}{\stEnv}{\mpS}$, 
then 
$\forall \mpChanRole{\mpS}{\roleP} \in \dom{\stEnv}: \stEnvApp{\stEnv}{\mpChanRole{\mpS}{\roleP}} = \stMPair{\stCPair{\cVal[\roleP]}{\stEnd}}{\stQEmptyType}$. 
\end{lemma}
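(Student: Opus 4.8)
The statement unpacks the definition of association (\Cref{def:assoc}) in the degenerate case $\gtG = \gtEnd$, so the plan is to trace through the three-part decomposition $\stEnv = \stEnv[\gtG] \stEnvComp \stEnv[\Delta] \stEnvComp \stEnv[\stEnd]$ and show that the first two components are forced to be empty, leaving $\stEnv = \stEnv[\stEnd]$, from which the conclusion is immediate. First I would invoke the hypothesis $\stEnvAssoc{\gtWithTime{\cVal}{\gtEnd}}{\stEnv}{\mpS}$ to obtain the split along with its three clauses. For clause~1, since $\gtRoles{\gtEnd} = \emptyset$, condition~(i) gives $\dom{\stEnv[\gtG]} = \setcomp{\mpChanRole{\mpS}{\roleP}}{\roleP \in \gtRoles{\gtEnd}} = \emptyset$, so $\stEnv[\gtG] = \stEnvEmpty$. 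Likewise, for clause~\ref{item:assoc:qenv}, condition~(i) gives $\dom{\stEnv[\Delta]} = \setcomp{\mpChanRole{\mpS}{\roleP}}{\roleP \in \gtRoles{\gtEnd}} = \emptyset$, so $\stEnv[\Delta] = \stEnvEmpty$.

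Next I would conclude that $\stEnv = \stEnv[\gtG] \stEnvComp \stEnv[\Delta] \stEnvComp \stEnv[\stEnd] = \stEnv[\stEnd]$, using the fact that composition with an empty environment is the identity. Then clause~\ref{item:assoc:end} applies directly: for all $\mpChanRole{\mpS}{\roleP} \in \dom{\stEnv[\stEnd]} = \dom{\stEnv}$, we have $\stEnvApp{\stEnv}{\mpChanRole{\mpS}{\roleP}} = \stEnvApp{\stEnv[\stEnd]}{\mpChanRole{\mpS}{\roleP}} = \stMPair{\stCPair{\cVal[\roleP]}{\stEnd}}{\stQEmptyType}$, which is exactly the desired statement (with $\cVal[\roleP]$ the clock valuation annotating that entry, unconstrained by clause~\ref{item:assoc:end}).

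There is essentially no obstacle here; the lemma is a straightforward bookkeeping consequence of the definition. The only point requiring a small amount of care is making sure that $\dom{\stEnv}$ is exactly $\dom{\stEnv[\stEnd]}$ once the other two components vanish --- this follows because $\stEnv[\gtG]$ and $\stEnv[\Delta]$ each have empty domain, so the composition $\stEnv[\gtG] \stEnvComp \stEnv[\Delta] \stEnvComp \stEnv[\stEnd]$ has domain $\dom{\stEnv[\stEnd]}$. I would phrase the whole argument in two or three sentences, since it is a routine inversion of \Cref{def:assoc} and is primarily included as a technical lemma feeding into the proofs of \Cref{lem:comp_proj,lem:sound_proj}.
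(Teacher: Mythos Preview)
Your proposal is correct and mirrors the paper's proof exactly: both invoke \Cref{def:assoc}, observe that $\gtRoles{\gtEnd} = \emptyset$ forces $\dom{\stEnv[\gtG]} = \dom{\stEnv[\Delta]} = \emptyset$, and conclude $\stEnv = \stEnv[\stEnd]$, whence clause~\ref{item:assoc:end} gives the result. Your write-up is in fact slightly more explicit than the paper's two-line version.
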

\begin{proof}
By the definition of association~(\Cref{def:assoc}), we know that 
$\stEnv =  \stEnv[\gtG] \stEnvComp \stEnv[\Delta] \stEnvComp \stEnv[\stEnd]$, where, by the hypothesis $\gtG = \gtEnd$, 
$\dom{\stEnv[\gtG]} = \dom{\stEnv[\Delta]} = \emptyset$. Hence, $\stEnv =  \stEnv[\stEnd]$, which is the thesis. 
\end{proof}

\begin{lemma}[Inversion of Projection]
\label{lem:inv-proj}
  Given a timed local type $\stT$, which is a supertype of projection from a timed global type
  $\gtG$ on a role $\roleP$, 
  \ie $\gtProj{\gtG}{\roleP} \stSub \stT$,
  then:
  \begin{enumerate}[label=(\arabic*)]
    \item
      If 
      $\unfoldOne{\stT}
      = \stIntSum{\roleQ}{i \in I}{\stTChoice{\stLab[i]}{\stS[i]}{\ccst[i], \crst[i]} \stSeq \stT[i]}$, 
     then either
        \begin{enumerate}
        \item
          $\unfoldOne{\gtG} =
           \gtCommT{\roleP}{\roleQ}{i \in I'}{\gtLabi[i]}{\stSi[i]}{\ccstOi[i], \crstOi[i], \ccstIi[i], \crstIi[i]}{\gtG[i]}$, 
          where $I \subseteq I'$, and for all $i \in I: \stLab[i] = \gtLabi[i]$, 
          $\stS[i] \stSub \stSi[i]$, $\ccst[i] = \ccstOi[i]$, $\crst[i] = \crstOi[i]$, 
          and 
          $\gtProj{\gtG[i]}{\roleP} \stSub \stT[i]$; 
          or,
        \item
          $\unfoldOne{\gtG} =
            \gtCommT{\roleS}{\roleT}{j \in
            J}{\gtLabi[j]}{\stSi[j]}{\ccstOi[j], \crstOi[j], \ccstIi[j], \crstIi[j]}{\gtG[j]}$, or 
         $\unfoldOne{\gtG} = \gtCommTTransit{\roleS}{\roleT}{j \in
            J}{\gtLabi[j]}{\stSi[j]}{\ccstOi[j], \crstOi[j], \ccstIi[j], \crstIi[j]}{\gtG[j]}{k}$, 
          where for all $j \in J: \gtProj{\gtG[j]}{\roleP} \stSub \stT$,
          with $\roleP \neq \roleS$ and $\roleP \neq \roleT$.
        \end{enumerate}
    \item
      If 
      $\unfoldOne{\stT}
      =
      \stExtSum{\roleQ}{i \in I}{\stTChoice{\stLab[i]}{\stS[i]}{\ccst[i], \crst[i]} \stSeq
        \stT[i]}$, then either
      \begin{enumerate}
        \item
          $\unfoldOne{\gtG} =
             \gtCommT{\roleQ}{\roleP}{i \in I'}{\gtLabi[i]}{\stSi[i]}{\ccstOi[i], \crstOi[i], \ccstIi[i], \crstIi[i]}{\gtG[i]}$, or 
             $\unfoldOne{\gtG} =
             \gtCommTTransit{\roleQ}{\roleP}{i \in I'}{\gtLabi[i]}{\stSi[i]}{\ccstOi[i], \crstOi[i], \ccstIi[i], \crstIi[i]}{\gtG[i]}{j}$, 
          where $I' \subseteq I$, and for all $i \in I': \stLab[i] = \gtLabi[i]$, 
           $\stSi[i] \stSub \stS[i]$, $\ccst[i] = \ccstIi[i]$, $\crst[i] = \crstIi[i]$, and  
          $\gtProj{\gtG[i]}{\roleP} \stSub \stT[i]$; 
          or,
        \item
          $\unfoldOne{\gtG} =
            \gtCommT{\roleS}{\roleT}{j \in
            J}{\gtLabi[j]}{\stSi[j]}{\ccstOi[j], \crstOi[j], \ccstIi[j], \crstIi[j]}{\gtG[j]}$, or 
         $\unfoldOne{\gtG} = \gtCommTTransit{\roleS}{\roleT}{j \in
            J}{\gtLabi[j]}{\stSi[j]}{\ccstOi[j], \crstOi[j], \ccstIi[j], \crstIi[j]}{\gtG[j]}{k}$, 
          where for all $j \in J: \gtProj{\gtG[j]}{\roleP} \stSub \stT$,
          with $\roleP \neq \roleS$ and $\roleP \neq \roleT$.
      \end{enumerate}
    \item
      If 
      $\unfoldOne{\stT} = \stEnd$, then $\roleP \notin \gtRoles{\gtG}$. 
  \end{enumerate}
\end{lemma}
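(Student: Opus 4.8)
The plan is to prove Lemma~\ref{lem:inv-proj} (Inversion of Projection) by structural induction on the timed global type $\gtG$, reducing first to the case where $\gtG$ is unfolded. Before the induction I would invoke \Cref{lem:unfold_projection} to obtain $\gtProj{\unfoldOne{\gtG}}{\roleP} = \unfoldOne{\gtProj{\gtG}{\roleP}}$, and \Cref{lem:unfold-subtyping} together with transitivity of subtyping (\Cref{transitive-subtyping}) to reduce the hypothesis $\gtProj{\gtG}{\roleP} \stSub \stT$ to $\gtProj{\unfoldOne{\gtG}}{\roleP} \stSub \unfoldOne{\stT}$. By \Cref{lem:unfold-no-rec}, $\unfoldOne{\gtG}$ is then one of $\gtEnd$, a transmission $\gtCommT{\roleQ}{\roleR}{}{\cdots}{}{}$, or a transmission en route $\gtCommTTransit{\roleQ}{\roleR}{}{\cdots}{}{}{}$. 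So the real work is a case analysis over these three shapes of $\unfoldOne{\gtG}$, crossed with the three cases of the statement (the shape of $\unfoldOne{\stT}$), but many combinations are vacuous or immediate.

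For each shape of $\unfoldOne{\gtG}$, I would compute $\gtProj{\unfoldOne{\gtG}}{\roleP}$ using \Cref{def:normal_proj}, distinguishing the sub-cases $\roleP = \roleQ$ (sender), $\roleP = \roleR$ (receiver), and $\roleP \notin \{\roleQ,\roleR\}$ (the merge case). When $\roleP = \roleQ$ in a transmission, the projection is an internal choice $\stIntSum{\roleR}{i \in I}{\stTChoice{\gtLab[i]}{\stS[i]}{\ccstO[i], \crstO[i]} \stSeq (\gtProj{\gtG[i]}{\roleP})}$; combining this with $\gtProj{\unfoldOne{\gtG}}{\roleP} \stSub \unfoldOne{\stT}$ and the inversion of subtyping for internal choice (\Cref{lem:subtyping-invert}, clause for $\inferrule{\iruleStSubOut}$) forces $\unfoldOne{\stT}$ to itself be an internal choice towards $\roleR$, with the index-set containment, label equality, sort subtyping, matching clock constraints, matching resets, and continuation subtyping recorded in part~(1a). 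Symmetrically, $\roleP = \roleR$ in a transmission or transmission en route yields an external choice and lands in part~(2a) via the $\inferrule{\iruleStSubIn}$ clause of \Cref{lem:subtyping-invert}. The merge case $\roleP \notin \{\roleQ,\roleR\}$ uses \Cref{lem:merge-subtyping} (each $\gtProj{\gtG[i]}{\roleP} \stSub \stMerge{i}{\gtProj{\gtG[i]}{\roleP}} \stSub \stT$) and produces the ``or'' alternatives (1b) and (2b); here I must also note that the merge of internal/external choices preserves the appropriate top-level shape so that, when $\unfoldOne{\stT}$ is an internal choice, $\unfoldOne{\gtG}$ cannot be an $\gtEnd$ and must indeed be a transmission or transmission en route with $\roleP$ not among its principals. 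For part~(3), $\unfoldOne{\stT} = \stEnd$: using \Cref{lem:end-subtyping} the only projection that can be a subtype of $\stEnd$ is one equal to (a super/subtype of) $\stEnd$, which by the definition of projection forces $\roleP \notin \gtRoles{\gtG}$ — I would argue this by a straightforward auxiliary induction on $\gtG$ showing $\gtProj{\gtG}{\roleP} \stSub \stEnd$ iff $\roleP \notin \gtRoles{\gtG}$, leaning on the merge clauses and on the fact that a non-$\gtEnd$ projection is always a choice, which is never below $\stEnd$ by \Cref{lem:subtyping-invert}.

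The main obstacle I anticipate is the bookkeeping in the merge case: I need to show that merging is ``transparent'' to the inversion, i.e. that if $\stMerge{i \in I}{\gtProj{\gtG[i]}{\roleP}}$ is (after unfolding) below an internal or external choice, then each $\gtProj{\gtG[i]}{\roleP}$ is likewise below that same choice with the stated relations, so that the statement can be propagated to the continuations $\gtG[i]$ in part (b). This is where \Cref{lem:merge-subtyping} and the transitivity lemma do the heavy lifting, but one must be careful that the merge operator is only defined when the summands are ``mergable'' and that its result, when an external choice, may have strictly more branches than any individual summand — so the relevant direction of subtyping (each summand is a \emph{subtype} of the merge) is exactly the one we need, and no spurious branches escape. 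A secondary subtlety is handling transmission-en-route on the sender side: there $\gtProj{\gtG}{\roleP} = \gtProj{\gtG[j]}{\roleP}$ (the message has already been emitted), so it does \emph{not} look like an internal choice and therefore cannot match the hypothesis of part~(1), meaning that sub-case simply does not arise under the premise $\unfoldOne{\stT} = \stIntSum{\roleQ}{}{\cdots}$ — I would note this explicitly to rule it out rather than leaving a gap. Everything else reduces to mechanical application of \Cref{def:normal_proj}, \Cref{lem:subtyping-invert}, and the merge lemmas.
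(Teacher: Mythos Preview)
Your overall plan—case analysis on $\unfoldOne{\gtG}$ using \Cref{def:global-proj}, with \Cref{lem:subtyping-invert} for the sender/receiver sub-cases and \Cref{lem:merge-subtyping} plus transitivity for the merge sub-case—is exactly the paper's approach; its proof is the one-liner ``by the definition of projection; for cases~(b), apply \Cref{lem:merge-subtyping} and transitivity.'' Your added unfolding bookkeeping via \Cref{lem:unfold_projection,lem:unfold-subtyping,lem:unfold-no-rec} only makes that argument more explicit.

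One step is wrong, though. In your ``secondary subtlety'' you dismiss the case where $\unfoldOne{\gtG}$ is a transmission en route with $\roleP$ as sender by asserting that $\gtProj{\gtG}{\roleP} = \gtProj{\gtG[j]}{\roleP}$ ``does not look like an internal choice,'' so the premise of part~(1) cannot hold. That inference fails: nothing prevents $\gtG[j]$ itself from being a transmission whose sender is $\roleP$, in which case $\gtProj{\gtG[j]}{\roleP}$ \emph{is} an internal choice and hence, by \Cref{lem:subtyping-invert}, $\unfoldOne{\stT}$ can perfectly well be one too. In that situation $\unfoldOne{\gtG}$ matches neither~(1a) (which demands a non-en-route transmission from $\roleP$) nor~(1b) (which demands $\roleP \neq \roleS$). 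The paper's terse proof does not address this shape either, so the gap is inherited from the lemma's statement rather than introduced by your argument—but your specific justification for ruling the case out is incorrect and should be dropped.
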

\begin{proof}
By the definition of timed global type projection~(\Cref{def:global-proj}); additionally, for cases (b) in 
items (1) and (2), apply~\Cref{lem:merge-subtyping} and the transitivity of subtyping. 
\qedhere 
\end{proof}

\begin{lemma}[Matching Communication Under Projection]
\label{lem:comm-match}
  If two timed local types $\stT, \stU$ are supertype %
  of an internal choice and an
  external choice with matching roles, obtained via projection from a timed global
  type $\gtG$, \ie
  $ \gtProj{\gtG}{\roleP} \stSub 
    \unfoldOne{\stT} 
    =
      \stIntSum{\roleQ}{i \in I_{\roleP}}{\stTChoice{\stLab[i]}{\stS[i]}{\ccst[i], \crst[i]} \stSeq \stT[i]} 
  $
  and
  $ \gtProj{\gtG}{\roleQ} \stSub 
    \unfoldOne{\stU}
    =
    \stExtSum{\roleP}{j \in I_{\roleQ}}{\stTChoice{\stLabi[j]}{\stSi[j]}{\ccsti[j], \crsti[j]} \stSeq \stTi[j]} 
  $, then
      $I_{\roleP} \subseteq I_{\roleQ}$, and
      $\forall i \in I_{\roleP}: \stLab[i] = \stLabi[i]$ and $\stS[i] \stSub \stSi[i]$.
\end{lemma}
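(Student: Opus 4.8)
\textbf{Proof plan for \Cref{lem:comm-match}.}

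The plan is to reduce both hypotheses to facts about the unfolded global type $\unfoldOne{\gtG}$ via the inversion lemma for projection, and then match the two resulting structural descriptions. First I would apply \Cref{lem:inv-proj}(1) to the hypothesis $\gtProj{\gtG}{\roleP} \stSub \unfoldOne{\stT}$ with $\unfoldOne{\stT}$ an internal choice towards $\roleQ$: this yields two cases, either (a) $\unfoldOne{\gtG}$ is a transmission $\gtCommT{\roleP}{\roleQ}{i \in I'}{\gtLabi[i]}{\stSi[i]}{\cdots}{\gtG[i]}$ with $I_{\roleP} \subseteq I'$, labels matching, $\stS[i] \stSub \stSi[i]$, and time assertions matching the output components; or (b) $\unfoldOne{\gtG}$ has a leading transmission (or transmission en route) whose sender and receiver are both distinct from $\roleP$. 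Symmetrically, I would apply \Cref{lem:inv-proj}(2) to $\gtProj{\gtG}{\roleQ} \stSub \unfoldOne{\stU}$, again getting case (a) where $\unfoldOne{\gtG}$ is a transmission (or en route) \emph{from} $\roleQ$'s partner \emph{to} $\roleQ$, or case (b) where $\roleQ$ is not the subject of the leading prefix.

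Next I would argue that case (b) is impossible on either side. By \Cref{lem:unfold-no-rec}, $\unfoldOne{\gtG}$ is $\gtEnd$, a transmission, or a transmission en route. It cannot be $\gtEnd$: by \Cref{lem:inv-proj}(3), $\gtProj{\gtG}{\roleP} \stSub \unfoldOne{\stT}$ with $\unfoldOne{\stT}$ an internal choice would force $\roleP \notin \gtRoles{\gtG}$, contradicting (via \Cref{lem:unfold_projection} and \Cref{lem:subtyping-invert}) the fact that the projection itself must be an internal choice, hence not $\stEnd$. So $\unfoldOne{\gtG}$ has a leading (possibly en route) transmission with some sender $\roleS$ and receiver $\roleT$. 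From the $\roleP$-side inversion, if case (a) holds then $\roleS = \roleP$ and $\roleT = \roleQ$; if case (b) holds then $\roleP \notin \{\roleS,\roleT\}$. From the $\roleQ$-side inversion, case (a) gives $\roleT = \roleQ$ (with sender $\roleQ$'s partner), and case (b) gives $\roleQ \notin \{\roleS,\roleT\}$. Since $\unfoldOne{\gtG}$ is a single fixed term, the pairs $(\roleS,\roleT)$ from both analyses must coincide; the only mutually consistent possibility is that case (a) holds on both sides with $\roleS = \roleP$ and $\roleT = \roleQ$ — any case-(b) alternative puts $\roleP$ or $\roleQ$ outside $\{\roleS,\roleT\}$ while the other analysis forces it in, a contradiction.

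Having pinned down $\unfoldOne{\gtG} = \gtCommT{\roleP}{\roleQ}{i \in I'}{\gtLabi[i]}{\stSi[i]}{\ccstOi[i], \crstOi[i], \ccstIi[i], \crstIi[i]}{\gtG[i]}$ (or its en-route variant), I would now read off the conclusions. From \Cref{lem:inv-proj}(1a): $I_{\roleP} \subseteq I'$ and for $i \in I_{\roleP}$, $\stLab[i] = \gtLabi[i]$ and $\stS[i] \stSub \stSi[i]$. From \Cref{lem:inv-proj}(2a) applied on the $\roleQ$-side: $I' \subseteq I_{\roleQ}$ and for $i \in I'$, $\stLabi[i] = \stLabii[i]$ where $\stLabii[i]$ are the labels of $\unfoldOne{\stU}$, i.e.\ $\gtLabi[i] = \stLabi[i]$ in the notation of the statement, together with payload subtyping $\stSi[i] \stSub \stSii[i]$. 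Chaining the inclusions gives $I_{\roleP} \subseteq I' \subseteq I_{\roleQ}$, hence $I_{\roleP} \subseteq I_{\roleQ}$; chaining the label equalities gives $\stLab[i] = \stLabi[i]$ for $i \in I_{\roleP}$ (using $I_{\roleP} \subseteq I'$); and chaining the payload relations with transitivity of subtyping (\Cref{transitive-subtyping}) gives $\stS[i] \stSub \stSi[i]$ for $i \in I_{\roleP}$ — exactly the three conjuncts of the conclusion.

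The main obstacle I anticipate is the case analysis in the second paragraph: carefully ensuring that the two applications of \Cref{lem:inv-proj} are talking about the \emph{same} unfolding $\unfoldOne{\gtG}$ and that the sender/receiver constraints they impose are jointly satisfiable only in the intended way. In particular one must be attentive to the transmission-en-route variant appearing on the $\roleQ$-side (where $\roleQ$ is the receiver of an in-flight message), handle it uniformly with the plain transmission, and confirm that the output-side data ($\gtLabi[i], \stSi[i], \ccstOi[i], \crstOi[i]$) extracted from the $\roleP$-inversion is the same data the $\roleQ$-inversion sees as its input-side source — this is immediate because both come from the single term $\unfoldOne{\gtG}$, but it is the point that needs to be stated cleanly. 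The rest is routine bookkeeping with set inclusions and transitivity of $\stSub$.
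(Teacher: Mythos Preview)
Your case analysis in the second paragraph has a gap. You claim that ``the only mutually consistent possibility is that case (a) holds on both sides,'' but the case where \emph{both} inversions return case (b) is perfectly consistent: it says $\roleP \notin \{\roleS,\roleT\}$ from the $\roleP$-side and $\roleQ \notin \{\roleS,\roleT\}$ from the $\roleQ$-side, i.e.\ $\{\roleP,\roleQ\} \cap \{\roleS,\roleT\} = \emptyset$, and nothing forces either $\roleP$ or $\roleQ$ into $\{\roleS,\roleT\}$. Concretely, take any $\gtG$ whose unfolding is a transmission from $\roleS$ to $\roleT$ (with $\roleS,\roleT \notin \{\roleP,\roleQ\}$) whose continuation is itself a transmission from $\roleP$ to $\roleQ$: then $\gtProj{\gtG}{\roleP}$ and $\gtProj{\gtG}{\roleQ}$ are (via the merge) the expected internal and external choices, yet both applications of \Cref{lem:inv-proj} fall into case (b). Your contradiction argument correctly rules out the two \emph{mixed} cases (one side (a), the other (b)), but not (b)/(b).

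The paper handles (b)/(b) by \emph{induction} on the global type: in that case \Cref{lem:inv-proj} yields $\gtProj{\gtG[j]}{\roleP} \stSub \stT$ and $\gtProj{\gtG[j]}{\roleQ} \stSub \stU$ for every continuation $\gtG[j]$, so the lemma's hypotheses hold again with $\gtG[j]$ in place of $\gtG$, and one applies the inductive hypothesis to any $j \in J$. Without this recursion the (b)/(b) case cannot be discharged; once it is in place, your third paragraph is exactly the (a)/(a) base case of the paper's argument.
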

\begin{proof}
  By induction on items (1) and (2) of \cref{lem:inv-proj} simultaneously.
  \begin{enumerate}[label=(\alph*)]
    \item
      We have $\unfoldOne{\gtG} = 
      \gtCommT{\roleP}{\roleQ}{i \in I}{\gtLabii[i]}{\stSii[i]}{\ccstOii[i], \crstOii[i], \ccstIii[i], \crstIii[i]}{\gtG[i]}$,
      $I_{\roleP} \subseteq I \subseteq I_{\roleQ}$,
      $\forall i \in I_{\roleP}: \stLab[i] = \gtLabii[i]$ and $\stS[i] \stSub \stSii[i]$, and
      $\forall i \in I: \gtLabii[i] = \stLabi[i]$ and $\stSii[i] \stSub \stSi[i]$.
      We have $I_{\roleP} \subseteq I_{\roleQ}$~(by transitivity of $\subseteq$),
      and $\forall i \in I_{\roleP}: \stLab[i] = \stLabi[i]$~(by transitivity of $=$) and $\stS[i] \stSub \stSi[i]$~(by transitivity of
      $\stSub$).
    \item
      We have $\unfoldOne{\gtG} = \gtCommT{\roleS}{\roleT}
        {j \in J}{\gtLabii[j]}{\stSii[j]}{\ccstOii[i], \crstOii[i], \ccstIii[i], \crstIii[i]}{\gtG[j]}$, or 
        $\unfoldOne{\gtG} = \gtCommTTransit{\roleS}{\roleT}{j \in
            J}{\gtLabii[j]}{\stSii[j]}{\ccstOii[j], \crstOii[j], \ccstIii[j], \crstIii[j]}{\gtG[j]}{k}$, 
      where for all $j \in J: \gtProj{\gtG[j]}{\roleP} \stSub \stT$,
      $\gtProj{\gtG[j]}{\roleQ} \stSub \stU$,
      $\setenum{\roleP, \roleQ} \cap \setenum{\roleS, \roleT} = \emptyset$.
      Apply induction on
      $\gtProj{\gtG[j]}{\roleP} \stSub \stT$ and
      $\gtProj{\gtG[j]}{\roleP} \stSub \stU$ on any $j \in J$.
   \qedhere
  \end{enumerate}
\end{proof}

\begin{lemma}[Matching Communication Under Association]
\label{lem:comm-assoc-match}
Let $\stEnv[\Delta]$ be a typing environment containing only queue types for queries. If $\stEnv[\Delta]$ is associated with a timed global type $\gtG$ for $\mpS$ with 
$\stEnvApp{\stEnv[\Delta]}{\mpChanRole{\mpS}{\roleP}} = \stQCons{
              \stQMsg{\roleQ}{\stLab}{\stS}%
            }{%
              \stQType%
            }$, and a timed local type $\stT[\roleQ]$ is a supertype of an external choice with matching roles, obtained via projection from $\gtG$, \ie  $\gtProj{\gtG}{\roleQ} \stSub \unfoldOne{\stT[\roleQ]} = \stExtSum{\roleP}{i \in I_{\roleQ}}{\stTChoice{\stLabi[i]}{\stSi[i]}{\ccsti[i], \crsti[i]} \stSeq \stTi[i]}$, then $\exists k \in I_{\roleQ}: \stLab = \stLabi[k]$ and $\stS \stSub \stSi[k]$.  
\end{lemma}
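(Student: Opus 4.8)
The proof of \Cref{lem:comm-assoc-match} will proceed by induction on the structure of the association $\stEnv[\Delta]$ is associated with $\gtG$ for $\mpS$, following the case analysis on the shape of $\gtG$ in \Cref{def:assoc}~\ref{item:assoc:qenv}. The plan is to track how a message of the form $\stQMsg{\roleQ}{\stLab}{\stS}$ ends up at the head of the queue $\stEnvApp{\stEnv[\Delta]}{\mpChanRole{\mpS}{\roleP}}$ and to match it against the external choice offered by $\stT[\roleQ]$ via the projection $\gtProj{\gtG}{\roleQ}$.

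First I would dispose of the trivial cases. If $\gtG = \gtEnd$ or $\gtG = \gtRec{\gtRecVar}{\gtGi}$, then by \Cref{def:assoc}~\ref{item:assoc:qenv}(iii) we have $\stEnvApp{\stEnv[\Delta]}{\mpChanRole{\mpS}{\roleP}} = \stQEmptyType$, contradicting the hypothesis that the queue has a head message; for the recursive case I would additionally unfold using \Cref{lem:unfold_projection} and \Cref{lem:gt-lts-unfold}, reducing to a transmission prefix so that $\unfoldOne{\gtG}$ is of the form given by \Cref{lem:unfold-no-rec}. The substantive cases are the two transmission constructs. When $\gtG = \gtCommT{\roleP'}{\roleQ'}{i \in I}{\gtLab[i]}{\stS[i]}{\ccstO[i], \crstO[i], \ccstI[i], \crstI[i]}{\gtG[i]}$, condition (a1) tells us $\roleQ' \notin \operatorname{receivers}(\stEnvApp{\stEnv[\Delta]}{\mpChanRole{\mpS}{\roleP'}})$, and condition (a2) says $\stEnv[\Delta]$ is associated with every $\gtG[i]$; since no messages are added to $\mpChanRole{\mpS}{\roleP}$'s queue at this step, the head message $\stQMsg{\roleQ}{\stLab}{\stS}$ must already appear in each $\gtG[i]$'s associated environment. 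The projection $\gtProj{\gtG}{\roleQ}$ is either $\gtProj{\gtG[i]}{\roleQ}$ (if $\roleQ \notin \setenum{\roleP', \roleQ'}$, via the merge, and I would invoke \Cref{lem:merge-subtyping} plus transitivity of subtyping from \Cref{transitive-subtyping}) or a choice from which the relevant branch's continuation projects compatibly — in all sub-cases I apply the induction hypothesis to the appropriate $\gtG[i]$.

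The genuinely new case is the transmission en route, $\gtG = \gtCommTTransit{\roleP'}{\roleQ'}{i \in I}{\gtLab[i]}{\stS[i]}{\ccstO[i], \crstO[i], \ccstI[i], \crstI[i]}{\gtG[i]}{j}$. Here condition (b1) gives $\stEnvApp{\stEnv[\Delta]}{\mpChanRole{\mpS}{\roleP'}} = \stQCons{\stQMsg{\roleQ'}{\stLab[j]}{\stSi[j]}}{\stQType'}$ with $\stSi[j] \stSub \stS[j]$, and (b2) gives that $\stEnvUpd{\stEnv[\Delta]}{\mpChanRole{\mpS}{\roleP'}}{\stQType'}$ is associated with $\gtG[j]$. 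There are two sub-cases according to whether $\mpChanRole{\mpS}{\roleP} = \mpChanRole{\mpS}{\roleP'}$ and whether the queried head message $\stQMsg{\roleQ}{\stLab}{\stS}$ is the in-transit message itself. If it is — i.e.\ $\roleP = \roleP'$, $\roleQ = \roleQ'$, $\stLab = \stLab[j]$, $\stS = \stSi[j]$ — then I need $\gtProj{\gtG}{\roleQ}$ to offer a branch with label $\stLab[j]$ and payload supertype of $\stS[j]$; by \Cref{def:global-proj} the projection of a transmission en route onto the receiver $\roleQ'$ is the external choice $\stExtSum{\roleP'}{i \in I}{\stTChoice{\gtLab[i]}{\stS[i]}{\ccstI[i], \crstI[i]} \stSeq (\gtProj{\gtG[i]}{\roleQ'})}$, so \Cref{lem:subtyping-invert}(4) applied to $\gtProj{\gtG}{\roleQ} \stSub \unfoldOne{\stT[\roleQ]}$ yields the matching branch $k$ with $\stLab[j] = \stLabi[k]$ and $\stS[j] \stSub \stSi[k]$; combining with $\stS = \stSi[j] \stSub \stS[j]$ and transitivity of subtyping gives $\stS \stSub \stSi[k]$, which is the thesis. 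Otherwise the head message is not the in-transit one, so it persists in the environment associated with $\gtG[j]$ (after updating the $\roleP'$ entry), and I apply the induction hypothesis to $\gtG[j]$, being careful about whether $\roleQ \in \setenum{\roleP', \roleQ'}$ so as to invoke the correct clause of \Cref{lem:inv-proj} on the projection of $\gtG$ onto $\roleQ$.

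The main obstacle I anticipate is bookkeeping the interaction between the queue reordering congruence (\Cref{fig:sub_cong_com_types}, top) and the ``head of queue'' hypothesis: a message that is literally at the head of $\stEnvApp{\stEnv[\Delta]}{\mpChanRole{\mpS}{\roleP}}$ in one representation need not be syntactically first in another congruent one, but since congruence only permutes messages with \emph{distinct} receivers, the message addressed to $\roleQ$ that we are reasoning about retains its relative order with respect to other $\roleQ$-addressed messages; I would need a small auxiliary observation that the \emph{first} $\roleQ$-addressed message is congruence-stable, and that this is the one the receiver $\roleQ$'s projected external choice must match. Everything else is a routine unwinding of \Cref{def:assoc}, \Cref{def:global-proj}, and the inversion lemmas for subtyping and projection already established.
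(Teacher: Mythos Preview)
Your proposal is correct and lands on the same overall argument as the paper, but the organisation differs. The paper drives the case analysis through \Cref{lem:inv-proj}, item~(2): since $\gtProj{\gtG}{\roleQ} \stSub \unfoldOne{\stT[\roleQ]}$ with $\unfoldOne{\stT[\roleQ]}$ an external choice from~$\roleP$, inversion of projection immediately yields exactly two cases --- (a)~$\unfoldOne{\gtG}$ has prefix $\roleP \to \roleQ$ (transmission or en route), or (b)~$\unfoldOne{\gtG}$ has a prefix $\roleS \to \roleT$ with $\roleQ \notin \{\roleS,\roleT\}$. In case~(a) the paper then uses association clause~(iv)(a1) to rule out the non--en-route form (exactly your observation that $\roleQ \in \operatorname{receivers}(\stEnvApp{\stEnv[\Delta]}{\mpChanRole{\mpS}{\roleP}})$ contradicts~(a1)), leaving only the en-route form where association clause~(v)(b1) directly identifies the head message with the in-transit one; the conclusion follows by transitivity of $\stSub$. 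Case~(b) is dispatched by the induction hypothesis on the continuation.

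You instead induct directly on the association structure of~\Cref{def:assoc}, item~\ref{item:assoc:qenv}, case-splitting on the syntactic shape of~$\gtG$ and then further sub-casing on whether $\roleQ \in \{\roleP',\roleQ'\}$. This is sound but forces you to discharge by hand several role-equality sub-cases (e.g.\ $\roleQ = \roleP'$ makes $\gtProj{\gtG}{\roleQ}$ an internal choice, impossible under the subtyping hypothesis) that the paper's use of \Cref{lem:inv-proj} absorbs automatically. Your concern about queue-reordering congruence is reasonable but turns out not to bite here: both the lemma hypothesis and association clause~(v)(b1) state \emph{literal} equalities on $\stEnvApp{\stEnv[\Delta]}{\mpChanRole{\mpS}{\roleP}}$, so when $\roleP' = \roleP$ the head message is syntactically the in-transit one, and no congruence-stability argument is needed.
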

\begin{proof}
  By induction on item (2) of \cref{lem:inv-proj} simultaneously.
  \begin{enumerate}[label=(\alph*)]
    \item
      We have either $\unfoldOne{\gtG} = 
      \gtCommT{\roleP}{\roleQ}{i \in I}{\gtLabii[i]}{\stSii[i]}{\ccstOii[i], \crstOii[i], \ccstIii[i], \crstIii[i]}{\gtG[i]}$, or 
      $\unfoldOne{\gtG} = 
      \gtCommTTransit{\roleP}{\roleQ}{i \in I}{\gtLabii[i]}{\stSii[i]}{\ccstOii[i], \crstOii[i], \ccstIii[i], \crstIii[i]}{\gtG[i]}{j}$, where 
      $I \subseteq I_{\roleQ}$, and 
      $\forall i \in I: \stLabi[i] = \gtLabii[i]$ and $\stSii[i] \stSub \stSi[i]$. 
    
    By $\stEnv[\Delta]$ being associated with $\gtG$ for $\mpS$  and  
    $\stEnvApp{\stEnv[\Delta]}{\mpChanRole{\mpS}{\roleP}} = 
\stQCons{
              \stQMsg{\roleQ}{\stLab}{\stS}%
            }{%
              \stQType%
            }$, we know that $\unfoldOne{\gtG} \neq \gtCommT{\roleP}{\roleQ}{i \in I}{\gtLabii[i]}{\stSii[i]}{\ccstOii[i], \crstOii[i], \ccstIii[i], \crstIii[i]}{\gtG[i]}$. Otherwise, $\roleQ \notin  \operatorname{receivers}(\stQCons{
              \stQMsg{\roleQ}{\stLab}{\stS}%
            }{%
              \stQType%
            })$, a desired contradiction.  It follows directly that $\unfoldOne{\gtG} = 
      \gtCommTTransit{\roleP}{\roleQ}{i \in I}{\gtLabii[i]}{\stSii[i]}{\ccstOii[i], \crstOii[i], \ccstIii[i], \crstIii[i]}{\gtG[i]}{j}$, and furthermore, by the association,  $\stLab = \gtLabii[j]$ and $\stS \stSub \stSii[j]$. 
  We have $\stLab = \stLabi[j]$~(by transitivity of $=$) and $\stS \stSub \stSi[j]$~(by transitivity of
      $\stSub$), which is the thesis.  
    \item
      We have $\unfoldOne{\gtG} = \gtCommT{\roleS}{\roleT}
        {j \in J}{\gtLabii[j]}{\stSii[j]}{\ccstOii[i], \crstOii[i], \ccstIii[i], \crstIii[i]}{\gtG[j]}$, or 
        $\unfoldOne{\gtG} = \gtCommTTransit{\roleS}{\roleT}{j \in
            J}{\gtLabii[j]}{\stSii[j]}{\ccstOii[j], \crstOii[j], \ccstIii[j], \crstIii[j]}{\gtG[j]}{k}$, 
      where for all $j \in J: \gtProj{\gtG[j]}{\roleQ} \stSub \stT[\roleQ]$,
      $\setenum{\roleP, \roleQ} \cap \setenum{\roleS, \roleT} = \emptyset$. 
      In the case of $\unfoldOne{\gtG} = \gtCommT{\roleS}{\roleT}
        {j \in J}{\gtLabii[j]}{\stSii[j]}{\ccstOii[i], \crstOii[i], \ccstIii[i], \crstIii[i]}{\gtG[j]}$, $\forall j \in J$, $\stEnv[\Delta]$ is associated with $\gtG[j]$ for $\mpS$. In the case of $\unfoldOne{\gtG} = \gtCommTTransit{\roleS}{\roleT}{j \in
            J}{\gtLabii[j]}{\stSii[j]}{\ccstOii[j], \crstOii[j], \ccstIii[j], \crstIii[j]}{\gtG[j]}{k}$, $\stEnvi[\Delta]$ is associated with $\gtG[k]$ for $\mpS$ with $\stEnvApp{\stEnvi[\Delta]}{\mpChanRole{\mpS}{\roleP}} = \stQCons{
              \stQMsg{\roleQ}{\stLab}{\stS}%
            }{%
              \stQType%
            }$.  
      Apply induction on
      $\gtProj{\gtG[k]}{\roleP} \stSub \stT[\roleQ]$ and $\stEnv[\Delta]$ (or $\stEnvi[\Delta]$) being associated with 
      $\gtG[k]$.  
   \qedhere
  \end{enumerate}
\end{proof}

\begin{lemma}
\label{lem:time_assoc}
If $\stEnvAssoc{\gtWithTime{\cVal}{\gtG}}{\stEnv}{\mpS}$, then $\stEnvAssoc{\gtWithTime{\cVal + t}{\gtG}}{\stEnv + t}{\mpS}$. 
\end{lemma}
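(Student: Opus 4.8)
The statement asserts that association is preserved under uniform time passage: if $\stEnvAssoc{\gtWithTime{\cVal}{\gtG}}{\stEnv}{\mpS}$ then $\stEnvAssoc{\gtWithTime{\cVal + t}{\gtG}}{\stEnv + t}{\mpS}$. The natural approach is a structural induction on the derivation witnessing $\stEnvAssoc{\gtWithTime{\cVal}{\gtG}}{\stEnv}{\mpS}$, following the case analysis in~\Cref{def:assoc}. First I would unpack the hypothesis: $\stEnv$ splits as $\stEnv = \stEnv[\gtG] \stEnvComp \stEnv[\Delta] \stEnvComp \stEnv[\stEnd]$ with the three clusters of conditions. The key observation is that the operation $\stEnv + t$ acts pointwise on the clock valuations attached to timed-session-type entries (and to $\gtEnd$-entries in $\stEnv[\stEnd]$) while leaving queue-type entries untouched, mirroring exactly how $\gtWithTime{\cVal + t}{\gtG}$ differs from $\gtWithTime{\cVal}{\gtG}$ only in the clock valuation, with $\gtG$ itself unchanged.

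For the $\stEnv[\gtG]$ component, conditions (i), (ii), (iii) of~\Cref{def:assoc}(1) are all about the domain, the shape of entries, and projection/subtyping relations --- none of these involve the clock valuation $\cVal$, so they transfer verbatim to $\stEnv[\gtG] + t$. The only condition that mentions $\cVal$ is (iv): $\cVal = \sqcup_{\roleP \in \gtRoles{\gtG}} \cVal[\roleP]$. Here I would argue that $(\sqcup_{\roleP} \cVal[\roleP]) + t = \sqcup_{\roleP} (\cVal[\roleP] + t)$, i.e.\ the overriding union commutes with adding $t$ to every clock --- this is an elementary fact about clock valuations that I would state as a small auxiliary observation. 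Since $(\stEnv[\gtG] + t)$ assigns $\stCPair{\cVal[\roleP] + t}{\stT[\roleP]}$ to $\mpChanRole{\mpS}{\roleP}$, condition (iv) holds with valuation $\cVal + t$. For $\stEnv[\stEnd]$, the condition in~\Cref{def:assoc}(3) is of the same form: each entry is $\stMPair{\stCPair{\cVal[\roleP]}{\stEnd}}{\stQEmptyType}$, and adding $t$ yields $\stMPair{\stCPair{\cVal[\roleP] + t}{\stEnd}}{\stQEmptyType}$, which is still of the required form.

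The $\stEnv[\Delta]$ component (\Cref{def:assoc}(2)) requires a sub-induction on $\gtG$, since its clauses recurse on continuations $\gtG[i]$ (and on $\gtG[j]$ for transmissions en route). But $\stEnv[\Delta]$ contains only queue types, which are \emph{invariant} under $+\,t$: $\stEnv[\Delta] + t = \stEnv[\Delta]$. Thus for each clause --- empty queues when $\gtG$ is $\gtEnd$ or $\gtRec{\gtRecVar}{\gtGi}$; the receiver-disjointness and recursive conditions (a1),(a2) for a transmission; the head-message conditions (b1),(b2) for a transmission en route --- the statement to be re-established is literally the same statement, as neither the queue environment nor the global type $\gtG$ changes. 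I would note that~\Cref{def:assoc}(2) does not mention clock valuations at all, so this part is immediate once we observe $\stEnv[\Delta] + t = \stEnv[\Delta]$. Finally, I would check that $(\stEnv[\gtG] + t) \stEnvComp (\stEnv[\Delta] + t) \stEnvComp (\stEnv[\stEnd] + t) = \stEnv + t$, which follows because $+\,t$ distributes over the disjoint (and compatible) composition $\stEnvComp$ --- each endpoint's entry is updated independently.

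\textbf{Main obstacle.} I do not expect a genuinely hard step here; the proof is essentially bookkeeping. The only point requiring slight care is making precise that $+\,t$ interacts correctly with the environment composition $\stEnvComp$ and with the overriding union $\sqcup$ --- in particular that for a channel $\mpChanRole{\mpS}{\roleP}$ receiving a \emph{combined} timed-session/queue type $\stMPair{\stCPair{\cVal}{\stT}}{\stQType}$ in $\stEnv[1] \stEnvComp \stEnv[2]$, the result of $+\,t$ is the combination of the $+\,t$-images of the two halves. If the paper has already fixed (in~\Cref{def:aat-mpst-typing-env-syntax} or in the commented-out passage) the definition of $\stEnv + t$ as ``increment the associated clock valuation in each entry by $t$, leaving pure-queue entries untouched,'' then all of this is immediate; otherwise I would first pin down that definition and the two elementary lemmas ($\sqcup$ commutes with $+\,t$; $\stEnvComp$ commutes with $+\,t$) before running the induction.
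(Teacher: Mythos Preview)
Your proposal is correct and takes the same approach as the paper, which simply writes ``By the definitions of association~(\Cref{def:assoc}) and $\stEnvQ + t$''; you have just spelled out in detail the bookkeeping that this one-liner leaves implicit. Your observation that $\stEnv[\Delta] + t = \stEnv[\Delta]$ (queue types carry no clock valuation) and that condition~(iv) is the only place $\cVal$ appears is exactly the content of the paper's appeal to the two definitions.
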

\begin{proof}
By the definitions of association~(\Cref{def:assoc}) and $\stEnvQ + t$ (in~\Cref{sec:aat-mpst-typing-system}). 
\qedhere
\end{proof}

\begin{lemma}
\label{lem:stenv_time_action}
$\stEnv \stEnvQTMoveTimeAnnot  \stEnvQ + t$. 
\end{lemma}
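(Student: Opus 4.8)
The plan is to proceed by induction on the structure of the typing environment $\stEnv$, using the time-related rules of the typing environment semantics in~\Cref{fig:aat-mpst-semantics-typing-env}. First I would unfold the definition of $\stEnv + t$ from~\Cref{sec:aat-mpst-typing-system}: $\stEnv + t$ has the same domain as $\stEnv$, increments the clock valuation $\cVal$ to $\cVal + t$ in every entry of timed-session type $\stCPair{\cVal}{\stT}$ or combined type $\stMPair{\stCPair{\cVal}{\stT}}{\stQType}$, and leaves every pure queue-type entry $\stQType$ unchanged. The goal is therefore to show that this pointwise transformation is exactly what the transition relation $\stEnvQTMoveTimeAnnot$ produces on the whole environment.

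The base case is the empty environment $\stEnvEmpty$, which is handled trivially (either by a vacuous application of \inferrule{\iruleTCtxTime} with both components empty, or directly since $\stEnvEmpty + t = \stEnvEmpty$). For the inductive step, I would consider a single-entry environment and dispatch on the shape of its codomain: for an entry $\stEnvMap{\mpC}{\stCPair{\cVal}{\stT}}$, rule~\inferrule{\iruleTCtxTimeSession} gives $\stEnvMap{\mpC}{\stCPair{\cVal}{\stT}} \stEnvQTMoveTimeAnnot \stEnvMap{\mpC}{\stCPair{\cVal + t}{\stT}}$; for an entry $\stEnvMap{\mpChanRole{\mpS}{\roleP}}{\stQType}$, rule~\inferrule{\iruleTCtxTimeQ} gives $\stEnvMap{\mpChanRole{\mpS}{\roleP}}{\stQType} \stEnvQTMoveTimeAnnot \stEnvMap{\mpChanRole{\mpS}{\roleP}}{\stQType}$; and for a combined entry $\stEnvMap{\mpChanRole{\mpS}{\roleP}}{\stMPair{\stCPair{\cVal}{\stT}}{\stQType}}$, rule~\inferrule{\iruleTCtxTimeCombined} gives the time step that increments $\cVal$ and preserves $\stQType$. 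In each case the result of the transition coincides with the corresponding entry of $\stEnv + t$. Then, for a general environment $\stEnv = \stEnv[1] \stEnvComp \stEnv[2]$ obtained by composition, I would apply the induction hypothesis to get $\stEnv[1] \stEnvQTMoveTimeAnnot \stEnv[1] + t$ and $\stEnv[2] \stEnvQTMoveTimeAnnot \stEnv[2] + t$, and conclude via rule~\inferrule{\iruleTCtxTime} that $\stEnv[1] \stEnvComp \stEnv[2] \stEnvQTMoveTimeAnnot (\stEnv[1] + t) \stEnvComp (\stEnv[2] + t)$, noting that $(\stEnv[1] + t) \stEnvComp (\stEnv[2] + t) = (\stEnv[1] \stEnvComp \stEnv[2]) + t$ since $+t$ commutes with composition (composition only merges a queue-type entry with a session-type entry, and $+t$ acts on each such pair exactly as it acts on the merged combined entry).

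The main subtlety — rather than a genuine obstacle — is checking that the $+t$ operation is compatible with environment composition $\stEnvComp$, so that the decomposition used for the inductive step is faithful; this requires observing that when $\stEnvApp{\stEnv[1]}{\mpC} = \stQType$ and $\stEnvApp{\stEnv[2]}{\mpC} = \stCPair{\cVal}{\stT}$, the composed entry $\stMPair{\stCPair{\cVal}{\stT}}{\stQType}$ shifted by $t$ is $\stMPair{\stCPair{\cVal + t}{\stT}}{\stQType}$, which is precisely the composition of $\stQType$ (unchanged, as it is a queue type) with $\stCPair{\cVal + t}{\stT}$. One should also handle entries mapping variables $x$ to timed-session types; these are treated by the same clause as channel-with-role entries of session type under \inferrule{\iruleTCtxTimeSession} (or, if the semantics does not advance variable entries under time, the claim must be read modulo the convention used elsewhere that time passes uniformly on all clock-annotated entries — I would align with whichever convention~\Cref{fig:aat-mpst-semantics-typing-env} fixes). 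Everything else is a routine verification that the syntactic rules reproduce the defined arithmetic transformation.
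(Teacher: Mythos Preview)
Your proposal is correct and follows essentially the same approach as the paper: the paper's proof is the one-liner ``By the definition of $\stEnvQ + t$, and applying \inferrule{\iruleTCtxTimeSession}, \inferrule{\iruleTCtxTimeQ}, \inferrule{\iruleTCtxTimeCombined}, or \inferrule{\iruleTCtxTime} when necessary,'' and you have simply spelled out the implicit induction on the structure of $\stEnv$ that this sentence abbreviates. Your only unnecessary caution concerns variable entries: rule \inferrule{\iruleTCtxTimeSession} is stated for an arbitrary $\mpC$, which by~\Cref{def:aat-mpst-syntax-terms} ranges over both variables and channels with roles, so variables are already covered.
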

\begin{proof}
By the definition of $\stEnvQ + t$, and applying \inferrule{\iruleTCtxTimeSession}, 
\inferrule{\iruleTCtxTimeQ}, \inferrule{\iruleTCtxTimeCombined}, or \inferrule{\iruleTCtxTime} when necessary. 
\qedhere 
\end{proof}

\begin{lemma}
\label{lem:time_assoc_sound}
Given associated timed global type $\gtWithTime{\cVal}{\gtG}$ and typing environment $\stEnvQ$: 
$\stEnvAssoc{\gtWithTime{\cVal}{\gtG}}{\stEnv}{\mpS}$. 
If $\gtWithTime{\cVal}{\gtG} \gtMove[\timeLab] \gtWithTime{\cVali}{\gtGi}$, then there exists $\stEnvi$ such that 
$\stEnv \stEnvQTMoveTimeAnnot  \stEnvQi$ and $\stEnvAssoc{\gtWithTime{\cVali}{\gtGi}}{\stEnvi}{\mpS}$. 
\end{lemma}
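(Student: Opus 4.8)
The plan is to read off the $\timeLab$-transition of the timed global type directly: by inspection of \Cref{fig:gtype_rules}, the only rules that can yield a transition labelled by $\timeLab \in \rn_{\ge 0}$ are \inferrule{\iruleGtMoveTime} and \inferrule{\iruleGtMoveRec} --- rules \inferrule{\iruleGtMoveCtx} and \inferrule{\iruleGtMoveCtxi} carry the side condition $\stEnvAnnotGenericSym \neq \timeLab$, while \inferrule{\iruleGtMoveOut} and \inferrule{\iruleGtMoveIn} emit send/receive labels. So I would argue by induction on the derivation of $\gtWithTime{\cVal}{\gtG} \gtMove[\timeLab] \gtWithTime{\cVali}{\gtGi}$. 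If the last rule is \inferrule{\iruleGtMoveTime}, then $\gtGi = \gtG$ and $\cVali = \cVal + \timeLab$. If it is \inferrule{\iruleGtMoveRec}, then $\gtG$ is a recursion $\gtRec{\gtRecVar}{\gtGii}$ and the premise is a strictly shorter derivation $\gtWithTime{\cVal}{\gtGii\subst{\gtRecVar}{\gtRec{\gtRecVar}{\gtGii}}} \gtMove[\timeLab] \gtWithTime{\cVali}{\gtGi}$.

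For the \inferrule{\iruleGtMoveTime} case I would take $\stEnvi$ to be $\stEnv + \timeLab$. Then $\stEnv \stEnvQTMoveTimeAnnot \stEnv + \timeLab$ holds by \Cref{lem:stenv_time_action}, and from the hypothesis $\stEnvAssoc{\gtWithTime{\cVal}{\gtG}}{\stEnv}{\mpS}$ we obtain $\stEnvAssoc{\gtWithTime{\cVal + \timeLab}{\gtG}}{\stEnv + \timeLab}{\mpS}$ by \Cref{lem:time_assoc}; since here $\gtGi = \gtG$ and $\cVali = \cVal + \timeLab$, this is exactly the goal. For the \inferrule{\iruleGtMoveRec} case the induction hypothesis applies provided we can show $\stEnvAssoc{\gtWithTime{\cVal}{\gtGii\subst{\gtRecVar}{\gtRec{\gtRecVar}{\gtGii}}}}{\stEnv}{\mpS}$, i.e. that association is stable under one-step unfolding. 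Writing $\stEnv = \stEnv[\gtG] \stEnvComp \stEnv[\Delta] \stEnvComp \stEnv[\stEnd]$ as in \Cref{def:assoc}: the $\stEnv[\stEnd]$ block is untouched; for $\stEnv[\gtG]$, the roles and attached clock valuations are unchanged, and the condition $\gtProj{\gtGii\subst{\gtRecVar}{\gtRec{\gtRecVar}{\gtGii}}}{\roleP} \stSub \stT[\roleP]$ follows from $\gtProj{\gtRec{\gtRecVar}{\gtGii}}{\roleP} \stSub \stT[\roleP]$, because the two projections share the same unfolding (\Cref{lem:unfold_projection}), so by \Cref{lem:unfold-subtyping} and transitivity of $\stSub$ (\Cref{transitive-subtyping}) the subtyping carries over; for $\stEnv[\Delta]$, a recursive $\gtG$ forces every queue type in $\stEnv[\Delta]$ to be $\stQEmptyType$, and an all-$\stQEmptyType$ queue environment stays associated after unfolding because the body of a well-guarded recursion is headed by a transmission (or is $\gtEnd$ or itself a recursion), never a transmission en route.

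The main obstacle is precisely this last observation --- that association survives one-step unfolding, and in particular that the queue-type clauses of \Cref{def:assoc} are not disturbed; this is routine but relies on well-guardedness and on the fact that recursion bodies never carry ``en route'' markers. Everything else reduces to the two immediately preceding lemmas. (Alternatively, if the semantics is taken to identify a recursive timed global type with its unfoldings --- so that $\gtGi = \gtG$ in every $\timeLab$-transition --- the \inferrule{\iruleGtMoveRec} case vanishes, and the proof is just \Cref{lem:stenv_time_action} together with \Cref{lem:time_assoc}.)
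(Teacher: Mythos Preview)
Your proof is correct and in fact more careful than the paper's. The paper's argument is a single line: ``By \inferrule{\iruleGtMoveTime}, we have $\cVali = \cVal + \timeLab$ and $\gtGi = \gtG$. The thesis then holds by \Cref{lem:stenv_time_action,lem:time_assoc}.'' That is, the paper treats \inferrule{\iruleGtMoveTime} as the only rule producing a $\timeLab$-labelled transition and immediately concludes $\gtGi = \gtG$.

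You correctly observe that \inferrule{\iruleGtMoveRec} can also yield a $\timeLab$-transition (its premise places no restriction on $\stEnvAnnotGenericSym$), so the target $\gtGi$ may be an unfolding of $\gtG$ rather than $\gtG$ itself, and you supply the missing argument that association survives one-step unfolding. Your treatment of the three components of $\stEnv$ is sound; the $\stEnv[\Delta]$ part does rely on an auxiliary induction (an all-$\stQEmptyType$ queue environment is associated with any global type containing no en-route markers) and on the implicit convention that recursion bodies carry no $\gtFmt{\rightsquigarrow}$ constructors, but these are standard and you flag them appropriately. Your parenthetical alternative --- that the \inferrule{\iruleGtMoveRec} case disappears if one identifies types up to unfolding --- is exactly the shortcut the paper is implicitly taking, and it is harmless in context since the lemma is only invoked for the \inferrule{\iruleGtMoveTime} case in the proof of \Cref{lem:sound_proj}.
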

\begin{proof}
By \inferrule{\iruleGtMoveTime}, we have $\cVali$ = $\cVal + t$ and $\gtGi = \gtG$. The thesis then holds by~\cref{lem:stenv_time_action,lem:time_assoc}. 
\end{proof}

\begin{lemma}
\label{lem:time_assoc_comp}
Given associated timed global type $\gtWithTime{\cVal}{\gtG}$ and typing environment $\stEnvQ$: 
$\stEnvAssoc{\gtWithTime{\cVal}{\gtG}}{\stEnv}{\mpS}$. 
If $\stEnv \stEnvQTMoveTimeAnnot \stEnvQi$, then there exists $\gtWithTime{\cVali}{\gtGi}$ such that 
$\gtWithTime{\cVal}{\gtG} \gtMove[\timeLab] \gtWithTime{\cVali}{\gtGi}$ and 
$\stEnvAssoc{\gtWithTime{\cVali}{\gtGi}}{\stEnvi}{\mpS}$. 
\end{lemma}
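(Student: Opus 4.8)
\textbf{Proof plan for \Cref{lem:time_assoc_comp}.}
The statement is the reverse direction of \Cref{lem:time_assoc_sound}: starting from a time transition on the typing environment, I need to produce a matching time transition on the associated timed global type while preserving association. The plan is to first pin down the shape of the environment transition $\stEnv \stEnvQTMoveTimeAnnot \stEnvQi$. Since $\stEnvAssoc{\gtWithTime{\cVal}{\gtG}}{\stEnv}{\mpS}$, the environment $\stEnv$ splits as $\stEnv[\gtG] \stEnvComp \stEnv[\Delta] \stEnvComp \stEnv[\stEnd]$ per \Cref{def:assoc}. A time transition labelled $\timeLab$ is generated only by the rules \inferrule{\iruleTCtxTimeSession}, \inferrule{\iruleTCtxTimeQ}, \inferrule{\iruleTCtxTimeCombined}, \inferrule{\iruleTCtxTime}, together with \inferrule{\iruleTCtxStruct}; crucially, \inferrule{\iruleTCtxTime} forces time to elapse \emph{uniformly} across a composition, and queue-typed entries are unaffected (\inferrule{\iruleTCtxTimeQ}) while session-typed and combined entries have their clock valuations incremented by $\timeLab$. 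Hence I would argue that any such $\stEnvQi$ must be exactly $\stEnv + \timeLab$ (up to congruence $\stQEquiv$, handled via \inferrule{\iruleTCtxStruct}), i.e.\ the transition uniformly adds $\timeLab$ to every clock valuation recorded in $\stEnv$.

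Once $\stEnvQi = \stEnv + \timeLab$ is established, I would apply rule \inferrule{\iruleGtMoveTime} from \Cref{fig:gtype_rules}, which gives $\gtWithTime{\cVal}{\gtG} \gtMove[\timeLab] \gtWithTime{\cVal + \timeLab}{\gtG}$ unconditionally. So I set $\cVali = \cVal + \timeLab$ and $\gtGi = \gtG$. It then remains to show $\stEnvAssoc{\gtWithTime{\cVal + \timeLab}{\gtG}}{\stEnv + \timeLab}{\mpS}$, which is precisely \Cref{lem:time_assoc}. Chaining these facts yields the thesis. The bookkeeping is essentially: invert the environment-level time transition, note determinism/uniformity, invoke \inferrule{\iruleGtMoveTime}, and close with \Cref{lem:time_assoc}.

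\begin{proof}
By the definition of association~(\Cref{def:assoc}), $\stEnv = \stEnv[\gtG] \stEnvComp \stEnv[\Delta] \stEnvComp \stEnv[\stEnd]$.
Inspecting the typing environment semantics~(\Cref{fig:aat-mpst-semantics-typing-env}), a transition $\stEnv \stEnvQTMoveTimeAnnot \stEnvQi$ with time label $\timeLab$ can only be derived using rules \inferrule{\iruleTCtxTimeSession}, \inferrule{\iruleTCtxTimeQ}, \inferrule{\iruleTCtxTimeCombined}, \inferrule{\iruleTCtxTime}, and \inferrule{\iruleTCtxStruct}: queue-typed entries are left unchanged, whereas every timed-session and combined entry has its associated clock valuation $\cVal[\roleP]$ replaced by $\cVal[\roleP] + \timeLab$, and \inferrule{\iruleTCtxTime} ensures this happens uniformly across the whole composition. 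Hence $\stEnvQi \stQEquiv \stEnv + \timeLab$, and by~\Cref{lem:stenv_time_action} together with \inferrule{\iruleTCtxStruct} we may take $\stEnvQi = \stEnv + \timeLab$.
Now apply rule \inferrule{\iruleGtMoveTime}: $\gtWithTime{\cVal}{\gtG} \gtMove[\timeLab] \gtWithTime{\cVal + \timeLab}{\gtG}$, so set $\cVali = \cVal + \timeLab$ and $\gtGi = \gtG$.
Finally, by~\Cref{lem:time_assoc}, $\stEnvAssoc{\gtWithTime{\cVal + \timeLab}{\gtG}}{\stEnv + \timeLab}{\mpS}$, i.e.\ $\stEnvAssoc{\gtWithTime{\cVali}{\gtGi}}{\stEnvi}{\mpS}$, which is the thesis.
\end{proof}

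\textbf{Main obstacle.} The one non-trivial point is the inversion step: showing that an environment-level $\timeLab$-transition \emph{must} increment all valuations by exactly $\timeLab$ and cannot, e.g.\ mix a time step on one sub-environment with a differently-timed step on another. This follows from \inferrule{\iruleTCtxTime} requiring the same $\timeLab$ on both premises and from the absence of any rule letting a composed environment take a time step without both components taking it; but spelling this out requires a small induction on the derivation of $\stEnv \stEnvQTMoveTimeAnnot \stEnvQi$, analogous to the determinism argument in \Cref{lem:stenv-red:det}. Everything else is a direct appeal to \inferrule{\iruleGtMoveTime} and \Cref{lem:time_assoc}.
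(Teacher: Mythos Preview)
Your proposal is correct and follows essentially the same route as the paper. The paper's proof is more terse: it invokes \Cref{lem:stenv_time_action} (implicitly combined with determinism, \Cref{lem:stenv-red:det}) to conclude $\stEnvi = \stEnv + \timeLab$ in one line, then applies \inferrule{\iruleGtMoveTime} and \Cref{lem:time_assoc}---exactly the chain you describe, but without spelling out the rule inspection you identify as the ``main obstacle''.
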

\begin{proof}
By~\Cref{lem:stenv_time_action}, we have $\stEnvi = \stEnv + t$.  
The thesis then holds by \inferrule{\iruleGtMoveTime} and~\Cref{lem:time_assoc}.  
\qedhere 
\end{proof}

\lemSoundProj*
\begin{proof}
By induction on reductions of timed global type $\gtWithTime{\cVal}{\gtG}
  \gtMove[\stEnvAnnotGenericSym]
  \gtWithTime{\cValiii}{\gtGiii}$. 
  We develop several particularly interesting cases in detail. 
 \begin{itemize}[left=0pt, topsep=0pt]
 \item Case~\inferrule{\iruleGtMoveTime}: the thesis holds by~\Cref{lem:time_assoc_sound}. 
 
 \item Case~\inferrule{\iruleGtMoveOut}: from the premise, we have: 
      \begin{gather}
        \stEnvAssoc{\gtWithTime{\cVal}{\gtG}}{\stEnv}{\mpS}
        \label{eq:soundness:send_assoc_pre}
        \\
        \gtG =
           \gtCommT{\roleP}{\roleQ}{i \in I}{\gtLab[i]}{\stS[i]}{\ccstO[i], \crstO[i], \ccstI[i], \crstI[i]}{\gtG[i]}
          \\
          \stEnv = \stEnv[\gtG] \stEnvComp \stEnv[\Delta] \stEnvComp \stEnv[\stEnd]
          \label{eq:soundness:send_assoc_envs}
          \\
       \stEnvAnnotGenericSym = \stEnvQTQueueAnnotSmall{\roleP}{\roleQ}{\gtLab[j]} 
        \\
        j \in I
        \\
        \gtGiii =   \gtCommTTransit{\roleP}{\roleQ}{i \in
          I}{\gtLab[i]}{\stS[i]}{\ccstO[i], \crstO[i], \ccstI[i], \crstI[i]}{\gtG[i]}{j}
          \\
          \cValiii = \cValUpd{\cVal}{\crstO[j]}{0}
      \end{gather}
By association~\eqref{eq:soundness:send_assoc_pre} and~\eqref{eq:soundness:send_assoc_envs}, we have that $\forall \mpChanRole{\mpS}{\roleP} \in \dom{\stEnv[\gtG]}: \stEnvApp{\stEnv[\gtG]}{\mpChanRole{\mpS}{\roleP}} =  \stCPair{\cVal[\roleP]}{\stT[\roleP]}$, and moreover, $\gtProj{\gtG}{\roleP} =  
\stIntSum{\roleQ}{i \in I}{
       \stTChoice{\stLab[i]}{\stS[i]}{\ccstO[i], \crstO[i]} \stSeq (\gtProj{\gtG[i]}{\roleP})
     } \stSub \stT[\roleP]$,  
$\gtProj{\gtG}{\roleQ} =  
\stExtSum{\roleP}{i \in I}{
       \stTChoice{\stLab[i]}{\stS[i]}{\ccstI[i], \crstI[i]} \stSeq (\gtProj{\gtG[i]}{\roleP})
     } \stSub \stT[\roleQ]$, and $\gtProj{\gtG}{\roleR} \stSub \stT[\roleR]$ with $\roleR \neq \roleP$ and $\roleR \neq \roleQ$. 
     Then by~\Cref{lem:subtyping-invert} and~\Cref{lem:comm-match}, we have 
     $\unfoldOne{\stT[\roleP]} = 
    \stIntSum{\roleQ}{i \in I_{\roleP}}{\stTChoice{\stLabi[i]}{\stSi[i]}{\ccsti[i], \crsti[i]} \stSeq \stTi[i]}$  
    and 
    $\unfoldOne{\stT[\roleQ]} = 
    \stExtSum{\roleP}{i \in I_{\roleQ}}{\stTChoice{\stLabii[i]}{\stSii[i]}{\ccstii[i], \crstii[i]}\stSeq \stTii[i]}$, with  
    $I_{\roleP} \subseteq 
    I \subseteq I_{\roleQ}$, and for all $i \in I_{\roleP}: \gtLab[i] = \stLabi[i] = \stLabii[i]$, $\ccsti[i] = \ccstO[i]$, 
    $\crsti[i] = \crstO[i]$, $\ccstii[i] = \ccstI[i]$, 
    $\crstii[i] = \crstI[i]$,  
    $\stSi[i] \stSub \stSii[i]$,   
    $\gtProj{\gtG[i]}{\roleP} \stSub \stTi[i]$, and $\gtProj{\gtG[i]}{\roleQ} \stSub \stTii[i]$.

    Now let us choose some 
     $k \in I_{\roleP}$ such that  
     $\stEnvAnnotGenericSymi = \stEnvQTQueueAnnotSmall{\roleP}{\roleQ}{\gtLab[k]}$. 
     Moreover, we set $\cVali = \cValUpd{\cVal}{C_{\roleP}}{t}$ such that $\cVali \models \ccstO[k]$, 
     and $\stEnvi = \stEnvi[\gtG] \stEnvComp \stEnvi[\Delta] \stEnvi[\stEnd]$ such that $\dom{\stEnv[\gtG]} = \dom{\stEnvi[\gtG]}$, 
     $\forall \mpChanRole{\mpS}{\roleR} \in \dom{\stEnvi[\gtG]}$ with $\roleR \neq \roleP$:  $\stEnvApp{\stEnvi[\gtG]}{\mpChanRole{\mpS}{\roleR}} =  \stEnvApp{\stEnv[\gtG]}{\mpChanRole{\mpS}{\roleR}}$, 
     $\stEnvApp{\stEnvi[\gtG]}{\mpChanRole{\mpS}{\roleP}} =  \stCPair{\cVali[\roleP]}{\stT[\roleP]}$ with 
     $\cVali[\roleP] = \cValUpd{\cVal[\roleP]}{C_{\roleP}}{t}$, $\stEnvi[\Delta] = \stEnv[\Delta]$, and $\stEnvi[\stEnd] = \stEnv[\stEnd]$. It is trivial that 
     $\stEnvAssoc{\gtWithTime{\cVali}{\gtG}}{\stEnvi}{\mpS}$. 
     
      We have 
     $\gtWithTime{\cVali}{\gtG} \gtMove[\stEnvAnnotGenericSymi] \gtWithTime{\cValii}{\gtGii}$ 
     with  $\gtGii =   \gtCommTTransit{\roleP}{\roleQ}{i \in
          I}{\gtLab[i]}{\stS[i]}{\ccstO[i], \crstO[i], \ccstI[i], \crstI[i]}{\gtG[i]}{k}$ and 
          $\cValii = \cValUpd{\cVali}{\crstO[k]}{0}$. 
  We are left to show that there exists $\stEnvii$ such that 
  $\stEnvi
  \stEnvQTMoveQueueAnnot{\roleP}{\roleQ}{\stLab[k]} \stEnvii$ and 
      $\stEnvAssoc{\gtWithTime{\cValii}{\gtGii}}{\stEnvii}{\mpS}$.

Since $\cVali[\roleP] \models \ccstO[k]$, we can apply \inferrule{\iruleTCtxOut} 
on $\mpChanRole{\mpS}{\roleP}$ (and \inferrule{\iruleTCtxCongX}, \inferrule{\iruleTCtxCongCombined} when needed) 
 to get $\stEnvi
  \stEnvQTMoveQueueAnnot{\roleP}{\roleQ}{\stLab[k]} \stEnvii$, such that $\stEnvii = \stEnvii[\gtG] \stEnvComp \stEnvii[\Delta] \stEnvComp \stEnvii[\stEnd]$ with $\stEnvii[\gtG] = \stEnvUpd{\stEnvi[\gtG]}{\mpChanRole{\mpS}{\roleP}}{\stCPair{\cValUpd{\cVali[\roleP]}{\crstO[k]}{0}}{\stTi[k]}}$,  $\stEnvii[\Delta] = \stEnvUpd{\stEnvi[\Delta]}{\mpChanRole{\mpS}{\roleP}}{\stQCons{\stQType}{
            \stQCons{
              \stQMsg{\roleQ}{\stLabi[k]}{\stSi[k]}
            }{
              \stQEmptyType
            }}}$, and $\stEnvii[\stEnd] = \stEnvi[\stEnd]$. 
            
            Finally, we need to show:
            \begin{itemize}[left=0pt, topsep=0pt]
            \item $\stEnvii[\gtG]$ is associated with $\gtWithTime{\cValii}{\gtGii}$ for $\mpS$: 
            \begin{enumerate}[label=(\roman*), left=0pt, topsep=0pt]
            \item trivial;  
            \item trivial; 
            \item by $\gtProj{\gtGii}{\roleP} = \gtProj{\gtG[k]}{\roleP} \stSub \stTi[k]$, 
            $\gtProj{\gtGii}{\roleQ} = \gtProj{\gtG}{\roleQ} \stSub \stT[\roleQ]$, and 
            $\gtProj{\gtGii}{\roleR} = \gtProj{\gtG}{\roleR} \stSub \stT[\roleR]$ with $\roleR \neq \roleP$ and $\roleR \neq \roleQ$; 
            \item by $\cValii = \cValUpd{\cVali}{\crstO[k]}{0}$,  $\cValii[\roleP] = \cValUpd{\cVali[\roleP]}{\crstO[k]}{0}$, and 
            $\cVali = \sqcup_{\roleP \in \gtRoles{\gtG}}\cVali[\roleP]$. 
            \end{enumerate}
            
            \item $\stEnvii[\Delta]$ is associated with $\gtGii$ for $\mpS$: 
           \begin{enumerate}[label=(\roman*), left=0pt, topsep=0pt]
           \item trivial; 
           \item trivial; 
           \item not applicable; 
           \item not applicable; 
           \item since $\stEnvi[\Delta]$ is associated with $\gtG$ for $\mpS$, we have that $\roleQ \notin \operatorname{receivers}(\stEnvApp{\stEnvi[\Delta]}{\mpChanRole{\mpS}{\roleP}})$, and $\forall i \in I$: $\stEnvi[\Delta]$ is associated with $\gtG[i]$ for $\mpS$. Then the thesis  holds by $\stEnvApp{\stEnvii[\Delta]}{\mpChanRole{\mpS}{\roleP}} = 
         \stQCons{\stQType}{
            \stQCons{
              \stQMsg{\roleQ}{\stLabi[k]}{\stSi[k]}
            }{
              \stQEmptyType
            }} \stEquiv  \stQCons{
              \stQMsg{\roleQ}{\stLabi[k]}{\stSi[k]}
            }{\stQCons{\stQType}{
              \stQEmptyType
            }}$ with $\stSi[k] \stSub \stS[k]$, $\stLabi[k] = \gtLab[k]$, and 
          $\stEnvUpd{\stEnvii[\Delta]}{\mpChanRole{\mpS}{\roleP}}{\stQCons{\stQType}{\stQEmptyType}} = \stEnvi[\Delta]$, which is associated with $\gtG[k]$ for $\mpS$.  
           \end{enumerate}
   \item $\stEnvii[\stEnd] = \stEnv[\stEnd]$.          
            \end{itemize}

 \item Case~\inferrule{\iruleGtMoveIn}: 
  from the premise, we have: 
      \begin{gather}
        \stEnvAssoc{\gtWithTime{\cVal}{\gtG}}{\stEnv}{\mpS}
        \label{eq:soundness:rcv_assoc_pre}
        \\
        \gtG =
           \gtCommTTransit{\roleP}{\roleQ}{i \in
          I}{\gtLab[i]}{\stS[i]}{\ccstO[i], \crstO[i], \ccstI[i], \crstI[i]}{\gtG[i]}{j}
         \\
          \stEnv = \stEnv[\gtG] \stEnvComp \stEnv[\Delta] \stEnvComp \stEnv[\stEnd]
          \label{eq:soundness:rcv_assoc_envs}
          \\
       \stEnvAnnotGenericSym = \stEnvQRecvAnnotSmall{\roleQ}{\roleP}{\gtLab[j]}
        \\
        j \in I
        \\
        \gtGi =   \gtG[j]
        \\
          \cVali = \cValUpd{\cVal}{\crstI[j]}{0}
      \end{gather}
By association~\eqref{eq:soundness:rcv_assoc_pre} and~\eqref{eq:soundness:rcv_assoc_envs}, we have that $\forall \mpChanRole{\mpS}{\roleP} \in \dom{\stEnv[\gtG]}: \stEnvApp{\stEnv[\gtG]}{\mpChanRole{\mpS}{\roleP}} =  \stCPair{\cVal[\roleP]}{\stT[\roleP]}$, and moreover, $\gtProj{\gtG}{\roleP} =  
\gtProj{\gtG[j]}{\roleP} \stSub \stT[\roleP]$,  
$\gtProj{\gtG}{\roleQ} =  
\stExtSum{\roleP}{i \in I}{
       \stTChoice{\stLab[i]}{\stS[i]}{\ccstI[i], \crstI[i]} \stSeq (\gtProj{\gtG[i]}{\roleP})
     } \stSub \stT[\roleQ]$, and $\gtProj{\gtG}{\roleR} \stSub \stT[\roleR]$ with $\roleR \neq \roleP$ and $\roleR \neq \roleQ$.  
     Then by~\Cref{lem:subtyping-invert}, we have 
    $\unfoldOne{\stT[\roleQ]} = 
    \stExtSum{\roleP}{i \in I_{\roleQ}}{\stTChoice{\stLabii[i]}{\stSii[i]}{\ccstii[i], \crstii[i]}\stSeq \stTii[i]}$, with  
    $I \subseteq I_{\roleQ}$, and for all $i \in I: \gtLab[i] =  \stLabii[i]$, $\ccstii[i] = \ccstI[i]$, 
    $\crstii[i] = \crstI[i]$,  
    $\stS[i] \stSub \stSii[i]$,   and 
  $\gtProj{\gtG[i]}{\roleQ} \stSub \stTii[i]$.

      We have 
     $\gtWithTime{\cVal}{\gtG} \gtMove[\stEnvAnnotGenericSym] \gtWithTime{\cVali}{\gtGi}$ 
     with  $\gtGi =   \gtG[j]$ and 
          $\cVali = \cValUpd{\cVal}{\crstI[j]}{0}$. 
  We are left to show that there exists $\stEnvi$ such that 
  $\stEnv
  \stEnvQTMoveRecvAnnot{\roleQ}{\roleP}{\stLab[j]} \stEnvi$ and 
      $\stEnvAssoc{\gtWithTime{\cVali}{\gtGi}}{\stEnvi}{\mpS}$. 
      Note that here we only consider the case regarding $\gtGi \neq \gtEnd$, as the case where $\gtGi = \gtEnd$ is trivial.

By the association~\eqref{eq:soundness:rcv_assoc_pre}, we have that $\cVal = \sqcup_{\roleP \in \gtRoles{\gtG}}\cVal[\roleP]$, and $\stEnvApp{\stEnv[\Delta]}{\mpChanRole{\mpS}{\roleP}} = 
         \stQCons{
              \stQMsg{\roleQ}{\stLab[j]}{\stSi[j]}
            }{\stQType}$ with $\stSi[j] \stSub \stS[j]$ and 
          $\stEnvUpd{\stEnv[\Delta]}{\mpChanRole{\mpS}{\roleP}}{\stQType}$ being associated with $\gtG[j]$ for $\mpS$. 
Hence, we can apply \inferrule{\iruleTCtxIn} 
on $\mpChanRole{\mpS}{\roleP}$ and $\mpChanRole{\mpS}{\roleQ}$~(and \inferrule{\iruleTCtxCongX}, \inferrule{\iruleTCtxCongCombined} when needed) 
 to get $\stEnv
  \stEnvQTMoveRecvAnnot{\roleQ}{\roleP}{\stLab[j]} \stEnvi$, 
  such that $\stEnvi = \stEnvi[\gtG] \stEnvComp \stEnvi[\Delta]$ with $\stEnvi[\gtG] = 
  \stEnvUpd{\stEnv[\gtG]}{\mpChanRole{\mpS}{\roleQ}}{\stCPair{\cValUpd{\cVal[\roleQ]}{\crstI[j]}{0}}{\stTii[j]}}$,  $\stEnvi[\Delta] = \stEnvUpd{\stEnv[\Delta]}{\mpChanRole{\mpS}{\roleP}}{\stQType}$, and $\stEnvi[\stEnd] = \stEnv[\stEnd]$.             
            
  Finally, we need to show:
            \begin{itemize}[left=0pt, topsep=0pt]
            \item $\stEnvi[\gtG]$ is associated with $\gtWithTime{\cVali}{\gtGi}$ for $\mpS$: 
            \begin{enumerate}[label=(\roman*), left=0pt, topsep=0pt]
            \item trivial;  
            \item trivial; 
            \item by $\gtProj{\gtGi}{\roleP} = \gtProj{\gtG[j]}{\roleP} \stSub \stT[\roleP]$, 
            $\gtProj{\gtGi}{\roleQ} = \gtProj{\gtG[j]}{\roleQ} \stSub \stTii[j]$, and 
            $\gtProj{\gtGi}{\roleR} = \gtProj{\gtG[j]}{\roleR} \stSub \stMerge{i \in I}{\gtProj{\gtG[i]}{\roleR}} = \gtProj{\gtG}{\roleR} \stSub \stT[\roleR]$ with $\roleR \neq \roleP$ and $\roleR \neq \roleQ$; 
            \item by $\cVali = \cValUpd{\cVal}{\crstI[j]}{0}$,  $\cVali[\roleQ] = \cValUpd{\cVal[\roleQ]}{\crstI[j]}{0}$, and 
            $\cVal = \sqcup_{\roleP \in \gtRoles{\gtG}}\cVal[\roleP]$. 
            \end{enumerate}
            
            \item $\stEnvi[\Delta]$ is associated with $\gtGi$ for $\mpS$: directly from 
             $\stEnvi[\Delta] = \stEnvUpd{\stEnv[\Delta]}{\mpChanRole{\mpS}{\roleP}}{\stQType}$ and  $\stEnvUpd{\stEnv[\Delta]}{\mpChanRole{\mpS}{\roleP}}{\stQType}$ being associated with $\gtG[j]$ for $\mpS$.
             
             \item $\stEnvi[\stEnd] = \stEnv[\stEnd]$.  
                      
            \end{itemize}

 \item Case~\inferrule{\iruleGtMoveRec}: by inductive hypothesis.

 \item Cases~\inferrule{\iruleGtMoveCtx}, \inferrule{\iruleGtMoveCtxi}: 
 similar to cases [GR4], [GR5] from Theorem 3.1 in~\cite{ICALP13CFSM}. 
 \qedhere 
 \end{itemize}

\end{proof}

\lemCompProj*
\begin{proof}
By induction on reductions of typing environment $\stEnv \stEnvQTMoveGenAnnotT \stEnvi$. 
We develop several particularly interesting cases in detail.  
 \begin{itemize}[left=0pt, topsep=0pt]
\item  Case $\stEnv \stEnvQTMoveAnnot{\cUnit} \stEnvi$: the thesis holds by~\Cref{lem:stenv_time_action,lem:time_assoc_comp}.

\item Case \inferrule{\iruleTCtxSend}: 
from the premise, we have: 
\begin{gather}
\stEnvAssoc{\gtWithTime{\cVal}{\gtG}}{\stEnv}{\mpS}
\label{eq:complete:send_assoc_pre}
 \\
 \stEnvAnnotGenericSym = \stEnvQTQueueAnnotSmall{\roleP}{\roleQ}{\stLab[j]} 
 \\
  \stEnv
    \stEnvQTMoveQueueAnnot{\roleP}{\roleQ}{\stLab[j]}    
    \stEnvi
    \label{eq:stenvp-send}
    \\
    \stEnv = \stEnv[\gtG] \stEnvComp \stEnv[\Delta] \stEnvComp \stEnv[\stEnd]
\label{eq:stenvp-sned-type-env}
\end{gather} 
By association~\eqref{eq:complete:send_assoc_pre} and~\eqref{eq:stenvp-sned-type-env}, 
we have that $\forall \mpChanRole{\mpS}{\roleP} \in \dom{\stEnv[\gtG]}: \stEnvApp{\stEnv[\gtG]}{\mpChanRole{\mpS}{\roleP}} =  \stCPair{\cVal[\roleP]}{\stT[\roleP]}$ and $\gtProj{\gtG}{\roleP} \stSub \stT[\roleP]$,  $\forall \mpChanRole{\mpS}{\roleP} \in \dom{\stEnv[\Delta]}: \stEnvApp{\stEnv[\Delta]}{\mpChanRole{\mpS}{\roleP}} =  \stQType[\roleP]$, and $\forall \mpChanRole{\mpS}{\roleP} \in \dom{\stEnv[\stEnd]}: \stEnvApp{\stEnv[\stEnd]}{\mpChanRole{\mpS}{\roleP}} =  \stMPair{\stCPair{\cVal[\roleP]}{\stEnd}}{\stQEmptyType}$. Moreover,  
by inverting \inferrule{\iruleTCtxSend} on~\eqref{eq:stenvp-send} (and inverting and applying \inferrule{\iruleTCtxRec} when necessary), we have $
    \unfoldOne{\stT[\roleP]} =
    \stIntSum{\roleQ}{i \in I_{\roleP}}{\stTChoice{\stLab[
    i]}{\stS[i]}{\ccst[i], \crst[i]} \stSeq \stT[i]}
  $. 

We perform case analysis on~\cref{lem:inv-proj}.
\begin{itemize}[left=0pt, topsep=0pt]
  \item  Item {\bf a} in  Case 1 of \cref{lem:inv-proj}: $\unfoldOne{\gtG} =
           \gtCommT{\roleP}{\roleQ}{i \in I'}{\gtLabi[i]}{\stSi[i]}{\ccstOi[i], \crstOi[i], \ccstIi[i], \crstIi[i]}{\gtG[i]}$, 
          where $I_{\roleP} \subseteq I'$, and for all $i \in I_{\roleP}: \stLab[i] = \gtLabi[i]$, 
          $\stS[i] \stSub \stSi[i]$, $\ccst[i] = \ccstOi[i]$, $\crst[i] = \crstOi[i]$, 
          and 
          $\gtProj{\gtG[i]}{\roleP} \stSub \stT[i]$. By~\eqref{eq:stenvp-send}, it holds that $\cVal[\roleP] \models \ccst[j]$, and hence 
          $\cVal \models \ccstO[j]$. Along with the fact that $j \in I_{\roleP} \subseteq I'$, 
          we can apply \inferrule{\iruleGtMoveOut} on $\gtWithTime{\cVal}{\gtG}$ to get $\gtWithTime{\cVali}{\gtGi}$ with 
  $\cVali = \cValUpd{\cVal}{\crstOi[j]}{0}$ and $\gtGi =  
  \gtCommTTransit{\roleP}{\roleQ}{i \in I'}{\gtLabi[i]}{\stSi[i]}{\ccstOi[i], \crstOi[i], \ccstIi[i], \crstIi[i]}{\gtG[i]}{j}$. We are left to 
  show association: $\stEnvAssoc{\gtWithTime{\cVali}{\gtGi}}{\stEnvi}{\mpS}$, where $\stEnvi = 
  \stEnvUpd{\stEnv[\gtG]}{\mpChanRole{\mpS}{\roleP}}{\stCPair{\cValUpd{\cVal[\roleP]}{\crst[j]}{0}}{\stT[j]}} \stEnvComp   \stEnvUpd{\stEnv[\Delta]}{\mpChanRole{\mpS}{\roleP}}{\stQCons{\stQType[\roleP]}{
            \stQCons{
              \stQMsg{\roleQ}{\stLab[j]}{\stS[j]}
            }{
              \stQEmptyType
            }}} \stEnvComp \stEnv[\stEnd]$. 
             \begin{itemize}[left=0pt, topsep=0pt]
            \item $ \stEnvUpd{\stEnv[\gtG]}{\mpChanRole{\mpS}{\roleP}}{\stCPair{\cValUpd{\cVal[\roleP]}{\crst[j]}{0}}{\stT[j]}}$ is associated with $\gtWithTime{\cVali}{\gtGi}$ for $\mpS$: 
            \begin{enumerate}[label=(\roman*), left=0pt, topsep=0pt]
            \item trivial;  
            \item trivial; 
            \item by $\gtProj{\gtGi}{\roleP} = \gtProj{\gtG[j]}{\roleP} \stSub \stT[j]$, 
            $\gtProj{\gtGi}{\roleQ} = \gtProj{\gtG}{\roleQ} \stSub \stT[\roleQ]$, and 
            $\gtProj{\gtGi}{\roleR} = \gtProj{\gtG}{\roleR} \stSub \stT[\roleR]$ with $\roleR \neq \roleP$ and $\roleR \neq \roleQ$; 
            \item by $\cVali = \cValUpd{\cVal}{\crstOi[j]}{0}$,  $\cVali[\roleP] = \cValUpd{\cVal[\roleP]}{\crst[j]}{0}$, $\crstOi[j] = \crst[j]$, and 
            $\cVal = \sqcup_{\roleP \in \gtRoles{\gtG}}\cVal[\roleP]$. 
            \end{enumerate}
            
            \item $\stEnvUpd{\stEnv[\Delta]}{\mpChanRole{\mpS}{\roleP}}{\stQCons{\stQType[\roleP]}{
            \stQCons{
              \stQMsg{\roleQ}{\stLab[j]}{\stS[j]}
            }{
              \stQEmptyType
            }}}$ is associated with $\gtGi$ for $\mpS$: directly from $\roleQ \notin \operatorname{receivers}(\stQType[\roleP])$, 
            $\stLab[j] = \gtLab[j]$, $\stS[j] \stSub \stSi[j]$, and $\stEnv[\Delta]$ being associated with $\gtG[j]$ (applying $\stEquiv$ when necessary). 
            \item $\stEnv[\stEnd]$ keeps unchanged. 
                     \end{itemize} 
  \item  Item {\bf b} in  Case 1 of \cref{lem:inv-proj}: $\unfoldOne{\gtG} =
            \gtCommT{\roleS}{\roleT}{j \in
            J}{\gtLabi[j]}{\stSi[j]}{\ccstOi[j], \crstOi[j], \ccstIi[j], \crstIi[j]}{\gtG[j]}$, or 
         $\unfoldOne{\gtG} = \gtCommTTransit{\roleS}{\roleT}{j \in
            J}{\gtLabi[j]}{\stSi[j]}{\ccstOi[j], \crstOi[j], \ccstIi[j], \crstIi[j]}{\gtG[j]}{k}$, 
          where for all $j \in J: \gtProj{\gtG[j]}{\roleP} \stSub \stT[\roleP]$,
          with $\roleP \neq \roleS$ and $\roleP \neq \roleT$. 
          The proof proceeds 
          by constructing $\stEnv[k]$ such that $\stEnv[k]  \stEnvQTMoveQueueAnnot{\roleP}{\roleQ}{\stLab[j]}    
          \stEnvi[k]$ and $\stEnvAssoc{\gtWithTime{\cVal[k]}{\gtG[k]}}{\stEnv[k]}{\mpS}$, for any arbitrary $k \in J$. 
         The thesis is then obtained by applying the inductive hypothesis to $\stEnv[k]$ and \inferrule{\iruleGtMoveCtx}~(resp. \inferrule{\iruleGtMoveCtxi}).  
         Further details can be found in the proof of analogous case in~Thm.\@ 11 in~\cite{BHYZ2023}.
\end{itemize}

 \item Case \inferrule{\iruleTCtxRcv}: from the premise, we have: 
\begin{gather}
\stEnvAssoc{\gtWithTime{\cVal}{\gtG}}{\stEnv}{\mpS}
\label{eq:complete:rcv_assoc_pre}
 \\
 \stEnvAnnotGenericSym = \stEnvQRecvAnnotSmall{\roleP}{\roleQ}{\stLab[j]}
 \\
  \stEnv
    \stEnvQTMoveRecvAnnot{\roleP}{\roleQ}{\stLab[j]}  
    \stEnvi
    \label{eq:stenvp-rcv}
    \\
    \stEnv = \stEnv[\gtG] \stEnvComp \stEnv[\Delta] \stEnvComp \stEnv[\stEnd]
\label{eq:stenvp-rcv-type-env}
\end{gather} 
By association~\eqref{eq:complete:rcv_assoc_pre} and~\eqref{eq:stenvp-rcv-type-env}, 
we have that $\forall \mpChanRole{\mpS}{\roleP} \in \dom{\stEnv[\gtG]}: \stEnvApp{\stEnv[\gtG]}{\mpChanRole{\mpS}{\roleP}} =  \stCPair{\cVal[\roleP]}{\stT[\roleP]}$ and $\gtProj{\gtG}{\roleP} \stSub \stT[\roleP]$,  $\forall \mpChanRole{\mpS}{\roleP} \in \dom{\stEnv[\Delta]}: \stEnvApp{\stEnv[\Delta]}{\mpChanRole{\mpS}{\roleP}} =  \stQType[\roleP]$, and $\forall \mpChanRole{\mpS}{\roleP} \in \dom{\stEnv[\stEnd]}: \stEnvApp{\stEnv[\stEnd]}{\mpChanRole{\mpS}{\roleP}} =  \stMPair{\stCPair{\cVal[\roleP]}{\stEnd}}{\stQEmptyType}$. Moreover,  
by inverting \inferrule{\iruleTCtxRcv} on~\eqref{eq:stenvp-rcv}~(and inverting and applying \inferrule{\iruleTCtxRec} when necessary), we have $
    \unfoldOne{\stT[\roleP]} =
    \stExtSum{\roleQ}{i \in I_{\roleP}}{\stTChoice{\stLab[
    i]}{\stS[i]}{\ccst[i], \crst[i]} \stSeq \stT[i]}
  $, and $\stEnvApp{\stEnv[\Delta]}{\mpChanRole{\mpS}{\roleQ}} =  \stQCons{\stQMsg{\roleP}{\stLab[j]}{\stSii}}{\stQType}$ with $\stSii \stSub \stS[j]$.  

We perform case analysis on~\cref{lem:inv-proj}.
\begin{itemize}[left=0pt, topsep=0pt]
  \item  Item {\bf b} in  Case 2 of \cref{lem:inv-proj}: $\unfoldOne{\gtG} =
           \gtCommT{\roleQ}{\roleP}{i \in I'}{\gtLabi[i]}{\stSi[i]}{\ccstOi[i], \crstOi[i], \ccstIi[i], \crstIi[i]}{\gtG[i]}$, or $\unfoldOne{\gtG} =
             \gtCommTTransit{\roleQ}{\roleP}{i \in I'}{\gtLabi[i]}{\stSi[i]}{\ccstOi[i], \crstOi[i], \ccstIi[i], \crstIi[i]}{\gtG[i]}{k}$, 
          where $I' \subseteq I_{\roleP}$, and for all $i \in I': \stLab[i] = \gtLabi[i]$, 
          $\stSi[i] \stSub \stS[i]$, $\ccst[i] = \ccstIi[i]$, $\crst[i] = \crstIi[i]$, 
          and 
          $\gtProj{\gtG[i]}{\roleP} \stSub \stT[i]$. Actually, $\unfoldOne{\gtG}$ cannot be the form of 
           $\gtCommT{\roleQ}{\roleP}{i \in I'}{\gtLabi[i]}{\stSi[i]}{\ccstOi[i], \crstOi[i], \ccstIi[i], \crstIi[i]}{\gtG[i]}$. This is because
           $\stEnv[\Delta]$ is associated with $\gtG$, but by~\eqref{eq:stenvp-rcv}, $\roleP \in \operatorname{receivers}(\stEnvApp{\stEnv[\Delta]}{\mpChanRole{\mpS}{\roleQ}})$, a desired contradiction.  
Moreover, we want to show that $j = k$, which holds due to $\stEnv[\Delta]$ being associated with $\gtG$, 
$\forall i \in I': \stLab[i] = \gtLabi[i]$,  and the requirement that labels in timed local types must be pairwise distinct.

By~\eqref{eq:stenvp-rcv}, it holds that $\cVal[\roleP] \models \ccst[j]$, and hence 
          $\cVal \models \ccstI[j]$. Along with the fact that $j = k \in I'$, 
          we can apply \inferrule{\iruleGtMoveIn} on $\gtWithTime{\cVal}{\gtG}$ to get $\gtWithTime{\cVali}{\gtGi}$ with 
  $\cVali = \cValUpd{\cVal}{\crstIi[j]}{0}$ and $\gtGi =  \gtG[j]$. We are left to 
  show association: $\stEnvAssoc{\gtWithTime{\cVali}{\gtGi}}{\stEnvi}{\mpS}$, where $\stEnvi = 
  \stEnvUpd{\stEnv[\gtG]}{\mpChanRole{\mpS}{\roleP}}{\stCPair{\cValUpd{\cVal[\roleP]}{\crst[j]}{0}}{\stT[j]}} \stEnvComp   \stEnvUpd{\stEnv[\Delta]}{\mpChanRole{\mpS}{\roleQ}}{\stQType} \stEnvComp \stEnv[\stEnd]$. 
 Note that here we only consider the case regarding $\gtGi \neq \gtEnd$, as the case where $\gtGi = \gtEnd$ is trivial.
             \begin{itemize}[left=0pt, topsep=0pt]
            \item $ \stEnvUpd{\stEnv[\gtG]}{\mpChanRole{\mpS}{\roleP}}{\stCPair{\cValUpd{\cVal[\roleP]}{\crst[j]}{0}}{\stT[j]}}$ is associated with $\gtWithTime{\cVali}{\gtGi}$ for $\mpS$: 
            \begin{enumerate}[label=(\roman*), left=0pt, topsep=0pt]
            \item trivial;  
            \item trivial; 
            \item by $\gtProj{\gtG[j]}{\roleP} \stSub \stT[j]$, $\gtProj{\gtG[j]}{\roleQ}  = \gtProj{\gtG}{\roleQ} \stSub \stT[\roleQ]$, and  for any $\roleR$ with $\roleR \neq \roleP$ and $\roleR \neq \roleQ$: $\gtProj{\gtGi}{\roleR} \stSub 
             \stMerge{i \in I'}{\gtProj{\gtG[i]}{\roleR}} = \gtProj{\gtG}{\roleR} \stSub \stT[\roleR]$; 
            \item by $\cVali = \cValUpd{\cVal}{\crstIi[j]}{0}$,  $\cVali[\roleP] = \cValUpd{\cVal[\roleP]}{\crst[j]}{0}$, $\crstIi[j] = \crst[j]$, and 
            $\cVal = \sqcup_{\roleP \in \gtRoles{\gtG}}\cVal[\roleP]$. 
            \end{enumerate}
            
            \item $\stEnvUpd{\stEnv[\Delta]}{\mpChanRole{\mpS}{\roleQ}}{\stQType}$  is associated with $\gtGi$ for 
            $\mpS$: directly from $\stEnv[\Delta]$ being associated with $\gtG$. 
        \item $\stEnv[\stEnd]$ keeps unchanged. 
                     \end{itemize} 
  \item  Item {\bf b} in  Case 2 of \cref{lem:inv-proj}: similar to that  in Case \inferrule{\iruleTCtxSend}. 
   Further details can be found in the proof of analogous case in~Thm.\@ 11 in~\cite{BHYZ2023}.
\end{itemize}

\item Case \inferrule{\iruleTCtxRec}: by inductive hypothesis. 
\qedhere 

 \end{itemize}
\end{proof}

\begin{restatable}{corollary}{colCom}%
  \label{cor:completeness}
  Assume that for any session $\mpS \in \stEnv$, there exists a timed global 
  type $\gtWithTime{\cVal[\mpS]}{\gtG[\mpS]}$ such that $\stEnvAssoc{\gtWithTime{\cVal[\mpS]}{\gtG[\mpS]}}{\stEnv[\mpS]}{\mpS}$. 
  If $\stEnv \stEnvMove \stEnvi$, then for any $\mpS \in \stEnvi$, 
  there exists a timed global 
  type $\gtWithTime{\cVali[\mpS]}{\gtGi[\mpS]}$ such that $\stEnvAssoc{\gtWithTime{\cVali[\mpS]}{\gtGi[\mpS]}}{\stEnvi[\mpS]}{\mpS}$.  
\end{restatable}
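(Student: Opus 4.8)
\textbf{Proof plan for \Cref{cor:completeness}.}
The statement is a straightforward lifting of the single-session completeness theorem (\Cref{lem:comp_proj}) to a typing environment that may involve several sessions. The plan is to unfold the definition of $\stEnvMove$ and reduce to the case handled by \Cref{lem:comp_proj}. First I would use \Cref{def:aat-mpst-typing-env-reduction}: $\stEnv \stEnvMove \stEnvi$ means $\stEnv \stEnvMoveWithSession[\mpS[0]] \stEnvi$ for some session $\mpS[0]$, which in turn means $\stEnv \stEnvQTMoveGenAnnotT \stEnvi$ for some transition label $\stEnvAnnotGenericSym$ of the form $\stEnvQQueueAnnotSmall{\roleP}{\roleQ}{\stLab}$, $\stEnvQRecvAnnotSmall{\roleP}{\roleQ}{\stLab}$, or $\timeLab$, with all roles belonging to session $\mpS[0]$. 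The key structural fact I need is that such a reduction only touches entries of the single session $\mpS[0]$: this is precisely \Cref{lem:stenv-red:trivial-5} (for the queue and receive labels) together with the observation that a time label increments clock valuations uniformly but does not change which sessions occur, and \Cref{lem:stenv-red:trivial-1-new} guarantees $\dom{\stEnv} = \dom{\stEnvi}$.

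Next I would split the argument by the session. For every $\mpS \neq \mpS[0]$ with $\mpS \in \stEnvi$, the restriction $\stEnvi[\mpS]$ equals $\stEnv[\mpS]$ (no entry of $\mpS$ is modified by a reduction labelled with roles of $\mpS[0]$, using \Cref{lem:stenv-red:trivial-5} and, for the time case, the fact that $\stEnv \stEnvQTMoveTimeAnnot \stEnv + \timeLab$ by \Cref{lem:stenv_time_action} so $\stEnvi[\mpS] = \stEnv[\mpS] + \timeLab$, which remains associated with $\gtWithTime{\cVal[\mpS] + \timeLab}{\gtG[\mpS]}$ by \Cref{lem:time_assoc}). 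So for $\mpS \neq \mpS[0]$ the required timed global type is simply $\gtWithTime{\cVal[\mpS]}{\gtG[\mpS]}$ itself (or its time-shift), and association is preserved. For the session $\mpS[0]$ itself, I would restrict attention to $\stEnv[\mpS[0]]$: by hypothesis $\stEnvAssoc{\gtWithTime{\cVal[\mpS[0]]}{\gtG[\mpS[0]]}}{\stEnv[\mpS[0]]}{\mpS[0]}$, and since the reduction $\stEnv \stEnvQTMoveGenAnnotT \stEnvi$ only involves entries in $\dom{\stEnv[\mpS[0]]}$, it induces a reduction $\stEnv[\mpS[0]] \stEnvQTMoveGenAnnotT \stEnvi[\mpS[0]]$ (one checks this by inspecting the rules of \Cref{fig:aat-mpst-semantics-typing-env}: the derivation of $\stEnv \stEnvQTMoveGenAnnotT \stEnvi$ factors through \inferrule{\iruleTCtxCongX}/\inferrule{\iruleTCtxCongCombined} for the unaffected entries). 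Then \Cref{lem:comp_proj} applied to $\stEnv[\mpS[0]]$ yields $\gtWithTime{\cVali[\mpS[0]]}{\gtGi[\mpS[0]]}$ with $\gtWithTime{\cVal[\mpS[0]]}{\gtG[\mpS[0]]} \gtMove[\stEnvAnnotGenericSym] \gtWithTime{\cVali[\mpS[0]]}{\gtGi[\mpS[0]]}$ and $\stEnvAssoc{\gtWithTime{\cVali[\mpS[0]]}{\gtGi[\mpS[0]]}}{\stEnvi[\mpS[0]]}{\mpS[0]}$, which is exactly the required witness for $\mpS[0]$.

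The main obstacle I anticipate is the bookkeeping needed to justify the factorisation claim: showing that a multi-session environment reduction restricts cleanly to the active session, i.e.\ that $\stEnv \stEnvQTMoveGenAnnotT \stEnvi$ with all action roles in $\mpS[0]$ implies $\stEnv[\mpS[0]] \stEnvQTMoveGenAnnotT \stEnvi[\mpS[0]]$ and that $\stEnvi$ agrees with $\stEnv$ off $\mpS[0]$. This is conceptually routine but requires a small induction on the derivation in \Cref{fig:aat-mpst-semantics-typing-env}, treating \inferrule{\iruleTCtxTime} with care since it is the one rule that propagates a (time) transition to sub-environments of other sessions; there I invoke \Cref{lem:time_assoc} uniformly. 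Everything else is a direct appeal to \Cref{lem:comp_proj}, \Cref{lem:stenv-red:trivial-1-new}, and \Cref{lem:stenv-red:trivial-5}. Given the similarity to analogous multi-session corollaries in prior work (e.g.\ the asynchronous MPST setting of~\cite{BHYZ2023}), I expect the formal proof to be short.
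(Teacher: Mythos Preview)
Your proposal is correct and matches the paper's proof almost exactly: the paper also proceeds by case analysis on the transition label, invoking \Cref{lem:stenv_time_action} and \Cref{lem:time_assoc_comp} (the wrapper around your \Cref{lem:time_assoc}) for the time case, and \Cref{lem:comp_proj} together with \Cref{lem:stenv-red:trivial-5} for the send/receive cases. The factorisation bookkeeping you flag as the main obstacle is left implicit in the paper, so your plan is, if anything, more detailed than the published argument.
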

\begin{proof}
By induction on $\stEnv \stEnvMove \stEnvi$. 
 \begin{itemize}[left=0pt, topsep=0pt]
\item  Case $\stEnv \stEnvQTMoveTimeAnnot  \stEnvi$: for any $\mpS \in \stEnv$, by~\Cref{lem:stenv_time_action}, 
we have $\stEnv[\mpS] \stEnvQTMoveTimeAnnot  \stEnv[\mpS] + t$. Then the thesis follows by applying~\Cref{lem:time_assoc_comp}. 

\item Cases $\stEnv \stEnvQTMoveQueueAnnot{\roleP}{\roleQ}{\stLab} \stEnvi$ and $\stEnv \stEnvQTMoveRecvAnnot{\roleQ}{\roleP}{\stLab} \stEnvi$: directly from the completeness of association~(\Cref{lem:comp_proj}) and~\Cref{lem:stenv-red:trivial-5}. 
\qedhere
 \end{itemize}
 \end{proof}

\section{Subtyping Properties}
\label{sec:proofs:subtyping-properties}

\lemAssocSubP*
\begin{proof}
From the premise, we have that $\stEnv = \stEnv[\gtG] \stEnvComp \stEnv[\Delta] \stEnvComp 
\stEnv[\stEnd]$ such that $\stEnv[\gtG]$ is associated with $\gtWithTime{\cVal}{\gtG}$ for $\mpS$ and $\stEnv[\Delta]$ is 
associated with $\gtG$ for $\mpS$. Furthermore, since $\stEnv \stSub \stEnvi$,  $\stEnvi$ should of the form: $\stEnvi[\gtG] \stEnvComp \stEnvi[\Delta] \stEnvComp 
\stEnvi[\stEnd]$, where $\dom{\stEnv[\gtG]} = \dom{\stEnvi[\gtG]}$, $\dom{\stEnv[\Delta]} = \dom{\stEnvi[\Delta]}$, 
$\dom{\stEnv[\stEnd]} = \dom{\stEnvi[\stEnd]}$, 
$\forall \mpChanRole{\mpS}{\roleP} \in \dom{\stEnv[\gtG]}:   
        \stEnvApp{\stEnv[\gtG]}{\mpChanRole{\mpS}{\roleP}} = \stCPair{\cVal[\roleP]}{\stT[\roleP]} \stSub 
        \stCPair{\cVal[\roleP]}{\stTi[\roleP]} = \stEnvApp{\stEnvi[\gtG]}{\mpChanRole{\mpS}{\roleP}}$, 
        $\forall \mpChanRole{\mpS}{\roleP} \in \dom{\stEnv[\Delta]}:   
        \stEnvApp{\stEnv[\Delta]}{\mpChanRole{\mpS}{\roleP}} = \stQType[\roleP] \stSub \stQTypei[\roleP] = 
        \stEnvApp{\stEnvi[\Delta]}{\mpChanRole{\mpS}{\roleP}}$, and  
        $\forall \mpChanRole{\mpS}{\roleP} \in \dom{\stEnv[\stEnd]}: 
 \stEnvApp{\stEnv[\stEnd]}{\mpChanRole{\mpS}{\roleP}} = \stMPair{\stCPair{\cVal[\roleP]}{\stEnd}}{\stQEmptyType} \stSub 
 \stMPair{\stCPair{\cVal[\roleP]}{\stEnd}}{\stQEmptyType} = \stEnvApp{\stEnvi[\stEnd]}{\mpChanRole{\mpS}{\roleP}}$. 
 
 We are left to show: 
 \begin{itemize}[left=0pt, topsep=0pt]
 \item $\stEnvi[\gtG]$ is associated with $\gtWithTime{\cVal}{\gtG}$ for $\mpS$: 
 \begin{enumerate}[label=(\roman*), left=0pt, topsep=0pt]
 \item no change here;
 \item trivial; 
 \item since $\forall \roleP \in \gtRoles{\gtG}:  \gtProj{\gtG}{\roleP} \stSub \stT[\roleP] \stSub \stTi[\roleP]$, the thesis holds by the transitivity of subtyping~(\Cref{transitive-subtyping}); 
 \item no change here. 
 \end{enumerate}
 
 \item $\stEnvi[\Delta]$ is associated with $\gtG$ for $\mpS$: 
 \begin{enumerate}[label=(\roman*), left=0pt, topsep=0pt]
 \item no change here;
 \item trivial;  
 \end{enumerate}
 The rest is by induction on the structure of $\gtG$: 
\begin{enumerate}[label=(\roman*), left=0pt, topsep=0pt,  ref={(\roman*)},
          resume, nosep]
 \item $\gtG = \gtEnd$ or $\gtG = \gtRec{\gtRecVar}{\gtGi}$: $\forall \mpChanRole{\mpS}{\roleP} \in \dom{\stEnvi[\Delta]}:   
        \stEnvApp{\stEnv[\Delta]}{\mpChanRole{\mpS}{\roleP}} = \stQEmptyType \stSub 
        \stEnvApp{\stEnvi[\Delta]}{\mpChanRole{\mpS}{\roleP}} =  \stQEmptyType$,  which is the thesis;  
 \item $\gtG =  \gtCommT{\roleP}{\roleQ}{i \in
        I}{\gtLab[i]}{\stS[i]}{\ccstO[i], \crstO[i], \ccstI[i], \crstI[i]}{\gtG[i]}$: 
 \begin{enumerate*}[label={(a\arabic*)}]
        \item since $\roleQ \notin \operatorname{receivers}(\stEnvApp{\stEnv[\Delta]}{\mpChanRole{\mpS}{\roleP}})$ and $\stEnvApp{\stEnv[\Delta]}{\mpChanRole{\mpS}{\roleP}} \stSub \stEnvApp{\stEnvi[\Delta]}{\mpChanRole{\mpS}{\roleP}}$, we have $\roleQ \notin \operatorname{receivers}(\stEnvApp{\stEnvi[\Delta]}{\mpChanRole{\mpS}{\roleP}})$, as desired; 
        \item by inductive hypothesis;  
         \end{enumerate*}
 \item $\gtG = \gtCommTTransit{\roleP}{\roleQ}{i \in
          I}{\gtLab[i]}{\stS[i]}{\ccstO[i], \crstO[i], \ccstI[i], \crstI[i]}{\gtG[i]}{j}$:  
           \begin{enumerate*}[label={(b\arabic*)}]
         \item  since $\stEnvApp{\stEnv[\Delta]}{\mpChanRole{\mpS}{\roleP}} = 
          \stQCons{\stQMsg{\roleQ}{\stLab[j]}{\stSi[j]}}{\stQType}$ with $\stSi[j] \stSub \stS[j]$ and 
          $\stEnvApp{\stEnv[\Delta]}{\mpChanRole{\mpS}{\roleP}} \stSub \stEnvApp{\stEnvi[\Delta]}{\mpChanRole{\mpS}{\roleP}}$, we have $\stEnvApp{\stEnvi[\Delta]}{\mpChanRole{\mpS}{\roleP}} = 
          \stQCons{\stQMsg{\roleQ}{\stLab[j]}{\stSii[j]}}{\stQTypeii}$ with $\stSii[j] \stSub \stSi[j] \stSub \stS[j]$ and 
          $\stQType \stSub \stQTypeii$, as desired;  
  \item  by inductive hypothesis. 
  \end{enumerate*}
 \end{enumerate}
 \item  $\forall \mpChanRole{\mpS}{\roleP} \in \dom{\stEnvi[\stEnd]}: 
\stEnvApp{\stEnvi[\stEnd]}{\mpChanRole{\mpS}{\roleP}} = \stMPair{\stCPair{\cVal[\roleP]}{\stEnd}}{\stQEmptyType}$, which is the thesis. 
\qedhere 
 \end{itemize}
\end{proof}

\begin{lemma}
\label{lem:stenv-time-sub}
If $\stEnv \stSub \stEnvi$, then $\stEnv + t \stSub \stEnvi + t$. 
\end{lemma}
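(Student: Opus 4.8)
\textbf{Proof plan for \Cref{lem:stenv-time-sub}.}
The statement to prove is: if $\stEnv \stSub \stEnvi$, then $\stEnv + t \stSub \stEnvi + t$. Recall that $\stEnv \stSub \stEnvi$ means $\dom{\stEnv} = \dom{\stEnvi}$ and, for every $\mpC \in \dom{\stEnv}$, $\stEnvApp{\stEnv}{\mpC} \stSub \stEnvApp{\stEnvi}{\mpC}$ in the sense of the time-session/queue subtyping of \Cref{fig:sub_cong_com_types}. Recall also that $\stEnv + t$ is the environment with the same domain as $\stEnv$ in which every entry of timed-session type $\stCPair{\cVal}{\stT}$ becomes $\stCPair{\cVal + t}{\stT}$, every combined entry $\stMPair{\stCPair{\cVal}{\stT}}{\stQType}$ becomes $\stMPair{\stCPair{\cVal + t}{\stT}}{\stQType}$, and every pure queue entry $\stQType$ is left unchanged (and non-channel entries $\stEnvMap{x}{\stCPair{\cVal}{\stT}}$ are incremented likewise). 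The plan is to unfold both definitions and check the subtyping obligation entry-by-entry.

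First I would observe that adding $t$ does not change domains, so $\dom{\stEnv + t} = \dom{\stEnv} = \dom{\stEnvi} = \dom{\stEnvi + t}$; the first clause of environment subtyping is immediate. Then I would fix an arbitrary $\mpC \in \dom{\stEnv}$ and do a case analysis on the shape of $\stEnvApp{\stEnv}{\mpC}$, using the fact that $\stEnvApp{\stEnv}{\mpC} \stSub \stEnvApp{\stEnvi}{\mpC}$ forces $\stEnvApp{\stEnvi}{\mpC}$ to have a matching shape (by the structure of the rules \inferrule{\iruleSubESessionType}, \inferrule{\iruleSubEQueueType}, \inferrule{\iruleSubEQueueTypeEmpty}, \inferrule{\iruleSubQueue}): either both entries are timed-session types $\stCPair{\cVal}{\stT} \stSub \stCPair{\cVali}{\stTi}$, or both are queue types $\stQType \stSub \stQTypei$, or both are combined $\stMPair{\stCPair{\cVal}{\stT}}{\stQType} \stSub \stMPair{\stCPair{\cVali}{\stTi}}{\stQTypei}$. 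In the timed-session case, \inferrule{\iruleSubESessionType} gives $\cVal = \cVali$ and $\stT \stSub \stTi$; since incrementing gives $\stCPair{\cVal + t}{\stT}$ and $\stCPair{\cVali + t}{\stTi}$ with $\cVal + t = \cVali + t$ and $\stT \stSub \stTi$ unchanged, \inferrule{\iruleSubESessionType} reapplies. In the pure-queue case, neither entry changes under $+\,t$, so the hypothesis $\stQType \stSub \stQTypei$ is literally preserved. In the combined case, \inferrule{\iruleSubQueue} decomposes into a queue part and a timed-session part, each handled exactly as above, and \inferrule{\iruleSubQueue} reassembles the result.

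I do not expect any real obstacle here: the lemma is routine bookkeeping, the only mild subtlety being to note that the subtyping rules equate the clock valuations ($\cVal = \cVali$ in \inferrule{\iruleSubESessionType}), so that after adding the same $t$ to both sides the equality persists, and that queue entries are time-invariant so the queue-subtyping derivation is carried over verbatim. One should also make sure the argument covers the variable entries $\stEnvMap{x}{\stCPair{\cVal}{\stT}}$ in the $\stEnv$ grammar, which is again the timed-session case. Hence the proof is a short structural case analysis on entries, invoking reflexivity/monotonicity of the base subtyping judgement (already available via \Cref{lem:reflexive-subtyping}) where nothing changes and re-applying the relevant congruence rule of \Cref{fig:sub_cong_com_types} where a clock valuation is shifted.
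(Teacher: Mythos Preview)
Your proposal is correct and essentially identical to the paper's proof: both establish domain equality first, then perform an entry-by-entry case analysis on whether $\stEnvApp{\stEnv}{\mpC}$ is a timed-session type, a pure queue type, or a combined type, using that \inferrule{\iruleSubESessionType} forces equal valuations so the shift by $t$ is preserved. The only cosmetic difference is that you explicitly mention the variable-entry case and cite reflexivity, neither of which the paper bothers with.
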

\begin{proof}
It is straightforward that $\dom{\stEnv} = \dom{\stEnvi} = \dom{\stEnvi + t} = \dom{\stEnv = t}$. We are left 
to show that $\forall \mpC \in \dom{\stEnv}:  \stEnvApp{\stEnv \,\tcFmt{+}\, \tcFmt{t}}{\mpC} \stSub \stEnvApp{\stEnvi \,\tcFmt{+}\, \tcFmt{t}}{\mpC}$. 

\begin{itemize}[left=0pt, topsep=0pt]
\item $\stEnvApp{\stEnv}{\mpC} = \stCPair{\cVal}{\stT}$: $\stEnvApp{\stEnvi}{\mpC} = \stCPair{\cVal}{\stTi}$ with 
$\stT \stSub \stTi$. Then the thesis follows directly from $\stEnvApp{\stEnv \,\tcFmt{+}\, \tcFmt{t}}{\mpC} = \stCPair{\cVal + t}{\stT}$ and  $\stEnvApp{\stEnvi \,\tcFmt{+}\, \tcFmt{t}}{\mpC} = \stCPair{\cVal + t}{\stTi}$. 

\item $\stEnvApp{\stEnv}{\mpC} = \stQType$: $\stEnvApp{\stEnvi}{\mpC} = \stQTypei$ with 
$\stQType \stSub \stQTypei$. Then the thesis follows directly from $\stEnvApp{\stEnv \,\tcFmt{+}\, \tcFmt{t}}{\mpC} = \stQType$ and  $\stEnvApp{\stEnvi \,\tcFmt{+}\, \tcFmt{t}}{\mpC} = \stQTypei$. 

\item $\stEnvApp{\stEnv}{\mpC} = \stMPair{\stCPair{\cVal}{\stT}}{\stQType}$: $\stEnvApp{\stEnvi}{\mpC} = \stMPair{\stCPair{\cVal}{\stTi}}{\stQTypei}$ with 
$\stT \stSub \stTi$ and $\stQType \stSub \stQTypei$. Then the thesis follows directly from $\stEnvApp{\stEnv \,\tcFmt{+}\, \tcFmt{t}}{\mpC} = \stMPair{\stCPair{\cVal + t}{\stT}}{\stQType}$ and  $\stEnvApp{\stEnvi \,\tcFmt{+}\, \tcFmt{t}}{\mpC} = \stMPair{\stCPair{\cVal + t}{\stTi}}{\stQTypei}$. 
\qedhere 
\end{itemize}
\end{proof}

\begin{lemma}
\label{lem:stenv-assoc-send-reduction-sub}%
    Assume that $\stEnvAssoc{\gtWithTime{\cVal}{\gtG}}{\stEnv}{\mpS}$ and
    $\stEnv \stSub \stEnvi \stEnvQTMoveQueueAnnot{\roleP}{\roleQ}{\stLab[k]} \stEnvii$. 
    Then, there is $\stEnviii$ such that 
    $\stEnv \stEnvQTMoveQueueAnnot{\roleP}{\roleQ}{\stLab[k]} \stEnviii \stSub \stEnvii$.
    \end{lemma}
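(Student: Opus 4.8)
\textbf{Proof plan for \Cref{lem:stenv-assoc-send-reduction-sub}.}
The plan is to follow the same strategy as in the proof of \Cref{lem:comp_proj}, case \inferrule{\iruleTCtxSend}, but threading the subtyping hypothesis through. First I would analyse the reduction $\stEnvi \stEnvQTMoveQueueAnnot{\roleP}{\roleQ}{\stLab[k]} \stEnvii$ by inverting \inferrule{\iruleTCtxSend} (applying \inferrule{\iruleTCtxRec}, \inferrule{\iruleTCtxCongX}, \inferrule{\iruleTCtxCongCombined}, and \inferrule{\iruleTCtxStruct} as needed): this tells me that $\stEnvi$ contains an entry $\stEnvMap{\mpChanRole{\mpS}{\roleP}}{\stMPair{\stCPair{\cVali}{\stTi}}{\stQTypei}}$ where $\unfoldOne{\stTi} = \stIntSum{\roleQ}{i \in I'}{\stTChoice{\stLabi[i]}{\stSi[i]}{\ccsti[i], \crsti[i]} \stSeq \stTi[i]}$ with $k \in I'$, $\cVali \models \ccsti[k]$, and the result $\stEnvii$ updates that entry to $\stMPair{\stCPair{\cValUpd{\cVali}{\crsti[k]}{0}}{\stTi[k]}}{\stQCons{\stQTypei}{\stQCons{\stQMsg{\roleQ}{\stLabi[k]}{\stSi[k]}}{\stQEmptyType}}}$.

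Next I would exploit $\stEnv \stSub \stEnvi$: the entry of $\stEnv$ at $\mpChanRole{\mpS}{\roleP}$ must be $\stMPair{\stCPair{\cVal[\roleP]}{\stT[\roleP]}}{\stQType[\roleP]}$ (possibly a pure session type if $\stEnvi$'s is too, but the combined-type case is representative) with $\stCPair{\cVal[\roleP]}{\stT[\roleP]} \stSub \stCPair{\cVali}{\stTi}$, so $\cVal[\roleP] = \cVali$ and $\stT[\roleP] \stSub \stTi$, and $\stQType[\roleP] \stSub \stQTypei$. By the inversion of subtyping lemma (\Cref{lem:subtyping-invert}, item 2, the statement "$\stIntSum{}{}{\cdots} \stSub \stT$" applied after unfolding via \Cref{lem:unfold-subtyping} and transitivity), $\unfoldOne{\stT[\roleP]} = \stIntSum{\roleQ}{j \in J}{\stTChoice{\stLabii[j]}{\stSii[j]}{\ccstii[j], \crstii[j]} \stSeq \stTii[j]}$ with $J \subseteq I'$, and for every $j \in J$: $\stLabii[j] = \stLabi[j]$, $\ccstii[j] = \ccsti[j]$, $\crstii[j] = \crsti[j]$, $\stSii[j] \stSub \stSi[j]$, and $\stT[j] \stSub \stTi[j]$. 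The key point is $k \in J$: this must come from the association hypothesis $\stEnvAssoc{\gtWithTime{\cVal}{\gtG}}{\stEnv}{\mpS}$ together with the fact that $\stTi$ already had branch $k$ available --- concretely, $\gtProj{\gtG}{\roleP} \stSub \stT[\roleP]$ and, since $\stEnvi$ reduces on label $\stLab[k]$, a matching transmission must appear in $\unfoldOne{\gtG}$ (via \Cref{lem:inv-proj}), forcing $k$ into the index set of $\unfoldOne{\stT[\roleP]}$.

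Having established $k \in J$ with matching clock constraint $\ccstii[k] = \ccsti[k]$ and reset $\crstii[k] = \crsti[k]$, and $\cVal[\roleP] = \cVali \models \ccsti[k] = \ccstii[k]$, I can apply \inferrule{\iruleTCtxSend} (re-wrapping with \inferrule{\iruleTCtxRec}/congruence rules) to $\stEnv$ on label $\stLabii[k] = \stLab[k]$, obtaining $\stEnv \stEnvQTMoveQueueAnnot{\roleP}{\roleQ}{\stLab[k]} \stEnviii$, where $\stEnviii$ updates the $\mpChanRole{\mpS}{\roleP}$ entry to $\stMPair{\stCPair{\cValUpd{\cVal[\roleP]}{\crstii[k]}{0}}{\stTii[k]}}{\stQCons{\stQType[\roleP]}{\stQCons{\stQMsg{\roleQ}{\stLabii[k]}{\stSii[k]}}{\stQEmptyType}}}$ and leaves all other entries unchanged. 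The remaining task is to check $\stEnviii \stSub \stEnvii$ entrywise: on the updated entry, $\stTii[k] \stSub \stTi[k]$ (from above) and $\stQCons{\stQType[\roleP]}{\stQCons{\stQMsg{\roleQ}{\stLabii[k]}{\stSii[k]}}{\stQEmptyType}} \stSub \stQCons{\stQTypei}{\stQCons{\stQMsg{\roleQ}{\stLabi[k]}{\stSi[k]}}{\stQEmptyType}}$ follows from $\stQType[\roleP] \stSub \stQTypei$, $\stSii[k] \stSub \stSi[k]$, $\stLabii[k] = \stLabi[k]$, applying \inferrule{\iruleSubEQueueType}, \inferrule{\iruleSubEQueueTypeEmpty}, and \inferrule{\iruleSubESessionType}/\inferrule{\iruleSubQueue}; on all other entries, $\stEnv$ and $\stEnvi$ already agreed up to subtyping and neither reduction touched them (by the analogue of \Cref{lem:stenv-red:trivial-5}), so subtyping is preserved. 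Finally the updated clock valuations match since $\cVal[\roleP] = \cVali$ and $\crstii[k] = \crsti[k]$ give $\cValUpd{\cVal[\roleP]}{\crstii[k]}{0} = \cValUpd{\cVali}{\crsti[k]}{0}$.

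The main obstacle I anticipate is the step establishing $k \in J$ --- i.e.\ that the larger environment $\stEnv$ still offers the branch being fired. This is exactly the subtlety flagged in \Cref{rem:weak_soundness}: subtyping on internal choice can \emph{shrink} the index set as we pass to a supertype, so a priori $\stEnv$'s session type at $\mpChanRole{\mpS}{\roleP}$ (being $\stSub$ that of $\stEnvi$) could have \emph{fewer} branches. The resolution must use that $\stEnv$ is associated with a genuine timed global type $\gtWithTime{\cVal}{\gtG}$: via \Cref{lem:inv-proj} the projection $\gtProj{\gtG}{\roleP}$ --- which is $\stSub \stT[\roleP]$ --- already contains branch $k$ (because $\gtG$'s transmission at $\roleP{\to}\roleQ$ does, as witnessed by the fact that some supertype $\stTi$ of $\gtProj{\gtG}{\roleP}$ fired it, and \inferrule{\iruleStSubOut} only removes branches going \emph{up}, so $k$ must already be present below), and by \inferrule{\iruleStSubOut} every branch of $\gtProj{\gtG}{\roleP}$ survives in $\stT[\roleP]$. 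Care is needed here to chain the subtyping inequalities in the right direction; I would make this precise by first noting $\gtProj{\gtG}{\roleP} \stSub \stT[\roleP] \stSub \stTi$, so $\gtProj{\gtG}{\roleP} \stSub \stTi$, then invoking \Cref{lem:inv-proj} on this to pin down $\unfoldOne{\gtG}$ and hence the index set of $\gtProj{\gtG}{\roleP}$, which contains $k$, and finally \Cref{lem:subtyping-invert}(2) on $\gtProj{\gtG}{\roleP} \stSub \stT[\roleP]$ to conclude $k \in J$.
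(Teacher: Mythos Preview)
Your overall structure matches the paper's proof, but you have the direction of internal-choice subtyping inverted, and this propagates into an unnecessary (and ultimately broken) workaround.

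Concretely: from $\stT[\roleP] \stSub \stTi$ with $\unfoldOne{\stTi} = \stIntSum{\roleQ}{i \in I'}{\cdots}$, the correct conclusion via \Cref{lem:subtyping-invert}(1) is $I' \subseteq J$, not $J \subseteq I'$. Recall rule \inferrule{\iruleStSubOut}: the \emph{subtype} of an internal choice has \emph{more} branches. Hence $k \in I' \subseteq J$ is immediate, and the ``obstacle'' you anticipate simply does not arise. The paper's proof uses exactly this step: from $\stEnv \stSub \stEnvi$ it gets $\stT[\roleP] \stSub \stTi[\roleP]$, then inversion of subtyping yields $I \subseteq J$ directly, so $k \in J$ with no appeal to the association hypothesis for this point. (Association \emph{is} used, but only to pin down the shape of the entries of $\stEnv$ as combined timed-session/queue types, so that the subtyping rules of \Cref{fig:sub_cong_com_types} apply.)

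Your detour through $\gtProj{\gtG}{\roleP}$ does not rescue the argument either. You correctly observe that $\gtProj{\gtG}{\roleP} \stSub \stTi$ forces $k$ into the index set of $\gtProj{\gtG}{\roleP}$; but then from $\gtProj{\gtG}{\roleP} \stSub \stT[\roleP]$, \Cref{lem:subtyping-invert}(2) only gives that the index set $J$ of $\unfoldOne{\stT[\roleP]}$ is \emph{contained in} that of $\gtProj{\gtG}{\roleP}$ --- exactly the wrong inclusion to conclude $k \in J$. Your claim that ``every branch of $\gtProj{\gtG}{\roleP}$ survives in $\stT[\roleP]$'' is false: passing to the supertype $\stT[\roleP]$ can drop branches. (This is precisely the phenomenon in \Cref{rem:weak_soundness}, which you cite --- but that remark concerns the gap between a global type and its \emph{associated} environment, not between $\stEnv$ and $\stEnvi$ here; the directions are different.) Once you fix the inversion direction, the rest of your plan --- firing \inferrule{\iruleTCtxSend} on $\stEnv$ and checking $\stEnviii \stSub \stEnvii$ entrywise --- goes through as in the paper; just be careful that the payload subtyping also runs the other way ($\stSi[k] \stSub \stSii[k]$), which is exactly what \inferrule{\iruleSubEQueueType} requires.
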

\begin{proof}
By association $\stEnvAssoc{\gtWithTime{\cVal}{\gtG}}{\stEnv}{\mpS}$, 
we have that $\forall \mpChanRole{\mpS}{\roleP} \in \dom{\stEnv}: 
\stEnvApp{\stEnv}{\mpChanRole{\mpS}{\roleP}} = \stMPair{\stCPair{\cVal[\roleP]}{\stT[\roleP]}}{\stQType[\roleP]}$. 
Then with the subtyping $\stEnv \stSub \stEnvi$, it holds that $\dom{\stEnv} = \dom{\stEnvi}$, and 
$\forall \mpChanRole{\mpS}{\roleP} \in \dom{\stEnvi}: 
\stEnvApp{\stEnvi}{\mpChanRole{\mpS}{\roleP}} = \stMPair{\stCPair{\cVal[\roleP]}{\stTi[\roleP]}}{\stQTypei[\roleP]}$ with $\stT[\roleP] \stSub \stTi[\roleP]$ and $\stQType[\roleP] \stSub \stQTypei[\roleP]$. 

We now apply and invert \inferrule{\iruleTCtxSend} on $\stEnvi \stEnvQTMoveQueueAnnot{\roleP}{\roleQ}{\stLab[k]} \stEnvii$ (apply and invert \inferrule{\iruleTCtxRec} when necessary) to get 
$\unfoldOne{\stTi[\roleP]} = \stIntSum{\roleQ}{i \in I}{\stTChoice{\stLab[i]}{\stS[i]}{\ccst[i], \crst[i]} \stSeq \stT[i]}$, 
$\stEnvii = \stEnvUpd{\stEnvi}{\mpChanRole{\mpS}{\roleP}}{ \stMPair{
        \stCPair{\mpFmt{\cValUpd{\cVal[\roleP]}{\crst[k]}{0}}}
          {\stT[k]}%
        }{%
          \stQCons{\stQTypei[\roleP]}{%
            \stQCons{%
              \stQMsg{\roleQ}{\stLab[k]}{\stS[k]}%
            }{%
              \stQEmptyType%
            }%
          }}}$, $k \in I$, and $\cVal[\roleP] \models \ccst[k]$. 
          
Since $\stEnv \stSub \stEnvi$, by inversion of subtyping~(\Cref{lem:subtyping-invert}), 
$\unfoldOne{\stT[\roleP]} = \stIntSum{\roleQ}{j \in J}{\stTChoice{\stLabi[j]}{\stSi[j]}{\ccsti[j], \crsti[j]} \stSeq \stTi[j]}$ such that $I \subseteq J$, and $\forall i \in I: \stLab[i] = \stLabi[i], \stS[i] \stSub \stSi[i], \ccst[i] = \ccsti[i], 
\crst[i] = \crsti[i]$, and $\stTi[i] \stSub \stT[i]$. Hence, with $k \in I \subseteq J$ and $\cVal[\roleP] \models \ccst[k] = \ccsti[k]$, we can apply \inferrule{\iruleTCtxSend} on $\stEnv$ 
to get $\stEnviii = \stEnvUpd{\stEnv}{\mpChanRole{\mpS}{\roleP}}{ \stMPair{
        \stCPair{\mpFmt{\cValUpd{\cVal[\roleP]}{\crsti[k]}{0}}}
          {\stTi[k]}%
        }{%
          \stQCons{\stQType[\roleP]}{%
            \stQCons{%
              \stQMsg{\roleQ}{\stLabi[k]}{\stSi[k]}%
            }{%
              \stQEmptyType%
            }%
          }}}$. 

We are left to show $\stEnviii \stSub \stEnvii$: 
\begin{itemize}[left=0pt, topsep=0pt]
\item $\dom{\stEnviii} = \dom{\stEnvii}$: directly from $\dom{\stEnviii} = \dom{\stEnv} = \dom{\stEnvi} = \dom{\stEnvii}$; 
\item $\stEnvApp{\stEnviii}{\mpChanRole{\mpS}{\roleP}} =  \stMPair{
        \stCPair{\mpFmt{\cValUpd{\cVal[\roleP]}{\crsti[k]}{0}}}
          {\stTi[k]}%
        }{%
          \stQCons{\stQType[\roleP]}{%
            \stQCons{%
              \stQMsg{\roleQ}{\stLabi[k]}{\stSi[k]}%
            }{%
              \stQEmptyType%
            }%
          }} \stSub \stEnvApp{\stEnvii}{\mpChanRole{\mpS}{\roleP}} = 
           \stMPair{
        \stCPair{\mpFmt{\cValUpd{\cVal[\roleP]}{\crst[k]}{0}}}
          {\stT[k]}%
        }{%
          \stQCons{\stQTypei[\roleP]}{%
            \stQCons{%
              \stQMsg{\roleQ}{\stLab[k]}{\stS[k]}%
            }{%
              \stQEmptyType%
            }%
          }}$:  straightforward from $\stQType[\roleP] \stSub \stQTypei[\roleP]$, $\stLabi[k] = \stLab[k]$, 
          $\stS[k] \stSub \stSi[k]$, $\crst[k] = \crsti[k]$, and $\stTi[k] \stSub \stT[k]$. 
\item $\stEnvApp{\stEnviii}{\mpChanRole{\mpS}{\roleQ}} \stSub \stEnvApp{\stEnvii}{\mpChanRole{\mpS}{\roleQ}}$ with $\roleQ \neq \roleP$: directly from $\stEnvApp{\stEnviii}{\mpChanRole{\mpS}{\roleQ}} = \stEnvApp{\stEnv}{\mpChanRole{\mpS}{\roleQ}} \stSub \stEnvApp{\stEnvi}{\mpChanRole{\mpS}{\roleQ}} = \stEnvApp{\stEnvii}{\mpChanRole{\mpS}{\roleQ}}$. \qedhere 
\end{itemize}
 \end{proof}

\begin{lemma}
\label{lem:stenv-assoc-rcv-reduction-sub}%
    Assume that $\stEnvAssoc{\gtWithTime{\cVal}{\gtG}}{\stEnv}{\mpS}$ and
    $\stEnv \stSub \stEnvi \stEnvQTMoveRecvAnnot{\roleQ}{\roleP}{\stLab[k]} \stEnvii$. 
    Then, there is $\stEnviii$ such that 
    $\stEnv \stEnvQTMoveRecvAnnot{\roleQ}{\roleP}{\stLab[k]} \stEnviii \stSub \stEnvii$.
    \end{lemma}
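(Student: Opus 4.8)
\textbf{Proof proposal for \Cref{lem:stenv-assoc-rcv-reduction-sub}.}

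The plan is to mirror the structure of the proof of \Cref{lem:stenv-assoc-send-reduction-sub}, which handled the sending case, now adapting the argument to a receive transition generated by \inferrule{\iruleTCtxRcv}. First I would use the association hypothesis $\stEnvAssoc{\gtWithTime{\cVal}{\gtG}}{\stEnv}{\mpS}$ to record that every endpoint entry of $\stEnv$ is a combined timed-session/queue type, say $\stEnvApp{\stEnv}{\mpChanRole{\mpS}{\roleR}} = \stMPair{\stCPair{\cVal[\roleR]}{\stT[\roleR]}}{\stQType[\roleR]}$ for all $\roleR \in \gtRoles{\gtG}$. From $\stEnv \stSub \stEnvi$ I then get $\dom{\stEnv} = \dom{\stEnvi}$ and, pointwise, $\stEnvApp{\stEnvi}{\mpChanRole{\mpS}{\roleR}} = \stMPair{\stCPair{\cVal[\roleR]}{\stTi[\roleR]}}{\stQTypei[\roleR]}$ with $\stT[\roleR] \stSub \stTi[\roleR]$ and $\stQType[\roleR] \stSub \stQTypei[\roleR]$, using the subtyping rules for combined types (\inferrule{\iruleSubQueue}, \inferrule{\iruleSubESessionType}, \inferrule{\iruleSubEQueueType}) from \Cref{fig:sub_cong_com_types}.

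Next I would apply and invert \inferrule{\iruleTCtxRcv} (together with \inferrule{\iruleTCtxRec}, \inferrule{\iruleTCtxCongCombined}, \inferrule{\iruleTCtxStruct} as needed) on $\stEnvi \stEnvQTMoveRecvAnnot{\roleQ}{\roleP}{\stLab[k]} \stEnvii$. This yields $\unfoldOne{\stTi[\roleQ]} = \stExtSum{\roleP}{i \in I}{\stTChoice{\stLab[i]}{\stSii[i]}{\ccst[i], \crst[i]} \stSeq \stT[i]}$, the condition $\stEnvApp{\stEnvi}{\mpChanRole{\mpS}{\roleP}} = \stQCons{\stQMsg{\roleQ}{\stLab[k]}{\stSiii[k]}}{\stQType}$ with $\stSiii[k] \stSub \stSii[k]$ (after reconciling with the combined-type form by unfolding the queue component), the clock constraint $\cVal[\roleQ] \models \ccst[k]$ with $k \in I$, and $\stEnvii$ obtained by consuming the head message on $\mpChanRole{\mpS}{\roleP}$'s queue and resetting $\mpChanRole{\mpS}{\roleQ}$'s clock valuation to $\cValUpd{\cVal[\roleQ]}{\crst[k]}{0}$ with continuation $\stT[k]$. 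Because $\stQType[\roleP] \stSub \stQTypei[\roleP]$ and the latter has $\stLab[k]$ as the head receiver-label entry, the subtyping rule for message queues (\inferrule{\iruleSubEQueueType}) forces $\stQType[\roleP] = \stQCons{\stQMsg{\roleQ}{\stLab[k]}{\stSii[k]'}}{\stQType''}$ for some $\stSii[k]' \stSub \stSiii[k]$ and $\stQType'' \stSub \stQType$; and by inversion of subtyping (\Cref{lem:subtyping-invert}, item for external choice, \inferrule{\iruleStSubIn}) applied to $\stT[\roleQ] \stSub \stTi[\roleQ]$, $\unfoldOne{\stT[\roleQ]} = \stExtSum{\roleP}{j \in J}{\stTChoice{\stLabi[j]}{\stSi[j]}{\ccsti[j], \crsti[j]} \stSeq \stTi[j]}$ with $I \subseteq J$, and for $k \in I$: $\stLabi[k] = \stLab[k]$, $\stSii[k] \stSub \stSi[k]$, $\ccsti[k] = \ccst[k]$, $\crsti[k] = \crst[k]$, $\stTi[k] \stSub \stT[k]$. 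Thus $\cVal[\roleQ] \models \ccsti[k]$ and the head of $\stQType[\roleP]$ matches, so \inferrule{\iruleTCtxRcv} applies to $\stEnv$, producing $\stEnviii$ by consuming that head message and updating $\mpChanRole{\mpS}{\roleQ}$'s entry to $\stCPair{\cValUpd{\cVal[\roleQ]}{\crsti[k]}{0}}{\stTi[k]}$.

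Finally I would check $\stEnviii \stSub \stEnvii$ entrywise: the domains coincide (both equal $\dom{\stEnv}$); on $\mpChanRole{\mpS}{\roleP}$ the tails of the queues satisfy $\stQType'' \stSub \stQType$; on $\mpChanRole{\mpS}{\roleQ}$ we compare $\stCPair{\cValUpd{\cVal[\roleQ]}{\crsti[k]}{0}}{\stTi[k]}$ against $\stCPair{\cValUpd{\cVal[\roleQ]}{\crst[k]}{0}}{\stT[k]}$, which holds since $\crsti[k] = \crst[k]$ gives identical valuations and $\stTi[k] \stSub \stT[k]$ with \inferrule{\iruleStSubRecL}/\inferrule{\iruleStSubRecR} bridging the unfoldings; on all other $\mpChanRole{\mpS}{\roleR}$ the entries are untouched and inherit $\stEnv \stSub \stEnvi$. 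The main obstacle I anticipate is the bookkeeping between the combined timed-session/queue form $\stMPair{\stCPair{\cVal}{\stT}}{\stQType}$ and the separated forms appearing in \inferrule{\iruleTCtxRcv} (which pairs a pure queue-type entry for the sender with a pure session-type entry for the receiver via $\stEnvQComp$): I need to carefully align the congruence rules (\inferrule{\iruleTCtxStruct} with $\stQEquiv$) and the definition of environment composition so that the queue head of $\mpChanRole{\mpS}{\roleP}$ and the external choice of $\mpChanRole{\mpS}{\roleQ}$ are exposed simultaneously, and then transport the subtyping through both components consistently.
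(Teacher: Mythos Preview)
Your overall shape is right and mirrors the send case, but there is a genuine gap hidden behind a wrong subtyping direction. When you invert \inferrule{\iruleStSubIn} on $\stT[\roleQ] \stSub \stTi[\roleQ]$ with $\unfoldOne{\stTi[\roleQ]} = \stExtSum{\roleP}{i \in I}{\ldots}$, the subtype $\unfoldOne{\stT[\roleQ]} = \stExtSum{\roleP}{j \in J}{\ldots}$ has \emph{fewer} branches: you get $J \subseteq I$, not $I \subseteq J$ as you wrote. (Dually, your queue-subtyping inversion also goes the wrong way: from $\stQType[\roleP] \stSub \stQTypei[\roleP]$ and \inferrule{\iruleSubEQueueType} you obtain that the head payload of the \emph{supertype} queue is a subtype of that of the \emph{subtype} queue, not the reverse.)

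This matters for the receive case in a way it does not for send. With the correct direction $J \subseteq I$, the fact that $k \in I$ (extracted from the transition of $\stEnvi$) no longer gives you $k \in J$, so you cannot yet fire \inferrule{\iruleTCtxRcv} on $\stEnv$. The paper closes this gap via \Cref{lem:comm-assoc-match}, which uses the association hypothesis $\stEnvAssoc{\gtWithTime{\cVal}{\gtG}}{\stEnv}{\mpS}$: because the queue head $\stQCons{\stQMsg{\roleQ}{\stLab[k]}{\ldots}}{\ldots}$ in $\stEnv[\Delta]$ must come from some en-route transmission in $\gtG$, and the projection $\gtProj{\gtG}{\roleQ}$ is a subtype of $\stT[\roleQ]$, one can conclude that $\stLab[k]$ actually occurs among the (smaller) branch set $J$ and that the payload is compatible with the corresponding $\stSi[k]$. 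Without invoking association here (or an equivalent argument), the step ``the head of $\stQType[\roleP]$ matches, so \inferrule{\iruleTCtxRcv} applies to $\stEnv$'' does not go through. Once you fix the direction and insert the appeal to \Cref{lem:comm-assoc-match}, the rest of your entrywise check of $\stEnviii \stSub \stEnvii$ is fine.
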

\begin{proof}
By association $\stEnvAssoc{\gtWithTime{\cVal}{\gtG}}{\stEnv}{\mpS}$, 
we have that $\forall \mpChanRole{\mpS}{\roleP} \in \dom{\stEnv}: 
\stEnvApp{\stEnv}{\mpChanRole{\mpS}{\roleP}} = \stMPair{\stCPair{\cVal[\roleP]}{\stT[\roleP]}}{\stQType[\roleP]}$. 
Then with the subtyping $\stEnv \stSub \stEnvi$, it holds that $\dom{\stEnv} = \dom{\stEnvi}$, and 
$\forall \mpChanRole{\mpS}{\roleP} \in \dom{\stEnvi}: 
\stEnvApp{\stEnvi}{\mpChanRole{\mpS}{\roleP}} = \stMPair{\stCPair{\cVal[\roleP]}{\stTi[\roleP]}}{\stQTypei[\roleP]}$ with $\stT[\roleP] \stSub \stTi[\roleP]$ and $\stQType[\roleP] \stSub \stQTypei[\roleP]$. 

We now apply and invert \inferrule{\iruleTCtxRcv} on $\stEnvi \stEnvQTMoveRecvAnnot{\roleQ}{\roleP}{\stLab[k]} \stEnvii$ (apply and invert \inferrule{\iruleTCtxRec} when necessary) to get 
$\stQTypei[\roleP] = \stQCons{\stQMsg{\roleQ}{\stLab[k]}{\stS[k]}}{\stQTypei}$, 
 $\unfoldOne{\stTi[\roleQ]} = \stExtSum{\roleP}{i \in I}{\stTChoice{\stLab[i]}{\stSi[i]}{\ccst[i], \crst[i]} \stSeq \stT[i]}$, 
$\stEnvii = \stEnvUpd{\stEnvUpd{\stEnvi}{\mpChanRole{\mpS}{\roleP}}{\stMPair{\stCPair{\cVal[\roleP]}{\stTi[\roleP]}}{\stQTypei}}}{\mpChanRole{\mpS}{\roleQ}}{\stMPair{
        \stCPair{\mpFmt{\cValUpd{\cVal[\roleQ]}{\crst[k]}{0}}}
          {\stT[k]}%
        }{\stQTypei[\roleQ]}}$, $k \in I$, $\cVal[\roleQ] \models \crst[k]$, and $\stS[k] \stSub \stSi[k]$. 
          
Since $\stEnv \stSub \stEnvi$, by inversion of subtyping~(\Cref{lem:subtyping-invert}), 
$\unfoldOne{\stT[\roleQ]} = \stExtSum{\roleP}{j \in J}{\stTChoice{\stLabi[j]}{\stSii[j]}{\ccsti[j], \crsti[j]} \stSeq \stTi[j]}$ such that $J \subseteq I$, and $\forall i \in j: \stLab[i] = \stLabi[i], \stSii[i] \stSub \stSi[i], \ccst[i] = \ccsti[i], 
\crst[i] = \crsti[i]$, and $\stTi[i] \stSub \stT[i]$. By $\stQType[\roleP] \stSub \stQTypei[\roleP]$,  
we know that $\stQType[\roleP] =  \stQCons{\stQMsg{\roleQ}{\stLab}{\stSiii}}{\stQType}$ with $\stLab = \stLab[k]$, $\stS[k] \stSub \stSiii$, and $\stQType \stSub \stQTypei$. Hence, by applying~\Cref{lem:comm-assoc-match}, we have $k \in J$, $\stLab = \stLab[k] = \stLabi[k]$, and $\stSiii \stSub \stSii[k]$. 
Then along with $\cVal[\roleQ] \models \ccst[k] = \ccsti[k]$, we can apply \inferrule{\iruleTCtxRcv} on $\stEnv$ 
to get $\stEnviii = \stEnvUpd{\stEnvUpd{\stEnvi}{\mpChanRole{\mpS}{\roleP}}{\stMPair{\stCPair{\cVal[\roleP]}{\stT[\roleP]}}{\stQType}}}{\mpChanRole{\mpS}{\roleQ}}{\stMPair{
        \stCPair{\mpFmt{\cValUpd{\cVal[\roleQ]}{\crsti[k]}{0}}}
          {\stTi[k]}%
        }{\stQType[\roleQ]}}$. 

We are left to show $\stEnviii \stSub \stEnvii$: 
\begin{itemize}[left=0pt, topsep=0pt]
\item $\dom{\stEnviii} = \dom{\stEnvii}$: directly from $\dom{\stEnviii} = \dom{\stEnv} = \dom{\stEnvi} = \dom{\stEnvii}$; 
\item $\stEnvApp{\stEnviii}{\mpChanRole{\mpS}{\roleP}} =  \stMPair{\stCPair{\cVal[\roleP]}{\stT[\roleP]}}{\stQType}
 \stSub \stEnvApp{\stEnvii}{\mpChanRole{\mpS}{\roleP}} = \stMPair{\stCPair{\cVal[\roleP]}{\stTi[\roleP]}}{\stQTypei}
          $:  straightforward from $\stQType[\roleP] \stSub \stQTypei[\roleP]$ and $\stT[\roleP] \stSub \stTi[\roleP]$. 

\item $\stEnvApp{\stEnviii}{\mpChanRole{\mpS}{\roleQ}} =  \stMPair{
        \stCPair{\mpFmt{\cValUpd{\cVal[\roleQ]}{\crsti[k]}{0}}}
          {\stTi[k]}%
        }{\stQType[\roleQ]} \stSub \stEnvApp{\stEnvii}{\mpChanRole{\mpS}{\roleQ}} = 
          \stMPair{
        \stCPair{\mpFmt{\cValUpd{\cVal[\roleQ]}{\crst[k]}{0}}}
          {\stT[k]}%
        }{\stQTypei[\roleQ]}$:  straightforward from $\stQType[\roleQ] \stSub \stQTypei[\roleQ]$, 
      $\crst[k] = \crsti[k]$, and $\stTi[k] \stSub \stT[k]$.

\item $\stEnvApp{\stEnviii}{\mpChanRole{\mpS}{\roleR}} \stSub \stEnvApp{\stEnvii}{\mpChanRole{\mpS}{\roleR}}$ with $\roleR \neq \roleP$ and $\roleR \neq \roleQ$: directly from $\stEnvApp{\stEnviii}{\mpChanRole{\mpS}{\roleQ}} = \stEnvApp{\stEnv}{\mpChanRole{\mpS}{\roleQ}} \stSub \stEnvApp{\stEnvi}{\mpChanRole{\mpS}{\roleQ}} = \stEnvApp{\stEnvii}{\mpChanRole{\mpS}{\roleQ}}$. 
\qedhere 
\end{itemize}
 \end{proof}

\lemStenvReductionSubAssoc*
  \begin{proof}
 A direct corollary from~\Cref{lem:stenv-time-sub,lem:stenv-assoc-send-reduction-sub,lem:stenv-assoc-rcv-reduction-sub}.  
 \qedhere 
  \end{proof}

\begin{restatable}{lemma}{lemStenvReductionSubAssocGeneral}
    \label{lem:stenv-assoc-reduction-sub-gen}%
    Assume that $\forall \mpS \in \stEnv: \exists \gtWithTime{\cVal[\mpS]}{\gtG[\mpS]}: 
    \stEnvAssoc{\gtWithTime{\cVal[\mpS]}{\gtG[\mpS]}}{\stEnv[\mpS]}{\mpS}$ and
    $\stEnv \stSub \stEnvi \stEnvMoveGenAnnot \stEnvii$. 
    Then, there is $\stEnviii$ such that 
    $\stEnv \stEnvMoveGenAnnot \stEnviii \stSub \stEnvii$.%
  \end{restatable}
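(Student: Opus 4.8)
The plan is to reduce \Cref{lem:stenv-assoc-reduction-sub-gen} to the single-session case already established in \Cref{lem:stenv-assoc-reduction-sub}. First I would unfold the definition of $\stEnvMoveGenAnnot$ from \Cref{def:aat-mpst-typing-env-reduction}: the reduction $\stEnvi \stEnvMoveGenAnnot \stEnvii$ is witnessed by $\stEnvi \stEnvMoveWithSession[\mpS] \stEnvii$ for some session $\mpS$, which in turn comes from a labelled transition $\stEnvi \stEnvQTMoveGenAnnotT \stEnvii$ with $\stEnvAnnotGenericSym$ of one of the three forms $\stEnvQQueueAnnotSmall{\roleP}{\roleQ}{\stLab}$, $\stEnvQRecvAnnotSmall{\roleP}{\roleQ}{\stLab}$, or $\timeLab$. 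So it suffices to produce $\stEnviii$ with $\stEnv \stEnvQTMoveGenAnnotT \stEnviii \stSub \stEnvii$.

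The key observation is a locality property of the typing environment transition: a transition labelled with an action concerning session $\mpS$ only touches the entries of $\stEnvi$ in the domain of $\stEnvi[\mpS]$ (by \Cref{lem:stenv-red:trivial-5} for the queue/receive labels, and by the structure of \inferrule{\iruleTCtxTime}, \inferrule{\iruleTCtxTimeSession}, \inferrule{\iruleTCtxTimeQ}, \inferrule{\iruleTCtxTimeCombined} for $\timeLab$ — note time elapses uniformly so every entry is affected, but only by the uniform increment, which is exactly what $\stEnv + t$ does, handled by \Cref{lem:stenv-time-sub}). Thus I would split $\stEnvi = \stEnvi[\mpS] \stEnvQComp \stEnvi'$ where $\stEnvi'$ collects all entries not in $\mpS$, and correspondingly $\stEnv = \stEnv[\mpS] \stEnvQComp \stEnv'$ with $\stEnv[\mpS] \stSub \stEnvi[\mpS]$ and $\stEnv' \stSub \stEnvi'$ (this decomposition is sound because $\stEnv \stSub \stEnvi$ preserves domains entry-wise). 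For the two communication labels, the transition restricts to $\stEnvi[\mpS] \stEnvQTMoveGenAnnotT \stEnvii[\mpS]$ with $\stEnvii = \stEnvii[\mpS] \stEnvQComp \stEnvi'$; since by hypothesis $\stEnvAssoc{\gtWithTime{\cVal[\mpS]}{\gtG[\mpS]}}{\stEnv[\mpS]}{\mpS}$, I can apply \Cref{lem:stenv-assoc-reduction-sub} directly to $\stEnv[\mpS] \stSub \stEnvi[\mpS] \stEnvQTMoveGenAnnotT \stEnvii[\mpS]$ to obtain $\stEnviii[\mpS]$ with $\stEnv[\mpS] \stEnvQTMoveGenAnnotT \stEnviii[\mpS] \stSub \stEnvii[\mpS]$, then set $\stEnviii = \stEnviii[\mpS] \stEnvQComp \stEnv'$ and reassemble the transition via \inferrule{\iruleTCtxCongX}/\inferrule{\iruleTCtxCongCombined} (the congruence rules that allow a transition inside a larger environment). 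The subtyping $\stEnviii \stSub \stEnvii$ follows by combining $\stEnviii[\mpS] \stSub \stEnvii[\mpS]$ with $\stEnv' \stSub \stEnvi'$ componentwise. For the $\timeLab$ label, \Cref{lem:stenv_time_action} gives $\stEnvi \stEnvQTMoveTimeAnnot \stEnvi + t$, so $\stEnvii = \stEnvi + t$; I take $\stEnviii = \stEnv + t$, use \Cref{lem:stenv_time_action} again for $\stEnv \stEnvQTMoveTimeAnnot \stEnv + t$, and invoke \Cref{lem:stenv-time-sub} to conclude $\stEnv + t \stSub \stEnvi + t$.

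The main obstacle I anticipate is the bookkeeping of the domain decomposition: I need to argue carefully that $\stEnv \stSub \stEnvi$ together with the association hypothesis on $\stEnv[\mpS]$ lets me write $\stEnv = \stEnv[\mpS] \stEnvQComp \stEnv'$ compatibly with the analogous split of $\stEnvi$, i.e.\ that $\dom{\stEnv[\mpS]} = \dom{\stEnvi[\mpS]}$ and the composition $\stEnvQComp$ is well-defined on both sides. This is routine given that subtyping on typing environments preserves domains (built into \Cref{def:aat-mpst-typing-env-syntax}) and that association fixes $\dom{\stEnv[\mpS]}$ to exactly the channels $\mpChanRole{\mpS}{\roleP}$ for $\roleP \in \gtRoles{\gtG[\mpS]}$ (\Cref{def:assoc}), but it requires stating a small auxiliary fact that the entries in a typing environment transition labelled by an $\mpS$-action remain in $\dom{\stEnv[\mpS]}$ — which is precisely \Cref{lem:stenv-red:trivial-5} for the communication cases and immediate for the time case. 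Apart from this, the proof is a straightforward corollary of \Cref{lem:stenv-assoc-reduction-sub,lem:stenv-time-sub,lem:stenv_time_action}, much as \Cref{lem:stenv-assoc-reduction-sub} itself was a corollary of its three component lemmas.
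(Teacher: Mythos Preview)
Your proposal is correct and takes essentially the same approach as the paper: the paper's proof is the one-liner ``Apply \Cref{lem:stenv-assoc-reduction-sub} (and \Cref{lem:stenv-red:trivial-5} when necessary),'' and what you have written is precisely the unpacking of that sentence --- restrict to the session $\mpS$ touched by the label, invoke the single-session lemma there, and use \Cref{lem:stenv-red:trivial-5} to keep the remaining entries fixed (with the time case handled directly via \Cref{lem:stenv-time-sub} and \Cref{lem:stenv_time_action}). Your explicit treatment of the domain-decomposition bookkeeping and the separate time branch is more careful than the paper's terse statement but not a different argument.
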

\begin{proof}
  Apply~\Cref{lem:stenv-assoc-reduction-sub} (and~\Cref{lem:stenv-red:trivial-5} when necessary). 
  \qedhere 
  \end{proof}

\section{Type System Properties}
\label{sec:app-aat-mpst-type-prop-proof}

\begin{lemma}
\label{lem:end_sub}
If  $\stEnvi \stSub \stEnv$ and $\stEnvEndP{\stEnv}$, then $\stEnvEndP{\stEnvi}$. 
\end{lemma}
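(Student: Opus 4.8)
The statement to prove is Lemma~\ref{lem:end_sub}: if $\stEnvi \stSub \stEnv$ and $\stEnvEndP{\stEnv}$, then $\stEnvEndP{\stEnvi}$.

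The plan is to unfold the definitions of both the $\stEnvEndPred$ predicate and environment subtyping, and then reduce the claim to a statement about individual entries. Recall from the typing rule \inferrule{\iruleMPEnd} that $\stEnvEndP{\stEnv}$ means $\dom{\stEnv} = \setenum{\mpC[1], \ldots, \mpC[n]}$ and for each $i$, $\stEnvEntails{\stEnvMap{\mpC[i]}{\stCPair{\cVal[i]}{\stT[i]}}}{\mpC[i]}{\stCPair{\cVali[i]}{\stEnd}}$, \ie each entry is a subtype of an $\stEnd$-type (via \inferrule{\iruleMPSub}, this is $\stEnvApp{\stEnv}{\mpC[i]} \stSub \stCPair{\cVali[i]}{\stEnd}$, and analogously for the timed-session/queue variants using the rules in~\Cref{fig:sub_cong_com_types}). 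Environment subtyping $\stEnvi \stSub \stEnv$ gives $\dom{\stEnvi} = \dom{\stEnv}$ and $\stEnvApp{\stEnvi}{\mpC} \stSub \stEnvApp{\stEnv}{\mpC}$ for every $\mpC \in \dom{\stEnvi}$.

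First I would fix an arbitrary $\mpC \in \dom{\stEnvi} = \dom{\stEnv}$. From $\stEnvEndP{\stEnv}$ we obtain that $\stEnvApp{\stEnv}{\mpC} \stSub \stSpecf[\stEnd]$ where $\stSpecf[\stEnd]$ is an $\stEnd$-shaped timed-session/queue type (one of $\stCPair{\cVal}{\stEnd}$, $\stQEmptyType$, or $\stMPair{\stCPair{\cVal}{\stEnd}}{\stQEmptyType}$). Combining with $\stEnvApp{\stEnvi}{\mpC} \stSub \stEnvApp{\stEnv}{\mpC}$ and transitivity of subtyping (\Cref{transitive-subtyping}, together with \inferrule{\iruleSubESessionType}, \inferrule{\iruleSubEQueueType}, \inferrule{\iruleSubEQueueTypeEmpty}, \inferrule{\iruleSubQueue}), we get $\stEnvApp{\stEnvi}{\mpC} \stSub \stSpecf[\stEnd]$. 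Then I would invoke \Cref{lem:end-subtyping} (for the pure timed-session-type component) and inspect the subtyping rules in \Cref{fig:sub_cong_com_types} to conclude that $\stEnvApp{\stEnvi}{\mpC}$ is itself $\stEnd$-shaped: a subtype of $\stEnd$ must be $\stEnd$ (up to unfolding, by \Cref{lem:unfold-subtyping} and \inferrule{\iruleStSubEnd}), and a subtype of $\stQEmptyType$ must be $\stQEmptyType$ (by \inferrule{\iruleSubEQueueTypeEmpty} being the only rule with $\stQEmptyType$ on the right). Hence each entry of $\stEnvi$ satisfies the per-entry condition of $\stEnvEndPred$, which is exactly $\stEnvEndP{\stEnvi}$.

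The main obstacle, though a minor one, is handling the case split over the three shapes of timed-session/queue types and making sure the subtyping rules genuinely force an $\stEnd$-shaped supertype to have an $\stEnd$-shaped subtype in each case — in particular being careful about recursion unfolding on the left via \inferrule{\iruleStSubRecL}, so that $\stEnvApp{\stEnvi}{\mpC}$ might be $\stRec{\stRecVar}{\stEnd}$-like rather than literally $\stEnd$; this is resolved by noting the $\stEnvEndPred$ predicate is itself stated up to subtyping to $\stEnd$, so closure is immediate. The whole argument is short and essentially a transitivity-plus-inversion observation.
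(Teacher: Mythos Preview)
Your proof is correct and takes essentially the same approach as the paper, which simply cites $\stEnvi \stSub \stEnv$ together with \inferrule{\iruleMPSub} (and, somewhat loosely, \inferrule{\iruleMPNil}) --- the underlying argument being transitivity of subtyping on each entry. Note that your case analysis over queue and combined types is unnecessary: by rule \inferrule{\iruleMPEnd}, $\stEnvEndP{\stEnv}$ is only derivable when every entry of $\stEnv$ has shape $\stCPair{\cVal}{\stT}$, so the transitivity argument on the timed-session component alone already suffices.
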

\begin{proof}
Straightforward from $\stEnvi \stSub \stEnv$, \inferrule{\iruleMPSub}, and \inferrule{\iruleMPNil}. 
\end{proof}

\begin{lemma}[Narrowing]
\label{lem:narrowing}
If 
$\stJudge{\mpEnv}{\stEnv}{\mpP}$ and 
 $\stEnvi \stSub \stEnv$, then 
  $\stJudge{\mpEnv}{\stEnvi}{\mpP}$.
\end{lemma}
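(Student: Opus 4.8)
The Narrowing Lemma (\Cref{lem:narrowing}) states that if $\stJudge{\mpEnv}{\stEnv}{\mpP}$ and $\stEnvi \stSub \stEnv$, then $\stJudge{\mpEnv}{\stEnvi}{\mpP}$. The plan is to proceed by rule induction on the derivation of $\stJudge{\mpEnv}{\stEnv}{\mpP}$, examining each typing rule from \Cref{fig:aat-mpst-rules-main} in turn. For most rules the argument is a routine propagation of subtyping through the premises, using transitivity of subtyping (\Cref{transitive-subtyping}) and the fact that environment composition interacts well with $\stSub$ (if $\stEnvi \stSub \stEnv[1] \stEnvQComp \stEnv[2]$, then $\stEnvi$ splits as $\stEnvi[1] \stEnvQComp \stEnvi[2]$ with $\stEnvi[j] \stSub \stEnv[j]$, since $\stSub$ preserves domains and queue/session-type distinctions by the rules in \Cref{fig:sub_cong_com_types}).

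First I would handle the leaf rules. For \inferrule{\iruleMPNil}: from $\stEnvEndP{\stEnv}$ and $\stEnvi \stSub \stEnv$, \Cref{lem:end_sub} gives $\stEnvEndP{\stEnvi}$, hence $\stJudge{\mpEnv}{\stEnvi}{\mpNil}$. Rule \inferrule{\iruleMPFailed} is immediate, since a timeout-failure process is typed by \emph{any} environment. Rule \inferrule{\iruleMPKill} needs the split of $\stEnvi$ into an ended part (use \Cref{lem:end_sub}) plus entries over $\mpS$ (which $\stSub$ preserves up to subtyping of the $\stSpecf[i]$). For \inferrule{\iruleMPCall} I would split $\stEnvi$ to match the $\stEnv[0] \stEnvComp \stEnv[1] \stEnvComp \cdots \stEnvComp \stEnv[n]$ decomposition, applying \Cref{lem:end_sub} to the ended part and transitivity of $\stSub$ to each $\stEnvEntails{\stEnv[i]}{\mpC[i]}{\cdots}$; similarly \inferrule{\iruleMPX} is trivial as it does not constrain $\stEnv$. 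Then the inductive cases: \inferrule{\iruleMPPar} splits $\stEnvi$ and applies the IH to each sub-process; \inferrule{\iruleMPDef}, \inferrule{\iruleMPTry}, \inferrule{\iruleMPCancel}, \inferrule{\iruleMPResPropG} push the narrowed environment through to the sub-derivations — for \inferrule{\iruleMPResPropG} one also needs that composition with the associated $\stEnvi$ (the session environment, which is unchanged) respects the split, and that $\procSubject{\mpP}$ conditions in \inferrule{\iruleMPTry} are syntactic and unaffected.

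The time-related and communication rules are where the main work lies. For \inferrule{\iruleMPTime} (and hence \inferrule{\iruleMPClock}): given $\stEnvi \stSub \stEnv$ we need $\stEnvi \,\tcFmt{+}\, \cUnit \stSub \stEnv \,\tcFmt{+}\, \cUnit$ to feed the IH — this is exactly \Cref{lem:stenv-time-sub}. For \inferrule{\iruleMPBranch}: $\stEnvi \stSub \stEnv \stEnvComp \stEnv[1]$ splits, and the entry for $\mpC$ in $\stEnvi$ is a subtype of an external-choice type; by inversion of subtyping (\Cref{lem:subtyping-invert}, item for $\stExtC$) this subtype is again an external choice on the same role with possibly \emph{more} branches $I' \supseteq I$, matching labels, payload sorts $\stS[i] \stSub \stSi[i]$, and identical clock constraints/resets on the common indices. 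The clock-constraint premises ($\cVal + \cUnit \models \ccst[i]$) and the delegated-session premises transfer since the constraints are syntactically equal; for the continuation premises I apply the IH with the environment $\stEnvQ \,\tcFmt{+}\, \cUnit \stEnvComp \stEnvMap{y_i}{\cdots} \stEnvComp \stEnvMap{\mpC}{\cdots}$, using \Cref{lem:stenv-time-sub} and subtyping of the continuation type $\stTi[i] \stSub \stT[i]$ and of payload sorts. The dual rule \inferrule{\iruleMPSel} is symmetric, using the $\stIntC$ clause of \Cref{lem:subtyping-invert} (fewer branches, but the chosen label survives). Finally \inferrule{\iruleMPQueueEmpty} and \inferrule{\iruleMPQueue} are handled by \Cref{lem:end_sub} and by transitivity of $\stSub$ on the message-type sequence (rule \inferrule{\iruleSubEQueueType}). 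I expect the branching/selection cases to be the principal obstacle, chiefly the bookkeeping needed to thread the inversion-of-subtyping data (extra or missing branches, equal clock constraints, subtyped payload sorts) through the universally quantified continuation premises while simultaneously applying \Cref{lem:stenv-time-sub}; everything else is routine.
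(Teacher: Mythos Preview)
Your overall plan matches the paper's: rule induction on the typing derivation, splitting the narrowed environment along compositions, and absorbing the subtyping via transitivity, \Cref{lem:end_sub}, and \Cref{lem:stenv-time-sub}. The routine cases you list are handled the same way.

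The one place you diverge is \inferrule{\iruleMPBranch}/\inferrule{\iruleMPSel}, and there you are both overcomplicating matters and slipping on the direction. The premise $\stEnvEntails{\stEnv[1]}{\mpC}{\stCPair{\cVal}{\stExtSum{\roleQ}{i\in I}{\cdots}}}$ is itself an instance of \inferrule{\iruleMPSub}: it merely asserts that the actual entry for $\mpC$ is a \emph{subtype} of the displayed external choice. Once you have the split $\stEnvi[1] \stSub \stEnv[1]$, transitivity of $\stSub$ immediately yields $\stEnvEntails{\stEnvi[1]}{\mpC}{\stCPair{\cVal}{\stExtSum{\roleQ}{i\in I}{\cdots}}}$ with the \emph{same} target type --- no inversion of subtyping is needed, the index set $I$ (which is fixed by the process syntax anyway) does not change, and the continuation premises carry over verbatim apart from the $\stEnv$-component, where the IH together with \Cref{lem:stenv-time-sub} suffices. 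Your inversion route also has the direction backwards: by \Cref{lem:subtyping-invert} a subtype of an external choice has \emph{fewer} branches ($I' \subseteq I$), not more --- and one could not re-type a branching process indexed over $I$ against an external-choice type whose index set is strictly smaller. The paper accordingly does not single these cases out; they fall under its blanket recipe ``invert the rule, absorb the narrowing into the existing \inferrule{\iruleMPSub} premises by transitivity, apply the IH''.
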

\begin{proof}
 By induction on the derivation of $\stJudge{\mpEnv}{\stEnv}{\mpP}$.  Most
cases hold by inverting the typing $\stJudge{\mpEnv}{\stEnv}{\mpP}$, inserting (possibly vacuous) instances of rule~\inferrule{\iruleMPSub}, 
and then applying the induction hypothesis. 
We develop some interesting cases. 
  \begin{itemize}[left=0pt, topsep=0pt]
 \item  Case \inferrule{\iruleMPSub}: by the transitivity of subtyping~(\Cref{lem:reflexive-subtyping}). 
 
 \item Case \inferrule{\iruleMPNil}:   
 by applying \inferrule{\iruleMPNil} on $\stEnvEndP{\stEnvi}$~(from $\stEnvEndP{\stEnv}$ and \Cref{lem:end_sub}).

 \item Case \inferrule{\iruleMPCall}: by~\Cref{lem:end_sub} and induction hypothesis. 
 
\item Case \inferrule{\iruleMPTime}: by $\stEnvQi \,\tcFmt{+}\, \cUnit \stSub  \stEnvQ \,\tcFmt{+}\, \cUnit$ and induction hypothesis.

 \item Case \inferrule{\iruleMPKill}: by~\Cref{lem:end_sub} and induction hypothesis. 
 
\item Case \inferrule{\iruleMPFailed}: by the fact that $\mpFailedP{\mpP}$ can be typed by any typing environment. 

\item Case \inferrule{\iruleMPQueueEmpty}: by $\stQEmptyType \stSub \stQEmptyType$ and~\Cref{lem:end_sub}. 

\item Case \inferrule{\iruleMPQueue}:   $\stEnv = \stEnv[1] \stEnvComp \stEnv[2]$ such that 
$\stEnv[1] =  \stEnvUpd{\stEnvQ[3]}{\mpChanRole{\mpS}{\roleP}}{%
          \stQCons{\stQMsg{\roleQ}{\stLab}{\stS}}{\stEnvApp{\stEnvQ[3]}{\mpChanRole{\mpS}{\roleP}}}}$, 
          $\stJudge{\mpEnv}{\stEnvQ[3]}{%
        \mpSessionQueueO{\mpS}{\roleP}{%
          \mpQueue%
        }%
      }$, $\stS = \stCPair{\ccst}{\stT}$, $\cVal \models \ccst$, and $\stEnvEntails{\stEnv[2]}{\mpChanRole{\mpSi}{\roleR}}{\stCPair{\cVal}{\stT}}$. Since $\stEnvi \stSub \stEnv$, we have $\stEnvi = \stEnvi[1] \stEnvComp \stEnvi[2]$ such that $\stEnvi[1] =  \stEnvUpd{\stEnvQi[3]}{\mpChanRole{\mpS}{\roleP}}{%
          \stQCons{\stQMsg{\roleQ}{\stLab}{\stSi}}{\stEnvApp{\stEnvQi[3]}{\mpChanRole{\mpS}{\roleP}}}}$ with 
          $\stS = \stDelegate{\ccst}{\stT} \stSub \stSi = \stDelegate{\ccst}{\stTi}$, $\stEnvi[3] \stSub \stEnv[3]$, and 
         $\stEnvi[2] \stSub \stEnv[2]$. By induction hypothesis, we obtain $\stJudge{\mpEnv}{\stEnvQi[3]}{%
        \mpSessionQueueO{\mpS}{\roleP}{%
          \mpQueue%
        }%
      }$. We are left to show that $\stEnvEntails{\stEnvi[2]}{\mpChanRole{\mpSi}{\roleR}}{\stCPair{\cVal}{\stTi}}$, which follows from $\stEnvEntails{\stEnv[2]}{\mpChanRole{\mpSi}{\roleR}}{\stCPair{\cVal}{\stT}}$, $\stT \stSub \stTi$, $\stEnvi[2] \stSub \stEnv[2]$, and \inferrule{\iruleMPSub}. 
      \qedhere 
  \end{itemize}
\end{proof}

\begin{lemma}[Substitution]
\label{lem:substitution}
Assume\, $\stJudge{\mpEnv}{\stEnvQ \stEnvQComp
\stEnvMap{\mpFmt{x}}{\stCPair{\cVal}{\stT}}}{\mpP}$ \,and\,
 $\stEnvEntails{\stEnvQi}{\mpChanRole{\mpS}{\roleP}}{%
         \stCPair{\cVal}{\stT}}$ \,with\, $\stEnvQ \stEnvQComp \stEnvQi$ \,defined.
         Then\, $\stJudge{\mpEnv}{\stEnvQ \stEnvQComp
\stEnvQi}{\mpP\subst{\mpFmt{x}}{\mpChanRole{\mpS}{\roleP}}}$.
\end{lemma}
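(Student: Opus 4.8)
\textbf{Proof plan for the Substitution Lemma (\Cref{lem:substitution}).}

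The plan is to proceed by induction on the derivation of $\stJudge{\mpEnv}{\stEnvQ \stEnvQComp \stEnvMap{\mpFmt{x}}{\stCPair{\cVal}{\stT}}}{\mpP}$, performing a case analysis on the last typing rule applied. In each case, the substitution $\subst{\mpFmt{x}}{\mpChanRole{\mpS}{\roleP}}$ either does nothing to the term (when $\mpFmt{x}$ does not occur in $\mpP$, so the entry $\stEnvMap{\mpFmt{x}}{\stCPair{\cVal}{\stT}}$ must have been consumed by $\stEnvEndP{\cdot}$ or split off) or it replaces exactly the relevant channel occurrences. The key observation that makes the case analysis routine is that $\stEnv \stEnvQComp \stEnvQi$ being defined means the composition respects the queue/session-type split from \Cref{def:aat-mpst-typing-env-syntax}, so $\stEnvQi$ can be carried through premises in the same position that $\stEnvMap{\mpFmt{x}}{\stCPair{\cVal}{\stT}}$ occupied.

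First I would handle the base cases. For \inferrule{\iruleMPSub}, apply the induction hypothesis together with transitivity of subtyping (\Cref{transitive-subtyping}). For \inferrule{\iruleMPX} (process call) and \inferrule{\iruleMPNil}, the relevant $\mpFmt{x}$-entry either does not appear, or it is one of the endpoints $\mpC[i]$ typed with $\stCPair{\cVal[i]}{\stEnd}$; in the latter case, since $\stEnvEntails{\stEnvQi}{\mpChanRole{\mpS}{\roleP}}{\stCPair{\cVal}{\stT}}$ with $\stT \stSub \stEnd$ forces (via \Cref{lem:end-subtyping}) $\stEnvEndP{\stEnvQi}$ up to subtyping, so \inferrule{\iruleMPEnd}/\inferrule{\iruleMPNil}/\inferrule{\iruleMPCall} still apply after swapping $\stEnvMap{\mpFmt{x}}{\ldots}$ for $\stEnvQi$. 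For \inferrule{\iruleMPFailed}, the result is immediate since $\mpFailedP{\mpP}$ is typed by any environment, and substitution preserves the syntactic shape $\mpFailedP{\cdot}$. For \inferrule{\iruleMPKill}, the endpoints are typed with $\stSpecf[i]$ under an ended remainder, and the same argument applies.

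Next I would treat the inductive cases. For \inferrule{\iruleMPPar}, \inferrule{\iruleMPDef}, and \inferrule{\iruleMPCancel}, the environment is split and $\stEnvQi$ travels with whichever sub-environment contains the $\mpFmt{x}$-entry; apply the induction hypothesis to that premise. For \inferrule{\iruleMPClock} and \inferrule{\iruleMPTime}, substitution commutes with $\delay{\cdot}{\cdot}$ and the clock-valuation shift $\stEnvQ + \cUnit$ distributes over composition, so the induction hypothesis applies after noting $(\stEnvQ \stEnvQComp \stEnvMap{\mpFmt{x}}{\stCPair{\cVal}{\stT}}) + \cUnit = (\stEnvQ + \cUnit) \stEnvQComp \stEnvMap{\mpFmt{x}}{\stCPair{\cVal + \cUnit}{\stT}}$ and correspondingly $\stEnvQi + \cUnit$ still entails the shifted type for $\mpChanRole{\mpS}{\roleP}$. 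For \inferrule{\iruleMPBranch} and \inferrule{\iruleMPSel}, there are two sub-cases: either $\mpC \neq \mpFmt{x}$, in which case $\mpFmt{x}$ is in the ``$\stEnv$'' part and is threaded through all continuation premises (each of which gets an enlarged environment $\stEnvQ + \cUnit \stEnvComp \stEnvMap{y_i}{\ldots} \stEnvComp \ldots$, but $\stEnvQi$ composes cleanly because $\mpFmt{x} \notin \{\widetilde{y}\}$, $\mpC$); or $\mpC = \mpFmt{x}$, in which case the subject channel of the branch/select is substituted, $\stEnvi$ supplies the corresponding timed local type $\stCPair{\cVal}{\stT}$ which must (by \inferrule{\iruleMPSub} and inversion of subtyping, \Cref{lem:subtyping-invert}) be an appropriate choice type, and we re-derive the judgement with $\mpChanRole{\mpS}{\roleP}$ as subject. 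For \inferrule{\iruleMPTry} the side condition $\procSubject{\mpP} = \setenum{\mpC}$ is preserved by substitution (substitution of a channel for a variable either leaves the subject a variable-free singleton or renames it to $\mpChanRole{\mpS}{\roleP}$), and we apply the induction hypothesis to both the try- and catch-branches. For \inferrule{\iruleMPResPropG} we need $\mpS' \notin \stEnvQi$ (guaranteed by $\alpha$-renaming the bound session) and otherwise thread $\stEnvQi$ through; for \inferrule{\iruleMPQueueEmpty}/\inferrule{\iruleMPQueue}, substitution may rewrite the payload endpoint $\mpChanRole{\mpSi}{\roleR}$ inside the queue, and we use \Cref{lem:narrowing} if the subtyping on the delegated payload type needs to be propagated.

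\textbf{Main obstacle.} The delicate point is the \inferrule{\iruleMPBranch}/\inferrule{\iruleMPSel} case when $\mpC = \mpFmt{x}$: one must check that after substituting $\mpChanRole{\mpS}{\roleP}$ for $\mpFmt{x}$, the continuation environments in the premises --- which have the form $\stEnvMap{\mpC}{\stCPair{\cValUpd{\cVal + \cUnit}{\crst[i]}{0}}{\stT[i]}}$, i.e.\ they already mention the updated channel --- remain well-formed compositions, and that the $\stEnvQi$ entry threads consistently through the $\forall \cUnit \le \mathfrak{n}$ family of premises. This is bookkeeping-heavy but not conceptually hard, since $\stEnvQi$ is a singleton-domain environment for $\mpChanRole{\mpS}{\roleP}$ that does not clash with the freshly-bound $\widetilde{y}$ or with $\mpS'$ in nested restrictions. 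A secondary subtlety is that in the queue rules the substitution can change the ``delegated'' endpoint inside a queued message type, requiring a small appeal to \Cref{lem:narrowing} to absorb the extra subtyping; I expect this to be routine given the lemmas already established.
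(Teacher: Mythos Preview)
Your proposal is correct and takes exactly the same approach as the paper: induction on the derivation of $\stJudge{\mpEnv}{\stEnvQ \stEnvQComp \stEnvMap{\mpFmt{x}}{\stCPair{\cVal}{\stT}}}{\mpP}$. The paper's own proof is a single line stating this induction, so your case analysis is far more detailed than what the authors provide; one minor remark is that \inferrule{\iruleMPSub} is a channel-entailment rule rather than a process-typing rule, so it is not itself a case of the induction but is invoked within other cases (e.g.\ \inferrule{\iruleMPSel}, \inferrule{\iruleMPBranch}, \inferrule{\iruleMPCall}) exactly as you describe.
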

\begin{proof}
By induction on the derivation of \,$\stJudge{\mpEnv}{\stEnvQ \stEnvQComp
\stEnvMap{\mpFmt{x}}{\stCPair{\cVal}{\stT}}}{\mpP}$.
\end{proof}

\begin{restatable}[Subject Congruence]{lemma}{lemSubjectCongruence}
  \label{lem:subject-congruence}%
  Assume 
  $\stJudge{\mpEnv}{\stEnv}{\mpP}$ and 
  $\mpP \equiv \mpPi$. Then, $\exists \stEnvi$ such that 
  $\stEnv \stEquiv \stEnvi$ and 
  $\stJudge{\mpEnv}{\stEnvi}{\mpPi}$.
\end{restatable}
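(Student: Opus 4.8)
\textbf{Proof Plan for Subject Congruence (\Cref{lem:subject-congruence}).}
The plan is to proceed by induction on the derivation of the structural congruence $\mpP \equiv \mpPi$, using the rules listed at the bottom of~\Cref{fig:aat-mpst-pi-semantics}. Since $\equiv$ is symmetric, transitive, and reflexive, and a congruence with respect to all process-forming operators, the induction has three kinds of cases: the base cases are the individual axioms of $\equiv$; the closure-under-context cases handle $\mpP \mpPar \mpQ$, $\mpRes{\mpS}{\mpP}$, $\mpDefAbbrev{\mpDefD}{\mpP}$, $\trycatch{\mpP}{\mpQ}$, and the timed/communication prefixes; and the equivalence-closure cases (symmetry, transitivity, reflexivity) follow immediately — reflexivity trivially, symmetry by swapping the roles of $\mpP$ and $\mpPi$ and using $\stEquiv$ being symmetric on typing environments, and transitivity by composing the two $\stEquiv$ witnesses.

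For the base cases, the strategy is uniform: invert the typing derivation $\stJudge{\mpEnv}{\stEnv}{\mpP}$ to expose the shape of $\stEnv$ forced by the top-level constructor of $\mpP$, then rebuild a derivation for $\mpPi$ using the corresponding rules, reading off the congruent environment $\stEnvi$. The purely structural axioms ($\mpPar$ commutativity/associativity/unit, $\mpNil$ absorption, scope extrusion for $\mpRes{\mpS}{\cdot}$ and $\mpDefAbbrev{\mpDefD}{\cdot}$, swapping binders) are handled exactly as in~\cite{DBLP:journals/pacmpl/ScalasY19,lagaillardie2022Affine}: e.g.\ for $\mpP \mpPar \mpQ \equiv \mpQ \mpPar \mpP$, invert \inferrule{\iruleMPPar} to get $\stEnv = \stEnvQ[1] \stEnvQComp \stEnvQ[2]$ with $\stJudge{\mpEnv}{\stEnvQ[1]}{\mpP}$ and $\stJudge{\mpEnv}{\stEnvQ[2]}{\mpQ}$, then reassemble with $\stEnvi = \stEnvQ[2] \stEnvQComp \stEnvQ[1]$, which equals $\stEnv$ since $\stEnvQComp$ is commutative when defined. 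Scope-extrusion cases $\mpRes{\mpS}{(\mpP \mpPar \mpQ)} \equiv \mpP \mpPar \mpRes{\mpS}{\mpQ}$ (when $\mpS \notin \fc{\mpP}$) use \inferrule{\iruleMPResPropG}: the side condition $\mpS \not\in \stEnvQ$ together with $\mpS \notin \fc{\mpP}$ lets us move the associated sub-environment $\stEnvi$ (in the sense of \inferrule{\iruleMPResPropG}) past the typing of $\mpP$.

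The genuinely new cases are the highlighted axioms: $\delay{0}{\mpP} \equiv \mpP$, the kill-idempotence $\kills{\mpS} \mpPar \kills{\mpS} \equiv \kills{\mpS}$, the empty-queue garbage collection $\mpRes{\mpS}{(\mpSessionQueueO{\mpS}{\roleP[1]}{\mpQueueEmpty} \mpPar \cdots \mpPar \mpSessionQueueO{\mpS}{\roleP[n]}{\mpQueueEmpty})} \equiv \mpNil$, and the queue-message-reordering axiom for distinct receivers. For $\delay{0}{\mpP} \equiv \mpP$: invert \inferrule{\iruleMPTime}, which requires $\stJudge{\mpEnv}{\stEnvQ + 0}{\mpP}$; since $\stEnvQ + 0 = \stEnvQ$ (adding zero leaves every clock valuation unchanged), we are done with $\stEnvi = \stEnvQ$. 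For kill idempotence: inverting \inferrule{\iruleMPKill} on each side shows both $\kills{\mpS} \mpPar \kills{\mpS}$ and $\kills{\mpS}$ are typable by any environment of the shape $\stEnvQ \stEnvQComp \stEnvMap{\mpChanRole{\mpS}{\roleP[1]}}{\stSpecf[1]} \stEnvQComp \cdots$ with $\stEnvEndP{\stEnvQ}$, so we simply re-derive; the left-hand side also involves \inferrule{\iruleMPPar} splitting the endpoint entries, but since each half again needs only an ended environment plus $\mpS$-entries, the same total $\stEnv$ works on both sides. For empty-queue GC: invert \inferrule{\iruleMPResPropG} and the $n$ instances of \inferrule{\iruleMPQueueEmpty}; the associated environment for $\mpS$ must consist of $\gtEnd$-types with empty queues (matching $\gtEnd$ via \Cref{lem:gt-end-assoc}), so the residual typing of $\mpNil$ under \inferrule{\iruleMPNil} uses exactly the ambient ended environment, and $\stEnvi = \stEnv$. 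For queue reordering: invert \inferrule{\iruleMPQueue} twice to peel off the two adjacent messages; the congruence rule \inferrule{\iruleSubEQueueTypeEmpty}-style reordering on queue types (the highlighted $\stEquiv$ rule in~\Cref{fig:sub_cong_com_types} with side condition $\roleP \neq \roleQ$) gives the needed $\stEnvQ \stEquiv \stEnvQi$; the receiver-distinctness side condition on the process axiom matches precisely the side condition on the type-level congruence rule. I expect the queue-reordering case to be the main obstacle, because the ordering of message types inside a queue-type entry is tracked positionally, and one must check carefully that the two successive applications of \inferrule{\iruleMPQueue} commute at the type level exactly under the same $\roleQ[1] \neq \roleQ[2]$ condition — and that the delegated-payload premises ($\stEnvEntails{\stEnvi}{\mpChanRole{\mpSi}{\roleR}}{\stCPair{\cVal}{\stT}}$) from both applications can be simultaneously satisfied from the given split of $\stEnv$, which requires invoking the commutativity/associativity of $\stEnvQComp$ on the payload sub-environments. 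Finally, the context-closure cases are routine: apply the induction hypothesis to the immediate subterm, obtaining $\stEnvi[0] \stEquiv \stEnv[0]$ for the relevant sub-environment, and rebuild the outer typing rule, using that $\stEquiv$ is preserved by $\stEnvQComp$, by clock-valuation updates, and by the time-shift $\cdot + \cUnit$ (all of which follow directly from \Cref{fig:sub_cong_com_types} and the pointwise definition of $\stEquiv$ on environments).
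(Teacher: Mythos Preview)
Your proposal is correct and follows essentially the same approach as the paper: induction on the congruence derivation, with the standard cases deferred to~\cite{lagaillardie2022Affine,DBLP:journals/pacmpl/ScalasY19} (where $\stEnvi = \stEnv$), and explicit treatment of the new highlighted axioms---$\delay{0}{\mpP} \equiv \mpP$ via $\stEnv + 0 = \stEnv$, empty-queue garbage collection via inversion of \inferrule{\iruleMPQueueEmpty} and \inferrule{\iruleMPResPropG}, and queue reordering via the matching type-level $\stQEquiv$ rule. One small remark: you list $\kills{\mpS} \mpPar \kills{\mpS} \equiv \kills{\mpS}$ among the ``genuinely new'' cases, but it is not highlighted in~\Cref{fig:aat-mpst-pi-semantics} and is already covered by~\cite{lagaillardie2022Affine}; the paper accordingly treats only three additional cases, and for the queue-reordering case simply observes that applying the same swap at the type level yields the required $\stEnvi \stQEquiv \stEnv$, without the detailed commutation analysis you anticipate.
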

\begin{proof}
The proof is similar to that of Theorem 3.11 in~\cite{lagaillardie2022Affine}, and in all corresponding cases, we have $\stEnv = \stEnvi$. 
We have three additional cases regarding delay and queue congruence rules. 
 \begin{itemize}[left=0pt, topsep=0pt]
\item $\mpP = \delay{0}{\mpQ} \equiv \mpQ = \mpPi$: 
\begin{itemize}[left=0pt, topsep=0pt]
\item[($\Rightarrow$)]  suppose $\stJudge{\mpEnv}{%
        \stEnvQ%
      }{%
        \delay{0}{\mpQ}}$. By \inferrule{\iruleMPTime}, we have
        $\stJudge{\mpEnv}{%
        \stEnvQ \,\tcFmt{+}\, \tcFmt{0}
      }{%
        \mpQ}$, and hence, $\stJudge{\mpEnv}{%
        \stEnvQ%
      }{\mpQ}$, as desired.
 \item[($\Leftarrow$)]  suppose $\stJudge{\mpEnv}{%
        \stEnvQ%
      }{%
        \mpQ}$, which follows by
        $\stJudge{\mpEnv}{%
        \stEnvQ \,\tcFmt{+}\, \tcFmt{0}
      }{%
        \mpQ}$. Hence, by inversion of \inferrule{\iruleMPTime}, we have
        $\stJudge{\mpEnv}{%
        \stEnvQ%
      }{\delay{0}{\mpQ}}$, as desired.
 \end{itemize}
 \item $\mpP =  \mpRes{\stEnvMap{\mpS}{\stEnvii}}{%
        (
        \mpSessionQueueO{\mpS}{\roleP[1]}{\mpQueueEmpty} \mpPar \cdots \mpPar
        \mpSessionQueueO{\mpS}{\roleP[n]}{\mpQueueEmpty}
        )}
        \equiv 
        \mpNil = \mpPi$: we must have $\stEnvEndP{\stEnvii}$ and $\stEnvEndP{\stEnv}$, and hence, 
        we conclude with $\stEnvi = \stEnv$ by typing rule \inferrule{\iruleMPNil}. 
 \item $\mpP \equiv \mpPi$ holds by the order-swapping congruence on queues:  we apply the same reordering on the queue types of $\stEnv$, getting a congruent typing environment $\stEnvi$  that satisfies the statement.
 \qedhere 
 \end{itemize}
\end{proof}

\section{Proofs for Subject Reduction and Type Safety}
\label{sec:app-aat-mpst-sj-proof}
\begin{lemma}
\label{lem:end-time-pass}
If $\stEnvEndP{\stEnv}$, then 
$\stEnvEndP{\stEnvQ \,\tcFmt{+}\, \cUnit}$.
\end{lemma}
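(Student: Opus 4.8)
The plan is to unfold the definition of $\stEnvEndPred$ and observe that the $\,\tcFmt{+}\,\cUnit$ operation commutes with the subtyping derivations that witness endedness, since those derivations do not constrain the clock valuation.

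First I would invert the hypothesis $\stEnvEndP{\stEnv}$ using rule \inferrule{\iruleMPEnd}: this forces $\stEnv$ to have the shape $\stEnvMap{\mpC[1]}{\stCPair{\cVal[1]}{\stT[1]}} \stEnvComp \cdots \stEnvComp \stEnvMap{\mpC[n]}{\stCPair{\cVal[n]}{\stT[n]}}$, with $\stEnvEntails{\stEnvMap{\mpC[i]}{\stCPair{\cVal[i]}{\stT[i]}}}{\mpC[i]}{\stCPair{\cVali[i]}{\stEnd}}$ for each $i \in 1..n$. This last judgement can only be obtained via \inferrule{\iruleMPSub} (repeated applications collapse to one by transitivity of subtyping, \Cref{transitive-subtyping}), so $\stCPair{\cVal[i]}{\stT[i]} \stSub \stCPair{\cVali[i]}{\stEnd}$. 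The sole subtyping rule applicable to a pair of timed session types is \inferrule{\iruleSubESessionType}, whence $\cVali[i] = \cVal[i]$ and $\stT[i] \stSub \stEnd$.

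Next I would compute $\stEnvQ \,\tcFmt{+}\, \cUnit$. By the definition of $\,\tcFmt{+}\,$ (incrementing the associated clock valuation in every entry), $\stEnvQ \,\tcFmt{+}\, \cUnit = \stEnvMap{\mpC[1]}{\stCPair{\cVal[1] \tcFmt{+} \cUnit}{\stT[1]}} \stEnvComp \cdots \stEnvComp \stEnvMap{\mpC[n]}{\stCPair{\cVal[n] \tcFmt{+} \cUnit}{\stT[n]}}$. For each $i$, from $\stT[i] \stSub \stEnd$ and \inferrule{\iruleSubESessionType} I obtain $\stCPair{\cVal[i] \tcFmt{+} \cUnit}{\stT[i]} \stSub \stCPair{\cVal[i] \tcFmt{+} \cUnit}{\stEnd}$, hence by \inferrule{\iruleMPSub} that $\stEnvEntails{\stEnvMap{\mpC[i]}{\stCPair{\cVal[i] \tcFmt{+} \cUnit}{\stT[i]}}}{\mpC[i]}{\stCPair{\cVal[i] \tcFmt{+} \cUnit}{\stEnd}}$. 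Applying \inferrule{\iruleMPEnd} to these $n$ premises yields $\stEnvEndP{\stEnvQ \,\tcFmt{+}\, \cUnit}$, as required.

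There is no real obstacle: the argument is entirely pointwise, and the only point worth stating is that the subtyping $\stCPair{\cVal}{\stT} \stSub \stCPair{\cVal}{\stEnd}$ established by \inferrule{\iruleSubESessionType} depends only on $\stT \stSub \stEnd$ and not on $\cVal$, so it is preserved under replacing $\cVal$ by $\cVal \tcFmt{+} \cUnit$ on both sides. (Should one also regard environments with combined-type entries $\stMPair{\stCPair{\cVal}{\stEnd}}{\stQEmptyType}$ as ended, the same reasoning carries over, since the queue component is untouched by $\,\tcFmt{+}\,\cUnit$.)
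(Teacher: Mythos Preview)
Your proposal is correct and follows essentially the same route as the paper: invert \inferrule{\iruleMPEnd}, observe that each entry's subtyping-to-$\stEnd$ is independent of the clock valuation, shift by $\cUnit$, and reapply \inferrule{\iruleMPSub} and \inferrule{\iruleMPEnd}. If anything, you are more careful than the paper, which tacitly assumes each $\stT[i]$ is already $\stEnd$ rather than a subtype thereof.
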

\begin{proof}
From $\stEnvEndP{\stEnv}$, we have that
$\forall \mpC \in \dom{\stEnv}: \stEnvApp{\stEnvQ}{\mpC} =
\stCPair{\cVal[\mpC]}{\stEnd}$, which follows directly that
$\forall \mpC \in \dom{\stEnv \,\tcFmt{+}\, \cUnit} (= \dom{\stEnv}): 
\stEnvApp{\stEnvQ \,\tcFmt{+}\, \cUnit}{\mpC} =
\stCPair{\cVal[\mpC] \,\tcFmt{+}\, \cUnit}{\stEnd}$. Then the thesis holds 
 by applying \inferrule{\iruleMPSub} and
\inferrule{\iruleMPEnd}. 
\end{proof}

\begin{lemma}
\label{lem:subj-equiv}
If $\timePass{\cUnit}{\mpP}$ is defined, then 
$\procSubject{\timePass{\cUnit}{\mpP}} = \procSubject{\mpP}$.
\end{lemma}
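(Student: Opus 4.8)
The statement to prove is \Cref{lem:subj-equiv}: if $\timePass{\cUnit}{\mpP}$ is defined, then $\procSubject{\timePass{\cUnit}{\mpP}} = \procSubject{\mpP}$. The plan is to proceed by structural induction on $\mpP$, matching the cases of the time-passing function $\timePass{\cUnit}{\cdot}$ defined in \Cref{fig:aat-mpst-time-passing} with the cases of the subject function $\procSubject{\cdot}$ defined in \Cref{sec:session-calculus}. Since $\timePass{\cUnit}{\mpP}$ is assumed to be defined, we may discard the two undefined cases up front: $\mpP = \delay{\ccst}{\mpPi}$ (a time-consuming delay) and $\mpP = \delay{\cUniti}{\mpPi}$ with $\cUnit > \cUniti$ (a deterministic delay whose duration is exceeded).

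First I would handle the cases where $\timePass{\cUnit}{\cdot}$ acts homomorphically. For parallel composition, $\timePass{\cUnit}{\mpP[1] \mpPar \mpP[2]} = \timePass{\cUnit}{\mpP[1]} \mpPar \timePass{\cUnit}{\mpP[2]}$, and $\procSubject{\cdot}$ distributes over $\mpPar$ as a union, so the result follows from two applications of the induction hypothesis. The cases for $\mpRes{\mpS}{\mpPi}$, $\mpDefAbbrev{\mpDefD}{\mpPi}$, and $\trycatch{\mpPi}{\mpQ}$ are similar: in each, $\timePass{\cUnit}{\cdot}$ recurses into the body (and, for try-catch, into both branches), and $\procSubject{\cdot}$ is defined compatibly (for $\mpRes{\mpS}{\cdot}$ it removes a fixed set of endpoints that is unaffected by $\timePass{\cUnit}{\cdot}$ since the latter does not change channel names; for $\trycatch{\cdot}{\cdot}$ it is $\procSubject{\mpP}$, i.e.\ it ignores the catch branch). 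Note that for the $\trycatchB$ case one uses that $\procSubject{\timePass{\cUnit}{\mpP}} = \procSubject{\mpP}$ by induction on $\mpP$ alone, since $\procSubject{\trycatch{\mpP}{\mpQ}} = \procSubject{\mpP}$.

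Next I would treat the base cases where $\timePass{\cUnit}{\mpP} = \mpP$ verbatim, so the equation is trivial: $\mpNil$, $\mpCErr$, $\kills{\mpS}$, $\mpFailedP{\mpPi}$, the output message queue $\mpSessionQueueO{\mpS}{\roleP}{\mpQueue}$, and the blocking selection/branching with $\mathfrak{n} = \infty$. The only genuinely nontrivial cases are timed selection $\mpTSel{\mpC}{\roleQ}{\mpLab}{\mpD}{\mpPi}{\cUniti}$ and timed branching $\mpTBranch{\mpC}{\roleQ}{i \in I}{\mpLab[i]}{x_i}{\mpP[i]}{\cUniti}{}$ with $\cUniti \in \rn_{\ge 0}$, together with the runtime deterministic delay $\delay{\cUniti}{\mpPi}$ with $\cUniti \geq \cUnit$. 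For timed selection: if $\cUniti \geq \cUnit$ then $\timePass{\cUnit}{\mpP} = \mpTSel{\mpC}{\roleQ}{\mpLab}{\mpD}{\mpPi}{\cUniti - \cUnit}$, whose subject is $\setenum{\mpC}$, equal to $\procSubject{\mpP}$; otherwise $\timePass{\cUnit}{\mpP} = \mpFailedP{\mpTSel{\mpC}{\roleQ}{\mpLab}{\mpD}{\mpPi}{\cUniti}}$, whose subject is $\procSubject{\mpTSel{\mpC}{\roleQ}{\mpLab}{\mpD}{\mpPi}{\cUniti}} = \setenum{\mpC} = \procSubject{\mpP}$. The branching case is identical, and the deterministic-delay case uses $\procSubject{\delay{\cUniti - \cUnit}{\mpPi}} = \procSubject{\mpPi} = \procSubject{\delay{\cUniti}{\mpPi}}$ from the definition of $\procSubject{\cdot}$ on delays. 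I do not anticipate a real obstacle here; the main thing to be careful about is aligning the case split of $\timePass{\cUnit}{\cdot}$ (especially the $\cUniti \geq \cUnit$ versus otherwise branches, and the $\infty$ branch) precisely with the clauses of $\procSubject{\cdot}$, and observing that $\timePass{\cUnit}{\cdot}$ never alters the channel or process-variable structure of a term, so all the set-theoretic operations in $\procSubject{\cdot}$ (unions, set differences over endpoints, substitution in the process-call clause) are preserved.
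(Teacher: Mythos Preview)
Your proposal is correct and follows essentially the same approach as the paper's proof: a straightforward structural induction on $\mpP$, excluding the undefined cases and splitting the remaining ones into verbatim/base cases, homomorphic cases handled by the induction hypothesis, and the timed selection/branching/delay cases with their $\cUniti \geq \cUnit$ versus timeout split. The only case you do not list explicitly is $\mpCancel{\mpC}{\mpQ}$, where $\timePass{\cUnit}{\cdot}$ recurses into $\mpQ$ but $\procSubject{\mpCancel{\mpC}{\mpQ}} = \setenum{\mpC}$ ignores the continuation, so the equality is immediate; this is covered by your closing remark that $\timePass{\cUnit}{\cdot}$ never alters channel structure, and the paper treats it just as briefly.
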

\begin{proof}
By induction on the syntactic constructs of $\mpP$. 
The case of $\mpP = \delay{\ccst}{\mpPi}$ 
is excluded due to the undefined nature of $\timePass{\cUnit}{\mpP}$.
\begin{itemize}[left=0pt, topsep=0pt]
\item $\mpP = \mpNil$: $\timePass{\cUnit}{\mpP} = \mpNil = \mpP$. The thesis holds trivially.  

\item $\mpP = \mpCErr$: $\timePass{\cUnit}{\mpP} = \mpCErr = \mpP$. The thesis holds trivially.  

\item $\mpP = \mpQ[1] \mpPar \mpQ[2]$:  $\timePass{\cUnit}{\mpP} = \timePass{\cUnit}{\mpQ[1]} \mpPar 
 \timePass{\cUnit}{\mpQ[2]}$. By the induction hypothesis, we have $\procSubject{\timePass{\cUnit}{\mpQ[1]}} = \procSubject{\mpQ[1]}$ and $\procSubject{\timePass{\cUnit}{\mpQ[2]}} = \procSubject{\mpQ[2]}$. Hence,  
 $\procSubject{\timePass{\cUnit}{\mpQ[1]}} \cup \procSubject{\timePass{\cUnit}{\mpQ[2]}} = \procSubject{\mpQ[1]} \cup \procSubject{\mpQ[2]}$, which is the thesis. 

 \item $\mpP = \mpRes{\mpS}{\mpQ}$: $\timePass{\cUnit}{\mpP} = \mpRes{\mpS}{\timePass{\cUnit}{\mpQ}}$. 
 By the induction hypothesis, we have $\procSubject{\timePass{\cUnit}{\mpQ}} = \procSubject{\mpQ}$. Hence, 
 $ \procSubject{\mpQ} \setminus (\setenum{\mpChanRole{\mpS}{\roleP[i]}}_{i \in I} \cup \setenum{{\mpChanRole{\mpS}{\roleP[i]}}^{\mathscr{Q}}}_{i \in I})$ =  $ \procSubject{\timePass{\cUnit}{\mpQ}} \setminus (\setenum{\mpChanRole{\mpS}{\roleP[i]}}_{i \in I} \cup \setenum{{\mpChanRole{\mpS}{\roleP[i]}}^{\mathscr{Q}}}_{i \in I})$, which is the thesis. 

\item $\mpP = \mpDefAbbrev{ \mpJustDef{\mpX}{\widetilde{x}}{\mpQ}}{\mpQi}$:  $\timePass{\cUnit}{\mpP} = 
\mpDefAbbrev{ \mpJustDef{\mpX}{\widetilde{x}}{\mpQ}}{\timePass{\cUnit}{\mpQi}}$. By the induction hypothesis, 
we have $\procSubject{\timePass{\cUnit}{\mpQi}} = \procSubject{\mpQi}$. 
Hence,  $\procSubject{\mpQi} \cup \procSubject{\mpQ} \setminus \setenum{\widetilde{x}} = 
 \procSubject{\timePass{\cUnit}{\mpQi}} \cup \procSubject{\mpQ} \setminus \setenum{\widetilde{x}}$ with 
 $\procSubject{\mpCall{\mpX}{\widetilde{\mpC}}}
   =
      \procSubject{\mpQ\subst{\widetilde{x}}{\widetilde{\mpC}}}$, which is the thesis.

\item $\mpP = \mpTSel{\mpC}{\roleQ}{\mpLab}{\mpD}{\mpQ}{\mathfrak{n}}$: $\procSubject{\timePass{\cUnit}{\mpP}} = \setenum{\mpC} = \procSubject{\mpP}$ if $\cUnit \leq \mathfrak{n}$, and $\procSubject{\timePass{\cUnit}{\mpP}} = \procSubject{\mpFailedP{\mpP}} =  \procSubject{\mpP} = \setenum{\mpC}$ if $\cUnit > \mathfrak{n}$, as desired. 

\item $\mpP = \mpTBranch{\mpC}{\roleQ}{i \in I}{\mpLab[i]}{x_i}{\mpP[i]}{\mathfrak{n}}{}$: similarly to the case $\mpP = \mpTSel{\mpC}{\roleQ}{\mpLab}{\mpD}{\mpQ}{\mathfrak{n}}$. 

\item $\mpP = \mpCancel{\mpC}{\mpQ}$: $\timePass{\cUnit}{\mpP} = \mpCancel{\mpC}{\timePass{\cUnit}{\mpQ}}$. The thesis holds as $\procSubject{\mpP} = \setenum{\mpC}$ and $\procSubject{\timePass{\cUnit}{\mpP} } = \setenum{\mpC}$. 

\item $\mpP = \delay{\cUniti}{\mpQ}$: $\timePass{\cUnit}{\mpP} = \delay{\cUniti - \cUnit}{\mpQ}$. The thesis holds straightforward  from  $\procSubject{\mpP}  = \procSubject{\mpQ}$ and $\procSubject{\timePass{\cUnit}{\mpP} } =  \procSubject{\mpQ}$. 

\item $\mpP = \trycatch{\mpQ}{\mpQi}$: $\timePass{\cUnit}{\mpP} = \trycatch{\timePass{\cUnit}{\mpQ}}{\timePass{\cUnit}{\mpQi}}$. The thesis holds straightforward  from  $\procSubject{\mpP}  = \procSubject{\mpQ}$, $\procSubject{\timePass{\cUnit}{\mpP} } =  \procSubject{\timePass{\cUnit}{\mpQ}}$, and the induction hypothesis 
$\procSubject{\timePass{\cUnit}{\mpQ}} = \procSubject{\mpQ}$. 

\item $\mpP = \mpFailedP{\mpQ}$: the thesis holds trivially by $\timePass{\cUnit}{\mpP} = \mpP$. 

\item $\mpP = \mpSessionQueueO{\mpS}{\roleP}{\mpQueue}$: the thesis holds directly from  $\timePass{\cUnit}{\mpP} = \mpP$.
\qedhere 
 \end{itemize}
\end{proof}

\lemTimePass*
\begin{proof}
By induction on the derivation $\stJudge{\mpEnv}{\stEnvQ}{\mpP}$.
Note that the rule \inferrule{\iruleMPClock} %
is not applicable due to the undefined nature of $\timePass{\cUnit}{\mpP}$. 
\begin{itemize}[left=0pt, topsep=0pt]
\item Case \inferrule{\iruleMPNil}: trivial by $\timePass{\cUnit}{\mpP} = \mpNil$ and $\stEnvEndP{\stEnvQ \,\tcFmt{+}\, \cUnit}$~(from~\Cref{lem:end-time-pass}).

\item Case \inferrule{\iruleMPTime}:
 $\mpP =  \delay{\cUniti}{\mpPi}$ and $\timePass{\cUnit}{\mpP} = \delay{\cUniti - \cUnit}{\mpPi}$.
 We just need to show $\stJudge{\mpEnv}{%
        \stEnvQ \,\tcFmt{+}\, \cUnit \,\tcFmt{+}\, \tcFmt{(\cUniti - \cUnit)}
      }{%
        \mpPi%
      }$, which follows directly by inverting \inferrule{\iruleMPTime} on
      $\stJudge{\mpEnv}{%
        \stEnvQ}{%
        \delay{\cUniti}{\mpPi}%
      }$.
      
\item Case \inferrule{\iruleMPSel}: $\mpP = \mpTSel{\mpC}{\roleQ}{\mpLab}{\mpD}{\mpPi}{\cUniti}$ and 
$\timePass{\cUnit}{\mpP} =
\mpTSel{\mpC}{\roleQ}{\mpLab}{\mpD}{\mpPi}{\cUniti - \cUnit}$ if $\cUniti \geq \cUnit$, and $\mpFailedP{\mpP}$ if $\cUniti < \cUnit$. 

\begin{itemize}[left=0pt, topsep=0pt]
\item $\timePass{\cUnit}{\mpP} = \mpFailedP{\mpP}$: the thesis holds as $\mpFailedP{\mpP}$ can be typed by any typing environment; 
\item $\timePass{\cUnit}{\mpP} =
\mpTSel{\mpC}{\roleQ}{\mpLab}{\mpD}{\mpPi}{\cUniti - \cUnit}$ with $\cUniti - \cUnit \geq 0$: 
\begin{enumerate}[left=0pt, topsep=0pt]
\item $\forall \cUnitii:    \cUnitii \leq \cUniti - \cUnit \Longrightarrow \cVal + \cUnit + \cUnitii \models \ccst$: directly from 
$\forall \cUnitii: \cUnitii \leq \cUniti \Longrightarrow  \cVal +  \cUnitii \models \ccst$;
\item $\stEnvEntails{\stEnv[1]  \,\tcFmt{+}\, \cUnit}{\mpC}{%
       \stCPair{\cVal \,\tcFmt{+}\, \cUnit}{ \stIntSum{\roleQ}{}{\stTChoice{\stLab}{\stS}{\ccst, \crst} \stSeq \stT}}}$:  directly from
      $\stEnvEntails{\stEnv[1]}{\mpC}{%
       \stCPair{\cVal}{ \stIntSum{\roleQ}{}{\stTChoice{\stLab}{\stS}{\ccst, \crst} \stSeq \stT}}}$; 
\item $\forall \cUnitii \leq \cUniti - \cUnit:  \stJudge{\mpEnv}{%
        \stEnvQ \,\tcFmt{+}\, \cUnit \,\tcFmt{+}\, \cUnitii \stEnvComp \stEnvMap{\mpC}{\stCPair{\cValUpd{\cVal + \cUnit + \cUnitii}{\crst}{0}}{\stT}}%
      }{%
        \mpPi%
      }$: directly from $\forall \cUnitii \leq \cUniti:  \stJudge{\mpEnv}{%
        \stEnvQ \,\tcFmt{+}\, \cUnitii \stEnvComp \stEnvMap{\mpC}{\stCPair{\cValUpd{\cVal + \cUnitii}{\crst}{0}}{\stT}}%
      }{%
        \mpPi%
      }$.
 \item $\stEnvEntails{\stEnvQ[2] \,\tcFmt{+}\, \cUnit}{\mpD}{\stCPair{\cVali + \cUnit}{\stTi}}$: straightforward from the fact that  $\cVali$ can be any clock assignment such that $\cVali \models \ccsti$, allowing us to set it up such that both $\cVali$ and $\cVali \,\tcFmt{+}\, \cUnit$ satisfy $\ccsti$.
\end{enumerate}
\end{itemize}

\item Case \inferrule{\iruleMPBranch}: $\mpP = \mpTBranch{\mpC}{\roleQ}{i \in I}{\mpLab[i]}{x_i}{\mpP[i]}{\cUniti}{}$ and 
 $\timePass{\cUnit}{\mpP} =
\mpTBranch{\mpC}{\roleQ}{i \in I}{\mpLab[i]}{x_i}{\mpP[i]}{\cUniti - \cUnit}{}$ if $\cUniti \geq \cUnit$, and $\mpFailedP{\mpP}$ if $\cUniti < \cUnit$. 

\begin{itemize}[left=0pt, topsep=0pt]
\item $\timePass{\cUnit}{\mpP} = \mpFailedP{\mpP}$: the thesis holds as $\mpFailedP{\mpP}$ can be typed by any typing environment; 
\item $\timePass{\cUnit}{\mpP} =
\mpTBranch{\mpC}{\roleQ}{i \in I}{\mpLab[i]}{x_i}{\mpP[i]}{\cUniti - \cUnit}{}$ with $\cUniti - \cUnit \geq 0$: 
\begin{enumerate}[left=0pt, topsep=0pt]
\item $\forall i \in I: \forall \cUnitii:    \cUnitii \leq \cUniti - \cUnit \Longrightarrow \cVal + \cUnit + \cUnitii \models \ccst[i]$: directly from 
$\forall i \in I: \forall \cUnitii: \cUnitii \leq \cUniti \Longrightarrow  \cVal +  \cUnitii \models \ccst[i]$;

\item $\stEnvEntails{\stEnv[1]  \,\tcFmt{+}\, \cUnit}{\mpC}{%
       \stCPair{\cVal \,\tcFmt{+}\, \cUnit}{ \stExtSum{\roleQ}{i \in I}{\stTChoice{\stLab[i]}{\stS[i]}{\ccst[i], \crst[i]} \stSeq \stT[i]}}}$:  directly from
      $\stEnvEntails{\stEnv[1]}{\mpC}{%
       \stCPair{\cVal}{ \stExtSum{\roleQ}{i \in I}{\stTChoice{\stLab[i]}{\stS[i]}{\ccst[i], \crst[i]} \stSeq \stT[i]}}}$; 

\item $\forall i \in I: \forall \cUnitii \leq \cUniti - \cUnit:  \stJudge{\mpEnv}{%
        \stEnvQ \,\tcFmt{+}\, \cUnit \,\tcFmt{+}\, \cUnitii \stEnvComp \stEnvMap{y_i}{\stCPair{\cVali[i]}{\stTi[i]}} \stEnvComp \stEnvMap{\mpC}{\stCPair{\cValUpd{\cVal + \cUnit + \cUnitii}{\crst[i]}{0}}{\stT[i]}}%
      }{%
        \mpP[i]%
      }$: directly from $\forall i \in I: \forall \cUnitii \leq \cUniti:  \stJudge{\mpEnv}{%
        \stEnvQ \,\tcFmt{+}\, \cUnitii \stEnvComp \stEnvMap{y_i}{\stCPair{\cVali[i]}{\stTi[i]}}  \stEnvComp \stEnvMap{\mpC}{\stCPair{\cValUpd{\cVal + \cUnitii}{\crst[i]}{0}}{\stT[i]}}%
      }{%
        \mpP[i]%
      }$.
\end{enumerate}
\end{itemize}

\item Case \inferrule{\iruleMPPar}: by induction hypothesis. 

\item Case \inferrule{\iruleMPKill}: $\mpP = \timePass{\cUnit}{\mpP} = \kills{\mpS}$. 
Since $\stJudge{\mpEnv}{\stEnvQ}{\kills{\mpS}}$, we have $\stEnv = \stEnvi \stEnvComp \stEnvMap{%
          \mpChanRole{\mpS}{\roleP[1]}%
        }{%
           \stSpecf[1]
        } \stEnvComp \ldots \stEnvMap{%
          \mpChanRole{\mpS}{\roleP[n]}%
        }{%
           \stSpecf[n]
        }$ with  $\stEnvEndP{\stEnvi}$ and $n \geq 0$. 
        Hence,  $\stEnv \tcFmt{+}\, \cUnit = \stEnvi \tcFmt{+}\, \cUnit \stEnvComp \stEnvMap{%
          \mpChanRole{\mpS}{\roleP[1]}%
        }{%
           \stSpecfi[1]
        } \stEnvComp \ldots \stEnvComp \stEnvMap{%
          \mpChanRole{\mpS}{\roleP[1]}%
        }{%
            \stSpecfi[n]
        }$ such that  $\stEnvEndP{\stEnvi \tcFmt{+}\, \cUnit}$~(by~\Cref{lem:end-time-pass}),  
        and $\forall i \leq n: 
           \stSpecfi[i] = \stCPair{\cVal[i]  \tcFmt{+}\, \cUnit}{\stT[i]}$ if $\stSpecf[i] = \stCPair{\cVal[i]}{\stT[i]}$, 
            $\stSpecfi[i] = \stQType[i]$ if $\stSpecf[i] = \stQType[i]$, and 
             $\stSpecfi[i] = \stMPair{\stCPair{\cVal[i]  \tcFmt{+}\, \cUnit}{\stT[i]}}{\stQType[i]}$ if   
             $\stSpecf[i] = \stMPair{\stCPair{\cVal[i]}{\stT[i]}}{\stQType[i]}$. 
             Then the thesis holds directly by applying \inferrule{\iruleMPKill}. 

\item Case \inferrule{\iruleMPTry}: $\mpP = \trycatch{\mpPi}{\mpQ}$.
The thesis holds by applying the induction hypothesis and~\Cref{lem:subj-equiv}.

\item Case \inferrule{\iruleMPCancel}: $\mpP =  \mpCancel{\mpChanRole{\mpS}{\roleP}}{\mpQ}$. We have 
$\stEnv = \stEnvi \stEnvComp \stEnvMap{%
          \mpChanRole{\mpS}{\roleP}%
        }{\stSpecf
        }$ with $\stJudge{\mpEnv}{\stEnvQi}{\mpQ}$, and $\stEnv \tcFmt{+}\, \cUnit$ = $\stEnvi \tcFmt{+}\, \cUnit \stEnvComp \stEnvMap{%
          \mpChanRole{\mpS}{\roleP}%
        }{\stSpecf \tcFmt{+}\, \cUnit
        }$, where $\stSpecf \tcFmt{+}\, \cUnit = \stCPair{\cVal  \tcFmt{+}\, \cUnit}{\stT}$ if $\stSpecf = \stCPair{\cVal}{\stT}$,  $\stQType$ if $\stSpecf = \stQType$, and 
             $\stMPair{\stCPair{\cVal  \tcFmt{+}\, \cUnit}{\stT}}{\stQType}$ if   
             $\stSpecf = \stMPair{\stCPair{\cVal}{\stT}}{\stQType}$.
          Then the thesis holds by applying \inferrule{\iruleMPCancel} on the induction hypothesis $\stJudge{\mpEnv}{\stEnvQi \tcFmt{+}\, \cUnit}{\timePass{\cUnit}{\mpQ}}$.

\item Case \inferrule{\iruleMPFailed}:  $\timePass{\cUnit}{\mpP} = \mpFailedP{\mpP}$. The thesis holds as $\mpFailedP{\mpP}$ can be typed by any environment. 

\item Case \inferrule{\iruleMPResPropG}: $\mpP =  \mpRes{\stEnvMap{\mpS}{\stEnvi}}\mpQ$ and $\timePass{\cUnit}{\mpP} = \mpRes{\stEnvMap{\mpS}{\stEnvi}}{\timePass{\cUnit}{\mpQ}}$. We have 
$\stJudge{\mpEnv}{%
        \stEnvQ \stEnvQComp \stEnvi%
      }{%
        \mpQ
      }$, $\mpS \not\in \stEnvQ$, and $\stEnvAssoc{\gtWithTime{\cVal}{\gtG}}{\stEnvi}{\mpS}$. The thesis holds by applying \inferrule{\iruleMPResPropG} the  induction hypothesis  $\stJudge{\mpEnv}{%
        \stEnvQ \tcFmt{+}\, \cUnit \stEnvQComp \stEnvi \tcFmt{+}\, \cUnit
      }{%
        \timePass{\cUnit}{\mpQ}
      }$, $\mpS \not\in \stEnv \tcFmt{+}\, \cUnit$, and the fact that there exists $\gtWithTime{\cVali}{\gtGi}$ such that $\stEnvAssoc{\gtWithTime{\cVali}{\gtGi}}{\stEnvi \tcFmt{+}\, \cUnit}{\mpS}$~(\Cref{lem:time_assoc_comp,lem:stenv_time_action}). 

 \item Case \inferrule{\iruleMPQueueEmpty}: $\mpP =
 \mpSessionQueueO{\mpS}{\roleP}{\mpQueueEmpty}$ and
 $\timePass{\cUnit}{\mpP} =  \mpSessionQueueO{\mpS}{\roleP}{\mpQueueEmpty}$.
 We have $\stEnvEndP{\stEnv}$, and thus, the thesis holds by $\stEnvEndP{\stEnv \tcFmt{+}\, \cUnit}$. 
 \item Case \inferrule{\iruleMPQueue}: $\mpP = \mpQueueCons{
    \mpQueueOElem{\roleQ}{\mpLab}{\mpChanRole{\mpSi}{\roleR}}
  }{
    \mpQueue
  }$ and  $\timePass{\cUnit}{\mpP}  = \mpP$. We have $\stJudge{\mpEnv}{\stEnvQ}{%
        \mpSessionQueueO{\mpS}{\roleP}{%
          \mpQueue%
        }
      }$, $\stS = \stCPair{\ccst}{\stT}$, $\cVal \models \ccst$, and $\stEnvEntails{\stEnvi}{\mpChanRole{\mpSi}{\roleR}}{\stCPair{\cVal}{\stT}}$. By induction hypotheis, $\stJudge{\mpEnv}{\stEnvQ \tcFmt{+}\, \cUnit}{%
        \mpSessionQueueO{\mpS}{\roleP}{%
          \mpQueue%
        }
      }$. We are left to show $\cVal \tcFmt{+}\, \cUnit \models \ccst$, which is straightforward from the fact that  
      $\cVal$ can be any clock assignment such that $\cVal \models \ccst$, allowing us to set it up such that both 
      $\cVal$ and $\cVal \,\tcFmt{+}\, \cUnit$ satisfy $\ccst$.
\qedhere 
\end{itemize}
\end{proof}

\lemSubjectReductionGlobal*
\begin{proof}
 Let us recap the assumptions:
  \begin{align}
    \label{item:subjred:typed-unused}
    &\stJudge{\mpEnv}{\stEnv}{\mpP}
    \\
    \label{item:subjred:stenv-assoc}
    &\forall \mpS \in \stEnv: \exists
    \gtWithTime{\cVal[\mpS]}{\gtG[\mpS]}: \stEnvAssoc{\gtWithTime{\cVal[\mpS]}{\gtG[\mpS]}}{\stEnv[\mpS]}{\mpS} 
    \\
    \label{item:subjred:no-reliable-crash}
    &\mpP \!\mpMove\! \mpPi
  \end{align}
By \inferrule{\iruleMPRedInstant} and
 \inferrule{\iruleMPRedTimeConsume}, we know that $\mpP \mpMove \mpPi$
 is derived from either $\mpP \mpnonTMove \mpPi$ or $\mpP \mpMoveTime \mpPi$. Therefore,
  the proof can proceed by induction on the derivation of $\mpP \mpnonTMove \mpPi$
  and $\mpP \mpMoveTime \mpPi$.
  When the reduction holds by rule $\inferrule{\iruleMPRedCtx}$, %
  a further structural induction on the reduction context $\mpCtx$ is required.
  When the reduction involves a try-catch context $\mpEtx$, we can observe from the typing rule
$\inferrule{\iruleMPTry}$ that $\mpEtxApp{\mpEtx}{\mpP}$ is typed using the same typing environment
as $\mpP$. This implies that we only need to focus on reductions where $\mpEtx =  \mpCtxHole$.
  Most cases hold %
  by inversion of the typing $\stJudge{\mpEnv}{\stEnv}{\mpP}$, %
  and by applying the induction hypothesis. %
\begin{itemize}[left=0pt, topsep=0pt]
\item Case \inferrule{\iruleMPRedOut}: 
{\small 
 We have:
  \begin{flalign}
    &\label{eq:async-q-red-typ:add:typing}%
    \begin{array}{@{\hskip 0mm}l@{\hskip 0mm}}%
      \stEnvQ = \stEnvQ[\stIntC] \stEnvQComp \stEnvQ[\mpQueue] %
    \end{array}
    \text{ and  }%
    \begin{array}{@{\hskip 0mm}c@{\hskip 0mm}}
      \inference[\iruleMPPar]{%
        \stQJudge{\mpEnv}{%
          \stEnvQ[\stIntC]%
        }{%
        }{%
          \mpTSel{\mpChanRole{\mpS}{\roleP}}{\roleQ}{\mpLab}{%
            \mpChanRole{\mpSi}{\roleQi}%
          }{\mpPi}{\mathfrak{n}}}
        \quad%
        \stQJudge{\mpEnv}{%
          \stEnvQ[\mpQueue]%
        }{%
        }{%
           \mpSessionQueueO{\mpS}{\roleP}{\mpQueue}%
        }%
      }{%
        \stQJudge{\mpEnv}{%
          \stEnvQ%
        }{
        }{%
          \mpP%
        }%
      }%
    \end{array}%
    &\text{%
      (inv.~of \inferrule{\iruleMPPar})%
    }  \quad \quad \quad \quad \quad \quad
    \\[1mm]%
    &\label{eq:async-q-red-typ:add:typing-sel-premise}%
    \stEnvQ[\stIntC] = \stEnv[0] \stEnvComp \stEnv[1] \stEnvComp \stEnv[2]%
    \;\;\;\text{and}\;\;\;%
    \inference[\iruleMPSel]{
    \begin{array}{c}
   \forall \cUnit:   \cUnit \leq \mathfrak{n} \Longrightarrow \cVal + \cUnit \models \ccst
   \\
     \stEnvEntails{\stEnv[1]}{\mpChanRole{\mpS}{\roleP}}{%
       \stCPair{\cVal}{\stIntSum{\roleQ}{}{\stTChoice{\stLab}{\stS}{\ccst, \crst} \stSeq \stT}}%
      }
      \\
      \stEnvEntails{\stEnv[2]}{\mpChanRole{\mpSi}{\roleQi}}{\stCPair{\cVali}{\stTi}}%
      \\%
     \forall \cUnit \leq \mathfrak{n}:  \stJudge{\mpEnv}{%
        \stEnv[0] \,\tcFmt{+}\, \cUnit  
        \stEnvComp \stEnvMap{\mpChanRole{\mpS}{\roleP}}{\stCPair{\cValUpd{\cVal + \cUnit}{\crst}{0}}{\stT}}%
      }{%
        \mpPi}
      \end{array}
    }{%
      \stJudge{\mpEnv}{%
        \stEnv[\stIntC]%
      }{%
        \mpTSel{\mpChanRole{\mpS}{\roleP}}{\roleQ}{\mpLab}{%
            \mpChanRole{\mpSi}{\roleQi}%
          }{\mpPi}{\mathfrak{n}}
      }%
    }
    &\text{%
      (by \eqref{eq:async-q-red-typ:add:typing} %
      and inv.~\inferrule{\iruleMPSel})%
    }%
    \\[1mm]
    &\label{eq:async-q-red-typ:add:envii-exists-sub}%
    \exists \stEnvQii = %
    \stEnvQ[0] \stEnvComp%
    \stEnvMap{\mpChanRole{\mpS}{\roleP}}{\stCPair{\cVal}{
      \stIntSum{\roleQ}{}{\stTChoice{\stLab}{\stS}{\ccst, \crst} \stSeq \stT}}%
    }
    \stEnvComp%
    \stEnv[2]%
    \stEnvQComp{\stEnvQ[\mpQueue]}%
    \quad\text{and}\quad%
    \stEnv \stSub \stEnvii%
    &%
    \hspace{-20mm}%
    \text{%
      (by \eqref{eq:async-q-red-typ:add:typing},
        \eqref{eq:async-q-red-typ:add:typing-sel-premise}, %
      \inferrule{\iruleMPSub} %
      and \Cref{def:aat-mpst-typing-env-syntax})%
    }%
  \end{flalign}

We now prove the existence of $\stEnviii$ such that $\stEnvQii \stEnvQMove \stEnvQiii$: 
  \begin{flalign}
   &\label{eq:async-q-red-typ:add:role-present:q-append-typing}
   \stEnvQi[\mpQueue] =
         \stEnvUpd{\stEnv[\mpQueue]}{\mpChanRole{\mpS}{\roleP}}{
          \stQCons{\stQCons{\stEnvApp{\stEnvQ[\mpQueue]}{\mpChanRole{\mpS}{\roleP}}}{\stQMsg{\roleQ}{\stLab}{\stS}}
        }{\stQEmpty}}
      \stEnvQComp%
      \stEnvQ[2]  \;\;\text{and}\;\;%
      \stQJudge{\mpEnv}{%
        \stEnvQi[\mpQueue]%
      }{%
      }{%
         \mpSessionQueueO{\mpS}{\roleP}{\mpQueueCons{\mpQueue}{%
            \mpQueueOElem{\roleQ}{\mpLab}{%
              \mpChanRole{\mpSi}{\roleQi}}}}}
      &\text{%
        (by \eqref{eq:async-q-red-typ:add:typing}, %
        and \inferrule{\iruleMPQueue}
      }%
      \\
      &\label{eq:async-q-red-typ:add:role-present:enviii-exists-move}%
      \exists \stEnvQiii =%
      \stEnv[0] \stEnvComp%
      \stEnvMap{\mpChanRole{\mpS}{\roleP}}{\stCPair{\mpFmt{\cValUpd{\cVal}{\crst}{0}}}{\stT}}%
      \stEnvQComp \stEnvQi[\mpQueue]%
      \;\;\text{such that}\;\;%
      \stEnvQii \stEnvQMove \stEnvQiii%
      &\text{%
       (by %
        \eqref{eq:async-q-red-typ:add:envii-exists-sub}, %
        \eqref{eq:async-q-red-typ:add:role-present:q-append-typing} %
        and \inferrule{\iruleTCtxSend})%
      }%
  \end{flalign}

  Therefore, using %
  $\stEnvQi[\mpQueue]$ %
  from 
  \eqref{eq:async-q-red-typ:add:role-present:q-append-typing}, 
  $\stEnvii$ from \eqref{eq:async-q-red-typ:add:envii-exists-sub}, and $\stEnvQiii$ %
  from \eqref{eq:async-q-red-typ:add:role-present:enviii-exists-move}, 
  we obtain:
  \begin{align}
     &\label{eq:subj-red:comm:stenvii-assoc}%
    \forall \mpS \in \stEnv: \stEnvAssoc{\gtWithTime{\cVal[\mpS]}{\gtG[\mpS]}}{\stEnvii[\mpS]}{\mpS}
    &\text{%
      (by %
      \eqref{item:subjred:stenv-assoc}, 
      \eqref{eq:async-q-red-typ:add:envii-exists-sub},   %
      and \Cref{lem:stenv-assoc-sub})%
    }%
    \\[1mm]%
     &\label{eq:subj-red:comm:stenviii-assoc}%
    \forall \mpS \in \stEnv: \exists \gtWithTime{{\cVal}^{'''}_{\mpS}}{\gtGiii[\mpS]}: \stEnvAssoc{\gtWithTime{{\cVal}^{'''}_{\mpS}}{\gtGiii[\mpS]}}{\stEnviii[\mpS]}{\mpS}%
   &\text{%
      (by \eqref{eq:subj-red:comm:stenvii-assoc}, %
      \eqref{eq:async-q-red-typ:add:role-present:enviii-exists-move},  %
      and \Cref{cor:completeness}) 
   }%
   \\[1mm]
    &\label{eq:async-q-red-typ:add:typing-conclusion}%
      \inference[\iruleMPPar]{%
          \stJudge{\mpEnv}{%
            \stEnv[0] \stEnvComp%
          \stEnvMap{\mpChanRole{\mpS}{\roleP}}{\stCPair{\cValUpd{\cVal}{\crst}{0}}{\stT}}}{\mpPi}
        &%
        \stQJudge{\mpEnv}{%
        \stEnvQi[\mpQueue]%
      }{%
      }{%
         \mpSessionQueueO{\mpS}{\roleP}{\mpQueueCons{\mpQueue}{%
            \mpQueueOElem{\roleQ}{\mpLab}{%
              \mpChanRole{\mpSi}{\roleQi}}}}}
}{%
        \stQJudge{\mpEnv}{\stEnvQiii}{}{%
          \mpPi \mpPar  \mpSessionQueueO{\mpS}{\roleP}{\mpQueueCons{\mpQueue}{%
            \mpQueueOElem{\roleQ}{\mpLab}{%
              \mpChanRole{\mpSi}{\roleQi}}}}}}
       &\text{%
        (by \eqref{eq:async-q-red-typ:add:typing-sel-premise}, %
        \eqref{eq:async-q-red-typ:add:role-present:q-append-typing}, %
      and %
      \eqref{eq:async-q-red-typ:add:role-present:enviii-exists-move})
    }%
    \\%
    &\label{eq:async-q-red-typ:add:stenvi-exists-move-sub}%
    \exists \stEnvQi:\;%
    \stEnvQ \stEnvQMove \stEnvQi%
    \;\;\text{and}\;\;%
    \stEnvQi \stSub \stEnvQiii%
    &%
    \hspace{-30mm}%
    \text{%
      (by \eqref{eq:async-q-red-typ:add:envii-exists-sub}, %
      \eqref{eq:async-q-red-typ:add:role-present:enviii-exists-move}, 
      and \Cref{lem:stenv-assoc-reduction-sub})%
    }%
    \\%
    &\label{eq:async-q-red-typ:add:stenvi-a-safe}%
   \forall \mpS \in \stEnvi: \exists \gtWithTime{\cVal'_{\mpS}}{\gtGi[\mpS]}:  
   \stEnvAssoc{\gtWithTime{\cVali[\mpS]}{\gtGi[\mpS]}}{\stEnvi[\mpS]}{\mpS}
    &%
    \hspace{-30mm}%
    \text{
      (by \eqref{item:subjred:stenv-assoc}, %
      \eqref{eq:async-q-red-typ:add:stenvi-exists-move-sub} %
      and \Cref{cor:completeness})%
    }%
    \\%
    &\label{eq:async-q-red-typ:add:typing-final}%
   \stQJudge{\mpEnv}{\stEnvQi}{}{%
          \mpPi \mpPar  \mpSessionQueueO{\mpS}{\roleP}{\mpQueueCons{\mpQueue}{%
            \mpQueueOElem{\roleQ}{\mpLab}{%
              \mpChanRole{\mpSi}{\roleQi}}}}}
    &\text{%
      (by \eqref{eq:async-q-red-typ:add:typing-conclusion}, %
      \eqref{eq:async-q-red-typ:add:stenvi-exists-move-sub}, %
      and \Cref{lem:narrowing})%
    }%
  \end{align}
  Hence, we conclude the thesis by %
  \eqref{eq:async-q-red-typ:add:stenvi-exists-move-sub}, %
  \eqref{eq:async-q-red-typ:add:stenvi-a-safe} %
  and \eqref{eq:async-q-red-typ:add:typing-final}.}

\item Case \inferrule{\iruleMPRedIn}: dual to the proof for case \inferrule{\iruleMPRedOut}. 

\item Case \inferrule{\iruleMPRedDet}: $\mpP = \delay{\ccst}{\mpQ}$ and $\mpPi = \delay{\cUnit}{\mpQ}$ with 
$\models \ccstSubt{\ccst}{\cUnit}{C}$. %
 By inverting \inferrule{\iruleMPClock} on $\stJudge{\mpEnv}{%
        \stEnvQ }{\delay{\ccst}{\mpQ}}$, we have $\stJudge{\mpEnv}{%
        \stEnvQ }{\delay{\cUnit}{\mpQ}}$, which is the thesis. 

\item Case \inferrule{\iruleMPRedDelay}: by~\Cref{lem:stenv_time_action,lem:time_passing_typing,cor:completeness}. 

\item Case \inferrule{\iruleMPRedCan}: $\mpP = \mpCancel{\mpChanRole{\mpS}{\roleP}}{\mpQ}$, 
           $\mpPi =  \kills{\mpS} \mpPar \mpQ$, and 
           $\stEnv =  \stEnvMap{%
          \mpChanRole{\mpS}{\roleP}%
        }{%
          \stCPair{\cVal}{\stT} %
        } \stEnvComp \stEnvi$ with $\stJudge{\mpEnv}{%
        \stEnvQi%
      }{%
        \mpQ%
      }$. We want to show $\stJudge{\mpEnv}{\stEnvMap{%
          \mpChanRole{\mpS}{\roleP}%
        }{%
          \stCPair{\cVal}{\stT} %
        } \stEnvComp \stEnvQi}{\kills{\mpS} \mpPar \mpQ}$: apply \inferrule{\iruleMPPar} on $\stJudge{\mpEnv}{\stEnvMap{\mpChanRole{\mpS}{\roleP}}{
      \stCPair{\cVal}{\stT}  %
      }}{\kills{\mpS}}$ and $\stJudge{\mpEnv}{\stEnvQi}{\mpQ}$.

\item Case \inferrule{\iruleMPRedTryFail}: $\mpP = \mpFailedP{\mpQ}$ and $\mpPi = \kills{\mpS}$. Since 
$\mpFailedP{\mpQ}$ can be typed by any environment, we have $\stEnv = \stEnvMap{\mpChanRole{\mpS}{\roleP}}{\stCPair{\cVal}{\stT}}$. Thus, 
by applying \inferrule{\iruleMPKill}, it follows directly that $\stJudge{\mpEnv}{\stEnv}{\kills{\mpS}}$, which is the thesis. 

\item Case \inferrule{\iruleMPRedFailCatch}: $\mpP = \trycatch{\mpFailedP{\mpQ}}{\mpQi}$ and $\mpPi = \mpQi \mpPar \kills{\mpS}$ with  $\stJudge{\mpEnv}{%
        \stEnv
      }{%
     \mpQi
      }$: apply \inferrule{\iruleMPPar} on 
      $\stJudge{\mpEnv}{%
        \stEnv
      }{%
       \mpQi 
      }$ and $\stJudge{\mpEnv}{%
       \stFmt{\emptyset}
      }{%
       \kills{\mpS}
      }$ to get $\stJudge{\mpEnv}{%
        \stEnv
      }{%
       \mpQi \mpPar \kills{\mpS}
      }$, which is the thesis. 
      
\item Case \inferrule{\iruleMPCCat}: $\mpP =  \trycatch{\mpQ}{\mpR} \mpPar \kills{\mpS}$, 
$\mpPi = \mpR \mpPar \kills{\mpS}$, and $\stEnv = \stEnv[1] \stEnvComp \stEnv[2]$ with $\stJudge{\mpEnv}{%
        \stEnv[1]
      }{%
       \trycatch{\mpQ}{\mpR}
      }$ and $\stJudge{\mpEnv}{%
        \stEnv[2]
      }{%
       \kills{\mpS}
      }$. By inverting \inferrule{\iruleMPTry} on $\stJudge{\mpEnv}{%
        \stEnv[1]
      }{%
       \trycatch{\mpQ}{\mpR}
      }$, $\stJudge{\mpEnv}{%
        \stEnv[1]
      }{%
       \mpR
      }$. Furthermore, we apply \inferrule{\iruleMPPar} on 
      $\stJudge{\mpEnv}{%
        \stEnv[1]
      }{%
       \mpR
      }$ and $\stJudge{\mpEnv}{%
        \stEnv[2]
      }{%
       \kills{\mpS}
      }$ to get $\stJudge{\mpEnv}{%
        \stEnv
      }{%
       \mpR \mpPar \kills{\mpS}
      }$, which is the thesis.

\item Case \inferrule{\iruleMPRedCanQ}: 
  $\mpP = 
    \mpSessionQueueO{\mpS}{\roleP}{%
      \mpQueueCons{%
        \mpQueueOElem{\roleQ}{\mpLab}{\mpChanRole{\mpSi}{\roleR}}%
      }{\mpQueue}}
    \mpPar
     \kills{\mpS}$, 
      $\mpPi = 
   \mpSessionQueueO{\mpS}{\roleP}{\mpQueue}
   \mpPar \kills{\mpS} \mpPar \kills{\mpSi}$, 
  and 
    $\stEnv =  \stEnv[1] \stEnvComp \stEnv[2]$ 
    with 
    $\stJudge{\mpEnv}{%
          \stEnvQ[1]
        }{%
           \mpSessionQueueO{\mpS}{\roleP}{%
      \mpQueueCons{%
        \mpQueueOElem{\roleQ}{\mpLab}{\mpChanRole{\mpSi}{\roleR}}%
      }{\mpQueue}}}$  and 
      $\stJudge{\mpEnv}{%
          \stEnvQ[2]
        }{ \kills{\mpS}}$. 
 By applying and inverting \inferrule{\iruleMPQueue} on $\stJudge{\mpEnv}{%
          \stEnvQ[1]
        }{%
           \mpSessionQueueO{\mpS}{\roleP}{%
      \mpQueueCons{%
        \mpQueueOElem{\roleQ}{\mpLab}{\mpChanRole{\mpSi}{\roleR}}%
      }{\mpQueue}}}$, 
      we have $\stEnvQ[1] = \stEnvQi[1] \stEnvComp
      \stEnvQii[1]$ such that $\stEnvApp{\stEnvQi[1]}{\mpChanRole{\mpS}{\roleP}} =
     \stQCons{\stQMsg{\roleQ}{\stLab}{\stS}}{\stQType}
     $,
       $\stJudge{\mpEnv}{%
          \stEnvUpd{\stEnvQi[1]}{\mpFmt{\mpChanRole{\mpS}{\roleP}}}{\stQType}}
        {\mpSessionQueueO{\mpS}{\roleP}{\mpQueue}}$,
      $\stS = \stCPair{\ccst}{\stT}$,
       $\cVal \models \ccst$, and
     $\stEnvEntails{\stEnvii[1]}{\mpChanRole{\mpSi}{\roleR}}{\stCPair{\cVal}{\stT}}$.
     Since $\stJudge{\mpEnv}{%
          \stEnvQ[2]
        }{ \kills{\mpS}}$, we can set
        $\stEnv[2] = \stEnvMap{\mpChanRole{\mpS}{\roleQ}}
        {\stCPair{\cVali}{\stExtSum{\roleP}{i \in I}{\stTChoice{\stLab[i]}{\stSi[i]}{\ccst[i], \crst[i]} \stSeq \stT[i]}}}$
       such that there exists $k \in I$ with $\stLab[k] = \stLab$, $\cVali \models \ccst[k]$, and
         $\stS \stSub \stSi[k]$.
         Then, by applying \inferrule{\iruleTCtxRcv} (and \inferrule{\iruleTCtxCongCombined} when necessary),
         we have $\stEnv = \stEnvi[1] \stEnvComp \stEnvii[1] \stEnvComp \stEnv[2]
          \stEnvQTMoveRecvAnnot{\roleQ}{\roleP}{\stLab[k]} \stEnvUpd{\stEnvQi[1]}{\mpFmt{\mpChanRole{\mpS}{\roleP}}}{\stQType}
          \stEnvComp
          \stEnvii[1] \stEnvComp \stEnvMap{\mpChanRole{\mpS}{\roleQ}}{\stCPair{\cValii}{\stT[k]}} = \stEnvi$.
We are left to show: 
\begin{itemize}[left=0pt, topsep=0pt]
\item $\stJudge{\mpEnv}{\stEnvi}
        { \mpSessionQueueO{\mpS}{\roleP}{\mpQueue}} \mpPar \kills{\mpS} \mpPar\kills{\mpSi}$:  apply \inferrule{\iruleMPKill} and \inferrule{\iruleMPPar}. 
 \item $\forall \mpS \in \stEnvi: 
  \exists \gtWithTime{\cVali}{\gtGi}: 
  \stEnvAssoc{\gtWithTime{\cVali}{\gtGi}}{\stEnvi[\mpS]}{\mpS}$: by~\Cref{cor:completeness}. 
        \end{itemize}

\item Case \inferrule{\iruleMPRedCanIn}: 
$\mpP =
     \mpTBranch{\mpChanRole{\mpS}{\roleP}}{\roleQ}{i \in I}{%
      \mpLab[i]}{x_i}{\mpP[i]}{\mathfrak{n}}{}
      \mpPar
     \mpSessionQueueO{\mpS}{\roleQ}{\mpQueue}
     \mpPar
      \kills{\mpS}$,
     $\mpPi =
      \mpRes{\mpSi}{%
        (\mpP[k]\subst{\mpFmt{x_k}}{\mpChanRole{\mpSi}{\roleR}}%
    \mpPar %
    \kills{\mpSi})}
    \mpPar
     \mpSessionQueueO{\mpS}{\roleQ}{\mpQueue}
       \mpPar %
       \kills{\mpS}$, and 
       $\stEnv =  \stEnv[1] \stEnvComp \stEnv[2] \stEnvComp
    \stEnv[3] \stEnvComp \stEnv[4]$ such that
    $\forall i \in I: \stJudge{\mpEnv}{%
          \stEnvQ[1] \stEnvComp%
          \stEnvMap{y_i}{\stCPair{\cVali[i]}{\stTi[i]}} \stEnvComp%
          \stEnvMap{\mpChanRole{\mpS}{\roleP}}{\stCPair{\cValUpd{\cVal}{\crst[i]}{0}}{\stT[i]}}%
        }{%
          \mpP[i]%
        }$,
       $\forall i \in I:  \stEnvEntails{\stEnv[2]}{\mpChanRole{\mpS}{\roleP}}{%
         \stCPair{\cVal}{ \stExtSum{\roleQ}{i \in I}{\stTChoice{\stLab[i]}{\stS[i]}{\ccst[i], \crst[i]} \stSeq \stT[i]}}%
        }$,
        $\stJudge{\mpEnv}{%
        \stEnvQ[3]
      }{ \mpSessionQueueO{\mpS}{\roleQ}{\mpQueue}
        }$ and
        $
        \stJudge{\mpEnv}{%
        \stEnvQ[4]
      }{ \kills{\mpS}
        }$. Then by~\Cref{lem:substitution}, $\stJudge{\mpEnv}{%
          \stEnvQ[1] \stEnvComp%
          \stEnvMap{\mpChanRole{\mpSi}{\roleR}}{\stCPair{\cVali[k]}{\stTi[k]}} \stEnvComp%
          \stEnvMap{\mpChanRole{\mpS}{\roleP}}{\stCPair{\cValUpd{\cVal}{\crst[k]}{0}}{\stT[k]}}%
        }{%
          \mpP[k]%
        }$. By typing $\kills{\mpSi}$ with an appropriate environment $\stEnvii$ such that there exists $\gtWithTime{\cValii}{\gtGii}$ with $\stEnvAssoc{\gtWithTime{\cValii}{\gtGii}}{\stEnvii \stEnvComp   \stEnvMap{\mpChanRole{\mpSi}{\roleR}}{\stCPair{\cVali[k]}{\stTi[k]}}}{\mpSi}$ and applying \inferrule{\iruleMPResPropG}, we can conclude the thesis.

\item Case \inferrule{\iruleMPRedCall}: apply \inferrule{\iruleMPDef}, \inferrule{\iruleMPCall}, and their respective inversions when necessary. 

\item Case \inferrule{\iruleMPRedCtx}:  
by inversion of the rule and \Cref{def:mpst-proc-context}, 
we have to prove the statement in the following sub-cases:
\begin{enumerate}[leftmargin=*]
  \item
  \label{item:subj-red:ctx:par}
    $\mpP = \mpQ \mpPar \mpR$ \;\;and\;\; $\mpPi = \mpQi \mpPar \mpR$ \;\;and\;\; $\mpQ \mpMove \mpQi$
  \item
  \label{item:subj-red:ctx:res}
    $\mpP = \mpRes{\mpSi}{\mpQ}$ \;\;and\;\; $\mpPi = \mpRes{\mpSi}{\mpQi}$ \;\;and\;\; $\mpQ \mpMove \mpQi$
  \item
  \label{item:subj-red:ctx:def}
    $\mpP = \mpDefAbbrev{\mpDefD}{\mpQ}$ \;\;and\;\; $\mpPi = \mpDefAbbrev{\mpDefD}{\mpQi}$ \;\;and\;\; $\mpQ \mpMove \mpQi$
\end{enumerate}
Cases \ref{item:subj-red:ctx:par} and \ref{item:subj-red:ctx:def} are easily proved using the induction hypothesis.  Therefore, here we focus on case \ref{item:subj-red:ctx:res}.
\begin{flalign}
  \label{eq:subj-red:ctx:res:p-typing}%
  &
  \exists \stEnvi, \gtWithTime{\cVal}{\gtG} \;\;\text{such that}\;\;
  \inference[\iruleMPResPropG]{%
    \begin{array}{@{}l@{}}
      \stEnvAssoc{\gtWithTime{\cVal}{\gtG}}{\stEnvi}{\mpSi}
      \\
      \mpSi \!\not\in\! \stEnv
      \qquad%
      \stJudge{\mpEnv}{
        \stEnv \stEnvComp \stEnvi
      }{
        \mpQ
      }
    \end{array}
  }{%
    \stJudge{\mpEnv}{%
      \stEnv%
    }{%
      \mpP %
    }%
  }%
  &\text{%
    (by \ref{item:subj-red:ctx:res}) %
  }
  \\
  \label{eq:subj-red:ctx:res:stenvi-stenvsi}%
  &\exists \stEnvii, \stEnviii \;\;\text{such that}\;
  \left\{\begin{array}{@{}l@{}}
    \mpSi \!\not\in\! \stEnvii
    \\ %
    \stEnv \stEnvMoveStar \stEnvii
    \\ %
    \stEnvi \stEnvMoveStar \stEnviii
    \\ %
    \forall \mpS \in \stEnvii: \exists \gtWithTime{\cValii[\mpS]}{\gtGii[\mpS]}:  \stEnvAssoc{\gtWithTime{\cValii[\mpS]}{\gtGii[\mpS]}}{\stEnvii[\mpS]}{\mpS}
    \\
    \stJudge{\mpEnv}{\stEnvii \stEnvComp \stEnviii}{\mpQi}
  \end{array}\right\}
  &\text{%
    (by \eqref{eq:subj-red:ctx:res:p-typing}
    and inductive hypothesis)
  }
  \\
  \label{eq:subj-red:ctx:res:stenvi-stenvisi}%
  &
  \exists \gtWithTime{\cVali}{\gtGi} \;\;\text{such that}\;\; 
 \stEnvAssoc{\gtWithTime{\cVali}{\gtGi}}{\stEnviii}{\mpSi}
  &\hspace{-20mm}%
  \text{%
    (by \eqref{eq:subj-red:ctx:res:p-typing}, \eqref{eq:subj-red:ctx:res:stenvi-stenvsi}, 
     and~\Cref{lem:comp_proj})
  }
  \\
  \label{eq:subj-red:ctx:res:pi-typing}%
  &
  \inference[\iruleMPResPropG]{%
    \begin{array}{@{}l@{}}
     \stEnvAssoc{\gtWithTime{\cVali}{\gtGi}}{\stEnviii}{\mpSi}
      \\
      \mpSi \!\not\in\! \stEnvii
      \quad%
      \stJudge{\mpEnv}{
        \stEnvii \stEnvComp \stEnviii
      }{
        \mpQi
      }
    \end{array}
  }{%
    \stJudge{\mpEnv}{%
      \stEnvii%
    }{%
      \mpPi %
    }%
  }%
  &\text{%
    (by \eqref{eq:subj-red:ctx:res:stenvi-stenvsi},
    \eqref{eq:subj-red:ctx:res:stenvi-stenvisi}
    and \ref{item:subj-red:ctx:res})
  }
\end{flalign}
Hence, we obtain the thesis by \eqref{eq:subj-red:ctx:res:stenvi-stenvsi} and \eqref{eq:subj-red:ctx:res:pi-typing}.

\item Cases \inferrule{\iruleMPRedCongr},  \inferrule{\iruleMPRedCongrTime}:  apply~\Cref{lem:subject-congruence} and induction hypothesis. 
\qedhere 
\end{itemize}  
\end{proof}

\lemTypeSafety*
\begin{proof}
 From the hypothesis $\mpP \mpMoveStar \mpPi$,
  we know that $\mpP = \mpP[0] \mpMove \mpP[1] \mpMove \cdots
  \mpMove \mpP[n] = \mpPi$ (for some $n$).
  The proof proceeds by induction on $n$.  
  The base case $n=0$ is immediate: we have $\mpP = \mpPi$,
  hence $\mpPi$ is well-typed -- and since the term $\mpCErr$ is not typeable, $\mpPi$ cannot contain such a term.
  In the inductive case $n = m+1$, we know (by the induction hypothesis) that $\mpP[m]$ is well-typed,
  and we apply~\Cref{lem:sr_global} to conclude that $\mpP[m+1] = \mpPi$ is also well-typed and has no $\mpCErr$ subterms.
\qedhere 
\end{proof}

\section{Proofs for Session Fidelity and Deadlock-Freedom}
\label{sec:app-aat-mpst-sf-proof}

\lemSessionFidelityStrengthGlobal*
\begin{proof}
The proof structure is similar to~\cite[Theorem F.8]{DBLP:journals/pacmpl/ScalasY19}:  
by induction on the derivation of the reduction of $\stEnvQ$, we infer the contents of
$\stEnvQ$ and then the shape of $\mpP$ and its sub-processes $\mpP[\roleP]$ and $\kills{\mpQ}$, showing that they can mimic the reduction of $\stEnvQ$. %
\begin{itemize}[left=0pt, topsep=0pt]
\item Case \inferrule{\iruleTCtxSend}:  in this case, the process $\mpP[\roleP]$ playing role $\roleP$ in session $\mpS$ is a selection on $\mpChanRole{\mpS}{\roleP}$ towards $\roleQ$ (possibly within a process definition).  Therefore, by~\inferrule{\iruleMPRedOut} in~\Cref{fig:aat-mpst-pi-semantics}, 
 $\mpP$ can correspondingly reduce to $\mpPi$ by sending 
 a channel endpoint $\mpChanRole{\mpSi}{\rolePi}$ 
 from $\roleP$ to its message queue in session $\mpS$~(possibly after a finite number of transitions under rule \inferrule{\iruleMPRedCall}). 
 The resulting continuation process $\mpPi$ 
 is typed by $\stEnvi$. 
 The assertion that there exists 
 $\gtWithTime{\cVali}{\gtGi}$ such that $\stEnvAssoc{\gtWithTime{\cVali}{\gtGi}}{\stEnvi}{\mpS}$ 
 follows from 
 $\stEnv \stEnvQTMoveQueueAnnot{\roleP}{\roleQ}{\stLab} \stEnvi$~(apply \inferrule{\iruleTCtxCongCombined} when necessary), 
 $\stEnvAssoc{\gtWithTime{\cVal}{\gtG}}{\stEnv}{\mpS}$, 
 and~\Cref{lem:comp_proj}. 

\item Case \inferrule{\iruleTCtxRcv}: 
in this case, the process $\mpP[\roleP]$ playing role $\roleP$ in session $\mpS$ is a branching on 
$\mpChanRole{\mpS}{\roleP}$ from $\roleQ$ (possibly within a process definition).  
Therefore, by~\inferrule{\iruleMPRedIn} in~\Cref{fig:aat-mpst-pi-semantics}, 
 $\mpP$ can correspondingly reduce to $\mpPi$ by receiving 
 an endpoint $\mpChanRole{\mpSi}{\rolePi}$ 
 from the message queue of $\roleQ$ in session $\mpS$~(possibly after a finite number of transitions under rule \inferrule{\iruleMPRedCall}). 
 The resulting continuation process $\mpPi$ 
 is typed by $\stEnvi$. 
 The assertion that there exists 
 $\gtWithTime{\cVali}{\gtGi}$ such that $\stEnvAssoc{\gtWithTime{\cVali}{\gtGi}}{\stEnvi}{\mpS}$ 
 follows from 
 $\stEnv \stEnvQTMoveRecvAnnot{\roleP}{\roleQ}{\stLab} \stEnvi$~(apply \inferrule{\iruleTCtxCongCombined} when necessary), 
 $\stEnvAssoc{\gtWithTime{\cVal}{\gtG}}{\stEnv}{\mpS}$, 
 and~\Cref{lem:comp_proj}. 

\item Case $\stEnvQ  \stEnvQTMoveTimeAnnot \stEnvQi$: by~\Cref{lem:stenv_time_action}, it holds that 
 $\stEnvQi = \stEnvQ \,\tcFmt{+}\, \cUnit$.
Then only two subcases need to be considered:
\begin{itemize}[left=0pt, topsep=0pt]
\item $\mpP = \delay{\ccst}{\mpQ}$:  by applying \inferrule{\iruleMPRedDet}, we have
$\delay{\ccst}{\mpQ}
     \mpnonTMove%
      \delay{\cUnit}{\mpQ}$ with $\models \ccstSubt{\ccst}{\cUnit}{C}$.
    Then the thesis follows by the case of $\mpP = \delay{\cUnit}{\mpQ}$ in~\Cref{lem:time_passing_typing}.
\item $\mpP \neq \delay{\ccst}{\mpQ}$: directly from~\Cref{lem:time_passing_typing}. 
\end{itemize}
 The assertion that there exists 
 $\gtWithTime{\cVali}{\gtGi}$ such that $\stEnvAssoc{\gtWithTime{\cVali}{\gtGi}}{\stEnvi}{\mpS}$ 
 follows from 
 $\stEnvQ  \stEnvQTMoveTimeAnnot \stEnvQi$, 
 $\stEnvAssoc{\gtWithTime{\cVal}{\gtG}}{\stEnv}{\mpS}$, 
 and~\Cref{lem:comp_proj}. 
\item Case \inferrule{\iruleTCtxRec}: by induction hypothesis. 
\end{itemize}
Regarding the preservation of $\mpPi$ \wrt~\Cref{lem:guarded-definitions}, we need to address the following additional cases when $\kills{\mpS}$ only plays role $\roleP$ in $\mpS$, or $\kills{\mpSi}$ is newly generated. 
\begin{itemize}[left=0pt, topsep=0pt]
\item $\kills{\mpS}$ only plays role $\roleP$ in $\mpS$.   We need to analyse the cases of \inferrule{\iruleMPCCat},  \inferrule{\iruleMPRedCanIn}, and \inferrule{\iruleMPRedCanQ}, which are trivial, since 
if 
$\mpP \mpMove \mpPi$, then $\mpPi$ must be of the form $\mpR \mpPar \kills{\mpS}$, indicating that in $\mpPi$, $\kills{\mpS}$ also plays role $\roleP$ in $\mpS$. 

\item $\kills{\mpSi}$ is newly generated.  
\begin{itemize}[left=0pt, topsep=0pt]
\item $\kills{\mpS}$ is generated by applying \inferrule{\iruleMPRedCan}, \inferrule{\iruleMPRedTryFail}, or 
\inferrule{\iruleMPRedFailCatch}: $\kills{\mpS}$ substitutes for the role of $\roleP$ in $\mpS$, and is typed by 
$\stEnv[\roleP]$.

\item $\kills{\mpSi}$ with $\mpSi \neq \mpS$ is generated by \inferrule{\iruleMPRedCanQ}: since 
$\stJudge{\emptyset}{\stEnvMap{
          \mpChanRole{\mpSi}{\roleP}%
        }{
           \stCPair{\cVali}{\stEnd}
        } \stEnvComp \stEnvMap{
          \mpChanRole{\mpSi}{\roleP}%
        }{
           \stQEmptyType
        }}{\kills{\mpSi}}$, by~\Cref{def:unique-role-proc}, we have that $\kills{\mpSi}$ is included in $\kills{\mpQi}$, as desired. 
        \qedhere 
\end{itemize}
\end{itemize}
\end{proof}

\begin{restatable}{proposition}{lemSingleSessionPersistent}%
    \label{lem:single-session-persistent}%
    Assume $\stJudge{\mpEnvEmpty\!}{\!\stEnv}{\!\mpP}$, %
    where %
    $\stEnvAssoc{\gtWithTime{\cVal}{\gtG}}{\stEnv}{\mpS}$, %
    $\mpP \equiv \mpBigPar{\roleP \in I}{\mpP[\roleP]} \mpPar \kills{\mpQ}$, %
    and $\stEnv = \bigcup_{\roleP \in I}\stEnv[\roleP] \cup \stEnv[0]$ %
    such that, $\stJudge{\mpEnvEmpty}{\stEnvQ[0]}{\kills{\mpQ}}$, and for each $\mpP[\roleP]$, %
    we have $\stJudge{\mpEnvEmpty\!}{\stEnv[\roleP]}{\!\mpP[\roleP]}$. %
    Further, assume that each $\mpP[\roleP]$
    is either $\mpNil$ (up to $\equiv$), %
    or only plays $\roleP$ in $\mpS$, by $\stEnv[\roleP]$. %
    Then, $\mpP \mpMove \mpPi$
    implies $\exists \stEnvi, \gtWithTime{\cVali}{\gtGi}$ %
    such that %
    $\stEnv \!\stEnvMoveWithSessionStar[\mpS]\! \stEnvi$ %
     and %
    $\stJudge{\mpEnvEmpty\!}{\!\stEnvi}{\mpPi}$, %
     with  %
    $\stEnvAssoc{\gtWithTime{\cVali}{\gtGi}}{\stEnvi}{\mpS}$, %
    $\mpPi \equiv \mpBigPar{\roleP \in I}{\mpPi[\roleP]} \mpPar \kills{\mpQi}$, %
    and $\stEnvi = \bigcup_{\roleP \in I}\stEnvi[\roleP] \cup \stEnvi[0]$ %
    such that, $\stJudge{\mpEnvEmpty}{\stEnvQi[0]}{\kills{\mpQi}}$, and for each $\mpPi[\roleP]$, 
    we have $\stJudge{\mpEnvEmpty\!}{\stEnvi[\roleP]}{\!\mpPi[\roleP]}$; %
    furthermore, each $\mpPi[\roleP]$
    is $\mpNil$ (up to $\equiv$),
    or only plays $\roleP$ in $\mpS$, by $\stEnvi[\roleP]$.%
\end{restatable}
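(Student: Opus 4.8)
The plan is to obtain this proposition essentially as a corollary of \emph{subject reduction} (\Cref{lem:sr_global}), augmented with a structural induction that tracks the ``one sub-process per role, plus a parallel composition of kills'' shape. First I would check that the hypotheses meet the preconditions of \Cref{lem:sr_global}: we have $\stJudge{\mpEnvEmpty}{\stEnv}{\mpP}$, and since $\stEnvAssoc{\gtWithTime{\cVal}{\gtG}}{\stEnv}{\mpS}$, by \Cref{def:assoc} the only session occurring in $\stEnv$ is $\mpS$, so the side condition ``$\forall \mpSi \in \stEnv: \exists \gtWithTime{\cVali}{\gtGi}: \stEnvAssoc{\gtWithTime{\cVali}{\gtGi}}{\stEnv[\mpSi]}{\mpSi}$'' holds trivially. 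Applying \Cref{lem:sr_global} to $\mpP \mpMove \mpPi$ then yields a $\stEnvi$ with $\stEnv \stEnvMoveStar \stEnvi$, $\stJudge{\mpEnvEmpty}{\stEnvi}{\mpPi}$, and, again because $\mpS$ is the only session, a timed global type $\gtWithTime{\cVali}{\gtGi}$ with $\stEnvAssoc{\gtWithTime{\cVali}{\gtGi}}{\stEnvi}{\mpS}$.

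Next I would upgrade $\stEnv \stEnvMoveStar \stEnvi$ to $\stEnv \stEnvMoveWithSessionStar[\mpS] \stEnvi$. Since $\dom{\stEnv} = \setenum{\mpS}$ (a consequence of association) and environment reductions preserve the domain (\Cref{lem:stenv-red:trivial-4}), every intermediate environment in the reduction sequence has domain $\setenum{\mpS}$; hence every single step is already a step on $\mpS$, and composing them gives $\stEnv \stEnvMoveWithSessionStar[\mpS] \stEnvi$ as required.

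The remaining, and principal, task is to show the structural invariant is preserved, namely that $\mpPi \equiv \mpBigPar{\roleP \in I}{\mpPi[\roleP]} \mpPar \kills{\mpQi}$, that $\stEnvi = \bigcup_{\roleP \in I}\stEnvi[\roleP] \cup \stEnvi[0]$ with $\stJudge{\mpEnvEmpty}{\stEnvi[0]}{\kills{\mpQi}}$ and $\stJudge{\mpEnvEmpty}{\stEnvi[\roleP]}{\mpPi[\roleP]}$, and that each $\mpPi[\roleP]$ is $\equiv \mpNil$ or only plays $\roleP$ in $\mpS$ by $\stEnvi[\roleP]$. I would prove this by induction on the derivation of $\mpP \mpMove \mpPi$, with a case analysis on the last rule, reusing the reasoning already carried out for the ``preservation of $\mpPi$ \wrt~\Cref{lem:guarded-definitions}'' part of the proof of \Cref{lem:aat-mpst-session-fidelity-global}. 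The communication rules \inferrule{\iruleMPRedOut}/\inferrule{\iruleMPRedIn} touch a single $\mpP[\roleP]$ together with the output queue process bundled with it, matching the combined entry $\stMPair{\stCPair{\cVal}{\stT}}{\stQType}$ of $\stEnvi[\roleP]$, and leave every other component untouched; the delay rules \inferrule{\iruleMPRedDet}/\inferrule{\iruleMPRedDelay} preserve the shape because $\timePass{\cUnit}{\cdot}$ distributes over $\mpPar$ and by \Cref{lem:time_passing_typing}; \inferrule{\iruleMPRedCall} is handled via the guarded-definitions condition; and \inferrule{\iruleMPRedCtx}, \inferrule{\iruleMPRedCongr}, \inferrule{\iruleMPRedCongrTime} follow from the induction hypothesis together with subject congruence (\Cref{lem:subject-congruence}).

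The delicate cases, which I expect to be the main obstacle, are the kill-related rules \inferrule{\iruleMPRedCan}, \inferrule{\iruleMPRedTryFail}, \inferrule{\iruleMPRedFailCatch}, \inferrule{\iruleMPCCat}, \inferrule{\iruleMPRedCanIn}, and \inferrule{\iruleMPRedCanQ}: here some $\mpP[\roleP]$ is turned into (or spawns) a $\kills{\cdot}$ term that must be absorbed into $\kills{\mpQi}$ while the matching timed-session/queue entry of $\stEnvi$ is moved from the per-role part into $\stEnvi[0]$, and \inferrule{\iruleMPRedCanIn} additionally introduces a fresh restricted session whose two parallel components (a substituted continuation and a kill) must themselves be shown to meet the single-role/$\mpNil$ condition. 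This is precisely the bookkeeping — simultaneously re-splitting $\stEnvi$ per role, accounting for newly generated kills, and treating the restriction-introducing rule — that makes the corresponding portion of the proof of \Cref{lem:aat-mpst-session-fidelity-global} intricate, and it carries over essentially verbatim here.
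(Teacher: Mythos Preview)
Your proposal is correct and matches the paper's approach: the paper's own proof is a one-liner stating that the result is ``straightforward from the proof of~\Cref{lem:sr_global}, which accounts for all possible transitions from $\mpP$ to $\mpPi$, and in all cases yields the desired properties for its typing environment $\stEnvi$.'' You have simply unpacked this into the three steps (apply subject reduction, observe the single-session constraint forces $\stEnvMoveStar$ to be $\stEnvMoveWithSessionStar[\mpS]$, and replay the per-rule case analysis to track the per-role decomposition and the kill component), and you correctly flag the kill-related rules as the only place where real bookkeeping is required.
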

\begin{proof}
  Straightforward from the proof of~\Cref{lem:sr_global}, which accounts for all possible transitions from $\mpP$ to $\mpPi$, and in all cases yields the desired properties for its typing environment $\stEnvi$. 
\end{proof}

\begin{lemma}
\label{lem:kill_df}
Assume  $\stJudge{\mpEnvEmpty}{\stEnvQ}{\mpP}$, where $\stEnvAssoc{\gtWithTime{\cVal}{\gtG}}{\stEnvQ}{\mpS}$,  $\mpP \equiv%
  \mpBigPar{\roleP \in I}{%
    \mpP[\roleP]%
  }$, and  $\stEnvQ = \bigcup_{\roleP \in I}\stEnvQ[\roleP]$ such that for each $\mpP[\roleP]$, we have 
  $\stJudge{\mpEnvEmpty}{\stEnvQ[\roleP]}{\mpP[\roleP]}$. Further, assume that each $\mpP[\roleP]$ is either 
  $\mpNil$ (up to $\equiv$), or only plays $\roleP$ in $\mpS$, by $\stEnv[\roleP]$. Then, $\mpP \mpPar \kills{\mpS}$ is deadlock-free. 
\end{lemma}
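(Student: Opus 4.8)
\textbf{Proof plan for \Cref{lem:kill_df}.}
The plan is to reduce this lemma to the main deadlock-freedom theorem (\Cref{lem:aat-mpst-process-df-proj}) by supplying the missing ingredients: a session restriction wrapper and an empty process typing. First I would observe that $\mpP \mpPar \kills{\mpS}$ is already typed by $\stEnvQ$, since $\kills{\mpS}$ can be typed via \inferrule{\iruleMPKill} using the endpoints of $\mpS$ present in $\stEnvQ$, and then combined with $\stJudge{\mpEnvEmpty}{\stEnvQ}{\mpP}$ through \inferrule{\iruleMPPar} (after duplicating the relevant entries; more carefully, $\kills{\mpS}$ needs the entries for roles in $\gtRoles{\gtG}$, which $\stEnvQ$ provides because $\stEnvAssoc{\gtWithTime{\cVal}{\gtG}}{\stEnvQ}{\mpS}$ forces $\dom{\stEnvQ}$ to contain exactly those channels). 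Since $\stEnvAssoc{\gtWithTime{\cVal}{\gtG}}{\stEnvQ}{\mpS}$, I would then apply \inferrule{\iruleMPResPropG} to obtain $\stJudge{\mpEnvEmpty}{\stEnvEmpty}{\mpRes{\stEnvMap{\mpS}{\stEnvQ}}{(\mpBigPar{\roleP \in I}{\mpP[\roleP]} \mpPar \kills{\mpS})}}$.

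The next step is to massage the restricted process into the exact syntactic form required by \Cref{lem:aat-mpst-process-df-proj}, namely $\mpRes{\stEnvMap{\mpS}{\stEnvQ}}{\mpBigPar{\roleP \in \gtRoles{\gtG}}{\mpPi[\roleP]}}$ where each $\mpPi[\roleP]$ is $\mpNil$ or only plays $\roleP$ in $\mpS$. Here I would absorb $\kills{\mpS}$ into one of the parallel components — say, redefine $\mpPi[\roleQ] = \mpP[\roleQ] \mpPar \kills{\mpS}$ for a distinguished role $\roleQ \in \gtRoles{\gtG}$ (or, if some $\mpP[\roleP] \equiv \mpNil$, merge $\kills{\mpS}$ there). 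I also need $\cVal = \cVal^{0}$, which \Cref{lem:aat-mpst-process-df-proj} demands; this is fine because every timed global type transition starts from $\cVal^{0}$ (the remark after \Cref{def:tgtype:lts-gt}), and association with $\gtWithTime{\cVal}{\gtG}$ for an \emph{initial} configuration means $\cVal = \cVal^{0}$ — alternatively, I would simply state the lemma's hypothesis as already giving $\cVal = \cVal^{0}$, consistent with how it is used at the top level. Invoking \Cref{lem:aat-mpst-process-df-proj} then yields that $\mpRes{\stEnvMap{\mpS}{\stEnvQ}}{(\mpBigPar{\roleP \in I}{\mpP[\roleP]} \mpPar \kills{\mpS})}$ is deadlock-free.

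The final step is to transfer deadlock-freedom from the restricted process back to the open process $\mpP \mpPar \kills{\mpS}$. For this I would show that reductions of $\mpP \mpPar \kills{\mpS}$ and of its restriction $\mpRes{\stEnvMap{\mpS}{\stEnvQ}}{(\mpP \mpPar \kills{\mpS})}$ are in bijective correspondence up to $\equiv$ — restriction does not block any instantaneous reduction nor any time-passing step (the congruence rules for $\mpRes{\mpS}{-}$ and the homomorphic clause of $\timePass{\cUnit}{-}$ over restriction handle this), and the garbage-collection congruence rule $\mpRes{\mpS}{(\mpSessionQueueO{\mpS}{\roleP[1]}{\mpQueueEmpty} \mpPar \cdots)} \equiv \mpNil$ together with $\mpRes{\mpS}{\kills{\mpS}} \equiv \mpNil$ (and $\mpRes{\mpS}{\mpNil} \equiv \mpNil$) ensures the terminal form $\mpNil \mpPar \Pi_{i} \kills{\mpS[i]}$ projected out of the restriction coincides with the terminal form of the open term. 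So if $\mpP \mpPar \kills{\mpS} \mpnonTMoveStar \mpnonTNotMoveP{\mpPi}$ with $\mpPi$ time-stable, then the restricted version reaches a corresponding normal form, which by the theorem must be $\equiv \mpNil \mpPar \Pi \kills{\mpS[i]}$, hence so is $\mpPi$.

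\textbf{Main obstacle.} The genuinely delicate point is the restriction/unrestriction correspondence in the last step: I must be sure that wrapping in $\mpRes{\mpS}{-}$ neither creates nor destroys reductions, and in particular that the ``time-stable and $\mpnonTMove$-normal'' condition of \Cref{def:async-process-df} is preserved in both directions. The subtle case is when the open process still contains leftover empty queues $\mpSessionQueueO{\mpS}{\roleP}{\mpQueueEmpty}$ that are not yet congruent to $\mpNil$ (they only vanish under restriction via the dedicated congruence rule); I need to argue these are inert — they admit no $\mpnonTMove$ step and are fixed by $\timePass{\cUnit}{-}$ — so they do not violate the open-term deadlock-freedom statement, which indeed allows a residual $\mpNil \mpPar \Pi \kills{\mpS[i]}$ after the queues are collected. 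Handling this cleanly may require a small auxiliary lemma that $\mpnonTMove$-normal, time-stable processes typed under an $\stEnvEndP$ environment restricted to $\mpS$ are exactly parallel compositions of $\mpNil$, empty $\mpS$-queues, and $\kills{\mpS}$, which then collapses to $\mpNil \mpPar \kills{\mpS}$ modulo $\equiv$.
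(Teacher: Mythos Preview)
Your approach has a \textbf{circularity problem}. You propose to derive \Cref{lem:kill_df} from \Cref{lem:aat-mpst-process-df-proj}, but in the paper's development the dependency runs the other way: \Cref{lem:aat-mpst-process-df-proj} is stated as a direct corollary of \Cref{lem:aat-mpst-process-df}, and the proof of \Cref{lem:aat-mpst-process-df} invokes \Cref{lem:kill_df} in its second case (when the typing environment can only perform time reductions and a timeout eventually fires a $\kills{\mpS}$). So reducing \Cref{lem:kill_df} to \Cref{lem:aat-mpst-process-df-proj} would make the argument circular. Relatedly, your attempt to discharge the $\cVal = \cVal^{0}$ requirement by ``stating the lemma's hypothesis as already giving $\cVal = \cVal^{0}$'' does not fit how \Cref{lem:kill_df} is actually used: it is applied to an \emph{intermediate} process $\mpP[j]$ reached after several reductions, whose typing environment is associated with some $\gtWithTime{\cVal}{\gtG}$ where $\cVal$ is in general not initial.

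The paper instead proves \Cref{lem:kill_df} \emph{directly}, and the key idea you are missing is that the presence of $\kills{\mpS}$ in parallel makes almost every non-trivial component reducible. Concretely: take any $\mpPi$ with $\mpP \mpPar \kills{\mpS} \mpMoveStar \mpnonTNotMoveP{\mpPi}$ and $\timePass{\cUnit}{\mpPi}=\mpPi$ for all $\cUnit$. By \Cref{lem:single-session-persistent}, $\mpPi$ still decomposes as $\mpBigPar{\roleP}{\mpPi[\roleP]} \mpPar \kills{\mpS} \mpPar \kills{\mpQi}$ with the single-session invariants preserved, and each $\mpPi[\roleP]$ splits as $\mpPii[\roleP] \mpPar \mpSessionQueueO{\mpS}{\roleP}{\mpQueue}$ with $\procSubject{\mpPii[\roleP]} = \setenum{\mpChanRole{\mpS}{\roleP}}$ (or $\mpPii[\roleP]\equiv\mpNil$). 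Now do a case analysis on $\mpPii[\roleP]$: a selection fires \inferrule{\iruleMPRedOut}; a branching fires \inferrule{\iruleMPRedIn} or, crucially, \inferrule{\iruleMPRedCanIn} against the ambient $\kills{\mpS}$; a non-empty queue fires \inferrule{\iruleMPRedCanQ} against $\kills{\mpS}$; a $\trycatchB$ fires \inferrule{\iruleMPCCat} against $\kills{\mpS}$; $\mpFmt{\romanF{delay}(\ccst)}$, $\mpFailedP{\cdot}$, $\mpFmt{\proclit{cancel}}$, $\mpFmt{\mathbf{def}}$ each have an instantaneous reduction; and $\mpFmt{\romanF{delay}(\cUnit)}$ with $\cUnit>0$ violates time-stability. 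Every case except $\mpPii[\roleP]\equiv\mpNil$ with $\mpQueue=\mpQueueEmpty$ contradicts $\mpnonTNotMoveP{\mpPi}$ or time-stability, so $\mpPi \equiv \mpNil \mpPar \kills{\mpS} \mpPar \kills{\mpQi}$ as required. The restriction/unrestriction correspondence you flagged as the ``main obstacle'' is simply not needed.
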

\begin{proof}
By $\stJudge{\mpEnvEmpty}{\stEnvEmpty}{\kills{\mpS}}$, we have that $\stJudge{\mpEnvEmpty\!}{\!\stEnv}{\!\mpP \mpPar \kills{\mpS}}$, 
    where 
    $\stEnvAssoc{\gtWithTime{\cVal}{\gtG}}{\stEnv}{\mpS}$. 

Consider any $\mpPi$ such that $\mpP \mpPar \kills{\mpS} \!\mpMoveStar\! \mpnonTNotMoveP{\mpPi}$  and 
$\forall \cUnit \geq 0: \timePass{\cUnit}{\mpPi} = \mpPi$. 
  By~\Cref{lem:single-session-persistent}, we know that $\mpPi$ is of the form $\mpBigPar{\roleP \in I}{\mpPi[\roleP]} \mpPar \kills{\mpS} \mpPar \kills{\mpQi}$, and its typing environment $\stEnvi$ is of the form 
  $\bigcup_{\roleP \in I}\stEnvi[\roleP] \cup \stEnvi[0]$ %
    such that, there exists $\gtWithTime{\cVali}{\gtGi}$ with $\stEnvAssoc{\gtWithTime{\cVali}{\gtGi}}{\stEnvi}{\mpS}$, $\stJudge{\mpEnvEmpty}{\stEnvQi[0]}{\kills{\mpS} \mpPar \kills{\mpQi}}$, and for each $\mpPi[\roleP]$, 
    we have $\stJudge{\mpEnvEmpty\!}{\stEnvi[\roleP]}{\!\mpPi[\roleP]}$. 
    Furthermore, $\mpPi$ satisfies the single-session requirements of \Cref{lem:aat-mpst-session-fidelity-global}.  
  
  Since $\mpPi[\roleP]$ is typed by $\stEnvi[\roleP]$, with the association $\stEnvAssoc{\gtWithTime{\cVali}{\gtGi}}{\stEnvi}{\mpS}$, we have that $\mpPi[\roleP]$ is of the form $\mpPii[\roleP] \mpPar \mpSessionQueueO{\mpS}{\roleP}{\mpQueue}$ such that $\mpPii[\roleP]$ is not a message queue. Due to $\mpPi$ satisfying the 
  single-session requirements of \Cref{lem:aat-mpst-session-fidelity-global}, $\procSubject{\mpPii[\roleP]} = \setenum{\mpChanRole{\mpS}{\roleP}}$ if $\mpP \neq \mpNil$. 
  We are left to show that $\mpPii[\roleP] \mpPar \mpSessionQueueO{\mpS}{\roleP}{\mpQueue} \equiv \mpNil \mpPar \kills{\mpQi[\roleP]}$ for each $\roleP$.   
The proof proceeds by induction on  $\mpPii[\roleP]$. 
 \begin{itemize}[left=0pt, topsep=0pt]
 \item $\mpPii[\roleP] = \mpNil$: $\mpQueue$ must be $\mpQueueEmpty$, as otherwise, 
 $\mpSessionQueueO{\mpS}{\roleP}{\mpQueue} \mpPar \kills{\mpS} \mpnonTMove$, a contradiction to $\mpnonTNotMoveP{\mpPi}$. 
 
 \item $\mpPii[\roleP] = \mpTSel{\mpChanRole{\mpS}{\roleP}}{\roleQ}{\mpLab}{\mpD}{\mpR}{\mathfrak{n}}$: 
 $\mpPii[\roleP] \mpPar \mpSessionQueueO{\mpS}{\roleP}{\mpQueue} \mpnonTMove$, 
 a  contradiction to $\mpnonTNotMoveP{\mpPi}$. 
 
\item  $\mpPii[\roleP] = \mpTBranch{\mpChanRole{\mpS}{\roleP}}{\roleQ}{i \in I}{\mpLab[i]}{x_i}{\mpR[i]}{\mathfrak{n}}{}$: 
either  $\mpPii[\roleP] \mpPar \mpSessionQueueO{\mpS}{\roleP}{\mpQueue} \mpnonTMove$, or 
$\mpPii[\roleP] \mpPar \mpSessionQueueO{\mpS}{\roleP}{\mpQueue} \mpPar  \kills{\mpS} \mpnonTMove$
 a  contradiction to $\mpnonTNotMoveP{\mpPi}$. 
 
 \item $\mpPii[\roleP] =  \mpDefAbbrev{\mpDefD}{\mpR}$:  $\mpPii[\roleP] \mpnonTMove$,  a  contradiction to $\mpnonTNotMoveP{\mpPi}$. 
 
 \item $\mpPii[\roleP] = \delay{\ccst}{\mpR}$: $\mpPii[\roleP] \mpnonTMove$,  a  contradiction to 
 $\mpnonTNotMoveP{\mpPi}$. 
 
 \item $\mpPii[\roleP] = \delay{\cUnit}{\mpR}$:  we only consider $\cUnit > 0$ as $\delay{0}{\mpR} \equiv \mpR$. Then $\timePass{\cUniti}{\mpPii[\roleP]} \neq \mpPii[\roleP]$ for any $\cUnitii \leq \cUnit$, a contradiction.
 
 \item  $\mpPii[\roleP] = \mpFailedP{\mpR}$: since $\procSubject{\mpPii[\roleP]} = \setenum{\mpChanRole{\mpS}{\roleP}}$, we have $\mpPii[\roleP] \mpnonTMove$, a  contradiction to $\mpnonTNotMoveP{\mpPi}$. 
 
 \item $\mpPii[\roleP] = \trycatch{\mpR}{\mpRi}$: since $\procSubject{\mpR} = \procSubject{\mpPii[\roleP]} = \setenum{\mpChanRole{\mpS}{\roleP}}$, we have $\mpPii[\roleP] \mpPar \kills{\mpS} \mpnonTMove$, a  contradiction to $\mpnonTNotMoveP{\mpPi}$. 
 
 \item $\mpPii[\roleP] = \mpCancel{\mpChanRole{\mpS}{\roleP}}{\mpR}$: $\mpPii[\roleP] \mpnonTMove$,  a  contradiction to $\mpnonTNotMoveP{\mpPi}$. 
 \qedhere 
 \end{itemize} 

\end{proof}

\begin{lemma}
  \label{lem:aat-mpst-process-df}%
Assume  $\stJudge{\mpEnvEmpty}{\stEnvQ}{\mpP}$, where $\stEnvAssoc{\gtWithTime{\cVal}{\gtG}}{\stEnvQ}{\mpS}$,  $\mpP \equiv%
  \mpBigPar{\roleP \in I}{%
    \mpP[\roleP]%
  }$, and  $\stEnvQ = \bigcup_{\roleP \in I}\stEnvQ[\roleP]$ such that for each $\mpP[\roleP]$, we have 
  $\stJudge{\mpEnvEmpty}{\stEnvQ[\roleP]}{\mpP[\roleP]}$. Further, assume that each $\mpP[\roleP]$ is either 
  $\mpNil$ (up to $\equiv$), or only plays $\roleP$ in $\mpS$, by $\stEnv[\roleP]$. 
 Then, $\mpP$ is deadlock-free. %
  \end{lemma}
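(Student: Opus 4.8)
The argument mirrors that of \Cref{lem:kill_df}, with the extra work of coping with the absence of a $\kills{\mpS}$ sub-process at the outset. Fix a reduct witnessing a potential deadlock: suppose $\mpP \mpMoveStar \mpnonTNotMoveP{\mpPi}$ and $\timePass{\cUnit}{\mpPi} = \mpPi$ for all $\cUnit \ge 0$. First I would lift the hypotheses through the reduction using \Cref{lem:single-session-persistent}, obtaining a typing environment $\stEnvi$ and a timed global type $\gtWithTime{\cVali}{\gtGi}$ with $\stJudge{\mpEnvEmpty}{\stEnvi}{\mpPi}$, $\stEnvAssoc{\gtWithTime{\cVali}{\gtGi}}{\stEnvi}{\mpS}$, $\mpPi \equiv \mpBigPar{\roleP \in I}{\mpPi[\roleP]} \mpPar \kills{\mpQi}$, and each $\mpPi[\roleP]$ either $\equiv \mpNil$ or only playing $\roleP$ in $\mpS$ by $\stEnvi[\roleP]$. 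Association then forces each such $\mpPi[\roleP]$ to have the form $\mpPii[\roleP] \mpPar \mpSessionQueueO{\mpS}{\roleP}{\mpQueue[\roleP]}$ with $\mpPii[\roleP]$ not a queue process, and the entry of $\stEnvi[\roleP]$ at $\mpChanRole{\mpS}{\roleP}$ to be a timed-session/queue type whose session part $\stT[\roleP]$ is a supertype of $\gtProj{\gtGi}{\roleP}$. The target is to show every $\mpPii[\roleP] \equiv \mpNil$ and every $\mpQueue[\roleP] = \mpQueueEmpty$, which yields $\mpPi \equiv \mpNil \mpPar \Pi_{i} \kills{\mpS[i]}$ after garbage-collecting the (empty) queue processes -- the latter step being exactly what is performed under the enclosing restriction in \Cref{lem:aat-mpst-process-df-proj}.

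Next I would carry out a shape analysis of $\mpPii[\roleP]$ dictated jointly by $\mpnonTNotMoveP{\mpPi}$ and time-stability, also peeling off any $\trycatchB$ contexts. Each of the following is impossible: a timed selection on $\mpChanRole{\mpS}{\roleP}$ (then \inferrule{\iruleMPRedOut} applies with the matching queue, for any timeout $\mathfrak{n}$); a time-consuming delay $\delay{\ccst}{\mpQ}$ (then \inferrule{\iruleMPRedDet} applies); a deterministic delay $\delay{t}{\mpQ}$ with $t>0$, or a timed branching with a finite timeout (both ruled out because $\timePass{\cUnit}{\cdot}$ would strictly change them or turn them into a timeout failure, contradicting $\timePass{\cUnit}{\mpPi} = \mpPi$); a timeout failure $\mpFailedP{\mpQ}$ (its subject is a single channel on $\mpS$ by well-typedness, so \inferrule{\iruleMPRedTryFail} applies); a cancellation on $\mpChanRole{\mpS}{\roleP}$ (\inferrule{\iruleMPRedCan}); an unguarded recursive call (excluded since $\mpPi[\roleP]$ has guarded definitions, \Cref{lem:guarded-definitions}); and $\mpCErr$ (excluded via \Cref{cor:type-safety} for the overall restricted system). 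Hence $\mpPii[\roleP]$ is either $\mpNil$ or a \emph{blocking} branching on $\mpChanRole{\mpS}{\roleP}$ (timeout $\infty$), possibly under $\trycatchB$ contexts. If $\mpPi$ does contain a $\kills{\mpS}$, then even the blocking branchings reduce -- by \inferrule{\iruleMPRedCanIn} when the sender's queue holds no message for $\roleP$, and by \inferrule{\iruleMPRedIn} or \inferrule{\iruleMPRedErr} otherwise -- and any non-empty queue reduces by \inferrule{\iruleMPRedCanQ}; so all $\mpPii[\roleP] \equiv \mpNil$ with empty queues, and we are done (this is precisely the reasoning of \Cref{lem:kill_df}).

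The remaining case -- no $\kills{\mpS}$ present -- is where I expect the main difficulty, and it is handled via the association with $\gtGi$. If $\gtGi = \gtEnd$, then by \Cref{lem:gt-end-assoc} all session-type entries of $\stEnvi$ are $\stEnd$ and all queue entries are empty; but ``only plays $\roleP$ in $\mpS$'' requires the entry at $\mpChanRole{\mpS}{\roleP}$ to be $\stNotSub \stCPair{\cVal}{\stEnd}$, so no $\mpPii[\roleP]$ can play a role, forcing $\mpPii[\roleP] \equiv \mpNil$ and $\mpQueue[\roleP] = \mpQueueEmpty$ for every $\roleP$. Otherwise $\unfoldOne{\gtGi}$ is a transmission or a transmission en route (\Cref{lem:unfold-no-rec}), and I would derive a contradiction. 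For an ordinary transmission with sender $\roleP$, inversion of projection and subtyping (\Cref{lem:inv-proj}, \Cref{lem:subtyping-invert}) makes $\unfoldOne{\stT[\roleP]}$ an internal choice, against which neither $\mpNil$ nor a branching can be typed -- contradicting the shape analysis. For a transmission en route from $\roleP$ to $\roleQ$, the queue clause of \Cref{def:assoc} together with the queue-type subtyping of \Cref{fig:sub_cong_com_types} forces the actual queue $\mpQueue[\roleP]$ to carry a pending message for $\roleQ$, while $\gtProj{\gtGi}{\roleQ}$ is an external choice from $\roleP$, so $\mpPii[\roleQ]$ is a blocking branching from $\roleP$; then \inferrule{\iruleMPRedIn} or \inferrule{\iruleMPRedErr} fires on $\mpPii[\roleQ] \mpPar \mpSessionQueueO{\mpS}{\roleP}{\mpQueue[\roleP]}$, contradicting $\mpnonTNotMoveP{\mpPi}$. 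The subtle points I anticipate are: transporting the type-level distinctions (internal vs.\ external choice, queue contents) down to the process level through subtyping and $\trycatchB$-context peeling; ensuring the time dimension is handled correctly (a transmission enabled only after a delay cannot occur in a time-stable $\mpPi$, since the responsible $\mpPii[\roleP]$ would then contain a $\delay{\ccst}{\mpQ}$ or a finite-timeout prefix, both already excluded); and the bookkeeping needed to fold residual empty queues and kill processes into the normal form $\mpNil \mpPar \Pi_i \kills{\mpS[i]}$.
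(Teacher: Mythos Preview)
Your proof is correct but follows a genuinely different route from the paper's. The paper argues by a two-case split on whether the typing environment $\stEnvQ[n]$ of the stuck reduct $\mpPi$ is \emph{ended} (all entries $\stMPair{\stCPair{\cVal}{\stEnd}}{\stQEmptyType}$): if so, inversion of typing gives $\mpPi \equiv \mpNil \mpPar \Pi_{\roleP}\mpSessionQueueO{\mpS}{\roleP}{\mpQueueEmpty} \mpPar \kills{\mpQ}$ directly; if not, the paper appeals to \emph{session fidelity} (\Cref{lem:aat-mpst-session-fidelity-global}) to argue that along the trajectory to $\mpPi$ the environment eventually can only perform time transitions, forcing the process to perform a time step that produces a timeout and hence a $\kills{\mpS}$, after which \Cref{lem:kill_df} finishes the job. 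Your argument bypasses session fidelity entirely: you perform the shape analysis at $\mpPi$ itself and then discharge the ``no $\kills{\mpS}$'' case by a head-constructor case split on the associated timed global type $\gtGi$, exploiting \Cref{lem:inv-proj} and \Cref{def:assoc} (the sender of a transmission must carry an internal-choice type, incompatible with $\mpNil$/branching; an en-route transmission forces a matching queued message that fires the receiver's branching). What your approach buys is a self-contained argument that reuses only the projection/association machinery already needed for subject reduction; what the paper's approach buys is brevity, since the ``timeout must eventually arise'' step is absorbed into session fidelity and \Cref{lem:kill_df}, though that step is stated rather tersely there. Both lines are sound, and you correctly flag the residual point about garbage-collecting the empty queues, which indeed is only discharged under the enclosing restriction in \Cref{lem:aat-mpst-process-df-proj}; the paper's proof makes the same move.
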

\begin{proof}
We write $\stEnvQTypeEndP{\stEnvQ}$
if $\forall \mpChanRole{\mpS}{\roleP} \in \dom{\stEnvQ}: \stEnvApp{\stEnvQ}{\mpChanRole{\mpS}{\roleP}} =  \stMPair{\stCPair{\cVal[\roleP]}{\stEnd}}{\stQEmptyType}$.

 Consider any $\mpPi$ such that $\mpP \!\mpMoveStar\! \mpnonTNotMoveP{\mpPi}$, 
$\forall \cUnit \geq 0: \timePass{\cUnit}{\mpPi} = \mpPi$, and 
  $\mpP = \mpP[0] \!\mpMove\! \mpP[1] \!\mpMove\! \cdots \!\mpMove\! \mpP[n] = \mpNotMoveP{\mpPi}$ (for some $n$)
  with each reduction $\mpP[i] \!\mpMove\! \mpP[i+1]$ ($i \!\in\! 0...n\!-\!1$). 
  By~\Cref{lem:single-session-persistent}, we know that each $\mpP[i]$ is well-typed
  and its typing environment $\stEnv[i]$ is such that $\stEnv \stEnvMoveWithSessionStar[\mpS] \stEnv[i]$;
The proof proceeds by considering the following two cases:

\begin{itemize}[left=0pt, topsep=0pt]
\item $\stEnvQTypeEndP{\stEnvQ[n]}$:  by~\Cref{lem:aat-mpst-session-fidelity-global} (session fidelity), we know that $\mpPi$ is typed by $\stEnvQ[n]$. 
Therefore, by inversion of typing and $\stEnvQTypeEndP{\stEnvQ[n]}$, we have $\mpPi \equiv \mpNil \mpPar \Pi_{\roleP} \mpSessionQueueO{\mpS}{\roleP}{\mpQueueEmpty} \mpPar \kills{\mpQ} \equiv \mpNil \mpPar  \kills{\mpQ}$, which is the thesis.

\item $\stEnvQTypeEndP{\stEnvQ[n]}$  not satisfied: there exists $i < n$ such that starting from $i$, 
$\stEnv[i] $ can only undergo reduction via $\stEnvQTMoveTimeAnnot$  with $\stEnvQTypeEndP{\stEnvQ[i]}$ not satisfied.  Therefore, by~\Cref{lem:aat-mpst-session-fidelity-global}, some $\mpP[j]$ with $j \geq i$ performs a time reduction, resulting in a timeout, and consequently $\kills{\mpS}$. Then,  the thesis holds by~\Cref{lem:kill_df}. 
\qedhere 
\end{itemize}
\end{proof}

\corATMPProcessDfProj*
\begin{proof}
A direct corollary of~\Cref{lem:aat-mpst-process-df}.
\end{proof}

\section{\texorpdfstring{Extended Details for~\Cref{SEC:IMPLEMENTATION:IMPLEMENTATION}}{}}
\label{sec:att-mpst-app-impl}
\label{SEC:ATT-MPST-APP-IMPL}

\subsection{Implementing Time Bounds}
\label{appendix:subsec:implementing_time_bounds}

To demonstrate the implementation of time bounds in \timedmulticrusty,
we focus on the final interaction between $\roleFmt{Sen}$ and ${\roleFmt{Sat}}$,
from the point of view of $\roleFmt{Sat}$.
In this case,
${\roleFmt{Sat}}$ sends a \CODE{Close} message between time units $5$ and $6$ (both inclusive),
following the clock $C_{\roleFmt{Sat2}}$.
The clock $C_{\roleFmt{Sat2}}$ is not reset at the end of this operation.
In \timedmulticrusty,
we design the \CODE{Send} type for sending messages,
this type includes various parameters to describe the aforementioned requirements,
represented as \CODE{Send<[parameter1], [parameter2], ...>}.

Assuming that the \CODE{Close} (payload) type is defined,
to send it using the \CODE{Send} type,
we would begin by writing \CODE{Send<Close, ...>}.
If we abbreviate $C_{\roleFmt{Sat2}}$ as \CODE{'b'},
we would use the clock \CODE{'b'} for the time constraints and
write \CODE{Send<Close, 'b', ...>}.
The parameters for the time constraints in the \CODE{Send} type are
specified after declaring the clock.
In this case,
both bounds are integers,
and are included in the time window.
Therefore,
the \CODE{Send} type would be parameterised
as \CODE{Send<Close, 'b', 0, true, 1, false, ...>}.
Bounds are integers because
generic type parameters are limited to
\CODE{u8},
\CODE{u16},
\CODE{u32},
\CODE{u64},
\CODE{u128},
\CODE{usize},
\CODE{i8},
\CODE{i16},
\CODE{i32},
\CODE{i64},
\CODE{i128},
\CODE{isize},
\CODE{char} and \CODE{bool} \cite{genericRustNoDate}.

Next,
we ensure that the clock \CODE{'b'} is not reset after triggering
the \emph{send} operation.
This is represented as a whitespace \CODE{char} value in the \CODE{Send} type:
\CODE{Send<Close, 'b', 0, true, 1, false, ' ', ...>}.
The last parameter,
known as the \emph{continuation},
describes the operation that follows the sending of the integer.
In this case,
we need to close the connection,
which is done with an \CODE{End} type.
The complete implementation of the sending operation would
be \CODE{Send<Close, 'b', 0, true, 1, false, ' ', End>}.

In a mirrored fashion,
the receiving end can be instantiated
as \CODE{Recv<Close, 'b', 0, true, 1, false, ' ', End>}.
It is imperative to highlight that \CODE{Send}
and \CODE{Recv} inherently mirror each other,
underscoring their \emph{dual-compatibility}.

\Cref{subfig:overview_types:send,subfig:overview_types:recv}
help explain how
\CODE{Send} and \CODE{Recv} work on the inside,
giving us a detailed look at their parameters and features.
The generic type parameters preceded by \CODE{const} within the
\CODE{Send} and \CODE{Recv} types also act as values.
These represent general type categories that \Rust,
naturally supports.
This type-value duality makes it easier to check
if they match correctly during compilation,
ensuring compatibility between two communicating parties.

We also define time units in relation to clocks,
a topic we explore in more details in
\Cref{subsec:implementation:enforcing_time_constraints}.
In \timedmulticrusty,
each \CODE{Send} or \CODE{Recv} type has a single time constraint.
We have done this for several reasons,
mainly to avoid making the types too complicated and because,
in most situations,
one time constraint is enough.

However,
it is essential to acknowledge that certain specialised cases
necessitate the accommodation of multiple clocks
within specific time windows to validate a singular operation.
These situations typically arise when the clocks
were not initially synchronised or have shifted in relation to each other.
Such issues can be addressed through various strategies,
such as introducing delays to specific clocks
or expanding the protocol with additional message exchanges
to synchronise the clocks.

\subsection{Enforcing Time Constraints}
\label{appendix:subsec:implementation:enforcing_time_constraints}

To enforce time constraints,
it is essential to use
reliable clocks and APIs
that can accurately and efficiently instantiate
and compare clocks to time constraints.
\Rust's standard library provides
the \timeRust \cite{standardTimeLibraryWebsite} module,
which enables developers to start and compare clocks
to measure the duration between events.
The \timeRust library employs OS API on all platforms,
and it is important to note that
the time provided by this library
differs from \emph{wall clock time}.
Wall clock time,
which is the time displayed on a physical clock,
remains steady and monotonic,
where consecutive events have consecutive time stamps in the same order,
and each second has the same length.

However,
the \timeRust library has two types of clocks:
the \CODE{Instant} type, which is monotonic but non-steady,
and the \CODE{SystemTime} type, which is steady but non-monotonic.
It is not possible to have a clock that is both steady and monotonic
due to technical reasons associated with the clock device,
virtualisation,
or OS API bugs.
Nonetheless,
this library provides ways to check
and partially handle
such situations if they arise.
In \timedmulticrusty,
we use the \CODE{Instant} type
since we prioritise the correct order of events.

\timedmulticrusty employs the \timeRust library in two key ways
for the management of virtual clocks.
Firstly,
it uses a dictionary (or \CODE{HashMap} in \Rust)
wherein each key corresponds to a character representing a clock's name,
and the associated value is of type \CODE{Instant}.
When handling message transmissions or receptions,
we compare the relevant clocks with their respective time constraints using this dictionary.

To establish an empty \CODE{HashMap} of this nature,
we employ the following code:
\CODE{let clocks = HashMap::<char, Instant>::new();}.
Subsequently,
a new clock is introduced to the dictionary through the operation
\CODE{clocks.insert('a', Instant::now());},
where \CODE{'a'} denotes the clock's name,
and the method \CODE{Instant::now()} retrieves the current time.

\begin{figure}[htbp]
\begin{rustlisting}
pub fn send<T, [...], S>(
  x: T,
  clocks: &mut HashMap<char, Instant>,
  s: Send<T, CLOCK, START, INCLUDE_START, END, INCLUDE_END, RESET, S>,
) -> Result<S, Error>
where
  T: marker::Send,
  S: Session,
{
  // If there is no lower bound
  if s.start < 0 { *@ \label{appendix:line:implementation:send_type:constraint:lower_bound} @*
    // If there is an upper bound
    if s.end >= 0 { *@ \label{appendix:line:implementation:send_type:constraint:upper_bound} @*
      // If this upper bound is included in the time constraint
      if s.include_end { *@ \label{appendix:line:implementation:send_type:constraint:include_upper_bound} @*
        // If the clock is available among all clocks
        if let Some(clock) = clocks.get_mut(&s.clock) { *@ \label{appendix:line:implementation:send_type:constraint:retrieve_clock} @*
          // If the clock respects the time constraint
          if clock.elapsed().as_secs() <= s.end { *@ \label{appendix:line:implementation:send_type:constraint:check_clock} @*
            // If the clock must be reset
            if s.reset != ' ' { *@ \label{appendix:line:implementation:send_type:constraint:check_reset_clock} @*
              let (here, there) = S::new();
              match s.channel.send_timeout( *@ \label{appendix:line:implementation:send_type:constraint:send_payload} @*
                (x, there),
                Duration::from_secs(s.end) - clock.elapsed(),
              ) {
                Ok(()) => { *@ \label{appendix:line:implementation:send_type:constraint:ok} @*
                  let clock_to_reset = clocks.get_mut(&s.reset).unwrap();
                  *clock_to_reset = Instant::now(); *@ \label{appendix:line:implementation:send_type:constraint:reset_clock} @*
                  Ok(here) *@ \label{appendix:line:implementation:send_type:constraint:continuation} @*
                }
                Err(e) => { *@ \label{appendix:line:implementation:send_type:constraint:error} @*
                  cancel(s);
                  panic!("{}", e.to_string())
                }
              }
              (...)
            }
          }
        }
      }
    }
  }
}
\end{rustlisting}
    \caption{Definition of  \CODE{send} primitive.}
    \label{appendix:subfig:implementation:send_type:constraint}
\end{figure}

Within our \CODE{send} and \CODE{recv} primitives,
we implement a cascade of conditions as depicted in
\Cref{appendix:subfig:implementation:send_type:constraint}.
Notably,
we begin by verifying the integrity of the left-hand side
of the time window (referred to as the \emph{lower bound}) in
\Cref{appendix:line:implementation:send_type:constraint:lower_bound}.
Subsequently,
we scrutinise the right-hand side of the time window
(referred to as the \emph{upper bound}) in
\Cref{appendix:line:implementation:send_type:constraint:upper_bound}.
Following this,
we attempt to retrieve the clock \CODE{clock}
from the \CODE{HashMap} \CODE{clocks} in
\Cref{appendix:line:implementation:send_type:constraint:retrieve_clock}
and compare it with the time constraint associated
with the specific communication operation in
\Cref{appendix:line:implementation:send_type:constraint:check_clock}.
Should \CODE{clock} align correctly with the time constraint,
the payload is sent with a timeout barrier using \CODE{send_timeout} in
\Cref{appendix:line:implementation:send_type:constraint:send_payload}.
If the \CODE{send_timeout} operation proves successful in
\Cref{appendix:line:implementation:send_type:constraint:ok},
and if it is determined in
\Cref{appendix:line:implementation:send_type:constraint:check_reset_clock}
that \CODE{clock} requires resetting,
it is indeed reset in
\Cref{appendix:line:implementation:send_type:constraint:reset_clock}
before ultimately returning the continuation of type \CODE{S}
in \Cref{appendix:line:implementation:send_type:constraint:continuation}.

As discerned in
\Cref{appendix:line:implementation:send_type:constraint:check_clock},
where we gauge time using \CODE{clock.elapsed().as_secs()},
we measure time in seconds for the sake of facilitating testing.
However,
it is important to note that this measurement could alternatively be conducted in milliseconds,
microseconds,
or nanoseconds.
Notably,
we have abstained from employing powerful solvers such as
SMT solvers,
as our focus primarily resides in verification of time constraints
for \Rust (local) types.
In \Cref{appendix:subsec:implementation:top_down},
within \timednuscr,
we elucidate that our verification process entails confirming that,
for each clock,
the left-hand side of each time window precedes the right-hand side of the subsequent one.
This ensures,
for the scenarios we aim to address,
that the time constraints are validated at compile time.

\subsection{Implementation of Remote Data Example}
\label{appendix:subsec:implementation:implenting_running_example}

\begin{figure}[htbp]
\begin{subfigure}{.50\textwidth}
\begin{rustlisting}
// The types of the payloads
struct GetData; struct Data; struct Close;
// New behaviour for the Satellite
enum ChoiceToSat {
    Data(...)
    Close(...)
}
// New behaviour for the Sensor
enum ChoiceToSen {
    Data(...)
    Close(...)
}
// Binary types for the Server
type SerToSatData = Send<GetData, 'a', 5, true,*@ \label{line:Server:type:SerToSatData} @*
    5, true, ' ', Recv<Data, 'a', 6, true, 7, true,
    'a', End>>;
type SerToSenData = End; *@ \label{line:Server:type:SerToSenData} @*
type SerToSatStop = Send<Close, 'a', 5, true,*@ \label{line:Server:type:SerToSatStop} @*
    6, true, ' ', End>;
type SerToSenStop = End; *@ \label{line:Server:type:SerToSenStop} @*
\end{rustlisting}
\end{subfigure}
\hfill
\begin{subfigure}{.48\textwidth}
{\lstset{firstnumber=21}\begin{rustlisting}
type SerToSatChoice = Send<ChoiceToSat, 'a',*@ \label{line:Server:type:SerToSatChoice} @*
    5, true, 6, true, ' ', End>>>;
type SerToSenChoice = Send<ChoiceToSen, 'a',*@ \label{line:Server:type:SerToSenChoice} @*
    5, true, 6, true, ' ', End>>>;
// Orderings where RoleBroadcast allows to broadcast
// (make) a choice to every other role
type StackSerData = RoleSat<RoleSat<*@ \label{line:Server:type:StackSerData} @*
    RoleBroadcast>>;
type StackSerStop = RoleSat<RoleEnd>; *@ \label{line:Server:type:StackSerStop} @*
type StackSer = RoleBroadcast; *@ \label{line:Server:type:StackSer} @*
// MeshedChannels
type EndpointSerData = MeshedChannels<*@ \label{line:Server:type:EndpointSerData} @*
    SerToSatData, SerToSenData, StackSerData,
    NameSer>;
type EndpointSerStop = MeshedChannels<*@ \label{line:Server:type:EndpointSerStop} @*
    SerToSatStop, SerToSenStop, StackSerStop,
    NameSer>;
type EndpointSer = MeshedChannels<*@ \label{line:Server:type:EndpointSer} @*
    SerToSatChoice, SerToSenChoice, StackSer,
    NameSer>;
\end{rustlisting}}
\end{subfigure}
\caption{Types for ${\roleFmt{Ser}}$.}
\label{subfig:implementation:types_server}
\end{figure}

\begin{figure}[htbp]
  \begin{subfigure}{0.4\textwidth}
\begin{rustlisting}
// Running the Server role
fn endpoint_ser(*@ \label{appendix:line:Server:function:endpoint_ser} @*
 s: EndpointSer,
 clocks: &mut HashMap<char, Instant>,
) -> Result<(), Error> {
 // Instantiate the clock
 // and add it to clocks
 clocks.insert(*@ \label{appendix:line:Server:function:endpoint_ser:start_clock:a} @*
  'a',
  Instant::now(),
 );

 // Run the protocol for 100 loops
 recurs_ser(100, s, clocks)
}
\end{rustlisting}
\end{subfigure}
\hfill
\begin{subfigure}{0.59\textwidth}
{\lstset{firstnumber=16}\begin{rustlisting}
// Running the Server role
fn recurs_ser(*@ \label{appendix:line:Server:function:recurs_ser} @*
  loops: i32, s: EndpointSer,
  clocks: &mut HashMap<char, Instant>,
) -> Result<(), Error> {
 match loops {
  0 => {let s: EndpointSerStop = choose_ser!(*@ \label{appendix:line:Server:function:recurs_ser:send:branch_stop} @*
    s, clocks, ChoiceToSat::Close, ChoiceToSen::Close);
   let s = s.send(Close {}, clocks)?;*@ \label{appendix:line:Server:function:recurs_ser:send:Stop} @*
   s.close()},*@ \label{appendix:line:Server:function:endpoint_ser:close} @*
  i => { et s: EndpointSerData = choose_ser!(*@ \label{appendix:line:Server:function:recurs_ser:send:branch_data} @*
    s, clocks, ChoiceToSat::Data, ChoiceToSen::Data);
   let s = s.send(GetData {}, clocks)?;*@ \label{appendix:line:Server:function:recurs_ser:send:GetData} @*
   let (_, s) = s.recv(clocks)?;*@ \label{appendix:line:Server:function:recurs_ser:recv:Data} @*
   recurs_ser(i - 1, s, clocks)}}}*@ \label{appendix:line:Server:function:recurs_ser:send:loop} @*
\end{rustlisting}}
\end{subfigure}
\caption{Primitives for ${\roleFmt{Ser}}$.}
\label{appendix:subfig:implementation:primitives_server}
\end{figure}

\Cref{fig:implementation:remote_data} offers an extensive overview
of the timed protocol associated with our running example.
\Cref{subfig:implementation:types_server,appendix:subfig:implementation:primitives_server},
in turn,
elucidate the implementation of ${\roleFmt{Ser}}$
with \timedmulticrusty.
In this context,
\timedmulticrusty uses the \CODE{MeshedChannels} type,
encompassing $n+1$ parameters,
with $n$ representing the count of roles integrated into the protocol.
These parameters include the role's name
associated with each \CODE{MeshedChannels},\linebreak
$n-1$ binary channels facilitating interaction with other roles,
and a stack that governs the sequence of binary channel usage.

The types relevant to ${\roleFmt{Ser}}$ are displayed in
\Cref{subfig:implementation:types_server}.
Within
\Crefrange{line:Server:type:SerToSatData}{line:Server:type:SerToSenData},
we specify the binary types for the first branch,
\CODE{SerToSatData} and \CODE{SerToSenData},
enabling ${\roleFmt{Ser}}$ to engage with ${\roleFmt{Sat}}$ and ${\roleFmt{Sen}}$,
respectively.
To be precise,
the \CODE{SerToSatData} type in \Cref{line:Server:type:SerToSatData}
signifies that ${\roleFmt{Ser}}$,
on this binary channel,
initiates communication by sending
a message labelled \CODE{GetData} to ${\roleFmt{Sat}}$,
before receiving the data in a message labelled \CODE{Data}
from ${\roleFmt{Sat}}$.
Those two operations employ the clock \CODE{a},
and maintain a time window between $5$ and $6$ second (both inclusive)
for the first operation,
and between $6$ and $7$ second (both inclusive) for the second operation.
This clock is reset only within the second operation.

Correspondingly,
\Cref{line:Server:type:SerToSatStop,line:Server:type:SerToSenStop}
define the binary types for ${\roleFmt{Ser}}$ in the second branch,
and
\Cref{line:Server:type:SerToSatChoice,line:Server:type:SerToSenChoice}
enable ${\roleFmt{Ser}}$ to make the choice.
\CODE{StackSerData} in \Cref{line:Server:type:StackSerData}
delineates the order of operations for the first branch,
wherein ${\roleFmt{Ser}}$ interacts twice with ${\roleFmt{Sat}}$ using \CODE{RoleSat},
prior to dispatching a choice with \CODE{RoleBroadcast}.
Similarly,
\Cref{line:Server:type:StackSerStop,line:Server:type:StackSer}
define the stacks for the second branch and for the choice.

\Crefrange{line:Server:type:EndpointSerData}{line:Server:type:EndpointSer}
encapsulate the aforementioned elements,
along with ${\roleFmt{Ser}}$'s name,
\CODE{NameSer},
into \CODE{MeshedChannels}.
Specifically,
\CODE{EndpointSer} initiates the protocol,
while the other two represent ${\roleFmt{Ser}}$'s behaviour in each branch.
\Cref{appendix:subfig:implementation:primitives_server} illustrates how
the \CODE{EndpointSer} type
is employed as an input in the \Rust function \CODE{endpoint_ser}.
Notably,
the output type of this function,
as indicated in \Cref{appendix:line:Server:function:endpoint_ser},
is \CODE{Result<(), Error>},
exemplifying the use of affineness in \Rust.
The \CODE{endpoint_ser} function will yield either a unit type \CODE{()}
or an \CODE{Error} as output.

Initially,
this function initialises the clock \CODE{a} in
\Cref{appendix:line:Server:function:endpoint_ser:start_clock:a},
inserting it into a mutable reference of type
\CODE{HashMap} named \CODE{clocks}
(\Cref{appendix:line:Server:function:recurs_ser}).
It is imperative to note that this virtual clock is role-specific and not shared with other roles.
Subsequently,
\CODE{endpoint_ser} invokes the function \CODE{recurs_ser} and returns its output.
In the event of an error occurring at any point within either of these functions,
the functions panic at this point and
the remaining code within the same scope is dropped.

In \Cref{appendix:line:Server:function:recurs_ser:send:GetData},
variable \CODE{s} of type \CODE{EndpointSerData},
endeavours to send a contentless message labelled \CODE{GetData}.
Here,
by "endeavours",
we imply that the \CODE{send} function will yield
either a value resembling \CODE{EndpointSerData},
excluding the initial \CODE{Send} operation,
or an \CODE{Error}.
If the clock's time fails to adhere to the time constraint
represented in the type \CODE{SerToSatData}
included in the type \CODE{EndpointSerData},
an \CODE{Error} is raised.

Both \CODE{send} and \CODE{recv} functions serve as syntactic sugar for verifying compliance with time constraints.
They achieve this by comparing the relevant clock provided in the type for the time window and,
if necessary,
resetting the clock before transmitting or receiving the message.
The \CODE{clocks} input in those functions
comes in addition to the inputs already
required in their counterparts in
our previous version of \timedmulticrusty.

In the subsequent line,
${\roleFmt{Ser}}$ seeks to receive a message,
and this process repeats until the protocol loops in
\Cref{appendix:line:Server:function:recurs_ser:send:loop}.
In the second branch,
after sending the \CODE{Close} messages in
\Cref{appendix:line:Server:function:recurs_ser:send:Stop},
the connection is terminated in
\Cref{appendix:line:Server:function:endpoint_ser:close}
using the \CODE{close} function,
which shares the same output type as \CODE{endpoint_ser}.

In \timedmulticrusty,
asynchrony is managed by employing the asynchronous primitives
of the \Rust library \crosschan.
The channels used in this library operate as
two sides sharing the same memory cell,
where the sender side can only write to the cell,
and the receiver side can only read from it.
Simulating transfer time in \timedmulticrusty involves blocking
the receiving side for each operation.
Other asynchronous designs used in different \Rust libraries,
such as \tokioRust \cite{web:rust:tokio},
may involve blocking the sending side or restricting
any side from accessing the memory cell for a specific duration.
With our primitives,
it is feasible to attempt to receive a message before it is sent.
Each \CODE{recv} operation will wait until the end of its time window before failing,
assuming no other errors arise.

\subsection{Error Handling}
\label{appendix:subsec:implementation:error_handling}

The error handling capabilities of \timedmulticrusty encompass various errors
that may arise during protocol implementation and execution.
These errors include the misuse of generated types and timeouts.
This demonstrates the versatility of our implementation in verifying
a wide range of communication protocols,
ranging from examples found in literature to more specific instances,
some of which have been studied and implemented in \Cref{sec:evaluation}.

For example,
if Lines \ref{appendix:line:Server:function:recurs_ser:send:GetData} and \ref{appendix:line:Server:function:recurs_ser:recv:Data}
in \Cref{appendix:subfig:implementation:primitives_server} are swapped,
the program will fail to compile because it expects a \CODE{send} primitive in
\Cref{appendix:line:Server:function:recurs_ser:send:GetData},
as indicated by the type of \CODE{s}.
Another error that can be detected at compile-time is when
a payload with the wrong type is sent.
For instance,
sending a \CODE{Data} message instead of
a \CODE{GetData}
in \Cref{appendix:line:Server:function:recurs_ser:send:GetData}
will result in a compilation error.

Furthermore,
\timedmulticrusty is capable of identifying errors at runtime.
If \Crefrange{appendix:line:Server:function:recurs_ser:send:branch_stop}{appendix:line:Server:function:recurs_ser:send:loop}
are replaced with a single \CODE{Ok(())},
the code will compile successfully.
However,
during runtime,
the other roles will encounter failures as they consider
${\roleFmt{Ser}}$ to have failed.
Timeouts are also handled dynamically.
Introducing a 10 seconds delay simulating a time-consuming
and blocking inner task,
using \CODE{sleep(Duration::from_secs(10))},
between
Lines \ref{appendix:line:Server:function:recurs_ser:send:GetData} and \ref{appendix:line:Server:function:recurs_ser:recv:Data}
will cause ${\roleFmt{Ser}}$ to sleep for 10 seconds.
As a result,
the \CODE{recv} operation in \Cref{appendix:line:Server:function:recurs_ser:recv:Data}
will fail due to a violation
of the associated time constraint,
leading ${\roleFmt{Ser}}$ to fail and terminate.
Additionally,
failing to instantiate the clock \CODE{a} in
\Cref{appendix:line:Server:function:endpoint_ser:start_clock:a}
will raise a runtime error since the time constraints cannot be
evaluated without the necessary clocks.

\subsection{Top-Down Approach}
\label{appendix:subsec:implementation:top_down}

Instead of manually writing local types and endpoints in \Rust,
a task that may prove onerous and susceptible to errors in the long term,
an alternative approach emerges:
implementing the corresponding global type
and subsequently projecting it onto local \Rust APIs.
This approach,
known as the Top-Down approach \cite{honda2008Multiparty},
is illustrated in \Cref{fig:theory_methodology}~(right) 
through the architectural design of \timedmulticrusty.
To streamline this process,
we introduce \timednuscr,
an extension of \nuscr \cite{zhouCFSM2021},
with the primary objective of transforming multiparty protocols
expressed in the \Scribble language \cite{yoshida2013Scribble} into global types.

\begin{figure}[htbp]
  \begin{SCRIBBLELISTING}
global protocol RemoteData(role Sen, role Sat, role Ser){ *@ \label{appendix:line:implementation:global_protocol:start} @*
  rec Loop { *@ \label{appendix:line:implementation:global_protocol:line:rec} @*
    choice at Ser { *@ \label{appendix:line:implementation:global_protocol:line:choice} @*
      GetData() from Ser to Sat within [5;6] using a and resetting (); *@ \label{appendix:line:implementation:global_protocol:line:GetData1} @*
      GetData() from Sat to Sen within [5;6] using b and resetting (); *@ \label{appendix:line:implementation:global_protocol:line:GetData2} @*
      Data() from Sen to Sat within [6;7] using b and resetting (b); *@ \label{appendix:line:implementation:global_protocol:line:Data1} @*
      Data() from Sat to Ser within [6;7] using a and resetting (a); *@ \label{appendix:line:implementation:global_protocol:line:Data2} @*
      continue Loop *@ \label{appendix:line:implementation:global_protocol:line:loop} @*
    } or {
      Close() from Ser to Sat within [5;6] using a and resetting (); *@ \label{appendix:line:implementation:global_protocol:line:Stop1} @*
      Close() from Sat to Sen within [5;6] using b and resetting (); *@ \label{appendix:line:implementation:global_protocol:line:Stop2} @*
    }
  }
}
\end{SCRIBBLELISTING}
  \caption{Global protocol for remote data in \timednuscr.}
  \label{appendix:fig:implementation:global_protocol}
\end{figure}

\Cref{appendix:fig:implementation:global_protocol} %
presents the global protocol
relevant to the Remote data protocol in \timednuscr.
The initial interaction within the running example,
in the first branch following the choice,
entails ${\roleFmt{Ser}}$ dispatching a message labelled
\CODE{GetData} to ${\roleFmt{Sat}}$ within the time interval of $5$ to $6$ unit,
governed by the clock \lstscribble{a}.
This is implemented in %
\Cref{appendix:line:implementation:global_protocol:line:GetData1}. %
Notably,
as shown in %
\Cref{appendix:fig:implementation:global_protocol} and
compared to the original \nuscr keywords,
several additional keywords are introduced in \timednuscr:
\lstscribble{within},
\lstscribble{using}
and \lstscribble{and resetting}.
These keywords are employed to denote,
respectively,
the time window (indicated by two integers enclosed in open or closed brackets),
the clock to be used (specified as a character),
and the clock that requires resetting
(depicted within parentheses,
encompassing either nothing or a character).

It is pertinent to underscore that despite sharing the same name,
the virtual clocks assigned to both roles are distinct entities.
To ensure the accuracy of timed protocols,
\timednuscr conducts a series of checks.
Firstly,
it verifies the precise representation of all interactions,
encompassing the five aforementioned components.
Furthermore,
it validates that each time constraint demonstrates a strict increase,
signifying that the lower bound of one operation must not coincide with or occur after the upper bound of the subsequent operation involving the same clock.
Moreover,
each clock,
which should not decrement,
must be capable of satisfying every time constraint,
accounting for resets.
Lastly,
\timednuscr performs standard checks as per the
\MPST protocols \cite{zhouCFSM2021}
and transforms a timed \Scribble protocol into
Communicating Timed Automata \cite{bocchi2015Meeting}.

\section{Extended Details for~\Cref{SEC:EVALUATION}}
\label{appendix:sec:evaluation}

\subsection{Performance Comparison Between~\timedmulticrusty and~\multicrusty}
\label{appendix:subsec:evaluation_long:benchmarks}

\begin{figure}[t!]
    \begin{subfigure}{0.65\textwidth}
        \centering
        \includegraphics[height=10em]{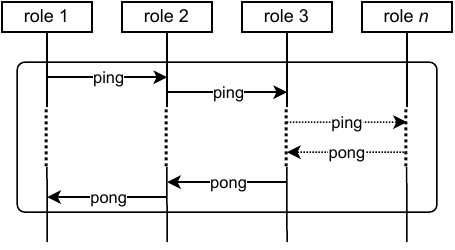}

        \caption{Ring protocol}
        \label{subfig:appendix:ring_protocol}

    \end{subfigure}
    \hfill
    \begin{subfigure}{0.34\textwidth}
        \centering
        \includegraphics[height=10em]{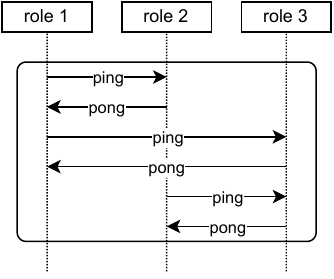}
        \caption{Mesh protocol}
        \label{subfig:appendix:mesh_protocol}
    \end{subfigure}
    \caption{Protocols for benchmarks.}
    \label{fig:appendix:diagram:ben}
\end{figure}

\begin{figure}[t!]%
\begin{subfigure}{0.48\textwidth}
\footnotesize
\[
\begin{array}{r@{\quad}c@{\quad}l@{\quad}l}
\ccst[i] &
\bnfdef  &
i < C < i + 1
\\
\gtG     &
\bnfdef  &
\gtRec{\gtRecVar}{
    \gtCommT{\roleP}{\roleQ}{}{\gtPing}{}{
        \ccst[0],
        \emptyset,
        \ccst[0],
        \emptyset
    }{
        \gtG[0]
    }
}
\\
\gtG[0]  &
\bnfdef  &
\gtCommT{\roleQ}{\roleR}{}{\gtPing}{}{
    \ccst[1],
    \emptyset,
    \ccst[1],
    \emptyset
}{
    \gtG[1]
}
\\
\gtG[1]  &
\bnfdef  &
\gtCommT{\roleR}{\roleS}{}{\gtPing}{}{
    \ccst[2],
    \emptyset,
    \ccst[2],
    \emptyset
}{
    \gtG[2]
}
\\
\gtG[2]  &
\bnfdef  &
\gtCommT{\roleS}{\roleR}{}{\gtPong}{}{
    \ccst[3],
    \{C\},
    \ccst[3],
    \emptyset
}{
    \gtG[3]
}
\\
\gtG[3]  &
\bnfdef  &
\gtCommT{\roleR}{\roleQ}{}{\gtPong}{}{
    \ccst[4],
    \{C\},
    \ccst[4],
    \emptyset
}{
    \gtG[4]
}
\\
\gtG[4]  &
\bnfdef  &
\gtCommT{\roleQ}{\roleP}{}{\gtPong}{}{
    \ccst[5],
    \{C\},
    \ccst[5],
    \{C\}
}{
    \gtRecVar
}
\end{array}
\]
\caption{Ring protocol}
\label{appendix:subfig:ring_protocol}
\end{subfigure}
\hfill
\begin{subfigure}{0.48\textwidth}
\footnotesize
\[
\begin{array}{r@{\quad}c@{\quad}l@{\quad}l}
\ccst[i] &
\bnfdef  &
i < C < i + 1
\\
\gtG     &
\bnfdef  &
\gtRec{\gtRecVar}{
    \gtCommT{\roleP}{\roleQ}{}{\gtPing}{}{
        \ccst[0],
        \emptyset,
        \ccst[0],
        \emptyset
    }{
        \gtG[0]
    }
}
\\
\gtG[0]  &
\bnfdef  &
\gtCommT{\roleQ}{\roleP}{}{\gtPong}{}{
    \ccst[1],
    \emptyset,
    \ccst[1],
    \emptyset
}{
    \gtG[1]
}
\\
\gtG[1]  &
\bnfdef  &
\gtCommT{\roleP}{\roleR}{}{\gtPing}{}{
    \ccst[2],
    \emptyset,
    \ccst[2],
    \emptyset
}{
    \gtG[2]
}
\\
\gtG[2]  &
\bnfdef  &
\gtCommT{\roleR}{\roleP}{}{\gtPong}{}{
    \ccst[3],
    \emptyset,
    \ccst[3],
    \{C\}
}{
    \gtG[3]
}
\\
\gtG[3]  &
\bnfdef  &
\gtCommT{\roleQ}{\roleR}{}{\gtPing}{}{
    \ccst[4],
    \emptyset,
    \ccst[4],
    \emptyset
}{
    \gtG[4]
}
\\
\gtG[4]  &
\bnfdef  &
\gtCommT{\roleR}{\roleQ}{}{\gtPong}{}{
    \ccst[5],
    \{C\},
    \ccst[5],
    \{C\}
}{
    \gtRecVar
}
\end{array}
\]
\caption{Mesh protocol}
\label{appendix:subfig:mesh_protocol}
\end{subfigure}
\caption{Global protocols for benchmarks.}
\label{appendix:fig:diagram}
\end{figure}

To compare~\timedmulticrusty
and~\multicrusty,
we choose two different protocols shown in~\Cref{fig:appendix:diagram:ben,appendix:fig:diagram}:
the~\emph{ring} protocol,
and the~\emph{mesh} protocol.
In the former,
the participants pass a
message from one to the other,
and in the latter,
each participant
sends a message
to each other.
In both protocols,
the loop is run 100 times.
To fairly compare
both libraries,
each operation
must happen within
the same time window:
between $0$ and $10$ seconds.

\begin{figure}[t!]
    \centering
    \begin{subfigure}[t]{0.026\textwidth}
        \includegraphics[width=\textwidth]{pdf/time_seconds.pdf}
    \end{subfigure}
    \begin{subfigure}[t]{0.22\textwidth}
        \includegraphics[width=\textwidth]{pdf/graph_mesh_compile_time_100.pdf}
        \caption{\emph{mesh} - compilation}
        \label{appendix:fig:benchmark_results_long:compile:mesh}
    \end{subfigure}
    \begin{subfigure}[t]{0.22\textwidth}
        \includegraphics[width=\textwidth]{pdf/graph_ring_compile_time_100.pdf}
        \caption{\emph{ring} - compilation}
        \label{appendix:fig:benchmark_results_long:compile:ring}
    \end{subfigure}\hfill%
    \begin{subfigure}[t]{0.026\textwidth}
        \includegraphics[width=\textwidth]{pdf/time_millisecond.pdf}
    \end{subfigure}
    \begin{subfigure}[t]{0.22\textwidth}
        \includegraphics[width=\textwidth]{pdf/graph_mesh_runtime_100.pdf}
        \caption{\emph{mesh} - runtime}
        \label{appendix:fig:benchmark_results_long:running:mesh}
    \end{subfigure}
    \begin{subfigure}[t]{0.22\textwidth}
        \includegraphics[width=\textwidth]{pdf/graph_ring_runtime_100.pdf}
        \caption{\emph{ring} - runtime}
        \label{appendix:fig:benchmark_results_long:running:ring}
    \end{subfigure}

    \caption{Benchmark results.}
    \label{appendix:fig:benchmark_results_long}
\end{figure}

\Cref{appendix:fig:benchmark_results_long}
displays our benchmark results,
running from 2 roles to 8 roles.
In all graphs,
the red line with stars
is linked to~\multicrusty (\AMPST in the legend),
and the purple line with triangles
is linked to~\timedmulticrusty (\ATMP in the legend).
The~\emph{X}-axis corresponds
to the number of roles
in the protocols;
and the~\emph{Y}-axis,
to the time needed for the related task.
In all~\emph{compilation} graphs,
the~\emph{Y}-axis scale
is in seconds (s);
and in all~\emph{running time} graphs,
the~\emph{Y}-axis scale
is in milliseconds (ms).

\vspace{-1em}
\subparagraph{Ring Protocol}
is a protocol
where a message is sent
from the first role
to the last role through
each role,
exactly once,
then goes the other
way around,
as shown
in~\Cref{subfig:appendix:ring_protocol,appendix:subfig:ring_protocol}.
The results of the~\emph{compile-time} benchmarks are shown
in~\Cref{appendix:fig:benchmark_results_long:compile:ring},
and for the~\emph{runtime} benchmarks are shown
in~\Cref{appendix:fig:benchmark_results_long:running:ring}.
At 2 roles,
\multicrusty is compiled less than 2\%
faster than~\timedmulticrusty and
less than 5\% faster at 8 roles,
ranging between 19 s and 20.8 s
for~\timedmulticrusty.
This overhead
is also visible for
running time,
where~\multicrusty is 15\%
faster than~\timedmulticrusty
at 2 roles,
and then 5\% faster
at 8 roles.
The overhead
does not seem to increase
as the number of roles
increases:
there is less than
0.5 ms of difference
between~\multicrusty and~\timedmulticrusty
at 6, 7 and 8 roles.

\vspace{-1em}
\subparagraph{Mesh Protocol}
is a protocol
where all roles
send and receive
a message to each other,
as shown
in~\Cref{subfig:appendix:mesh_protocol,appendix:subfig:mesh_protocol}.
The results of the~\emph{compile-time} (resp.~\emph{runtime}) benchmarks are shown
in~\Cref{appendix:fig:benchmark_results_long:compile:mesh} (resp.~\Cref{appendix:fig:benchmark_results_long:running:mesh}).
The curves of the benchmark results
follow a similar trend
as for the~\emph{ring} protocol:
\multicrusty is compiled faster than
\timedmulticrusty by less than 1\%
at 2 roles,
and 4\% at 8 roles,
and is running faster
by less than 1\% at 2 roles
and 15\% at 8 roles.
The compilation times
for~\timedmulticrusty
range between 18.9 s
and 26 s,
and the running times
are between 0.9 ms
and 11.9 ms.
For both graphs,
the gap exponentially increases,
in absolute values,
due to the exponential
number of enforced time constraints:
there are $2*\binom n2=n*(n-1)$
\CODE{Send} operations,
where $n$ is the number of roles.

\vspace{-1em}
\subparagraph{Results Summary}
All graphs show
\timedmulticrusty has
overheads compared
to~\multicrusty,
which increases
with the number of roles
in absolute values.
This can be explained
by the light additions
we made to~\multicrusty
for creating~\timedmulticrusty:
the~\CODE{Send} and~\CODE{Recv}
types in~\multicrusty
have 2 parameters,
resp. for the payload
and the continuation;
and in~\timedmulticrusty,
they have 8 parameters,
2 for the payload
and the continuation,
and 6 for the time constraints.
Unlike~\multicrusty,
\timedmulticrusty's
\CODE{send} and~\CODE{recv} primitives
use a cascade of~\CODE{if}/\CODE{else}
to check the time constraints before all.
Those results show that,
with additional key features
for handling asynchronous
and timed communications
compared to~\multicrusty,
we were able to maintain
low overheads overall.

\subsection{Case Studies}
\label{appendix:subsec:evaluation_long:examples}

Apart from the protocols sourced from existing scientific literature,
we have included five additional protocols that use the asynchronous and
timed features of~\timedmulticrusty
extracted from industrial examples.
Besides the Remote Data protocol detailled earlier,
there are three protocol extracted from
\cite{DBLP:journals/pacmpl/IraciandCHZ2023} and one protocol extracted from
\cite{servoWebEngineBuggy,boquin2020Understanding}.
In particular,
the PineTime Heart Rate protocol
\cite{Pine64},
extracted from \cite{DBLP:journals/pacmpl/IraciandCHZ2023}
is a protocol used in the industry
to help the PineTime smartwatch retrieve data from
a heart rate sensor
at a specific pace.

We have not compared~\timedmulticrusty with the~\crosschan library
as~\cite{lagaillardie2022Affine} has already conducted such a comparison between their library,
\multicrusty,
and the~\crosschan library.
Consequently,
we can infer similar comparison results between~\timedmulticrusty and~\crosschan.
For the~\emph{Mesh} protocol,
\multicrusty exhibits superior performance compared to the~\crosschan library
in terms of the number of roles,
indicating that~\timedmulticrusty would likewise outperform the~\crosschan library.
However,
for other protocols,
the~\crosschan library demonstrates better performance than~\multicrusty,
and by extension,
\timedmulticrusty.

}{
}

\end{document}